\documentclass[
superscriptaddress,
 amsmath,amssymb,
 twocolumn,
prx,
]{revtex4-2}
\usepackage[T1]{fontenc}
\usepackage{graphicx}
\usepackage{dcolumn}
\usepackage{bm}
\usepackage{hyperref}
\usepackage{booktabs}
\usepackage{array}
\usepackage{tabularx}
\newcolumntype{C}{>{\centering\arraybackslash}X}
\usepackage{amsmath,amssymb,amsthm}
\theoremstyle{definition}
\newtheorem{definition}{Definition}[section]
\newtheorem{problem}{Problem}[section]
\newtheorem{theorem}{Theorem}[section]
\newtheorem{corollary}{Corollary}[theorem]

\newtheorem{lemma}[theorem]{Lemma}
\theoremstyle{remark}
\newtheorem*{remark}{Remark}

\newcommand{\argmax}{\mathop{\mathrm{argmax}}}
\newcommand{\CPTP}{\mathrm{CPTP}}
\usepackage{caption}
\usepackage{physics}
\usepackage{sublabel}
\usepackage{subcaption}
\usepackage{xcolor}
\usepackage[normalem]{ulem}
\usepackage[dvipsnames]{xcolor}

\hypersetup{colorlinks = true, linkcolor=blue, citecolor=blue, urlcolor=blue}

\newcommand{\red}[1]{{\color{RubineRed}{#1}}}
\newcommand{\orange}[1]{{\color{YellowOrange}{#1}}}
\usepackage{soul}

\soulregister\cite7
\soulregister\onlinecite7
\soulregister\ref7
\soulregister\eqref7
\soulregister\red7
\soulregister\orange7

\setstcolor{red}

\begin{document}
\let\addcontentslineorig\addcontentsline
\renewcommand{\addcontentsline}[3]{}

\preprint{APS/123-QED}

\title{Quantum channel learning with limited parallel access}

\author{Mahadevan Subramanian}
 \email{mahadevans@uchicago.edu}
 \affiliation{Pritzker School of Molecular Engineering, The University of Chicago, Chicago, Illinois 60637, USA 
}
\author{Hyukgun Kwon}%
 \email{kwon37hg@sejong.ac.kr}
 \affiliation{Department of Physics and Astronomy, Sejong University, 209 Neungdong-ro Gwangjin-gu, Seoul 05006, Republic of Korea}
\author{Liang Jiang}
 \email{liangjiang@uchicago.edu}
\affiliation{%
Pritzker School of Molecular Engineering, The University of Chicago, Chicago, Illinois 60637, USA 
}%

\date{\today}

\begin{abstract}
Quantum channels can be characterized by their action on an orthogonal operator basis, where these operators are related to observable properties of the quantum system. For qudit and multimode bosonic systems, this is encoded respectively in the Heisenberg--Weyl transfer matrix estimated from the Choi-state, and in the characteristic-function transfer map estimated from a two-mode squeezed vacuum based Choi-state. We derive sample-complexity bounds for estimating entries of the transfer matrix/map to additive accuracy $\epsilon$ with success probability $\ge1-\delta$, under different resources: access to the complex-conjugate channel $\mathcal{E}^*$ and/or parallel access to $c$ copies. In all settings, the learner uses parallel channel calls with adaptively chosen, ancilla-assisted input states and measurements. Absolute values of transfer-matrix entries can be learned efficiently with simultaneous access to $\mathcal{E}$ and $\mathcal{E}^*$, with tight scaling $\epsilon^{-4}$. Without conjugate access, any $c<d$ copies are insufficient for efficient learning, requiring sample complexity exponential in the number of ($d$-level) qudits $n$ (for prime $d$). Efficiency is recovered at $c=d$, with tight scaling $\epsilon^{-2d}$. For bosonic systems, exponential sample complexity holds in terms of an effective dimension induced by an energy constraint for all $c=O(1/\epsilon)$. Although the task is learning a particular state, these bounds carry stronger implications than standard state-learning bounds since the learner controls the inputs and has ancillary assistance. This establishes a hierarchy of channel-learning resources: self-complex-conjugate channels require two-copy ancilla-assisted access for efficient learning, while for every square-free $d$, some channels require $d$-copy access. As a corollary, we derive tighter lower bounds for state learning with limited multi-copy access.
\end{abstract}

\maketitle

\section{Introduction}
Characterizing quantum systems is essential for the development of quantum information science and technology. In particular, accurate knowledge of noise channels has a wide range of practical applications, including improving the performance of quantum error-correcting codes by tailoring them to the underlying noise model \cite{PhysRevA.56.2567,PhysRevA.75.012338,PhysRevLett.120.050505}, enhancing decoder performance through noise benchmarking \cite{Bausch2024,tsubouchi2026quantumadvantagessyndromeawarenoisy}, enabling process verification \cite{PRXQuantum.2.010102}, and supporting quantum error mitigation \cite{vandenBerg2023,PhysRevLett.131.210601,PRXQuantum.2.040330}. In this context, quantum learning theory provides a systematic framework for estimating parameters that characterize an unknown quantum state or channel. More specifically, it aims to estimate such parameters within a prescribed \emph{additive error} $\epsilon$ with \emph{success probability} at least $1-\delta$. Within this framework, the central objective is to design efficient estimation protocols that minimize the number of accesses to the quantum state or channel required to achieve this guarantee, a quantity referred to as the \textit{sample complexity}. 

A prominent example of such a sample-complexity advantage arises in quantum state learning. For the estimation of expectation values of Pauli operators, access to two copies of an unknown $n$-qubit quantum state instead of only single-copy access can yield an exponential-in-$n$ reduction of sample complexity \cite{huang2021information}. Generalizing to the learning of $n$-qudit ($d$-level systems), we consider the estimation of expectation values of the generators of the Heisenberg--Weyl group. For this task, access to $d$ copies of the unknown $n$-qudit state can yield an exponential advantage, whereas any protocol restricted to only $c<d$ copies cannot realize this advantage and continues to require exponential sample complexity \cite{ller2025infinitehierarchymulticopyquantum} for any $d$ that is square-free. In a similar vein, simultaneous access to the unknown state $\hat{\rho}$ and its complex conjugate $\hat{\rho}^*$ (in some fixed basis) yields an exponential-in-$n$ advantage over single-copy measurement protocols for the same learning task \cite{PRXQuantum.5.040301}. This principle of advantage with the complex conjugate state also generalizes to the learning of $n$-mode bosonic states \cite{coroi2025exponentialadvantagecontinuousvariablequantum}. These learning tasks are well motivated since in all these cases, the generators of the Heisenberg--Weyl group form a complete operator basis to represent quantum states through their expectation values.

\begin{figure}
    \centering
    \includegraphics[width=\linewidth]{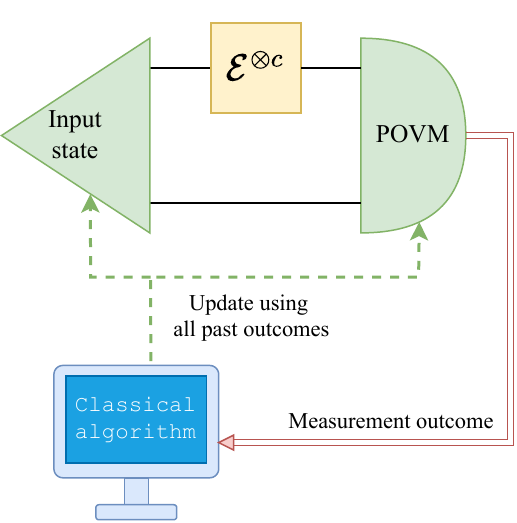}
    \caption{Representation of a learning protocol that uses $c$ copies of an unknown channel $\mathcal{E}$ which is probed by an input state and then measured with the assistance of ancillas. It is assumed that there are multiple steps of such a protocol where the measurement outcomes of all past steps are then used to update the input state and POVM for the next round. The processing is naturally done by a classical algorithm and since we are interested in the fundamental limits of sample complexity of channel usage, we assume unbounded classical memory and accurate control of the input states and POVMs. To study the case of complex-conjugate channels we can alternatively substitute $(\mathcal{E}\otimes\mathcal{E}^*)^{\otimes c}$ in place of the channel.}
    \label{fig:cMprotocolsimple}
\end{figure}

Another important direction in quantum learning theory concerns the characterization of unknown quantum channels. Prominent examples include multi-qubit Pauli channels~\cite{chen_tight_2024,seif2024entanglementenhancedlearningquantumprocesses,PhysRevA.105.032435} and continuous-variable random displacement channels~\cite{PhysRevLett.133.230604,doi:10.1126/science.adv2560}, which constitute practically relevant noise models for discrete and continuous-variable systems, respectively. In both cases, entanglement between the system and a noiseless ancilla can substantially reduce the sample complexity and is necessary for efficient estimation. A more general framework for 
quantum channel learning is provided by the Choi--Jamiołkowski isomorphism, which establishes a one-to-one correspondence between quantum channels and their associated Choi states. Under this correspondence, any linear functional of a channel can be represented as the expectation value of a suitable observable measured on the corresponding Choi state~\cite{PhysRevA.107.042403,wccm-zys6,PhysRevResearch.6.013029,he2026optimalclassicalshadowestimation}. Consequently, a broad class of channel-learning problems can be reformulated as the estimation of observable expectation values on Choi states. For qubit systems, a particularly useful representation in this context is the Pauli transfer matrix, which describes an arbitrary multi-qubit quantum channel in the Pauli operator basis. Each of its matrix elements can be expressed as the expectation value of an appropriate Pauli observable on the channel's Choi state~\cite{10.1145/3670418,chen_tight_2024}. Since multi-qubit Pauli operators form an orthogonal basis for the operator space, the Pauli transfer matrix uniquely specifies the action of the channel on any input operator and thus provides a complete characterization of the channel.

While general quantum-channel learning for qubit systems has been extensively studied in terms of the Pauli transfer matrix \cite{10.1145/3670418}, a comparably general framework for channel learning in qudit and bosonic systems has not yet been fully developed. Since the Heisenberg--Weyl operators form an operator basis for qudit and bosonic systems, analogous to the role of Pauli operators in qubit systems, this structural correspondence motivates us to extend the Pauli transfer matrix description of qubit channels to these more general settings.
Another important observation is that learning the transfer matrix of a quantum channel can be reformulated as learning the Choi state associated with the channel. Efficient general-state learning in qudit and bosonic systems can benefit substantially from access to complex-conjugated copies of the target state, as demonstrated in Refs.~\cite{coroi2025exponentialadvantagecontinuousvariablequantum,PRXQuantum.5.040301}. This naturally suggests considering the complex-conjugate channel as an additional resource for the learning task. For a given quantum channel $\mathcal{E}$, we define its complex-conjugate channel $\mathcal{E}^{*}$ satisfying $\left[\mathcal{E}^{*}\left(\hat{\rho}^{T}\right)\right]^{T} = \mathcal{E}(\hat{\rho})$ for every input state $\hat{\rho}$, where the transpose is taken with respect to a fixed reference basis.

\begin{table*}[ht]
\caption{Summary of lower bounds for the $c$-copy protocols (from Def.~\ref{def:cmprotinform}) that succeed in absolute value estimation for channel learning (Problem~\ref{prob:channel_learn}). In all cases it is assumed that $c=O(1/\epsilon)$. All stated complexities are for total channel uses ($c$ times the number of measurements). We observe an effective dimension of the bosonic mode in the case of the bosonic channel learning task which we label as $d_{\mathrm{in}}$ and $d_{\mathrm{out}}$ and assume that the two-mode squeezing parameter is sufficiently large to ensure $\cosh(2r)\geq 1.06\kappa m$}
\centering
\begin{tabular*}{\textwidth}{@{\extracolsep{\fill}} |l| l| l| l| }
\toprule
\parbox[t]{0.24\textwidth}{Type of Channel} & \parbox[t]{0.28\textwidth}{$c$-copy access to $\mathcal{E}$} & \parbox[t]{0.28\textwidth}{$1$-copy access to $\mathcal{E}$ satisfying                                                                                                                                 $\mathcal{E}\equiv\mathcal{E}^*$} & \parbox[t]{0.12\textwidth}{$c$-copy access to $\mathcal{E}\otimes\mathcal{E}^*$} \\
\midrule
\parbox[t]{0.24\textwidth}{Channels acting on $m$ qubits} & \parbox[t]{0.28\textwidth}{$\Omega(2^{2m}\epsilon^{-2})$ for $c=1$. $O(m\epsilon^{-4})$ for $c = 2$ to learn full transfer matrix \cite{10.1145/3670418}. $\Omega(c^{-3}\epsilon^{-4})$ for $c\geq 2$ and large enough $m$. (Thm.~\ref{thm:qudit2quditinformal})} & \parbox[t]{0.28\textwidth}{$\Omega(2^{2m}\epsilon^{-2})$ (Thm.~\ref{thm:qudit2quditinformalselfconj})} & \parbox[t]{0.12\textwidth}{$\Omega(c^{-3}\epsilon^{-4})$ for large enough $m$. (Thm.~\ref{thm:qudit2quditinformalwithconj})} \\
\midrule
\parbox[t]{0.24\textwidth}{Channels acting on $m$ qudits of local dimension $d\geq 3$ being prime} & \parbox[t]{0.28\textwidth}{$\Omega(d^{2m}c^{-1}\epsilon^{-2})$ for $c\leq d-1$. $\Omega(c(dc^{-1}\epsilon^{-1})^{2d})$ for $c\geq d$ and large enough $m$. (Thm.~\ref{thm:qudit2quditinformal})} & \parbox[t]{0.28\textwidth}{$\Omega(d^{2m}\epsilon^{-2})$ (Thm.~\ref{thm:qudit2quditinformalselfconj})} & \parbox[t]{0.12\textwidth}{$\Omega(c^{-3}\epsilon^{-4})$ for large enough $m$. (Thm.~\ref{thm:qudit2quditinformalwithconj})} \\
\midrule
\parbox[t]{0.24\textwidth}{Channels acting on $m$ bosonic modes, $m\geq 8$, and queries $\alpha,\beta\in \mathbb{C}^m$ ($|\beta|^2,|\alpha|^2\leq \kappa m$), input two-mode squeezed probes of parameter $r$} & \parbox[t]{0.28\textwidth}{$\Omega(d_{\mathrm{in}}^m d_{\mathrm{out}}^mc^{-1}\epsilon^{-2})$, $d_{\mathrm{in}} = \sqrt{1+(0.99\kappa\tanh^2(2r))^2}$, $d_{\mathrm{out}} = \sqrt{1+(0.99\kappa)^2}$ (Thm.~\ref{thm:boson2bosoninformal})} & \parbox[t]{0.28\textwidth}{$\Omega(d_{\mathrm{in}}^m d_{\mathrm{out}}^m\epsilon^{-2})$, $d_{\mathrm{in}} = \sqrt{1+(0.99\kappa\tanh^2(2r))^2}$, $d_{\mathrm{out}} = \sqrt{1+(0.99\kappa)^2}$ (Thm.~\ref{thm:boson2bosoninformalselfconj})} & \parbox[t]{0.12\textwidth}{$\Omega(c^{-3}\epsilon^{-4})$ for large enough $m$. (Thm.~\ref{thm:boson2bosoninformalwithconj})} \\
\bottomrule
\end{tabular*}\label{tab:results}
\end{table*}

This leads to our central question: can the elements of the generalized transfer matrix be estimated sample-efficiently while restricting the number of parallel uses of the unknown quantum channel per measurement, and how does additional access to the complex-conjugate channel affect this task? We consider learning protocols that, in each measurement round, are given parallel access either to $\mathcal{E}^{\otimes c}$ or to $(\mathcal{E}\otimes\mathcal{E}^{*})^{\otimes c}$. The protocols may employ arbitrary ancillary systems and may adapt both the input-state preparation and the measurement strategy across successive rounds; see Fig.~\ref{fig:cMprotocolsimple} for a schematic illustration. Following notation similar to that of Ref.~\cite{chen2024optimaltradeoffsestimatingpauli}, we refer to such schemes as $c$-copy protocols. Thus, depending on the resource model, $c$ denotes either the number of parallel uses of $\mathcal{E}$ or the number of parallel uses of the paired resource $\mathcal{E}\otimes\mathcal{E}^{*}$ available in each measurement round.

In this work, we derive sample-complexity bounds for estimating the absolute values of individual entries of these transfer matrices and functions to additive error $\epsilon$, with success probability at least $1-\delta$, under various operational resource assumptions. Specifically, we consider settings in which one has access to the complex-conjugate channel $\mathcal{E}^{*}$, simultaneous access to $c$ copies of the channel, or both. To show this, we first derive a general sample-complexity lower bound for all $c$-copy learning protocols that solve a broad class of channel-discrimination problems in Lemma~\ref{lem:master}. The bound applies to arbitrary input and output Hilbert spaces. Our result serves as a master lower bound that can be readily specialized to establish the hardness of channel-learning tasks. We then study three operational settings for learning the transfer-matrix representation of an unknown channel $\mathcal{E}$, which we show to be equivalent to learning its Choi state: (1) the learner has simultaneous access to $c<K/\epsilon$ (for an appropriately chosen constant $K$) copies of $\mathcal{E}$, while access to $\mathcal{E}^{*}$ is unavailable; (2) the channel satisfies $\mathcal{E}=\mathcal{E}^{*}$, while the learner is restricted to single-copy access; and (3) the learner has simultaneous $c$-copy access to $\mathcal{E}\otimes\mathcal{E}^{*}$. By applying the general lower bound established in Lemma~\ref{lem:master}, we prove that the first two settings require a number of samples that grows exponentially with the number of qudits (for $c<d$ for local qudit dimension $d$) or, in the bosonic setting, with the number of modes. In sharp contrast, in the third setting, efficient learning becomes possible: we construct an explicit learning protocol and derive a non-exponential upper bound on its sample complexity. We summarize our results in Table~\ref{tab:results}.

The transitions in lower bounds for the multi-qudit (of local dimension $d$) channel learning case once allowed $d$-copy access and similar transitions once access to the complex-conjugate channel is given (regardless of dimension) are both due to commuting observables being possible to efficiently measure, a feature captured in our universal lower bound. This observation is key to the sample complexity separations in the state learning tasks in \cite{coroi2025exponentialadvantagecontinuousvariablequantum,chen2024optimaltradeoffsestimatingpauli,ller2025infinitehierarchymulticopyquantum,PRXQuantum.5.040301}. Although the objective task is a state learning task (of the channel's Choi-state), we consider the learner to have access to the channel, which is much stronger than having access to the Choi-state. Hence the learner can probe the channel with any possible query, making our task inherently different and more encompassing than the state learning tasks studied in \cite{coroi2025exponentialadvantagecontinuousvariablequantum,PRXQuantum.5.040301,ller2025infinitehierarchymulticopyquantum} since state learning is the special case of channel learning when the channel is a complete replacement channel.

To contextualize our lower bounds with regard to the broad literature of channel learning, we note that the $c$-copy restriction is a clear distinction. Ref.~\cite{mele2026optimallearningquantumchannels} assumes a very large amount of parallel uses of the channel since it prepares multiple copies of the Choi-state and feeds those simultaneously into the random purification channel \cite{walter2025randompurificationchannelarbitrary,tang2026conjugatequerieshelp,girardi2026randompurificationchannelsimple}. Ref.~\cite{chen2026quantumchanneltomographyoptimal} similarly examines an unrestricted parallel-access setting for tomographic channel learning and observes a complexity phase transition dependent on the value of the Kraus rank multiplied by the ratio of the output dimension to the input dimension. From the work on bosonic state tomography \cite{mele_learning_2025}, one would expect that even the finite-energy Choi-state of a bosonic channel would be contained, to some degree of accuracy, in a finite-dimensional subspace. However, since the size of the effective finite-dimensional space from \cite{mele_learning_2025} itself depends on the target accuracy, this finite subspace effect does not cause a sharp complexity transition in the regime of $c= O(1/\epsilon)$. Hence our work offers useful bounds for the physically motivated restriction of limited quantum memory with multi-copy access to the channel. Additionally, we use our results to sharpen some of the state learning task lower bounds from \cite{coroi2025exponentialadvantagecontinuousvariablequantum,PRXQuantum.5.040301,ller2025infinitehierarchymulticopyquantum}.

This paper is organized as follows. In Sec.~\ref{sec:choistate_TM} we introduce the transfer matrix formalism we will be using for describing channels and how it relates to the Choi-state of the channel. In Sec.~\ref{sec:cmProt} we introduce our first main result (the master lemma) which offers a lower bound to all $c$-copy parallel access channel learning algorithms that are allowed to be ancilla-assisted and adaptive. In Sec.~\ref{sec:results} we present our results for sample complexity bounds for the channel learning task introduced in Sec.~\ref{sec:choistate_TM} derived using the master lemma introduced in Sec.~\ref{sec:cmProt} for various cases of channel access and complex-conjugate channel access. In Sec.~\ref{sec:learninghierarchy} we discuss the implications of these results in establishing a hierarchy of channel learning tasks based on parallel access to the channel and offer concluding remarks and discussion in Sec.~\ref{sec:conclusion}.

\section{Choi-state and transfer matrix for a quantum channel}\label{sec:choistate_TM}
In this section, we motivate our definition of the channel learning task, which is the learning of the Choi-state. Motivated by the Choi--Jamiołkowski isomorphism, we show how the channel can be fully characterized by learning expectation values of the Choi-state. A mathematically detailed introduction to relevant topics related to operator algebra can be found in Ref.~\cite{hall2013quantum} along with an introduction to the relevant concepts in Appendix~\ref{app:operatoralgebra}. To motivate the task of channel learning, we consider the task of observable estimation linked to a channel inspired by the works of Refs.~\cite{PhysRevA.107.042403,PRXQuantum.4.040337,he2026optimalclassicalshadowestimation}. 

We denote the set of all completely positive trace-preserving (CPTP) maps with an input Hilbert space $\mathcal{H}_{\mathrm{in}}$ and an output Hilbert space $\mathcal{H}_{\mathrm{out}}$ by $\CPTP(\mathcal{H}_{\mathrm{in}},\mathcal{H}_{\mathrm{out}})$ and now consider some $\mathcal{E}\in \CPTP(\mathcal{H}_{\mathrm{in}},\mathcal{H}_{\mathrm{out}})$. We can consider observable estimation of the output state $\mathcal{E}(\hat{\rho})$ for a chosen input state $\hat{\rho}\in D(\mathcal{H}_{\mathrm{in}})$ (here $D(\mathcal{H}_{\mathrm{in}})$ is the space of all valid density operators over Hilbert space $\mathcal{H}_{\mathrm{in}}$). Separately, we consider the larger set of trace-class operators on a Hilbert space $\mathcal{H}_{\mathrm{in}}$ (depicted by $L_1(\mathcal{H}_{\mathrm{in}})$) which is defined by all operators $\hat{\rho}$ that satisfy $\text{Tr}\left[\sqrt{\hat{\rho}^\dagger\hat{\rho}}\right] <\infty$ for the standard definition of the operator trace. Observable estimation now reduces to the estimation of the function 
\begin{equation}\label{eq:channel_function}
    f_{\mathcal{E}}(\hat{O},\hat{\rho}) = \text{Tr}\left[\hat{O}\mathcal{E}(\hat{\rho})\right],
\end{equation}
at various choices of $\hat{O},\hat{\rho}$. For this task to be valid, we require the above function to be bounded in value, or more generally must have a sense of continuity over the space of operators $\hat{O}$ and $\hat{\rho}$. This restricts the observables $\hat{O}$ to bounded operators on the Hilbert space $\mathcal{H}_{\mathrm{out}}$ (depicted by the set $B(\mathcal{H}_{\mathrm{out}})$). We define the operator norm for a bounded operator $\hat{O}\in B(\mathcal{H})$ as
\begin{equation}\label{eq:opnormdef}
    \|\hat{O}\|_{\mathrm{op}} = \max_{\ket{\psi}\in\mathcal{H},\langle\psi|\psi\rangle = 1}\sqrt{\langle\psi|\hat{O}^\dagger\hat{O}|\psi\rangle},
\end{equation}
which is always finite for a bounded operator. Unitary operators and the operators contained in the set defining a finite-outcome positive-operator-valued measure (POVM) will always be bounded operators. Hence we can consider learning the function $f_{\mathcal{E}}:B(\mathcal{H}_{\mathrm{out}})\times L_1(\mathcal{H}_{\mathrm{in}})\to \mathbb{C}$. The observables we will be largely concerned with are those that form the Heisenberg--Weyl group for their respective dimensions which we now proceed to define.

We consider $d$-level systems, referred to as qudits, that have a $d$-dimensional Hilbert space $\mathcal{H}_d$ with an orthonormal basis of $\{\ket
j\}$ for $j = 0,1,\dots ,d-1$. Assume that $d$ is a prime number so that we can define the Galois field $\mathbb{F}_d$ which we will interchangeably use for representing the basis states (this is equivalent to the integer ring $\mathbb{Z}_d = \mathbb{Z}\mod d$). We can then define the standard shift ($\hat{X}_d$) and phase ($\hat{Z}_d$) operations given by
\begin{equation}
\begin{aligned}
\hat{X}_d &= \sum_{j=0}^{d-1}|j+1\mod d\rangle\langle j|,\\ \hat{Z}_d &= \sum_{j=0}^{d-1}\exp(2\pi i \frac{j}{d})|j\rangle\langle j|.
\end{aligned}
\end{equation}
These operators generate the qudit Pauli group, as can be verified by the braiding relation $\hat{X}_d\hat{Z}_d = e^{-2\pi i/d}\hat{Z}_d\hat{X}_d$ \cite{Sarkar2024quditpauligroupnon}. We now consider displacement operations over the qudit phase space defined for $m$ qudit systems of local dimension $d$ giving a joint Hilbert space dimension $d^m$, where $d$ is a prime number and $m$ is a positive integer. Considering the Hilbert space of $\mathcal{H}_{d,m} = (\mathcal{H}_d)^{\otimes m}$, we can define the following displacement operations taking the previously defined $\hat{X}$ and $\hat{Z}$ and $\mathbf{q},\mathbf{p}\in \mathbb{F}^{m}_d$ to be
\begin{align}\label{eq:disp_qudit_defn}
    \hat{D}_{d,m}(\mathbf{q},\mathbf{p}) &= e^{i\pi\mathbf{q}\cdot\mathbf{p}/d}\bigotimes_{i=1}^m\hat{X}_d^{q_i}\hat{Z}_d^{p_i},\\ \hat{D}_{d,m}(\mathbf{q}',\mathbf{p}')\hat{D}_{d,m}(\mathbf{q},\mathbf{p})\nonumber\\ = e^{i\frac{2\pi}{d}(\mathbf{q}\cdot\mathbf{p}' - \mathbf{q}'\cdot\mathbf{p})}&\hat{D}_{d,m}(\mathbf{q},\mathbf{p})\hat{D}_{d,m}(\mathbf{q}',\mathbf{p}'),\\
    \hat{D}_{d,m}(-\mathbf{q},\mathbf{p}) &= \hat{D}_{d,m}^T(\mathbf{q},\mathbf{p}),
\end{align}
where transposition is defined in the standard diagonal basis of $\hat{Z}$. We also define the identity operation $\mathbb{I}_{d,m}$ which can be seen to be equal to $\hat{D}_{d,m}(0,0)$. These operators extend the displacement operations defined in \cite{PRXQuantum.5.040301,ller2025infinitehierarchymulticopyquantum}. An important aspect of these operators is their orthogonality with respect to the Hilbert--Schmidt inner product
\begin{equation}
    \text{Tr}\left[\left(\hat{D}_{d,m}(\mathbf{q}',\mathbf{p}')\right)^\dagger\hat{D}_{d,m}(\mathbf{q},\mathbf{p})\right] = d^m\delta_{\mathbf{q},\mathbf{q}'}\delta_{\mathbf{p},\mathbf{p}'},
\end{equation}
which in turn allows for any linear operator $\hat{\rho}$ acting on $\mathcal{H}$ to be represented as
\begin{equation}
    \hat{\rho} = \frac{1}{d^m}\sum_{\mathbf{q},\mathbf{p}\in\mathbb{F}^m_d}\text{Tr}\left[\hat{D}_{d,m}(\mathbf{q},\mathbf{p})\hat{\rho}\right]\hat{D}^\dagger_{d,m}(\mathbf{q},\mathbf{p}).\label{eq:mquditchar}
\end{equation}
This representation is unique due to the orthogonality of the displacement operations, making the learning of the expectation values $\text{Tr}\left[\hat{D}_{d,m}(\mathbf{q},\mathbf{p})\hat{\rho}\right]$ a meaningful state learning task for state $\hat{\rho}$ \cite{ller2025infinitehierarchymulticopyquantum,PRXQuantum.5.040301}.

\begin{table*}[ht]
\caption{Correspondence between qubit, qudit, and bosonic systems for the basic notions used in this work.}
\centering
\begin{tabular*}{\textwidth}{@{\extracolsep{\fill}} |l| l| l| l| }
\toprule
\parbox[t]{0.20\textwidth}{} & \parbox[t]{0.23\textwidth}{Qubit systems} & \parbox[t]{0.23\textwidth}{Qudit systems} & \parbox[t]{0.23\textwidth}{Bosonic systems} \\
\midrule
\parbox[t]{0.20\textwidth}{Heisenberg--Weyl group generators over $m$ systems} & \parbox[t]{0.23\textwidth}{$m$-qubit Pauli Group} & \parbox[t]{0.23\textwidth}{$\{\hat{D}_{d,m}(\mathbf{q},\mathbf{p})\}$ for $(\mathbf{q},\mathbf{p})\in \mathbb{F}_{d}^m\times\mathbb{F}_d^m$} & \parbox[t]{0.23\textwidth}{$\{\hat{D}(\alpha)\}$ for $\alpha\in \mathbb{C}^m$} \\
\midrule
\parbox[t]{0.20\textwidth}{Bell state} & \parbox[t]{0.23\textwidth}{$\ket{\Phi_2}$ (Eq.~\eqref{eq:bellstatequdit} at $d=2$)} & \parbox[t]{0.23\textwidth}{$\ket{\Phi_d}$ (Eq.~\eqref{eq:bellstatequdit})} & \parbox[t]{0.23\textwidth}{$\ket{\Phi_r^{\mathrm{TMSV}}}$ (Eq.~\eqref{eq:TMSVchardefn})} \\
\midrule
\parbox[t]{0.20\textwidth}{Bell measurement} & \parbox[t]{0.23\textwidth}{Measurement of $\hat{X}_2\otimes\hat{X}_2$ and $\hat{Z}_2\otimes\hat{Z}_2$.} & \parbox[t]{0.23\textwidth}{Measurement of $\hat{X}_d\otimes\hat{X}_d^\dagger$ and $\hat{Z}_d\otimes\hat{Z}_d^\dagger$.} & \parbox[t]{0.23\textwidth}{Measurement of $\hat{q}_1 - \hat{q}_2$ and $\hat{p}_1 + \hat{p}_2$ ($\hat{q}_i$ and $\hat{p}_i$ are the canonical position and momentum operators of the mode $i$).} \\
\midrule
\parbox[t]{0.20\textwidth}{Representing density matrices} & \parbox[t]{0.23\textwidth}{Pauli decomposition (Eq.~\eqref{eq:mquditchar} at $d=2$)} & \parbox[t]{0.23\textwidth}{Displacement decomposition Eq.~\eqref{eq:mquditchar}} & \parbox[t]{0.23\textwidth}{Characteristic function $\chi_{\hat{\rho}}(\alpha)$ (Eq.~\eqref{eq:bosonicchardensity})} \\
\midrule
\parbox[t]{0.20\textwidth}{Representing channels} & \parbox[t]{0.23\textwidth}{Pauli transfer matrix $C_{\mathcal{E}}$ (Eq.~\eqref{eq:cequdit} at $d=2$)} & \parbox[t]{0.23\textwidth}{Displacement transfer matrix $C_{\mathcal{E}^{\mathrm{qudit}}}$ (Eq.~\eqref{eq:cequdit})} & \parbox[t]{0.23\textwidth}{TMSV transfer function $C^{\mathrm{TMSV},r}_{\mathcal{E}^{\mathrm{boson}}}$ (Eq.~\eqref{eq:CETMSVr})} \\
\bottomrule
\end{tabular*}\label{tab:systems_comparison}
\end{table*}

We now consider phase-space displacement operations for $m$ bosonic modes (we represent the Hilbert space by $\mathcal{H}_{\infty,m}$) which are given by (for $\alpha\in \mathbb{C}^m$)
\begin{equation}\label{eq:bosonic_disp}
    \hat{D}(\alpha) = \bigotimes_{i=1}^m e^{\alpha_i\hat{a}_i^\dagger - \alpha_i^*\hat{a}_i}.
\end{equation}
Note that $\hat{D}^T(\alpha) = \hat{D}(-\alpha^*)$ with transposition in the Fock-basis for all modes. We also define the identity operator $\mathbb{I}_{\infty,m}$ which is equal to $\hat{D}(0)$. Defining the symplectic inner product $\Omega:\mathbb{C}^m\times \mathbb{C}^m\to \mathbb{R}$ by 
\begin{equation}
    \Omega(\alpha,\beta) = (\alpha^\dagger \beta - \beta^\dagger \alpha)/i,
\end{equation}
we have the braiding relation
\begin{equation}
    \hat{D}(\alpha')\hat{D}(\alpha) = e^{i\Omega(\alpha,\alpha')}\hat{D}(\alpha)\hat{D}(\alpha').
\end{equation}
There is a similar orthogonality relation that holds for displacements given by
\begin{equation}
    \text{Tr}\left[\hat{D}^\dagger(\beta)\hat{D}(\alpha)\right] = \pi^m\delta^{(2m)}(\alpha - \beta),
\end{equation}
where it is important to note that strictly speaking, the trace cannot be defined due to the operators not being trace class \cite{10415254}. However, we can still define it as a tempered distribution \footnote{A tempered distribution is a continuous linear functional on the space of Schwartz functions. An example is the Dirac delta function.}. Hence we have a unique representation for all trace-class operators $\hat{\rho}$ given by
\begin{equation}
\begin{aligned}\label{eq:bosonicchardensity}
    \hat{\rho} &= \frac{1}{\pi^m}\int d^{2m}\alpha \hat{D}^\dagger(\alpha)\chi_{\hat{\rho}}(\alpha),\\
    \chi_{\hat{\rho}}(\alpha)&=\text{Tr}\left[\hat{\rho}\hat{D}(\alpha)\right],
\end{aligned}
\end{equation}
which motivates the works of \cite{coroi2025exponentialadvantagecontinuousvariablequantum} to define bosonic state learning as the estimation of the function $\chi_{\hat{\rho}}(\alpha)$. We refer the reader to App.~\ref{app:qudit_disp} and App.~\ref{app:bosonic_disp} for further details on the properties of these operators and useful results related to them. To simplify notation, for any Hilbert space we consider a field $X_{\mathcal{H}}$. For the Hilbert space of $m$ qudits of local dimension $d$ we define $X_{\mathcal{H}_{d,m}} = \mathbb{F}_d^m\times \mathbb{F}_d^m$ and for the case of $m$ bosonic modes we define $X_{\mathcal{H}_{\infty,m}} = \mathbb{C}^m$. For such a field we can define an appropriate symplectic inner product $\Omega_{\mathcal{H}}:X_{\mathcal{H}}\times X_{\mathcal{H}}\to \mathbb{R}$ which then defines the Heisenberg--Weyl algebra \footnote{This is a $C^*$-algebra $W$ defined for symplectic space $(X,\sigma)$ such that the generators of this algebra satisfy, for all $f,g\in X$, $W(f)^\dagger = W(-f)$ and $W(f)W(g) = W(f+g)e^{i\sigma(g,f)/2}$. More information on this can be found in \cite{moretti2017spectral}.} with the generators matching the displacement operators defined in Eqs.~\eqref{eq:disp_qudit_defn} and~\eqref{eq:bosonic_disp}. We provide a comparison between the analogous notations as defined over qubit, qudit and bosonic systems in Table~\ref{tab:systems_comparison}, and collect the symbols used throughout in Table~\ref{app:notation}.

We now define our channel learning problem. Based on the estimation of the function $f_{\mathcal{E}}$ defined in Eq.~\eqref{eq:channel_function}, we can consider a few characteristic functions depending on the input and output Hilbert spaces of the channel. For a channel $\mathcal{E}^{\mathrm{qudit}}\in \CPTP(\mathcal{H}_{d,m},\mathcal{H}_{d',m'})$, we consider the function $C_{\mathcal{E}^\mathrm{qudit}}:\mathbb{F}_{d}^{m}\times \mathbb{F}_{d}^{m}\times \mathbb{F}_{d'}^{m'}\times \mathbb{F}_{d'}^{m'}\to \mathbb{C}$ defined by
\begin{equation}
\begin{aligned}
    &C_{\mathcal{E}^\mathrm{qudit}}((\mathbf{q}_i,\mathbf{p}_i),(\mathbf{q}_o,\mathbf{p}_o)) \\&= \text{Tr}\left[\hat{D}_{d',m'}(\mathbf{q}_o,\mathbf{p}_o)\mathcal{E}^\mathrm{qudit}\left(\hat{D}^\dagger_{d,m}(\mathbf{q}_i,\mathbf{p}_i)/d^m\right)\right],
\end{aligned}
\end{equation}
which represents a transfer matrix definition of the channel that generalizes the transfer matrix definition from qubits \cite{10.1145/3670418} to qudits. Taking the 2-qudit Bell pair $\ket{\Phi_d}_{ab} = \frac{1}{\sqrt{d}}\sum_{i\in\mathbb{F}_d}\ket{i}_a\ket{i}_b$, we can observe that the above function is equivalent to 
\begin{equation}
\begin{aligned}\label{eq:cequdit}
    &C_{\mathcal{E}^\mathrm{qudit}}((\mathbf{q}_i,\mathbf{p}_i),(\mathbf{q}_o,\mathbf{p}_o)) \\
    &= \text{Tr}\Bigg[\left(\hat{D}_{d',m'}(\mathbf{q}_o,\mathbf{p}_o)_A\otimes \hat{D}^*_{d,m}(\mathbf{q}_i,\mathbf{p}_i)_B\right)\\&\quad\quad\quad\times(\mathcal{E}^{\mathrm{qudit}}_{A}\otimes\mathbb{I}_B)(|\Phi_{d}\rangle\langle\Phi_{d}|_{AB}^{\otimes m})\Bigg],
\end{aligned}
\end{equation}
where the conjugation of $\hat{D}_{d,m}^*(\mathbf{q}_i,\mathbf{p}_i)_B = \hat{D}_{d,m}(\mathbf{q}_i,-\mathbf{p}_i)_B$ is defined in the diagonal basis of $\hat{Z}_d$ for each qudit. This shows that one succeeds in the channel learning task through learning the Choi-state of the channel that is defined by $(\mathcal{E}^{\mathrm{qudit}}_{A}\otimes\mathbb{I}_B)(|\Phi_{d}\rangle\langle\Phi_{d}|_{AB}^{\otimes m})$. Due to the nature of all linear operators having a unique representation in terms of the expectation values of the qudit displacement operations (see Eq.~\eqref{eq:mquditchar}), the function in Eq.~\eqref{eq:cequdit} is indeed a complete description of the quantum channel. To formally justify this claim, we can show that the value $f_{\mathcal{E}^{\mathrm{qudit}}}(\hat{O},\hat{\rho})$ can be estimated for any operators $\hat{O}\in B(\mathcal{H}_{d',m'})$ and $\hat{\rho}\in D(\mathcal{H}_{d,m})$ assuming each entry of $C_{\mathcal{E}^\mathrm{qudit}}((\mathbf{q}_i,\mathbf{p}_i),(\mathbf{q}_o,\mathbf{p}_o))$ is known to sufficient accuracy.

While one may be tempted to extend this principle for bosonic channel learning, one will be immediately met by the fact that a similar transfer function for $\mathcal{E}^{\mathrm{boson}}\in \CPTP(\mathcal{H}_{\infty,m},\mathcal{H}_{\infty,m'})$ given by $\Lambda_{\mathcal{E}^{\mathrm{boson}}}(\alpha,\beta) = \text{Tr}\left[\hat{D}(\beta)\mathcal{E}(\hat{D}^\dagger(\alpha))\right]$ is not a bounded function and has very tempered distribution like behaviors (see App.~\ref{app:bosonicnogo}). To avert this, we instead choose to consider the function $C^{\mathrm{TMSV},r}_{\mathcal{E}^{\mathrm{boson}}}:\mathbb{C}^m\times\mathbb{C}^{m'}\to\mathbb{C}$ defined using the Choi-state obtained through acting the channel on $\ket{\Phi^{\mathrm{TMSV}}_r}_{ab} = \mathrm{sech}(r)\sum_{i=0}^{\infty}(\tanh(r))^i\ket{i}_a\ket{i}_b$ given by
\begin{equation}
\begin{aligned}
    &C_{\mathcal{E}^{\mathrm{boson}}}^r(\alpha,\beta) \\
    &=\text{Tr}\Big[(\hat{D}(\beta)_A\otimes\hat{D}(-\alpha^*)_B)\\&\quad\quad\times(\mathcal{E}^{\mathrm{boson}}_A\otimes \mathbb{I}_{B})(|\Phi_r^{\mathrm{TMSV}}\rangle\langle\Phi_r^{\mathrm{TMSV}}|_{AB}^{\otimes m})\Big],
\end{aligned}
\end{equation}
which is equivalent to
\begin{equation}\label{eq:CETMSVr}
\begin{aligned}
    &C_{\mathcal{E}^{\mathrm{boson}}}^{\mathrm{TMSV},r}(\alpha,\beta) = e^{-\frac{|\alpha|^2}{2\cosh(2r)}}\text{Tr}\left[\hat{D}(\beta)\mathcal{E}\left(\hat{D}^\dagger(\alpha \tanh(2r))\hat{\rho}_{\mathrm{th},\alpha}\right)\right],\\
    &\hat{\rho}_{\mathrm{th},\alpha}= \frac{\hat{D} (\alpha\tanh(2r)/2)\tanh(r)^{2\sum\hat{n}_i}\hat{D}^\dagger(\alpha\tanh(2r)/2)}{\cosh^{2m}(r)},
\end{aligned}
\end{equation}
where $r>0$.

We require to assume that the squeezing $r$ is finite to ensure that the actual learning task is physical. However this restriction still allows the function $C^{\mathrm{TMSV},r}_{\mathcal{E}^{\mathrm{boson}}}$ to contain a complete description of the channel $\mathcal{E}^{\mathrm{boson}}$ albeit in a weaker sense. Take $\hat{O}\in B(\mathcal{H}_{\infty,m})$ and $\hat{\rho}\in L_1(\mathcal{H}_{\infty,m})$ and assume we have exact (or to sufficient accuracy) knowledge of the characteristic functions of $\hat{O}$ and $\hat{\rho}$. Using this, we can always find a finite $r_0$ such that for $r>r_0$, accurate knowledge of the function $C_{\mathcal{E}^{\mathrm{boson}}}^{r}$ can be used to estimate the value of $f_{\mathcal{E}^{\mathrm{boson}}}(\hat{O},\hat{\rho})$ where $r_0$ depends on how quickly the Wigner function of $\hat{\rho}$ decays at far away points in the bosonic phase space. This is merely a restatement of the Choi--Jamiołkowski isomorphism which also holds for infinite-dimensional systems \cite{serafini2023quantum}. The explicit mathematical relation to $f_{\mathcal{E}^{\mathrm{boson}}}(\hat{D}(\beta),\hat{\rho})$ is the characteristic function of $\mathcal{E}^{\mathrm{boson}}(\hat{\rho})$ which can be recovered from the relation we have in Eq.~\eqref{eq:char_by_CTMSV} which we discuss in Appendix~\ref{app:inadequacy}.

Using this, we finally define our channel learning task as follows
\begin{problem}[Absolute value estimation for channel learning]\label{prob:channel_learn}
Consider input and output Hilbert spaces of $\mathcal{H}_{\mathrm{in}}$ and $\mathcal{H}_{\mathrm{out}}$ with each Hilbert space having a representative set of displacement operators parameterized by a symplectic space $(X_{\mathcal{H}_{\mathrm{in}}},\Omega_{\mathcal{H}_{\mathrm{in}}})$ and $(X_{\mathcal{H}_{\mathrm{out}}},\Omega_{\mathcal{H}_{\mathrm{out}}})$ respectively. The channel learning task is the estimation of the absolute value of the function $C_{\mathcal{E}}:X_{\mathcal{H}_{\mathrm{in}}}\times X_{\mathcal{H}_{\mathrm{out}}}\to \mathbb{C}$. Hence with a success probability of $1-\delta$, the deviation of the estimate to $|C_{\mathcal{E}}(Q)|$ is no larger than $\epsilon$ for all queries $Q\in \mathcal{Q}$ which is a bounded sized subset $\mathcal{Q} \subseteq X_{\mathcal{H}_{\mathrm{in}}}\times X_{\mathcal{H}_{\mathrm{out}}}$. The functions are defined as follows
\begin{itemize}
    \item For $\mathcal{E}\in \CPTP(\mathcal{H}_{d,m},\mathcal{H}_{d',m'})$, 
    \begin{align*}
    &C_{\mathcal{E}}((\mathbf{q}_i,\mathbf{p}_i),(\mathbf{q}_o,\mathbf{p}_o))\\&= \text{Tr}\left[\hat{D}_{d',m'}(\mathbf{q}_o,\mathbf{p}_o)\mathcal{E}(\hat{D}^\dagger_{d,m}(\mathbf{q}_i,\mathbf{p}_i)/d^m)\right],
    \end{align*}
     for queries $(\mathbf{q}_i,\mathbf{p}_i)\in\mathbb{F}^{m}_{d}\times\mathbb{F}^{m}_{d}$ and $(\mathbf{q}_o,\mathbf{p}_o)\in \mathbb{F}^{m'}_{d'}\times\mathbb{F}^{m'}_{d'}$.
    \item For $\mathcal{E}\in \CPTP(\mathcal{H}_{d,m},\mathcal{H}_{\infty,m'})$, 
    \begin{align*}
        &C_{\mathcal{E}}((\mathbf{q},\mathbf{p}),\beta)\\ &= \text{Tr}\left[\hat{D}(\beta)\mathcal{E}(\hat{D}_{d,m}^\dagger(\mathbf{q},\mathbf{p})/d^m)\right],
    \end{align*}
    for queries $(\mathbf{q},\mathbf{p})\in \mathbb{F}^{m}_d\times\mathbb{F}^{m}_d$ and $\beta\in \mathbb{C}^{m'}$ with $|\beta|^2\leq \kappa m'$ ($\kappa>0$).
    \item For $\mathcal{E}\in \CPTP(\mathcal{H}_{\infty,m},\mathcal{H}_{\infty,m'})$ taking some choice $r>0$, 
    \begin{align*}
    &C_{\mathcal{E}}(\alpha,\beta) = C_{\mathcal{E}}^{\mathrm{TMSV},r}(\alpha,\beta) \\&= \text{Tr}\left[\hat{D}(\beta)\mathcal{E}(\hat{D}^\dagger(\alpha\tanh(2r))\hat{\rho}_{\mathrm{th},\alpha})\right],
    \end{align*}
    ($\hat{\rho}_{\mathrm{th},\alpha}$ defined in Eq.~\eqref{eq:CETMSVr}) for queries $\alpha\in\mathbb{C}^{m}$, $\beta\in \mathbb{C}^{m'}$ with $|\alpha|^2\leq \kappa m$ and $|\beta|^2\leq \kappa'm'$ ($\kappa,\kappa'>0$).
\end{itemize}
It is assumed that the queries are only revealed after the learner has used up all the copies of the unknown quantum channel made available to them.
\end{problem}

We only consider hardness for estimation of the absolute value for the problem at hand.
To the best of our knowledge, in the bosonic case, for the limited multi-copy-access setting, there is currently no way to assess the complete phase information of $C_{\mathcal{E}}$ with significant success probability without the learner having prior knowledge of the actual queries. The absolute value estimation for channel learning task (Problem~\ref{prob:channel_learn}) succeeds through the learning of the channel's Choi state by estimation of displacement observables on the state. This exact problem has been analyzed in \cite{PRXQuantum.5.040301,coroi2025exponentialadvantagecontinuousvariablequantum} where access to the conjugate of the state allows estimation up to a sign. Following this, Ref.~\cite{PRXQuantum.5.040301} prepares a hypothesis state using an efficient number of single-copy measurements on the state, relying on the fact that the set of all possible queries is bounded for finite-dimensional states. In the case that no conjugate state is available, using the methods proposed in \cite{ller2025infinitehierarchymulticopyquantum}, the estimation of displacement observables for an $m$-qudit state can be done up to a phase that is a power of $e^{2\pi i/d}$. In Sec. 5.2 of \cite{ller2025infinitehierarchymulticopyquantum}, a way to extend this to also learning the phase information is shown using methods inspired by \cite{doi:10.1137/1.9781611978322.27,11250187}. All these methods do not require knowledge of the queries since the set of possible queries itself is bounded due to the finite-dimensional nature.

This becomes an issue when working with multi-mode bosonic states. While Ref.~\cite{coroi2025exponentialadvantagecontinuousvariablequantum} shows that the characteristic function $\chi_{\hat{\rho}}$ for any state $\hat{\rho}\in D(\mathcal{H}_{\mathrm{in}})$ can be estimated up to a sign, sample-efficiently and with minimal quantum memory, given access to $\hat{\rho}^*$, gaining knowledge of the sign still requires knowledge of the queries using the method in \cite{PhysRevResearch.6.033280} and there is not at the moment a known way to construct a hypothesis state similar to \cite{PRXQuantum.5.040301}. We note that from the continuous-variable tomography results in \cite{mele_learning_2025}, there is a way to even deduce the sign albeit with a much larger amount of samples. We leave the exploration of the lower bound for sign deduction to future work.

\section{General lower bound for all $c$-copy protocols}\label{sec:cmProt}
In this section we will state the main result of our work, which establishes lower bounds for all learning protocols that succeed in a hypothesis testing task involving channel discrimination. We choose the discrimination task in a specific way to ensure that succeeding in the absolute value estimation for channel learning task in Problem~\ref{prob:channel_learn} is sufficient to succeed in the discrimination task. Hence any lower bound derived for protocols that succeed in this discrimination task is also a lower bound for protocols that succeed in Problem~\ref{prob:channel_learn}. 
We first formalize the learning protocols in the following definition.
\begin{figure*}
    \centering
    \includegraphics[width=\textwidth]{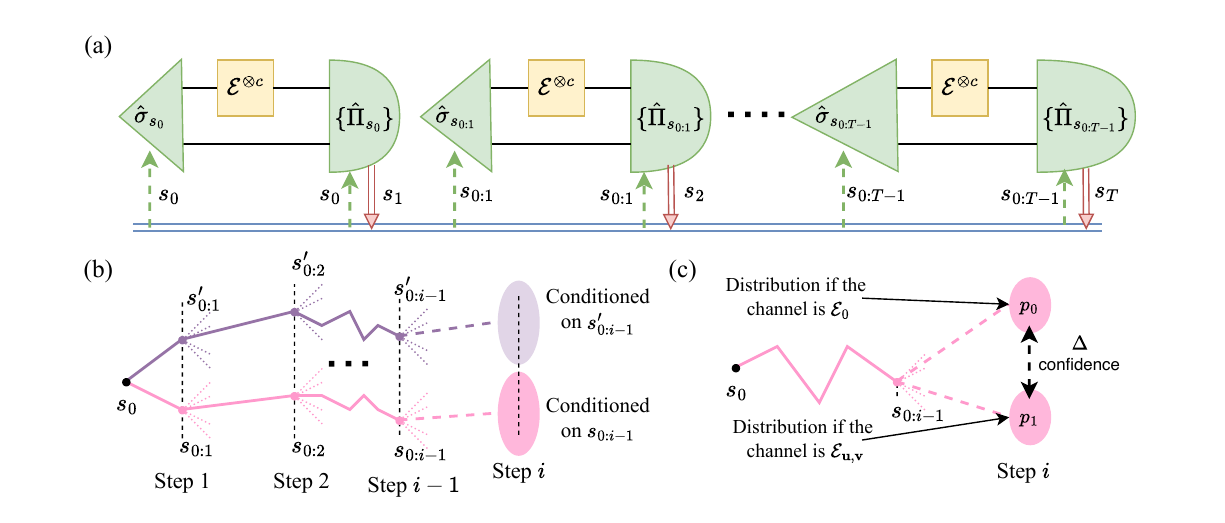}
    \caption{(a) Depiction of $c$-copy protocols as defined in Def.~\ref{def:cmprotinform}. Each learning step allows for ancilla assistance and adaptive state preparation and measurements. Each measurement $\{\hat{\Pi}_{s_{0:t}}\}$ is assumed to be an arbitrary POVM. The learning protocol is initialized with the classical memory with value $s_0$ which is appended with each measurement outcome to finally get the vector $s_{0:T}$ at the end of the learning protocol. This classical information is then processed to construct an appropriate description of the channel. (b) An example of two paths $s_{0:i}$ and $s_{0:i}'$ of the measurement outcomes where it can be noted that outcome $s_i$ ($s'_i$) is distributed as per a measure that is dependent on the previous outcomes $s_{0:i-1}$ ($s'_{0:i-1}$). This can be interpreted as a path being chosen at each step amongst the many possible paths (corresponding to getting a different measurement outcome at that point) from a tree-like structure. (c) When trying to discriminate the two channels $\mathcal{E}_0$ and $\mathcal{E}_{\mathbf{u},\mathbf{v}}$ (as described in discrimination task~\ref{prob:manyrevel}), at any given step one can quantify the confidence gained by quantifying the chi-squared divergence ($\Delta$)
    between the two possible probability distributions for the next outcome $s_i$ conditioned on previous outcomes $s_{0:i-1}$.}
    
    \label{fig:cMprotocol}
\end{figure*}
\begin{definition}[$c$-copy learning protocol]\label{def:cmprotinform}
    Consider an unknown quantum channel $\mathcal{E}\in \CPTP(\mathcal{H}_{\mathrm{in}},\mathcal{H}_{\mathrm{out}})$ and an ancillary space $\mathcal{H}_{\mathrm{anc}}$. A $c$-copy learning protocol consists of $T$ learning steps where at each step the learner prepares an input state to the channel $(\mathcal{E}^{\otimes c}\otimes \mathbb{I}_{\mathrm{anc}})$ and performs a POVM (allowed to be a general measurement) on the output state where the measurement outcomes are assumed to be contained in a measurable space $S$. At each step, the learner is allowed to use the outcomes of the previous steps to adaptively prepare the input state and choice of measurement. The learner finally uses all the classical information obtained from the measurements to perform the relevant channel learning task.
\end{definition}
A diagrammatic representation of such learning protocols is described in Fig.~\ref{fig:cMprotocol}(a). Consider that a certain number of steps of the learning protocol have been completed. At this stage, it can be seen that there are multiple possible measurement outcomes for the following step which are distributed according to a distribution that is conditioned on the previous outcomes. In this picture, the final state of the classical memory after $T$ steps can be understood as a path (or vector) of length $T$ with each entry of this path being the outcome of the measurement at that step. The path like interpretation follows from the fact that the $i$th step of the learning protocol is conditioned on the outcomes of the previous $i-1$ steps. We describe in full detail the probability distribution associated with the final state of the classical system after $T$ steps in Definition~\ref{def:path_rep_cM}.

We assume all the measurement outcomes are contained in the same space which we can do without loss of generality (see discussion in App.~\ref{app:measure_theory}). We make use of the following notation for a vector of length $T+1$ as $s_{0:T}$ where each $s_i\in S$ for $i = 0$ to $T$. We set $s_0$ to a trivial value to represent the state of the classical memory at the very start and $s_i$ for $i\geq 1$ is the outcome of the measurement at the $i$th step of the learning protocol. The notation $s_{i:j}$ is the sub-vector from the $i$th to $j$th entry including both $s_i$ and $s_j$. Using this we can see that at the $i$th learning step, the chosen input state and measurement are both functions of $s_{0:i-1}$. A diagrammatic representation of how two different paths evolve with each learning step is described in Fig.~\ref{fig:cMprotocol}(b). 

To establish lower bounds for the absolute value estimation for channel learning task of Problem~\ref{prob:channel_learn} we come up with a hypothesis testing task involving channel discrimination. The key motivation behind this definition is to describe a family of channels that have a sparse representation when expressed using the transfer matrix formalism of Problem~\ref{prob:channel_learn}. Hence learning the transfer matrix can help in discriminating such a channel from a complete replacement channel.
\begin{problem}[Many-one discrimination with revelation]\label{prob:manyrevel}
    Taking $x\in\{0,1,\dots l\}$ ($l\geq 1$), consider a set of operators $\{\hat{K}_{x,\mathbf{u}}\}$ and $\{\hat{W}_{x,\mathbf{v}}\}$ parameterized by $\mathbf{u}$ and $\mathbf{v}$ such that for all possible values of $x,\mathbf{u},\mathbf{v}$ the operators $\hat{K}_{x,\mathbf{u}}\in B(\mathcal{H}_{\mathrm{in}})$ and $\hat{W}_{x,\mathbf{v}}\in B(\mathcal{H}_{\mathrm{out}})$ and additionally the operators $\hat{K}_{0,\mathbf{u}} = \mathbb{I}_{\mathrm{in}}$ and $\hat{W}_{0,\mathbf{v}} = \mathbb{I}_{\mathrm{out}}$ which are the identity operators on the respective Hilbert spaces. Additionally, for $\hat{\rho}_0\in D(\mathcal{H}_{\mathrm{out}})$, the linear map \begin{equation}\label{eq:Epar1par2}\mathcal{E}_{\mathbf{u},\mathbf{v}}(\cdot) = \sum_{x\in\{0,\dots,l\}}\text{Tr}\left[(\cdot)\hat{K}_{x,\mathbf{u}}\right]\hat{\rho}_0^{1/2}\hat{W}_{x,\mathbf{v}}\hat{\rho}_0^{1/2},\end{equation} is always a valid quantum channel in $\CPTP(\mathcal{H}_{\mathrm{in}},\mathcal{H}_{\mathrm{out}})$. Consider the following problem setting where $A$ first picks random samples $\mathbf{u}$ and $\mathbf{v}$ according to distributions $p(\mathbf{u})$ and $p(\mathbf{v})$, respectively. With equal probability, $A$ now prepares as many copies as requested of one of the two channels $\mathcal{E}_0(\cdot) = \text{Tr}\left[(\cdot)\right]\hat{\rho}_0$ or $\mathcal{E}_{\mathbf{u},\mathbf{v}}$ and provides them to $B$ in batches of size $c$. Once $B$ has used up all the copies of the channel in their learning protocol, $A$ reveals the value of $\mathbf{u}$ and $\mathbf{v}$ following which $B$ has to guess whether $A$ had prepared copies of $\mathcal{E}_0$ or $\mathcal{E}_{\mathbf{u},\mathbf{v}}$.
\end{problem}
The form of the channel $\mathcal{E}_{\mathbf{u},\mathbf{v}}$ in Eq.~\eqref{eq:Epar1par2} can be used to describe any channel using a sufficiently large $l$ for finite-dimensional channels with this principle also being extendable to infinite-dimensional channels assuming an integral instead of a sum. We will only be exploring the regime of small $l$ in our analysis which means we have a low-rank representation of our channels in this form. Further, we find that the examples for the lower bounds turn out to be entanglement-breaking channels. The channel $\mathcal{E}_{\mathbf{u},\mathbf{v}}$ maps an operator with large overlap (in terms of the Hilbert--Schmidt inner product) with $\hat{K}_{x,\mathbf{u}}$ to a correspondingly large contribution of $\hat{\rho}_0^{1/2}\hat{W}_{x,\mathbf{v}}\hat{\rho}_0^{1/2}$ in the output. Consider that the operators $\hat{K}_{x,\mathbf{u}}$ and $\hat{W}_{x,\mathbf{v}}$ are each a sum of few displacements parameterized by $\mathbf{u}$ and $\mathbf{v}$. In this case, one can tell apart the hypotheses from queries dependent on $\mathbf{u},\mathbf{v}$ to transfer function $C_{\mathcal{E}}$ for $\mathcal{E}$ being one of $\mathcal{E}_0$ or $\mathcal{E}_{\mathbf{u},\mathbf{v}}$. For entanglement-breaking channels, the operators $\hat{K}_{x,\mathbf{u}}$ are related to the POVM applied on the input state and operators $\hat{W}_{x,\mathbf{v}}$ are related to the output state.

We now state our lower bound for the depth $T$ of all $c$-copy learning protocols that succeed with sufficiently high probability.
\begin{lemma}[Master lemma]\label{lem:master}
Any $c$-copy protocol with arbitrarily large ancillary assistance and adaptive state preparation and measurements will require a depth $T = \Omega(1/\Delta)$ to succeed with constant probability above $1/2$ in many-one channel discrimination with revelation (Problem~\ref{prob:manyrevel}) where
 \begin{equation}\label{eq:masterlemmadeltaup}
 \begin{aligned}
     \Delta = \Bigg(\sum_{\pmb{x}\in\{0,\dots,l\}^c\setminus\{\pmb{0}\}}\Big\{&\left\|\mathbb{E}_{\mathbf{v}}\left[\hat{W}_{\pmb{x},\mathbf{v}}^{\otimes 2}\right]\right\|_{\mathrm{op}}\\&\times\left\|\mathbb{E}_{\mathbf{u}}\left[\hat{K}_{\pmb{x},\mathbf{u}}^{\otimes 2}\right]\right\|_{\mathrm{op}}\Big\}^{1/2}\Bigg)^{2},
 \end{aligned}
    \end{equation}
    and we have defined
    \begin{equation}
        \hat{W}_{\pmb{x},\mathbf{v}} = \bigotimes_{i=1}^c\hat{W}_{x_i,\mathbf{v}},\quad\hat{K}_{\pmb{x},\mathbf{u}} = \bigotimes_{i=1}^c\hat{K}_{x_i,\mathbf{u}}.
    \end{equation}
\end{lemma}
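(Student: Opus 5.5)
The plan is to reduce the discrimination task to bounding a distance between two distributions over measurement paths. Let $p_0(s_{0:T})$ be the path distribution when every batch is $\mathcal{E}_0^{\otimes c}$, and $p_{\mathbf{u},\mathbf{v}}(s_{0:T})$ the one when every batch is $\mathcal{E}_{\mathbf{u},\mathbf{v}}^{\otimes c}$ (Definition~\ref{def:path_rep_cM}). Because $\mathbf{u},\mathbf{v}$ are revealed only after all channel uses, the best success probability is $\tfrac12+\tfrac12\mathbb{E}_{\mathbf{u},\mathbf{v}}\mathrm{TV}(p_0,p_{\mathbf{u},\mathbf{v}})$. By Le Cam's argument, it therefore suffices to show
\begin{equation*}
\mathbb{E}_{\mathbf{u},\mathbf{v}}\,\mathrm{TV}(p_0,p_{\mathbf{u},\mathbf{v}})\le O(\sqrt{T\Delta})
\end{equation*}
or a similar bound that is small unless $T\gtrsim 1/\Delta$. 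I would bound the TV by the square root of a $\chi^2$-type quantity under $p_0$, namely $\mathbb{E}_{\mathbf{u},\mathbf{v}}\mathbb{E}_{p_0}[(p_{\mathbf{u},\mathbf{v}}/p_0-1)^2]$. I would then use the likelihood-ratio martingale (equivalently, chain the per-step conditional divergences of Fig.~\ref{fig:cMprotocol}(c) via Cauchy--Schwarz and the chain rule), so that it suffices to bound the one-step quantity uniformly in the history $s_{0:t-1}$.

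For a single step I fix the input state $\hat{\sigma}$ on $\mathcal{H}_{\mathrm{in}}^{\otimes c}\otimes\mathcal{H}_{\mathrm{anc}}$ and the POVM $\{\hat{\Pi}_s\}$, both depending on the history. Expanding $\mathcal{E}_{\mathbf{u},\mathbf{v}}^{\otimes c}$ by Eq.~\eqref{eq:Epar1par2} gives
\begin{equation*}
p_{\mathbf{u},\mathbf{v}}(s)=\sum_{\pmb{x}}\text{Tr}\big[\hat{\Pi}_s\,(\hat{\rho}_0^{1/2\,\otimes c}\hat{W}_{\pmb{x},\mathbf{v}}\hat{\rho}_0^{1/2\,\otimes c})\otimes\text{Tr}_{\mathrm{in}}[\hat{\sigma}\hat{K}_{\pmb{x},\mathbf{u}}]\big].
\end{equation*}
The $\pmb{x}=\pmb{0}$ term is exactly $p_0(s)$, since $\mathcal{E}_0^{\otimes c}$ produces $\hat{\rho}_0^{\otimes c}\otimes\text{Tr}_{\mathrm{in}}\hat{\sigma}$. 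So the deviation is $\delta_{\mathbf{u},\mathbf{v}}(s)=\sum_{\pmb{x}\neq\pmb{0}}a_{\pmb{x}}(s)$, where each $a_{\pmb{x}}(s)$ is bilinear in $\hat{W}_{\pmb{x},\mathbf{v}}$ and $\hat{K}_{\pmb{x},\mathbf{u}}$. Using $\big(\sum_{\pmb{x}}|a_{\pmb{x}}|\big)^2\le\big(\sum_{\pmb{x}}\sqrt{w_{\pmb{x}}}\big)\sum_{\pmb{x}}|a_{\pmb{x}}|^2/\sqrt{w_{\pmb{x}}}$ with $w_{\pmb{x}}=\|\mathbb{E}\hat W^{\otimes2}\|\,\|\mathbb{E}\hat K^{\otimes2}\|$, the problem reduces to bounding $\mathbb{E}_{\mathbf{u},\mathbf{v}}\sum_s|a_{\pmb{x}}(s)|^2/p_0(s)\le\sqrt{w_{\pmb{x}}}$ for each term. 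Summing then gives $(\sum_{\pmb{x}}\sqrt{w_{\pmb{x}}})^2=\Delta$ per step, which is exactly the form in Eq.~\eqref{eq:masterlemmadeltaup}.

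The single-term bound is the main obstacle. I would write $|a_{\pmb{x}}(s)|^2$ as the trace of a product of two copies against $\hat{\Pi}_s\otimes\hat{\Pi}_s$ and take the expectation inside, which produces $\mathbb{E}_{\mathbf{v}}\hat{W}_{\pmb{x},\mathbf{v}}^{\otimes2}$ and $\mathbb{E}_{\mathbf{u}}\hat{K}_{\pmb{x},\mathbf{u}}^{\otimes2}$ (using that $\mathbf{u},\mathbf{v}$ are independent). Next I bound each by its operator norm, which requires positivity. For this I would apply Cauchy--Schwarz with the $\hat{\rho}_0^{1/2}$ sandwiching, so that the two copies pair into positive operators weighted by $p_0(s)$. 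The sandwiching is precisely what allows division by $p_0(s)=\text{Tr}[\hat{\Pi}_s(\hat{\rho}_0^{\otimes c}\otimes\hat{\sigma}_{\mathrm{anc}})]$: the output factor collapses to $\|\mathbb{E}\hat W^{\otimes2}\|_{\mathrm{op}}$. The input factor contributes $\|\mathbb{E}\hat K^{\otimes2}\|_{\mathrm{op}}$ after tracing against $\hat\sigma^{\otimes 2}$ together with the ancilla marginal; this is handled by purifying the input into the ancilla and noting that an operator inequality on the channel side is preserved under the identity on the ancilla.

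If the two-copy pairing does not directly yield positive operators, I would fall back on a matrix Cauchy--Schwarz $|\text{Tr}[AB]|^2\le\text{Tr}[AA^\dagger]\text{Tr}[B^\dagger B]$ applied with $A,B$ chosen to absorb the $\hat{\rho}_0^{1/4}$ and $\hat\sigma^{1/2}$ factors. The final step accumulates over $T$ steps to give $\mathbb{E}\,\mathrm{TV}^2\lesssim T\Delta$, which implies $T=\Omega(1/\Delta)$.
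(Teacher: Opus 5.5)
Your single-step analysis matches the paper's. You isolate the $\pmb{x}=\pmb{0}$ term as $p_0$, apply weighted Cauchy--Schwarz with weights $\sqrt{w_{\pmb{x}}}$, double $|a_{\pmb{x}}(s)|^2$ into a trace against $\hat{\Pi}_s\otimes\hat{\Pi}_s$, and sandwich so that a positive operator with trace $p_0(s)$ appears. That is the paper's $F_{\hat\sigma}$ construction. One slip: each term must be bounded by $w_{\pmb{x}}$, not $\sqrt{w_{\pmb{x}}}$, to land on $(\sum_{\pmb x}\sqrt{w_{\pmb x}})^2$.

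The genuine gap is in passing from one step to $T$ steps through $\mathbb{E}_{\mathbf{u},\mathbf{v}}\chi^2(p_{\mathbf{u},\mathbf{v}}\|p_0)$, which does not accumulate additively. Write $L_t$ for the per-step likelihood ratio and $\chi^2_t(\mathbf{u},\mathbf{v};s_{0:t-1})$ for the conditional divergence. Orthogonality of the increments of the martingale $\prod_{t'\le t}L_{t'}$ under $p_0$ gives
\begin{equation*}
\chi^2(p_{\mathbf{u},\mathbf{v}}\|p_0)=\sum_{t=1}^{T}\mathbb{E}_{s\sim p_0}\Big[\Big(\prod_{t'<t}L_{t'}\Big)^2\chi^2_t(\mathbf{u},\mathbf{v};s_{0:t-1})\Big].
\end{equation*}
The per-step hypothesis only controls $\mathbb{E}_{\mathbf{u},\mathbf{v}}\chi^2_t\le\Delta$. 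The weight $(\prod_{t'<t}L_{t'})^2$ is strongly correlated with $(\mathbf{u},\mathbf{v})$, so the average cannot be moved past it. Neither Cauchy--Schwarz plus a chain rule nor the martingale structure fixes this.

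Here is a concrete failure, using the qudit construction of Thm.~\ref{thm:qudit2qudit} with $c=1$. Take the non-adaptive protocol that always inputs a state with $\mathrm{Tr}[\hat{E}_{d,m}(\mathbf{u}_0)\hat\sigma]=\Theta(1)$ and measures $\{(\mathbb{I}\pm\hat{E}_{d',m'}(\mathbf{v}_0)/\sqrt{2})/2\}$. For the single pair $(\mathbf{u}_0,\mathbf{v}_0)$ the per-step divergence is $\Theta(\epsilon^2)$. Hence $\mathbb{E}_{\mathbf{u},\mathbf{v}}\chi^2(p_{\mathbf{u},\mathbf{v}}\|p_0)\gtrsim e^{\Theta(T\epsilon^2)}/(d^{2m}d'^{2m'})$. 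At $T\sim 1/\Delta\sim d^md'^{m'}/\epsilon^2$ this is exponentially larger than $T\Delta$. Yet $\mathbb{E}_{\mathbf{u},\mathbf{v}}\mathrm{TV}\le 1/(d'^{2m'}-1)$, because the outcome statistics are unchanged unless $\mathbf{v}=\mathbf{v}_0$. Rare parameters on which the protocol happens to be aligned blow up $\chi^2$ without contributing to the averaged TV. So your claimed bound $\mathbb{E}\,\mathrm{TV}^2\lesssim T\Delta$ cannot be obtained through $\chi^2$ at the scale $T\sim1/\Delta$.

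The missing idea is to avoid second moments of the full likelihood ratio entirely. The paper uses the one-sided form of Le Cam (Lemma~\ref{lem:1sidedLeCam}), $2(p_{\mathrm{succ,revel}}-\tfrac12)\le\Pr_{\mathbf{u},\mathbf{v},\,s\sim p_0}[L\le 1-\beta]+\beta$. This only sees the lower tail of $L=\prod_t(1+Y_t)$, which rare huge values of $L$ cannot inflate. It then applies a martingale lower-tail bound (Lemma~\ref{lem:martingale}) whose only input is the per-step second moment of $Y_t$ averaged over $(\mathbf{u},\mathbf{v})$ under null histories. This averaging is harmless because the null path law does not depend on $(\mathbf{u},\mathbf{v})$.

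Any repair of your route needs the same feature: a divergence whose adaptive chain rule takes expectations over histories under $p_0$, rather than $\chi^2$. For example, a subadditivity bound for squared Hellinger distance of that form would recover the statement up to a logarithmic factor in $T$.
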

A depth-$T$ $c$-copy protocol has a channel query complexity of $cT$ and requires the use of $T$ measurements. The proof for the above lemma can be found in Appendix~\ref{app:cm_protocols}. Our proof for the above lower bound borrows methodology from tree based learning protocols in \cite{chen2024optimaltradeoffsestimatingpauli}. The probability above random guess any learning protocol succeeds at many-one channel discrimination with revelation (Problem~\ref{prob:manyrevel}) is upper bounded by the total variation distance between the two hypotheses' probability distributions over the classical state $s_{0:T}$ of the learning protocol. We denote the likelihood ratio between probability distribution for hypothesis of channel $\mathcal{E}_{\mathbf{u},\mathbf{v}}$ and hypothesis of channel $\mathcal{E}_0$ as $\mathrm{LR}$. Using a one-sided bound (see Lemma~\ref{lem:1sidedLeCam}), the success probability above half is always upper bounded by the sum of $\beta$ and the probability that $\mathrm{LR}\leq 1-\beta$ for any $0<\beta<1$.

Suppose that the learner is at the $i$th learning step and has obtained outcomes $s_{0:i-1}$ so far. Conditioned on these outcomes, one obtains two possible probability distributions dictating the distribution of the next outcome $s_i$ depending on whether the channel was $\mathcal{E}_0$ or $\mathcal{E}_{\mathbf{u},\mathbf{v}}$ (see Fig.~\ref{fig:cMprotocol}(c)). The maximum possible variance of the likelihood ratio between these two conditional measures averaged over the distribution of the null hypothesis (paths arising when the channel is $\mathcal{E}_0$) and $\mathbf{u},\mathbf{v}$ gives the parameter $\Delta$. One can interpret this parameter as related to the possible confidence the learner gains in the $i$th step of the learning protocol. We now consider the likelihood ratio between the distributions related to the full classical memory of $\mathrm{LR}$. Using classical probability theory (inspired by the methods used in \cite{chen2024optimaltradeoffsestimatingpauli}), the probability of $\mathrm{LR}\leq 1-\beta$ is itself upper bounded by $\beta+z\Delta T$. Here $z$ is a constant dependent only on $\beta$. Hence for any scheme that succeeds with probability $1/2 + \Omega(1)$, one can find an appropriate value of $\beta$ which necessarily results in $\Delta\times T = \Omega(1)$ yielding the lower bound in question. The value $\Delta$ can be noted to be the maximum possible $\chi^2$-divergence (over $\mathbf{u}$ and $\mathbf{v}$ between the two marginal distributions at any learning step $i$ which naturally captures how distinguishable the two hypotheses are. For the sake of complete generality, the proof presented in App.~\ref{app:cm_protocols} makes use of probability measures and so we consider the Radon--Nikodym derivative between the two measures of the respective hypotheses which captures the same principle as the likelihood ratio.

Our work is an extension of past works that use similar principles of $c$-copy protocols, such as Ref.~\cite{chen2024optimaltradeoffsestimatingpauli} for the case of $c$-copy access to qubit states, which was later extended to multi-qudit states in Ref.~\cite{ller2025infinitehierarchymulticopyquantum}. We have extended these principles to apply to the channel learning scenario which also assumes arbitrary input states and ancilla assistance. Further, we also prove this for POVMs that have infinite accuracy, a non-trivial extension when considering bosonic systems since this allows for unbounded operators to be used to define the distribution corresponding to the probability measure for the outcomes (e.g., the single-mode homodyne measurement $|\hat{q}=q_0\rangle\langle\hat{q}=q_0|$ operator is not a bounded operator, although integrating it over an open interval for $q_0\in \mathbb{R}$ makes it bounded). We detail the subtlety involved in this in the proof in Appendix~\ref{app:cm_protocols}.

We make a few notable observations on the result in Lemma~\ref{lem:master}. The result assumes nothing about the input or output dimensions of the channel and further is always well defined since bounded operators have a well-defined operator norm even in the infinite-dimensional case. There is also no assumption on the size of the ancillary space since this result allows both the input state probes and measurements to be as entangled as desired. On an intuitive level, the use of the operator norm is a result of assuming the best possible input and output probes. If we assume the channel to be a complete replacement channel, we can make sharper restrictions (see Corollary~\ref{corr:statelearn}) allowing us to derive bounds related to state learning as well. The operator norm also informs us of instances where learning might become efficient since the terms that would dominate the value of $\Delta$ are those where the operator norm is large. The structure of the operators $\hat{W}_{\pmb{x},\mathbf{v}}$ and $\hat{K}_{\pmb{x},\mathbf{u}}$ being defined for $\pmb{x}\in \{0,\dots,l\}^{c}$ comes from the $c$-copy channel use, effectively giving these length $c$ strings that must be summed over. Over the averaging of the parameters of $\mathbf{u}$ and $\mathbf{v}$ we find that these correspond to the cases where the operators being averaged over all commute with each other, yielding sharp transitions in the complexity once $c$ reaches certain values or a resource such as the conjugate channel is made available.

\section{Results for channel learning}\label{sec:results}
In this section we will be examining lower bounds for $c$-copy learning protocols. Since the $c$-copy access might refer to either access to $\mathcal{E}^{\otimes c}$ or $(\mathcal{E}\otimes \mathcal{E}^*)^{\otimes c}$, the number of measurements (or depth of protocol $T$) means that the channel queries are $cT$ to the channel $\mathcal{E}$ or $\mathcal{E}\otimes\mathcal{E}^*$. Hence the channel query complexity is $\Theta(cT)$ where $T$ is the number of measurements.
\subsection{Sufficient conditions for efficient learning}
We first note a case where the learner can succeed in absolute value estimation for channel learning (Problem~\ref{prob:channel_learn}) using minimal quantum memory. This case assumes access to the complex-conjugate channel $\mathcal{E}^*$ along with access to the channel $\mathcal{E}$ and ancilla assistance, in effect using the same learning protocol from \cite{coroi2025exponentialadvantagecontinuousvariablequantum,PRXQuantum.5.040301} over the Choi state of the channel. This now provides us an upper bound to the sample complexity of Problem~\ref{prob:channel_learn} when given access to both $\mathcal{E}$ and $\mathcal{E}^*$ (further discussion can be found in App.~\ref{app:suff_cond_effic}). We assume that the complex-conjugation for defining channel $\mathcal{E}^*$ is in a fixed basis known to the learner.
\begin{theorem}\label{thm:efficientlearnconj}
    Consider the absolute value estimation for channel learning task described in Problem~\ref{prob:channel_learn} for channel $\mathcal{E}\in \CPTP(\mathcal{H}_{\mathrm{in}},\mathcal{H}_{\mathrm{out}})$ where the set of queries is of size $M$. There exists a non-adaptive ancilla assisted protocol which uses single copy access to the joint channel $(\mathcal{E}\otimes\mathcal{E}^*)$ that succeeds in Problem~\ref{prob:channel_learn} with accuracy $\epsilon$ and success probability of at least $1-\delta$ using $O(\log(M/\delta)\epsilon^{-4})$ measurements.
\end{theorem}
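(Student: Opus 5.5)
The plan is to reduce the channel problem to the known state-learning protocol of Refs.~\cite{PRXQuantum.5.040301,coroi2025exponentialadvantagecontinuousvariablequantum}, applied to the Choi state. A single use of $(\mathcal{E}\otimes\mathcal{E}^*)$ is fed with two maximally entangled probes: $|\Phi_d\rangle^{\otimes m}$ in the qudit case and $|\Phi^{\mathrm{TMSV}}_r\rangle^{\otimes m}$ in the bosonic case. This produces the joint state $\hat{\sigma}\otimes\hat{\sigma}'$, where $\hat{\sigma}=(\mathcal{E}\otimes\mathbb{I})(|\Phi\rangle\langle\Phi|)$ is the Choi state and $\hat{\sigma}'$ is the Choi state of $\mathcal{E}^*$.

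The first step is to check that $\hat{\sigma}'=\hat{\sigma}^*$ in the fixed basis. For a channel with Kraus operators $\hat{A}_k$, the conjugate channel has Kraus operators $\hat{A}_k^*$, by the defining relation $[\mathcal{E}^*(\hat{\rho}^T)]^T=\mathcal{E}(\hat{\rho})$. The probe states have real coefficients in the reference basis, so $|\Phi\rangle^*=|\Phi\rangle$ and therefore $\hat{\sigma}'=\hat{\sigma}^*$. Each round thus gives one copy of $\hat{\sigma}\otimes\hat{\sigma}^*$, which is precisely the resource of the conjugate-assisted state-learning protocols.

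Next, Eq.~\eqref{eq:cequdit} (and its bosonic analogue) expresses $C_{\mathcal{E}}(Q)$ as a displacement expectation $\text{Tr}[\hat{D}_Q\hat{\sigma}]$. Here $\hat{D}_Q=\hat{D}(\mathbf{q}_o,\mathbf{p}_o)\otimes\hat{D}^*(\mathbf{q}_i,\mathbf{p}_i)$, with possibly a known scalar prefactor: $d^{-m}$ normalizations, or the Gaussian factor $e^{-|\alpha|^2/2\cosh 2r}$ together with the rescaling $\alpha\tanh(2r)$ in the bosonic case. It then suffices to estimate $|\text{Tr}[\hat{D}_Q\hat{\sigma}]|^2=\text{Tr}[(\hat{D}_Q\otimes\hat{D}_Q^*)(\hat{\sigma}\otimes\hat{\sigma}^*)]$. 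The operators $\hat{D}_Q\otimes\hat{D}_Q^*$ commute for all $Q$ because the braiding phases cancel: the phase from the first factor is exactly conjugated by the second. Hence a single Bell-basis measurement across the $\hat{\sigma}$ and $\hat{\sigma}^*$ registers (generalized Bell basis for qudits; the measurement of $\hat{q}_1-\hat{q}_2$ and $\hat{p}_1+\hat{p}_2$ for modes, as in Table~\ref{tab:systems_comparison}) simultaneously diagonalizes all of them. This measurement is non-adaptive and independent of the queries, which is compatible with the queries being revealed only at the end.

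From each outcome one computes a bounded unbiased estimator of $|\text{Tr}[\hat{D}_Q\hat{\sigma}]|^2$ for every $Q$. In the qudit case this is a phase $\omega^{\langle\cdot,\cdot\rangle}$ of modulus one. In the bosonic case it is $e^{i\Omega(\cdot,\cdot)}$, again of modulus one. Hoeffding's inequality with a union bound over the $M$ queries gives additive error $\epsilon^2$ on all squared moduli with $N=O(\log(M/\delta)\epsilon^{-4})$ rounds. Since $|\sqrt{a}-\sqrt{b}|\le\sqrt{|a-b|}$ for $a,b\ge 0$, taking the square root of the clipped estimate yields $|C_{\mathcal{E}}(Q)|$ to accuracy $\epsilon$ (up to the known prefactors, which are at most one). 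This is the source of the $\epsilon^{-4}$ scaling.

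I expect the main obstacle to be the bosonic case, where the Bell measurement is a continuous-variable homodyne-type POVM. There one must confirm that the estimator built from $\hat{D}(\beta)\otimes\hat{D}(\beta)^*$ is exactly unbiased on the physical state $\hat{\sigma}\otimes\hat{\sigma}^*$ despite the unbounded outcome space. My first approach is to compute the outcome distribution via characteristic functions: $\hat{D}(\gamma)\otimes\hat{D}(\gamma^*)$ is a function of $\hat{q}_1-\hat{q}_2$ and $\hat{p}_1+\hat{p}_2$ only. The corresponding estimator is bounded by one, so Hoeffding applies without any energy-dependent loss.
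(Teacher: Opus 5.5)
Your proposal is correct and is essentially the paper's argument: probe $\mathcal{E}\otimes\mathcal{E}^*$ with real Bell/TMSV states to get $\hat{\sigma}\otimes\hat{\sigma}^*$ (the paper writes this as $\hat{\sigma}_{\mathrm{choi}}\otimes\hat{\sigma}_{\mathrm{choi}}^T$), estimate every $|C_{\mathcal{E}}(Q)|^2$ to accuracy $\Theta(\epsilon^2)$ from query-independent Bell measurements using Hoeffding and a union bound, then convert to $|C_{\mathcal{E}}(Q)|$. The only differences are cosmetic: you use clipping with $|\sqrt{a}-\sqrt{b}|\le\sqrt{|a-b|}$ where the paper estimates the square to $\epsilon^2/3$ and thresholds at $2\epsilon^2/3$, and under the paper's convention $C_{\mathcal{E}}(Q)$ is exactly the Choi-state displacement expectation, so there is no prefactor to track.
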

The proof for the above theorem follows from noting that the channel learning task in Problem~\ref{prob:channel_learn} is equivalent to the estimation of absolute values of displacement observables on some state $(\mathcal{E}\otimes\mathbb{I}_{\mathrm{in}})(\hat{\sigma})$ where $\hat{\sigma}\in D(\mathcal{H}_{\mathrm{in}}^{\otimes 2})$ is a highly entangled state that is either a collection of Bell pairs or two-mode squeezed vacuum probes (depending on input dimension). Assuming the complex-conjugation and state transposition is defined in the computational basis of the qudits (or the Fock basis for bosonic modes), the generalized Bell measurement scheme (described in detail in Appendix~\ref{app:qudit_disp} for the multi-qudit case and Appendix~\ref{app:bosonic_disp} for the bosonic mode case) can be applied on the state $(\mathcal{E}\otimes\mathbb{I}_{\mathrm{in}})(\hat{\sigma})\otimes(\mathcal{E}^*\otimes\mathbb{I}_{\mathrm{in}})(\hat{\sigma}^T)$ for efficient estimation of the square of the function $C_{\mathcal{E}}$. We can succeed by estimation of the square of the function to $\epsilon^2/3$ accuracy following the steps described in \cite{coroi2025exponentialadvantagecontinuousvariablequantum} and then guess the value to be zero if the estimate of the square is smaller than $2\epsilon^2/3$ which is sufficient to ensure $\epsilon$ accuracy for the absolute value.

The output of this learning protocol is the values of the function $C_{\mathcal{E}}$ accurate up to $\epsilon$ with a possible sign error. In the case that the channel $\mathcal{E}\in \CPTP(\mathcal{H}_{\mathrm{in}},\mathcal{H}_{\mathrm{out}})$ has both the Hilbert spaces being finite-dimensional, one can construct a hypothesis state $\hat{\sigma}\in D(\mathcal{H}_{\mathrm{out}}\otimes \mathcal{H}_{\mathrm{in}})$ which with joint measurements with the Choi-state can be used for also inferring the actual value of the function \cite{PRXQuantum.5.040301}. This is however intractable in the case of either Hilbert space being infinite-dimensional. Instead, using global measurements on up to $O(\log(M/\delta)/\epsilon^2)$ copies of the Choi-state in parallel, one can obtain the sign information as well. This is clearly prohibitive in the limited parallel channel use setting and hence any cases where we discuss bosonic channels we limit the channel learning task to only absolute value estimation. Regardless, this shows that the Choi-state learning (with phase information) will succeed given $O(\log(M/\delta)\epsilon^{-4})$ copies of $\mathcal{E}\otimes\mathcal{E}^*$, albeit requiring global measurements over $O(\log(M/\delta)/\epsilon^2)$ copies for sign information.

It is worth noting however that the complex-conjugate channel is a very non-trivial resource which might very rarely be available. An example where it is available is for channels with Kraus decompositions with each Kraus operator being real-valued in the basis of choice or equivalently the channel has a real-valued Stinespring dilation with an environment which begins in a self-conjugate state. There is a no-go theorem for the existence of a super-channel that is capable of transforming any number of copies of $\mathcal{E}$ into a single copy of $\mathcal{E}^*$ (see theorem 1 of \cite{zhu2026simulationadjointspetzrecovery}). We will now examine different lower bounds obtained based on the availability of this resource along with joint measurements. 

\subsection{Lower bounds in the absence of the complex-conjugate channel}
We consider the setting of $c$-copy access to the channel $\mathcal{E}$ with no access to the channel $\mathcal{E}^*$. We state our results on a case-wise basis for various input and output dimensions
\begin{theorem}\label{thm:qudit2quditinformal}
    Consider a $c$-copy learning protocol that succeeds in absolute value estimation for channel learning (Problem~\ref{prob:channel_learn}) for channels $\mathcal{E}\in \CPTP(\mathcal{H}_{d,m},\mathcal{H}_{d',m'})$ (with $d,d'$ being prime) for accuracy $\frac{\epsilon}{8e}$ (with $\epsilon<1$) with success probability of at least $2/3$ and $c\epsilon\leq1$ always.
    \begin{itemize}
        \item If $c\leq \min(d,d')-1$, the depth of the learning protocol must satisfy
        \begin{equation}
            T = \Omega(d^{m}d'^{m'}c^{-2}\epsilon^{-2}).
        \end{equation}
        \item Assuming $d = d'$ and $c \geq d$, the depth of the learning protocol must satisfy 
        \begin{equation}
            T = \Omega\left(\min\left(\frac{(d^m-1)(d^{m'}-1)}{(0.75c\epsilon)^2},\left(\frac{d}{c\epsilon}\right)^{2d}\right)\right).
        \end{equation}
    \end{itemize}
\end{theorem}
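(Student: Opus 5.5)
The plan is to reduce Problem~\ref{prob:channel_learn} to many-one discrimination with revelation (Problem~\ref{prob:manyrevel}) and then evaluate the $\Delta$ of Lemma~\ref{lem:master}.

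\emph{Reduction.} Take $\hat{\rho}_0=\mathbb{I}_{d',m'}/d'^{m'}$. Draw $\mathbf{u}\in X_{\mathcal{H}_{d,m}}\setminus\{0\}$ and $\mathbf{v}\in X_{\mathcal{H}_{d',m'}}\setminus\{0\}$ uniformly at random, and set
\begin{equation*}
\mathcal{E}_{\mathbf{u},\mathbf{v}}(\hat{\rho})=\mathrm{Tr}[\hat{\rho}]\hat{\rho}_0+\frac{\epsilon}{2d'^{m'}}\Big(\mathrm{Tr}[\hat{\rho}\hat{D}(\mathbf{u})]\hat{D}(\mathbf{v})+\mathrm{Tr}[\hat{\rho}\hat{D}^\dagger(\mathbf{u})]\hat{D}^\dagger(\mathbf{v})\Big).
\end{equation*}
This has the form of Eq.~\eqref{eq:Epar1par2} with $l=2$, $\hat{K}_{1,\mathbf{u}}=\sqrt{\epsilon/2}\,\hat{D}(\mathbf{u})$, $\hat{K}_{2,\mathbf{u}}=\hat{K}_{1,\mathbf{u}}^\dagger$, $\hat{W}_{1,\mathbf{v}}=\sqrt{\epsilon/2}\,\hat{D}(\mathbf{v})$ and $\hat{W}_{2,\mathbf{v}}=\hat{W}_{1,\mathbf{v}}^\dagger$.

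\emph{Validity as a channel.} The map is trace preserving because $\mathrm{Tr}\,\hat{D}(\mathbf{v})=0$. The correction term is Hermitian. It is entanglement breaking, so complete positivity follows from positivity on all states, and positivity holds because the correction has operator norm at most $\epsilon/d'^{m'}<\lambda_{\min}(\hat{\rho}_0)$.

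\emph{Why learning implies discrimination.} Under $\mathcal{E}_0$ every nontrivial transfer-matrix entry vanishes. Under $\mathcal{E}_{\mathbf{u},\mathbf{v}}$, the entry at the query $(\mathbf{u},-\mathbf{v})$, up to the phase and sign conventions of $\hat{D}$, has absolute value of order $\epsilon$; the $1/d^m$ normalisation in $C_{\mathcal{E}}$ is absorbed by the $\mathrm{Tr}[\hat{D}^\dagger\hat{D}]=d^m$ factor. The constant $8e$ in the statement is chosen so that this gap exceeds twice the accuracy. Because $\mathbf{u},\mathbf{v}$ are revealed only after all channel uses, a successful learner can query this entry afterwards and threshold it. Hence Lemma~\ref{lem:master} gives $T=\Omega(1/\Delta)$, and it remains to bound $\Delta$.

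\emph{Structure of $\Delta$.} For a string $\pmb{x}\in\{0,1,2\}^c$, write $\hat{W}_{\pmb{x},\mathbf{v}}=(\epsilon/2)^{|\pmb{x}|/2}\bigotimes_i\hat{D}(s_i\mathbf{v})$, where $s_i\in\{0,+1,-1\}$ and $|\pmb{x}|$ counts the nonzero entries. From the braiding relation, $\hat{W}_{\pmb{x},\mathbf{v}}$ and $\hat{W}_{\pmb{x},\mathbf{v}'}$ commute up to the phase $\omega^{\Omega(\mathbf{v},\mathbf{v}')\sum_i s_i^2}=\omega^{\Omega(\mathbf{v},\mathbf{v}')|\pmb{x}|}$. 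The key computation is the second moment $\mathbb{E}_{\mathbf{v}}[\hat{W}_{\pmb{x},\mathbf{v}}^{\otimes 2}]$. Expanding in the displacement basis and using character sums over $\mathbb{F}_{d'}^{2m'}$, I expect the following dichotomy:
\begin{itemize}
\item if $|\pmb{x}|\not\equiv 0 \pmod{d'}$, the operators $\{\hat{W}_{\pmb{x},\mathbf{v}}^{\otimes 2}\}$ form a twirl over a non-abelian orbit, and the averaged operator norm is $(\epsilon/2)^{|\pmb{x}|}\,O(1/(d'^{m'}-1))$;
\item if $|\pmb{x}|\equiv 0 \pmod{d'}$, the operators commute, the average is $(\epsilon/2)^{|\pmb{x}|}$ times a projector onto a joint stabilizer eigenspace, and the norm is $\Theta((\epsilon/2)^{|\pmb{x}|})$.
\end{itemize}
The same dichotomy, with modulus $d$, holds for $\hat{K}$.

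\emph{First bullet: $c\le\min(d,d')-1$.} Every nonzero $\pmb{x}$ has $1\le|\pmb{x}|\le c<\min(d,d')$, so only the first case occurs. Then
\begin{equation*}
\sqrt{\Delta}\lesssim\sum_{k=1}^c\binom{c}{k}2^k\Big(\frac{\epsilon}{2}\Big)^{k}\frac{1}{\sqrt{d^m d'^{m'}}}.
\end{equation*}
Using $c\epsilon\le1$, the sum is dominated by $k=1$ up to an $e$-type constant, so $\sqrt{\Delta}\lesssim c\epsilon/\sqrt{d^m d'^{m'}}$. This gives $\Delta=O(c^2\epsilon^2/(d^md'^{m'}))$ and the claimed $T=\Omega(d^md'^{m'}c^{-2}\epsilon^{-2})$.

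\emph{Second bullet: $d=d'$, $c\ge d$.} Split the sum by whether $d\mid|\pmb{x}|$.
\begin{itemize}
\item The non-commuting part again gives a contribution of order $(0.75c\epsilon)/\sqrt{(d^m-1)(d^{m'}-1)}$ to $\sqrt{\Delta}$, with the constant coming from the binomial sum.
\item The commuting part starts at $|\pmb{x}|=d$. It gives $\sqrt{\Delta}\gtrsim\binom{c}{d}2^d(\epsilon/2)^d\le(ce\epsilon/d)^d$, and higher multiples of $d$ are geometrically smaller because $c\epsilon\le1$.
\end{itemize}
Therefore $\Delta\lesssim\max\{(c\epsilon)^2/((d^m-1)(d^{m'}-1)),(c\epsilon/d)^{2d}\}$, which is the stated minimum. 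The factor $e$ from $\binom{c}{d}\le(ce/d)^d$ is absorbed into the rescaled accuracy $\epsilon/(8e)$.

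\emph{Expected main obstacle.} The hard step is the exact operator-norm computation in the dichotomy, for strings that mix $\hat{D}$ and $\hat{D}^\dagger$. For example, $\pmb{x}=(1,2)$ yields the Bell-type operator $\hat{D}(\mathbf{v})\otimes\hat{D}^\dagger(\mathbf{v})$, whose family commutes even though $|\pmb{x}|=2<d$. If it were left unsuppressed, it would produce an $\epsilon^{-4}$-type bound instead of an exponential one.
\begin{itemize}
\item \emph{First fix.} Symmetrise with an extra independent random phase: replace $\hat{D}(\mathbf{v})$ by $\omega^{a}\hat{D}(\mathbf{v})$ and $\hat{D}(\mathbf{u})$ by $\omega^{b}\hat{D}(\mathbf{u})$, with $a,b$ uniform and carried along in $\mathbf{u},\mathbf{v}$. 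Then the relevant exponent becomes the signed count $\sum_i s_i$ on one side. I would verify that these mixed terms then either cancel in the second moment or pair a commuting $\hat{W}$-factor with a non-commuting $\hat{K}$-factor, so that their product of norms still carries a $1/d^{m}$ or $1/d^{m'}$ factor.
\item \emph{Fallback.} Choose $l=d-1$ with $\hat{W}_{k,\mathbf{v}}\propto\hat{D}(k\mathbf{v})$. This makes the Hermitian combination a projector-like sum, and the commutation condition becomes a sum of squares $\sum_i x_i^2\equiv0\pmod d$ that I would then need to control for $c<d$.
\end{itemize}
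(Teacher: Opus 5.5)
Your route is the paper's: the same reduction to Problem~\ref{prob:manyrevel}, the same use of Lemma~\ref{lem:master}, the same $|\pmb{x}|\bmod d$ dichotomy and binomial estimates. The only difference is that you split the perturbation into $\hat D$ and $\hat D^\dagger$ ($l=2$), where the paper uses the Hermitian $\hat{E}_{d,m}$ ($l=1$).

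\textbf{The gap.} The dichotomy carries the whole proof, yet you only conjecture it, and your diagnosis of its ``main obstacle'' is wrong. The family $\{\hat D(\mathbf v)\otimes\hat D^\dagger(\mathbf v)\}$ does not commute for $d\ge3$. Your own formula with $\pmb{s}=(+1,-1)$ gives $\sum_i s_i^2=2$, so two members commute only up to $\omega^{2\Omega(\mathbf v,\mathbf v')}$. The reason is that $\hat D^\dagger(\mathbf v)\propto\hat D(-\mathbf v)$ and $\Omega(-\mathbf v,-\mathbf v')=\Omega(\mathbf v,\mathbf v')$, so each factor contributes the same phase and the phases multiply rather than cancel. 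The commuting Bell-stabiliser family is $\hat D(\mathbf q,\mathbf p)\otimes\hat D(\mathbf q,-\mathbf p)$, i.e.\ $\hat D\otimes\hat D^*$. It never appears in $\hat W_{\pmb x,\mathbf v}$; it is exactly what $\mathcal E^*$ supplies in Theorem~\ref{thm:qudit2quditwithconj}.

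\textbf{The missing idea.} The parity flip of Eq.~\eqref{eq:parityqudit}, $\hat P_d^{\otimes m}\hat D(\mathbf v)\hat P_d^{\otimes m}=\hat D^\dagger(\mathbf v)$, settles all mixed strings at once. Conjugating the tensor factors that carry $\hat D^\dagger$ shows that $\mathbb{E}_{\mathbf v}[\hat W_{\pmb x,\mathbf v}^{\otimes2}]$ is unitarily equivalent to $(\epsilon/2)^{|\pmb x|}\,\mathbb{E}_{\mathbf v}[\hat D(\mathbf v)^{\otimes 2|\pmb x|}]$ (Corollary~\ref{corr:dispparityflip}). You then still need Lemma~\ref{lem:discr_disp_tensor}:
\begin{itemize}
\item On a basis state $\bigotimes_i|\mathbf a^{(i)}\rangle$, the sum over $\mathbf p$ in $\sum_{\mathbf q,\mathbf p}\hat D(\mathbf q,\mathbf p)^{\otimes2k}$ yields $d^m\delta_{k\mathbf q+\sum_i\mathbf a^{(i)},\mathbf 0}$.
\item For $k\not\equiv0\pmod d$ this fixes $\mathbf q$ uniquely, so the operator is $d^m$ times a permutation matrix.
\item Removing the $(\mathbf 0,\mathbf 0)$ term gives $\|\mathbb{E}[\hat D^{\otimes2k}]\|_{\mathrm{op}}\le(d^m+1)/(d^{2m}-1)=1/(d^m-1)$.
\item For $k\equiv0\pmod d$ the triangle inequality suffices.
\end{itemize}
This norm must be computed directly, not read off from commutativity. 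For $d=2$ the family $\{\hat D(\mathbf v)^{\otimes2}\}$ is abelian, yet $\sum_{\mathbf v}\hat D(\mathbf v)^{\otimes2}=2^m\,\mathrm{SWAP}$ has norm only $2^m$.

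\textbf{Neither fix is needed, and the random-phase fix would be unsound.} The Cauchy--Schwarz step in the proof of Lemma~\ref{lem:master} squares each summand $\mathrm{Tr}[F_{\hat\sigma}(\hat W_{\pmb x,\mathbf v}\otimes\hat K^T_{\pmb x,\mathbf u})]$ as if it were real. That is automatic for the Hermitian $\hat{E}_{d,m}$ but false for your $\hat D$'s. For non-Hermitian operators the quantity actually controlled is the modulus squared, i.e.\ $\mathbb{E}[\hat W_{\pmb x,\mathbf v}\otimes\hat W_{\pmb x,\mathbf v}^\dagger]$, and there an extra global phase $\omega^{a}$ cancels identically. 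The suppression $\mathbb{E}_a[\omega^{2a\sum_i s_i}]$ you would obtain in $\mathbb{E}[\hat W^{\otimes2}]$ is therefore an artefact. For odd $d'$ and $c=1$ it would make $\Delta=0$, which is absurd. With your unphased $\hat D$'s the distinction is harmless, because factor-wise parity flips give $\|\mathbb{E}[\hat W^{\otimes2}]\|_{\mathrm{op}}=\|\mathbb{E}[\hat W\otimes\hat W^\dagger]\|_{\mathrm{op}}$. You should either state this or switch to $\hat E_{d,m}$ as the paper does.

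\textbf{Minor point.} ``Entanglement breaking, hence CP follows from positivity'' is circular, since entanglement breaking presupposes CP. Check the Choi operator instead: it equals $\frac{1}{d'^{m'}}\bigl(\mathbb{I}+\frac{\epsilon}{2}(\hat U+\hat U^\dagger)\bigr)\ge\frac{1-\epsilon}{d'^{m'}}\mathbb{I}$, with $\hat U=\hat D(\mathbf v)\otimes\hat D(\mathbf u)^T$.
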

We defer the proof of this theorem along with a more complete statement for various possible cases of $c$ in relation to $d\neq d'$ to Thm.~\ref{thm:qudit2qudit} in Appendix~\ref{app:lowerbound_noconjugate} and provide an intuitive version of the proof here. The main idea, inspired from the lower bounds in \cite{chen2024optimaltradeoffsestimatingpauli,10.1145/3670418} is to construct a channel discrimination hypothesis testing task that one can succeed in by learning the channel according to Problem~\ref{prob:channel_learn}. Here the null hypothesis is some channel $\mathcal{E}_0$ and the hypothesis of channel $\mathcal{E}_{\mathbf{u},\mathbf{v}}$ where $\mathbf{u},\mathbf{v}$ are itself random variables hence following the framing of the task in Problem~\ref{prob:manyrevel}.

We define the channel $\mathcal{E}_{(\mathbf{q}_1,\mathbf{p}_1),(\mathbf{q}_2,\mathbf{p}_2)}\in \CPTP(\mathcal{H}_{d,m},\mathcal{H}_{d',m'})$ where $(\mathbf{q}_1,\mathbf{p}_1)\in \mathbb{F}_{d}^{m}\times \mathbb{F}_{d}^{m}\setminus\{(0,0)\}$ and $(\mathbf{q}_2,\mathbf{p}_2)\in \mathbb{F}_{d'}^{m'}\times \mathbb{F}_{d'}^{m'}\setminus\{(0,0)\}$. This channel follows the representation of Eq.~\eqref{eq:Epar1par2} for $l = 1$ where the operators 
\begin{align}
    \hat{K}_{1,(\mathbf{q}_1,\mathbf{p}_1)} &= \frac{e^{i\pi/4}\hat{D}_{d,m}(\mathbf{q}_1,\mathbf{p}_1)+e^{-i\pi/4}\hat{D}_{d,m}(\mathbf{q}_1,\mathbf{p}_1)^\dagger}{\sqrt{2}},\\
    \hat{W}_{1,(\mathbf{q}_2,\mathbf{p}_2)} &= \epsilon_0\frac{e^{i\pi/4}\hat{D}_{d',m'}(\mathbf{q}_2,\mathbf{p}_2)+e^{-i\pi/4}\hat{D}_{d',m'}(\mathbf{q}_2,\mathbf{p}_2)^\dagger}{2},
\end{align}
and $\hat{\rho}_0  =\mathbb{I}_{\mathrm{out}}/d'^{m'}$ which is the maximally mixed state. The parameters $(\mathbf{q}_1,\mathbf{p}_1),(\mathbf{q}_2,\mathbf{p}_2)$ are chosen uniformly randomly over the values they are allowed to take. Note that from the formulation of many-one channel discrimination task with revelation (Problem~\ref{prob:manyrevel}) we have $\hat{K}_{0,(\mathbf{q}_1,\mathbf{p}_1)} = \mathbb{I}_{d,m}$ and $\hat{W}_{0,(\mathbf{q}_2,\mathbf{p}_2)} = \mathbb{I}_{d',m'}$ always. Choosing an appropriate $\epsilon_0 = \Theta(\epsilon)$, a learner that succeeds in the absolute value estimation for channel learning task of Problem~\ref{prob:channel_learn} also succeeds in the discrimination task of Problem~\ref{prob:manyrevel} after the revelation of the randomized parameters $(\mathbf{q}_1,\mathbf{p}_1),(\mathbf{q}_2,\mathbf{p}_2)$ by learning this exact query to the function $C_{\mathcal{E}}$. The channel $\mathcal{E}_{(\mathbf{q}_1,\mathbf{p}_1),(\mathbf{q}_2,\mathbf{p}_2)}$ has a sparse representation when represented using the $C_{\mathcal{E}}$ function making the knowledge of the parameters useful in distinguishing this channel from $\mathcal{E}_0$. We now apply the master lemma (Lemma~\ref{lem:master}) to find a lower bound to the depth of any learning protocol that succeeds in Problem~\ref{prob:channel_learn}.
\begin{figure*}[ht]
    \centering
    \includegraphics[width=\textwidth]{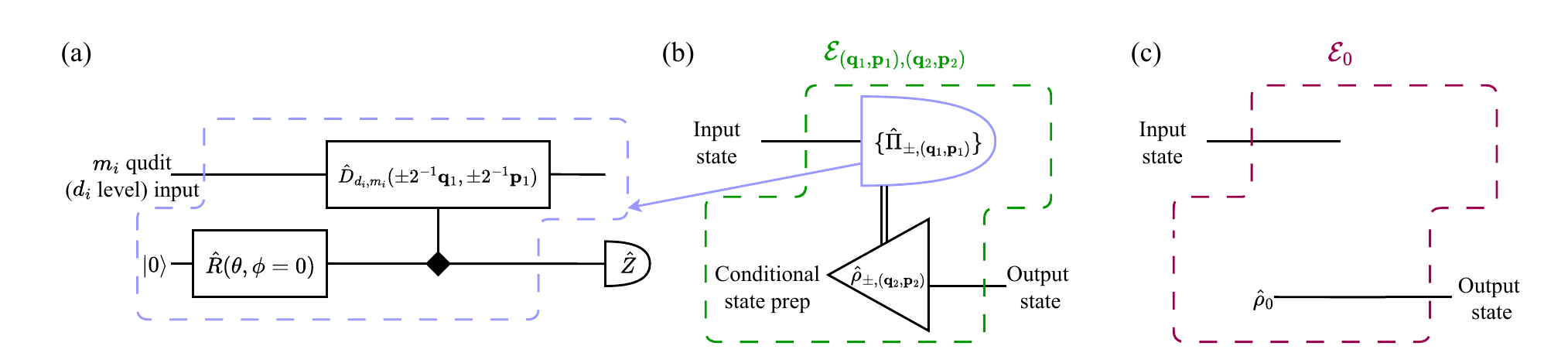}
    \caption{(a) Circuit to implement the two-outcome POVM $\{\hat{\Pi}_{\pm,(\mathbf{q}_1,\mathbf{p}_1)}\}$ in Eq.~\eqref{eq:Eq1p1q2p2} where $R(\theta,\phi)$ is a qubit rotation (we set $\theta = -\pi/4$ to obtain the exact POVM) and the controlled displacement operation performs $\hat{D}_{d,m}(\pm2^{-1}\mathbf{q},\pm2^{-1}\mathbf{p})$ conditioned on whether the qubit is in the $\ket{+}$ or $\ket{-}$ state (eigenstates of Pauli $\hat{X}$) and the two POVM outcomes correspond to the two outcomes of the $\hat{Z}$ basis measurement. (b) Construction for the parameterized channel $\mathcal{E}_{(\mathbf{q}_1,\mathbf{p}_1),(\mathbf{q}_2,\mathbf{p}_2)}$ (Eq.~\eqref{eq:Eq1p1q2p2}) for $(\mathbf{q}_1,\mathbf{p}_1)\in \mathbb{F}_{d}^{m}\times\mathbb{F}_{d}^{m}\setminus\{(0,0)\}$ and $(\mathbf{q}_2,\mathbf{p}_2)\in \mathbb{F}_{d'}^{m'}\times\mathbb{F}_{d'}^{m'}\setminus\{(0,0)\}$ realized using the described POVM  where the readout classically controls a state prepartion to prepare either $\hat{\rho}_{+,(\mathbf{q}_2,\mathbf{p}_2)}$ or $\hat{\rho}_{-,(\mathbf{q}_2,\mathbf{p}_2)}$. (c) The channel $\mathcal{E}_0$ is a complete replacement channel which only outputs the state $\hat{\rho}_0 = \frac{1}{2}(\hat{\rho}_{+,(\mathbf{q}_2,\mathbf{p}_2)}+\hat{\rho}_{-,(\mathbf{q}_2,\mathbf{p}_2)}) = \frac{\mathbb{I}}{d'^{m'}}$.}
    \label{fig:hardtolearn}
\end{figure*}

Following the result in Lemma~\ref{lem:master}, we find that now the complexity lower bound would be a function of the sum of certain operator norms. The operator norm for the operator
\begin{equation}
    \hat{O}_{d,m,k} = \sum_{\mathbf{q},\mathbf{p}\in\mathbb{F}^m_d}\hat{D}_{d,m}(\mathbf{q},\mathbf{p})^{\otimes 2k},
\end{equation}
has the property
\begin{equation}\label{eq:Odmknorm}
    \|\hat{O}_{d,m,k}\|_{\mathrm{op}} = \begin{cases}
        d^m, &k\neq 0\mod d\\
        d^{2m}, &k =0\mod d
    \end{cases},
\end{equation}
proven in Lemma~\ref{lem:discr_disp_tensor}. Applying this, we recover a sharp transition once $c$ becomes equal to $d$ in the case of $d = d'$. This is similar to the principle of the infinite learning hierarchy established in \cite{ller2025infinitehierarchymulticopyquantum} which is now generalized for the channel learning scheme. Further, we show that the scaling of $\epsilon^{-2d}$ shows up in the lower bound itself. We note trivially that through the use of the $d$-copy efficient learning scheme in \cite{ller2025infinitehierarchymulticopyquantum}, access to $d$ copies of the channel's Choi state in the case of $d = d'$ succeeds in both obtaining the absolute values for channel learning (Problem~\ref{prob:channel_learn}) as well as their phase information with a complexity of $O(d(m+m')\log(d/\delta)\epsilon^{-2d})$ (see Thm.~\ref{thm:quditeff}). This shows that the $\epsilon^{-2d}$ scaling is in fact tight within a certain regime for the Choi-state learning task. We discuss this further in Sec.~\ref{sec:learninghierarchy}.

Using a very similar construction of channels for the case of a bosonic multimode channel $\mathcal{E}_{\gamma_1,\gamma_2}\in \CPTP(\mathcal{H}_{\infty,m},\mathcal{H}_{\infty,m'})$, we are able to find the lower bound given as follows.
\begin{theorem}\label{thm:boson2bosoninformal}
    Consider a $c$-copy learning protocol that succeeds in absolute value estimation for channel learning (Problem~\ref{prob:channel_learn}) for channels $\mathcal{E}\in \CPTP(\mathcal{H}_{\infty,m},\mathcal{H}_{\infty,m'})$ with $m,m'\geq 8$ and $\min(\kappa^2 m,\kappa'^2m')\geq 2/0.99$, for accuracy $\epsilon \leq 0.05$, two-mode squeezing $r$ such that $\cosh(2r)\geq 1.06\kappa m$, with success probability of at least $2/3$ and $c = O(1/\epsilon)$. The depth of the learning protocol must satisfy
    \begin{equation}
        T = \Omega(d_{\mathrm{in}}^md_{\mathrm{out}}^{m'}c^{-2}\epsilon^{-2}),
    \end{equation}
    where we define effective mode dimensions $d_{\mathrm{in}} =\sqrt{1+(0.99\kappa\tanh^2(2r))^2}$ and $d_{\mathrm{out}} = \sqrt{1+(0.99\kappa')^2}$.
\end{theorem}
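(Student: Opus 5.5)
We prove Theorem~\ref{thm:boson2bosoninformal} by the same reduction as the qudit case: we build a family of channels that fits the many-one discrimination task with revelation (Problem~\ref{prob:manyrevel}) and then bound $\Delta$ using the master lemma (Lemma~\ref{lem:master}).

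\emph{Construction of the hard instance.} We take $l=1$ and $\hat{\rho}_0$ to be a Gaussian thermal state on the $m'$ output modes. We set
\[
\hat{K}_{1,\gamma_1}=\tfrac{1}{\sqrt2}\bigl(e^{i\pi/4}\hat{D}(\gamma_1)+e^{-i\pi/4}\hat{D}^\dagger(\gamma_1)\bigr),\qquad
\hat{W}_{1,\gamma_2}=\epsilon_0\,\tfrac12\bigl(e^{i\pi/4}\hat{D}(\gamma_2)+e^{-i\pi/4}\hat{D}^\dagger(\gamma_2)\bigr).
\]
Since $\|\hat{K}_{1}\|_{\mathrm{op}}\le 1$ and $\mathbb{I}\pm\hat{K}_{1}\geq 0$, the map $\mathcal{E}_{\gamma_1,\gamma_2}$ is a measure-and-prepare (hence CPTP) channel whenever $\epsilon_0\leq 1$. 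The parameters $\gamma_1\in\mathbb{C}^m$ and $\gamma_2\in\mathbb{C}^{m'}$ are drawn from isotropic Gaussians of variances $\sigma_1^2,\sigma_2^2$, with those variances tuned so that the queries $\alpha=\gamma_1/\tanh(2r)$ (or a similar rescaling) and $\beta=\gamma_2$ satisfy $|\alpha|^2\leq\kappa m$ and $|\beta|^2\leq\kappa' m'$ with constant probability. The condition $m,m'\geq 8$ enters here, through Gaussian concentration of $|\gamma|^2$.

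\emph{Reduction to the learning task.} Next we compute $C^{\mathrm{TMSV},r}_{\mathcal{E}}(\alpha,\beta)$ for both hypotheses at the revealed query. For $\mathcal{E}_0$ it equals $\chi_{\hat{\rho}_0}(\beta)$ times a damping factor, which is exponentially small in $|\beta|^2$. For $\mathcal{E}_{\gamma_1,\gamma_2}$ an extra term appears. It equals
\[
\epsilon_0\,\mathrm{Tr}\bigl[\hat{K}_1\,\hat{D}^\dagger(\alpha\tanh 2r)\,\hat{\rho}_{\mathrm{th},\alpha}\bigr]\cdot\mathrm{Tr}\bigl[\hat{D}(\beta)\,\hat{\rho}_0^{1/2}\hat{W}_1\hat{\rho}_0^{1/2}\bigr]/\epsilon_0 .
\]
The hypothesis $\cosh(2r)\geq 1.06\kappa m$ keeps the Gaussian prefactor $e^{-|\alpha|^2/2\cosh 2r}$ bounded below by a constant. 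We must also check that $\mathrm{Tr}[\hat{D}(\beta)\hat{\rho}_0^{1/2}\hat{D}(\gamma_2)\hat{\rho}_0^{1/2}]$ at $\beta=-\gamma_2$ is $\Omega(1)$; the thermal-state Gaussian integrals give this factor explicitly. With these two facts, a gap of order $\epsilon_0=\Theta(\epsilon)$ separates $|C|$ under the two hypotheses, and accuracy $\epsilon\le 0.05$ suffices to discriminate.

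\emph{Bounding $\Delta$.} For each $\pmb{x}\neq\pmb{0}$ with Hamming weight $k$, we expand $\hat{K}_{\pmb x}^{\otimes2}$ into displacements $\hat{D}(\gamma_1)^{\otimes a}\otimes\hat{D}(-\gamma_1)^{\otimes b}$ and average over the Gaussian $\gamma_1$. The result is a Gaussian-smoothed operator of the form $\int d\gamma\, p(\gamma)\,\hat{D}(\mathbf{s}\gamma)$ with $\mathbf{s}\in\{\pm1\}^{2k}$. Its operator norm should be computable in the same way as in Lemma~\ref{lem:discr_disp_tensor}. We pass to collective quadratures, where $\hat{D}(\mathbf{s}\gamma)$ acts only through the combination $\sum_i s_i\hat{x}_i$; this combination has nonzero norm for every sign pattern unless the sign structure produces commuting pieces. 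The averaged operator becomes a Gaussian function of that combination times the vacuum-like factor, with norm $\lesssim(1+\sigma^2\cdot\text{const})^{-m}$ per mode. This is the source of $d_{\mathrm{in}}^{-m}$ and $d_{\mathrm{out}}^{-m'}$, with $\tanh^2(2r)$ arising from the input rescaling and $0.99$ from the concentration slack.

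Summing over $\pmb x$ then yields a total of the form $\sum_{k\ge1}\binom{c}{k}\epsilon_0^{k}(\ldots)^{k}$. Because $c\epsilon=O(1)$, the $k=1$ term dominates, up to a binomial factor $c$. That gives $\Delta=O(c^2\epsilon^2 d_{\mathrm{in}}^{-m}d_{\mathrm{out}}^{-m'})$, and Lemma~\ref{lem:master} then gives the stated bound on $T$.

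The main obstacle is the operator-norm computation for the Gaussian-averaged tensor powers of displacements in infinite dimensions. The key case is $a=b$, where the averaged operator does not decay, and it must be shown that the prefactors suppress this case. I would first diagonalize using a beam-splitter (orthogonal) transformation mapping $\sum_i s_i\hat{x}_i$ to $\sqrt{2k}\,\hat{x}_1'$. In the commuting case, the averaged operator is then a function of a single quadrature with norm $\le1$, and its contribution is suppressed by the $\hat{\rho}_0^{1/2}$ sandwich and the TMSV damping factors.
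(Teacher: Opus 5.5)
The overall architecture matches the paper's proof of Thm.~\ref{thm:boson2boson}: a measure-and-prepare pair fitting Problem~\ref{prob:manyrevel} with a thermal $\hat{\rho}_0$, Gaussian $\gamma_1,\gamma_2$ with $2\sigma_1^2=0.99\kappa\tanh^2(2r)$ and $2\sigma_2^2=0.99\kappa'$, the revealed query $(\gamma_1/\tanh(2r),\gamma_2)$, and then Corollary~\ref{corr:master}. However, the step you flag as the main obstacle rests on a false premise, and the remedy you propose for it would not be valid.

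\textbf{No non-decaying sign pattern exists.} Since $\hat{D}(-\gamma)=e^{i\pi\hat{n}}\hat{D}(\gamma)e^{-i\pi\hat{n}}$, local parities on the sites with $s_i=-1$ make every $\int d^{2m}\gamma\,P(\gamma)\bigotimes_i\hat{D}(s_i\gamma)$ unitarily equivalent to $\int d^{2m}\gamma\,P(\gamma)\hat{D}(\gamma)^{\otimes 2k}$. So all $2^{2k}$ patterns, including $a=b$, satisfy the same bound $(1+4k^2\sigma^4)^{-m/2}$ of Lemma~\ref{lem:dispopnorm}; this is exactly Corollary~\ref{corr:bosonicEnorm}. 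Your collective-mode idea says the same thing if carried out correctly. The mode $\hat{b}=(2k)^{-1/2}\sum_i s_i\hat{a}_i$ is normalized for every $\mathbf{s}$, because what matters is $\sum_i s_i^2=2k$, not $\sum_i s_i$. Then $\bigotimes_i\hat{D}(s_i\gamma)=\hat{D}_{\hat b}(\sqrt{2k}\gamma)$ displaces both quadratures of $\hat b$. Its isotropic Gaussian average is diagonal in the Fock basis of $\hat b$, with norm $(1+2k\sigma^2)^{-1}$ per mode, and it never reduces to a function of a single quadrature. The operators $\hat{D}(\gamma)\otimes\hat{D}(-\gamma)$ do not commute for different $\gamma$, since the phases add to $e^{2i\Omega}$. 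Commutation arises only for $\hat{D}(\gamma)\otimes\hat{D}(-\gamma^*)$, i.e.\ with conjugate access (Thm.~\ref{thm:boson2bosonwithconj}). The permutation argument of Lemma~\ref{lem:discr_disp_tensor} does not transfer either; the bosonic norm comes from the Gaussian Fourier computation in Lemma~\ref{lem:dispopnorm}.

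\textbf{Your proposed suppression could not work anyway.} In Lemma~\ref{lem:master}, $\Delta$ depends only on $\|\mathbb{E}[\hat{W}^{\otimes 2}_{\pmb{x}}]\|_{\mathrm{op}}$ and $\|\mathbb{E}[\hat{K}^{\otimes 2}_{\pmb{x}}]\|_{\mathrm{op}}$. The $\hat{\rho}_0^{1/2}$ sandwich has already been absorbed into the likelihood ratio. The TMSV damping belongs to the definition of the query, not to the learner's probes, which are arbitrary. A single non-decaying pattern would strip the factor $d_{\mathrm{in}}^{-m}$ (resp.\ $d_{\mathrm{out}}^{-m'}$) from the bound for every $\pmb{x}$, and nothing in the lemma could restore it. This is exactly what happens in the conjugate-access theorems.

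\textbf{Fixable errors in the construction and reduction.}
\begin{itemize}
\item With your normalization, $\|\hat{K}_{1,\gamma_1}\|_{\mathrm{op}}=\sqrt{2}$, since the spectrum of $\hat D$ is the full unit circle. So $\mathbb{I}\pm\hat{K}_1\not\geq 0$; use $\hat{E}(\gamma_1)/\sqrt{2}$ as the paper does.
\item More seriously, $e^{i\pi/4}\hat{D}(\gamma_2)+e^{-i\pi/4}\hat{D}^\dagger(\gamma_2)$ is not traceless against a thermal state: $\mathrm{Tr}[\hat{\rho}_0\hat{W}_{1,\gamma_2}]=\epsilon_0\chi_{\hat{\rho}_0}(\gamma_2)/\sqrt{2}\neq 0$. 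Your $\mathcal{E}_{\gamma_1,\gamma_2}$ is therefore not trace-preserving; in the qudit case this was hidden by $\mathrm{Tr}[\hat{D}_{d,m}]=0$. The paper uses phase $\pi/2$, $\hat{W}_1\propto i(\hat{D}(\gamma_2)-\hat{D}^\dagger(\gamma_2))$, which is traceless because $\chi_{\hat{\rho}_0}$ is even (the three-peak states).
\item With that choice, separating $|C|$ needs the bookkeeping you skipped. The paper writes $C_{\mathcal{E}_{\gamma_1,\gamma_2}}=t_0+\epsilon_0t_--i\epsilon_0t_+$ and $C_{\mathcal{E}_0}=t_0$ with $t_0,t_\pm>0$, so $|C|$ moves by at least $\epsilon_0t_-$. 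This matters because $t_0$ is not negligible near $|\gamma_2|^2=2\sigma^2$, and a perturbation orthogonal to $t_0$ would shift $|C|$ only at second order.
\item Bounding $t_-$ below requires a hot thermal state, $\Sigma^2=\kappa'm'/L$, so that the overlap $e^{-|\gamma_2|^2/\Sigma^2}\geq e^{-L}$. It also requires lower cutoffs $|\gamma_1|^2\geq 2/\cosh(2r)$ and $|\gamma_2|^2\geq 2\sigma^2$, so that the cross terms $e^{-2|\gamma_1|^2\cosh(2r)}$ and $e^{-2|\gamma_2|^2/\sigma^2}$ do not cancel the main ones.
\item It is this two-sided window, not only the upper bound, that Lemma~\ref{lem:gaussprob} and the hypothesis $\min(\kappa^2m,\kappa'^2m')\geq 2/0.99$ control.
\end{itemize}
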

We defer the full proof of this theorem to Thm.~\ref{thm:boson2boson} in Appendix~\ref{app:lowerbound_noconjugate} and provide an intuitive version of the proof here. We define a channel $\mathcal{E}_{\gamma_1,\gamma_2}\in \CPTP(\mathcal{H}_{\infty,m},\mathcal{H}_{\infty,m'})$ for $\gamma_1\in \mathbb{C}^m,\gamma_2\in \mathbb{C}^{m'}$ that are both sampled from a complex-valued Gaussian distribution of spreads $2\sigma_1^2 = 0.99\kappa\tanh^2(2r)$ and $2\sigma_2^2 = 0.99\kappa'$ respectively. Similar to the qudit case, this channel follows the representation of Eq.~\eqref{eq:Epar1par2} for $l =1$ where the operators 
\begin{align}
    \hat{K}_{1,\gamma_1} = \frac{e^{i\pi/4}\hat{D}(\gamma_1) + e^{-i\pi/4}\hat{D}^\dagger(\gamma_1)}{\sqrt{2}},\\
    \hat{W}_{1,\gamma_2} = \epsilon_0\frac{e^{i\pi/2}\hat{D}(\gamma_2) + e^{-i\pi/2}\hat{D}^{\dagger}(\gamma_2)}{\sqrt{2}},
\end{align}
and $\hat{\rho}_0 = \nu^{2\hat{n}}(1-\nu^2)^{m'}$  is a thermal state (where $\hat{n} = \sum_{i=1}^{m'}\hat{n}_i$) for some $0<\nu<1$. By choosing an appropriate $\epsilon_0 = \Theta(\epsilon)$, a learner that succeeds in the absolute value estimation for channel learning task of Problem~\ref{prob:channel_learn} is also able to succeed in the discrimination task of Problem~\ref{prob:manyrevel} after the sampled values of $\gamma_1,\gamma_2$ are revealed since the learner now can query the function $C_{\mathcal{E}}^{\mathrm{TMSV},r}$ for argument $(\gamma_1/\tanh(2r),\gamma_2)$.
Similar to the multi-qudit case, we follow the result in Lemma~\ref{lem:master} and note that the operator norm in question is that of operators of form
\begin{equation}
    \hat{O}_{\infty,m,k} = \int d^{2m}\gamma \frac{e^{-\frac{|\gamma|^2}{2\sigma^2}}}{(2\pi\sigma^2)^{m}}\hat{D}^{\otimes k}(\gamma),
\end{equation}
which we show in Lemma~\ref{lem:dispopnorm} have their operator norm satisfy
\begin{equation}
    \|\hat{O}_{\infty,m,k}\|_{\mathrm{op}} \leq \frac{1}{(1+k^2\sigma^4)^{m/2}}.
\end{equation}
Note that the above bound has no sharp transition in its behavior with $k$. Further, the limit of $\sigma \to 0$ recovers the operator norm of $1$ which one would expect since the operator converges to identity. Plugging this norm into the main result of Lemma~\ref{lem:master} with further mathematical manipulation yields the lower bound in Thm.~\ref{thm:boson2boson}.

So far we have only highlighted the channels that are hard to learn; we will now show that these channels can be realized in practice by a measure-and-prepare channel. We highlight that the circuit in Fig.~\ref{fig:hardtolearn}(a) realizes a two-outcome POVM $\{\hat{\Pi}_{\pm,(\mathbf{q},\mathbf{p})}\}$ such that for any $(\mathbf{q}_1,\mathbf{p}_1)\in \mathbb{F}_{d}^m\times\mathbb{F}_d^m$, $\hat{\Pi}_{+,(\mathbf{q}_1,\mathbf{p}_1)} - \hat{\Pi}_{-,(\mathbf{q}_1,\mathbf{p}_1)} = \hat{K}_{1,(\mathbf{q}_1,\mathbf{p}_1)}$ when the input dimension $d>2$. For $d = 2$, this exact POVM can still be implemented since we note that $\hat{D}_{2,m}(\mathbf{q}_1,\mathbf{p}_1) = \hat{D}_{2,m}(\mathbf{q}_1,\mathbf{p}_1)^\dagger$ and so it can be interpreted as first measuring using the POVM $\frac{\mathbb{I}_{2,m}\pm \hat{D}_{2,m}(\mathbf{q}_1,\mathbf{p}_1)}{2}$ following which the classical information experiences a noise channel resulting in a fuzzy measurement of $\frac{\mathbb{I}_{2,m}\pm \hat{D}_{2,m}(\mathbf{q}_1,\mathbf{p}_1)/\sqrt{2}}{2}$ \cite{PhysRevD.33.2253}. Further we can prepare two states $\hat{\rho}_{\pm,(\mathbf{q}_2,\mathbf{p}_2)}$ such that for any $(\mathbf{q}_2,\mathbf{p}_2)\in \mathbb{F}_{d'}^{m'}\times\mathbb{F}_{d'}^{m'}$ we have $\hat{\rho}_{\pm,(\mathbf{q}_2,\mathbf{p}_2)} = \hat{\rho}_0 \pm \sqrt{\hat{\rho}_0}\hat{W}_{1,(\mathbf{q}_2,\mathbf{p}_2)}\sqrt{\hat{\rho}_0}$ which gives
\begin{equation}\label{eq:Eq1p1q2p2}
\begin{aligned}
    &\mathcal{E}_{(\mathbf{q}_1,\mathbf{p}_1),(\mathbf{q}_2,\mathbf{p}_2)}(\cdot) \\&=\text{Tr}\left[(\cdot)\hat{\Pi}_{+,(\mathbf{q}_1,\mathbf{p}_1)}\right]\hat{\rho}_{+,(\mathbf{q}_2,\mathbf{p}_2)}\\
    &\quad\quad + \text{Tr}\left[(\cdot)\hat{\Pi}_{-,(\mathbf{q}_1,\mathbf{p}_1)}\right]\hat{\rho}_{-,(\mathbf{q}_2,\mathbf{p}_2)}
    \\&= \sum_{x\in\{0,1\}}\text{Tr}\left[(\cdot)\hat{K}_{x,(\mathbf{q}_1,\mathbf{p}_1)}\right]\hat{W}_{x,(\mathbf{q}_2,\mathbf{p}_2)}/d'^{m'},
\end{aligned}
\end{equation}
and similarly for the bosonic case we have
\begin{equation}\label{eq:Egam1gam2}
\begin{aligned}
    &\mathcal{E}_{\gamma_1,\gamma_2}(\cdot) \\&=\text{Tr}\left[(\cdot)\hat{\Pi}_{+,\gamma_1}\right]\hat{\rho}_{+,\gamma_2}\\
    &\quad\quad + \text{Tr}\left[(\cdot)\hat{\Pi}_{-,\gamma_1}\right]\hat{\rho}_{-,\gamma_2}
    \\&= (1-\nu^2)^{m'}\sum_{x\in\{0,1\}}\text{Tr}\left[(\cdot)\hat{K}_{x,\gamma_1}\right]\nu^{\hat n}\hat{W}_{x,\gamma_2}\nu^{\hat n},
\end{aligned}
\end{equation}
where we define the POVM $\{\hat{\Pi}_{\pm,\gamma_1}\}$ by substituting $\theta=-\pi/4$ in Eq.~\eqref{eq:bosonic2outcome} in Appendix~\ref{app:bosonic_disp}. Fig.~\ref{fig:hardtolearn}(b) shows how the channel $\mathcal{E}_{(\mathbf{q}_1,\mathbf{p}_1),(\mathbf{q}_2,\mathbf{p}_2)}$ can be implemented using the POVM of $\{\hat{\Pi}_{\pm,(\mathbf{q}_1,\mathbf{p}_1)}\}$ which itself is implemented using an ancillary qubit for the measurement. 


A similar such construction can be used to realize the bosonic multimode channel described by $\mathcal{E}_{\gamma_1,\gamma_2}\in \CPTP(\mathcal{H}_{\infty,m},\mathcal{H}_{\infty,m'})$ which we detail the exact construction of in the proof of Thm.~\ref{thm:boson2boson} which uses three-peak states introduced in \cite{coroi2025exponentialadvantagecontinuousvariablequantum}.

We also explore another relevant case of channels that have input Hilbert spaces of $\mathcal{H}_{d,m}$ and output Hilbert spaces of $\mathcal{H}_{\infty,m'}$. A case where such channels are encountered is the case of noise acting on an encoded bosonic error correcting code that encodes qudits into multiple modes such as multi-mode Gottesman--Kitaev--Preskill codes \cite{conrad_gottesman-kitaev-preskill_2022,56vj-z7h1} or quantum spherical codes \cite{jain_quantum_2024}. These channels characterize the full leakage outside the codespace before any recovery can be applied on these channels. We present our result for such channels in Thm.~\ref{thm:qudit2boson} in Appendix~\ref{app:lowerbound_noconjugate}.

\subsection{Lower bounds for $1$-copy access to self-conjugate channel}
We now consider the case of learning channels that are actually self complex-conjugate, i.e., we have for some $\mathcal{E}\in \CPTP(\mathcal{H}_{\mathrm{in}},\mathcal{H}_{\mathrm{out}})$, $\mathcal{E} \equiv \mathcal{E}^*$ with conjugation defined in some appropriate basis. One might wonder if the power of the conjugate channel suffices even in the case of 1-copy access with ancilla assistance. We find that this is not the case, as stated in the following theorems and proved in App.~\ref{app:lowerbound_selfconjugate}.
\begin{theorem}\label{thm:qudit2quditinformalselfconj}
    Consider a $1$-copy learning protocol that succeeds in absolute value estimation for channel learning (Problem~\ref{prob:channel_learn}) for channels $\mathcal{E}\in \CPTP(\mathcal{H}_{d,m},\mathcal{H}_{d',m'})$ (with $d,d'$ being prime) that are promised to satisfy $\mathcal{E}\equiv \mathcal{E}^*$ and the estimation is to accuracy $\epsilon\leq 1/16$ with success probability of at least $2/3$. This learning protocol requires a depth of 
    \begin{equation}
        T = \Omega(d^{m}d'^{m'}\epsilon^{-2}).
    \end{equation}
\end{theorem}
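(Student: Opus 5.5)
We plan to reduce the statement to the Master lemma (Lemma~\ref{lem:master}) with $c=1$, using a hard family of channels that is self-conjugate by construction. The qudit family from Thm.~\ref{thm:qudit2quditinformal} does not qualify: its operators $e^{\pm i\pi/4}\hat{D}$ are not real in the computational basis, and $\hat{D}^*_{d,m}(\mathbf{q},\mathbf{p})=\hat{D}_{d,m}(\mathbf{q},-\mathbf{p})$. We therefore symmetrize. We take $l=1$ and
\begin{equation*}
\hat{K}_{1,(\mathbf{q}_1,\mathbf{p}_1)}=\tfrac{1}{2}\Big(\hat{D}_{d,m}(\mathbf{q}_1,\mathbf{p}_1)+\hat{D}_{d,m}(\mathbf{q}_1,\mathbf{p}_1)^\dagger\Big),
\end{equation*}
with an analogous $\hat{W}_{1,(\mathbf{q}_2,\mathbf{p}_2)}$ multiplied by $\epsilon_0$, and $\hat{\rho}_0=\mathbb{I}/d'^{m'}$. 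These operators are Hermitian. We then impose the self-conjugacy constraint on the entire map $\mathcal{E}$ and not only on individual operators. The map $\mathcal{E}^*$ has Choi operator equal to the complex conjugate of that of $\mathcal{E}$, so it suffices for $\hat{K}_1\otimes \hat{W}_1$ to be real, i.e.\ $\hat{K}_1^*\otimes \hat{W}_1^*=\hat{K}_1\otimes \hat{W}_1$. If this fails for the generic pair, we instead use the symmetrized map $\tfrac12(\mathcal{E}_{\mathbf{u},\mathbf{v}}+\mathcal{E}_{\mathbf{u},\mathbf{v}}^*)$. This map is automatically self-conjugate and CPTP. It is again of the form of Eq.~\eqref{eq:Epar1par2} with $l=2$: the second term uses $\hat{K}^*=\hat{K}_{1,(\mathbf{q}_1,-\mathbf{p}_1)}$ and $\hat{W}^*$, and both terms carry weight $1/2$. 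Complete positivity holds because the measure-and-prepare realization of Fig.~\ref{fig:hardtolearn} goes through with $\theta=0$, giving POVM elements $(\mathbb{I}\pm\hat{K}_1)/2\ge 0$ (since $\|\hat{K}_1\|_{\mathrm{op}}\le1$). The output states $\hat{\rho}_0\pm\sqrt{\hat{\rho}_0}\hat{W}_1\sqrt{\hat{\rho}_0}$ are valid provided $\epsilon_0\le1$.

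Next we check that solving Problem~\ref{prob:channel_learn} solves Problem~\ref{prob:manyrevel}. After $(\mathbf{q}_1,\mathbf{p}_1),(\mathbf{q}_2,\mathbf{p}_2)$ are revealed, the learner queries $C_{\mathcal{E}}$ at $((\mathbf{q}_1,\mathbf{p}_1),(\mathbf{q}_2,\mathbf{p}_2))$. Orthogonality of displacements gives $|C_{\mathcal{E}}|=\Theta(\epsilon_0)$ under the alternative hypothesis and $0$ under $\mathcal{E}_0$. The only degenerate case is when the symmetrized terms collide, which happens when $(\mathbf{q},\mathbf{p})$ and $(\mathbf{q},-\mathbf{p})$ coincide up to inversion. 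This event has probability $O(d^{-m})$ under uniform sampling and can be excluded from the support, or else it only changes the constants. Choosing $\epsilon_0=\Theta(\epsilon)$, with $\epsilon\le1/16$ fixing the constant, separates the two cases with the success probability $2/3$.

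The main computation is $\Delta$ in Eq.~\eqref{eq:masterlemmadeltaup} for $c=1$. Only $\pmb{x}\in\{1,2\}$ contribute, so
\begin{equation*}
\Delta\lesssim\Big(\sum_{x}\|\mathbb{E}_{\mathbf{v}}[\hat{W}_{x,\mathbf{v}}^{\otimes2}]\|_{\mathrm{op}}^{1/2}\,\|\mathbb{E}_{\mathbf{u}}[\hat{K}_{x,\mathbf{u}}^{\otimes 2}]\|_{\mathrm{op}}^{1/2}\Big)^2 .
\end{equation*}
Expanding $\hat{K}^{\otimes 2}$ produces terms $\hat{D}\otimes\hat{D}$, $\hat{D}\otimes\hat{D}^\dagger$ and their adjoints, averaged uniformly over nonzero $(\mathbf{q},\mathbf{p})$. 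The terms of type $\hat{D}\otimes\hat{D}$ correspond to $k=1$ in Eq.~\eqref{eq:Odmknorm} and have norm at most $d^m/(d^{2m}-1)$. For $d$ prime and $d\ge3$, $k=1\not\equiv 0 \bmod d$. The terms of type $\hat{D}\otimes\hat{D}^\dagger$ are $\hat{D}\otimes\hat{D}^{*}$ up to a relabeling $\mathbf{p}\to-\mathbf{p}$ and the partial transpose. Their average is proportional to the swap operator, whose norm is at most $1$; the correct bound is $d^m/(d^{2m}-1)$ as well, by the twirl identity $\sum \hat{D}\otimes\hat{D}^\dagger=d^m\,\mathrm{SWAP}$. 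The case $d=2$ is consistent with this, since there $\hat{D}=\hat{D}^\dagger$. The dangerous term is $\hat{D}\otimes\hat{D}^*$, because $\sum\hat{D}\otimes\hat{D}^*=d^{m}\,|\Phi\rangle\langle\Phi|\cdot d^m$ has norm $d^{2m}$. This term is exactly the one the conjugate channel would exploit, and showing that it never appears in $\hat{K}^{\otimes 2}$ is the main obstacle. With a single copy, the $\hat{K}\otimes\hat{K}$ arising in the master lemma comes from the $\chi^2$ computation on the channel and its adjoint pairing, i.e.\ $\hat{K}\otimes\hat{K}$ rather than $\hat{K}\otimes\hat{K}^*$. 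Even after symmetrization, only $\hat{D}\otimes\hat{D}$ and $\hat{D}\otimes\hat{D}^\dagger$ structures appear, so we expect it is absent. If the symmetrized cross terms do generate $\hat{D}(\mathbf{q},\mathbf{p})\otimes\hat{D}(\mathbf{q},-\mathbf{p})$, we will instead verify directly that their average is a twirl over a symplectically rotated pairing, which again has norm $O(d^m)$. This verification is the step we would check first.

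Combining the two bounds gives $\|\mathbb{E}\hat{K}^{\otimes2}\|_{\mathrm{op}}=O(d^{-m})$ and $\|\mathbb{E}\hat{W}^{\otimes2}\|_{\mathrm{op}}=O(\epsilon_0^2 d'^{-m'})$. Hence $\Delta=O(\epsilon^2 d^{-m}d'^{-m'})$, and Lemma~\ref{lem:master} yields $T=\Omega(d^md'^{m'}\epsilon^{-2})$.
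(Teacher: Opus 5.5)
Your proposal is correct and is essentially the paper's proof. The paper's self-conjugate family $\tfrac12\mathcal{E}_{(\mathbf{q}_1,\mathbf{p}_1),(\mathbf{q}_2,\mathbf{p}_2)}+\tfrac12\mathcal{E}_{(-\mathbf{q}_1,\mathbf{p}_1),(-\mathbf{q}_2,\mathbf{p}_2)}$ is exactly your $\tfrac12(\mathcal{E}_{\mathbf u,\mathbf v}+\mathcal{E}^*_{\mathbf u,\mathbf v})$, built from $\hat E_{d,m}$ instead of your $\theta=0$ observable and with $\mathbf q=\mathbf 0$ or $\mathbf p=\mathbf 0$ excluded from the support; it likewise applies Lemma~\ref{lem:master} with $c=1$, $l=2$, uses only the $k=1$ norms, and relies on transposition preserving the operator norm.

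The step you flag as the main obstacle is automatic. Cauchy--Schwarz over $\pmb{x}$ in the master lemma leaves only $\mathbb{E}[\hat K_x^{\otimes 2}]$ for $x=1,2$ separately, and never the cross term $\hat K_1\otimes\hat K_2$ where $\hat D\otimes\hat D^*$ would arise. You should drop your fallback, which is false in any case: the full sum of $\hat D_{d,m}(\mathbf q,\mathbf p)\otimes\hat D_{d,m}(\mathbf q,-\mathbf p)$ equals $d^{2m}(|\Phi_d\rangle\langle\Phi_d|)^{\otimes m}$ by Eq.~\eqref{eq:bellstatequdit}, so its average has norm $\Theta(1)$ rather than $O(d^{-m})$.
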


Inspired by the channel constructions in Fig.~\ref{fig:hardtolearn}, we come up with a similarly minimal choice of channels with one added term to ensure the symmetry of $\mathcal{E} = \mathcal{E}^*$ where conjugation is defined in the standard computational basis of input and output dimensions (this is the diagonal basis for the phase operator $\hat{Z}$ for qudits). By choosing a 4-outcome POVM, we construct an instance of a parameterized channel in the form of Eq.~\eqref{eq:Epar1par2} which we refer to as $\mathcal{E}^{\text{self-conj}}_{(\mathbf{q}_1,\mathbf{p}_1),(\mathbf{q}_2,\mathbf{p}_2)}$ given by
\begin{equation}
\begin{aligned}
    \mathcal{E}^{\text{self-conj}}_{(\mathbf{q}_1,\mathbf{p}_1),(\mathbf{q}_2,\mathbf{p}_2)}(\cdot) &= \frac{1}{2}\mathcal{E}_{(\mathbf{q}_1,\mathbf{p}_1),(\mathbf{q}_2,\mathbf{p}_2)}(\cdot) \\&\quad\quad+\frac{1}{2}\mathcal{E}_{(-\mathbf{q}_1,\mathbf{p}_1),(-\mathbf{q}_2,\mathbf{p}_2)}(\cdot).
\end{aligned}
\end{equation}
We note that by choosing the diagonal basis for the $\hat{Z}$ operator for transposition, we have $\hat{D}_{d,m}(\mathbf{q},\mathbf{p})^T = \hat{D}_{d,m}(-\mathbf{q},\mathbf{p})$ which can be used to show that for all input states $\hat{\rho}\in D(\mathcal{H}_{d,m})$, we have $\mathcal{E}^{\text{self-conj}}_{(\mathbf{q}_1,\mathbf{p}_1),(\mathbf{q}_2,\mathbf{p}_2)}(\hat{\rho}^T)^T= \mathcal{E}^{\text{self-conj}}_{(\mathbf{q}_1,\mathbf{p}_1),(\mathbf{q}_2,\mathbf{p}_2)}(\hat{\rho})$ showing the channel to be equal to its own complex-conjugate. Revisiting the lower bound from Lemma~\ref{lem:master}, we note that $\mathcal{E}^{\text{self-conj}}_{(\mathbf{q}_1,\mathbf{p}_1),(\mathbf{q}_2,\mathbf{p}_2)}$ has the form of Eq.~\eqref{eq:Epar1par2} with $l=2$ where the $x=0,1$ operators are the same as those of $\mathcal{E}_{(\mathbf{q}_1,\mathbf{p}_1),(\mathbf{q}_2,\mathbf{p}_2)}$ with the difference that now there is a $\hat{K}_{2,(\mathbf{q}_1,\mathbf{p}_1)}=\hat{K}_{1,(\mathbf{q}_1,\mathbf{p}_1)}^T$ and $\hat{W}_{2,(\mathbf{q}_2,\mathbf{p}_2)}=\hat{W}_{1,(\mathbf{q}_2,\mathbf{p}_2)}^T$ and the parameter $\epsilon_0$ has a different constant of proportionality with $\epsilon$ to ensure validity of the state.

In the $c=1$ case, the lower bound in Lemma~\ref{lem:master} would have no cross terms between $\hat{K}_{x,\mathbf{u}}$ and $\hat{W}_{x,\mathbf{v}}$ operators of different values for $x$. As a result, of the operator norm bounds in Eq.~\eqref{eq:Odmknorm} only the $k=1$ case is relevant, and one can note that $\|\hat{O}^T\|_{\mathrm{op}} = \|\hat{O}\|_{\mathrm{op}}$ for any bounded operator $\hat{O}$. As a result, the $c = 1$ case for self-conjugate channels turns out to be no different from the $c=1$ case for channels that are not self-conjugate. The complete proof is detailed in Thm.~\ref{thm:qudit2quditselfconj}, with the qudit $\to$ bosonic case in Thm.~\ref{thm:qudit2bosonselfconj}. We note that a very similar behavior also holds true for the bosonic case as highlighted in the following theorem.

\begin{theorem}\label{thm:boson2bosoninformalselfconj}
    Consider a $1$-copy learning protocol that succeeds in absolute value estimation for channel learning (Problem~\ref{prob:channel_learn}) for channels $\mathcal{E}\in \CPTP(\mathcal{H}_{\infty,m},\mathcal{H}_{\infty,m'})$ (with $m,m'\geq 8$, $\min(\kappa^2 m,\kappa'^2 m')\geq 2/0.99$, and two-mode squeezing $r$ with $\cosh(2r)\geq 1.06\kappa m$) that are promised to satisfy $\mathcal{E}\equiv \mathcal{E}^*$ and the estimation is to accuracy $\epsilon <0.025$ with success probability of at least $2/3$. This learning protocol requires a depth of
    \begin{equation}
        T = \Omega(d_{\mathrm{in}}^md_{\mathrm{out}}^{m'}\epsilon^{-2}),
    \end{equation}
    where we define effective mode dimensions $d_{\mathrm{in}} =\sqrt{1+(0.99\kappa\tanh^2(2r))^2}$ and $d_{\mathrm{out}} = \sqrt{1+(0.99\kappa')^2}$.
\end{theorem}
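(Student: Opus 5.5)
The plan mirrors the proof of Thm.~\ref{thm:boson2bosoninformal}, specialized to $c=1$ and symmetrized so that the hard instance satisfies $\mathcal{E}\equiv\mathcal{E}^*$.

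\textbf{Construction.} Starting from the bosonic channel $\mathcal{E}_{\gamma_1,\gamma_2}$ of Eq.~\eqref{eq:Egam1gam2}, define
\begin{equation*}
\mathcal{E}^{\text{self-conj}}_{\gamma_1,\gamma_2}=\tfrac12\mathcal{E}_{\gamma_1,\gamma_2}+\tfrac12\mathcal{E}_{-\gamma_1^*,-\gamma_2^*}.
\end{equation*}
This uses the Fock-basis transposition rule $\hat{D}^T(\alpha)=\hat{D}(-\alpha^*)$. The thermal state $\hat{\rho}_0$ is real in the Fock basis, so $\nu^{\hat n}$ commutes with transposition. The second branch therefore carries operators $\hat{K}_{2,\gamma_1}=\hat{K}_{1,\gamma_1}^T$ and $\hat{W}_{2,\gamma_2}=\hat{W}_{1,\gamma_2}^T$, and a direct check gives $[\mathcal{E}^{\text{self-conj}}(\hat{\rho}^T)]^T=\mathcal{E}^{\text{self-conj}}(\hat{\rho})$.

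The channel has the form of Eq.~\eqref{eq:Epar1par2} with $l=2$. It is realized as a measure-and-prepare channel:
\begin{itemize}
\item a 4-outcome POVM formed by mixing the two-outcome POVMs $\{\hat{\Pi}_{\pm,\gamma_1}\}$ and $\{\hat{\Pi}_{\pm,-\gamma_1^*}\}$ with weight $1/2$ each;
\item preparation of the three-peak states $\hat{\rho}_{\pm,\pm\gamma_2}$ from the proof of Thm.~\ref{thm:boson2boson}.
\end{itemize}
The constant in $\epsilon_0=\Theta(\epsilon)$ is taken smaller, which is where the tighter requirement $\epsilon<0.025$ comes from, so that these states remain positive.

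\textbf{Reduction to discrimination.} Compute $C^{\mathrm{TMSV},r}_{\mathcal{E}^{\text{self-conj}}}$ at the query $(\gamma_1/\tanh(2r),\gamma_2)$; this query is legal with high probability by Gaussian concentration, given $m\ge 8$ and $\cosh(2r)\ge1.06\kappa m$. The expected size of this entry is
\begin{equation*}
\tfrac12\cdot\bigl(\text{size of the same entry for }\mathcal{E}_{\gamma_1,\gamma_2}\text{ in Thm.~\ref{thm:boson2bosoninformal}}\bigr)+\text{cross term},
\end{equation*}
where the cross term comes from the conjugate branch through overlaps like $\mathrm{Tr}[\hat{D}(\gamma_2)\nu^{\hat n}\hat{D}(-\gamma_2^*)\nu^{\hat n}]$. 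With $\gamma$ Gaussian of spread $\Theta(\kappa)$, this factor is Gaussian-suppressed in $|\gamma_2+\gamma_2^*|$ (and similarly for the input side), hence small with high probability.

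The resulting entry is $\Theta(\epsilon_0)$ against $0$ for $\mathcal{E}_0$. A learner achieving accuracy $\epsilon$ on $|C|$ therefore solves Problem~\ref{prob:manyrevel} after revelation with probability bounded above $1/2$. I expect this step to be the main obstacle: controlling the cross terms carefully, and choosing $\epsilon_0$ constants compatible with positivity of the three-peak states.

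\textbf{Applying Lemma~\ref{lem:master} with $c=1$.} The sum runs over $x\in\{1,2\}$ only, so no mixed strings appear. Each term needs $\|\mathbb{E}_{\gamma}[\hat{W}_{x,\gamma}^{\otimes2}]\|_{\mathrm{op}}$ and the analogous $\hat{K}$ norm. Expanding $\hat{W}_{1,\gamma_2}^{\otimes2}$ gives:
\begin{itemize}
\item the terms $\hat{D}(\gamma)\otimes\hat{D}(\gamma)$ and their adjoints, whose Gaussian average has norm at most $(1+4\sigma^4)^{-m/2}$ by Lemma~\ref{lem:dispopnorm} with $k=2$;
\item the terms $\hat{D}(\gamma)\otimes\hat{D}^\dagger(\gamma)$, which become $(\mathbb{I}\otimes\hat{D}^\dagger)$-conjugates of $\hat{O}_{\infty,m,1}$ under a two-mode beamsplitter or partial-transpose argument. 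I would bound these directly by a Gaussian-integral computation in the position basis, expecting the same $(1+\sigma^4)^{-m/2}$-type decay as in the proof of Thm.~\ref{thm:boson2boson}.
\end{itemize}
For $x=2$, the tensor-square of the transposes is the transpose of the tensor-square, and $\|\hat{O}^T\|_{\mathrm{op}}=\|\hat{O}\|_{\mathrm{op}}$, so it contributes the same amount.

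This gives
\begin{equation*}
\Delta=O\bigl(\epsilon_0^2\,d_{\mathrm{in}}^{-m}d_{\mathrm{out}}^{-m'}\bigr),
\end{equation*}
with $2\sigma_1^2=0.99\kappa\tanh^2(2r)$ and $2\sigma_2^2=0.99\kappa'$; the condition $\kappa^2m\ge2/0.99$ absorbs the lower-order terms. Lemma~\ref{lem:master} then yields $T=\Omega(1/\Delta)=\Omega(d_{\mathrm{in}}^md_{\mathrm{out}}^{m'}\epsilon^{-2})$.
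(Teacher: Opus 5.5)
Your plan is essentially the paper's proof: the same symmetrized measure-and-prepare instance $\tfrac{1}{2}(\mathcal{E}_{\gamma_1,\gamma_2}+\mathcal{E}_{-\gamma_1^*,-\gamma_2^*})$ with $l=2$, the same query $(\gamma_1/\tanh(2r),\gamma_2)$, and Lemma~\ref{lem:master} at $c=1$, with $\|\hat{O}^T\|_{\mathrm{op}}=\|\hat{O}\|_{\mathrm{op}}$ handling $x=2$. There are small corrections: the replacement channel's entry is the tiny positive $t_0$ rather than $0$, and the paper separates the hypotheses through $\bigl||C_{\mathcal{E}}|-|C_{\mathcal{E}_0}|\bigr|\ge\epsilon_0 t'_-$; the conjugate branch has peaks at $\pm\gamma_2^*$, so cross terms must be suppressed in both $2|\Re(\gamma_2)|$ and $2|\Im(\gamma_2)|$ (and likewise for $\gamma_1$), which the paper does by conditioning on both $|\Re(\gamma_i)|^2$ and $|\Im(\gamma_i)|^2$ being typical; and $\epsilon<0.025$ comes from the $1/2$ mixture weight halving the entry, not from a stricter positivity condition.

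The step you leave soft, the mixed terms $\hat{D}(\gamma)\otimes\hat{D}^\dagger(\gamma)$, is closed in one line by Corollary~\ref{corr:bosonicEnorm}: conjugating one factor by the parity $e^{i\pi\hat{n}}$ maps them to $\hat{D}(\gamma)^{\otimes 2}$, so they obey exactly the $k=2$ bound $(1+4\sigma^4)^{-m/2}$ of Lemma~\ref{lem:dispopnorm}, which is what the stated $d_{\mathrm{in}},d_{\mathrm{out}}$ require. Your beamsplitter route gives the same bound once the $\sqrt{2}$ rescaling of $\gamma$ is tracked, whereas a partial-transpose argument fails, since it sends these terms to the commuting family $\hat{D}(\gamma)\otimes\hat{D}(\gamma^*)$, whose Gaussian average has operator norm $1$.
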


Here we similarly prove hardness by the construction of a channel $\mathcal{E}_{\gamma_1,\gamma_2}^{\text{self-conj}}  = \frac{1}{2}(\mathcal{E}_{\gamma_1,\gamma_2}(\cdot) + \mathcal{E}_{-\gamma_1^*,-\gamma_2^*}(\cdot))$. Due to the exact same reasons we discussed in the qudit case, the $c=1$ case here turns out not to be any different from the $c=1$ case for channels that are not self-conjugate. The complete proof is detailed in Thm.~\ref{thm:boson2bosonselfconj}. Note that once $c=2$, efficient learning becomes possible using the scheme highlighted in Thm.~\ref{thm:efficientlearnconj}.  Notably, the scaling with respect to the estimation error parameter $\epsilon$ from the upper bound follows an $\epsilon^{-4}$ scaling. 

\subsection{Lower bounds for $c$-copy access to the channel along with the complex-conjugate channel}
Using Lemma~\ref{lem:master}, we can also verify that within the appropriate regime, the $\epsilon$ scaling in the upper bound from Thm.~\ref{thm:efficientlearnconj} is optimal for schemes that are given $c$-copy access to the channel $\mathcal{E}\otimes\mathcal{E}^*$. We state our result for the qudit channel case as follows.
\begin{theorem}\label{thm:qudit2quditinformalwithconj}
Consider a $c$-copy learning protocol where $c$-copy access is given to $\mathcal{E}\otimes\mathcal{E}^*$. Suppose that it succeeds in absolute value estimation for channel learning (Problem~\ref{prob:channel_learn}) for channel $\mathcal{E}\in \CPTP(\mathcal{H}_{d,m},\mathcal{H}_{d',m'})$ (where $d,d'$ are prime) to accuracy $\epsilon\leq \frac{1}{8}$ with success probability of at least $2/3$ and $c\epsilon\leq 1$. Further we assume that $(d^{m}-1)(d'^{m'}-1)c^2\epsilon^2\geq 64$ by ensuring large enough $m,m'$. The depth of the learning protocol must satisfy (for $c\leq \min(d',d)-1$)
\begin{equation}
    T = \Omega\left(\frac{1}{c^4\epsilon^4}\right).
\end{equation}
\end{theorem}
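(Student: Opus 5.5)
We reduce to the master lemma (Lemma~\ref{lem:master}) applied to the joint resource $\mathcal{E}\otimes\mathcal{E}^*$, with the same hard instance as in Thm.~\ref{thm:qudit2quditinformal}: the null channel is $\mathcal{E}_0$ (replacement by $\mathbb{I}/d'^{m'}$) and the alternative is $\mathcal{E}_{(\mathbf{q}_1,\mathbf{p}_1),(\mathbf{q}_2,\mathbf{p}_2)}$ with $\epsilon_0=\Theta(\epsilon)$. An estimate of $|C_{\mathcal{E}}|$ to accuracy $\epsilon$ at the revealed query distinguishes the hypotheses, because this entry is zero under $\mathcal{E}_0$ and $\Theta(\epsilon_0)$ under the alternative. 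The first step is to note that $\mathcal{E}\otimes\mathcal{E}^*$ again has the form of Eq.~\eqref{eq:Epar1par2}, now with $\hat\rho_0\otimes\hat\rho_0^*$. The conjugate channel has operators $\hat{K}_{x}^*$ and $\hat{W}_x^*$, and since $\hat{K}_1,\hat{W}_1$ are Hermitian these equal $\hat{K}_1^T,\hat{W}_1^T$. The product channel is therefore indexed by pairs $(x,x')\in\{0,1\}^2$, with operators $\hat{K}_x\otimes\hat{K}_{x'}^T$ and $\hat{W}_x\otimes \hat{W}_{x'}^T$. A $c$-copy protocol on $\mathcal{E}\otimes\mathcal{E}^*$ is then a $c$-copy protocol for an $l=3$ instance, and the master lemma applies directly to the strings $\pmb{x}\in(\{0,1\}^2)^c\setminus\{\pmb 0\}$.

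The second step is to evaluate the operator norms in Eq.~\eqref{eq:masterlemmadeltaup}. Expand each $\hat{K}_1$ as a sum of $\hat D$ and $\hat D^\dagger$, and each $\hat{K}_1^T$ as a sum of $\hat D(-\mathbf q,\mathbf p)$ and its adjoint. The average over $(\mathbf q_1,\mathbf p_1)$ of the second tensor power then reduces to operators of the form $\hat{O}_{d,m,k}$, twisted by signs and transposes; a local unitary relabeling with partial transposes makes them unitarily equivalent to that form. Take a term with $a$ factors of type $\mathcal{E}$ and $b$ factors of type $\mathcal{E}^*$ (counted with their $\hat D$ versus $\hat D^\dagger$ choice). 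We want the analogue of Lemma~\ref{lem:discr_disp_tensor}: the average has norm $O(1)$ (relative to the full normalization) whenever the net exponent is nonzero mod $d$, and $\Theta(1)$ with no $d^{-m}$ suppression only when the $\mathcal E$ and $\mathcal E^*$ contributions cancel. For $c\le\min(d,d')-1$ no nontrivial cancellation comes from $\mathcal E$-copies alone. Hence the only unsuppressed strings are those in which each active copy is paired with a conjugate partner, as in the Bell-type $\hat D\otimes\hat D^*$ structure; the simplest is one copy with $x=x'=1$. The suppressed terms contribute about $c^2\epsilon_0^2/((d^m-1)(d'^{m'}-1))$ in total, as in Thm.~\ref{thm:qudit2quditinformal}. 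Each unsuppressed term carries $\epsilon_0^{2}$ per active pair on the $W$ side, and $1$ on the $K$ side because the $\sqrt2$ normalization cancels.

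The third step is to sum. Inside the square root, the leading unsuppressed contributions scale like $\epsilon_0^4$ per active pair, up to constants, so the bracket is roughly $\sum_j\binom{c}{j}(\text{const}\cdot\epsilon_0^2)^{j}$. We use $c\epsilon\le1$ to control the geometric growth and obtain $\Delta=O(c^4\epsilon^4)+O(c^2\epsilon^2/((d^m-1)(d'^{m'}-1)))$. The hypothesis $(d^m-1)(d'^{m'}-1)c^2\epsilon^2\ge64$ makes the second term at most a constant multiple of the first, so $\Delta=O(c^4\epsilon^4)$. The master lemma then gives $T=\Omega(c^{-4}\epsilon^{-4})$.

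The main obstacle is the second step: showing precisely which mixed strings of $\hat D$, $\hat D^\dagger$ and their transposes survive the average with full operator norm. This requires handling the braiding phases and the condition on $c<d$ carefully, so that partial cancellations within the $\mathcal E$-copies cannot produce an extra unsuppressed term. The argument should rely on the character sum over $\mathbb F_d^{2m}$, and it is valid only if the net symplectic coefficient vanishes mod $d$ exactly for the paired configurations.
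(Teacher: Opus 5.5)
Your architecture is the paper's. You use the same measure-and-prepare instance, write $\mathcal E\otimes\mathcal E^*$ as an $l=3$ instance with operators $\hat K_x\otimes\hat K_y^T$ and $\hat W_x\otimes\hat W_y^T$, apply the master lemma, and use $(d^m-1)(d'^{m'}-1)c^2\epsilon^2\ge 64$ to absorb the dimension-suppressed term. The gap is the step you flag as the crux, and the route you sketch for it fails. A partial transpose is not a unitary and does not preserve operator norms, and the transposed factors are precisely what change the answer. Taken literally, reducing to $\hat O_{d,m,k}$ gives the criterion $|\pmb x|+|\pmb y|\equiv 0\pmod d$, i.e.\ no advantage from $\mathcal E^*$, which contradicts the cancellation criterion you state next.

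You need a separate lemma (the paper's Lemma~\ref{lem:discr_disp_transtensor}). The only unitary used is the parity $\hat P_d^{\otimes m}$, which turns $\hat D^\dagger$ factors into $\hat D$ within each group. This leaves $\sum_{\mathbf q,\mathbf p}\hat D_{d,m}(\mathbf q,\mathbf p)^{\otimes 2k}\otimes\hat D_{d,m}(-\mathbf q,\mathbf p)^{\otimes 2k'}$. On a computational basis state, the even powers combine the half-phases into $e^{2\pi i(k-k')\mathbf q\cdot\mathbf p/d}$. Summing over $\mathbf p$ then gives $d^m\delta_{(k-k')\mathbf q+S,\mathbf 0}$, where $S$ is the sum of all basis labels. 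If $k\not\equiv k'$, then $k-k'$ is invertible mod the prime $d$, so $\mathbf q$ is unique and the operator is $d^m$ times a permutation. If $k\equiv k'$, then $\sum_{\mathbf z}|\mathbf z\rangle^{\otimes 2k}|-\mathbf z\rangle^{\otimes 2k'}$ is an eigenvector with eigenvalue $d^{2m}$. Expanding $\hat E$ and averaging over $(\mathbf q,\mathbf p)\neq(\mathbf 0,\mathbf 0)$ gives $\|\mathbb E[\hat K_{\pmb x,\pmb y}^{\otimes2}]\|_{\mathrm{op}}\le 1/(d^m-1)$ or $\le 1$. On the $W$ side you get $(2\epsilon_0^2)^{|\pmb x|+|\pmb y|}$, with or without the factor $1/(d'^{m'}-1)$ (Corollary~\ref{corr:quditEtransopnorm}).

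Your enumeration of the unsuppressed strings is also wrong. Up to a permutation of tensor factors, $\mathbb E[\hat K_{\pmb x,\pmb y}^{\otimes2}]$ depends on $(\pmb x,\pmb y)$ only through $(|\pmb x|,|\pmb y|)$. So for $c<\min(d,d')$ every string with $|\pmb x|=|\pmb y|$ is unsuppressed, for instance $\pmb x=(1,0)$, $\pmb y=(0,1)$, not only the per-copy pairs $x_i=y_i$. There are $\binom{c}{j}^2$ strings with $|\pmb x|=|\pmb y|=j$. The unsuppressed part of $\Delta^{1/2}$ is therefore $\sum_{j\ge1}\binom{c}{j}^2(2\epsilon_0^2)^j\le 2c^2\epsilon_0^2e^{2\sqrt2c\epsilon_0}$ (Lemma~\ref{lem:epssum}), and this $c^2$ is the source of $c^4\epsilon^4$. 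Your bracket $\sum_j\binom{c}{j}(\mathrm{const}\cdot\epsilon_0^2)^j\approx c\epsilon_0^2$ would give $\Delta\sim c^2\epsilon^4$, i.e.\ an unjustified $T=\Omega(c^{-2}\epsilon^{-4})$. Your final $\Delta=O(c^4\epsilon^4)$ is correct, but only because it is weaker than your own count suggests; the correct $\binom{c}{j}^2$ count is what produces it.
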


The proof for the above theorem is detailed in Thm.~\ref{thm:qudit2quditwithconj} which presents a more complete formulation of this result. We present an intuitive version of the proof in the following paragraphs. We only examine the case of $c\leq \min(d',d)$ since there are transitions that occur once $c$ crosses this amount for the case of access to only $\mathcal{E}$ (see Thm.~\ref{thm:qudit2qudit}). We revisit the parameterized channel $\mathcal{E}_{(\mathbf{q}_1,\mathbf{p}_1),(\mathbf{q}_2,\mathbf{p}_2)}\in \CPTP(\mathcal{H}_{d,m},\mathcal{H}_{d',m'})$ defined in Eq.~\eqref{eq:Eq1p1q2p2}. The conjugate of this channel can be obtained by simply replacing the operators $\hat{K}_{x,(\mathbf{q}_1,\mathbf{p}_1)}$ and $\hat{W}_{x,(\mathbf{q}_2,\mathbf{p}_2)}$ with $\hat{K}_{x,(\mathbf{q}_1,\mathbf{p}_1)}^T$ and $\hat{W}_{x,(\mathbf{q}_2,\mathbf{p}_2)}^T$ with transposition defined in the diagonal basis of the qudit phase operator. Note that $\mathcal{E}_{(\mathbf{q}_1,\mathbf{p}_1),(\mathbf{q}_2,\mathbf{p}_2)}\otimes \mathcal{E}_{(\mathbf{q}_1,\mathbf{p}_1),(\mathbf{q}_2,\mathbf{p}_2)}^*$ follows the exact parameterization as CPTP maps of the form of Eq.~\eqref{eq:Epar1par2} with $l = 3$ due to there being $4$ terms in this expansion and the parameters being $(\mathbf{q}_1,\mathbf{p}_1)\in \mathbb{F}_{d}^{m}\times\mathbb{F}_{d}^{m}\setminus\{(0,0)\}$ and $(\mathbf{q}_2,\mathbf{p}_2)\in \mathbb{F}_{d'}^{m'}\times\mathbb{F}_{d'}^{m'}\setminus\{(0,0)\}$. As a result we can apply Lemma~\ref{lem:master} in this case as well, with the only difference being that we have a different set of operator norms to deal with (see Corollary~\ref{corr:quditEtransopnorm}). Since we are now considering the channel $\mathcal{E}\otimes\mathcal{E}^*$, we now will be concerned with the operator norms of terms that look like $\hat{K}_{x_1,(\mathbf{q}_1,\mathbf{p}_1)}\otimes \hat{K}_{x_2,(\mathbf{q}_1,\mathbf{p}_1)}^T$ and $\hat{W}_{x_1,(\mathbf{q}_2,\mathbf{p}_2)}\otimes \hat{W}_{x_2,(\mathbf{q}_2,\mathbf{p}_2)}^T$. This can be reduced to studying the operator norms of the following operator
\begin{equation}
    \hat{O}_{d,m,k,k'} = \sum_{\mathbf{q},\mathbf{p}\in \mathbb{F}^m_d}\hat{D}_{d,m}(\mathbf{q},\mathbf{p})^{\otimes 2k}\otimes \hat{D}_{d,m}(-\mathbf{q},\mathbf{p})^{\otimes 2k'},
\end{equation}
as a result of $\hat{D}_{d,m}(-\mathbf{q},\mathbf{p}) = \hat{D}_{d,m}^T(\mathbf{q},\mathbf{p})$ for transposition in the standard computational basis. The operators have the property
\begin{equation}
    \|\hat{O}_{d,m,k,k'}\|_{\mathrm{op}} = \begin{cases}
        d^m, &k\neq k' \mod d\\
        d^{2m},& k = k'\mod d
    \end{cases},
\end{equation}
which we prove in Lemma~\ref{lem:discr_disp_transtensor}. The interesting property of this operator norm occurs for the case of $k=k'\mod d$ where the operator norm ends up saturating the maximum possible value it can take (by applying the triangle inequality). This can be seen to boil down to the fact that all the operators $\hat{D}_{d,m}(\mathbf{q},\mathbf{p})\otimes \hat{D}_{d,m}(-\mathbf{q},\mathbf{p})$ commute with each other for all $\mathbf{q},\mathbf{p}\in \mathbb{F}_{d}^m$ hence are all diagonalized in the same basis. We see that the intuition that expectation values of commuting operators can be sample-efficient to compute due to having a common basis where they diagonalize directly shows up in the operator norm. As a result, the $k= k'\mod d$ terms are the major contributors to the confidence parameter of $\Delta$ from Lemma~\ref{lem:master} when applied to $\mathcal{E}_{(\mathbf{q}_1,\mathbf{p}_1),(\mathbf{q}_2,\mathbf{p}_2)}\otimes \mathcal{E}_{(\mathbf{q}_1,\mathbf{p}_1),(\mathbf{q}_2,\mathbf{p}_2)}^*$ which result in a dominant scaling of $\epsilon^{-4}$. We find a similar result for the case of bosonic channels as well.

\begin{theorem}\label{thm:boson2bosoninformalwithconj}
    Consider a $c$-copy learning protocol where $c$-copy access is given to $\mathcal{E}\otimes\mathcal{E}^*$. Suppose that it succeeds in absolute value estimation for channel learning (Problem~\ref{prob:channel_learn}) for channel $\mathcal{E}\in \CPTP(\mathcal{H}_{\infty,m},\mathcal{H}_{\infty,m'})$ (where $m,m'\geq 8$, $\min(\kappa^2 m, \kappa'^2m')\geq 2/0.99$, and the two-mode squeezing $r$ satisfies $\cosh(2r) \geq 1.06\kappa m$) to accuracy $\epsilon\leq 0.05$ with success probability of at least $2/3$ and $c\epsilon=O(1)$ and $d_{\mathrm{in}}^{m}d_{\mathrm{out}}^{m'}c^2(\epsilon/0.11)^2\geq 1$  which can be ensured by choosing large enough $m,m'$ (here we define effective mode dimensions $d_{\mathrm{in}} =\sqrt{1+(0.99\kappa\tanh^2(2r))^2}$ and $d_{\mathrm{out}} = \sqrt{1+(0.99\kappa')^2}$). The depth of the learning protocol must satisfy 
\begin{equation}
    T = \Omega\left(\frac{1}{c^4\epsilon^4}\right).
\end{equation}
\end{theorem}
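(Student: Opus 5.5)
We reduce the bosonic task to Problem~\ref{prob:manyrevel} and apply Lemma~\ref{lem:master}, following the proof of Thm.~\ref{thm:boson2bosoninformal}. The only change is that the resource is now the product channel $\mathcal{E}_{\gamma_1,\gamma_2}\otimes\mathcal{E}^*_{\gamma_1,\gamma_2}$.

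\textbf{Hard instance.} We take the measure-and-prepare channel $\mathcal{E}_{\gamma_1,\gamma_2}$ of Eq.~\eqref{eq:Egam1gam2}, with $\gamma_1,\gamma_2$ complex Gaussian of spreads $2\sigma_1^2=0.99\kappa\tanh^2(2r)$ and $2\sigma_2^2=0.99\kappa'$, and thermal $\hat\rho_0$. The null channel is the replacement channel with output $\hat\rho_0$. Conjugation in the Fock basis maps $\hat K_{x,\gamma_1}\mapsto\hat K_{x,\gamma_1}^T$ and $\hat W_{x,\gamma_2}\mapsto\hat W_{x,\gamma_2}^T$, and uses $\hat D^T(\gamma)=\hat D(-\gamma^*)$. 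Hence $\mathcal{E}\otimes\mathcal{E}^*$ has the form of Eq.~\eqref{eq:Epar1par2} with $l=3$. Its operators are $\hat K_{x_1}\otimes\hat K_{x_2}^T$ and $\hat W_{x_1}\otimes\hat W_{x_2}^T$ for $(x_1,x_2)\in\{0,1\}^2$, and the reference state $\hat\rho_0\otimes\hat\rho_0^T=\hat\rho_0^{\otimes 2}$ is unchanged.

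\textbf{Reduction to discrimination.} As in Thm.~\ref{thm:boson2bosoninformal}, querying $|C^{\mathrm{TMSV},r}_{\mathcal{E}}(\gamma_1/\tanh(2r),\gamma_2)|$ separates the two hypotheses by $\Theta(\epsilon_0)$ with $\epsilon_0=\Theta(\epsilon)$. The energy constraints $|\alpha|^2\le\kappa m$ and $|\beta|^2\le\kappa' m'$ hold with constant probability because of the $0.99$ slack and $m,m'\ge 8$. The condition $\cosh(2r)\ge1.06\kappa m$ keeps the Gaussian damping factor $e^{-|\alpha|^2/2\cosh 2r}$ bounded below by a constant. So a successful learner solves Problem~\ref{prob:manyrevel} with advantage bounded away from $0$.

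\textbf{Bounding $\Delta$.} We expand $\Delta$ over strings $\pmb x\in(\{0,1\}^2)^c\setminus\{\pmb0\}$. Each $\hat K_1$ and $\hat W_1$ is a two-term combination of $\hat D(\gamma)$ and $\hat D^\dagger(\gamma)$, with $\hat W_1$ carrying the prefactor $\epsilon_0$. The second-moment expectations $\mathbb{E}_{\gamma}[\cdot^{\otimes 2}]$ therefore expand into at most $4^{\#}$ terms. Each term, up to a transposition and unitary reordering of tensor factors, is a Gaussian average of a tensor product of $\hat D(\pm\gamma)$ and $\hat D(\pm\gamma^*)$ factors. Two cases arise.

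\emph{Non-commuting configurations.} When the net coefficients do not cancel as in the pair $\hat D(\gamma)\otimes\hat D(-\gamma^*)$, we bound the term by the analogue of Lemma~\ref{lem:dispopnorm} (a Gaussian average of a product of displacements with effective weight $k$). This gives a factor $(1+k^2\sigma^4)^{-m/2}$, i.e. exponential suppression in $d_{\mathrm{in}}^m$ and $d_{\mathrm{out}}^{m'}$.

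\emph{Commuting configurations.} When one copy's $\hat D(\gamma)$ is paired with a conjugate copy's $\hat D^T(\gamma)=\hat D(-\gamma^*)$, the operators all commute. In that case the average need not be suppressed, and we bound it trivially by $1$, the saturating case mirroring Lemma~\ref{lem:discr_disp_transtensor}. Such a configuration requires at least one $x_1=1$ together with a conjugate-side $x_2=1$ on the $W$ side. That gives $\epsilon_0^2$ per nonzero pair on the output side, so the leading nonzero commuting terms contribute $O(\epsilon^4)$ to $\|\cdot\|\,\|\cdot\|$.

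\textbf{Summing.} We count strings with $j$ active positions. A string with $j$ active positions has weight at most $\epsilon_0^{2j}$ and $O(c^j)$ multiplicity, and $c\epsilon=O(1)$ makes the series geometric. The commuting contribution to $\sqrt{\cdot}$ is then $O(c^2\epsilon^2)$. The suppressed contribution is $O(c\epsilon)(d_{\mathrm{in}}^md_{\mathrm{out}}^{m'})^{-1/2}$, and the hypothesis $d_{\mathrm{in}}^md_{\mathrm{out}}^{m'}c^2(\epsilon/0.11)^2\ge1$ makes it at most comparable to $c^2\epsilon^2$. Squaring gives $\Delta=O(c^4\epsilon^4)$, and Lemma~\ref{lem:master} yields $T=\Omega(c^{-4}\epsilon^{-4})$.

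\textbf{Main obstacle.} The hardest step is the operator-norm bound for mixed tensors $\hat D(\gamma)^{\otimes a}\otimes\hat D(-\gamma^*)^{\otimes b}$ (and their adjoints) averaged over a Gaussian, which is the bosonic analogue of Lemma~\ref{lem:discr_disp_transtensor}. The plan is to apply a passive Gaussian unitary (a beamsplitter network across copies) that maps the correlated displacement to $\hat D(\sqrt{a}\gamma)\otimes\hat D(-\sqrt{b}\gamma^*)$ on two collective modes per mode. The resulting average then splits by quadratures: the quadrature along which $\gamma$ and $-\gamma^*$ act with the same sign is a single commuting direction with norm $\le1$. The other quadrature gets Gaussian suppression $(1+(a-b)^2\sigma^4)^{-m/2}$-type, exactly as in Lemma~\ref{lem:dispopnorm}.
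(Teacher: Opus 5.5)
Your overall architecture is the paper's own (Thm.~\ref{thm:boson2bosonwithconj}). It uses the same measure-and-prepare instance and the same $l=3$ expansion of $\mathcal{E}\otimes\mathcal{E}^*$ with transposed operators $\hat E(-\gamma_1^*)$ and $i(\hat D(-\gamma_2^*)-\hat D(\gamma_2^*))$. It splits $\Delta^{1/2}$ the same way: strings with $|\pmb{x}|\neq|\pmb{y}|$ are suppressed by $(d_{\mathrm{in}}^md_{\mathrm{out}}^{m'})^{-1/2}$, and strings with $|\pmb{x}|=|\pmb{y}|$ are bounded by $1$, giving $\sum_k\binom{c}{k}^2(4\epsilon_0)^{2k}=O(c^2\epsilon^2)$. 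It ends with the same use of $d_{\mathrm{in}}^md_{\mathrm{out}}^{m'}c^2\epsilon^2\gtrsim1$.

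The gap is in the step you flag as the main obstacle: your proposed mechanism for the Gaussian-averaged norm bound yields no suppression at all. After your beamsplitter network, a pair of collective modes carries $\exp\bigl(i\sqrt{2}\,[\mathrm{Im}\gamma\,X-\mathrm{Re}\gamma\,P]\bigr)$, where $X=\sqrt{a}\,\hat x_1+\sqrt{b}\,\hat x_2$ and $P=\sqrt{a}\,\hat p_1-\sqrt{b}\,\hat p_2$, so $[X,P]=i(a-b)$. Averaging either quadrature alone gives $e^{-\sigma^2X^2}$ or $e^{-\sigma^2P^2}$. Both have operator norm exactly $1$, since $X$ and $P$ have spectrum $\mathbb{R}$, so neither quadrature ``gets'' a factor $(1+(a-b)^2\sigma^4)^{-1/2}$. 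You also cannot multiply per-quadrature bounds: for $a\neq b$, splitting the exponential produces the BCH phase $e^{-i(a-b)\mathrm{Re}\gamma\,\mathrm{Im}\gamma}$, which couples the two Gaussian variables. The suppression comes jointly from the nonzero commutator, because the two quadratures cannot both be localized near $0$. Taken literally, your sketch only returns the trivial bound $1$.

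The missing step is the paper's Lemma~\ref{lem:disptransopnorm}: rerun the proof of Lemma~\ref{lem:dispopnorm} on $\hat O=\int d^{2m}\gamma\,P(\gamma)\,\hat D(\gamma)^{\otimes k}\otimes\hat D(-\gamma^*)^{\otimes k'}$.
\begin{itemize}
\item In $\mathrm{Tr}[\hat O\hat\rho\hat O^\dagger]$, each plain factor contributes $e^{\frac12 i\Omega(\gamma_2,\gamma_1)}$ and each conjugate factor contributes $e^{-\frac12 i\Omega(\gamma_2,\gamma_1)}$, because $\Omega(\alpha^*,\beta^*)=-\Omega(\alpha,\beta)$.
\item The net phase is therefore $e^{\frac{k-k'}{2}i\Omega(\gamma_+,\gamma_-)}$.
\item The Gaussian $\gamma_+$-integral followed by $|\chi|\le1$ gives $\|\hat O\|_{\mathrm{op}}^2\le(1+(k-k')^2\sigma^4)^{-m}$.
\item Corollary~\ref{corr:bosonicEnormconj} then handles the two-term $\hat E$ and $i(\hat D-\hat D^\dagger)$ combinations via parity conjugation and the triangle inequality.
\end{itemize}

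Your symplectic picture can be repaired, but not with a passive network. Up to orientation (swap or negate one generator if $a<b$), $(X,P)/\sqrt{|a-b|}$ is a canonical pair. An \emph{active} symplectic transformation therefore maps it to the two quadratures of a single mode. The operator then becomes a Gaussian average of a single-mode displacement with spread $\sqrt{|a-b|}\,\sigma$, and Lemma~\ref{lem:dispopnorm} with $k=1$ gives $(1+(a-b)^2\sigma^4)^{-1/2}$ per mode. This route takes more work than the direct computation.
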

The proof for the above theorem is detailed in Thm.~\ref{thm:boson2bosonwithconj} which presents a more complete formulation of the result as well. We similarly revisit the channel $\mathcal{E}_{\gamma_1,\gamma_2}\in \CPTP(\mathcal{H}_{\infty,m},\mathcal{H}_{\infty,m'})$ defined in Eq.~\eqref{eq:Egam1gam2} and note that we can define $\mathcal{E}^*_{\gamma_1,\gamma_2}\equiv \mathcal{E}_{-\gamma_1^*,-\gamma_2^*}$ where we use the Fock basis to define the transpose operation. This shows that the channel $\mathcal{E}\otimes\mathcal{E}^*$ has the form of channels described by Eq.~\eqref{eq:Epar1par2} with $l = 3$ since there are 4 operators to consider when expanding this channel. Similar to the qudit case, we now must consider operators of form $\hat{K}_{x_1,\gamma_1}\otimes\hat{K}_{x_2,\gamma_1}^T$ and $\hat{W}_{x_1,\gamma_2}\otimes\hat{W}_{x_2,\gamma_2}^T$. As shown in the proof of Thm.~\ref{thm:boson2bosonwithconj} this boils down to the operator norm of the following operator
\begin{equation}
    \hat{O}_{\infty,m,k,k'} = \int d^{2m}\gamma\frac{e^{-\frac{|\gamma|^2}{2\sigma^2}}}{(2\pi\sigma^2)^{m}}\hat{D}(\gamma)^{\otimes k}\otimes \hat{D}(-\gamma^*)^{\otimes k'},
\end{equation}
which we show in Lemma~\ref{lem:disptransopnorm} (and, for the Heisenberg--Weyl observables, Corollary~\ref{corr:bosonicEnormconj}) have their operator norm satisfy
\begin{equation}
    \|\hat{O}_{\infty,m,k,k'}\|_{\mathrm{op}} \leq \frac{1}{(1+(k-k')^2\sigma^4)^{m/2}},
\end{equation}
which attain their highest values for the case of $k=k'$ as one would expect due to the fact that the operators $\hat{D}(\gamma)\otimes\hat D(-\gamma^*)$ commute with each other allowing for a common eigenstate which can maximize the operator norm. These terms end up dominating in the expression of $\Delta$ from Lemma~\ref{lem:master} when applied to the channel $\mathcal{E}_{\gamma_1,\gamma_2}\otimes\mathcal{E}^*_{\gamma_1,\gamma_2}$ which finally yields the $\epsilon^{-4}$ scaling in the appropriate conditions for Thm.~\ref{thm:boson2bosonwithconj}.

Similar results also hold for the case of channels in $\CPTP(\mathcal{H}_{d,m},\mathcal{H}_{\infty,m'})$ which we detail in Thm.~\ref{thm:qudit2bosonwithconj}. These results show that the $\epsilon^{-4}$ scaling is tight in an appropriate parameter regime since we know that $1$-copy access to $\mathcal{E}\otimes \mathcal{E}^*$ admits an efficient algorithm for learning with complexity scaling as $\epsilon^{-4}$ (see Thm.~\ref{thm:efficientlearnconj}). We now study the further implications of these results in establishing a hierarchy in learning based on the availability of multi-copy access as well as the conjugate channel.
\begin{figure}
    \centering
    \includegraphics[width=0.8\linewidth]{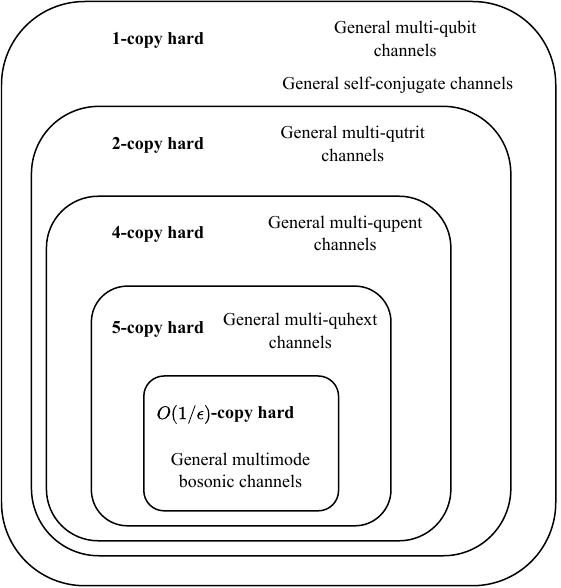}
    \caption{Visual representation of learning hierarchy over the set of all channels with the same input and output Hilbert spaces. By ``hard'' we refer to sample complexity that scales exponentially in the system size (i.e. number of qudits or modes). We note that multi-qubit channels \cite{10.1145/3670418} and self-conjugate channels (Thms.~\ref{thm:qudit2quditinformalselfconj},~\ref{thm:boson2bosoninformalselfconj}) are 1-copy hard, but are not 2-copy hard. Similarly we have results for multi-qudit channels being $(d-1)$-copy hard, but not $d$-copy hard (Thm.~\ref{thm:qudit2quditinformal}) for any prime $d$ which we depict for the cases of $d = 3$ and $5$ in the above figure (namely qutrits, qupents). Further, we discuss that channels that are $(d-1)$-copy hard but not $d$-copy hard can be realized for square-free $d$ so we depict this for the case of $d = 6$ as well (quhexts which are decomposed as a tensor product of a qubit and qutrit for each quhext). Finally, we always find multi-mode bosonic channels that are hard to learn as long as $c = O(1/\epsilon)$ (Thm.~\ref{thm:boson2bosoninformal}).}
    \label{fig:learninghierarchy}
\end{figure}
\section{Implications in learning hierarchy}\label{sec:learninghierarchy}
All lower bounds examined in this work focus on learning protocols with unbounded ancilla assistance. In this regard we note that the importance of the availability of ancilla has already been well understood in lower bounds for the Pauli channel learning problem where a quantum memory of $m$ qubits with entanglement is both necessary and sufficient for efficient learning of the Pauli eigenvalues of the Pauli channel \cite{chen_tight_2024}. The entanglement-enhanced learning advantage of using an ancilla holds true for bosonic displacement noise channels as well where there is an exponential separation between the sample complexity of ancilla-assisted learning and all learning schemes that make use of no ancillary space \cite{PhysRevLett.133.230604}. Because in both these cases the learning task only involves the estimation of commuting observables on the Choi state of the channel, efficient learning schemes are possible. In the case of Pauli channels this is the estimation of $\hat{D}_{2,m}(\mathbf{q},\mathbf{p})\otimes \hat{D}_{2,m}(-\mathbf{q},\mathbf{p})$ for $(\mathbf{q},\mathbf{p})\in \mathbb{F}^m_2\times\mathbb{F}^m_2$ for Pauli channel $\mathcal{E}^{\mathrm{pauli}}\in \CPTP(\mathcal{H}_{2,m},\mathcal{H}_{2,m})$ on the Choi state $(\mathcal{E}^{\mathrm{pauli}}_A\otimes\mathbb{I}_B)(|\Phi_2\rangle\langle\Phi_2|^{\otimes m}_{AB})$ and for displacement channels $\mathcal{E}^{\mathrm{disp}}\in \CPTP(\mathcal{H}_{\infty,m},\mathcal{H}_{\infty,m})$ this is the estimation of $e^{|\beta|^2e^{-2r}}\hat{D}(\beta)\otimes \hat{D}(\beta^*)$ on the state $(\mathcal{E}^{\mathrm{disp}}_A\otimes\mathbb{I}_B)(|\Phi_r^{\mathrm{TMSV}}\rangle\langle\Phi_r^{\mathrm{TMSV}}|^{\otimes m}_{AB})$ (note that the finite-energy requirement in the bosonic case necessitates the factor of $e^{|\beta|^2e^{-2r}}$). From this we can establish that there is always a channel which requires ancilla assistance with 1-copy access to be made efficient.

Restricting the exact channel learning task of Problem~\ref{prob:channel_learn} to CPTP maps acting on qubit systems, we find that this is the Pauli transfer matrix absolute value estimation problem studied in \cite{10.1145/3670418}. This work shows that the general channel $\mathcal{E}^{\mathrm{qubit}}\in \CPTP(\mathcal{H}_{2,m},\mathcal{H}_{2,m})$ requires $2$-copy access with ancillas for efficient learning, while even $1$-copy access with ancillas is insufficient for efficient learning showing it to be strictly harder than learning Pauli channels. To this end, Thms.~\ref{thm:qudit2quditinformalselfconj} and~\ref{thm:boson2bosoninformalselfconj} show that the same holds true for channels that are self-conjugate in the sense that $2$-copy access with ancillas is sufficient for efficient learning and $1$-copy access with ancillas does not admit efficient learning schemes. We find that this separation of $2$-copy access from $1$-copy access can be shown for channels of various possible input and output dimensions.

Finally we find that for all $d\geq 3$ being prime, there are learning tasks that do not have any efficient learning schemes if limited to $(d-1)$-copy access that do become efficient once given $d$-copy access which extends the known result in the state learning case to the channel learning scenario. We note that this extension is non-trivial since state learning can be shown to be a strictly easier task than the channel learning task we examine as we highlight by showing that state learning is the special case of channel learning where the channel is a complete replacement channel in Corollary~\ref{corr:statelearn}. 

Notably, the learning hierarchy established for qudit states in \cite{ller2025infinitehierarchymulticopyquantum} is also extended to the case of local dimension being a square-free integer. The key idea here is that using prime factorization $d = \prod_{i=1}^{f} d_i$ we can consider each $d$-level qudit to be composed of $f$ separate $d_i$-level qudits. Since the learning is defined over the composite system, one can only have the Heisenberg--Weyl group generators commute once given $d$ copies which is the lowest common multiple of the set $\{d_i\}$. For any $c<d$ copies, at least for one of the $i\in\{1,\dots,f\}$, $c$ would not be divisible by $d_i$ resulting in behavior similar to that in Eq.~\eqref{eq:Odmknorm}. This principle extends in a similar way to the channel learning scenario which can be seen as an extension of Thm.~\ref{thm:qudit2quditinformal} and is discussed in App.~\ref{app:squarefree}. As a result, we also have channel learning tasks that are $(d-1)$-copy hard but easy with $d$ copies for $d$ being square-free. For channel learning over $z$-level qudits where $z$ is no longer square-free, we show in App.~\ref{app:squarefree} that one requires $d$ copies for efficient learning where $d$ is the product of all the prime numbers that divide $z$ by the extension of this principle. Further, we also find a family of channels that are always hard to learn for any $c$-copy access as long as $c\epsilon = O(1)$ by examining the case of bosonic channel learning in Thm.~\ref{thm:boson2boson}. Fig.~\ref{fig:learninghierarchy} presents a visual representation of this learning hierarchy by showing the various sets of channels that are hard to learn for any $c$-copy learning protocol (as defined in Def~\ref{def:cmprotinform}).

The $\epsilon^{-4}$ scaling obtained in the upper bound of Thm.~\ref{thm:efficientlearnconj} for an unknown channel $\mathcal{E}$ with access to the channel $\mathcal{E}\otimes\mathcal{E}^*$ (using state learning ideas from \cite{coroi2025exponentialadvantagecontinuousvariablequantum,PRXQuantum.5.040301}) is tight (specifically for the absolute value estimation task), since we find a lower bound with the same scaling as the upper bound (see Thms.~\ref{thm:qudit2quditinformalwithconj} and~\ref{thm:boson2bosoninformalwithconj}). We also study the reduction of channel learning to state learning in Appendix~\ref{app:statelearn}. The $d$-copy learning scheme in \cite{ller2025infinitehierarchymulticopyquantum} has an $\epsilon^{-2d}$ scaling which we also find to be tight from the lower bound derived in Thm.~\ref{thm:qudit_statelearn} for the estimation along with phase information. We only claim tightness with respect to the $\epsilon$ scaling and leave extension to the scaling with respect to qudit dimension $d$ and number of qudits $m$ to future work. These lower bounds are derived using Corollary~\ref{corr:statelearn}, which follows from the master Lemma~\ref{lem:master}, further highlighting the strength and applicability of this lemma. We also find that the hardness for $c$-copy access to the state studied in \cite{coroi2025exponentialadvantagecontinuousvariablequantum} can have the lower bound improved by an exponential factor by applying Corollary~\ref{corr:statelearn}, which we note in Thm.~\ref{thm:boson_statelearn}.

\section{Conclusions and outlook}\label{sec:conclusion}
In this work we have attempted to exhaustively find lower bounds for the task of channel learning in the particular setting of limited parallel access to the unknown channel. Specifically, we considered learning protocols that are allowed $c$ copies of the unknown channel to use in parallel along with any amount of ancillary assistance and adaptive state preparation and measurement. We also consider a weaker task of Choi-state learning, which, while providing a complete description of the channel, has limitations that we discuss at length in Appendix~\ref{app:inadequacy}. Nevertheless, we find that even for this arguably easier learning task, this restriction of access to the channels can often lead to complexity lower bounds that scale with the dimension of the Hilbert space. We summarize some of our key findings in Table~\ref{tab:results}. Our main result of Lemma~\ref{lem:master} applies to channels regardless of input and output dimension, allowing us to find all our lower bounds for channels with bosonic output spaces as well and unifies existing results in state learning \cite{coroi2025exponentialadvantagecontinuousvariablequantum,ller2025infinitehierarchymulticopyquantum,PRXQuantum.5.040301}. We also generalize the principles used in \cite{chen2024optimaltradeoffsestimatingpauli} for their lower bounds for $c$-copy protocols by extending this to be independent of the dimensions of the Hilbert space. This allows for the measurements to also include POVMs that have infinite accuracy or uncountably many terms which requires subtlety to deal with since the operators describing these need not necessarily be bounded. We make a note of this carefully in our proof of Lemma~\ref{lem:master} in Appendix~\ref{app:cm_protocols}.

A more accurate way to characterize a quantum channel is tomographic learning, which produces a classical description of the channel with a guaranteed accuracy in diamond distance \cite{aharonov1998quantumcircuitsmixedstates}. This task has been explored in great detail in \cite{mele2026optimallearningquantumchannels,oufkir2026improvedlowerboundslearning} with tight lower and upper bounds on channel uses in the case of unitary operations \cite{10353133}. The upper bounds from \cite{mele2026optimallearningquantumchannels} for tomographic learning to diamond distance accuracy of $\epsilon$ for channels with a Kraus rank of $k$ have the number of channel uses scale as $O(d_{\mathrm{in}} d_{\mathrm{out}}k/\epsilon^2)$ for channels with finite-dimensional input and output Hilbert spaces of dimensions $d_{\mathrm{in}}$ and $d_{\mathrm{out}}$, respectively. Considering channel learning to be the estimation of a set of appropriate observables of some state, we find that there is hope in obtaining efficient complexity scalings in the setting of restricted parallel access. Due to the absolute value estimation for channel learning task in Problem~\ref{prob:channel_learn} being weaker than tomographic learning, we believe our methods can be readily extended to provide lower bounds for the tomographic learning problem in the restricted-parallel-access setting and leave this for future exploration.

The bosonic channel learning task we define in Problem~\ref{prob:channel_learn} requires some subtlety to show that it fully describes arbitrary channels (see Appendix~\ref{app:inadequacy}). Given that sets of bounded operators (such as the set of all phase space displacements) form a kind of orthogonal basis to represent quantum states, characterizing the way a channel transforms such sets of operators would yield an effective description. Unfortunately, we find that such an attempted transfer function turns out to be impossible to learn even when it is guaranteed to be bounded in value, which we detail in Thm.~\ref{thm:bosonnogo} of Appendix~\ref{app:bosonicnogo}. We note that the current bosonic channel learning tasks have tended to look at Gaussian channels \cite{fanizza2025efficientlearningbosonicgaussian} or displacement noise channels \cite{PhysRevLett.133.230604}. In the case of displacement noise channels, the learning task in \cite{PhysRevLett.133.230604} can be noted to be the same as the channel learning task in Problem~\ref{prob:channel_learn} along with phase information for only the queries of $\alpha = \beta$ and a multiplicative overhead of $e^{-e^{-2r}|\beta|^2}$. Notably phase information comes much easier here due to all displacements of form $\hat{D}(\beta)\otimes\hat{D}(-\beta^*)$ commuting with each other, making the task sample efficient with $1$-copy access. Ref.~\cite{fanizza2025efficientlearningbosonicgaussian} uses the fact that Gaussian unitaries have a finite description making it possible to learn by estimating $\text{Tr}[\hat{O}\mathcal{E}(\hat{\rho})]$ for some appropriate set of input states $\hat{\rho}$ and operators $\hat{O}$. The general bosonic channel however may not have a universal finite description making even defining the task seem somewhat challenging. We motivate our choice of definition for the absolute value estimation for channel learning problem in Problem~\ref{prob:channel_learn} by understanding the operator algebraic description for quantum channels in Appendix~\ref{app:operatoralgebra}. In addition, while results relating the trace distance between the Choi states of two channels and their diamond distances are known, we could not find an analogous result relating trace distance between TMSV-based Choi states and the energy-constrained diamond norm for bosonic channels. We answer this in Appendix~\ref{app:bosonicACID} by constructing
two entanglement-breaking channels that output only mixtures of Gaussian states, for which there is an exponential pre-factor between their TMSV based Choi-state trace distance and the energy-constrained diamond norm \cite{Shirokov2018,winter2017energyconstraineddiamondnormapplications}. We hope that this work can contribute to the discussion of finding a good metric for channel learning in the case of infinite-dimensional systems.

All of our lower bounds focus on the particular case of parallel use. As established in Ref.~\cite{PhysRevLett.127.200504}, sequential uses could greatly improve sample efficiency or ancilla requirements, as was found for Pauli channels in \cite{PRXQuantum.6.020323}. Notably, the case of sequential uses with unbounded ancilla is strictly stronger than parallel schemes with the same number of channel uses since one can always swap in fresh ancillas to process all the channel outputs in parallel. The case of interest would be that of sequential schemes with a bounded ancilla space. We leave the extension of our work to sequential protocols for future work.

\begin{acknowledgments}
We would like to thank Senrui Chen, Debayan Bandyopadhyay, and Mingxing Yao for useful discussions. We acknowledge support from the ARO (W911NF-23-1-0077), ARO MURI (W911NF-21-1-0325), AFOSR MURI (FA9550-21-1-0209, FA9550-23-1-0338), ONR MURI (N000142612102), DARPA (HR0011-24-9-0361), NSF (ERC-1941583, OMA-2137642, OSI-2326767, CCF-2312755, OSI-2426975). HK was supported by the IITP (RS-2025-25464252, RS-2024-00437191) and the NRF (RS-2025-25464492, RS-2024-00442710) funded by the Ministry of Science and ICT (MSIT), Korea.

This material is based upon work supported by the U.S. Department of Energy, Office of Science, National Quantum Information Science Research Centers and Advanced Scientific Computing Research (ASCR) program under contract number DE-AC02-06CH11357 as part of the InterQnet quantum networking project.

\textbf{A.I. use disclosure:} The LLM model Opus 5.0 developed by Anthropic was used for the proof-reading of this manuscript. The writing of this manuscript and the results of this work were derived solely by the authors without LLM assistance. The authors take full responsibility for the correctness of the presented work.
\end{acknowledgments}

\section{Data availability}
No data were created or produced in this work.
%

\onecolumngrid
\let\addcontentsline\addcontentslineorig

\appendix
\newpage
\tableofcontents
\begin{table}[ht]
\centering
\caption{Commonly used notation.}
\begin{tabular}{@{}|l|l|@{}}
\toprule
Notation & Explanation \\
\midrule
$\mathbb{F}_d^m$ & Vectors with $m$ entries with each entry being in Galois field $\mathbb{F}_d$ \\
$\mathcal{H}_{d,m}$ & Hilbert space of $m$ qudits with each having dimension $d$ \\
$\mathcal{H}_{\infty,m}$ & Hilbert space of $m$ bosonic modes.\\
$\mathbb{I}_{d,m}$ & Identity map $\mathcal{H}_{d,m}\to \mathcal{H}_{d,m}$.\\
$\mathbb{I}_{\infty,m}$ & Identity map $\mathcal{H}_{\infty,m}\to\mathcal{H}_{\infty,m}$.\\
$\|\cdot\|_{\mathrm{op}}$ & Operator norm for a linear operator $\mathcal{H}\to\mathcal{H}$ defined by the largest singular value.\\
$\|\cdot\|_1$ & $1$-norm which is the sum of all singular values of the linear operator.\\
$B(\mathcal{H})$ & The set of linear operators $\mathcal{H}\to\mathcal{H}$ that have bounded operator norm.\\
$L_1(\mathcal{H})$ & The set of linear operators $\mathcal{H}\to\mathcal{H}$ with bounded $1$-norm.\\
$D(\mathcal{H})$ & The set of linear operators $\mathcal{H}\to\mathcal{H}$ that represent a density operator. \\
$\|\cdot\|_{\diamond}$ & The diamond norm for superoperators \cite{aharonov1998quantumcircuitsmixedstates}. \\
$\|\cdot\|_{\diamond,E}$ & The energy-constrained diamond norm for infinite-dimensional superoperators \cite{winter2017energyconstraineddiamondnormapplications,Shirokov2018}.\\
$\CPTP(\mathcal{H}_{\mathrm{in}},\mathcal{H}_{\mathrm{out}})$ & The space of all valid completely positive trace-preserving linear maps $B(\mathcal{H}_{\mathrm{in}})\to B(\mathcal{H}_{\mathrm{out}})$.\\
$c$-copy & Learning protocol class with $c$-copy parallel access to the channel per step (Def.~\ref{def:cmprotinform}).\\
$\hat{X}_d,\,\hat{Z}_d$ & Qudit generalized shift and phase operators: $\hat{X}_d=\sum_{j}|j{+}1\rangle\langle j|$ and $\hat{Z}_d=\sum_j e^{2\pi i j/d}|j\rangle\langle j|$.\\
$\hat{D}_{d,m}(\mathbf{q},\mathbf{p})$, $(\mathbf{q},\mathbf{p})\in \mathbb{F}_d^m\times\mathbb{F}_d^m$ &  Phase space displacement operators for $m$ qudits of dimension $d$ (Eq.~\eqref{eq:disp_qudit_defn}).\\
$\hat{P}_d = \sum_{j\in\mathbb{F}_d}|-j\rangle\langle j|$ & Qudit parity flip operator satisfying $\hat{P}_d^{\otimes m}\hat{D}_{d,m}(\mathbf{q},\mathbf{p})\hat{P}_d^{\otimes m} = \hat{D}_{d,m}(\mathbf{q},\mathbf{p})^{\dagger}$ (Eq.~\eqref{eq:parityqudit}).\\
$\hat{D}(\alpha)$, $\alpha\in \mathbb{C}^m$ & Phase space displacement operators for $m$ bosonic modes (Eq.~\eqref{eq:bosonic_disp}).\\
$\hat{a}_i,\,\hat{a}_i^{\dagger}$ & Bosonic annihilation and creation operators for mode $i$.\\
$\hat{n}_i = \hat{a}_i^{\dagger}\hat{a}_i,\quad \hat{n}=\sum_{i=1}^{m}\hat{n}_i$ & Bosonic number operator for mode $i$ and total number operator.\\
$\Omega(\alpha,\beta) = (\alpha^{\dagger}\beta - \beta^{\dagger}\alpha)/i$ & Symplectic inner product on the bosonic phase space $\mathbb{C}^m$.\\
$\hat{E}_{d,m}(\mathbf{q},\mathbf{p})$, $(\mathbf{q},\mathbf{p})\in \mathbb{F}_d^m\times\mathbb{F}_d^m$ & The operator given by $\hat{E}_{d,m}(\mathbf{q},\mathbf{p}) = \frac{e^{i\pi/4}}{\sqrt{2}}\hat{D}_{d,m}(\mathbf{q},\mathbf{p}) + \frac{e^{-i\pi/4}}{\sqrt{2}}\hat{D}_{d,m}(-\mathbf{q},-\mathbf{p})$.\\
$\hat{E}(\gamma)$, $\gamma\in \mathbb{C}^m$ & The operator given by $\hat{E}(\gamma)=\frac{e^{i\pi/4}}{\sqrt{2}}\hat{D}(\gamma) + \frac{e^{-i\pi/4}}{\sqrt{2}}\hat{D}(-\gamma)$.\\
$\ket{\Phi_d}$ & The entangled 2-qudit ($d$ level) state $\ket{\Phi_d} = d^{-1/2}\sum_{i\in \mathbb{F}_d}\ket{i}\ket{i}$.\\
$\ket{\Phi^{\mathrm{TMSV}}_r}$ & Two-mode squeezed vacuum over 2 bosonic modes $\ket{\Phi_r^{\mathrm{TMSV}}} = \mathrm{sech}(r)\sum_{i=0}^{\infty}(\tanh(r))^i\ket{i}\ket{i}$.\\
$\mathcal{E}^*$ & Complex-conjugate channel with transpose defined in a fixed basis.\\
$C_{\mathcal{E}}$ & Transfer function description for channel $\mathcal{E}$ (Problem~\ref{prob:channel_learn}).\\
$C_{\mathcal{E}}^{\mathrm{TMSV},r}(\alpha,\beta)$ & TMSV-based channel learning function for bosonic channels (Eq.~\eqref{eq:CETMSVr}).\\
$\chi_{\hat{\rho}}(\alpha) = \mathrm{Tr}[\hat{\rho}\,\hat{D}(\alpha)]$ & Characteristic function of a quantum state $\hat{\rho}$.\\
$\mathcal{E}_{\mathbf{u},\mathbf{v}}$ & A parameterized CPTP map which takes form of Eq.~\eqref{eq:Epar1par2}.\\
$\Gamma(n,x)$ & The upper incomplete gamma function given by $\Gamma(n,x) = \int_{x}^{\infty} t^{n-1}e^{-t}dt$ and $\Gamma(n,0) = \Gamma(n)$.\\
$Q(n,x)$ & Regularized gamma function $Q(n,x) = \frac{\Gamma(n,x)}{\Gamma(n)}$.\\
\bottomrule
\end{tabular}
\label{app:notation}
\end{table}

\section{Mathematical preliminaries}

\subsection{Displacement operators for qudits}\label{app:qudit_disp}
We revisit the displacement operations $\hat{D}_{d,m}(\mathbf{q},\mathbf{p})$ for both $\mathbf{q},\mathbf{p}$ being in $\mathbb{F}^m_d$ defined in Eq.~\eqref{eq:disp_qudit_defn} that act on the $m$ qudit ($d$-level) Hilbert space $\mathcal{H}_{d,m}$. By defining operator conjugation in the basis $\{\ket{j}\}$ which diagonalizes $\hat{Z}$, we have
\begin{equation}
    \hat{D}_{d,m}(\mathbf{q},\mathbf{p})^*  = \hat{D}_{d,m}(\mathbf{q},-\mathbf{p}),\quad \hat{D}_{d,m}(\mathbf{q},\mathbf{p})^T = \hat{D}_{d,m}(-\mathbf{q},\mathbf{p}).
\end{equation}
Defining the parity flip operator $\hat{P}_d$, we obtain
\begin{equation}\label{eq:parityqudit}
    \hat{P}_d = \sum_{j\in\mathbb{F}_d}|-j\rangle\langle j|,\quad \hat{P}_d^{\otimes m}\hat{D}_{d,m}(\mathbf{q},\mathbf{p})\hat{P}_d^{\otimes m} = \hat{D}_{d,m}(\mathbf{q},\mathbf{p})^\dagger.
\end{equation}
Considering the maximally entangled state $\ket{\Phi_d} = \frac{1}{\sqrt{d}}\sum_{j\in\mathbb{F}_d}|j\rangle|j\rangle$, we have the following
\begin{equation}\label{eq:bellstatequdit}
    (|\Phi_d\rangle\langle\Phi_d|)^{\otimes m} = \frac{1}{d^{2m}}\sum_{\mathbf{q},\mathbf{p}\in\mathbb{F}^m_d}\hat{D}_{d,m}(\mathbf{q},\mathbf{p})\otimes \hat{D}_{d,m}(\mathbf{q},-\mathbf{p})
\end{equation}
which further gives us a complete POVM describing Bell-basis measurement given by
\begin{equation}
    \hat{\Pi}_{\mathbf{a},\mathbf{b}} = (\hat{D}_{d,m}(\mathbf{a},\mathbf{b})\otimes \mathbb{I})(|\Phi_d\rangle\langle\Phi_d|)^{\otimes m}(\hat{D}_{d,m}(\mathbf{a},\mathbf{b})\otimes \mathbb{I})^\dagger =  \frac{1}{d^{2m}}\sum_{\mathbf{q},\mathbf{p}\in\mathbb{F}^m_d}e^{\frac{2\pi i}{d}(\mathbf{q}\cdot\mathbf{b} - \mathbf{a}\cdot\mathbf{p})}\hat{D}_{d,m}(\mathbf{q},\mathbf{p})\otimes \hat{D}_{d,m}(\mathbf{q},-\mathbf{p})\label{eq:bellPOVMqudit}.
\end{equation}
Observe that for the state $\hat{\sigma}\in D(\mathcal{H}_{d,m}^{\otimes 2})$, we have
\begin{equation}
    \text{Tr}\left[(\hat{D}_{d,m}(\mathbf{q},\mathbf{p})\otimes \hat{D}_{d,m}(\mathbf{q},-\mathbf{p}))\hat{\sigma}\right] = \sum_{\mathbf{a},\mathbf{b}}e^{\frac{2\pi i}{d}(\mathbf{a}\cdot\mathbf{p}-\mathbf{q}\cdot\mathbf{b} )}p(\mathbf{a},\mathbf{b}),
\end{equation}
where $p(\mathbf{a},\mathbf{b}) = \text{Tr}\left[\hat{\Pi}_{\mathbf{a},\mathbf{b}}\hat{\sigma}\right]$ is the probability distribution obtained by measuring the state $\hat{\sigma}$ using the Bell-basis measurement. Treating $\mathbf{a},\mathbf{b}$ as random samples from the probability distribution $p(\mathbf{a},\mathbf{b})$, $e^{\frac{2\pi i}{d}(\mathbf{a}\cdot\mathbf{p}-\mathbf{q}\cdot\mathbf{b})}$ is an unbiased estimator for the value $\text{Tr}\left[(\hat{D}_{d,m}(\mathbf{q},\mathbf{p})\otimes \hat{D}_{d,m}(\mathbf{q},-\mathbf{p}))\hat{\sigma}\right]$. Using this fact (as observed also in \cite{PRXQuantum.5.040301}) we state the following.
\begin{lemma}\label{lem:effestquditchar}
    Using the qudit Bell measurement scheme (POVM defined by $\{\hat{\Pi}_{\mathbf{a},\mathbf{b}}\}$ in Eq.~\eqref{eq:bellPOVMqudit}), we can produce an estimate $\tilde{\chi}_{\hat{\sigma}}(\mathbf{q}_j,\mathbf{p}_j)$ for $M$ different $(\mathbf{q}_j,\mathbf{p}_j)\in\mathbb{F}_{d}^m\times \mathbb{F}_{d}^m$ (for $j=1,\dots,M$) to the function $\chi_{\hat{\sigma}}(\mathbf{q},\mathbf{p}) = \text{Tr}\left[(\hat{D}_{d,m}(\mathbf{q},\mathbf{p})\otimes \hat{D}_{d,m}(\mathbf{q},-\mathbf{p}))\hat{\sigma}\right]$ such that for all $j$, $|\tilde{\chi}_{\hat{\sigma}}(\mathbf{q}_j,\mathbf{p}_j) - \chi_{\hat{\sigma}}(\mathbf{q}_j,\mathbf{p}_j)| \leq \epsilon$ with success probability $1-\delta$ using $N$ copies of $\hat{\sigma}$ where $N = O(\epsilon^{-2}\log(M/\delta))$.
\end{lemma}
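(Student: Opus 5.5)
The plan is to estimate each query $\chi_{\hat{\sigma}}(\mathbf{q}_j,\mathbf{p}_j)$ by an empirical mean, bound the error of that mean with a Hoeffding-type inequality, and then take a union bound over the $M$ queries. We measure each of the $N$ copies of $\hat{\sigma}$ independently with the Bell POVM $\{\hat{\Pi}_{\mathbf{a},\mathbf{b}}\}$ of Eq.~\eqref{eq:bellPOVMqudit}. This gives i.i.d.\ samples $(\mathbf{a}^{(k)},\mathbf{b}^{(k)})\sim p(\mathbf{a},\mathbf{b})$ for $k=1,\dots,N$.

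First we check that $\{\hat{\Pi}_{\mathbf{a},\mathbf{b}}\}$ is a valid complete POVM. The operators are projectors onto the orthonormal Bell basis $(\hat{D}_{d,m}(\mathbf{a},\mathbf{b})\otimes\mathbb{I})\ket{\Phi_d}^{\otimes m}$. Orthonormality follows from $\bra{\Phi_d}^{\otimes m}(\hat{D}^\dagger\hat{D}'\otimes\mathbb{I})\ket{\Phi_d}^{\otimes m}=\text{Tr}[\hat{D}^\dagger\hat{D}']/d^m$ together with the Hilbert--Schmidt orthogonality of the displacements. Completeness follows by summing Eq.~\eqref{eq:bellPOVMqudit} over $(\mathbf{a},\mathbf{b})$: the character sum $\sum_{\mathbf{a},\mathbf{b}}e^{\frac{2\pi i}{d}(\mathbf{q}\cdot\mathbf{b}-\mathbf{a}\cdot\mathbf{p})}=d^{2m}\delta_{\mathbf{q},0}\delta_{\mathbf{p},0}$ leaves only the identity. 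The same character orthogonality, applied after inverting the Fourier relation, gives the identity stated just before the lemma: $\mathbb{E}[e^{\frac{2\pi i}{d}(\mathbf{a}\cdot\mathbf{p}-\mathbf{q}\cdot\mathbf{b})}]=\chi_{\hat{\sigma}}(\mathbf{q},\mathbf{p})$. The single-sample estimator is therefore unbiased.

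For each query $j$ we define
\[
\tilde{\chi}_{\hat{\sigma}}(\mathbf{q}_j,\mathbf{p}_j)=\frac{1}{N}\sum_{k=1}^N e^{\frac{2\pi i}{d}(\mathbf{a}^{(k)}\cdot\mathbf{p}_j-\mathbf{q}_j\cdot\mathbf{b}^{(k)})}.
\]
All queries reuse the same samples, which is why the number of copies needed is only logarithmic in $M$. Each summand has modulus one, so its real and imaginary parts lie in $[-1,1]$. Applying Hoeffding's inequality separately to each part with accuracy $\epsilon/\sqrt{2}$ bounds the failure probability for one query by $4\exp(-N\epsilon^2/4)$. A union bound over the $M$ queries makes the total failure probability at most $4M\exp(-N\epsilon^2/4)$, which is at most $\delta$ once $N\geq 4\epsilon^{-2}\log(4M/\delta)=O(\epsilon^{-2}\log(M/\delta))$.

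There is no real obstacle in this argument. The only step needing care is the Fourier-inversion and normalization check that yields unbiasedness, in particular tracking the sign conventions and the $e^{i\pi\mathbf{q}\cdot\mathbf{p}/d}$ phase in $\hat{D}_{d,m}$. That phase cancels between $\hat{D}_{d,m}(\mathbf{q},\mathbf{p})$ and $\hat{D}_{d,m}(\mathbf{q},-\mathbf{p})$ in Eq.~\eqref{eq:bellstatequdit}, and the conjugation $\hat{D}_{d,m}(\mathbf{a},\mathbf{b})\cdot\hat{D}_{d,m}(\mathbf{a},\mathbf{b})^\dagger$ contributes exactly the braiding phase that appears in Eq.~\eqref{eq:bellPOVMqudit}.
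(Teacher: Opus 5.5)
Your proposal is correct and matches the paper's proof essentially step for step. Both measure each copy in the Bell basis and reuse the same samples for every query. Both apply Hoeffding to the real and imaginary parts at accuracy $\epsilon/\sqrt{2}$, giving per-query failure probability $4\exp(-N\epsilon^2/4)$, and then take a union bound over the $M$ queries. Your threshold $N\geq 4\epsilon^{-2}\log(4M/\delta)$ is exactly the paper's $\frac{4}{\epsilon^2}\log(M/\delta)+\frac{4\log 4}{\epsilon^2}$. Your extra checks of POVM completeness and unbiasedness are taken as given in the paper, from the discussion just before the lemma.
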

\begin{proof}
    This statement is proven in \cite{PRXQuantum.5.040301}, for which we restate a short proof for the sake of completeness. Note that if the algorithm can succeed in providing an $\epsilon$-accurate estimate for each query independently with probability at least $1-\frac{\delta}{M}$, by the union bound the success probability for succeeding for all queries together will be lower bounded by $1-\delta$.
    Suppose we obtain $N$ samples from the probability distribution $p(\mathbf{a},\mathbf{b})$ labeled as $(\mathbf{a}_k,\mathbf{b}_k)$ (for $k=1,\dots,N$) by separately measuring each copy of $\hat{\sigma}$. Define the estimators
    \begin{equation}
        \hat\chi_{\mathrm{re}}(\mathbf{q},\mathbf{p}) = \frac{1}{N}\sum_{k=1}^N\Re(e^{\frac{2\pi i}{d}(\mathbf{a}_k\cdot\mathbf{p}-\mathbf{q}\cdot\mathbf{b}_k)}),\quad \hat{\chi}_{\mathrm{im}}(\mathbf{q},\mathbf{p}) = \frac{1}{N}\sum_{k=1}^N\Im(e^{\frac{2\pi i}{d}(\mathbf{a}_k\cdot\mathbf{p}-\mathbf{q}\cdot\mathbf{b}_k)}).
    \end{equation}
    By the Hoeffding inequality, we have 
    \begin{equation}
        \Pr\left(|\hat\chi_{\mathrm{re}}(\mathbf{q},\mathbf{p}) - \Re(\chi(\mathbf{q},\mathbf{p}))|\geq\frac{\epsilon}{\sqrt{2}}\right) \leq 2\exp(-N\epsilon^2 /4),\quad \Pr\left(|\hat\chi_{\mathrm{im}}(\mathbf{q},\mathbf{p}) - \Im(\chi(\mathbf{q},\mathbf{p}))|\geq\frac{\epsilon}{\sqrt{2}}\right) \leq 2\exp(-N\epsilon^2 / 4),
    \end{equation}
    and further we have that
    \begin{equation}
    \begin{aligned}
        &\Pr\left(|\hat\chi_{\mathrm{re}}(\mathbf{q},\mathbf{p}) +i\hat\chi_{\mathrm{im}}(\mathbf{q},\mathbf{p}) - \chi(\mathbf{q},\mathbf{p})|\leq\epsilon\right)\\
        &\ge         \Pr\left(|\hat\chi_{\mathrm{re}}(\mathbf{q},\mathbf{p}) - \Re(\chi(\mathbf{q},\mathbf{p}))|\leq\frac{\epsilon}{\sqrt{2}},~~|\hat\chi_{\mathrm{im}}(\mathbf{q},\mathbf{p}) - \Im(\chi(\mathbf{q},\mathbf{p}))|\leq\frac{\epsilon}{\sqrt{2}}\right)\\
        & \ge 1- 4\exp(-N\epsilon^{2}/4)        .
    \end{aligned}\nonumber
    \end{equation}
    Hence to be able to succeed with probability $1-\delta/M$ for an arbitrary query we obtain
    \begin{equation}
        \exp(-N\epsilon^{2}/4) \leq \frac{\delta}{4M} \implies N\geq \frac{4}{\epsilon^2}\log(\frac{M}{\delta})+\frac{4\log(4)}{\epsilon^2} = O(\epsilon^{-2}\log(M/\delta)).
    \end{equation}
\end{proof}

A key feature allowing for this efficient estimation strategy lies in the fact that the set of operators $\{\hat{D}_{d,m}(\mathbf{q},\mathbf{p})\otimes \hat{D}_{d,m}(\mathbf{q},-\mathbf{p})\}$ all commute with each other. Note that the operators $\{\hat{D}_{d,m}(\mathbf{q},\mathbf{p})^{\otimes d}\}$ also commute with each other. We consider the following entangled states over $d$ qudits each of local dimension $d$ as defined in Sec. 4 of \cite{ller2025infinitehierarchymulticopyquantum}:
\begin{equation}
    \ket{\phi_{I,t}} = \frac{1}{\sqrt{d}}\sum_{k\in\mathbb{F}_d}e^{\frac{2\pi i}{d}kt}|(0,I)+k\mathbf{1}\rangle,\quad t\in\mathbb{F}_d,\ I\in\mathbb{F}_d^{d-1}.
\end{equation}
The above states form an orthonormal basis for $\mathcal{H}_{d,d}$ and are eigenstates of all $\hat{D}_{d,1}(q,p)^{\otimes d}$ for $q,p\in\mathbb{F}_d$ as we can observe that (using Eqs. (11) and (12) of \cite{ller2025infinitehierarchymulticopyquantum})
\begin{equation}
    \hat{D}_{d,1}(q,p)^{\otimes d}\ket{\phi_{I,t}} = e^{i\pi dqp/d}(\hat{X}^{q}\hat{Z}^{p})^{\otimes d}\ket{\phi_{I,t}} = (-1)^{qp}e^{\frac{2\pi i}{d}(p|I| - qt)}\ket{\phi_{I,t}},
\end{equation}
which allows us to define the complete POVM set for $\mathbf{s},\mathbf{t}\in\mathbb{F}^m_d$
\begin{equation}
    \hat{\Pi}_{\mathbf{s},\mathbf{t}} = \bigotimes_{j=1}^m\left(\sum_{I\in\mathbb{F}^{d-1}_d,|I| = s_j}|\phi_{I,t_j}\rangle\langle\phi_{I,t_j}|\right).
\end{equation}
By performing this measurement over the product state $\hat{\rho}^{\otimes d}$ for $\hat{\rho}$ being a state in $\mathcal{H}_{d,m}$, we have
\begin{equation}
    \text{Tr}\left[\hat{D}_{d,m}(\mathbf{q},\mathbf{p})^{\otimes d}\hat{\rho}^{\otimes d}\right] = \sum_{\mathbf{s},\mathbf{t}}(-1)^{\mathbf{q}\cdot\mathbf{p}}e^{\frac{2\pi i}{d}(\mathbf{p}\cdot\mathbf{s}-\mathbf{q}\cdot\mathbf{t})}\text{Tr}[\hat{\Pi}_{\mathbf{s},\mathbf{t}}\hat{\rho}^{\otimes d}].
\end{equation}
The method for implementing this POVM is described in Thm. 10 of \cite{ller2025infinitehierarchymulticopyquantum} where for $m$ qudits this can be implemented in constant depth using single- and two-qudit Clifford gates. Using this, we now restate the efficient estimation result from \cite{ller2025infinitehierarchymulticopyquantum}.
\begin{lemma}\label{lem:dcopylearning}
    Using the POVM set defined by $\{\hat{\Pi}_{\mathbf{s},\mathbf{t}}\}$ over the product state $\hat{\rho}^{\otimes d}$, we can provide an estimate $\hat{f}(\mathbf{q},\mathbf{p})$ of $\left(\text{Tr}\left[\hat{D}_{d,m}(\mathbf{q},\mathbf{p})\hat{\rho}\right]\right)^d$ for $M$ queries such that for each of the queries $(\mathbf{q}_j,\mathbf{p}_j)\in\mathbb{F}_d^m\times \mathbb{F}_d^m$, $\left|\hat{f}(\mathbf{q}_j,\mathbf{p}_j) - \left(\text{Tr}\left[\hat{D}_{d,m}(\mathbf{q}_j,\mathbf{p}_j)\hat{\rho}\right]\right)^d\right|\leq \epsilon$ with probability $1-\delta$ using a total of $O(\epsilon^{-2}\log(M/\delta))$ measurements over $\hat{\rho}^{\otimes d}$ and hence a sample complexity of $O(d\epsilon^{-2}\log(M/\delta))$ uses of state $\hat{\rho}$.
\end{lemma}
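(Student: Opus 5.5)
The plan mirrors the proof of Lemma~\ref{lem:effestquditchar}: read off an unbiased bounded estimator from the commuting-observable measurement $\{\hat{\Pi}_{\mathbf{s},\mathbf{t}}\}$, then apply Hoeffding's inequality and a union bound over the $M$ queries.

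First, since $\hat{\rho}^{\otimes d}$ is a product state,
\begin{equation*}
\text{Tr}\left[\hat{D}_{d,m}(\mathbf{q},\mathbf{p})^{\otimes d}\hat{\rho}^{\otimes d}\right] = \left(\text{Tr}\left[\hat{D}_{d,m}(\mathbf{q},\mathbf{p})\hat{\rho}\right]\right)^d .
\end{equation*}
So the target quantity is exactly the expectation of the commuting operator $\hat{D}_{d,m}(\mathbf{q},\mathbf{p})^{\otimes d}$. Here I rely on the eigenrelation for $\ket{\phi_{I,t}}$ stated just above. I also need the single-site relation to tensor correctly over the $m$ qudits: the phase $e^{i\pi\mathbf{q}\cdot\mathbf{p}/d}$ raised to the $d$-th power gives $(-1)^{\mathbf{q}\cdot\mathbf{p}}$, matching the displayed identity. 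Taking the displayed expansion in terms of $\text{Tr}[\hat{\Pi}_{\mathbf{s},\mathbf{t}}\hat{\rho}^{\otimes d}]$ as given, each outcome $(\mathbf{s},\mathbf{t})$ determines the random variable
\begin{equation*}
Y(\mathbf{q},\mathbf{p}) = (-1)^{\mathbf{q}\cdot\mathbf{p}}e^{\frac{2\pi i}{d}(\mathbf{p}\cdot\mathbf{s}-\mathbf{q}\cdot\mathbf{t})}.
\end{equation*}
This variable has unit modulus, and its mean is the target $(\text{Tr}[\hat{D}_{d,m}(\mathbf{q},\mathbf{p})\hat{\rho}])^d$. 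Crucially, one and the same measurement outcome yields $Y$ for every query simultaneously, so the queries need not be known in advance.

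Second, I perform $N$ independent measurements of $\{\hat{\Pi}_{\mathbf{s},\mathbf{t}}\}$ on fresh copies of $\hat{\rho}^{\otimes d}$ and set $\hat{f}(\mathbf{q},\mathbf{p})$ to be the empirical mean of $Y$. The real and imaginary parts each lie in $[-1,1]$. Applying Hoeffding's inequality with accuracy $\epsilon/\sqrt{2}$ to each part gives a failure probability for a single query of at most $4\exp(-N\epsilon^2/4)$, exactly as in Lemma~\ref{lem:effestquditchar}. A union bound over the $M$ queries, with per-query failure probability $\delta/M$, then requires
\begin{equation*}
N \ge 4\epsilon^{-2}\left(\log(M/\delta)+\log 4\right) = O\!\left(\epsilon^{-2}\log(M/\delta)\right).
\end{equation*}
Each measurement consumes $d$ copies of $\hat{\rho}$, so the total sample complexity is $O(d\epsilon^{-2}\log(M/\delta))$.

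The only step that is not routine is checking that $\{\hat{\Pi}_{\mathbf{s},\mathbf{t}}\}$ is a valid complete POVM, and that $\hat{D}_{d,1}(q,p)^{\otimes d}$ is diagonal in the basis $\{\ket{\phi_{I,t}}\}$ with eigenvalue depending only on $(|I|,t)$. Orthonormality and completeness follow because the vectors $(0,I)+k\mathbf{1}$, for $I\in\mathbb{F}_d^{d-1}$ and $k\in\mathbb{F}_d$, enumerate $\mathbb{F}_d^d$ exactly once, and the Fourier transform in $k$ is unitary. For the eigenvalues, $\hat{Z}^{\otimes d}$ acting on $(0,I)+k\mathbf{1}$ produces the phase $e^{\frac{2\pi i}{d}(|I|+dk)}=e^{\frac{2\pi i}{d}|I|}$, and $\hat{X}^{\otimes d}$ shifts $k\mapsto k+1$, which produces the phase $e^{-\frac{2\pi i}{d}t}$. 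Beyond this I would simply cite Thm.~10 of \cite{ller2025infinitehierarchymulticopyquantum}, which gives the implementation.
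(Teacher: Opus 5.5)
Your proposal is correct and follows the same route as the paper: the paper's proof simply applies Hoeffding's inequality and a union bound to the unit-modulus estimator $(-1)^{\mathbf{q}\cdot\mathbf{p}}e^{\frac{2\pi i}{d}(\mathbf{p}\cdot\mathbf{s}-\mathbf{q}\cdot\mathbf{t})}$, sampled from $p(\mathbf{s},\mathbf{t})=\text{Tr}[\hat{\Pi}_{\mathbf{s},\mathbf{t}}\hat{\rho}^{\otimes d}]$, exactly as in Lemma~\ref{lem:effestquditchar}. Your additional checks (completeness of $\{\hat{\Pi}_{\mathbf{s},\mathbf{t}}\}$ and the eigenvalues of $\hat{D}_{d,1}(q,p)^{\otimes d}$ on $\ket{\phi_{I,t}}$) are also correct, and they fill in details the paper takes from \cite{ller2025infinitehierarchymulticopyquantum}.
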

\begin{proof}
    We use the union bound along with the Hoeffding bound to obtain this result for the estimation of $\text{Tr}\left[\hat{D}_{d,m}(\mathbf{q},\mathbf{p})^{\otimes d}\hat{\rho}^{\otimes d}\right] = \left(\text{Tr}\left[\hat{D}_{d,m}(\mathbf{q},\mathbf{p})\hat{\rho}\right]\right)^d$ by sampling from the probability distribution $p(\mathbf{s},\mathbf{t}) = \text{Tr}\left[\hat{\Pi}_{\mathbf{s},\mathbf{t}}\hat{\rho}^{\otimes d}\right]$.
\end{proof}
Note that all the above protocols are non-adaptive and only use knowledge of the queries to process data obtained after measurements. Alternatively, one can estimate a single query using a specifically designed circuit similar to that of \cite{PhysRevLett.125.043602}. Consider the following two-outcome POVM for prime $d$:
\begin{equation}
    \hat{\Pi}_{\pm,\theta,(\mathbf{q},\mathbf{p})}^{(d,m)} = \frac{\mathbb{I} \pm \frac{e^{-i\theta}\hat{D}_{d,m}(\mathbf{q},\mathbf{p})+e^{i\theta}\hat{D}^\dagger_{d,m}(\mathbf{q},\mathbf{p})}{2}}{2}\label{eq:qudit2outcome}.
\end{equation}
By performing this two-outcome POVM on a state $\hat{\rho}$ in $D(\mathcal{H}_{d,m})$, we obtain
\begin{equation}
    p(+)-p(-)=\text{Tr}[\hat{\Pi}_{+,\theta,(\mathbf{q},\mathbf{p})}\hat{\rho}] - \text{Tr}[\hat{\Pi}_{-,\theta,(\mathbf{q},\mathbf{p})}\hat{\rho}] = \Re(e^{-i\theta}\text{Tr}[\hat{D}_{d,m}(\mathbf{q},\mathbf{p})\hat{\rho}]).
\end{equation}
which provides a direct estimator for the values of $\Re(\text{Tr}[\hat{D}_{d,m}(\mathbf{q},\mathbf{p})\hat{\rho}])$ and $\Im(\text{Tr}[\hat{D}_{d,m}(\mathbf{q},\mathbf{p})\hat{\rho}])$ by choosing $\theta = 0$ and $\theta = \pi/2$ respectively. 
It can be checked that the circuit depicted in Fig.~\ref{fig:hardtolearn}(a) implements this POVM with $\hat{\Pi}_{\pm,\theta,(\mathbf{q},\mathbf{p})}^{(d,m)}$ for $d > 2$ (allowing for $2^{-1}\mod d$ to be defined) corresponding to the ancillary qubit measurement of $\pm1$ eigenstate of $\hat{Z}$. Examining the case of $\theta = -\pi/4$, we obtain a useful set of Hermitian operators proportional to the difference of the two POVM operators given by
\begin{equation}
    \hat{E}_{d,m}(\mathbf{q},\mathbf{p}) = \frac{1+i}{2}\hat{D}_{d,m}(\mathbf{q},\mathbf{p}) + \frac{1-i}{2}\hat{D}_{d,m}(-\mathbf{q},-\mathbf{p})\label{eq:quditHWobs},
\end{equation}
which also happen to be orthogonal to each other with respect to the Hilbert--Schmidt inner product
\begin{equation}
    \text{Tr}\left[\left(\hat{E}_{d,m}(\mathbf{q}',\mathbf{p}')\right)^\dagger\hat{E}_{d,m}(\mathbf{q},\mathbf{p})\right]  = d^m\delta_{\mathbf{q},\mathbf{q}'}\delta_{\mathbf{p},\mathbf{p}'}.
\end{equation}
These operators were introduced in \cite{PhysRevA.94.010301} as a way to represent qudit states using a real-valued vector representation in terms of the above $d^{2m}$ operators similar to Eq.~\eqref{eq:mquditchar}. 

Notably, the above definition of Heisenberg--Weyl observables also holds valid for $d = 2$ where $\hat{E}_{2,m}(\mathbf{q},\mathbf{p}) = \hat{D}_{2,m}(\mathbf{q},\mathbf{p})$ due to the displacement operators already being Hermitian. In the case of $d = 2$, the set of operators $\hat{D}_{2,m}(\mathbf{q},\mathbf{p})$ corresponds to elements of the $m$-qubit Pauli group. As such, there is always a valid POVM given by $\hat{\Pi}_{\pm,(\mathbf{q},\mathbf{p})}^{(2,m)} = \frac{\mathbb{I}\pm \hat{D}_{2,m}(\mathbf{q},\mathbf{p})}{2}$. Since $\hat{D}_{2,m}(\mathbf{q},\mathbf{p})$ is a Pauli operator, there is a Clifford operation whose conjugation maps it to $\hat{Z}_2^{\otimes m}$, where $\hat{Z}_2$ is the standard single-qubit Pauli-$Z$ operator, hence showing a way to also implement this operation using a Clifford unitary followed by computational basis measurement. If we consider the fuzzy observable \cite{PhysRevD.33.2253} of $\cos(\theta)\hat{D}_{2,m}(\mathbf{q},\mathbf{p})$, this is measured by $\hat{\Pi}_{\pm,\theta,(\mathbf{q},\mathbf{p})}^{(2,m)}$. This is equivalent to considering a classical noise channel on the measurement outcome showing a way to perform this measurement for the case of $d =2$. For $d\geq3$ we are required to scale down the observable $\hat{E}_{d,m}(\mathbf{q},\mathbf{p})$ by $\sqrt{2}$ to ensure that the elements of the POVM $\hat{\Pi}_{\pm,-\pi/4,(\mathbf{q},\mathbf{p})}^{(d,m)}$ are positive operators. Hence we have shown that the POVM defined by $\hat{\Pi}_{\pm,\theta,(\mathbf{q},\mathbf{p})}^{(d,m)}$ can be validly implemented for all prime $d$. We make use of this in the hardness proofs for multi-qudit channels.

\subsection{Displacement operators for bosonic modes}\label{app:bosonic_disp}
The bosonic $m$-mode phase-space displacements $\hat{D}(\alpha)$ for $\alpha\in\mathbb{C}^m$ (defined in Eq.~\eqref{eq:bosonic_disp}) satisfy the following relations
\begin{equation}
    \hat{D}^\dagger({\alpha}) = \hat{D}(-{\alpha}),\quad \hat{D}^{*}({\alpha}) = \hat{D}({\alpha}^{*}),\quad \hat{D}({\alpha})\hat{D}({\beta}) = \hat{D}({\alpha}+{\beta})e^{\frac{1}{2}i\Omega({\beta},{\alpha})},\quad \hat{D}(\beta)\hat{D}(\alpha)\hat{D}^\dagger(\beta) = \hat{D}({\alpha})e^{i\Omega({\alpha},\beta)}.
\end{equation}
We will also make use of the resolution of the identity operation,
\begin{equation}
    \mathbb{I} = \frac{1}{\pi^m}\int d^{2m}{\alpha} |{\alpha}\rangle\langle{\alpha}|.
\end{equation}
We note the Fourier transform and inverse Fourier transform relation
\begin{equation}
    \tilde{f}(\tilde{{\alpha}}) = \int d^{2m}{\alpha}e^{i\Omega({\alpha},\tilde{{\alpha}})}f({\alpha}),\quad f({\alpha}) = \frac{1}{\pi^{2m}}\int d^{2m}\tilde{{\alpha}}e^{i\Omega(\tilde{{\alpha}},{\alpha})}\tilde{f}(\tilde{{\alpha}}).
\end{equation}
We also note the Fourier relation for Gaussian functions which we will be making extensive use of
\begin{equation}
    \int d^{2m}{\alpha}e^{i\Omega({\alpha},{\beta})}\exp(-\frac{|{\alpha}-{\mu}|^2}{\sigma^2}) = (\pi\sigma^2)^me^{-\sigma^2|{\beta}|^2}e^{i\Omega({\mu},{\beta})}.
\end{equation}
All CPTP channels mapping $m$-mode states to $m$-mode states can be written as $\mathcal{E}(\cdot) = \sum_{i}\hat{K}_i(\cdot)\hat{K}_i^\dagger$ as their Kraus decomposition. This can also be written as
\begin{equation}
    \mathcal{E}(\cdot) = \int d^{2m}{\alpha_1}d^{2m}{\alpha_2} K_{\mathcal{E}}({\alpha_1,{\alpha}_2}) \hat{D}({\alpha}_1)(\cdot) \hat{D}^\dagger({\alpha}_2),
\end{equation}
where the function $K_{\mathcal{E}}({\alpha}_1,{\alpha}_2)$ has tempered-distribution-like behavior. The Hermiticity of output states and the trace-preserving conditions result in 
\begin{equation}
    K_{\mathcal{E}}({\alpha}_1,{\alpha}_2) = (K_{\mathcal{E}}({\alpha_2},{\alpha}_1))^{*},\quad\int d^{2m}{\alpha}_2 K_{\mathcal{E}}({\alpha}_1+{\alpha}_2,{\alpha}_2)e^{\frac{1}{2}i\Omega({\alpha}_2,{\alpha}_1)} = \delta^{(2m)}({\alpha}_1).
\end{equation}
Note that the trace-preserving condition can be equivalently rewritten as
\begin{equation}
    \int d^{2m}{\xi} K_{\mathcal{E}}\left({\xi}+\frac{{\alpha}}{2},{\xi}-\frac{{\alpha}}{2}\right) e^{\frac{1}{2}i\Omega({\xi,{\alpha}})} = \delta^{(2m)}({\alpha}).
\end{equation}

An important state we will use is the $2m$-mode generalization of the two-mode squeezed vacuum which for squeezing of $r$ has the characteristic function
\begin{equation}\label{eq:TMSVchardefn}
    |\Psi_r\rangle\langle\Psi_r| = \frac{1}{\pi^{2m}}\int d^{2m}{\omega}_1 d^{2m}{\omega}_2 g_r({\omega}_1,{\omega}_2) \hat{D}^\dagger({\omega}_1)\otimes\hat{D}^\dagger({\omega}_2),
\end{equation}
\begin{equation}
    g_r({\omega}_1,{\omega}_2) = \exp(-\frac{1}{4}\left(e^{2r}|{\omega}_1-{\omega}_2^{*}|^2 + e^{-2r}|{\omega}_1+{\omega}_2^{*}|^2\right)).
\end{equation}
A useful observation about $g_r$ we make is as follows
\begin{equation}
    g_r({\omega}_1,{\omega}_2^{*}) = g_r({\omega}_2,{\omega}_1^{*}) = \exp(-\frac{|{\omega}_2|^2}{2\cosh(2r)})\exp(-\frac{\cosh(2r)}{2}|{\omega}_1 - {\omega}_2\tanh(2r)|^2).
\end{equation}
Taking the limit of $r\to\infty$, we can write down an (unnormalized) maximally entangled $2m$-mode state as
\begin{equation}
    |\Psi\rangle\langle\Psi| = \frac{1}{\pi^m}\int d^{2m}{\omega} \hat{D}({\omega})\otimes\hat{D}({\omega}^{*}),
\end{equation}
using which we can define the POVM set $\{\hat{\Pi}({\zeta})\}_{{\zeta}\in\mathbb{C}^m}$ given by
\begin{equation}\label{eq:bellPOVMboson}
    \hat{\Pi}({\zeta}) = \frac{1}{\pi^m}(\hat{D}({\zeta})\otimes\mathbb{I})|\Psi\rangle\langle\Psi|(\hat{D}^\dagger({\zeta})\otimes\mathbb{I}) = \frac{1}{\pi^{2m}}\int d^{2m}{\alpha}e^{i\Omega(\alpha,{\zeta})}\hat{D}({\alpha})\otimes\hat{D}({\alpha}^{*}).
\end{equation}
For a $2m$-mode state $\hat{\sigma}$, we can obtain specific points of its characteristic function by
\begin{equation}
    \chi_{\sigma}\left(\begin{pmatrix}
        {\alpha}\\
        {\alpha}^{*}
    \end{pmatrix}\right) = \int d^{2m}{\zeta} e^{i\Omega(\zeta,{\alpha})}p({\zeta}),
\end{equation}
where $p({\zeta})= \text{Tr}(\hat{\Pi}({\zeta})\hat{\sigma})$ is the probability associated with the POVM. Hence we can always define an unbiased estimator 
\begin{equation}
    \hat{\chi}_{\sigma}\left(\begin{pmatrix}
        {\alpha}\\{\alpha}^{*}
    \end{pmatrix}\right) = \frac{1}{N}\sum_{i=1}^N e^{i\Omega(\zeta_i,{\alpha})}\label{eq:bosoniccharestimator},
\end{equation}
by taking samples ${\zeta}_i$ from the distribution $p({\zeta})$. This is key to the polynomial sample complexity in \cite{coroi2025exponentialadvantagecontinuousvariablequantum,PhysRevLett.133.230604}, and is stated as Lemma~\ref{lem:effestbosonchar} below.
\begin{lemma}\label{lem:effestbosonchar}
    Using the Bell measurement scheme, we can estimate $M$ different points ${\alpha}_i$ for the function $\chi_{\hat{\sigma}}\left(\begin{pmatrix}
        {\alpha}\\{\alpha}^{*}
    \end{pmatrix}\right)$ up to $\epsilon$ error with success probability $1-\delta$ with $N$ copies of $\hat{\sigma}$ where $N = \mathcal{O}(\epsilon^{-2}\log(M/\delta))$.
\end{lemma}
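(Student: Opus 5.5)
The plan mirrors the proof of Lemma~\ref{lem:effestquditchar}: build a bounded unbiased estimator from each Bell-measurement outcome, then combine Hoeffding with a union bound over the $M$ queries. Fix a query $\alpha\in\mathbb{C}^m$ and measure each of the $N$ copies of $\hat{\sigma}$ independently with the POVM $\{\hat{\Pi}(\zeta)\}_{\zeta\in\mathbb{C}^m}$ of Eq.~\eqref{eq:bellPOVMboson}. This gives i.i.d.\ samples $\zeta_1,\dots,\zeta_N$ from the density $p(\zeta)=\text{Tr}[\hat{\Pi}(\zeta)\hat{\sigma}]$. The outcomes are shared across all queries, so the protocol is non-adaptive and the $\alpha_j$ are used only in classical post-processing.

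The first step is to justify that $p$ is a valid probability density. I need $\hat{\Pi}(\zeta)\geq 0$ and $\int d^{2m}\zeta\,\hat{\Pi}(\zeta)=\mathbb{I}$. Positivity is immediate, since $\hat{\Pi}(\zeta)$ is a displaced projector onto the (unnormalized) EPR state. For completeness, I would integrate the Fourier form in Eq.~\eqref{eq:bellPOVMboson} over $\zeta$. The Fourier relation stated earlier gives $\int d^{2m}\zeta\, e^{i\Omega(\alpha,\zeta)}=\pi^{2m}\delta^{(2m)}(\alpha)$, which collapses the integral to $\hat{D}(0)\otimes\hat{D}(0)=\mathbb{I}$.

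The second step is unbiasedness, $\mathbb{E}[e^{i\Omega(\zeta,\alpha)}]=\chi_{\hat{\sigma}}\big((\alpha,\alpha^*)\big)$. This is exactly the displayed identity preceding Eq.~\eqref{eq:bosoniccharestimator}. It follows by inserting Eq.~\eqref{eq:bellPOVMboson} into $\int d^{2m}\zeta\, e^{i\Omega(\zeta,\alpha)}\text{Tr}[\hat{\Pi}(\zeta)\hat{\sigma}]$ and again using the delta-function Fourier relation to pick out the term $\text{Tr}[(\hat{D}(\alpha)\otimes\hat{D}(\alpha^*))\hat{\sigma}]$. Exchanging the $\zeta$-integral with the trace needs a short justification: $\hat{\sigma}$ is trace class and the displacements are bounded, so I can smear with a Gaussian regulator and take the limit, or simply cite the identity as already established.

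The third step is concentration. Let $\hat\chi_{\mathrm{re}}$ and $\hat\chi_{\mathrm{im}}$ be the empirical means of $\cos\Omega(\zeta_i,\alpha)$ and $\sin\Omega(\zeta_i,\alpha)$; each summand lies in $[-1,1]$. Hoeffding's inequality bounds each part's deviation beyond $\epsilon/\sqrt{2}$ by $2e^{-N\epsilon^2/4}$. Hence the complex estimator of Eq.~\eqref{eq:bosoniccharestimator} is $\epsilon$-accurate except with probability at most $4e^{-N\epsilon^2/4}$. Requiring this to be at most $\delta/M$ and taking a union bound over the $M$ queries gives $N\geq 4\epsilon^{-2}(\log(M/\delta)+\log 4)=O(\epsilon^{-2}\log(M/\delta))$.

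The main obstacle is the first two steps. The outcome space is continuous and the POVM elements are built from the non-normalizable state $|\Psi\rangle$, so I must check that the measure is well defined and that the trace–integral exchange is legitimate. The concentration step is then identical to the qudit case.
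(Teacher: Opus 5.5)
Your proposal is correct and follows the paper's route: the paper proves this lemma by noting that the argument of Lemma~\ref{lem:effestquditchar} (Hoeffding on the real and imaginary parts of the bounded unbiased estimator $e^{i\Omega(\zeta_i,\alpha)}$ from Eq.~\eqref{eq:bosoniccharestimator}, followed by a union bound over the $M$ queries) carries over directly. Your additional checks of the POVM normalization and of unbiasedness via $\int d^{2m}\zeta\, e^{i\Omega(\alpha,\zeta)}=\pi^{2m}\delta^{(2m)}(\alpha)$ agree with the identities the paper states just before the lemma, and your bound $N\geq 4\epsilon^{-2}\log(4M/\delta)$ matches the qudit case.
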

\begin{proof}
    The proof of Lemma~\ref{lem:effestquditchar} directly generalizes here using the estimator defined in Eq.~\eqref{eq:bosoniccharestimator}.
\end{proof}
Similar to the POVM defined in Eq.~\eqref{eq:qudit2outcome}, we can define the following POVM which is exactly implemented by the circuit in Fig.~1 of \cite{PhysRevLett.125.043602} as
\begin{equation}\label{eq:bosonic2outcome}
    \hat{\Pi}_{\pm,\theta,\gamma} = \frac{\mathbb{I}\pm\frac{e^{-i\theta}\hat{D}(\gamma)+e^{i\theta}\hat{D}^\dagger(\gamma)}{2}}{2}.
\end{equation}
Through the use of this POVM, Ref.~\cite{PhysRevLett.125.043602} performs characteristic function tomography by noting
\begin{equation}
    p(+) - p(-) = \text{Tr}\left[\hat{\Pi}_{+,\theta,\gamma}\hat{\rho}\right] - \text{Tr}\left[\hat{\Pi}_{-,\theta,\gamma}\hat{\rho}\right] = \Re(\chi_{\hat{\rho}}(\gamma)e^{-i\theta}).
\end{equation}
Similar to Eq.~\eqref{eq:quditHWobs}, we can also define an orthogonal set of Hermitian operators as
\begin{equation}
    \hat{E}(\gamma) = \frac{1+i}{2}\hat{D}(\gamma) + \frac{1-i}{2}\hat{D}^\dagger(\gamma),\quad \text{Tr}\left[\hat{E}^\dagger(\gamma')\hat{E}(\gamma)\right] = \pi^m\delta^{(2m)}(\gamma-\gamma').
\end{equation}

\subsection{Operator algebra for describing quantum channels}\label{app:operatoralgebra}
In this subsection, we offer a brief introduction to operator algebra, heavily based on \cite{gupta2015functional}.
\begin{definition}[Linear operators over a Hilbert space]
    Consider a Hilbert space $\mathcal{H}$ equipped with inner product $\langle\cdot|\cdot\rangle$. We define the following sets of linear operators $\mathcal{H}\to\mathcal{H}$:
    \begin{itemize}
        \item $L(\mathcal{H})$ is the set of linear operators $f:\mathcal{H}\to\mathcal{H}$. This includes unbounded and discontinuous operators. 
        \item $B(\mathcal{H}) = \{\hat{O}\in L(\mathcal{H})|\text{ }\|\hat{O}\|_{\mathrm{op}} < \infty\}$ where $\|\hat{O}\|^2_{\mathrm{op}} = \sup_{\ket{\psi}\in\mathcal{H},\langle\psi|\psi\rangle=1}\langle\psi|\hat{O}^\dagger\hat{O}|\psi\rangle$. The space $(B(\mathcal{H}),\|\cdot\|_{\mathrm{op}})$ is a Banach space and so is complete under this norm. It is also a von Neumann algebra and a $C^*$-algebra. This admits an adjoint operation over itself where $\bra{\phi}\hat{O}^\dagger\ket{\psi} = (\bra{\psi}\hat{O}\ket{\phi})^*$ for all $\ket{\psi},\ket{\phi}\in\mathcal{H}$.
        \item We define the set $L_1(\mathcal{H}) = \{\hat{\rho}\in B(\mathcal{H})|\text{ }\|\hat{\rho}\|_1 < \infty\}$ where $\|\hat{\rho}\|_1 = \text{Tr}(\sqrt{\hat{\rho}^\dagger\hat{\rho}})$ which is the trace norm. This is also referred to as trace-class and corresponds to the set of nuclear operators acting on $\mathcal{H}$. The space $(L_1(\mathcal{H}),\|\cdot\|_1)$ is a Banach space but is not a $C^*$-algebra due to the properties of the trace norm.
        \item An operator $\hat{A}\in B(\mathcal{H})$ is positive if for all $\ket{\psi}\in\mathcal{H}$, $\langle\psi|\hat{A}|\psi\rangle \geq 0$. The set of positive operators in $B(\mathcal{H})$ will be represented by $B(\mathcal{H})_+$. We similarly define $L_1(\mathcal{H})_+$ and the operators in $L_1(\mathcal{H})_+$ with unit trace define the set $D(\mathcal{H})$ (density operators).
    \end{itemize}
\end{definition}

\begin{remark}
    For all operators $\hat{O}\in B(\mathcal{H})$ and $\hat{\rho}\in L_1(\mathcal{H})$, the operations $\hat{\rho}\hat{O}$ and $\hat{O}\hat{\rho}$ are contained in $L_1(\mathcal{H})$. The space $B(\mathcal{H})$ is isomorphic to the continuous dual space of $L_1(\mathcal{H})$ as can be seen by the continuity of the function $f_{\hat{O}}(\cdot) = \text{Tr}(\hat{O}(\cdot))$ over $L_1(\mathcal{H})$.
\end{remark}

Note that the Banach spaces $B(\mathcal{H})$ and $L_1(\mathcal{H})$ contain the set of unitary operations and the set of density matrices, respectively. We now consider a minimal description for a set of quantum channels.
\begin{definition}
    We define the following sets of linear maps over spaces and subspaces of $B(\mathcal{H})$:
    \begin{itemize}
        \item $L(B(\mathcal{H}))$ is the set of linear maps $f:B(\mathcal{H})\to B(\mathcal{H})$. While this is a vector space, it does not admit a canonical norm.
        \item $B(B(\mathcal{H})) = \{\Phi\in L(B(\mathcal{H}))|\text{ }\|\Phi\|_\mathrm{super} < \infty\}$ where $\|\Phi\|_{\mathrm{super}} = \sup_{\hat{O}\in B(\mathcal{H}),\|\hat{O}\|_{\mathrm{op}}=1}\|\Phi(\hat{O})\|_{\mathrm{op}}$. This similarly is a Banach space with this norm.
    \end{itemize} 
\end{definition}

\begin{definition}[Completely bounded and positive maps]
    We define the following norm for linear maps $\Phi\in B(B(\mathcal{H}))$ for complete boundedness:
    $$\|\Phi\|_{\mathrm{cb}} = \sup_{n\geq 1}\|\Phi\otimes\mathbb{I}_n\|_{\mathrm{super}}$$
    where the map $\Phi\otimes\mathbb{I}_n$ is defined as a map in $L(B(\mathcal{H}\otimes\mathbb{C}^{n}))$, with $\mathbb{C}^{n}$ an $n$-dimensional Hilbert space and $\mathbb{I}_n$ the identity operation over it.
    \\
    We define the set $CB(B(\mathcal{H})) = \{\Phi\in B(B(\mathcal{H}))|\text{ }\|\Phi\|_{\mathrm{cb}} < \infty\}$.
\\
    We similarly define the set $CP(B(\mathcal{H})) = \{\Phi\in CB(B(\mathcal{H}))|\text{ }\forall{\hat{A}\in B(\mathcal{H}\otimes \mathbb{C}^{n})_+},\text{ }(\Phi\otimes \mathbb{I}_n)(\hat{A})\in B(\mathcal{H}\otimes \mathbb{C}^{n})_+\}$.
\end{definition}

\begin{definition}[Completely positive trace-preserving map]
 We define the set of completely positive trace-preserving maps acting over a Hilbert space $\mathcal{H}$ as
 $$\CPTP(\mathcal{H}) = \{\mathcal{E}\in CP(B(\mathcal{H})) |\text{ } \forall\hat{\rho}\in L_1(\mathcal{H}), \text{ }\text{Tr}(\mathcal{E}(\hat{\rho})) = \text{Tr}(\hat{\rho})\}$$
 By this definition for all $\mathcal{E} \in \CPTP(\mathcal{H})$, $\mathcal{E}(\hat{\rho})\in L_1(\mathcal{H})$ for all $\hat{\rho}\in L_1(\mathcal{H})$. \\We define the unique adjoint of $\mathcal{E}\in \CPTP(\mathcal{H})$ as $\mathcal{E}^\dagger\in CP(B(\mathcal{H}))$ where $\text{Tr}(\hat{O}^\dagger\mathcal{E}(\hat{\rho})) = \text{Tr}((\mathcal{E}^\dagger(\hat{O}))^\dagger \hat{\rho})$ for all $\hat{\rho}\in L_1(\mathcal{H})$ and $\hat{O}\in B(\mathcal{H})$.
\end{definition}

\begin{lemma}[Fundamental factorization of CB maps]
    For all $\Phi \in CB(B(\mathcal{H}))$ the following two statements are equivalent:
    \begin{itemize}
        \item $\|\Phi\|_{\mathrm{cb}}\leq 1$
        \item There is a Hilbert space $\mathcal{K}$, a homomorphism $\varpi:B(\mathcal{H})\to B(\mathcal{K})$ and linear maps $V,W:\mathcal{K}\to\mathcal{H}$ with operator norms $\|V\|_\mathrm{op},\|W\|_{\mathrm{op}}\leq 1$ such that $\Phi(\cdot) = V^\dagger \varpi(\cdot)W$.
    \end{itemize}
    The map $\Phi$ is CP iff it admits a factorization with $V =W$.
\end{lemma}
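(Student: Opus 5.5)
The plan is to prove the two implications separately. The easy direction is factorization $\Rightarrow$ $\|\Phi\|_{\mathrm{cb}}\le 1$. The hard direction goes through the Paulsen $2\times 2$ off-diagonal trick, the Arveson extension theorem and the Stinespring dilation. Throughout, I read ``homomorphism'' as a unital $*$-homomorphism (a representation) $\varpi:B(\mathcal{H})\to B(\mathcal{K})$. This reading is what makes the statement true.

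\emph{Factorization $\Rightarrow$ contraction.} Suppose $\Phi(\cdot)=V^\dagger\varpi(\cdot)W$. For each $n$ we have
$$(\Phi\otimes\mathbb{I}_n)(X)=(V\otimes\mathbb{I}_n)^\dagger(\varpi\otimes\mathrm{id}_n)(X)(W\otimes\mathbb{I}_n).$$
The map $\varpi\otimes\mathrm{id}_n$ is again a $*$-homomorphism between $C^*$-algebras, and any such map is contractive. Since also $\|V\otimes\mathbb{I}_n\|_{\mathrm{op}},\|W\otimes\mathbb{I}_n\|_{\mathrm{op}}\le 1$, we get $\|(\Phi\otimes\mathbb{I}_n)(X)\|_{\mathrm{op}}\le\|X\|_{\mathrm{op}}$ for all $n$, hence $\|\Phi\|_{\mathrm{cb}}\le 1$. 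If $V=W$ and $A\ge 0$, then
$$V^\dagger\varpi(A)V=\big(\varpi(A^{1/2})V\big)^\dagger\varpi(A^{1/2})V\ge 0,$$
and the same computation at every matrix level gives complete positivity.

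\emph{Contraction $\Rightarrow$ factorization.} Assume $\|\Phi\|_{\mathrm{cb}}\le 1$. Consider the operator system
$$\mathcal{S}=\left\{\begin{pmatrix}\lambda\mathbb{I}&A\\ B^\dagger&\mu\mathbb{I}\end{pmatrix}:\lambda,\mu\in\mathbb{C},\ A,B\in B(\mathcal{H})\right\}\subset B(\mathcal{H}\oplus\mathcal{H}),$$
and define the unital map
$$\Psi\begin{pmatrix}\lambda\mathbb{I}&A\\ B^\dagger&\mu\mathbb{I}\end{pmatrix}=\begin{pmatrix}\lambda\mathbb{I}&\Phi(A)\\ \Phi(B)^\dagger&\mu\mathbb{I}\end{pmatrix}.$$

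The key step, and the one I expect to be the main obstacle, is to show that $\Psi$ is completely positive on $\mathcal{S}$. I would first establish the scalar-diagonal criterion at a single matrix level. For $P,Q>0$,
$$\begin{pmatrix}P&A\\ A^\dagger&Q\end{pmatrix}\ge0 \iff \|P^{-1/2}AQ^{-1/2}\|_{\mathrm{op}}\le1.$$
At matrix level $n$, the diagonal blocks of an element of $M_n(\mathcal{S})$ are scalar matrices $\Lambda\otimes\mathbb{I}$, and $\Psi$ leaves them unchanged. After a small perturbation making the diagonal blocks invertible, positivity therefore reduces to contractivity of
$$(\Lambda^{-1/2}\otimes\mathbb{I})(\Phi\otimes\mathrm{id}_n)(A)(M^{-1/2}\otimes\mathbb{I}).$$
Because $\Lambda$ and $M$ are scalar matrices, this expression equals $(\Phi\otimes\mathrm{id}_n)\big((\Lambda^{-1/2}\otimes\mathbb{I})A(M^{-1/2}\otimes\mathbb{I})\big)$, so its contractivity follows from $\|\Phi\otimes\mathrm{id}_n\|\le1$. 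Letting the perturbation go to zero handles the positive semidefinite case.

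Once $\Psi$ is unital completely positive, the remaining steps are standard:
\begin{itemize}
\item Extend $\Psi$ by the Arveson extension theorem to a unital completely positive map $\tilde\Psi$ on all of $B(\mathcal{H}\oplus\mathcal{H})$. This uses injectivity of $B(\mathcal{H})$ and holds in infinite dimension.
\item Apply Stinespring to write $\tilde\Psi(X)=U^\dagger\pi(X)U$, with $\pi$ a unital $*$-representation on some Hilbert space $\mathcal{K}$ and $U:\mathcal{H}\oplus\mathcal{H}\to\mathcal{K}$ an isometry.
\item Set $\varpi(A)=\pi(A\otimes\mathbb{I}_2)$, which is a unital $*$-homomorphism because $A\mapsto A\otimes\mathbb{I}_2$ is one. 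Write $E_{12}$ for the upper-right matrix unit.
\item Read off the $(1,2)$ block. Since $A\otimes E_{12}=(A\otimes\mathbb{I}_2)(\mathbb{I}\otimes E_{12})$,
$$\Phi(A)=J_1^\dagger U^\dagger\varpi(A)\,\pi(\mathbb{I}\otimes E_{12})UJ_2,$$
where $J_1,J_2:\mathcal{H}\to\mathcal{H}\oplus\mathcal{H}$ are the inclusions into the two summands.
\item Take $V=UJ_1$ and $W=\pi(\mathbb{I}\otimes E_{12})UJ_2$. Both are contractions, since $U$ is an isometry and $\pi$ is contractive. With these choices $\Phi(\cdot)=V^\dagger\varpi(\cdot)W$.
\end{itemize}

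\emph{The CP clause.} If $\Phi$ is CP, apply Stinespring directly to $\Phi$ to get $\Phi(\cdot)=V^\dagger\varpi(\cdot)V$. For CP maps $\|\Phi\|_{\mathrm{cb}}=\|\Phi(\mathbb{I})\|_{\mathrm{op}}=\|V\|_{\mathrm{op}}^2$, so $\|\Phi\|_{\mathrm{cb}}\le1$ forces $\|V\|_{\mathrm{op}}\le1$. The converse direction (that $V=W$ gives CP) was shown above.
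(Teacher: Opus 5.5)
Your proof is correct. The paper gives no argument of its own here: it only refers to Sec.~1.1.3 of the cited notes. Your proposal therefore supplies a self-contained proof of the Haagerup--Paulsen--Wittstock factorization, via Paulsen's off-diagonal operator system, Arveson's extension theorem and Stinespring's dilation. All three remain valid for infinite-dimensional $\mathcal{H}$, which matters for the bosonic applications in the paper.

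Two of your interpretive choices are in fact corrections to the statement as printed. First, with $V,W:\mathcal{K}\to\mathcal{H}$ the expression $V^\dagger\varpi(\cdot)W$ is not composable. Your $V=UJ_1$ and $W=\pi(\mathbb{I}\otimes E_{12})UJ_2$ are maps $\mathcal{H}\to\mathcal{K}$, which is the correct orientation. Second, ``homomorphism'' must mean $*$-homomorphism. A bounded non-$*$ homomorphism such as $x\mapsto SxS^{-1}$, with $S$ invertible but not unitary, need not be contractive, and then the easy direction fails already for $V=W=\mathbb{I}$.

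The steps you leave implicit are routine:
\begin{itemize}
\item The canonical shuffle identifying $M_n(\mathcal{S})$ with $2\times 2$ block matrices over $M_n(B(\mathcal{H}))$ is a unitary conjugation, so it preserves positivity.
\item You use the forward half of the corner criterion to know that $(\Lambda_\varepsilon^{-1/2}\otimes\mathbb{I})A(M_\varepsilon^{-1/2}\otimes\mathbb{I})$ is a contraction before applying $\Phi\otimes\mathrm{id}_n$. You then use the reverse half to conclude positivity of $\Psi_n$ on the perturbed element.
\end{itemize}
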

\begin{proof}
    Refer to Sec. 1.1.3 of \cite{gupta2015functional}.
\end{proof}
If we restrict $\Phi$ to be trace-preserving as well, we can obtain $V$ and $W$ to be isometries by choosing $\varpi$ to be an embedding into a larger Hilbert space by tensor product. We now restate Corollary 1.1.19 of \cite{gupta2015functional}.
\begin{corollary}
    Any $\Phi\in CB(B(\mathcal{H}))$ can be represented as $\Phi = \phi_1-\phi_2 + i(\phi_3-\phi_4)$ where all $\phi_1,\phi_2,\phi_3,\phi_4\in CP(B(\mathcal{H}))$.
\end{corollary}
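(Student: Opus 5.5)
The plan is to derive this directly from the fundamental factorization lemma stated just above, combined with a polarization identity. Since $\Phi\in CB(B(\mathcal{H}))$ has $\|\Phi\|_{\mathrm{cb}}<\infty$, I would first rescale and set $\Phi' = \Phi/\|\Phi\|_{\mathrm{cb}}$ (the case $\Phi=0$ is trivial), so that $\|\Phi'\|_{\mathrm{cb}}\leq 1$. The factorization lemma then gives a Hilbert space $\mathcal{K}$, a homomorphism $\varpi:B(\mathcal{H})\to B(\mathcal{K})$ and contractions $V,W:\mathcal{K}\to\mathcal{H}$ with $\Phi'(\cdot)=V^\dagger\varpi(\cdot)W$. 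The positive rescaling factor can be absorbed into each CP piece at the end, since positive multiples of CP maps are CP.

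The key algebraic step is to expand the off-diagonal compression $V^\dagger\varpi(A)W$ as a combination of diagonal compressions. I will use the polarization identity
\begin{equation*}
4\,V^\dagger \varpi(A) W = \sum_{k=0}^{3} i^{k}\,(W+i^{k}V)^\dagger\,\varpi(A)\,(W+i^{k}V).
\end{equation*}
To verify it, expand the summand as $i^k(W^\dagger\varpi(A) W + V^\dagger\varpi(A) V) + i^{2k}W^\dagger\varpi(A) V + V^\dagger\varpi(A) W$. Summing over $k$ kills the first two groups, because $\sum_k i^k=\sum_k(-1)^k=0$, and leaves $4V^\dagger\varpi(A) W$.

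Next I define $\psi_k(A)=\tfrac{\|\Phi\|_{\mathrm{cb}}}{4}(W+i^kV)^\dagger\varpi(A)(W+i^kV)$. Then
\begin{equation*}
\Phi=\psi_0-\psi_2+i(\psi_1-\psi_3),
\end{equation*}
so setting $\phi_1=\psi_0$, $\phi_2=\psi_2$, $\phi_3=\psi_1$ and $\phi_4=\psi_3$ gives the claimed form. It remains to show each $\psi_k$ lies in $CP(B(\mathcal{H}))$. Complete boundedness is immediate, with $\|\psi_k\|_{\mathrm{cb}}\le \|\Phi\|_{\mathrm{cb}}\|W+i^kV\|^2_{\mathrm{op}}/4\le\|\Phi\|_{\mathrm{cb}}$.

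The main obstacle is complete positivity, and it hinges on $\varpi$ being a $*$-homomorphism (a representation), not merely an algebra homomorphism. The lemma as stated in the paper says only ``homomorphism,'' so I would invoke the precise form in Sec.~1.1.3 of \cite{gupta2015functional}, the Haagerup--Paulsen--Wittstock factorization, where $\varpi$ is a $*$-representation. Granting this, $\varpi\otimes\mathbb{I}_n$ is again a $*$-homomorphism on $B(\mathcal{H}\otimes\mathbb{C}^n)$. It therefore maps a positive $\hat A=\hat B^\dagger\hat B$ to $(\varpi\otimes\mathbb{I}_n)(\hat B)^\dagger(\varpi\otimes\mathbb{I}_n)(\hat B)\ge0$. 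Compressing by $X\otimes\mathbb{I}_n$, with $X=W+i^kV$, preserves positivity, so $(\psi_k\otimes\mathbb{I}_n)(\hat A)\ge0$ for all $n$, and each $\psi_k$ is CP. If only a plain homomorphism were available, I would fall back on the Wittstock decomposition theorem directly, which yields the same four-term splitting.
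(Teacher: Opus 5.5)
Your argument is correct and is essentially the proof behind the paper's statement: the paper gives no argument of its own but restates Corollary~1.1.19 of \cite{gupta2015functional}, which follows from the Haagerup--Paulsen--Wittstock factorization (with $\varpi$ a $*$-representation, as you rightly insist) by exactly this polarization into the four compressions $(W+i^kV)^\dagger\varpi(\cdot)(W+i^kV)$. One thing to tidy up is a direction typo inherited from the paper's statement of the factorization lemma: for $V^\dagger\varpi(\cdot)W$ to be an operator on $\mathcal{H}$ you need $V,W:\mathcal{H}\to\mathcal{K}$, not $\mathcal{K}\to\mathcal{H}$.
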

While the set of maps we are concerned with is $\CPTP(\mathcal{H})$, the span of this set will be contained in $CB(B(\mathcal{H}))$ making it a more relevant set to study. While complete boundedness is required as a physical constraint, we find that physically motivated learning tasks such as expectation values for output states from an unknown channel, lead us to having to study the even larger space of $B(B(\mathcal{H}))$.

\begin{definition}
    We define a learning task for an unknown CPTP map $\mathcal{E}$ acting on Hilbert space $\mathcal{H}$ as approximating the bilinear function $f_{\mathcal{E}}:B(\mathcal{H})\times L_1(\mathcal{H})\to\mathbb{C}$ over a finite subset of its domain, where $f_{\mathcal{E}}(\hat{O},\hat{\rho}) = \text{Tr}(\hat{O}\mathcal{E}(\hat{\rho})) = \text{Tr}(\hat{\rho}\mathcal{E}^\dagger(\hat{O}))$ for $\hat{\rho}\in L_1(\mathcal{H})$ and $\hat{O}\in B(\mathcal{H})$.
\end{definition}

This learning task is physically motivated since the function to be learned will always be continuous and bounded, because $\hat{O}$ is a bounded operator and $\hat{\rho}$ is trace class. While it may be tempting to choose the set of bounded linear maps that take $L_1(\mathcal{H})\to L_1(\mathcal{H})$ as the space containing the set of CPTP maps, it is important to note that by considering the space $CP(B(\mathcal{H}))$, we can have both $\mathcal{E}$ and $\mathcal{E}^\dagger$ contained in the same space with $\mathcal{E}^\dagger$ always being a unital map.

\begin{remark}
   For Banach spaces $E$ and $F$, we have the following observations:
   \begin{equation*}
       \begin{aligned}
           B(E,F^*) &= \{\text{all bounded linear maps from $E$ to $F^*$}\}\\
           &\simeq \mathrm{Bil}(E\times F) = \{\text{all bounded bilinear forms on $E\times F$}\}\\
           &\simeq (E\hat{\otimes}_\pi F)^*.
       \end{aligned}
   \end{equation*}
\end{remark}
\begin{corollary}
    The space of bilinear functions $B(\mathcal{H})\otimes L_1(\mathcal{H})\to \mathbb{C}$ is isomorphic to $B(B(\mathcal{H}))$ as well as $(B(\mathcal{H})\hat{\otimes}_\pi L_1(\mathcal{H}))^*$.
\end{corollary}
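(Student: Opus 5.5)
The plan is to specialize the abstract chain of isomorphisms in the preceding Remark to $E=B(\mathcal{H})$ and $F=L_1(\mathcal{H})$, and then identify $F^*$ concretely. The Remark gives isometric isomorphisms
\[
B\big(B(\mathcal{H}),L_1(\mathcal{H})^*\big)\simeq \mathrm{Bil}\big(B(\mathcal{H})\times L_1(\mathcal{H})\big)\simeq \big(B(\mathcal{H})\hat{\otimes}_\pi L_1(\mathcal{H})\big)^*.
\]
Here bounded bilinear forms carry the norm $\|b\|=\sup_{\|\hat O\|_{\mathrm{op}}\le1,\|\hat\rho\|_1\le1}|b(\hat O,\hat\rho)|$. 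The second isomorphism is the universal property of the projective tensor product, which I take as given. It therefore only remains to show
\[
B\big(B(\mathcal{H}),L_1(\mathcal{H})^*\big)\simeq B(B(\mathcal{H})),
\]
where the right side carries the norm $\|\cdot\|_{\mathrm{super}}$.

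The key step is trace duality: the map $\hat X\mapsto \mathrm{Tr}(\hat X\,\cdot)$ is an isometric isomorphism $B(\mathcal{H})\to L_1(\mathcal{H})^*$. This is consistent with the Remark following the definition of $L_1(\mathcal{H})$. I would argue it directly as follows.

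For isometry, $|\mathrm{Tr}(\hat X\hat\rho)|\le\|\hat X\|_{\mathrm{op}}\|\hat\rho\|_1$. Testing on rank-one operators $\hat\rho=|\psi\rangle\langle\phi|$ with unit vectors gives $\sup|\langle\phi|\hat X|\psi\rangle|=\|\hat X\|_{\mathrm{op}}$.

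For surjectivity, take a bounded functional $\ell$ on $L_1(\mathcal{H})$. The sesquilinear form $(\phi,\psi)\mapsto\ell(|\psi\rangle\langle\phi|)$ is bounded by $\|\ell\|$, so Riesz representation yields a unique $\hat X\in B(\mathcal{H})$ with $\ell(|\psi\rangle\langle\phi|)=\langle\phi|\hat X|\psi\rangle$. The two functionals $\ell$ and $\mathrm{Tr}(\hat X\,\cdot)$ then agree on finite-rank operators. Since finite-rank operators are $\|\cdot\|_1$-dense in $L_1(\mathcal{H})$ (truncate the singular value decomposition), they agree everywhere by continuity.

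With trace duality in hand, the desired correspondence is explicit. Send $\Phi\in B(B(\mathcal{H}))$ to the bilinear form $b_\Phi(\hat O,\hat\rho)=\mathrm{Tr}(\hat\rho\,\Phi(\hat O))$. Conversely, given a bounded $b$, for each $\hat O$ the functional $\hat\rho\mapsto b(\hat O,\hat\rho)$ lies in $L_1(\mathcal{H})^*$, so it equals $\mathrm{Tr}(\hat\rho\,\hat X_{\hat O})$ for a unique $\hat X_{\hat O}$ with $\|\hat X_{\hat O}\|_{\mathrm{op}}\le\|b\|\,\|\hat O\|_{\mathrm{op}}$. Set $\Phi(\hat O)=\hat X_{\hat O}$. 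Linearity of $\Phi$ follows from the uniqueness in trace duality.

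These two maps are mutually inverse. They are also norm-preserving, because
\[
\|\Phi\|_{\mathrm{super}}=\sup_{\hat O}\|\Phi(\hat O)\|_{\mathrm{op}}=\sup_{\hat O,\hat\rho}|\mathrm{Tr}(\hat\rho\,\Phi(\hat O))|=\|b_\Phi\|,
\]
where the middle equality is the isometry part of trace duality. As a consistency check, the learning function $f_{\mathcal{E}}(\hat O,\hat\rho)=\mathrm{Tr}(\hat\rho\,\mathcal{E}^\dagger(\hat O))$ corresponds under this identification to $\Phi=\mathcal{E}^\dagger$ (applied to $\hat O^\dagger$ with the adjoint convention of the definition).

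The main obstacle I expect is the infinite-dimensional trace duality, specifically the density of finite-rank operators in $L_1(\mathcal{H})$ and the well-definedness of $\mathrm{Tr}(\hat X\hat\rho)$ for trace-class $\hat\rho$. I would handle both by working with the singular value decomposition of $\hat\rho$. Everything after that step is formal bookkeeping on top of the Remark.
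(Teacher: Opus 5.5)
Your proposal is correct and follows the paper's route: set $E=B(\mathcal{H})$ and $F=L_1(\mathcal{H})$ in the preceding Remark, then use $L_1(\mathcal{H})^*\cong B(\mathcal{H})$ to rewrite $B\big(B(\mathcal{H}),L_1(\mathcal{H})^*\big)$ as $B(B(\mathcal{H}))$. The difference is that you prove trace duality, in particular its surjectivity via the Riesz representation and density of finite-rank operators, which the paper only asserts through the continuity of $\mathrm{Tr}(\hat{O}\,\cdot)$.
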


Note that while we have an operational understanding for $L_1(\mathcal{H})^*$ to be isomorphic to $B(\mathcal{H})$, the dual space $B(\mathcal{H})^*$ has no such direct interpretation since it is not isomorphic to $L_1(\mathcal{H})$.

\begin{lemma}
    Consider learning over a subset of $L_1(\mathcal{H})$ that is contained within the range of a finite-rank idempotent operator $\hat{P}\in L_1(\mathcal{H})$ (it is trace class since it is finite rank). There exists a trace-class operator $\hat{\sigma}_P\in L_1(\mathcal{H}^{\otimes 2})$ such that for all $\mathcal{E}\in \CPTP(\mathcal{H})$
    \begin{equation}
        f_{\mathcal{E}}(\hat{O},\hat{\rho}) = \text{Tr}((\hat{O}\otimes\hat
        \rho^T)(\mathcal{E}\otimes\mathbb{I})(\hat{\sigma}_P)),
    \end{equation}
    where the transposition of $\hat{\rho}$ is taken over a basis that diagonalizes $\hat{P}$. 
    \\
    If $\hat{P}=\mathbb{I}$, the operator $\hat{\sigma}_P$ is no longer trace class. In this situation the operator $(\mathcal{E}\otimes \mathbb{I})(\sigma_{\mathbb{I}})$ defines a bounded linear functional over the projective tensor product space $B(\mathcal{H})\otimes_\pi L_1(\mathcal{H})$ as can be seen by the function above which is guaranteed to be bounded when $\mathcal{E}(\hat{\rho})\in L_1(\mathcal{H})$ for all $\hat{\rho}\in L_1(\mathcal{H})$.
\end{lemma}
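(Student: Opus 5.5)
The plan is to build $\hat{\sigma}_P$ as an unnormalized maximally entangled operator supported on the range of $\hat{P}$. We then verify the identity by a finite computation, and handle $\hat{P}=\mathbb{I}$ by a limiting argument over finite-rank truncations.

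\emph{Finite-rank case.} Let $r=\mathrm{rank}(\hat{P})$. Since $\hat{P}$ is a finite-rank idempotent, choose vectors $\{\ket{e_j}\}_{j=1}^r$ spanning its range and dual vectors $\{\ket{f_j}\}_{j=1}^r$ with $\langle f_j|e_k\rangle=\delta_{jk}$ and $\hat{P}=\sum_j |e_j\rangle\langle f_j|$. When $\hat{P}$ is an orthogonal projection, $f_j=e_j$ is an orthonormal basis of the range, extended to a basis of $\mathcal{H}$; this is the basis "diagonalizing $\hat{P}$" in which the transpose is taken. I would first write the argument in this orthogonal case and then remark that the oblique case only replaces bras by the dual vectors. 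Define
\begin{equation*}
\hat{\sigma}_P=\sum_{j,k=1}^r |e_j\rangle\langle e_k|\otimes|e_j\rangle\langle e_k| .
\end{equation*}
This is rank one, equal to $|\Psi_P\rangle\langle\Psi_P|$ with $\ket{\Psi_P}=\sum_j\ket{e_j}\ket{e_j}$, so it is trace class with $\|\hat{\sigma}_P\|_1=r$.

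Next comes the direct computation. By linearity of $\mathcal{E}$ and because the sum is finite,
\begin{equation*}
\text{Tr}\big[(\hat{O}\otimes\hat{\rho}^T)(\mathcal{E}\otimes\mathbb{I})(\hat{\sigma}_P)\big]=\sum_{j,k}\text{Tr}\big[\hat{O}\,\mathcal{E}(|e_j\rangle\langle e_k|)\big]\langle e_k|\hat{\rho}^T|e_j\rangle .
\end{equation*}
We have $\langle e_k|\hat{\rho}^T|e_j\rangle=\langle e_j|\hat{\rho}|e_k\rangle$, and $\sum_{jk}\langle e_j|\hat{\rho}|e_k\rangle |e_j\rangle\langle e_k|=\hat{P}\hat{\rho}\hat{P}$. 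Hence the right-hand side equals $\text{Tr}[\hat{O}\,\mathcal{E}(\hat{P}\hat{\rho}\hat{P})]$. Reading "contained within the range of $\hat{P}$" as $\hat{P}\hat{\rho}\hat{P}=\hat{\rho}$, this is $f_{\mathcal{E}}(\hat{O},\hat{\rho})$. All traces are finite because $\hat{O}$ is bounded and $\mathcal{E}(|e_j\rangle\langle e_k|)\in L_1(\mathcal{H})$.

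\emph{Case $\hat{P}=\mathbb{I}$.} With an orthonormal basis of an infinite-dimensional $\mathcal{H}$, the truncations $\hat{\sigma}_N=\hat{\sigma}_{P_N}$ (with $P_N$ the projection onto the first $N$ basis vectors) have $\|\hat{\sigma}_N\|_1=N\to\infty$. Moreover they do not converge in trace norm, which shows $\hat{\sigma}_{\mathbb{I}}$ is not trace class. To give it meaning, I would define the pairing as the limit
\begin{equation*}
\lim_{N\to\infty}\text{Tr}\big[(\hat{O}\otimes\hat{\rho}^T)(\mathcal{E}\otimes\mathbb{I})(\hat{\sigma}_N)\big]=\lim_{N\to\infty}\text{Tr}\big[\hat{O}\,\mathcal{E}(P_N\hat{\rho}P_N)\big].
\end{equation*}
The key facts are:
\begin{itemize}
\item $P_N\hat{\rho}P_N\to\hat{\rho}$ in trace norm for every $\hat{\rho}\in L_1(\mathcal{H})$, by the standard approximation of trace-class operators by finite-rank compressions.
\item $\mathcal{E}$ is a trace-norm contraction on $L_1(\mathcal{H})$, being positive and trace preserving on Hermitian parts, and bounded on general trace-class operators via splitting into Hermitian parts.
\end{itemize}
So the limit exists, equals $f_{\mathcal{E}}(\hat{O},\hat{\rho})$, and satisfies $|f_{\mathcal{E}}(\hat{O},\hat{\rho})|\le 2\|\hat{O}\|_{\mathrm{op}}\|\hat{\rho}\|_1$. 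The result is a bounded bilinear form on $B(\mathcal{H})\times L_1(\mathcal{H})$. By the Remark preceding the lemma, $\mathrm{Bil}(E\times F)\simeq(E\hat{\otimes}_\pi F)^*$, so it defines a bounded linear functional on the projective tensor product, which is the claimed interpretation of $(\mathcal{E}\otimes\mathbb{I})(\hat{\sigma}_{\mathbb{I}})$.

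The main obstacle is conceptual rather than computational. First, the precise meaning of "transpose in a basis diagonalizing $\hat{P}$" when $\hat{P}$ is oblique: I would handle it by the biorthogonal pair, taking $\hat{\sigma}_P=\sum_{jk}|e_j\rangle\langle e_k|\otimes|\bar f_j\rangle\langle\bar f_k|$ with the transpose defined relative to that pair. Second, making the $\hat{P}=\mathbb{I}$ object rigorous as a functional rather than an operator. The trace-norm continuity of $\mathcal{E}$ together with the finite-rank approximation of $\hat{\rho}$ is the step that carries that part.
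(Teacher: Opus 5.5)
Your proof is correct. The finite-rank part uses the paper's construction. The paper takes the spectral decomposition $\hat{P}=\sum_i|e_i\rangle\langle e_i|$, sets $\hat{\sigma}_P=\sum_{i,j}|e_i\rangle\langle e_j|\otimes|e_i\rangle\langle e_j|$, and says the rest follows assuming $\mathcal{E}$ is continuous on $B(\mathcal{H})$. Your explicit reduction of the pairing to $\text{Tr}[\hat{O}\,\mathcal{E}(\hat{P}\hat{\rho}\hat{P})]$ is the omitted verification. It also shows that only linearity of $\mathcal{E}$ is needed there, since the sum is finite.

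For $\hat{P}=\mathbb{I}$ you take a genuinely different route.

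\emph{The paper's route.} It specializes to $m$ bosonic modes and writes $\hat{\sigma}_{\mathbb{I}}=\pi^{-m}\int d^{2m}\beta\,\hat{D}(\beta)\otimes\hat{D}(\beta^*)$, the continuous Bell POVM element of Eq.~\eqref{eq:bellPOVMboson}. It then gets the identity from the characteristic-function expansion of $\hat{\rho}$ and its uniqueness up to null sets. This ties $\hat{\sigma}_{\mathbb{I}}$ to the phase-space Bell measurement the learning protocols actually use. However, it implicitly pairs $\hat{O}$ with $\mathcal{E}(\hat{D}(\beta))$ for the non-trace-class $\hat{D}(\beta)$, which only makes sense distributionally.

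\emph{Your route.} The truncation argument only ever applies $\mathcal{E}$ to finite-rank operators. It works for any separable $\mathcal{H}$ and any basis, and derives the needed trace-norm continuity from positivity and trace preservation rather than assuming it. It also gives the explicit bound $|f_{\mathcal{E}}(\hat{O},\hat{\rho})|\le 2\|\hat{O}\|_{\mathrm{op}}\|\hat{\rho}\|_1$, which makes the identification with $(B(\mathcal{H})\hat{\otimes}_\pi L_1(\mathcal{H}))^*$ immediate. The factor $2$ can be dropped: via a Stinespring dilation, a CPTP map is a trace-norm contraction on all of $L_1(\mathcal{H})$.

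One small fix to your aside on oblique idempotents. Take $\hat{\sigma}_P=\sum_{j,k}|e_j\rangle\langle e_k|\otimes|\bar f_j\rangle\langle\bar f_k|$, with both the bar and the transpose taken in one fixed orthonormal basis. Then the pairing evaluates to $\text{Tr}[\hat{O}\,\mathcal{E}(\hat{P}\hat{\rho}\hat{P}^\dagger)]$. So the right condition is $\hat{P}\hat{\rho}=\hat{\rho}$ (range of $\hat{\rho}$ inside range of $\hat{P}$), not $\hat{P}\hat{\rho}\hat{P}=\hat{\rho}$. This case is beyond what the paper claims anyway.
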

\begin{proof}
    Consider that the operator $\hat{P}$ has the following spectral decomposition
    \begin{equation}
        \hat{P} = \sum_{i=1}^{r}|e_i\rangle\langle e_i|.
    \end{equation}
    We can then define
    \begin{equation}
        \hat{\sigma}_P = \sum_{i,j=1}^{r}|e_i\rangle\langle e_j|\otimes|e_i\rangle\langle e_j|,
    \end{equation}
    and the rest of this follows under the assumption that $\mathcal{E}$ is a continuous map over $B(\mathcal{H})$. In the special case $\hat{P}=\mathbb{I}$, $\hat{\sigma}_P$ corresponds to an unnormalized maximally entangled state. For the case of an infinite-dimensional Hilbert space of $m$ bosonic modes, this operator takes the form
    \begin{equation}
        \hat{\sigma}_{\mathbb{I}} = \frac{1}{\pi^m}\int d^{2m}\beta\hat{D}(\beta)\otimes\hat{D}(\beta^*),
    \end{equation}
    which is a POVM element of a continuous POVM set (see Eq.~\eqref{eq:bellPOVMboson}) but is not actually a bounded operator and lies in the dual space $(B(\mathcal{H})\hat{\otimes}_\pi L_1(\mathcal{H}))^*$. In this situation, using the characteristic function of $\hat{\rho}$ and its uniqueness up to a measure-zero function, the equality holds trivially.
\end{proof}

To be able to learn this function over all possible input states $\hat{\rho}$, one method is to obtain a description of the operator $\int d^{2m}\beta\mathcal{E}(\hat{D}(\beta))\otimes\hat{D}(\beta^*)$ which is not a bounded operator. We now consider an altered learning route that does not attempt learning for arbitrary inputs.

\begin{remark}
    For $\mathcal{E}\in \CPTP(\mathcal{H})$, the linear functional defined by $g_{\mathcal{E},\hat{\sigma}}:B(\mathcal{H}^{\otimes 2})\to\mathbb{C}$ is continuous and bounded when defined as
    \begin{equation}
        g_{\mathcal{E},\hat{\sigma}}(\cdot) = \text{Tr}((\cdot) (\mathcal{E}\otimes\mathbb{I})(\hat{\sigma})),
    \end{equation}
    where $\hat{\sigma}\in L_1(\mathcal{H}^{\otimes 2})$. Since $B(\mathcal{H}^{\otimes 2})$ is the dual of $L_1(\mathcal{H}^{\otimes 2})$, the restriction of $\hat{\sigma}$ to the trace class is important for boundedness of the functional.
\end{remark}

\begin{definition}
    We define a reduced learning task for an unknown channel $\mathcal{E}\in CB(B(\mathcal{H}))$ as learning the linear function $g_{\mathcal{E},\hat{\sigma}}$ for a particular choice of $\hat{\sigma}$.
\end{definition}
We consider Choi-state learning since in a weak sense this might be the best we can do for learning an arbitrary channel. Channel learning protocols generally can be framed as finding some particular set of evaluations for $g_{\mathcal{E},\hat{\sigma}}$ since this captures the essence of finding expectation values of some operator over the output states from the channel for chosen input states.

\subsection{Measure theory for describing quantum measurements}\label{app:measure_theory}
In this subsection, we offer a concise introduction to a few concepts in measure theory from \cite{bogachev2007measure}. We also restate the concept of a POVM using this language adapted from \cite{hall2013quantum}.
\begin{definition}
   An algebra of sets $\Sigma_S$ is a class of subsets of a fixed set $S$ (referred to as a space) that satisfies:
   \begin{itemize}
       \item Both $S$ and the empty set $\varnothing$ belong to $\Sigma_S$.
       \item If $A,B\in\Sigma_S$, $A\cup B, A\cap B, A\setminus B$ are all also in $\Sigma_S$.
   \end{itemize}
   Further, if for every sequence of sets $A_n\in\Sigma_S$ we also have $\bigcup_{n=1}^{\infty}A_n\in\Sigma_S$, the algebra is referred to as a $\sigma$-algebra. A pair $(S,\Sigma_S)$, where $\Sigma_S$ is a $\sigma$-algebra of $S$, forms a measurable space.
\end{definition}
An easy example of a $\sigma$-algebra is the power set of the set $S$.

\begin{definition} A countably additive set function $\mu:\Sigma_S\to[0,1]$ for measurable space $(S,\Sigma_S)$ is referred to as a probability measure if it satisfies:
\begin{itemize}
    \item $\mu(S) = 1$ and $\mu(\varnothing) = 0$.
    \item For pairwise disjoint sets $A_n\in \Sigma_S$ with $\bigcup_{n=1}^\infty A_n \in \Sigma_S$, we have
    $$\mu\left(\bigcup_{n=1}^\infty A_n\right) = \sum_{n=1}^{\infty}\mu(A_n).$$
    \item $\mu$ is continuous at $\varnothing$ which is to say that for every sequence $A_n\in\Sigma_S$ with $A_{n+1}\subset A_n$ for all $n\in \mathbb{N}$ and $\bigcap_{n=1}^{\infty}A_n = \varnothing$, we have $\lim_{n\to\infty}\mu(A_n) = 0$.
    \item $\mu$ is continuous from below, which is to say that for every sequence $A_n\in \Sigma_S$ with $A_{n}\subset A_{n+1}$ for all $n\in \mathbb{N}$ with $\bigcup_{n=1}^{\infty}A_n \in \Sigma_S$, we have
    $$\mu\left(\bigcup_{n=1}^\infty A_n\right) = \lim _{n\to\infty}\mu(A_n).$$
\end{itemize}
If we further assume space $S$ to have an integrable volume element $ds$ centered around a particular point $s\in S$, we define the infinitesimal element $d\mu(s) = \mu(ds)$. 
\end{definition}

Measures that take on unbounded values can also be defined; however, we restrict our discussion to probability measures which by construction are finite measures. We will make use of the above definition of a probability measure to describe both discrete and continuous probability measures. Note that it is safe to assume only the space $S$ to be isomorphic to $\mathbb{R}^K$ for some $K\geq 1$ since through the use of Dirac delta measures we can capture discrete probability measures over the continuous space of $\mathbb{R}^K$. Since we will be specifically working in the space of $\mathbb{R}^K$, we now introduce the concept of a Borel $\sigma$-algebra.

\begin{definition}
    The Borel $\sigma$-algebra of $\mathbb{R}^K$, denoted by $\mathrm{Bor}(\mathbb{R}^K)$, is the $\sigma$-algebra generated by the open subsets of $\mathbb{R}^K$. The sets in $\mathrm{Bor}(\mathbb{R}^K)$ are referred to as Borel sets.
\end{definition}

\begin{definition}
    Let $\mu$ and $\nu$ be two finite positive-valued measures over measurable space $(S,\Sigma_S)$.
    \begin{itemize}
        \item The measure $\nu$ is called absolutely continuous with respect to $\mu$ if $\nu(A) = 0$ for all $A\in \Sigma_S$ with $\mu(A) = 0$. The notation used for this is $\nu\ll \mu$.
        \item The measure $\nu$ is called singular with respect to $\mu$ if there exists a set $\Omega\in \Sigma_S$ such that $\mu(\Omega) = 0$ and $\nu(S\setminus\Omega) = 0$. The notation used for this is $\nu\perp \mu$.
    \end{itemize}
\end{definition}

\begin{lemma}[Adapted from Theorems 3.2.3 and 5.8.8 of \cite{bogachev2007measure}]\label{thm:radonniko} Given two probability measures $\mu$ and $\nu$ defined on the Borel algebra of $\mathbb{R}^K$, $\nu$ can be written as the sum of two measures $\nu = \nu_{\mathrm{ac}} + \nu_{\mathrm{s}}$ where $\nu_{\mathrm{ac}}\ll \mu$ and $\nu_{\mathrm{s}}\perp \mu$. Defining the function $g:\mathbb{R}^K\to\mathbb{R}$
    \begin{equation}
        g(x) =\lim_{r\to0}\frac{\nu(B(x,r))}{\mu(B(x,r))} ,
    \end{equation}
    where $B(x,r)$ is the ball centered around $x\in \mathbb{R}^K$ with radius $r$. This function is integrable with respect to $d\mu(x)$ and is finite $\mu$-almost everywhere. As a shorthand we will refer to the function $g$ by $d\nu/d\mu$ (also known as the Radon--Nikodym derivative).
\end{lemma}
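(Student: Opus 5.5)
The plan is to split the statement into two parts: existence of the Lebesgue decomposition, and the identification of the ball-ratio limit $g$ with the density of $\nu_{\mathrm{ac}}$. Both are classical for finite Borel measures on $\mathbb{R}^K$, so I would reduce everything to the Radon--Nikodym theorem and the Besicovitch differentiation theorem. Both measures are finite, hence Radon on $\mathbb{R}^K$, and the argument uses no structure beyond that.

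\emph{Decomposition.} Let $\lambda=\mu+\nu$, which is a finite measure with $\mu,\nu\ll\lambda$. By the Radon--Nikodym theorem, write $d\mu=a\,d\lambda$ and $d\nu=b\,d\lambda$ with $a,b\ge 0$. Set $\Omega=\{a=0\}$, and define $\nu_{\mathrm{s}}=\nu|_{\Omega}$ and $\nu_{\mathrm{ac}}=\nu|_{S\setminus\Omega}$. Then $\mu(\Omega)=\int_\Omega a\,d\lambda=0$, so $\nu_{\mathrm{s}}\perp\mu$. Next, $\nu_{\mathrm{ac}}=f\,\mu$ with $f=(b/a)\mathbf{1}_{\{a>0\}}$, which gives $\nu_{\mathrm{ac}}\ll\mu$. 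Moreover $f\in L^1(\mu)$, since $\int f\,d\mu=\nu_{\mathrm{ac}}(\mathbb{R}^K)\le 1$.

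\emph{Differentiation.} Let $\operatorname{supp}\mu$ denote the support of $\mu$. Its complement is a $\mu$-null open set, so for $\mu$-a.e.\ $x$ we have $\mu(B(x,r))>0$ for all $r>0$, and the ratio defining $g$ is well defined. Define the upper and lower derivatives $\overline{D}_\mu\nu$ and $\underline{D}_\mu\nu$ as the $\limsup$ and $\liminf$ of the ratio as $r\to0$. The core tool is the maximal inequality obtained from the Besicovitch covering theorem: for every Borel $A$ and every $t>0$,
\begin{equation*}
\mu\bigl(A\cap\{\overline{D}_\mu\nu>t\}\bigr)\le \frac{C_K}{t}\,\nu(U)\quad\text{for any open }U\supseteq A,
\end{equation*}
where $C_K$ depends only on $K$.

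I then apply this inequality to the two pieces of $\nu$ in turn.
\begin{enumerate}
\item For $\nu_{\mathrm{s}}$, take $A=\Omega$, which is $\nu_{\mathrm{s}}$-null in the complement sense. By outer regularity of $\nu_{\mathrm{s}}$, choose open sets $U\supseteq A'$ for compact $A'\subseteq\Omega^c$ with $\nu_{\mathrm{s}}(U)$ arbitrarily small. This shows $\overline{D}_\mu\nu_{\mathrm{s}}=0$ for $\mu$-a.e.\ $x$.
\item For $\nu_{\mathrm{ac}}=f\mu$, approximate $f$ in $L^1(\mu)$ by a continuous compactly supported $h$; such $h$ are dense because $\mu$ is Radon. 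For $h$ the ball averages converge to $h(x)$ trivially. The maximal inequality applied to $|f-h|\mu$ then gives
\begin{equation*}
\mu\Bigl(\bigl\{\limsup_{r\to0}\bigl|\tfrac{1}{\mu(B(x,r))}\textstyle\int_{B(x,r)}f\,d\mu-f(x)\bigr|>t\bigr\}\Bigr)\le \frac{C}{t}\,\|f-h\|_{L^1(\mu)},
\end{equation*}
which can be made arbitrarily small.
\end{enumerate}
Adding the two results, the limit exists and $g=f+0=d\nu_{\mathrm{ac}}/d\mu$ for $\mu$-a.e.\ $x$. Consequently $g$ is $\mu$-integrable, and $\int g\,d\mu\le1$ by the bound on $f$ above (Fatou's lemma would also give this directly). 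It follows that $g$ is finite $\mu$-a.e.

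\emph{Main obstacle.} The main obstacle is the maximal inequality itself. Because $\mu$ is an arbitrary probability measure, possibly non-doubling or with atoms, the Vitali covering lemma is unavailable. I would therefore invoke the Besicovitch covering theorem on $\mathbb{R}^K$, which extracts from any family of centered balls a subfamily covering $A$ with overlap at most $C_K$, and sum $\nu$ over that subfamily. This is exactly where the restriction to $\mathbb{R}^K$ with centered balls is used, and it is consistent with the paper's earlier remark that every outcome space may be taken to be $\mathbb{R}^K$.
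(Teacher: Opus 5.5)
Your proposal is correct and follows essentially the same route as the paper. The paper only cites Bogachev's differentiation theorem (Theorem 5.8.8), whose standard proof is the decomposition-plus-Besicovitch-maximal-inequality argument you reconstruct, and it adds the remark that the singular part matters only on a $\mu$-null set; your singular-part step is what makes that remark rigorous, since balls centred at points of $\Omega^c$ can still pick up $\nu_{\mathrm{s}}$-mass.

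One slip to fix: the set fed into the maximal inequality there should be $\Omega^c$ (or compact subsets of it), not $\Omega$. Since $\nu_{\mathrm{s}}(\Omega^c)=0$, outer regularity gives open $U\supseteq\Omega^c$ with $\nu_{\mathrm{s}}(U)$ arbitrarily small, and together with $\mu(\Omega)=0$ this yields $\overline{D}_\mu\nu_{\mathrm{s}}=0$ $\mu$-a.e.
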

\begin{proof}
    This is in essence the Radon--Nikodym theorem. We refer the reader to the proof of Thm. 5.8.8 in \cite{bogachev2007measure} for a proof of the above lemma. The fact that $g(x)$ is finite $\mu$-almost everywhere is a consequence of the fact that the singular component $\nu_{\mathrm{s}}$ will be the only contribution to $g$ being ill-defined; however, this is exactly over the part of the space that is measure zero for $\mu$.
\end{proof}

\begin{remark}
    The concept of a likelihood ratio between two probability distributions is exactly the same as the Radon--Nikodym derivative as defined above. Assuming both distributions $\mu$ and $\nu$ are concentrated over a discrete set of points, the above definition reduces exactly to the more familiar likelihood ratio. If both $d\mu(x) = p_1(x)dx$ and $d\nu(x) = p_2(x)dx$, then the above definition reduces to $p_2(x)/p_1(x)$ as well.
\end{remark}

\begin{definition}
    Consider two probability measures $\mu$ and $\nu$ and a convex function $f:\mathbb{R}\to\mathbb{R}$ that satisfies $f(1) = 0$ and continuity of $f(0) = \lim_{t\to 0^+}f(t)$. The $f$-divergence between the two measures $\mu$ and $\nu$ is defined as follows:
    \begin{equation}
        D_f(\nu\|\mu) = \int d\mu f\left(\frac{d\nu}{d\mu}\right).
    \end{equation}
    As an example, for $f(x) = \frac{1}{2}|x-1|$, $D_f(\nu\|\mu)$ is the total variation distance between the two probability measures, and for $f(x)= (x-1)^2$, $D_f(\nu\|\mu)$ is the $\chi^2$-divergence.
\end{definition}

The fact that we can define the quantity $D_f(\nu\|\mu)$ in terms of an integral comes from being able to define $d\nu/d\mu$ as finite $\mu$-almost everywhere.

For the sake of completeness, we also define a positive-operator-valued measure (POVM), which can be used to describe a quantum measurement.
\begin{definition}
 For a measurable space $(S,\Sigma_S)$ with $\Sigma_S$ being a Borel algebra, we define a positive-operator-valued measure over Hilbert space $\mathcal{H}$ as a mapping $F:\Sigma_S\to B(\mathcal{H})$ such that:
 \begin{itemize}
 \item For all Borel sets $E\in \Sigma_S$, $F(E)$ is a positive bounded operator.
     \item $F(S) = \mathbb{I}$ and $F(\varnothing) = 0$.
     \item For a sequence of pairwise disjoint Borel sets $E_n\in \Sigma_S$, $F(\bigcup_{n=1}^\infty E_n) = \sum_{n=1}^{\infty} F(E_n)$.
 \end{itemize}
 For all states $\hat{\rho}\in D(\mathcal{H})$, we can define a measure $\mu_{\hat{\rho}}:\Sigma_S \to [0,1]$ such that for all Borel sets $E\in\Sigma_S$, $\mu_{\hat{\rho}}(E) = \text{Tr}\left[\hat{\rho}F(E)\right]$ and it can be seen that this measure is a valid probability measure.
\end{definition}

Note that here the operator $F(E)$ is always bounded, but can be obtained through the integration of unbounded operators over a volume element. As an example, we can examine the case of homodyne measurements for the quadrature $\hat{q}$. The operators take the form $\hat{\Pi}_{x} = |x\rangle\langle x|$, where $\ket{x}$ is technically contained in the rigged Hilbert space for the bosonic mode and as such the operator $\hat{\Pi}_{x}$ is unbounded. However the operator $\hat{F}(E) = \int_{x\in E} dx\hat{\Pi}_x$ for all Borel sets $E\subseteq \mathbb{R}$ will be a bounded operator. A similar property occurs for the POVM defined in Eq.~\eqref{eq:bellPOVMboson} and so, strictly speaking, we should not be allowed to define the probability using the trace definition, since we may no longer be dealing with trace-class operators. Within the settings we work in, this will not pose a problem; however, we will treat this case carefully when proving lower bounds for arbitrary learning schemes which are allowed to have access to these nonphysical yet mathematically well-defined POVMs.

\section{Sample complexity bounds for the Choi-state learning problem}
\subsection{$c$-copy learning protocols over channels}\label{app:cm_protocols}
We extend the tree based description for learning protocols that have $c$-copy access from \cite{chen2024optimaltradeoffsestimatingpauli} to be able to tell apart CPTP maps $\mathcal{E}_{\mathbf{u},\mathbf{v}}$ and $\mathcal{E}_0$ with each measurement step using $(\mathcal{E}_{\mathbf{u},\mathbf{v}}\otimes \mathbb{I})^{\otimes c}$ and $(\mathcal{E}_0\otimes\mathbb{I})^{\otimes c}$. We will then use the calculation of $\chi^2$-divergence from \cite{ller2025infinitehierarchymulticopyquantum} and extend it to our hypothesis test to establish channel discrimination lower bounds. To account for the fact that each step is allowed measurements with outcomes that are continuously valued, we define POVMs using the measure-theoretic description established in the previous subsection.

To start, we define the following channel discrimination problems.
\begin{problem}[Many-one discrimination]\label{prob:many}
    Consider the exact same setting as many-one channel discrimination with revelation (Problem~\ref{prob:manyrevel}) with the exception of there being no revelation of the parameters $\mathbf{u}$ and $\mathbf{v}$.
\end{problem}

The setting of the discrimination tasks in \cite{chen2024optimaltradeoffsestimatingpauli,ller2025infinitehierarchymulticopyquantum} follows that of Problem~\ref{prob:many} where there is no revelation. Our setting is that of Problem~\ref{prob:manyrevel} (similar to \cite{PhysRevLett.133.230604,coroi2025exponentialadvantagecontinuousvariablequantum}) to accommodate situations where the possible set of queries is unbounded (such as learning bosonic states). We will show that this difference will be inconsequential as far as the lower bounds we derive are concerned. We now establish a representation for learning protocols that use $c$-copy access to the channel by examining the probability distribution of the state of the classical memory at the final step of the learning protocol.

\begin{definition}[Path representation for $c$-copy learning protocols]\label{def:path_rep_cM}
    Consider a path described by $s_{0:T}$ which is a vector of length $T+1$ with each element $s_t$ contained in a measurable space $(S,\Sigma_S)$ with Borel $\sigma$-algebra $\Sigma_S$. (We assume without loss of generality that $S$ is isomorphic to some $\mathbb{R}^K$). Each path begins at $s_0$ which is a fixed value and the notation $s_{i:j}$ is the sequence from the $i$th term in the vector $s_{0:T}$ to the $j$th term including both ends.
    
    Given an unknown CPTP map $\mathcal{E}$ with input Hilbert space $\mathcal{H}_{\mathrm{in}}$ and output Hilbert space $\mathcal{H}_{\mathrm{out}}$ along with a fixed ancilla space $\mathcal{H}_{\mathrm{anc}}$, consider the learner to be allowed to use any POVM from the complete set of POVMs defined over $\sigma$-algebra $\Sigma_S$ for a measurable space $S$ which acts over Hilbert space $\mathcal{H}_{\mathrm{out}}^{\otimes c}\otimes \mathcal{H}_{\mathrm{anc}}$. A $c$-copy learning protocol will be defined as follows using a set of paths $\mathcal{T}$ each of length $T$.
    \begin{itemize}
        \item The value of $s_1$ is the measurement outcome of applying POVM $F_{s_0}:\Sigma_S\to B(\mathcal{H}_{\mathrm{out}}^{\otimes c}\otimes  \mathcal{H}_{\mathrm{anc}})$ on the state $(\mathcal{E}^{\otimes c}\otimes\mathbb{I})(\hat{\sigma}_{s_0})$ where $F_{s_0}$ and $\hat{\sigma}_{s_0}$ are the initial POVM and input state respectively. This defines a measure $\mu_{0,\mathcal{E}}:\Sigma_S\to [0,1]$ as the probability measure for the outcome $s_1$ based on this measurement. The infinitesimal element given by $d\mu_{0,{\mathcal{E}}}(s_1)$ for the space $S$ captures the probability distribution for $s_1$.
        \item At each later step $t\geq 2$, we adaptively prepare a state $\hat{\sigma}_{s_{0:t-1}}\in D(\mathcal{H}_{\mathrm{in}}^{\otimes c}\otimes\mathcal{H}_{\mathrm{anc}})$. Defining an adaptively chosen POVM $F_{s_{0:t-1}}:\Sigma_S\to B(\mathcal{H}^{\otimes c}_{\mathrm{out}}\otimes\mathcal{H}_{\mathrm{anc}})$, this performs measurements on $(\mathcal{E}^{\otimes c}\otimes\mathbb{I}_{\mathrm{anc}})(\hat{\sigma}_{s_{0:t-1}})$. We define a probability measure $\mu_{s_{0:t-1},\mathcal{E}}:\Sigma_S\to[0,1]$ corresponding to the outcome $s_t$.
        \item Along each path, at step $t$ the infinitesimal element $d\mu_{s_{0:t-1},\mathcal{E}}(s_t)$ represents the probability density corresponding to the path having $s_t$ at the $t$th step conditioned on having taken the path $s_{0:t-1}$ up to the $(t-1)$th step.
        \item Each completed path is of length $T$. Considering the space containing all paths $\mathcal{T}$, the volume element defined through the probability measure is given by $d\mu_{\mathcal{T},\mathcal{E}}(s_{1:T}) = \prod_{t=1}^Td\mu_{s_{0:t-1},\mathcal{E}}(s_t)$.
    \end{itemize}
\end{definition}

\begin{remark}
    The above path representation assumes that all the POVMs use the exact same $\sigma$-algebra $\Sigma_S$. At each step there is no loss of generality by assuming that the outcomes are contained in a space isomorphic to some $\mathbb{R}^{K_i}$. Hence without loss of generality, we can simply define the space $S = \mathbb{R}^{\max_{i}K_i}$ over all possible paths and appropriately redefine the measures to accommodate for this. Further, this allows us to simultaneously treat discrete-outcome and continuous-outcome POVMs with the same mathematical framework.
\end{remark}

\begin{lemma}\label{lem:finiteRNder}
    For the two channels defined in the many-one channel discrimination task with revelation (Problem~\ref{prob:manyrevel}), consider a measurable space $(S,\Sigma_S)$ with $\Sigma_S$ being a Borel $\sigma$-algebra and a POVM defined as $F:\Sigma_S\to B(\mathcal{H}_{\mathrm{out}}^{\otimes c}\otimes \mathcal{H}_{\mathrm{anc}})$ where $\mathcal{H}_{\mathrm{anc}}$ is an ancillary Hilbert space. Given an input state $\hat{\sigma}\in D(\mathcal{H}_{\mathrm{in}}^{\otimes c}\otimes\mathcal{H}_{\mathrm{anc}})$ consider the two probability measures obtained by acting with the POVM $F$ on the two states $(\mathcal{E}_{\mathbf{u},\mathbf{v}}^{\otimes c}\otimes\mathbb{I}_{\mathrm{anc}})(\hat{\sigma})$ and $(\mathcal{E}_{0}^{\otimes c}\otimes\mathbb{I}_{\mathrm{anc}})(\hat{\sigma})$ which we refer to as $\mu_{\mathcal{E}_{\mathbf{u},\mathbf{v}}}$ and $\mu_{\mathcal{E}_{0}}$. We have $\mu_{\mathcal{E}_{\mathbf{u},\mathbf{v}}}\ll \mu_{\mathcal{E}_0}$ and the derivative of $\mu_{\mathcal{E}_{\mathbf{u},\mathbf{v}}}$ with respect to $\mu_{\mathcal{E}_0}$ is necessarily finite-valued wherever it is defined.
\end{lemma}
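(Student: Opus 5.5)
The plan is to show operator domination: the output state under $\mathcal{E}_{\mathbf{u},\mathbf{v}}$ is bounded above, as a positive operator, by a constant multiple of the output under $\mathcal{E}_0$. Absolute continuity and a uniform bound on the Radon--Nikodym derivative then follow at once.

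\textbf{Step 1: a single-copy complete-positivity inequality.} I would first compare $\mathcal{E}_{\mathbf{u},\mathbf{v}}$ with the replacement channel $\mathcal{E}_0(\cdot)=\text{Tr}[(\cdot)]\hat{\rho}_0$ on one copy. Since every $\mathcal{E}_{\mathbf{u},\mathbf{v}}$ of Eq.~\eqref{eq:Epar1par2} has all its output supported through the sandwich $\hat{\rho}_0^{1/2}(\cdot)\hat{\rho}_0^{1/2}$, I would aim to show that
\begin{equation*}
\Lambda\,\mathcal{E}_0-\mathcal{E}_{\mathbf{u},\mathbf{v}}
\end{equation*}
is completely positive for some finite $\Lambda$, for instance $\Lambda = 1+\sum_{x\geq 1}\|\hat{K}_{x,\mathbf{u}}\|_{\mathrm{op}}\|\hat{W}_{x,\mathbf{v}}\|_{\mathrm{op}}$.

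The cleanest route is through the Choi-type operator. Acting on $\hat{\sigma}$ in $\mathcal{H}_{\mathrm{in}}\otimes\mathcal{H}_{\mathrm{anc}}$, the map gives
\begin{equation*}
(\mathcal{E}_{\mathbf{u},\mathbf{v}}\otimes\mathbb{I})(\hat{\sigma})=\sum_x \hat{\rho}_0^{1/2}\hat{W}_{x,\mathbf{v}}\hat{\rho}_0^{1/2}\otimes \text{Tr}_{\mathrm{in}}[\hat{\sigma}(\hat{K}_{x,\mathbf{u}}\otimes\mathbb{I})].
\end{equation*}
Writing $\hat{A}=\hat{\rho}_0^{1/2}\otimes\mathbb{I}$, this equals $\hat{A}\,\hat{M}_{\hat{\sigma}}\,\hat{A}^{\dagger}$ with
\begin{equation*}
\hat{M}_{\hat{\sigma}}=\sum_x \hat{W}_{x,\mathbf{v}}\otimes\hat{\tau}_x,\qquad \hat{\tau}_x=\text{Tr}_{\mathrm{in}}[\hat{\sigma}(\hat{K}_{x,\mathbf{u}}\otimes\mathbb{I})].
\end{equation*}
Each $\hat{\tau}_x$ is a bounded conjugate-linear image of $\hat{\sigma}_{\mathrm{anc}}$. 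I would bound $\hat{M}_{\hat{\sigma}}\leq \Lambda\,\mathbb{I}\otimes\hat{\sigma}_{\mathrm{anc}}$ using positivity of the total output together with the operator-norm bounds on $\hat{K}$ and $\hat{W}$. This gives
\begin{equation*}
(\mathcal{E}_{\mathbf{u},\mathbf{v}}\otimes\mathbb{I})(\hat{\sigma})\leq\Lambda\,\hat{\rho}_0\otimes\hat{\sigma}_{\mathrm{anc}}=\Lambda\,(\mathcal{E}_0\otimes\mathbb{I})(\hat{\sigma}).
\end{equation*}

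\textbf{Main obstacle.} Step 1 is where the real difficulty lies. Domination $\hat{\tau}_x\preceq\|\hat{K}_x\|\hat{\sigma}_{\mathrm{anc}}$ is not automatic for a non-positive $\hat{K}_x$. If it fails in the direct form, I would fall back on the measure-and-prepare structure available in all of the paper's instances. There one writes $\mathcal{E}_{\mathbf{u},\mathbf{v}}=\sum_j \text{Tr}[(\cdot)\hat{\Pi}_j]\hat{\rho}_j$ with each $\hat{\rho}_j\leq\Lambda'\hat{\rho}_0$. In that case the inequality $\Lambda'\mathcal{E}_0-\mathcal{E}_{\mathbf{u},\mathbf{v}}$ being CP is immediate, since $\sum_j\hat{\Pi}_j=\mathbb{I}$.

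\textbf{Step 2: tensorize.} Since $\Lambda\mathcal{E}_0-\mathcal{E}_{\mathbf{u},\mathbf{v}}$ and $\mathcal{E}_{\mathbf{u},\mathbf{v}}$ are both CP, I would telescope
\begin{equation*}
\Lambda^c\mathcal{E}_0^{\otimes c}-\mathcal{E}_{\mathbf{u},\mathbf{v}}^{\otimes c}=\sum_{k}(\Lambda\mathcal{E}_0)^{\otimes k}\otimes(\Lambda\mathcal{E}_0-\mathcal{E}_{\mathbf{u},\mathbf{v}})\otimes\mathcal{E}_{\mathbf{u},\mathbf{v}}^{\otimes c-k-1}.
\end{equation*}
Every term on the right is a tensor product of CP maps, so the left side is CP. Applying it tensored with $\mathbb{I}_{\mathrm{anc}}$ to $\hat{\sigma}$ yields the operator inequality
\begin{equation*}
\hat{\rho}_{\mathbf{u},\mathbf{v}}\leq\Lambda^c\hat{\rho}_0^{(c)},
\end{equation*}
where these denote the two output states.

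\textbf{Step 3: pass to measures.} For every Borel set $E$, $F(E)\geq 0$ is bounded. Hence
\begin{equation*}
\mu_{\mathcal{E}_{\mathbf{u},\mathbf{v}}}(E)=\text{Tr}[F(E)\hat{\rho}_{\mathbf{u},\mathbf{v}}]\leq\Lambda^c\,\mu_{\mathcal{E}_0}(E).
\end{equation*}
Here one uses $\text{Tr}[F(E)(\Lambda^c\hat{\rho}_0^{(c)}-\hat{\rho}_{\mathbf{u},\mathbf{v}})]\geq0$, which holds because the difference is positive trace class and $F(E)$ is positive bounded. This setwise bound immediately gives $\mu_{\mathcal{E}_{\mathbf{u},\mathbf{v}}}\ll\mu_{\mathcal{E}_0}$. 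Via the ball-ratio formula of Lemma~\ref{thm:radonniko}, it also gives $d\mu_{\mathcal{E}_{\mathbf{u},\mathbf{v}}}/d\mu_{\mathcal{E}_0}\leq\Lambda^c$ wherever the limit is defined, so the derivative is finite-valued. Working only with $F(E)$ for Borel $E$ keeps everything bounded, which sidesteps the unbounded density operators such as homodyne projectors.
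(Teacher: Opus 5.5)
Your architecture is sound and is in substance what the paper proves: you dominate the $\mathcal{E}_{\mathbf{u},\mathbf{v}}$ output by a constant times the $\mathcal{E}_0$ output, deduce $\mu_{\mathcal{E}_{\mathbf{u},\mathbf{v}}}(E)\le\text{const}\cdot\mu_{\mathcal{E}_0}(E)$ for Borel $E$, and then read off the ball-ratio bound. Steps 2 and 3 are correct, given Step 1.

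The gap is Step 1, which carries all of the content and which you do not establish. ``Positivity of the total output'' does not by itself give an \emph{upper} bound $\hat M_{\hat\sigma}\le\Lambda\,\mathbb{I}\otimes\hat\sigma_{\mathrm{anc}}$. Your fallback assumes a measure-and-prepare structure with $\hat\rho_j\le\Lambda'\hat\rho_0$, which is not among the hypotheses of Problem~\ref{prob:manyrevel}. That fallback does cover every concrete instance the paper later feeds into Lemma~\ref{lem:master}, since all of them are measure-and-prepare with outputs $\hat\rho_0\pm\hat\rho_0^{1/2}\hat W\hat\rho_0^{1/2}$. The lemma, however, is stated for arbitrary bounded $\hat K_{x,\mathbf{u}},\hat W_{x,\mathbf{v}}$ making Eq.~\eqref{eq:Epar1par2} a channel. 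That class includes non-entanglement-breaking maps, for example the identity channel on $m$ qudits with $\hat\rho_0=\mathbb{I}/d^m$, $\hat K_x=\hat D_{d,m}(\mathbf{q},\mathbf{p})$, $\hat W_x=\hat D_{d,m}(\mathbf{q},\mathbf{p})^\dagger$. As written, you have proved a weaker statement.

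The missing idea is the Schmidt/transpose identity the paper uses. Write $\hat\sigma=\sum_k w_k|\psi_k\rangle\langle\psi_k|$ and Schmidt-decompose each $|\psi_k\rangle$. Let $V_k$ map the input Schmidt vectors to the ancilla ones. Then for any bounded $\hat K$,
\begin{equation*}
\text{Tr}_{\mathrm{in}^{\otimes c}}\left[|\psi_k\rangle\langle\psi_k|(\hat K\otimes\mathbb{I})\right]=V_k\sqrt{\hat\sigma^{(k)}_{\mathrm{in}}}\,\hat K^{T^{(k)}}\sqrt{\hat\sigma^{(k)}_{\mathrm{in}}}\,V_k^\dagger ,
\end{equation*}
with the transpose taken in the Schmidt basis. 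You can apply this directly to the $c$-copy input, or to one copy with an arbitrary ancilla, after which your telescoping goes through unchanged. It gives
\begin{equation*}
\mu_{\mathcal{E}_0}(E)=\sum_k w_k\text{Tr}[F_{\hat\sigma^{(k)}}(E)],\qquad \mu_{\mathcal{E}_{\mathbf{u},\mathbf{v}}}(E)=\sum_k w_k\text{Tr}[F_{\hat\sigma^{(k)}}(E)X_k].
\end{equation*}
Here $F_{\hat\sigma^{(k)}}(E)$ is $F(E)$ conjugated by $\sqrt{\hat\rho_0^{\otimes c}}\otimes V_k\sqrt{\hat\sigma^{(k)}_{\mathrm{in}}}$, which is positive and trace class. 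The operator $X_k=\sum_{\pmb{x}}\hat W_{\pmb{x},\mathbf{v}}\otimes\hat K^{T^{(k)}}_{\pmb{x},\mathbf{u}}$ is bounded, and its norm does not depend on $k$ because transposes in different bases are unitarily related.

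The bound $|\text{Tr}[AB]|\le\text{Tr}[A]\|B\|_{\mathrm{op}}$ for positive trace-class $A$ then yields $\mu_{\mathcal{E}_{\mathbf{u},\mathbf{v}}}(E)\le\|X\|_{\mathrm{op}}\,\mu_{\mathcal{E}_0}(E)$. This is exactly the domination you need. The sandwich form also resolves your worry about non-positive $\hat K_x$: you bound the whole operator $X$ by its norm rather than trying to dominate each $\hat\tau_x$ separately. Finally, after reordering tensor factors, $\|X\|_{\mathrm{op}}=\|\sum_x\hat W_{x,\mathbf{v}}\otimes\hat K^{T}_{x,\mathbf{u}}\|_{\mathrm{op}}^c\le\Lambda^c$. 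So your proposed $\Lambda$ does work once Step 1 is proved this way.
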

\begin{proof}
We define the following bounded operators 
    \begin{equation}
        \hat{W}_{\pmb{x},\mathbf{v}} = \bigotimes_{i=1}^c\hat{W}_{x_i,\mathbf{v}},\quad\hat{K}_{\pmb{x},\mathbf{u}} = \bigotimes_{i=1}^c\hat{K}_{x_i,\mathbf{u}},
    \end{equation}
and further define the following operators in the ancillary space
    \begin{equation}
        \hat\lambda_{\pmb{x},\mathbf{u},\hat{\sigma}} = \text{Tr}_{\mathrm{in}^{\otimes c}}\left[(\hat{K}_{\pmb{x},\mathbf{u}}\otimes\mathbb{I}_{\mathrm{anc}})\hat{\sigma}\right],
    \end{equation}
    where we use $\text{Tr}_{\mathrm{in}^{\otimes c}}$ to represent tracing out $\mathcal{H}_{\mathrm{in}}^{\otimes c}$. Using these, we can write
    \begin{equation}
        (\mathcal{E}_{\mathbf{u},\mathbf{v}}^{\otimes c}\otimes\mathbb{I}_{\mathrm{anc}})(\hat{\sigma})= \sum_{\pmb{x}\in\{0,\dots,l\}^c}\left(\hat{\rho}_0^{1/2}\right)^{\otimes c}\hat{W}_{\pmb{x},\mathbf{v}}\left(\hat{\rho}_0^{1/2}\right)^{\otimes c}\otimes \hat{\lambda}_{\pmb{x},\mathbf{u},\hat{\sigma}},\quad (\mathcal{E}_0^{\otimes c}\otimes\mathbb{I}_{\mathrm{anc}})(\hat{\sigma}) = \hat{\rho}_0^{\otimes c}\otimes\hat{\sigma}_{\mathrm{anc}},
    \end{equation}
    where $\hat{\sigma}_{\mathrm{anc}}$ is the partial trace of $\hat{\sigma}$ in the ancillary space. Consider that the state $\hat{\sigma}$ has the following spectral decomposition
    \begin{equation}
        \hat{\sigma} = \sum_{k}w_k |\psi_k\rangle\langle\psi_k|,
    \end{equation}
    and each state $\ket{\psi_k} \in\mathcal{H}_{\mathrm{in}}^{\otimes c}\otimes\mathcal{H}_{\mathrm{anc}}$ has a Schmidt decomposition given by
    \begin{equation}
        \ket{\psi_k} = \sum_{i=1}^{r_k} \sqrt{q_{k}^{(i)}}\ket{e^{(i)}_k}_{\mathrm{in}^{\otimes c}}\ket{f^{(i)}_k}_{\mathrm{anc}}.
    \end{equation}
 Here, for each $k$, the sets $\cup_{i}\{\ket{e_k^{(i)}}\}$ and $\cup_i\{\ket{f_k^{(i)}}\}$ are orthonormal bases for a rank-$r_k$ subspace of $\mathcal{H}_{\mathrm{in}}^{\otimes c}$ and $\mathcal{H}_{\mathrm{anc}}$ respectively, and hence we can write an isometry $V_k:\mathcal{H}_{\mathrm{in}}^{\otimes c}\to \mathcal{H}_{\mathrm{anc}}$ such that $V_k\ket{e^{(i)}_k} = \ket{f_k^{(i)}}$. We note that from this we have
 \begin{equation}
     \hat{\sigma}_{\mathrm{anc}}^{(k)} = V_k\hat{\sigma}_{\mathrm{in}}^{(k)}V^\dagger_k, \quad \hat{\sigma}_{\mathrm{in}}^{(k)} = \sum_{i=1}^{r_k} q_k^{(i)}\ket{e^{(i)}_k}\bra{e^{(i)}_k},\quad \hat{\sigma}_{\mathrm{anc}} = \sum_{k}w_kV_k\hat{\sigma}_{\mathrm{in}}^{(k)}V^\dagger_k.
 \end{equation}
 In the orthonormal basis defined by $\ket{e^{(i)}_k}$, we define the transposition of the operator $\hat{K}_{\pmb{x},u}$ as  $\hat{K}_{\pmb{x},u}^{T^{(k)}}$. We can observe that the operators $\hat{\lambda}_{\pmb{x},\mathbf{u},\hat{\sigma}}$ simplify as
 \begin{equation}
     \hat{\lambda}_{\pmb{x},\mathbf{u},\hat{\sigma}} = \sum_{k}w_k\left(\sum_{i,j=1}^{r_k}\sqrt{q_k^{(i)}q_k^{(j)}}\bra{e^{(j)}_k}\hat{K}_{\pmb{x},\mathbf{u}}\ket{ e_k^{(i)}}\ket{f^{(i)}_k}\bra{f^{(j)}_k}_{\mathrm{anc}}\right) = \sum_{k}w_kV_k\left(\sqrt{\hat{\sigma}_{\mathrm{in}}^{(k)}}\hat{K}_{\pmb{x},\mathbf{u}}^{T^{(k)}}\sqrt{\hat{\sigma}_{\mathrm{in}}^{(k)}}\right)V^\dagger_k.
 \end{equation}
 Consider a POVM defined by $F:\Sigma_{S}\to B(\mathcal{H}_{\mathrm{out}}^{\otimes c}\otimes\mathcal{H}_{\mathrm{anc}})$ for a measurable space $(S,\Sigma_S)$ with Borel $\sigma$-algebra $\Sigma_{S}$. We define two measures $\mu_{\mathcal{E}_0}$ and $\mu_{\mathcal{E}_{\mathbf{u},\mathbf{v}}}$ obtained by acting this POVM on the states $(\mathcal{E}_0^{\otimes c}\otimes\mathbb{I}_{\mathrm{anc}})(\hat{\sigma})$ and $(\mathcal{E}_{\mathbf{u},\mathbf{v}}^{\otimes c}\otimes\mathbb{I}_{\mathrm{anc}})(\hat{\sigma})$, respectively. For all Borel sets $E\in \Sigma_S$, we have
    \begin{align}
        \mu_{\mathcal{E}_0}(E) &= \sum_{k}w_k\text{Tr}\left[F(E)(\hat{\rho}_0^{\otimes c}\otimes V_k\hat{\sigma}_{\mathrm{in}}^{(k)}V^\dagger_k)\right],\\
        \mu_{\mathcal{E}_{\mathbf{u},\mathbf{v}}}(E) &= \sum_{\pmb{x}\in\{0,\dots,l\}^c}\left(\sum_kw_k\text{Tr}\left[F(E)\left(\sqrt{\hat{\rho}_0^{\otimes c}}\hat{W}_{\pmb{x},\mathbf{v}}\sqrt{\hat{\rho}_0^{\otimes c}}\otimes V_k\sqrt{\hat{\sigma}_{\mathrm{in}}^{(k)}}\hat{K}_{\pmb{x},\mathbf{u}}^{T^{(k)}}\sqrt{\hat{\sigma}_{\mathrm{in}}^{(k)}}V^\dagger_k\right)\right]\right).
    \end{align}
    For each value of $k$, we now define a positive-operator-valued measure $F_{\hat{\sigma}^{(k)}}:\Sigma_S\to L_1(\mathcal{H}_{\mathrm{out}}^{\otimes c}\otimes\mathcal{H}_{\mathrm{anc}})$ that satisfies
    \begin{equation}\label{eq:POVMwithstate}
        F_{\hat{\sigma}^{(k)}}(E) = \left(\sqrt{\hat{\rho}_0^{\otimes c}}\otimes V_k\sqrt{\hat{\sigma}^{(k)}_{\mathrm{in}}}\right)^\dagger F(E)\left(\sqrt{\hat{\rho}_0^{\otimes c}}\otimes V_k\sqrt{\hat{\sigma}^{(k)}_{\mathrm{in}}}\right),
    \end{equation}
    for all Borel sets $E\in\Sigma_S$. Observe that the above operator is always trace class as a consequence of conjugation by the operators $\sqrt{\hat{\sigma}^{(k)}_{\mathrm{in}}}$ and $\sqrt{\hat{\rho}_0^{\otimes c}}$. From this, we obtain a much simpler expansion of the two relevant probability measures as
    \begin{align}
        \mu_{\mathcal{E}_0^{\otimes c}}(E) &= \sum_{k}w_k\text{Tr}\left[F_{\hat{\sigma}^{(k)}}(E)\right],\\
        \mu_{\mathcal{E}_{\mathbf{u},\mathbf{v}}^{\otimes c}}(E) &= \sum_{k}w_k\sum_{\pmb{x}\in\{0,\dots,l\}^c}\text{Tr}\left[F_{\hat{\sigma}^{(k)}}(E)\left(\hat{W}_{\pmb{x},\mathbf{v}}\otimes \hat{K}_{\pmb{x},\mathbf{u}}^{T^{(k)}}\right)\right].
    \end{align}
    Without loss of generality, we assume that the space $S$ is isomorphic to $\mathbb{R}^K$ for some $K$. Using Lemma~\ref{thm:radonniko} and assuming uniform boundedness of the ratio, we can define the Radon--Nikodym derivative of $\mu_{\mathcal{E}_{\mathbf{u},\mathbf{v}}}$ with respect to $\mu_{\mathcal{E}_{0}}$ at a particular point $s\in S$ as 
    \begin{equation}
        \frac{d\mu_{\mathcal{E}_{\mathbf{u},\mathbf{v}}}}{d\mu_{\mathcal{E}_{0}}}(s) = \lim_{r\to 0}\frac{\mu_{\mathcal{E}_{\mathbf{u},\mathbf{v}}}(B(s,r))}{\mu_{\mathcal{E}_0}(B(s,r))},
    \end{equation}
    where $B(s,r)$ is a ball of radius $r$ centered around $s$. We will now show that the ratio in the limit is uniformly bounded. For all $r>0$ the ball is a Borel set, hence the operator $F_{\hat{\sigma}}(B(s,r))$ will be trace class. Hence we can write 
    \begin{align}
        \frac{\mu_{\mathcal{E}_{\mathbf{u},\mathbf{v}}}(B(s,r))}{\mu_{\mathcal{E}_0}(B(s,r))}  
            & =\frac{\sum_{k}w_k\text{Tr}\left[F_{\hat{\sigma}^{(k)}}(B(s,r))\left(\sum_{\pmb{x}\in\{0,\dots,l\}^c}\hat{W}_{\pmb{x},\mathbf{v}}\otimes \hat{K}_{\pmb{x},\mathbf{u}}^{T^{(k)}}\right)\right]}{\sum_{k}\text{Tr}\left[F_{\hat{\sigma}^{(k)}}(B(s,r))\right]}\\
        &\leq \frac{\sum_{k}w_k\text{Tr}\left[F_{\hat{\sigma}^{(k)}}(B(s,r))\right]\left\|\sum_{\pmb{x}\in\{0,\dots,l\}^c}\hat{W}_{\pmb{x},\mathbf{v}}\otimes \hat{K}_{\pmb{x},\mathbf{u}}^{T^{(k)}}\right\|_{\mathrm{op}}}{\sum_k w_k\text{Tr}\left[F_{\hat{\sigma}^{(k)}}(B(s,r))\right]},\\
        &\leq \left\|\sum_{\pmb{x}\in\{0,\dots,l\}^c}\hat{W}_{\pmb{x},\mathbf{v}}\otimes \hat{K}_{\pmb{x},\mathbf{u}}^{T}\right\|_{\mathrm{op}}
    \end{align}
    where the transposition is defined in some fixed basis choice. Here we use the fact that for a positive trace class operator $A$ and bounded operator $B$, $\text{Tr}\left[AB\right] \leq \text{Tr}\left[A\right]\|B\|_{\mathrm{op}}$. Note that since there is always a unitary that can map the bases $\ket{e^{(i)}_{k}}$ to $\ket{e^{(i)}_{k}}$, this same unitary if conjugated around $\hat{K}_{\pmb{x},u}^{T^{(k)}}$ would map it to $\hat{K}_{\pmb{x},u}^{T^{(k')}}$. Hence all the operator norms in the weighed sum are equal giving the final reduction to the expression where the choice of basis for transposition doesn't matter. Due to the boundedness of the operators, the derivative will also be finite wherever the limit is defined as a consequence of every sequence along this limit having all of its terms bounded by the above operator norm. This upper bound further shows that $\mu_{\mathcal{E}_0}(E) = 0$ implies $\mu_{\mathcal{E}_{\mathbf{u},\mathbf{v}}}(E) = 0$ for any Borel set $E\in \Sigma_S$ which gives $\mu_{\mathcal{E}_{\mathbf{u},\mathbf{v}}}\ll \mu_{\mathcal{E}_0}$.
\end{proof}

\begin{lemma}[Le Cam's two-point method]\label{lem:lecam}
    The probability that the learning protocol represented by the tree $\mathcal{T}$ correctly solves the many-one discrimination task is upper bounded by 
    \begin{equation}
        p_{\mathrm{succ}} \leq \frac{1}{2}\left(1+\frac{1}{2}\int _{s_{0:T}\in \mathcal{T}} d\mu_{\mathcal{T},\mathcal{E}_0}(s_{0:T})\left|\mathbb{E}_{\mathbf{u},\mathbf{v}}\left[\frac{d\mu_{\mathcal{T},{\mathcal{E}_{\mathbf{u},\mathbf{v}}}}}{d\mu_{\mathcal{T},\mathcal{E}_0}}(s_{0:T})\right] - 1\right|\right),
    \end{equation}
    and the probability that it correctly solves the many-one discrimination with revelation task is upper bounded by
    \begin{equation}
        p_{\mathrm{succ},\mathrm{revel}} \leq\frac{1}{2}\left(1+\frac{1}{2}\mathbb{E}_{\mathbf{u},\mathbf{v}}\left[\int _{s_{0:T}\in \mathcal{T}} d\mu_{\mathcal{T},\mathcal{E}_0}(s_{0:T})\left|\frac{d\mu_{{\mathcal{T},\mathcal{E}_{\mathbf{u},\mathbf{v}}}}}{d\mu_{\mathcal{T},\mathcal{E}_0}}(s_{0:T}) - 1\right|\right]\right).
    \end{equation}
\end{lemma}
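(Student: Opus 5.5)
The argument is the standard Le Cam reduction of success probability to total variation distance, applied once to a mixture of alternatives and once to each fixed alternative. Throughout, write $P_0 = \mu_{\mathcal{T},\mathcal{E}_0}$ and $P_{\mathbf{u},\mathbf{v}} = \mu_{\mathcal{T},\mathcal{E}_{\mathbf{u},\mathbf{v}}}$ for the path measures of Definition~\ref{def:path_rep_cM}.

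\textbf{Well-definedness of the likelihood ratio.} We first check that $dP_{\mathbf{u},\mathbf{v}}/dP_0$ exists and equals the product of stepwise ratios,
$$\frac{dP_{\mathbf{u},\mathbf{v}}}{dP_0}(s_{0:T})=\prod_{t=1}^T \frac{d\mu_{s_{0:t-1},\mathcal{E}_{\mathbf{u},\mathbf{v}}}}{d\mu_{s_{0:t-1},\mathcal{E}_0}}(s_t).$$
At every node $s_{0:t-1}$ the learner applies a fixed input state and POVM to the $c$ channel copies plus ancilla. Lemma~\ref{lem:finiteRNder} then gives absolute continuity of the conditional measures and finiteness of their derivative. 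The product formula follows by iterated integration (a Fubini/Tonelli argument on the kernel structure), so $P_{\mathbf{u},\mathbf{v}}\ll P_0$.

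\textbf{Without revelation.} The learner's data $s_{0:T}$ is distributed either as $P_0$ or as the mixture $\bar P=\mathbb{E}_{\mathbf{u},\mathbf{v}}P_{\mathbf{u},\mathbf{v}}$, each with prior probability $1/2$. For any decision rule $\phi(s_{0:T})\in\{0,1\}$, the success probability is
$$\tfrac12\bigl(P_0(\phi=0)+\bar P(\phi=1)\bigr)=\tfrac12\bigl(1+\bar P(A)-P_0(A)\bigr),\qquad A=\{\phi=1\}.$$
This is at most $\tfrac12(1+\mathrm{TV}(\bar P,P_0))$. The total variation distance equals $\tfrac12\int dP_0\,|d\bar P/dP_0-1|$, and by Tonelli $d\bar P/dP_0=\mathbb{E}_{\mathbf{u},\mathbf{v}}[dP_{\mathbf{u},\mathbf{v}}/dP_0]$ holds $P_0$-a.e. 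This gives the first bound.

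\textbf{With revelation.} The decision rule may now depend on $(\mathbf{u},\mathbf{v})$: write $\phi(s_{0:T},\mathbf{u},\mathbf{v})$. Importantly, the protocol itself, and hence the measures, do not depend on $(\mathbf{u},\mathbf{v})$, since these are revealed only after all channel uses. The learner's view is therefore the joint law of $(\mathbf{u},\mathbf{v},s_{0:T})$, which is either $p(\mathbf{u})p(\mathbf{v})P_0$ or $p(\mathbf{u})p(\mathbf{v})P_{\mathbf{u},\mathbf{v}}$. Conditioning on $(\mathbf{u},\mathbf{v})$, the success probability is
$$\mathbb{E}_{\mathbf{u},\mathbf{v}}\Bigl[\tfrac12\bigl(P_0(\phi=0)+P_{\mathbf{u},\mathbf{v}}(\phi=1)\bigr)\Bigr]\le \mathbb{E}_{\mathbf{u},\mathbf{v}}\Bigl[\tfrac12\bigl(1+\mathrm{TV}(P_{\mathbf{u},\mathbf{v}},P_0)\bigr)\Bigr].$$
Writing each total variation distance through the Radon--Nikodym derivative gives the second bound. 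Measurability of $(\mathbf{u},\mathbf{v})\mapsto P_{\mathbf{u},\mathbf{v}}(A)$ is needed to take this expectation; it holds because the operators depend measurably on the parameters.

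\textbf{Main obstacle.} The only delicate step is the measure-theoretic one: the path measure must be a valid product of kernels, and the total variation distance must equal the $L^1$ norm of the likelihood-ratio deviation, even for continuous-outcome POVMs. Both reduce to Lemma~\ref{lem:finiteRNder} together with Lemma~\ref{thm:radonniko} applied on $S^T\cong\mathbb{R}^{KT}$.
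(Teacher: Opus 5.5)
Your proposal is correct and is essentially the argument the paper relies on. The paper gives no separate proof of Lemma~\ref{lem:lecam}; it treats it as the standard bound $p_{\mathrm{succ}}\le\frac12(1+\mathrm{TV})$ adapted from Lemma~6 of \cite{chen2024optimaltradeoffsestimatingpauli}, using the mixture over $(\mathbf{u},\mathbf{v})$ in the unrevealed case and the $(\mathbf{u},\mathbf{v})$-conditioned total variation averaged afterwards in the revealed case, exactly as you do. Your appeal to Lemma~\ref{lem:finiteRNder} for absolute continuity, which lets the total variation be written through the Radon--Nikodym derivative, supplies a measure-theoretic step the paper leaves implicit.
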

We point out that our success probability definition follows that of the probability of the correct guess given equal prior likelihood of either hypothesis which differs from the definition used in \cite{chen2024optimaltradeoffsestimatingpauli}. The definition used in \cite{chen2024optimaltradeoffsestimatingpauli} corresponds to the probability that the guess is correct with the promise that it is one of two hypotheses. This difference has no real effect on any results since the crucial aspect is only the upper bounding of the total variation distance. We now adapt another lemma from \cite{chen2024optimaltradeoffsestimatingpauli} to fit our definition of success probability. The upper bound on the success probability for Problem~\ref{prob:many} shown above is upper bounded by the corresponding bound for Problem~\ref{prob:manyrevel}, since revelation will always offer more information that can benefit the discrimination task.

This is also pointed out in Lemma 6 of \cite{chen2024optimaltradeoffsestimatingpauli}, which states two inequalities, the second of which is related to revelation. As it turns out, the upper bound introduced in Lemma 7 of \cite{chen2024optimaltradeoffsestimatingpauli} uses the second, weaker upper bound from Lemma 6 of the same work, which is valid for many-one channel discrimination with revelation (Problem~\ref{prob:manyrevel}). We state this result in detail for our chosen convention in the following lemma.

\begin{lemma}[One-sided bound suffices for Le Cam, adapted from \cite{chen2024optimaltradeoffsestimatingpauli}, Lemma 6]\label{lem:1sidedLeCam}
  The probability $p_{\mathrm{succ},\mathrm{revel}}$ that a $c$-copy learning protocol with set of paths $\mathcal{T}$ correctly solves the many-one discrimination-with-revelation task in Problem~\ref{prob:manyrevel} is upper bounded by (for all $\beta\in(0,1)$)
  \begin{equation}
      2(p_{\mathrm{succ},\mathrm{revel}}-1/2)\leq \Pr_{\mathbf{u},\mathbf{v}\sim p(\mathbf{u})p(\mathbf{v}), s_{0:T}\sim \mu_{\mathcal{T},\mathcal{E}_0}}\left[\frac{d\mu_{\mathcal{T},\mathcal{E}_{\mathbf{u},\mathbf{v}}}}{d\mu_{\mathcal{T},\mathcal{E}_0}}(s_{0:T}) \leq (1-\beta)\right] + \beta.
  \end{equation}
\end{lemma}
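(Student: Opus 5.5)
We prove Lemma~\ref{lem:1sidedLeCam} by starting from the revelation bound of Lemma~\ref{lem:lecam}. Write $L_{\mathbf{u},\mathbf{v}}(s_{0:T}) = \frac{d\mu_{\mathcal{T},\mathcal{E}_{\mathbf{u},\mathbf{v}}}}{d\mu_{\mathcal{T},\mathcal{E}_0}}(s_{0:T})$. By Lemma~\ref{lem:finiteRNder}, applied step by step and multiplied along the path, $\mu_{\mathcal{T},\mathcal{E}_{\mathbf{u},\mathbf{v}}}\ll\mu_{\mathcal{T},\mathcal{E}_0}$. Hence $L_{\mathbf{u},\mathbf{v}}$ is a finite, nonnegative function defined $\mu_{\mathcal{T},\mathcal{E}_0}$-almost everywhere, and it integrates to one:
\begin{equation*}
\int d\mu_{\mathcal{T},\mathcal{E}_0}\,L_{\mathbf{u},\mathbf{v}} = \mu_{\mathcal{T},\mathcal{E}_{\mathbf{u},\mathbf{v}}}(\mathcal{T}) = 1 .
\end{equation*}
There is no singular part to worry about. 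Lemma~\ref{lem:lecam} therefore gives
\begin{equation*}
2(p_{\mathrm{succ},\mathrm{revel}}-1/2)\le \mathbb{E}_{\mathbf{u},\mathbf{v}}\Big[\tfrac12\int d\mu_{\mathcal{T},\mathcal{E}_0}\,|L_{\mathbf{u},\mathbf{v}}-1|\Big].
\end{equation*}

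The key step is the standard one-sided identity for total variation. Since $\int d\mu_{\mathcal{T},\mathcal{E}_0}(L-1)=0$, the positive and negative parts of $L-1$ have equal integrals. Thus
\begin{equation*}
\tfrac12\int d\mu_{\mathcal{T},\mathcal{E}_0}|L-1| = \int d\mu_{\mathcal{T},\mathcal{E}_0}(1-L)_+ .
\end{equation*}
We then split the domain of the right-hand side into two regions.
\begin{itemize}
\item On $\{L\le 1-\beta\}$, the integrand satisfies $(1-L)_+\le 1$, since $L\ge 0$. This region contributes at most $\Pr_{s\sim\mu_{\mathcal{T},\mathcal{E}_0}}[L\le 1-\beta]$.
\item On $\{L>1-\beta\}$, the integrand satisfies $(1-L)_+<\beta$. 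This region contributes at most $\beta$.
\end{itemize}

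Taking the expectation over $\mathbf{u},\mathbf{v}\sim p(\mathbf{u})p(\mathbf{v})$ and combining the joint probability over $(\mathbf{u},\mathbf{v},s_{0:T})$ yields the claimed inequality. Measurability of $L_{\mathbf{u},\mathbf{v}}$ jointly in the parameters and the path, needed for Fubini, follows from its construction as a product of the per-step limits in Lemma~\ref{lem:finiteRNder}.

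The only real subtlety is the measure-theoretic one: making sure the path-level Radon--Nikodym derivative is the product of the per-step conditional derivatives, so that it is finite and integrates to one. I would handle this with the chain rule for Radon--Nikodym derivatives on the product structure $d\mu_{\mathcal{T},\mathcal{E}}=\prod_t d\mu_{s_{0:t-1},\mathcal{E}}(s_t)$, using absolute continuity at each step from Lemma~\ref{lem:finiteRNder} and induction on $t$. Everything else is elementary.
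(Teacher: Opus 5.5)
Your proof is correct and is essentially the paper's argument. The paper likewise starts from the revelation bound of Lemma~\ref{lem:lecam}, rewrites $\tfrac12\int|L-1|\,d\mu_{\mathcal{T},\mathcal{E}_0}$ as $\int\max(0,1-L)\,d\mu_{\mathcal{T},\mathcal{E}_0}$ via $\max(0,x)=(|x|+x)/2$, and splits at the threshold $L\le 1-\beta$; your one addition is to make explicit the step the paper leaves implicit, namely that $\int L\,d\mu_{\mathcal{T},\mathcal{E}_0}=1$ (absolute continuity from Lemma~\ref{lem:finiteRNder}, propagated along paths), which is what makes the identity $\tfrac12\int|L-1| = \int(1-L)_+$ valid.
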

\begin{proof}
    By Lemma~\ref{lem:lecam}, and using the fact that $\forall{x}\in\mathbb{R}$, $\max(0,x) = (|x| + x)/2$ we get
    \begin{align}
        2(p_{\mathrm{succ},\mathrm{revel}}-1/2)&\leq \frac{1}{2}\int_{s_{0:T}\in\mathcal{T}}d\mu_{\mathcal{T},\mathcal{E}_0}(s_{0:T})\mathbb{E}_{\mathbf{u},\mathbf{v}}\left|\frac{d\mu_{\mathcal{T},\mathcal{E}_{\mathbf{u},\mathbf{v}}}}{d\mu_{\mathcal{T},\mathcal{E}_{0}}}(s_{0:T}) - 1\right|\\
        &=\mathbb{E}_{\mathbf{u},\mathbf{v}}\left[\int_{s_{0:T}\in\mathcal{T}}d\mu_{\mathcal{T},\mathcal{E}_0}(s_{0:T})\max\left(0,1  - \frac{d\mu_{\mathcal{T},\mathcal{E}_{\mathbf{u},\mathbf{v}}}}{d\mu_{\mathcal{T},\mathcal{E}_{0}}}(s_{0:T})\right)\right]\\
        &= \mathbb{E}_{\mathbf{u},\mathbf{v}}\mathbb{E}_{s_{0:T}\sim \mu_{\mathcal{T},\mathcal{E}_0}}\max\left(0,1  - \frac{d\mu_{\mathcal{T},\mathcal{E}_{\mathbf{u},\mathbf{v}}}}{d\mu_{\mathcal{T},\mathcal{E}_{0}}}(s_{0:T})\right)\\
        &\leq\Pr_{\mathbf{u},\mathbf{v}\sim p(\mathbf{u})p(\mathbf{v}), s_{0:T}\sim \mu_{\mathcal{T},\mathcal{E}_0}}\left[\frac{d\mu_{\mathcal{T},\mathcal{E}_{\mathbf{u},\mathbf{v}}}}{d\mu_{\mathcal{T},\mathcal{E}_{0}}}(s_{0:T}) \leq (1-\beta)\right] + \beta.
    \end{align}
\end{proof}

\begin{lemma}\label{lem:martingale}
There exists a constant $z$ for all $\beta\in(0,1/3)$ such that the following statement holds. Suppose $\mathcal{T}$ is the set of paths representation of a $c$-copy learning protocol for the many-one discrimination-with-revelation task defined in Problem~\ref{prob:manyrevel} such that for every partial path $s_{0:t-1}$ (with $t\geq 2$) we have
\begin{equation}
    \mathbb{E}_{\mathbf{u},\mathbf{v}\sim p(\mathbf{u})p(\mathbf{v})}\mathbb{E}_{s_t\sim \mu_{s_{0:t-1},\mathcal{E}_0}}\left[\left(\frac{d\mu_{s_{0:t-1},\mathcal{E}_{\mathbf{u},\mathbf{v}}}}{d\mu_{s_{0:t-1},\mathcal{E}_{0}}}(s_t) - 1\right)^2\right] \leq \Delta, 
\end{equation}
then we have
\begin{equation}
    \Pr_{\mathbf{u},\mathbf{v}\sim p(\mathbf{u})p(\mathbf{v}), s_{0:T}\sim \mu_{\mathcal{T},\mathcal{E}_0}}\left[\frac{d\mu_{\mathcal{T},\mathcal{E}_{\mathbf{u},\mathbf{v}}}}{d\mu_{\mathcal{T},\mathcal{E}_0}}(s_{0:T}) \leq (1-\beta)\right] \leq \beta + z\Delta T,
\end{equation}
which yields
\begin{equation}
    z\Delta T\geq 2(p_{\mathrm{succ,revel}}-1/2) - 2\beta.
\end{equation}
Also note that 
\begin{equation}
\begin{aligned}
    &\mathbb{E}_{\mathbf{u},\mathbf{v}\sim p(\mathbf{u})p(\mathbf{v})}\mathbb{E}_{s_t\sim \mu_{s_{0:t-1},\mathcal{E}_0}}\left[\left(\frac{d\mu_{s_{0:t-1},\mathcal{E}_{\mathbf{u},\mathbf{v}}}}{d\mu_{s_{0:t-1},\mathcal{E}_{0}}}(s_t) - 1\right)^2\right] \\&\leq \max_{F_{s_{0:t-1}},\hat{\sigma}_{s_{0:t-1}}}\mathbb{E}_{\mathbf{u},\mathbf{v}\sim p(\mathbf{u})p(\mathbf{v})}\left[\chi^2(\mu_{s_{0:t-1},\mathcal{E}_{\mathbf{u},\mathbf{v}}}\|\mu_{s_{0:t-1},\mathcal{E}_0})\right].
\end{aligned}
\end{equation}
\end{lemma}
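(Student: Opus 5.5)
The plan is to follow the martingale argument behind Lemma 7 of \cite{chen2024optimaltradeoffsestimatingpauli}, adapted to Radon--Nikodym derivatives.

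\textbf{Setting up the martingale.} Fix $\mathbf{u},\mathbf{v}$. For each step define the conditional ratio
\[
r_t = \frac{d\mu_{s_{0:t-1},\mathcal{E}_{\mathbf{u},\mathbf{v}}}}{d\mu_{s_{0:t-1},\mathcal{E}_0}}(s_t).
\]
By Lemma~\ref{lem:finiteRNder}, every conditional measure under $\mathcal{E}_{\mathbf{u},\mathbf{v}}$ is absolutely continuous with respect to the corresponding one under $\mathcal{E}_0$, and the derivative is finite. Combining this with the product form $d\mu_{\mathcal{T},\mathcal{E}}=\prod_t d\mu_{s_{0:t-1},\mathcal{E}}(s_t)$ of Definition~\ref{def:path_rep_cM}, the chain rule for Radon--Nikodym derivatives gives
\[
\mathrm{LR}:=\frac{d\mu_{\mathcal{T},\mathcal{E}_{\mathbf{u},\mathbf{v}}}}{d\mu_{\mathcal{T},\mathcal{E}_0}}(s_{0:T})=\prod_{t=1}^T r_t
\]
for $\mu_{\mathcal{T},\mathcal{E}_0}$-almost every path. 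Absolute continuity also gives $\mathbb{E}_{s_t\sim\mu_{s_{0:t-1},\mathcal{E}_0}}[r_t]=1$. Hence, under the null hypothesis, $y_t:=r_t-1$ is a martingale difference sequence adapted to the filtration generated by $s_{0:t}$.

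Define $M_T=\sum_t y_t$ and $Q_T=\sum_t y_t^2$. Orthogonality of martingale increments gives $\mathbb{E}[M_T^2]=\mathbb{E}[Q_T]$. Taking the expectation over $\mathbf{u},\mathbf{v}$ and using the tower property with the per-step hypothesis, $\mathbb{E}_{\mathbf{u},\mathbf{v}}\mathbb{E}[Q_T]\le T\Delta$. The hypothesis is stated for $t\ge 2$; I would treat $t=1$ the same way, or assume $\Delta$ also bounds the first step.

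\textbf{Controlling the logarithm.} The elementary inequality I would use is $\log(1+y)\ge y-y^2$, valid for $y\ge -1/2$. Let $B$ be the event that some $y_t<-1/2$. Then $\{y_t<-1/2\}\subseteq\{y_t^2>1/4\}$, so Markov's inequality and a union bound give $\Pr[B]\le 4T\Delta$. On $B^c$ we have $\log\mathrm{LR}\ge M_T-Q_T$. Set $\beta'=-\log(1-\beta)>0$. On $B^c$, the event $\mathrm{LR}\le 1-\beta$ forces either $Q_T\ge\beta'/2$ or $M_T\le-\beta'/2$. Markov's inequality bounds the first by $2T\Delta/\beta'$, and Chebyshev's inequality bounds the second by $4T\Delta/\beta'^2$. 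This gives
\[
\Pr[\mathrm{LR}\le 1-\beta]\le z\Delta T,\qquad z=4+\frac{2}{\beta'}+\frac{4}{\beta'^2},
\]
which is even stronger than the claimed $\beta+z\Delta T$.

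Substituting this into Lemma~\ref{lem:1sidedLeCam} yields $2(p_{\mathrm{succ,revel}}-1/2)\le\beta+\beta+z\Delta T$. This rearranges to the stated inequality $z\Delta T\ge 2(p_{\mathrm{succ,revel}}-1/2)-2\beta$.

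\textbf{The final remark.} For a fixed partial path $s_{0:t-1}$, the quantity
\[
\mathbb{E}_{\mathbf{u},\mathbf{v}}\,\mathbb{E}_{s_t\sim\mu_{s_{0:t-1},\mathcal{E}_0}}\big[(r_t-1)^2\big]
\]
is exactly $\mathbb{E}_{\mathbf{u},\mathbf{v}}\big[\chi^2(\mu_{s_{0:t-1},\mathcal{E}_{\mathbf{u},\mathbf{v}}}\|\mu_{s_{0:t-1},\mathcal{E}_0})\big]$. These measures are induced by the particular adaptive choice $(F_{s_{0:t-1}},\hat\sigma_{s_{0:t-1}})$, so this value is bounded by its maximum over all admissible POVMs and input states.

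\textbf{Main obstacle.} The delicate step is measure-theoretic rather than probabilistic. It requires verifying the chain rule $\mathrm{LR}=\prod_t r_t$ and the identity $\mathbb{E}[r_t\mid s_{0:t-1}]=1$ when the outcome spaces are continuous and the POVMs may be built from unbounded densities. This needs the uniform boundedness established in Lemma~\ref{lem:finiteRNder}, together with joint measurability of the conditional derivatives in $s_{0:t}$, so that Fubini and the tower property apply along the path tree.
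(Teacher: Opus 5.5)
Your proof is correct, but it does not follow the paper's route.

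\textbf{What the paper does.} The paper does not prove the tail bound directly. It forms the process $r_0=(\mathbf{u},\mathbf{v})$, $r_t=s_t$, and sets $Y_t$ to the per-step Radon--Nikodym derivative minus one. It then invokes Lemma 21 of \cite{chen2024optimaltradeoffsestimatingpauli}, a general martingale tail bound for $\prod_t(1+Y_t)$ under $\mathbb{E}[Y_t\mid r_{0:t-1}]\ge-\varsigma$ and $\mathbb{E}[Y_t^2\mid r_{0:t-1}]\le\Delta$, with $\varsigma=0$ and $c_1=c_2=\beta$. That is where the additive $\beta$ comes from. The constant $z=c_3$ is given only implicitly, through auxiliary parameters $\eta_1,\eta_2,\eta_3$.

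\textbf{What you do instead.} You use that $\varsigma=0$ exactly, so that $y_t$ is a genuine martingale difference under the null. You then give a self-contained second-moment argument: truncation via $\log(1+y)\ge y-y^2$ on $y\ge-1/2$, Markov for the truncation event and for $Q_T$, and Chebyshev for $M_T$. This gives an explicit $z=4+2/\beta'+4/\beta'^2$ with no additive $\beta$, slightly stronger than what is stated.

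\textbf{What your route clarifies.} The null path measure does not depend on $(\mathbf{u},\mathbf{v})$. So Tonelli turns the per-node hypothesis, which is only averaged over $(\mathbf{u},\mathbf{v})$, into $\mathbb{E}[y_t^2]\le\Delta$, and that unconditional second moment is all you need. Lemma 21 as quoted conditions on $r_{0:t-1}$, which includes $r_0=(\mathbf{u},\mathbf{v})$. The paper identifies this conditional quantity with the $(\mathbf{u},\mathbf{v})$-averaged one. The per-$(\mathbf{u},\mathbf{v})$ $\chi^2$ is not what the master lemma controls, since its operator-norm bounds rely on the averaging.

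\textbf{What the paper's route buys.} It gives generality, namely nonzero drift $\varsigma$ and sharper tail control. None of this is needed here.

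Your caveat about the $t=1$ step is warranted; the paper's application also needs the bound at every step. The chain rule $\mathrm{LR}=\prod_t r_t$, which you flag as the delicate point, is simply asserted in the paper as well.
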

\begin{proof}
Note that as per our definitions, we have
\begin{equation}
    \frac{d\mu_{\mathcal{T},\mathcal{E}_{\mathbf{u},\mathbf{v}}}}{d\mu_{\mathcal{T},\mathcal{E}_0}}(s_{1:T}) = \prod_{t=1}^T\frac{d\mu_{s_{0:t-1},\mathcal{E}_{\mathbf{u},\mathbf{v}}}}{d\mu_{s_{0:t-1},\mathcal{E}_{0}}}(s_t).
\end{equation}
Consider the set of paths $s_{0:T}$ that is distributed according to the probability measure $\mu_{\mathcal{T},\mathcal{E}_{0}}$. We introduce a new variable $r_0$ which encodes the values of $(\mathbf{u},\mathbf{v})$ and is itself randomly sampled following the distribution $p(\mathbf{u})p(\mathbf{v})$ and then label variables $r_t = s_t$ for $t = 1,\dots T$. Each path can then be seen as a sequence $(r_0,r_1,r_2,\dots)$ which is a stochastic process as a result of it being sampled according to a probability distribution. Notably, the learner only makes use of $s_{0:T}$ and doesn't have access to the sampled values of $\mathbf{u},\mathbf{v}$ since $s_0$ is a dummy variable and isn't related to $r_0$. We now define $Y_t$ as a function of $r_{0:t}$ given by
\begin{equation}
    Y_t(r_{0:t}) = \frac{d\mu_{s_{0:t-1},\mathcal{E}_{r_0}}}{d\mu_{s_{0:t-1},\mathcal{E}_{0}}}(r_t) - 1, \quad L_T = \prod_{t=1}^{T}(1+Y_t).
\end{equation}
Now referring to Lemma 21 from \cite{chen2024optimaltradeoffsestimatingpauli}, we note that for any stochastic process $(r_0,r_1,\dots)$ with random variables $Y_t=Y_t(r_{0:t})\in[-1,\infty)$ that satisfy $\mathbb{E}[Y_t|r_{0:t-1}]\geq -\varsigma$ and $\mathbb{E}[Y_t^2|r_{0:t-1}]\leq \Delta$ for $\varsigma\in[0,1]$ and $\Delta\geq 0$, there exist constants $c_3,c_4>0$ such that for all $T\in\mathbb{N}$, $c_1\in(0,1/3)$ and $c_2>0$ the random variable $L_T = \prod_{t=1}^{T}(1+Y_t)$ satisfies
\begin{equation}\label{eq:lemma21exact}
    \Pr\left[L_T \leq e^{-T\varsigma}(1-c_1)\right] \leq c_2e^{c_4T\varsigma^2} + c_3\Delta T.
\end{equation}
The expectation values here are taken over all possible stochastic sequences $r_{0:T}$. Note that we have this exact same setup, specifically with $\varsigma = 0$ and
\begin{equation}
    \mathbb{E}[Y_t^2|r_{0:t-1}] =  \mathbb{E}_{\mathbf{u},\mathbf{v}\sim p(\mathbf{u})p(\mathbf{v})}\mathbb{E}_{s_t\sim \mu_{s_{0:t-1},\mathcal{E}_0}}\left[\left(\frac{d\mu_{s_{0:t-1},\mathcal{E}_{\mathbf{u},\mathbf{v}}}}{d\mu_{s_{0:t-1},\mathcal{E}_{0}}}(s_t) - 1\right)^2\right] \leq \Delta. 
\end{equation}
The exact values of the constants will be
\begin{equation}
    c_3 = \eta_3^{-2} + \frac{12+3\eta_1^{-2}}{\eta_2} + \frac{1-\eta_3^{-2}}{\eta_1},\quad c_4 = \frac{6\eta_1^2}{(2\eta_2 + 4\eta_1\eta_3/3)^2} ,
\end{equation}
where we choose the constants $\eta_1,\eta_2$ and $\eta_3$ to satisfy
\begin{equation}
    e^{-2\eta_1} = 1-c_1, \quad -\frac{\eta_1^2}{2\eta_2 + 6T\varsigma^2 + 4\eta_1\eta_3/3} \leq  \ln(c_2) + c_4T\varsigma^2,
\end{equation}
to ensure the validity of Eq.~\eqref{eq:lemma21exact}. The reasons for these constants can be found in the proof of Lemma 21 in \cite{chen2024optimaltradeoffsestimatingpauli}. Now, on setting $\varsigma = 0$ and $c_1 = c_2 = \beta$, we obtain the value of $z$ in the stated lemma to equal $c_3$.
\end{proof}
\begin{lemma}[Extended version of Lemma~\ref{lem:master}]
    Any $c$-copy protocol with arbitrarily large ancillary assistance and adaptive state preparation and measurements will require a depth $T = \Omega(1/\Delta)$ to succeed with constant probability above $1/2$ in Problem~\ref{prob:manyrevel} where
 \begin{equation}
 \begin{aligned}
     \Delta &= \Bigg(\sum_{\pmb{x}\in\{0,\dots,l\}^c\setminus\{\pmb{0}\}}\max_{\hat{\sigma}\in D(\mathcal{H}_{\mathrm{in}}^{\otimes c}\otimes\mathcal{H}_{\mathrm{out}}^{\otimes c})}\left\{\mathbb{E}_{\mathbf{u},\mathbf{v}}\left[\text{Tr}\left[\hat{\sigma}\left(\hat{W}_{\pmb{x},\mathbf{v}}\otimes \hat{K}_{\pmb{x},\mathbf{u}}^T\right)\right]^2\right] \right\}^{1/2}\Bigg)^{2}\\
     &\leq \left(\sum_{\pmb{x}\in\{0,\dots,l\}^c\setminus\{\pmb{0}\}}\sqrt{\left\|\mathbb{E}_{\mathbf{v}}\left[\hat{W}_{\pmb{x},\mathbf{v}}^{\otimes 2}\right]\right\|_{\mathrm{op}}\left\|\mathbb{E}_{\mathbf{u}}\left[\hat{K}_{\pmb{x},\mathbf{u}}^{\otimes 2}\right]\right\|_{\mathrm{op}}}\right)^2,
 \end{aligned}
    \end{equation}
    where transposition is defined in some fixed basis choice, and we have defined
    \begin{equation}
        \hat{W}_{\pmb{x},\mathbf{v}} = \bigotimes_{i=1}^c\hat{W}_{x_i,\mathbf{v}},\quad\hat{K}_{\pmb{x},\mathbf{u}} = \bigotimes_{i=1}^c\hat{K}_{x_i,\mathbf{u}}.
    \end{equation}
\end{lemma}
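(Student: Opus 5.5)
By Lemma~\ref{lem:martingale}, it suffices to bound the per-step conditional $\chi^2$ quantity uniformly over all adaptively chosen input states $\hat{\sigma}$ and POVMs $F$. Once $\mathbb{E}_{\mathbf{u},\mathbf{v}}[\chi^2(\mu_{\mathcal{E}_{\mathbf{u},\mathbf{v}}}\|\mu_{\mathcal{E}_0})]\le\Delta$ holds for every choice, the lemma gives $z\Delta T\ge 2(p_{\mathrm{succ,revel}}-1/2)-2\beta$. Choosing $\beta$ small compared with the constant advantage then forces $T=\Omega(1/\Delta)$. The whole proof therefore reduces to a single-step bound.

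\textbf{Single-step likelihood ratio.} Fix $\hat{\sigma}$ and $F$, and reuse the construction in the proof of Lemma~\ref{lem:finiteRNder}. The trace-class operator-valued measure $F_{\hat{\sigma}^{(k)}}$ of Eq.~\eqref{eq:POVMwithstate} expresses both measures in a common form. Under $\mathcal{E}_0$ only the $\pmb{x}=\pmb{0}$ term survives. Hence the Radon--Nikodym derivative minus one is
\[
\frac{d\mu_{\mathcal{E}_{\mathbf{u},\mathbf{v}}}}{d\mu_{\mathcal{E}_0}}(s)-1=\sum_{\pmb{x}\neq\pmb{0}}\frac{d\,\mathrm{Tr}[F_{\hat{\sigma}}(\cdot)(\hat{W}_{\pmb{x},\mathbf{v}}\otimes\hat{K}^T_{\pmb{x},\mathbf{u}})]}{d\,\mathrm{Tr}[F_{\hat{\sigma}}(\cdot)]}(s).
\]
To make this concrete, I will use the Radon--Nikodym theorem for operator-valued measures. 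There is a density $s\mapsto \hat{\tau}_s$ of normalized positive trace-class operators with $dF_{\hat{\sigma}}=\hat{\tau}_s\,d\mu_{\mathcal{E}_0}(s)$, obtained by absorbing the mixture over $k$ and the conjugations. Each summand then equals $\mathrm{Tr}[\hat{\tau}_s(\hat{W}_{\pmb{x},\mathbf{v}}\otimes\hat{K}^T_{\pmb{x},\mathbf{u}})]$, where the transpose is taken in a fixed basis after a unitary relabelling, as argued in Lemma~\ref{lem:finiteRNder}.

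\textbf{Bounding the $\chi^2$.} Expand the square and apply Minkowski's inequality in $L^2(\mu_{\mathcal{E}_0}\times p(\mathbf{u})p(\mathbf{v}))$:
\[
\Bigl(\mathbb{E}_{\mathbf{u},\mathbf{v}}\mathbb{E}_{s}\Bigl|\sum_{\pmb{x}\neq\pmb{0}}g_{\pmb{x}}\Bigr|^2\Bigr)^{1/2}\le\sum_{\pmb{x}\neq\pmb{0}}\Bigl(\mathbb{E}_{s}\,\mathbb{E}_{\mathbf{u},\mathbf{v}}|\mathrm{Tr}[\hat{\tau}_s(\hat{W}_{\pmb{x},\mathbf{v}}\otimes\hat{K}^T_{\pmb{x},\mathbf{u}})]|^2\Bigr)^{1/2}.
\]
Bound each inner expectation by its maximum over density operators $\hat{\tau}$. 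This gives the first expression for $\Delta$, since the Hermitian cases are real and the squared modulus is what is meant in general.

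\textbf{Operator-norm bound.} Write $|\mathrm{Tr}[\hat\tau A]|^2=\mathrm{Tr}[(\hat\tau\otimes\hat\tau)(A\otimes A^\dagger)]$ and average. Since $\mathbf{u}$ and $\mathbf{v}$ are independent, the average factorizes as $\mathrm{Tr}[(\hat\tau\otimes\hat\tau)\,\mathbb{E}_\mathbf{v}[\hat W\otimes\hat W^\dagger]\otimes\mathbb{E}_\mathbf{u}[\hat K^T\otimes\hat K^{T\dagger}]]$ after reordering tensor factors. This is at most the product of the operator norms of the two averaged operators. For Hermitian $\hat W,\hat K$, which is the case when the channels are real-valued maps of Hermitian operators, these are $\|\mathbb{E}_\mathbf{v}\hat W^{\otimes 2}_{\pmb{x},\mathbf{v}}\|_{\mathrm{op}}$ and $\|\mathbb{E}_\mathbf{u}(\hat K^{T}_{\pmb{x},\mathbf{u}})^{\otimes2}\|_{\mathrm{op}}$. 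The latter equals $\|\mathbb{E}_\mathbf{u}\hat K_{\pmb{x},\mathbf{u}}^{\otimes2}\|_{\mathrm{op}}$ because transposition preserves the operator norm.

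\textbf{Main obstacle.} I expect the main difficulty to be the measure-theoretic rigor of the density $\hat\tau_s$ when $F$ has continuous or unbounded-density outcomes. I would first try to avoid $\hat\tau_s$ altogether. The idea is to work with the ball ratios $\mu(B(s,r))$ from Lemma~\ref{thm:radonniko}, apply Cauchy--Schwarz and Minkowski at finite $r$, where $F_{\hat\sigma}(B(s,r))/\mathrm{Tr}F_{\hat\sigma}(B(s,r))$ is a legitimate density operator, and then pass to the limit $r\to0$ using the uniform boundedness established in Lemma~\ref{lem:finiteRNder} together with Fatou's lemma.
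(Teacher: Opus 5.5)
Your proposal is correct and follows essentially the same route as the paper: you reduce via Lemma~\ref{lem:martingale} to a uniform single-step $\chi^2$ bound, write the likelihood ratio minus one as a sum over $\pmb{x}\neq\pmb{0}$ of normalized traces against the state-conjugated POVM, split the sum (your Minkowski step gives exactly the bound the paper obtains from its optimized weighted Cauchy--Schwarz), bound pointwise by the maximum over density operators, and finish with the tensor-square operator-norm estimate. Your operator-valued Radon--Nikodym density $\hat\tau_s$ (legitimate, because the trace class is a separable dual space and so has the Radon--Nikodym property) and your use of $|\cdot|^2$ with an explicit Hermiticity assumption are somewhat cleaner than the paper's ball-ratio limits and unmodulated squares, but they do not change the argument.
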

\begin{proof}
We will first prove that for any $c$-copy learning protocol at partial path $s_{0:t-1}$, the averaged $\chi^2$-divergence over all $\mathbf{u},\mathbf{v}$ between the two measures $\mu_{s_{0:t-1},\mathcal{E}_0}$ and $\mu_{s_{0:t-1},\mathcal{E}_{\mathbf{u},\mathbf{v}}}$ is upper bounded by the value in Eq.~\eqref{eq:masterlemmadeltaup}.

We will drop the subscripts of $s_{0:t-1}$ assuming the dependence to be implicit. Consider that we have an input state $\hat{\sigma}\in D(\mathcal{H}_{\mathrm{in}}^{\otimes c}\otimes\mathcal{H}_{\mathrm{anc}})$ and a measurable space $(S,\Sigma_S)$ with POVM $F:\Sigma_S\to B(\mathcal{H}_{\mathrm{out}}^{\otimes c}\otimes\mathcal{H}_{\mathrm{in}}^{\otimes c})$. Let $\mu_{\mathcal{E}_0}$ and $\mu_{\mathcal{E}_{\mathbf{u},\mathbf{v}}}$ be the probability measures obtained by acting this POVM on the states $(\mathcal{E}_0^{\otimes c}\otimes\mathbb{I})(\hat{\sigma})$ and $(\mathcal{E}_{\mathbf{u},\mathbf{v}}^{\otimes c}\otimes\mathbb{I})(\hat{\sigma})$, respectively. One can verify that the relevant metric which we will be maximizing corresponding to $\chi^2(\mu_{\mathcal{E}_{\mathbf{u},\mathbf{v}}}\|\mu_{\mathcal{E}_0})$ is convex in terms of the input state $\hat{\sigma} = \sum_{k}w_k\ket{\psi_k}\bra{\psi_k}$. Hence for the rest of the discussion we will be treating the state $\hat{\sigma}$ to be pure and transposition of the operator $\hat{K}_{\pmb{x},\mathbf{u}}$ is performed in a basis that contains the state $\ket{\psi}$ for $\hat{\sigma} = \ket{\psi}\bra{\psi}$. By using the positive-operator-valued measure $F_{\hat{\sigma}}:\Sigma_S\to L_1(\mathcal{H}_{\mathrm{out}}^{\otimes c}\otimes\mathcal{H}_{\mathrm{anc}})$ defined in Eq.~\eqref{eq:POVMwithstate} we obtain the following simplification for $s\in S$ and $B(s,r)$ being a ball of radius $r$ centered around $s$
\begin{equation}
    \left(\frac{d\mu_{\mathcal{E}_{\mathbf{u},\mathbf{v}}}}{d\mu_{\mathcal{E}_{0}}}(s) - 1\right) = \sum_{\pmb{x}\in\{0,\dots,l\}^c\setminus\{\pmb{0}\}}\lim_{r\to0}\frac{\text{Tr}\left[F_{\hat{\sigma}}(B(s,r))(\hat{W}_{\pmb{x},\mathbf{v}}\otimes \hat{K}_{\pmb{x},\mathbf{u}}^T)\right]}{\text{Tr}\left[F_{\hat{\sigma}}(B(s,r))\right]}.
    \end{equation}
From Lemma~\ref{lem:finiteRNder} the Radon--Nikodym derivative is guaranteed to be finite wherever it is defined. We write down the $\chi^2$-divergence between the two measures
    \begin{equation}
        \chi^2(\mu_{\mathcal{E}_{\mathbf{u},\mathbf{v}}}\|\mu_{\mathcal{E}_0}) = \int_{s_t\in S}d\mu_{\mathcal{E}_{0}}(s_t) \left(\frac{d\mu_{\mathcal{E}_{\mathbf{u},\mathbf{v}}}}{d\mu_{\mathcal{E}_{0}}}(s_t) - 1\right)^2.
    \end{equation}
    We now try to maximize the averaged divergence to take into account the randomness involved in the choice of $\mathbf{u}$ and $\mathbf{v}$. To maximize distinguishability, we must maximize the quantity $\Delta$ over all possible POVMs $F$ and input states $\hat{\sigma}$,
    \begin{align}
    \Delta  &= \max_{F,{\hat{\sigma}}} \mathbb{E}_{\mathbf{u},\mathbf{v}}[\chi^2(\mu_{\mathcal{E}_{\mathbf{u},\mathbf{v}}}\|\mu_{\mathcal{E}_{0}})]\\
    &= \max_{F,{\hat{\sigma}}}\mathbb{E}_{\mathbf{u},\mathbf{v}}\left[\int_{s_t\in S} d\mu_{\mathcal{E}_0}\left(\sum_{\pmb{x}\in\{0,\dots,l\}^c\setminus\{\pmb{0}\}}\frac{\sqrt{v_{\pmb{x}}}}{\sqrt{v_{\pmb{x}}}}\lim_{r\to0}\frac{\text{Tr}\left[F_{\hat{\sigma}}(B(s_t,r))(\hat{W}_{\pmb{x},\mathbf{v}}\otimes \hat{K}_{\pmb{x},\mathbf{u}}^T)\right]}{\text{Tr}\left[F_{\hat{\sigma}}(B(s_t,r))\right]}\right)^2\right]\\
    &\leq \max_{F,{\hat{\sigma}}}\mathbb{E}_{\mathbf{u},\mathbf{v}}\left[\int_{s_t\in S} d\mu_{\mathcal{E}_0}(s_t)\left(\sum_{\pmb{x}\in\{0,\dots,l\}^c\setminus\{\pmb{0}\}}v_{\pmb{x}}\right)\left(\sum_{\pmb{x}\in\{0,\dots,l\}^c\setminus\{\pmb{0}\}}\frac{1}{v_{\pmb{x}}}\left(\lim_{r\to0}\frac{\text{Tr}\left[F_{\hat{\sigma}}(B(s_t,r))(\hat{W}_{\pmb{x},\mathbf{v}}\otimes \hat{K}_{\pmb{x},\mathbf{u}}^T)\right]}{\text{Tr}\left[F_{\hat{\sigma}}(B(s_t,r))\right]}\right)^2\right)\right]\\
    &\leq \left(\sum_{\pmb{x}\in\{0,\dots,l\}^c\setminus\{\pmb{0}\}}\frac{1}{v_{\pmb{x}}}\max_{F,\hat{\sigma}}\left(\int_{s_t\in S}d\mu_{\mathcal{E}_0}(s_t)\lim_{r\to0}\frac{\mathbb{E}_{\mathbf{u},\mathbf{v}}\text{Tr}\left[F_{\hat{\sigma}}(B(s_t,r))(\hat{W}_{\pmb{x},\mathbf{v}}\otimes \hat{K}_{\pmb{x},\mathbf{u}}^T)\right]^2}{\text{Tr}\left[F_{\hat{\sigma}}(B(s_t,r))\right]^2}\right)\right)\left(\sum_{\pmb{x}\in\{0,\dots,l\}^c\setminus\{\pmb{0}\}}v_{\pmb{x}}\right).
    \end{align}
    The above upper bound is valid for all choices of weights $v_{\pmb{x}}$. Hence to minimize the upper bound (following similar steps as \cite{ller2025infinitehierarchymulticopyquantum}), we define the weights $v_{\pmb{x}}$ as follows
    \begin{equation}
        v_{\pmb{x}} = \sqrt{\max_{F,\hat{\sigma}}\left(\int_{s_t\in S}d\mu_{\mathcal{E}_0}(s_t)\lim_{r\to0}\frac{\mathbb{E}_{\mathbf{u},\mathbf{v}}\text{Tr}\left[F_{\hat{\sigma}}(B(s_t,r))(\hat{W}_{\pmb{x},\mathbf{v}}\otimes \hat{K}_{\pmb{x},\mathbf{u}}^T)\right]^2}{\text{Tr}\left[F_{\hat{\sigma}}(B(s_t,r))\right]^2}\right)},
    \end{equation}
    which then gives
    \begin{equation}\label{eq:deltaupboundmaster}
        \Delta \leq  \left(\sum_{\pmb{x}\in\{0,\dots,l\}^c\setminus\{\pmb{0}\}}\sqrt{\max_{F,\hat{\sigma}}\left(\int_{s_t\in S}d\mu_{\mathcal{E}_0}(s_t)\lim_{r\to0}\frac{\mathbb{E}_{\mathbf{u},\mathbf{v}}\text{Tr}\left[F_{\hat{\sigma}}(B(s_t,r))(\hat{W}_{\pmb{x},\mathbf{v}}\otimes \hat{K}_{\pmb{x},\mathbf{u}}^T)\right]^2}{\text{Tr}\left[F_{\hat{\sigma}}(B(s_t,r))\right]^2}\right)}\right)^2.
    \end{equation}
    As we noted previously, the operator $F_{\hat{\sigma}}(B(s_t,r))$ stays trace class and positive. Hence we can define for $\pmb{x}\in\{0,\dots,l\}^c$ the state $\hat{\sigma}_{\pmb{x}}\in D(\mathcal{H}_{\mathrm{in}}^{\otimes c}\otimes \mathcal{H}_{\mathrm{out}}^{\otimes c})$ such that
    \begin{equation}
        \hat{\sigma}_{\pmb{x}} = \argmax_{\hat{\rho}\in D(\mathcal{H}_{\mathrm{in}}^{\otimes c}\otimes\mathcal{H}_{\mathrm{out}}^
        {\otimes c})}\mathbb{E}_{\mathbf{u},\mathbf{v}}\left[\text{Tr}\left[\hat{\rho}(\hat{W}_{\pmb{x},\mathbf{v}}\otimes \hat{K}_{\pmb{x},\mathbf{u}}^T)\right]^2\right]
    \end{equation}
    we have
    \begin{equation}    \lim_{r\to0}\frac{\mathbb{E}_{\mathbf{u},\mathbf{v}}\text{Tr}\left[F_{\hat{\sigma}}(B(s_t,r))(\hat{W}_{\pmb{x},\mathbf{v}}\otimes \hat{K}_{\pmb{x},\mathbf{u}}^T)\right]^2}{\text{Tr}\left[F_{\hat{\sigma}}(B(s_t,r))\right]^2} \leq \mathbb{E}_{\mathbf{u},\mathbf{v}}\left[\text{Tr}\left[\hat{\sigma}_{\pmb{x}}(\hat{W}_{\pmb{x},\mathbf{v}}\otimes \hat{K}_{\pmb{x},\mathbf{u}}^T)\right]^2\right]    
    \end{equation}
    For any probability measure $\mu$ and function $f:S\to\mathbb{R}$ with $f(s)\leq f_{\max}$, for all $s\in S$, we have $\int_{s\in S} d\mu(s) f(s)\leq f_{\max}$. Plugging this into the inequality for $\Delta$, we finally get
    \begin{equation}
    \begin{aligned}
        \Delta&\leq \Bigg(\sum_{\pmb{x}\in\{0,\dots,l\}^c\setminus\{\pmb{0}\}}\max_{\hat{\sigma}\in D(\mathcal{H}_{\mathrm{in}}^{\otimes c}\otimes\mathcal{H}_{\mathrm{out}}^{\otimes c})}\left\{\mathbb{E}_{\mathbf{u},\mathbf{v}}\left[\text{Tr}\left[\hat{\sigma}\left(\hat{W}_{\pmb{x},\mathbf{v}}\otimes \hat{K}_{\pmb{x},\mathbf{u}}^T\right)\right]^2\right] \right\}^{1/2}\Bigg)^{2}\\
        &\leq \left(\sum_{\pmb{x}\in\{0,\dots,l\}^c\setminus\{\pmb{0}\}}\sqrt{\left\|\mathbb{E}_{\mathbf{v}}\left[\hat{W}_{\pmb{x},\mathbf{v}}^{\otimes 2}\right]\right\|_{\mathrm{op}}\left\|\mathbb{E}_{\mathbf{u}}\left[\hat{K}_{\pmb{x},\mathbf{u}}^{\otimes 2}\right]\right\|_{\mathrm{op}}}\right)^2.
    \end{aligned}
    \end{equation}
For the last inequality we use the fact that 
\begin{equation}
    \begin{aligned}
    &\lim_{r\to0}\frac{\mathbb{E}_{\mathbf{u},\mathbf{v}}\text{Tr}\left[F_{\hat{\sigma}}(B(s_t,r))(\hat{W}_{\pmb{x},\mathbf{v}}\otimes \hat{K}_{\pmb{x},\mathbf{u}}^T)\right]^2}{\text{Tr}\left[F_{\hat{\sigma}}(B(s_t,r))\right]^2} \\&= \lim_{r\to0}\frac{\text{Tr}\left[(F_{\hat{\sigma}}(B(s_t,r))\otimes F_{\hat{\sigma}}(B(s_t,r)))\mathbb{E}_{\mathbf{u},\mathbf{v}}\left[\hat{W}_{\pmb{x},\mathbf{v}}\otimes \hat{K}_{\pmb{x},\mathbf{u}}^T\otimes \hat{W}_{\pmb{x},\mathbf{v}}\otimes \hat{K}_{\pmb{x},\mathbf{u}}^T\right]\right]}{\text{Tr}\left[(F_{\hat{\sigma}}(B(s_t,r))\otimes F_{\hat{\sigma}}(B(s_t,r)))\right]}
        \\&\leq \left\|\mathbb{E}_{\mathbf{v}}[\hat{W}_{\pmb{x},\mathbf{v}}^{\otimes 2}]\otimes \mathbb{E}_{\mathbf{u}}[\hat{K}_{\pmb{x},\mathbf{u}}^{\otimes 2}]\right\|_{\mathrm{op}}\\
            &\leq \left\|\mathbb{E}_{\mathbf{v}}[\hat{W}_{\pmb{x},\mathbf{v}}^{\otimes 2}]\right\|_{\mathrm{op}}\left\|\mathbb{E}_{\mathbf{u}}[\hat{K}_{\pmb{x},\mathbf{u}}^{\otimes 2}]\right\|_{\mathrm{op}},
    \end{aligned}
\end{equation}
where we have used the fact that for any bounded operator $\hat
    O\in B(\mathcal{H})$ and trace-class positive $\hat{\rho}\in L_1(\mathcal{H})_+$, we have $\frac{\text{Tr}(\hat{O}\hat{\rho})}{\text{Tr}(\hat
    \rho)}\leq \|\hat{O}\|_{\mathrm{op}}$ and for two bounded operators $\hat{O}_1,\hat{O}_2$, we have $\|\hat{O}_1\otimes\hat{O}_2^T\|_{\mathrm{op}}\leq \|\hat{O}_1\|_{\mathrm{op}}\times\|\hat{O}_2\|_{\mathrm{op}}$. From this we have effectively proven that for all $c$-copy learning protocols, at all partial paths $s_{0:t-1}$ the following holds
    \begin{equation}
    \begin{aligned}
        &\mathbb{E}_{\mathbf{u},\mathbf{v}\sim p(\mathbf{u})p(\mathbf{v})}\mathbb{E}_{s_t\sim \mu_{s_{0:t-1},\mathcal{E}_0}}\left[\left(\frac{d\mu_{s_{0:t-1},\mathcal{E}_{\mathbf{u},\mathbf{v}}}}{d\mu_{s_{0:t-1},\mathcal{E}_{0}}}(s_t) - 1\right)^2\right]\\
        &\leq \Bigg(\sum_{\pmb{x}\in\{0,\dots,l\}^c\setminus\{\pmb{0}\}}\max_{\hat{\sigma}\in D(\mathcal{H}_{\mathrm{in}}^{\otimes c}\otimes\mathcal{H}_{\mathrm{out}}^{\otimes c})}\left\{\mathbb{E}_{\mathbf{u},\mathbf{v}}\left[\text{Tr}\left[\hat{\sigma}\left(\hat{W}_{\pmb{x},\mathbf{v}}\otimes \hat{K}_{\pmb{x},\mathbf{u}}^T\right)\right]^2\right] \right\}^{1/2}\Bigg)^{2}\\
        &\leq \left(\sum_{\pmb{x}\in\{0,\dots,l\}^c\setminus\{\pmb{0}\}}\sqrt{\left\|\mathbb{E}_{\mathbf{v}}\left[\hat{W}_{\pmb{x},\mathbf{v}}^{\otimes 2}\right]\right\|_{\mathrm{op}}\left\|\mathbb{E}_{\mathbf{u}}\left[\hat{K}_{\pmb{x},\mathbf{u}}^{\otimes 2}\right]\right\|_{\mathrm{op}}}\right)^2.
    \end{aligned}
\end{equation}
    
    Hence if any learning protocol succeeds with probability $1/2+\eta$, we can choose $\beta = 2\eta/3$ (this will be less than 1/3) and then apply Lemma~\ref{lem:martingale} to obtain $z\Delta T\geq 2\eta/3$ for some constant $z>0$. This will finally give the lower bound on $T$ to be $\Omega(1/\Delta)$, which gives us the claimed lower bound.
\end{proof}

\begin{corollary}\label{corr:master}
    Assume that for the $c$-copy learning protocol that succeeds in many-one channel discrimination with revelation (Problem~\ref{prob:manyrevel}) we have a specification that $\mathcal{E}_{\mathbf{u},\mathbf{v}}(\cdot) = \text{Tr}(\cdot)\hat{\rho}_0 + \epsilon_0\text{Tr}\left[(\hat{\Pi}_{+,\mathbf{u}}-\hat{\Pi}_{-,\mathbf{u}})(\cdot)\right]\hat{\rho}_0^{1/2}(\hat{W}_{\mathbf{v}})\hat{\rho}_0^{1/2}$ where $\{\hat{\Pi}_{\pm,\mathbf{u}}\}$ is a 2-outcome POVM parameterized by $\mathbf{u}$ and $\hat{W}_{\mathbf{v}}$ is a bounded operator parameterized by $\mathbf{v}$. If we have
    \begin{equation}
        \left\|\mathbb{E}_{\mathbf{u}}\left[(\hat{\Pi}_{+,\mathbf{u}} - \hat{\Pi}_{-,\mathbf{u}})^{\otimes 2|\pmb{x}|}\right]\right\|_{\mathrm{op}} \leq \frac{c_1}{d_{\mathrm{in}}},\quad\left\|\mathbb{E}_{\mathbf{v}}\left[\hat{W}_{\mathbf{v}}^{\otimes 2|\pmb{x}|}\right]\right\|_{\mathrm{op}} \leq \frac{c_2(c_3)^{2|\pmb{x}|}}{d_{\mathrm{out}}},
    \end{equation}
    we then have
    \begin{equation}
        \Delta \leq \frac{c_1c_2}{d_{\mathrm{in}}d_{\mathrm{out}}}\left(\sum_{\pmb{x}\in\{0,1\}^c\setminus\{\pmb{0}\}}c_3^{|\pmb{x}|}\epsilon_0^{|\pmb{x}|}\right)^2= \frac{c_1c_2}{d_{\mathrm{in}}d_{\mathrm{out}}}((1+c_3\epsilon_0)^c-1)^2.
    \end{equation}
    Hence if we assume that $c \leq c_4/\epsilon_0$, we have $((1+c_3\epsilon_0)^c-1)^2 \leq c^2c_3^2\epsilon_0^2e^{2c_3c_4}$ (using Lemma~\ref{lem:epssum}). Assuming all $c_1,c_2,c_3,c_4\in O(1)$, this gives $T = \Omega(d_{\mathrm{in}}d_{\mathrm{out}}c^{-2}\epsilon_0^{-2})$ for any learning protocol that succeeds in Problem~\ref{prob:manyrevel} with success probability strictly above 1/2.
\end{corollary}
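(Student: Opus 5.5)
The plan is to cast the specified channel in the form of Eq.~\eqref{eq:Epar1par2} with $l=1$, apply the master lemma in its extended (operator-norm) form, and then sum over the $2^c-1$ nonzero strings $\pmb{x}$. Take
\begin{equation}
\hat{K}_{0,\mathbf{u}}=\mathbb{I}_{\mathrm{in}},\quad \hat{K}_{1,\mathbf{u}}=\hat{\Pi}_{+,\mathbf{u}}-\hat{\Pi}_{-,\mathbf{u}},\quad \hat{W}_{0,\mathbf{v}}=\mathbb{I}_{\mathrm{out}},\quad \hat{W}_{1,\mathbf{v}}=\epsilon_0\hat{W}_{\mathbf{v}}.
\end{equation}
With these choices, Eq.~\eqref{eq:Epar1par2} reproduces $\mathcal{E}_{\mathbf{u},\mathbf{v}}$ exactly. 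Validity as a CPTP map is part of the hypothesis of Problem~\ref{prob:manyrevel}, so nothing further needs checking there.

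Fix a string $\pmb{x}\in\{0,1\}^c\setminus\{\pmb{0}\}$. The tensor factors equal to identity contribute trivially. Up to a permutation of tensor factors, which leaves the operator norm unchanged, $\mathbb{E}_{\mathbf{u}}[\hat{K}_{\pmb{x},\mathbf{u}}^{\otimes 2}]$ equals $\mathbb{I}^{\otimes 2(c-|\pmb{x}|)}\otimes\mathbb{E}_{\mathbf{u}}[(\hat{\Pi}_{+,\mathbf{u}}-\hat{\Pi}_{-,\mathbf{u}})^{\otimes 2|\pmb{x}|}]$. Its norm is therefore at most $c_1/d_{\mathrm{in}}$ by assumption. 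In the same way, $\|\mathbb{E}_{\mathbf{v}}[\hat{W}_{\pmb{x},\mathbf{v}}^{\otimes 2}]\|_{\mathrm{op}}=\epsilon_0^{2|\pmb{x}|}\|\mathbb{E}_{\mathbf{v}}[\hat{W}_{\mathbf{v}}^{\otimes 2|\pmb{x}|}]\|_{\mathrm{op}}\le c_2 c_3^{2|\pmb{x}|}\epsilon_0^{2|\pmb{x}|}/d_{\mathrm{out}}$. Taking the square root of the product of these two bounds gives $\sqrt{c_1c_2/(d_{\mathrm{in}}d_{\mathrm{out}})}\,(c_3\epsilon_0)^{|\pmb{x}|}$ per string. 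Summing and squaring as in Lemma~\ref{lem:master} then yields the claimed $\Delta$ bound.

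The binomial theorem collapses the sum: $\sum_{\pmb{x}\neq\pmb{0}}(c_3\epsilon_0)^{|\pmb{x}|}=\sum_{k=1}^c\binom{c}{k}(c_3\epsilon_0)^k=(1+c_3\epsilon_0)^c-1$.

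For the final estimate I would use the mean value theorem, or equivalently the cited Lemma~\ref{lem:epssum}:
\begin{equation}
(1+y)^c-1\le c\,y\,(1+y)^{c-1}\le c\,y\,e^{cy},\qquad y=c_3\epsilon_0 .
\end{equation}
The assumption $c\le c_4/\epsilon_0$ gives $cy\le c_3c_4$, so the squared sum is at most $c^2c_3^2\epsilon_0^2e^{2c_3c_4}$. Substituting this gives $\Delta=O(c^2\epsilon_0^2/(d_{\mathrm{in}}d_{\mathrm{out}}))$ when the constants are $O(1)$. By Lemma~\ref{lem:master}, with $\beta$ chosen as in its proof for success probability $1/2+\Omega(1)$, this yields $T=\Omega(1/\Delta)=\Omega(d_{\mathrm{in}}d_{\mathrm{out}}c^{-2}\epsilon_0^{-2})$.

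The only delicate point is the reduction of the norm of $\mathbb{E}[\hat{K}_{\pmb{x}}^{\otimes 2}]$ to that of the $2|\pmb{x}|$-fold tensor power. This requires that the norm is invariant under permuting tensor factors, which holds since such a permutation is a unitary conjugation. It also requires that tensoring with identity preserves the norm, which follows from $\|A\otimes\mathbb{I}\|_{\mathrm{op}}=\|A\|_{\mathrm{op}}$. Everything else is bookkeeping.
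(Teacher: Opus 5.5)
Your proposal is correct and is essentially the paper's own argument. You identify $\hat{K}_{1,\mathbf{u}}=\hat{\Pi}_{+,\mathbf{u}}-\hat{\Pi}_{-,\mathbf{u}}$ and $\hat{W}_{1,\mathbf{v}}=\epsilon_0\hat{W}_{\mathbf{v}}$ in the $l=1$ form of Eq.~\eqref{eq:Epar1par2}, insert the per-string operator-norm bounds (which depend only on $|\pmb{x}|$) into the operator-norm bound of Lemma~\ref{lem:master}, collapse the sum with the binomial theorem, and finish with Lemma~\ref{lem:epssum}. Your mean-value estimate $(1+y)^c-1\le cy(1+y)^{c-1}\le cye^{cy}$ is a harmless variant that, unlike the lemma as stated, does not need $y<1$.
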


\begin{corollary}[Reduction for state learning]\label{corr:statelearn}
    Consider a specific case of many-one channel discrimination with revelation (Problem~\ref{prob:manyrevel}) where for all $x\in\{0,\dots,l\}$ the operators $\hat{K}_{x,\mathbf{u}} = \mathbb{I}$. This now reduces to a state discrimination problem. Any learning protocol capable of discriminating the state $\hat{\rho}_0$ and $\hat{\rho}_{\mathbf{v}} = \sum_{x\in\{0,\dots,l\}}\sqrt{\hat{\rho}_0}\hat{W}_{x,\mathbf{v}}\sqrt{\hat{\rho}_0}$ with success probability above $1/2$ requires $T = \Omega(1/\Delta)$ measurements where
    \begin{equation}
        T = \Omega\left[\left(\sum_{\pmb{x}\in\{0,\dots,l\}^c\setminus\{\pmb{0}\}}\sqrt{\max_{\hat{\Pi}\in B(\mathcal{H}_{\mathrm{out}}^{\otimes c})_+}\frac{\text{Tr}\left[(\hat{\Pi}\otimes\hat
        \Pi)\mathbb{E}_{\mathbf{v}}\left[\left(\sqrt{\hat{\rho}_0^{\otimes c}}\hat{W}_{\pmb{x},\mathbf{v}}\sqrt{\hat{\rho}_0^{\otimes c}}\right)^{\otimes 2}\right]\right]}{\text{Tr}\left[(\hat{\Pi}\otimes\hat
        \Pi)(\hat{\rho}_0^{\otimes c}\otimes\hat{\rho}_0^{\otimes c})\right]}}\right)^{-2}\right],
    \end{equation}
    which also gives a valid possibly weaker lower bound of
    \begin{equation}
        T = \Omega\left[\left(\sum_{\pmb{x}\in\{0,\dots,l\}^c\setminus\{\pmb{0}\}}\sqrt{\left\|\mathbb{E}_{\mathbf{v}}\left[\hat{W}_{\pmb{x},\mathbf{v}}^{\otimes 2}\right]\right\|_{\mathrm{op}}}\right)^{-2}\right].
    \end{equation}
\end{corollary}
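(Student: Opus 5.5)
This corollary is a direct specialization of the extended Master lemma, so the plan is to substitute $\hat{K}_{x,\mathbf{u}}=\mathbb{I}_{\mathrm{in}}$ for every $x$ and track what happens to each ingredient.

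\textbf{Step 1: reduction to state discrimination.} With all $\hat{K}_{x,\mathbf{u}}=\mathbb{I}$, the map in Eq.~\eqref{eq:Epar1par2} becomes $\mathcal{E}_{\mathbf{u},\mathbf{v}}(\cdot)=\text{Tr}[\cdot]\,\hat{\rho}_{\mathbf{v}}$. This is a replacement channel whose output does not depend on the input. The $c$-copy output is then $\hat{\rho}_{\mathbf{v}}^{\otimes c}\otimes\hat{\sigma}_{\mathrm{anc}}$, and the ancilla state is identical under both hypotheses. Input preparation and ancilla assistance therefore carry no information, and any $c$-copy channel protocol is exactly a $c$-copy state protocol on $\hat{\rho}_0$ versus $\hat{\rho}_{\mathbf{v}}$.

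\textbf{Step 2: the sharper bound.} Rerun the proof of the extended Master lemma up to Eq.~\eqref{eq:deltaupboundmaster}, but do not replace $F_{\hat{\sigma}}(B(s,r))$ by an arbitrary state. Because the ancilla factors out, $F_{\hat{\sigma}}$ in Eq.~\eqref{eq:POVMwithstate} effectively has the form $\sqrt{\hat{\rho}_0^{\otimes c}}\,\hat{\Pi}\,\sqrt{\hat{\rho}_0^{\otimes c}}$, where $\hat{\Pi}=\text{Tr}_{\mathrm{anc}}[F(E)(\mathbb{I}\otimes\hat{\sigma}_{\mathrm{anc}})]\in B(\mathcal{H}_{\mathrm{out}}^{\otimes c})_+$.

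For each $\pmb{x}\neq\pmb{0}$, the summand in Eq.~\eqref{eq:deltaupboundmaster} is then the ratio
\[
\frac{\mathbb{E}_{\mathbf{v}}\text{Tr}[\hat{\Pi}\sqrt{\hat{\rho}_0^{\otimes c}}\hat{W}_{\pmb{x},\mathbf{v}}\sqrt{\hat{\rho}_0^{\otimes c}}]^2}{\text{Tr}[\hat{\Pi}\hat{\rho}_0^{\otimes c}]^2}.
\]
Rewrite the numerator as $\text{Tr}\bigl[(\hat{\Pi}\otimes\hat{\Pi})\,\mathbb{E}_{\mathbf{v}}\bigl[(\sqrt{\hat{\rho}_0^{\otimes c}}\hat{W}_{\pmb{x},\mathbf{v}}\sqrt{\hat{\rho}_0^{\otimes c}})^{\otimes 2}\bigr]\bigr]$ and the denominator as $\text{Tr}[(\hat{\Pi}\otimes\hat{\Pi})(\hat{\rho}_0^{\otimes c})^{\otimes 2}]$.

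Bound the $\mu_{\mathcal{E}_0}$-average of the pointwise limit by its supremum over positive $\hat{\Pi}$, as in the lemma. Then apply the same Cauchy–Schwarz weighting $v_{\pmb{x}}$ to obtain the sharper form of $\Delta$. Lemma~\ref{lem:martingale} converts this into $T=\Omega(1/\Delta)$.

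\textbf{Step 3: the weaker bound.} Set $\hat{A}=\sqrt{\hat{\rho}_0^{\otimes c}}^{\otimes 2}(\hat{\Pi}\otimes\hat{\Pi})\sqrt{\hat{\rho}_0^{\otimes c}}^{\otimes 2}$, which is positive and trace class. The ratio equals $\text{Tr}[\hat{A}\,\mathbb{E}_{\mathbf{v}}\hat{W}_{\pmb{x},\mathbf{v}}^{\otimes 2}]/\text{Tr}[\hat{A}]$, which is at most $\|\mathbb{E}_{\mathbf{v}}\hat{W}_{\pmb{x},\mathbf{v}}^{\otimes 2}\|_{\mathrm{op}}$. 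Alternatively, this bound follows immediately from the Master lemma, since $\|\mathbb{E}_{\mathbf{u}}\mathbb{I}^{\otimes 2}\|_{\mathrm{op}}=1$.

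\textbf{Main obstacle.} The delicate step is Step 2: justifying that the maximization may be restricted to the product form $\sqrt{\hat{\rho}_0}\hat{\Pi}\sqrt{\hat{\rho}_0}$ when the ancilla may be entangled with the POVM. I would handle this by using convexity in $\hat{\sigma}$ to take it pure, absorbing the ancilla through the partial trace defining $\hat{\Pi}$, and checking that the limit $r\to0$ is controlled by the same supremum argument used in Lemma~\ref{lem:finiteRNder}.
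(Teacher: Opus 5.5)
Your proposal is correct and follows essentially the same route as the paper. You substitute $\hat{K}_{\pmb{x},\mathbf{u}}=\mathbb{I}$ into Eq.~\eqref{eq:deltaupboundmaster}, absorb the ancilla into the positive operator $\text{Tr}_{\mathrm{anc}}[F(E)(\mathbb{I}\otimes\hat{\sigma}_{\mathrm{anc}})]$, and bound the ratio on every ball $B(s_t,r)$ by its supremum over positive $\hat{\Pi}$ before letting $r\to0$.

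The only differences are minor. Your purification step via convexity is unnecessary, because with $\hat{K}=\mathbb{I}$ the reduced state $\hat{\sigma}_{\mathrm{anc}}$ already handles mixed inputs. Your explicit trace-ratio argument for the weaker operator-norm bound fills in a step the paper simply calls immediate.
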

\begin{proof}
    We revisit Eq.~\eqref{eq:deltaupboundmaster} in the proof of Lemma~\ref{lem:master}, substituting $\hat{K}_{\pmb{x},\mathbf{u}} = \mathbb{I}$, which gives
    \begin{equation}
        \Delta \leq  \left(\sum_{\pmb{x}\in\{0,\dots,l\}^c\setminus\{\pmb{0}\}}\sqrt{\max_{F,\hat{\sigma}}\left(\int_{s_t\in S}d\mu_{\mathcal{E}_0}(s_t)\lim_{r\to0}\frac{\mathbb{E}_{\mathbf{u},\mathbf{v}}\text{Tr}\left[F_{\hat{\sigma}}(B(s_t,r))(\hat{W}_{\pmb{x},\mathbf{v}}\otimes \mathbb{I})\right]^2}{\text{Tr}\left[F_{\hat{\sigma}}(B(s_t,r))\right]^2}\right)}\right)^2.
    \end{equation}
    Observe that
    \begin{equation}
    \begin{aligned}\label{eq:replacementchannel}
        \frac{\mathbb{E}_{\mathbf{u},\mathbf{v}}\text{Tr}\left[F_{\hat{\sigma}}(B(s_t,r))(\hat{W}_{\pmb{x},\mathbf{v}}\otimes \mathbb{I})\right]^2}{\text{Tr}\left[F_{\hat{\sigma}}(B(s_t,r))\right]^2} &= \frac{\mathbb{E}_{\mathbf{u},\mathbf{v}}\text{Tr}\left[F(B(s_t,r))\left(\sqrt{\hat{\rho}_0^{\otimes c}}\hat{W}_{\pmb{x},\mathbf{v}}\sqrt{\hat{\rho}_0^{\otimes c}}\otimes \hat{\sigma}_{\mathrm{anc}}\right)\right]^2}{\text{Tr}\left[F(B(s_t,r))\left(\hat{\rho}_0^{\otimes c}\otimes \hat{\sigma}_{\mathrm{anc}}\right)\right]^2}\\
        &= \frac{\mathbb{E}_{\mathbf{u},\mathbf{v}}\text{Tr}\left[\hat{\Pi}_{F,\hat{\sigma}}(B(s_t,r))\sqrt{\hat{\rho}_0^{\otimes c}}\hat{W}_{\pmb{x},\mathbf{v}}\sqrt{\hat{\rho}_0^{\otimes c}}\right]^2}{\text{Tr}\left[\hat{\Pi}_{F,\hat{\sigma}}(B(s_t,r))\hat{\rho}_0^{\otimes c}\right]^2},
    \end{aligned}
    \end{equation}
    where we define the positive-operator-valued measure $\hat{\Pi}_{F,\hat{\sigma}}:\Sigma_S\to B(\mathcal{H}_{\mathrm{out}}^{\otimes c})$ as
    \begin{equation}
        \hat{\Pi}_{F,\hat{\sigma}}(E) = \text{Tr}_{\mathrm{anc}}\left[F(E)\left(\mathbb{I}\otimes \hat{\sigma}_{\mathrm{anc}}\right)\right].
    \end{equation}
    We are able to define the partial trace since the operator $F(E)\left(\mathbb{I}\otimes \hat{\sigma}_{\mathrm{anc}}\right)$ is actually in the projective tensor product space $B(\mathcal{H}_{\mathrm{out}}^{\otimes c})\hat{\otimes}L_1(\mathcal{H}_{\mathrm{anc}})$. The above simplification is the direct consequence of the fact that both the channels are complete replacement channels. Hence no ancilla assistance or input state modification has any effect on the learning protocol, making it depend only on the POVM applied on the space $\mathcal{H}_{\mathrm{out}}^{\otimes c}$. Since $\Delta$ is upper bounded by a value which is obtained by a limit, we can note that finding an upper bound for every possible ball of radius $r>0$ would suffice to work as an upper bound of the limit. Using Eq.~\eqref{eq:replacementchannel}, we can directly note that this gives
    \begin{equation}
        \Delta \leq  \left(\sum_{\pmb{x}\in\{0,\dots,l\}^c\setminus\{\pmb{0}\}}\sqrt{\max_{\hat{\Pi}\in B(\mathcal{H}_{\mathrm{out}}^{\otimes c})_+}\frac{\mathbb{E}_{\mathbf{u},\mathbf{v}}\text{Tr}\left[\hat{\Pi}\sqrt{\hat{\rho}_0^{\otimes c}}\hat{W}_{\pmb{x},\mathbf{v}}\sqrt{\hat{\rho}_0^{\otimes c}}\right]^2}{\text{Tr}\left[\hat{\Pi}\hat{\rho}_0^{\otimes c}\right]^2}}\right)^2,
    \end{equation}
    following which the claims in the corollary now directly follow.
\end{proof}

As a guide to reading the proofs of the theorems that can be found in the subsections that follow, we note that most of the heavy lifting is done by the master lemma (Lemma~\ref{lem:master}). Specifically, all the proofs are structured as follows
\begin{itemize}
    \item We first define two channels: one is a complete replacement channel and the other one differs from the complete replacement channel only if the input state has overlap with a particular 2-outcome POVM. Both the POVM and the output state differing from the replacement channel are parameterized by random variables.
    \item We show that a learner should be able to tell apart these two cases with probability strictly above $1/2$ if framed as an instance of many-one channel discrimination with revelation (Problem~\ref{prob:manyrevel}) by learning a specific query to the channel learning task.
    \item By finding appropriate operator norms and other upper bounds, we find a way to upper bound the value of $\Delta$ from the corollaries of Lemma~\ref{lem:master}. 
    \item This gives us a lower bound on the total number of channel queries $N = cT$, which can depend on the range of $c$ considered which we state in each respective theorem.
\end{itemize}

For assistance in these proofs, we also introduce a few additional lemmata.

\begin{lemma}\label{lem:epssum}
    For $x\in [0,1)$ the following hold for $c\in \mathbb{N}$ ($L_c(x)$ is the standard $c$th Legendre polynomial)
    \begin{itemize}
        \item $cxe^{cx}\geq (1+x)^c-1$
        \item $c^2x^2e^{2cx} \geq (1-x^2)^cL_c\left(\frac{1+x^2}{1-x^2}\right) -1$
    \end{itemize}
\end{lemma}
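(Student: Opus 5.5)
The first inequality follows by bounding each binomial coefficient. Write
\[
(1+x)^c-1=\sum_{k=1}^c\binom{c}{k}x^k
\]
and use $\binom{c}{k}\le c^k/k!$. Each term is then at most $c^kx^k/k!$. For $k\ge1$,
\[
\frac{c^kx^k}{k!}= cx\,\frac{(cx)^{k-1}}{k!}\le cx\,\frac{(cx)^{k-1}}{(k-1)!}.
\]
Summing over $k\ge1$ gives at most $cx\sum_{j\ge0}(cx)^j/j! = cxe^{cx}$. An equivalent route is $(1+x)^c-1\le e^{cx}-1$ together with $e^{y}-1\le ye^{y}$ for $y\ge0$, where the latter holds because $e^y-1=\int_0^y e^t\,dt\le ye^y$. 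I will present this second route since it is a single line.

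For the second inequality I would rewrite the Legendre expression combinatorially. The generating-function identity $(1-x^2)^c L_c\!\left(\frac{1+x^2}{1-x^2}\right)=\sum_{k=0}^c\binom{c}{k}^2x^{2k}$ follows from the classical formula $L_c(t)=\sum_k\binom{c}{k}^2\left(\frac{t-1}{2}\right)^{c-k}\left(\frac{t+1}{2}\right)^k$. With $t=\frac{1+x^2}{1-x^2}$ we get $\frac{t-1}{2}=\frac{x^2}{1-x^2}$ and $\frac{t+1}{2}=\frac{1}{1-x^2}$. Multiplying by $(1-x^2)^c$ gives $\sum_k\binom{c}{k}^2x^{2(c-k)}$, which is the claimed sum after $k\mapsto c-k$. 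This also explains why the quantity arises in the paper: it is $\sum_{\pmb{x}\in\{0,1\}^c}$-type counting with paired copies. The quantity to bound is therefore $\sum_{k\ge1}\binom{c}{k}^2x^{2k}$.

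Using $\binom{c}{k}\le c^k/k!$ again, each term is at most $(cx)^{2k}/(k!)^2$. Factor out $(cx)^2$ and use $(k!)^2\ge ((k-1)!)^2\ge (k-1)!$, which gives
\[
\sum_{k\ge1}\frac{(cx)^{2k}}{(k!)^2}\le (cx)^2\sum_{j\ge0}\frac{(cx)^{2j}}{j!}=c^2x^2e^{c^2x^2}.
\]
This is not quite $e^{2cx}$, since $c^2x^2$ could exceed $2cx$. So I would instead bound more tightly: $\sum_{k\ge0}\binom{c}{k}^2y^k\le\left(\sum_k\binom{c}{k}y^{k/2}\right)^2=(1+\sqrt y)^{2c}$ with $y=x^2$. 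Hence the expression is at most $(1+x)^{2c}-1$. By the first part applied with $2c$, this is $\le 2cx\,e^{2cx}$, which is still not the right form. Better is to write $(1+x)^{2c}-1=((1+x)^c-1)((1+x)^c+1)$. Unfortunately that contains a linear term, whereas the true sum starts at $c^2x^2$.

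The cleanest route is the Cauchy–Schwarz-type bound $\sum_{k\ge1}\binom{c}{k}^2x^{2k}\le\left(\sum_{k\ge1}\binom{c}{k}x^k\right)^2$, which holds because all cross terms are nonnegative. The right-hand side is $((1+x)^c-1)^2\le (cxe^{cx})^2=c^2x^2e^{2cx}$ by the first bullet. This step is the main point, and it finishes the proof. The only nontrivial ingredient is verifying the Legendre identity, which I would cite as the standard explicit formula for $L_c$.
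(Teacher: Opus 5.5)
Your argument is correct and is essentially the paper's. For the second bullet you use the same identity $(1-x^2)^cL_c\bigl(\frac{1+x^2}{1-x^2}\bigr)-1=\sum_{k\ge1}\binom{c}{k}^2x^{2k}$, drop the nonnegative cross terms to get $((1+x)^c-1)^2$, and then apply the first bullet. Your first-bullet route via $(1+x)^c\le e^{cx}$ and $e^y-1\le ye^y$ is only a cosmetic variant of the paper's termwise bound $\binom{c}{k}\le c^k/k!$. One improvement in your version is that you actually verify the Legendre identity from the explicit formula $L_c(t)=\sum_k\binom{c}{k}^2\bigl(\frac{t-1}{2}\bigr)^{c-k}\bigl(\frac{t+1}{2}\bigr)^k$, which the paper asserts without justification; the abandoned detours in your second paragraph can simply be deleted.
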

\begin{proof}
    Note that the binomial coefficient always is upper bounded as 
    \begin{equation}
        \binom{c}{k}\leq \frac{c^k}{k!},
    \end{equation}
    using which we have
    \begin{equation}
        \begin{aligned}
            (1+x)^c - 1 &= \sum_{k=1}^{\infty}\binom{c}{k}x^k\leq \sum_{k=1}^{\infty}\frac{c^k}{k!}x^k \leq cx \sum_{k=0}^{\infty}\frac{(cx)^k}{k!}
            = cxe^{cx},
        \end{aligned}
    \end{equation}
    and similarly
    \begin{equation}
    \begin{aligned}
        (1-x^2)^cL_c\left(\frac{1+x^2}{1-x^2}\right) -1 &= \sum_{k=1}^{c}\binom{c}{k}^2 x^{2k}\leq \left(\sum_{k=1}^c \binom{c}{k}x^k\right)^2
        \leq (cx)^2 e^{2cx}.
    \end{aligned}
    \end{equation}
\end{proof}

\begin{lemma}\label{lem:gaussprob}
    For $\gamma\in\mathbb{R}^n$ such that each $\gamma_i$ is sampled from a Gaussian distribution of spread $\sigma$ centered at $0$, we have (for all $n\geq 8$)
    \begin{equation}
        \Pr_{\gamma}\left[2\sigma^2\leq |\gamma|^2\leq \frac{n\sigma^2}{0.99}\right] \geq 0.546.
    \end{equation}
\end{lemma}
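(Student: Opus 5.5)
The plan is to reduce the statement to a bound on a chi-squared distribution. I read "spread $\sigma$" as variance $\sigma^2$ for each component, so that $X=|\gamma|^2/\sigma^2\sim\chi^2_n$; this is the convention that makes the constant $0.546$ come out. In that case
\begin{equation*}
\Pr_\gamma\!\left[2\sigma^2\le|\gamma|^2\le \tfrac{n\sigma^2}{0.99}\right]=Q\!\left(\tfrac n2,1\right)-Q\!\left(\tfrac n2,\tfrac{n}{1.98}\right),
\end{equation*}
using the regularized gamma function $Q$ from Table~\ref{app:notation}. I will bound the lower tail $L(n)=\Pr[X<2]=1-Q(n/2,1)$ and the upper tail $U(n)=\Pr[X>n/0.99]=Q(n/2,n/1.98)$ separately. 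The goal is to show $L(n)+U(n)\le 0.454$ for all $n\ge 8$.

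\textbf{Lower tail.} The quantity $L(n)$ is decreasing in $n$, since $\chi^2_{n+1}$ stochastically dominates $\chi^2_n$. It therefore suffices to evaluate it at $n=8$. For even $n$ the function $Q$ has a closed form, which gives
\begin{equation*}
L(8)=1-e^{-1}\left(1+1+\tfrac12+\tfrac16\right)\approx 0.019.
\end{equation*}

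\textbf{Upper tail.} This is the real content of the lemma. Write $\delta=1/0.99-1\approx 0.0101$. The threshold $n(1+\delta)$ exceeds the mean $n$ by only about $0.01n$, which is smaller than one standard deviation $\sqrt{2n}$ unless $n$ is enormous. So $U(n)$ starts near $0.43$ at $n=8$ (where the target is tightest) and tends to $1/2$ before eventually decaying to $0$. The bound $U(n)\le 1/2$, which follows from the median of $\chi^2_n$ lying below $n$, is therefore not sufficient by itself: it would require $L(n)\le 0.046$, which holds only for $n$ large enough that the lower tail is tiny.

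I therefore split into regimes.
\begin{enumerate}
\item For small $n$, say $8\le n\le n_0$ with $n_0$ around $20$, I compute $L(n)+U(n)$ directly. This uses the finite Poisson-sum formula for even $n$ and the $\mathrm{erfc}$-plus-sum formula for odd $n$, and I expect the worst case to be $n=8$ with value about $0.449$.
\item For $n>n_0$, note that $L(n)\le L(n_0)$ is already below $10^{-3}$. It then suffices to use the median bound $U(n)\le \Pr[X>\mathrm{median}]=1/2$, combined with the known fact that the median of $\chi^2_n$ is at most $n(1-\tfrac{2}{9n})^3<n$. This gives $L+U\le 0.5+10^{-3}$.
\end{enumerate}

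Unfortunately the second regime fails, since $0.501>0.454$. A quantitative refinement is needed, showing that $U(n)$ stays below $1/2$ by a definite margin. My first attempt is a Berry--Esseen bound applied to the sum of $n$ i.i.d. $\chi^2_1$ variables:
\begin{equation*}
U(n)\le 1-\Phi\!\left(\delta\sqrt{n/2}\right)+\frac{C_{\mathrm{BE}}\,\rho}{\sqrt{2n}^{\,3}}\,n .
\end{equation*}
This works only if the error term is below roughly $0.04$, which needs $n$ in the thousands. On the remaining range the Chernoff bound $U(n)\le\exp\!\left(-\tfrac n2(\delta-\ln(1+\delta))\right)$ takes over only for $n\gtrsim 3\times10^4$.

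\textbf{Main obstacle.} The crux is closing the gap between $n_0$ and the Chernoff regime. I would first try to prove that $g(n)=U(n)+L(n)$ is unimodal or monotone on $[8,\infty)$, for example via the Wilson--Hilferty transform $(X/n)^{1/3}\approx\mathcal N\!\left(1-\tfrac{2}{9n},\tfrac{2}{9n}\right)$ with an explicit error bound. That would reduce the whole lemma to the $n=8$ evaluation. If this fails, the fallback is a rigorous interval-arithmetic evaluation of $Q$ on the finite range, combined with Berry--Esseen and Chernoff for the tail.
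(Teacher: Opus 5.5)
\textbf{Verdict.} The overall plan matches the paper's proof: reduce to $\chi^2_n$ and write the probability as $Q(n/2,1)-Q(n/2,n/1.98)$. Your monotone lower tail is fine, and your Chernoff bound is exactly the paper's tail estimate $Q(n/2,kn/2)\le (ke^{1-k})^{n/2}$ with $k=1/0.99$. Your \emph{fallback} also matches the paper: evaluate $L(n)+U(n)$ precisely on a finite range, then apply an analytic tail bound. The paper checks $8\le n\le 34000$ numerically (from a plot, so your interval-arithmetic version would be more rigorous). Beyond that range it uses the Chernoff bound together with $Q(n/2,1)\ge 1-2/(n\Gamma(n/2))$.

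\textbf{The gap: the worst case is not at $n=8$.} $U(n)$ never approaches $1/2$. By the Wilson--Hilferty heuristic you invoke, $U(n)\approx 1-\Phi(z_n)$ with $z_n\approx(\delta/3+2/(9n))/\sqrt{2/(9n)}$. This is minimized at $n=2/(3\delta)\approx 66$, where $z\approx 2\sqrt{\delta/3}\approx 0.116$. The exact value there is $U(66)=\Pr[\mathrm{Poisson}(100/3)\le 32]\approx 0.45389$, and $L(66)$ is negligible, so the lemma's probability is about $0.5461$ at $n=66$. Hence $g=L+U$ is neither monotone nor unimodal on $[8,\infty)$:
\begin{itemize}
\item $g(8)\approx 0.4446$ (not $0.449$);
\item $g(10)\approx 0.4353$;
\item $g$ then rises to about $0.4539$ near $n\approx 66$ before decaying.
\end{itemize}

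\textbf{Consequences for your plan.} Your first-choice step, proving unimodality and reducing the lemma to the $n=8$ evaluation, cannot work. Also, the constant $0.546$ is tight to roughly $10^{-4}$. A Wilson--Hilferty argument with any generic explicit error bound therefore will not certify the range around $n\approx 66$. What actually closes the proof is your fallback: an exact Poisson-sum or interval evaluation with enough precision in the middle range. Your Berry--Esseen bound with a sharp constant (about $1.46/\sqrt{n}$ for sums of $\chi^2_1$) already gives $U\le 0.454$ for all $n$ of order $10^3$ and beyond. That legitimately shortens the numerically checked range compared with the paper's $34000$. It does not remove the need for the high-precision check near $n\approx 66$.
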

\begin{proof}
\begin{figure}
    \centering
    \includegraphics[width=0.4\linewidth]{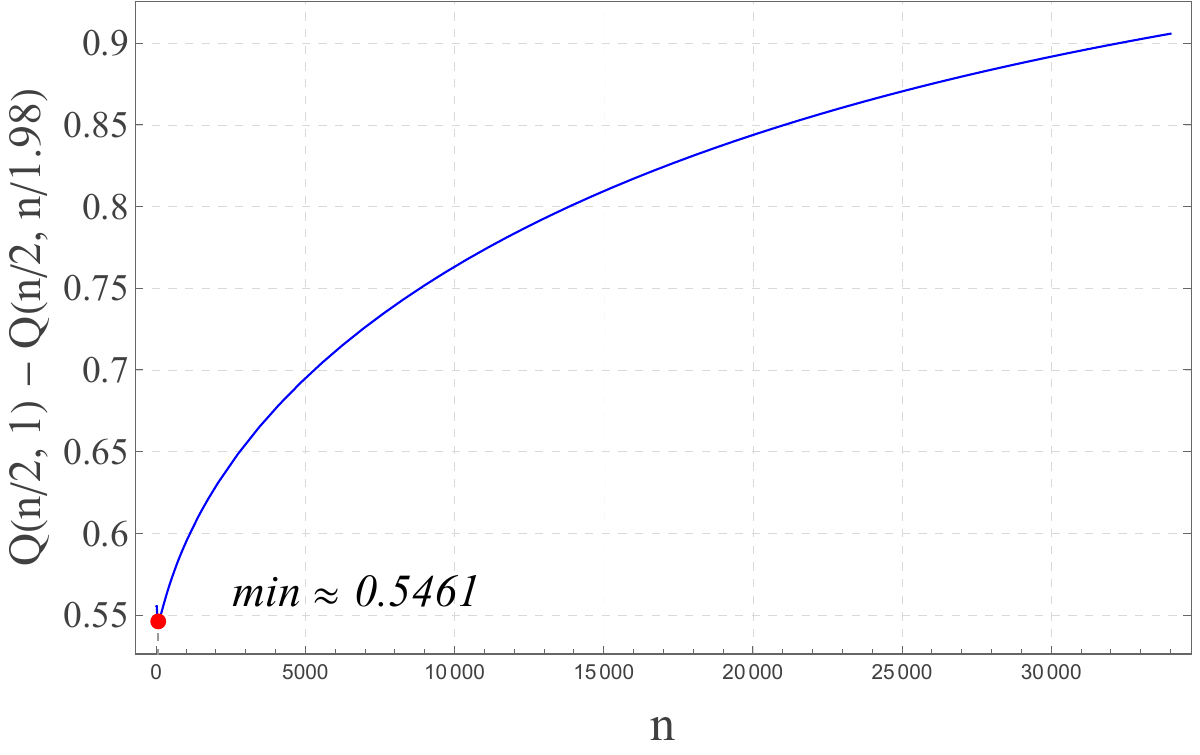}
    \caption{Manual verification for Lemma~\ref{lem:gaussprob} for the range $8\leq n\leq 34000$.}
    \label{fig:gaussverify}
\end{figure}
    Since $\frac{|\gamma|^2}{\sigma^2}$ follows a $\chi^2$ distribution with $n$ degrees of freedom, we have 
    \begin{equation}
        \Pr_{\gamma}\left[2\sigma^2\leq |\gamma|^2\leq \frac{n\sigma^2}{0.99}\right] = Q(n/2,1) - Q(n/2,n/1.98),
    \end{equation}
    where $Q(n,x) = \frac{\Gamma(n,x)}{\Gamma(n)}$ is the regularized gamma function ($\Gamma(n,x) = \int_{x}^\infty t^{n-1}e^{-t}dt$ and $\Gamma(n) = \Gamma(n,0)$). Since the above is only dependent on $n$, we manually verify the above inequality in the range $8\leq n\leq 34000$ using Fig.~\ref{fig:gaussverify}.

    For $n\geq 34000$, we make use of the tail bound (taken from \cite{ghosh2021exponential}) and a trivial lower bound for $Q(n/2,1)$ as 
    \begin{equation}
    \begin{aligned}
        Q(n/2,kn/2) &\leq (ke^{1-k})^{n/2},\forall k>1,
        \\Q(n/2,1) &= 1-\frac{\int_{0}^1 t^{-1+n/2}e^{-t}dt}{\Gamma(n/2)} \geq 1 - \frac{2}{n\Gamma(n/2)},
    \end{aligned}
    \end{equation}
    using which we get for $n\geq 34000$ (by numerical verification)
    \begin{equation}
        \Pr_{\gamma}\left[2\sigma^2\leq |\gamma|^2\leq \frac{n\sigma^2}{0.99}\right] \geq 1 - \frac{8500}{\Gamma(17000)} - (e^{-1/99}/0.99)^{17000} > 0.577.
    \end{equation}
    
\end{proof}

\subsection{Lower bounds in the absence of the complex-conjugate channel}\label{app:lowerbound_noconjugate}
To establish the lower bounds for the following tasks, we will first find bounds for operator norms which we can directly then plug into Corollary~\ref{corr:master} to obtain our results. Specifically we show hardness of the channel learning task by finding the hardness of a strictly easier task which is to succeed in particular examples of many-one channel discrimination with revelation (Problem~\ref{prob:manyrevel}). 

\begin{lemma}\label{lem:discr_disp_tensor}
    Assuming prime $d$ and $k$ being a positive integer,
    \begin{equation}
        \left\|\sum_{\mathbf{q},\mathbf{p}\in\mathbb{F}^{m}_d}\hat{D}_{d,m}(\mathbf{q},\mathbf{p})^{\otimes 2k}\right\|_\mathrm{op} = \begin{cases}
            d^m,& k\neq 0 \mod d\\
            d^{2m}, & k =0\mod d
        \end{cases}.
    \end{equation}
\end{lemma}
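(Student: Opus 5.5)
Our plan is to show that, up to normalization, the operator $\hat{O}_{d,m,k}=\sum_{\mathbf{q},\mathbf{p}}\hat{D}_{d,m}(\mathbf{q},\mathbf{p})^{\otimes 2k}$ is a projector. It then suffices to identify which projector it is and compute the normalization.

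First, since $\hat{D}_{d,m}$ factorizes over the $m$ qudits, we have $\hat{O}_{d,m,k}=\hat{O}_{d,1,k}^{\otimes m}$ after regrouping tensor factors, which is a unitary permutation. Because the operator norm is multiplicative under tensor products, it suffices to treat $m=1$ and show $\|\hat{O}_{d,1,k}\|_{\mathrm{op}}=d$ or $d^2$.

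Next we use the group structure. The map $(q,p)\mapsto \hat{D}_{d,1}(q,p)^{\otimes 2k}$ is a projective representation of $\mathbb{F}_d^2$. Its cocycle is the $2k$-th power of the single-copy cocycle, so two operators $\hat{D}(q,p)^{\otimes 2k}$ and $\hat{D}(q',p')^{\otimes 2k}$ commute up to the phase $e^{i\frac{2\pi}{d}2k(qp'-q'p)}$. We would first verify that the phase prefactor $e^{i\pi qp/d}$ in the definition makes $\hat{D}(q,p)\hat{D}(q',p')=\omega^{(qp'-q'p)/2}\hat{D}(q+q',p+p')$ for odd $d$, where $2^{-1}$ is taken mod $d$. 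This is a genuinely closed twisted product, so
\begin{equation}
\hat{O}^2=\sum_{\mathbf{a},\mathbf{b}}\omega^{k(\mathbf{a}\wedge\mathbf{b})}\hat{D}(\mathbf{a}+\mathbf{b})^{\otimes 2k}
=\sum_{\mathbf{c}}\Big(\sum_{\mathbf{a}}\omega^{k(\mathbf{a}\wedge\mathbf{c})}\Big)\hat{D}(\mathbf{c})^{\otimes 2k}.
\end{equation}
The inner character sum splits into two cases:
\begin{itemize}
\item If $2k\not\equiv0 \pmod d$, the symplectic form is nondegenerate, so the sum equals $d^2\delta_{\mathbf{c},0}$. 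This gives $\hat{O}^2=d^2\,\mathbb{I}$, and since $\hat{O}$ is Hermitian (the set $\{\hat{D}(\mathbf{c})^{\otimes 2k}\}$ is closed under the adjoint $\hat{D}(-\mathbf{c})$), its eigenvalues are $\pm d$ and its norm is $d$. Hermiticity plus $\hat{O}^2=d^2\mathbb{I}$ forces this directly, so no separate trace computation is needed.
\item If $k\equiv0 \pmod d$, all the phases are trivial and the operators form a genuine commuting representation of $\mathbb{F}_d^2$. Then $\hat{O}/d^2$ is the projector onto the trivial isotypic component, so the norm is either $d^2$ or $0$. We would exhibit a common $+1$ eigenvector, for instance $\ket{\Phi_d}^{\otimes k}$ paired appropriately. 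Note that $\hat{D}\otimes\hat{D}$ with $2k\equiv0$ acts trivially on a suitable vector, such as the $t=0,|I|=0$ vectors $\ket{\phi_{I,t}}$ generalized to $2k$ copies, on which the eigenvalue $(-1)^{2kqp}\omega^{p|I|-qt}$ equals $1$. This gives norm $d^2$.
\end{itemize}
For $m$ qudits this yields $d^m$ and $d^{2m}$ respectively.

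The main obstacle is the bookkeeping of phases. The case $d=2$ needs separate care: there $2k\equiv0$ always, yet the statement distinguishes $k$ odd from $k$ even, and the prefactor $e^{i\pi qp/2}=i^{qp}$ contributes $i^{2kqp}=(-1)^{kqp}$. We would therefore not reduce to the symplectic form mod $d$ but track the exact phase $e^{i\pi 2k\,qp/d}$ and the cocycle $e^{i\pi 2k(qp'-q'p)/d}$. We expect the correct nondegeneracy criterion to be $k\not\equiv0 \pmod d$ in all prime cases, including $d=2$, where for odd $k$ the operators $\hat{D}^{\otimes 2k}$ form a representation twisted by the sign $(-1)^{qp}$, which still has trivial-sum structure. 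Checking this $d=2$ sign carefully is the step we would do first explicitly, using the qubit Paulis.
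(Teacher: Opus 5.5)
Your argument is correct, and it takes a different route from the paper. The paper never invokes the group law. It applies $\hat{D}_{d,m}(\mathbf{q},\mathbf{p})^{\otimes 2k}$ to a computational basis vector $\bigotimes_i|\mathbf{a}^{(i)}\rangle$ and reads off the phase $\omega^{\mathbf{p}\cdot(k\mathbf{q}+\sum_i\mathbf{a}^{(i)})}$, with $\omega=e^{2\pi i/d}$. It then sums over $\mathbf{p}$ first, which yields $d^m\delta_{k\mathbf{q}+\sum_i\mathbf{a}^{(i)},\mathbf{0}}$. For $k\not\equiv 0\pmod d$ this fixes $\mathbf{q}$ uniquely, so the operator is $d^m$ times the permutation matrix of the involution $\mathbf{a}\mapsto P\mathbf{a}$; this is consistent with your $\hat O^2=d^{2m}\mathbb{I}$. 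For $k\equiv 0$ the paper exhibits $|\Psi\rangle=d^{-m/2}\sum_{\mathbf{z}}|\mathbf{z}\rangle^{\otimes 2k}$ with eigenvalue $d^{2m}$ and closes with the triangle inequality.

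Your squaring argument (twisted group law, character sum, then Hermiticity or idempotence) buys several things. It explains the dichotomy as nondegeneracy versus triviality of the cocycle. It gives the full spectrum: $\pm d^m$ in the first case and $\{0,d^{2m}\}$ in the second. It also carries over verbatim to Lemma~\ref{lem:discr_disp_transtensor}, where the cocycle exponent simply becomes $k-k'$. The paper's computation, in exchange, needs no cocycle bookkeeping at all and treats $d=2$ exactly like odd $d$.

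Some details need repair.

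First, the case split should be $k\not\equiv 0$ versus $k\equiv 0\pmod d$, not $2k\not\equiv 0$. Your displayed cocycle $\omega^{k(\mathbf{a}\wedge\mathbf{b})}$ already encodes this, and it holds for every prime $d$, so the $d=2$ check you deferred is immediate. Note, however, that the single-copy law you planned to verify is false as written for the paper's convention. Since $e^{i\pi qp/d}$ depends on the integer representatives of $q,p$, one only has $\hat D(q,p)\hat D(q',p')=\pm e^{-i\pi(qp'-q'p)/d}\hat D(q+q',p+p')$. This does not match $\omega^{2^{-1}(qp'-q'p)}$, because $e^{i\pi x/d}=(-1)^x\omega^{2^{-1}x}$. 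What saves you is that only the $2k$-th power of this phase enters $\hat O^2$. Since $(\pm e^{-i\pi x/d})^{2k}=\omega^{-kx}$ for every $d$, no $2^{-1}$ is needed, and the orientation of $\wedge$ is immaterial. For $d=2$ and $k$ odd this gives $(-1)^{x}$, which is nondegenerate, so your first bullet applies as it stands.

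Second, $|\Phi_d\rangle^{\otimes k}$ is not a fixed vector for odd $d$. Bell pairs are stabilized by $\hat D\otimes\hat D^{*}$, not by $\hat D\otimes\hat D$, since $\hat D(q,p)\hat D(q,p)^{T}\propto\hat Z_d^{2p}$. Your other candidate, $d^{-1/2}\sum_j|j\rangle^{\otimes 2k}$ (the paper's $|\Psi\rangle$), is the right one. However, the phase that $\hat D(q,p)^{\otimes 2k}$ produces on $|j\rangle^{\otimes 2k}$ is $\omega^{kqp+2kpj}$, not $(-1)^{2kqp}\omega^{(\cdot)}$. It is trivial precisely because $k\equiv 0\pmod d$. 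Your expression would also return $+1$ for $d=2$ with $k$ odd, where $\hat D_{2,1}(1,1)^{\otimes 2}$ actually acts as $-1$ on $|00\rangle+|11\rangle$.
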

\begin{proof}
We first examine how this operator acts on a basis state in $\mathcal{H}_{d,m}$ defined as follows where $\mathbf{a}^{(i)}\in\mathbb{F}^m_d$
    \begin{equation}
        \begin{aligned}
            \sum_{\mathbf{q},\mathbf{p}\in\mathbb{F}^{m}_d}\hat{D}_{d,m}(\mathbf{q},\mathbf{p})^{\otimes 2k}\bigotimes_{i=1}^{2k}\ket{\mathbf{a}^{(i)}} =\sum_{\mathbf{q},\mathbf{p}\in\mathbb{F}^{m}_d}e^{i\frac{2\pi}{d}\mathbf{p}\cdot(k\mathbf{q} + \sum_{i'=1}^{2k}\mathbf{a}^{(i')})}\bigotimes_{i=1}^{2k}\ket{\mathbf{a}^{(i)} + \mathbf{q}}.
        \end{aligned}
    \end{equation}
    Observe that
    \begin{equation}
        \label{eq:disp_tensor_phase_sum}\sum_{\mathbf{p}\in\mathbb{F}^{m}_d}e^{i\frac{2\pi}{d}\mathbf{p}\cdot(k\mathbf{q} + \sum_{i=1}^{2k}\mathbf{a}^{(i)})} = \prod_{j=1}^m\left(\sum_{p\in\mathbb{F}_d}e^{i\frac{2\pi}{d}p(kq_j + \sum_{i=1}^{2k}a^{(i)}_j)}\right)= d^{m}\delta_{k\mathbf{q} + \sum_{i=1}^{2k}\mathbf{a}^{(i)}\mod d,0}.
    \end{equation}
    Suppose $k \neq 0\mod d$. We have a $g\in\mathbb{F}_{d}$ such that $gk = 1\mod d$ which then means that there is a unique $q_j = -g\sum_{i=1}^{2k}a^{(i)}_j$ for all $j$ hence giving
    \begin{equation}
        \sum_{\mathbf{q},\mathbf{p}\in\mathbb{F}^{m}_d}\hat{D}_{d,m}(\mathbf{q},\mathbf{p})^{\otimes 2k}\bigotimes_{i=1}^{2k}\ket{\mathbf{a}^{(i)}} = d^m\bigotimes_{i=1}^{2k}\left|\mathbf{a}^{(i)} -g\sum_{i'=1}^{2k}\mathbf{a}^{(i')}\right\rangle.
    \end{equation}
which means that this operation maps one basis state into another basis state. This can be expressed as a matrix $P$ over $\mathbb{F}^{2km}_{d}$ given by $P = \mathbb{I}_{2km} - g\mathbf{1}_{2k}\otimes\mathbb{I}_m$ (where $\mathbf{1}_{2k}$ is a $2k\times 2k$ matrix with all entries equal to 1) such that
\begin{equation}
    \sum_{\mathbf{q},\mathbf{p}\in\mathbb{F}^{m}_d}\hat{D}_{d,m}(\mathbf{q},\mathbf{p})^{\otimes 2k}\ket{\mathbf{a}} = d^m\ket{P\mathbf{a}}.
\end{equation}
Note that matrix $P$ is self-inverse over $\mathbb{F}^{2km}_d$ showing that the operator in question is a permutation matrix in the standard basis and hence leading to operator norm of $d^m$.

Suppose $k = 0\mod d$. In this case we have
\begin{equation}
    \sum_{\mathbf{q},\mathbf{p}\in\mathbb{F}^{m}_d}\hat{D}_{d,m}(\mathbf{q},\mathbf{p})^{\otimes 2k}\bigotimes_{i=1}^{2k}\ket{\mathbf{a}^{(i)}} = \begin{cases}
        d^m\sum_{\mathbf{q}\in\mathbb{F}^m_d}\bigotimes_{i=1}^{2k}\left|\mathbf{a}^{(i)}+\mathbf{q}\right\rangle & \sum_{i=1}^{2k}\mathbf{a}^{(i)} = 0\mod d\\
        0 & \mathrm{otherwise}
    \end{cases}.
\end{equation}
We can check that a trivial eigenstate of this operator is the fully entangled input state  given by
\begin{equation}
    \ket{\Psi} =\frac{1}{d^{m/2}} \sum_{\mathbf{z}\in\mathbb{F}^m_d}\ket{\mathbf{z}}^{\otimes 2k},\quad \sum_{\mathbf{q},\mathbf{p}\in\mathbb{F}^{m}_d}\hat{D}_{d,m}(\mathbf{q},\mathbf{p})^{\otimes 2k}\ket{\Psi}  = d^{2m}\ket{\Psi},
\end{equation}
which can be checked to yield an operator norm of $d^{2m}$. Note that since the operator in question is summing $d^{2m}$ unitary operators, the triangle inequality yields an upper bound of $d^{2m}$ which can be shown to be saturated by the choice of this input state.
\end{proof}

\begin{corollary}\label{corr:dispparityflip}
     Assuming $d$ prime and $k,r$ positive integers with $1\leq r\leq 2k$,
    \begin{equation}
        \left\|\sum_{\mathbf{q},\mathbf{p}\in\mathbb{F}^{m}_d}\hat{D}_{d,m}(\mathbf{q},\mathbf{p})^{\otimes r}\otimes \hat{D}_{d,m}(-\mathbf{q},-\mathbf{p})^{\otimes 2k-r}\right\|_\mathrm{op} = \begin{cases}
            d^m,& k\neq 0 \mod d\\
            d^{2m}, & k =0\mod d
        \end{cases}.
    \end{equation}
\end{corollary}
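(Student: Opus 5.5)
The plan is to reduce the statement to Lemma~\ref{lem:discr_disp_tensor} by a unitary conjugation that turns every factor $\hat{D}_{d,m}(-\mathbf{q},-\mathbf{p})$ into $\hat{D}_{d,m}(\mathbf{q},\mathbf{p})$. The operator norm is invariant under conjugation by a fixed unitary, so the two cases, and the values $d^m$ and $d^{2m}$, should carry over unchanged.

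The key identity is
\begin{equation}
\hat{P}_d^{\otimes m}\hat{D}_{d,m}(\mathbf{q},\mathbf{p})\hat{P}_d^{\otimes m} = \hat{D}_{d,m}(-\mathbf{q},-\mathbf{p}).
\end{equation}
I will verify it first at the level of a single qudit. Since $\hat{P}_d\ket{j} = \ket{-j}$, a direct check on basis states gives $\hat{P}_d\hat{X}_d\hat{P}_d = \hat{X}_d^{-1}$ and $\hat{P}_d\hat{Z}_d\hat{P}_d = \hat{Z}_d^{-1}$. Hence $\hat{P}_d\hat{X}_d^{q}\hat{Z}_d^{p}\hat{P}_d = \hat{X}_d^{-q}\hat{Z}_d^{-p}$. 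The scalar prefactor $e^{i\pi\mathbf{q}\cdot\mathbf{p}/d}$ in Eq.~\eqref{eq:disp_qudit_defn} is unchanged under $(\mathbf{q},\mathbf{p})\mapsto(-\mathbf{q},-\mathbf{p})$, because $(-\mathbf{q})\cdot(-\mathbf{p}) = \mathbf{q}\cdot\mathbf{p}$. This gives the identity qudit by qudit, and taking the tensor product over the $m$ qudits gives the displayed identity. It is also consistent with Eq.~\eqref{eq:parityqudit}, which says the same conjugation produces $\hat{D}^\dagger_{d,m}(\mathbf{q},\mathbf{p})$; I will use only the explicit $(-\mathbf{q},-\mathbf{p})$ form so that no phase bookkeeping is needed.

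Next, define the unitary (in fact self-inverse) operator
\begin{equation}
\hat{U} = \mathbb{I}_{d,m}^{\otimes r}\otimes\left(\hat{P}_d^{\otimes m}\right)^{\otimes(2k-r)}.
\end{equation}
Conjugating term by term gives
\begin{equation}
\hat{U}\left(\sum_{\mathbf{q},\mathbf{p}}\hat{D}_{d,m}(\mathbf{q},\mathbf{p})^{\otimes 2k}\right)\hat{U} = \sum_{\mathbf{q},\mathbf{p}}\hat{D}_{d,m}(\mathbf{q},\mathbf{p})^{\otimes r}\otimes\hat{D}_{d,m}(-\mathbf{q},-\mathbf{p})^{\otimes(2k-r)}.
\end{equation}
Since $\|\hat{U}\hat{O}\hat{U}^\dagger\|_{\mathrm{op}} = \|\hat{O}\|_{\mathrm{op}}$, the norm of the right-hand side equals the norm in Lemma~\ref{lem:discr_disp_tensor}, which yields both cases immediately. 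The edge case $r = 2k$ is just the lemma itself with $\hat{U}$ the identity.

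I do not expect a real obstacle here. The only point that needs care is the phase convention in the definition of $\hat{D}_{d,m}$, so that the conjugation produces exactly $\hat{D}_{d,m}(-\mathbf{q},-\mathbf{p})$ and not that operator times a $(\mathbf{q},\mathbf{p})$-dependent phase. Such a phase would not factor out of the sum and would break the reduction to the lemma. The sign-invariance of $\mathbf{q}\cdot\mathbf{p}$ noted above settles this, and with it the statement reduces entirely to Lemma~\ref{lem:discr_disp_tensor}.
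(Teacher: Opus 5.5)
Your proposal is correct and follows the paper's own argument: conjugate by the parity operator $\hat{P}_d^{\otimes m}$ on the last $2k-r$ tensor factors, which reduces the statement to Lemma~\ref{lem:discr_disp_tensor} by unitary invariance of the operator norm. Your explicit check that the prefactor $e^{i\pi\mathbf{q}\cdot\mathbf{p}/d}$ is unchanged under the literal sign flip $(\mathbf{q},\mathbf{p})\mapsto(-\mathbf{q},-\mathbf{p})$, so that no $(\mathbf{q},\mathbf{p})$-dependent phase enters the sum, is a detail the paper leaves implicit.
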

\begin{proof}
    There is always a parity flip operator $\hat{P}_d = \sum_{j\in\mathbb{F}_d} \ket{-j}\bra{j}$ which will yield $\hat{P}_d^{\otimes m}\hat{D}_{d,m}(\mathbf{q},\mathbf{p})\hat{P}_d^{\otimes m} = \hat{D}_{d,m}(-\mathbf{q},-\mathbf{p})$. Hence the above operator is related by a unitary to the operator in Lemma~\ref{lem:discr_disp_tensor} and so has the exact same operator norm.
\end{proof}

\begin{corollary}\label{corr:quditEopnorm}
    Defining, for $d$ prime and any $\theta\in\mathbb{R}$, the operator \begin{equation}
        \hat{O} = \frac{1}{d^{2m} - 1}\sum_{\mathbf{q},\mathbf{p} \in\mathbb{F}^m_{d},(\mathbf{q},\mathbf{p})\neq (\mathbf{0},\mathbf{0})} \left(\frac{e^{i\theta}\hat{D}_{d,m}(\mathbf{q},\mathbf{p}) + e^{-i\theta}\hat{D}_{d,m}(-\mathbf{q},-\mathbf{p})}{\sqrt{2}}\right)^{\otimes 2k},
    \end{equation}
    we have
    \begin{equation}
        \|\hat{O}\|_{\mathrm{op}} \leq \begin{cases}\frac{2^{k}}{d^m-1}& k\neq 0 \mod d\\
        2^k & k = 0\mod d
        \end{cases}.
    \end{equation}
\end{corollary}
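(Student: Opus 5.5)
The plan is to expand the $2k$-fold tensor power and reduce each term to an operator already controlled by Corollary~\ref{corr:dispparityflip}. Write $\hat{A}_{\mathbf{q},\mathbf{p}} = \frac{e^{i\theta}\hat{D}_{d,m}(\mathbf{q},\mathbf{p}) + e^{-i\theta}\hat{D}_{d,m}(-\mathbf{q},-\mathbf{p})}{\sqrt{2}}$. Expanding $\hat{A}_{\mathbf{q},\mathbf{p}}^{\otimes 2k}$ gives $2^{-k}\sum_{\mathbf{s}\in\{\pm\}^{2k}} e^{i\theta(n_+-n_-)}\bigotimes_{j}\hat{D}_{d,m}(s_j\mathbf{q},s_j\mathbf{p})$, a sum of $2^{2k}$ terms. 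Each term is a tensor product in which some positions carry $\hat{D}(\mathbf{q},\mathbf{p})$ and the others carry $\hat{D}(-\mathbf{q},-\mathbf{p})$. The phase $e^{i\theta(n_+-n_-)}$ is independent of $(\mathbf{q},\mathbf{p})$, so summing over $(\mathbf{q},\mathbf{p})$ inside each sign pattern is legitimate. Reordering tensor factors is a unitary conjugation, so each summed term has the form treated in Corollary~\ref{corr:dispparityflip}, with $r=n_+$ and $2k-r=n_-$. The edge cases $r=0$ and $r=2k$ reduce directly to Lemma~\ref{lem:discr_disp_tensor}, after conjugating by $\hat{P}_d^{\otimes m}$ on every factor if needed.

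Next I would handle the excluded point $(\mathbf{0},\mathbf{0})$. For every sign pattern, the $(\mathbf{0},\mathbf{0})$ term is the identity. So the sum over nonzero $(\mathbf{q},\mathbf{p})$ equals the full sum minus $\mathbb{I}$, and the triangle inequality bounds each pattern's norm by its full-sum norm plus $1$.

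The case split then follows.

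\textbf{Case $k\equiv 0 \bmod d$.} The plain triangle inequality suffices without the full-sum structure. Each $\hat{A}_{\mathbf{q},\mathbf{p}}$ has norm at most $\sqrt{2}$, so $\|\hat{A}^{\otimes 2k}_{\mathbf{q},\mathbf{p}}\|_{\mathrm{op}}\le 2^k$. Averaging over $d^{2m}-1$ terms then gives $\|\hat{O}\|_{\mathrm{op}}\le 2^k$.

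\textbf{Case $k\not\equiv 0 \bmod d$.} Each of the $2^{2k}$ patterns has norm at most $d^m+1$ by Corollary~\ref{corr:dispparityflip} plus the identity correction. This yields
\[
\|\hat{O}\|_{\mathrm{op}}\le \frac{2^{-k}\,2^{2k}(d^m+1)}{d^{2m}-1} = \frac{2^k}{d^m-1},
\]
since $d^{2m}-1=(d^m-1)(d^m+1)$. This is exactly the claimed bound.

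The main obstacle is to confirm that Corollary~\ref{corr:dispparityflip} really applies to every sign pattern with the norm $d^m$ whenever $k\not\equiv 0$. The mixed products $\hat{D}^{\otimes r}\otimes\hat{D}(-\cdot)^{\otimes(2k-r)}$ have effective "charge" $r-(2k-r)$ rather than $2k$. Conjugating only some factors by $\hat{P}_d$ maps them to $\hat{D}(\mathbf{q},\mathbf{p})^{\otimes 2k}$, so the norm is governed by Lemma~\ref{lem:discr_disp_tensor}, as the corollary asserts. I will accept it as stated.

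If that step failed for some patterns (norm $d^{2m}$ when $2r-2k\equiv 0$), I would instead bound those patterns crudely. In that situation I would expect the stated bound to need the corollary precisely as given.
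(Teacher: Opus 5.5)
Your proposal is correct and follows the paper's proof essentially step for step. You expand the tensor power into $2^{2k}$ sign patterns, bound each summed pattern by Corollary~\ref{corr:dispparityflip}, and use the plain triangle inequality when $k\equiv 0 \bmod d$; the paper subtracts the $(\mathbf{0},\mathbf{0})$ identity once globally rather than per pattern, which gives the same $2^k(d^m+1)/(d^{2m}-1)$. Your worry about an ``effective charge'' $r-(2k-r)$ is unfounded: $\hat{D}_{d,m}(\mathbf{q},\mathbf{p})^\dagger=\hat{D}_{d,m}(-\mathbf{q},-\mathbf{p})$ has the same braiding phase as $\hat{D}_{d,m}(\mathbf{q},\mathbf{p})$ (only transposition or complex conjugation inverts it), so the partial parity conjugation you describe covers every pattern and no fallback is needed.
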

\begin{proof}
    Note that
    \begin{equation}
        \hat{O} = \frac{1}{d^{2m} - 1}\sum_{\mathbf{q},\mathbf{p} \in\mathbb{F}^m_{d}} \left(\frac{e^{i\theta}\hat{D}_{d,m}(\mathbf{q},\mathbf{p}) + e^{-i\theta}\hat{D}_{d,m}(-\mathbf{q},-\mathbf{p})}{\sqrt{2}}\right)^{\otimes 2k} - \frac{(\sqrt{2}\cos(\theta))^{2k}}{d^{2m}-1}\mathbb{I}.
    \end{equation}
    By expanding the above sum in terms of displacements and then applying the triangle inequality we have the following for $k\neq 0\mod d$
    \begin{equation}
    \begin{aligned}
        \|\hat{O}\|_{\mathrm{op}} &\leq \frac{1}{2^{k}(d^{2m}-1)}\sum_{r=0}^{2k} \binom{2k}{r}\left\|\sum_{\mathbf{q},\mathbf{p}\in\mathbb{F}^{m}_d}\hat{D}_{d,m}(\mathbf{q},\mathbf{p})^{\otimes r}\otimes \hat{D}_{d,m}(-\mathbf{q},-\mathbf{p})^{\otimes 2k-r}\right\|_\mathrm{op} + \frac{2^k\cos^{2k}(\theta)}{d^{2m}-1} \\&\leq \frac{2^{k}d^m}{d^{2m}-1} + \frac{2^k}{d^{2m}-1} =\frac{2^k}{d^{m}-1}.
    \end{aligned}
    \end{equation}
    For the case of $k = 0\mod d$, we can note that the operator $\hat{O}$ can be expressed as the sum of $2^{2k}(d^{2m}-1)$ operators, each of the form $\frac{\hat{D}}{2^k(d^{2m}-1)}$ where $\hat{D}$ is a displacement. Hence by the triangle inequality it trivially follows that $\|\hat{O}\|_{\mathrm{op}}\leq 2^k$ which holds for any value of $k$.
\end{proof}

\begin{lemma}\label{lem:dispopnorm}
Taking $P(\gamma)$ as a Gaussian distribution for $\gamma\in\mathbb{C}^m$ with spread $\sigma$, the following holds for $k\geq 1$
\begin{equation}
    \left\|\int d^{2m}\gamma P(\gamma)\hat{D}^{\otimes k}(\gamma)\right\|_{\mathrm{op}} \leq \frac{1}{(1+k^2\sigma^4)^{m/2}}.
\end{equation}
\end{lemma}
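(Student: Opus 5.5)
The plan is to reduce the $k$-fold tensor power to a single displacement on one collective mode using a passive linear-optical unitary, and then diagonalize the resulting Gaussian-averaged displacement in the Fock basis. Concretely, I group the $k$ copies of mode $j$ (for each $j=1,\dots,m$) and apply the $k\times k$ discrete Fourier transform, or any orthogonal beam-splitter network $U$, whose first output mode is $b_j=\frac{1}{\sqrt{k}}\sum_{i=1}^k a_j^{(i)}$. Since $\sum_{i}(\gamma_j a_j^{(i)\dagger}-\gamma_j^* a_j^{(i)}) = \sqrt{k}\,(\gamma_j b_j^\dagger-\gamma_j^* b_j)$, the passive unitary gives
\begin{equation}
\hat{D}^{\otimes k}(\gamma) = U\left(\hat{D}(\sqrt{k}\gamma)\otimes \mathbb{I}\right)U^\dagger,
\end{equation}
where the displacement acts on the $m$ collective modes and the identity acts on the remaining $(k-1)m$ modes. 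By unitary invariance of $\|\cdot\|_{\mathrm{op}}$, and because $\|A\otimes\mathbb{I}\|_{\mathrm{op}}=\|A\|_{\mathrm{op}}$, the problem reduces to bounding $\|\int d^{2m}\gamma\, P(\gamma)\hat{D}(\sqrt{k}\gamma)\|_{\mathrm{op}}$.

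Next I use that $P$ is a product of identical single-mode Gaussians and that $\hat{D}$ factorizes across modes. The averaged operator is therefore a tensor product of $m$ identical single-mode operators $A=\int d^2\gamma\,P_1(\gamma)\hat{D}(\sqrt{k}\gamma)$, and $\|A^{\otimes m}\|_{\mathrm{op}}=\|A\|_{\mathrm{op}}^m$. It then suffices to prove the single-mode bound $\|A\|_{\mathrm{op}}\le (1+k^2\sigma^4)^{-1/2}$.

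The operator $A$ is phase-rotation invariant, because $P_1$ is rotationally symmetric and $e^{i\phi\hat{n}}\hat{D}(\gamma)e^{-i\phi\hat{n}}=\hat{D}(e^{i\phi}\gamma)$. Hence $A$ is diagonal in the Fock basis, and its eigenvalues are
\begin{equation}
\lambda_n=\int d^2\gamma\,P_1(\gamma)\,e^{-k|\gamma|^2/2}L_n(k|\gamma|^2),
\end{equation}
using $\langle n|\hat{D}(\alpha)|n\rangle=e^{-|\alpha|^2/2}L_n(|\alpha|^2)$. Passing to radial variables turns each $\lambda_n$ into a Laplace transform of a Laguerre polynomial. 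The standard identity $\int_0^\infty e^{-sx}L_n(x)\,dx=(s-1)^n/s^{n+1}$ gives a closed form $\lambda_n=\frac{1}{1+\tau}\left(\frac{\tau-1}{\tau+1}\right)^{n}$, where $\tau$ is proportional to $k\sigma^2$, with the constant fixed by the spread convention for $P$. Equivalently, $A$ is a multiple of $t^{\hat{n}}$ with $|t|<1$, i.e.\ a scaled thermal-like operator. Since $|\tau-1|<\tau+1$, the moduli $|\lambda_n|$ are nonincreasing in $n$, so $\|A\|_{\mathrm{op}}=\lambda_0=1/(1+\tau)$.

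Finally, $(1+\tau)^2\ge 1+\tau^2$, so $1/(1+\tau)\le(1+\tau^2)^{-1/2}$. With $\tau=k\sigma^2$ this is exactly $(1+k^2\sigma^4)^{-1/2}$, and raising to the $m$-th power gives the claim. The main obstacle I expect is the bookkeeping of the constant in $\tau$: the paper's ``spread'' convention (the density $e^{-|\gamma|^2/2\sigma^2}/(2\pi\sigma^2)^m$ used for $\hat{O}_{\infty,m,k}$) must be tracked carefully through the Laguerre integral. If the convention yields a smaller $\tau$, for example $\tau=k\sigma^2/2$, the final inequality $1/(1+\tau)\le (1+k^2\sigma^4)^{-1/2}$ would need to be rechecked. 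As a fallback for that step, I would compute $\lambda_0$ directly as a Gaussian integral, $\lambda_0=\int P_1 e^{-k|\gamma|^2/2}$, and compare the two expressions explicitly.
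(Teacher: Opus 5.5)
Your proof is correct, and it takes a genuinely different route from the paper.

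\textbf{The paper's route.} The paper never diagonalizes anything. It writes $\|\hat O\|_{\mathrm{op}}^2=\sup_{\hat\rho}\mathrm{Tr}[\hat O\hat\rho\hat O^\dagger]$ and merges $\hat D^{\otimes k}(-\gamma_2)\hat D^{\otimes k}(\gamma_1)$ into $\hat D^{\otimes k}(\gamma_1-\gamma_2)$ with the phase $e^{\frac{k}{2}i\Omega(\gamma_2,\gamma_1)}$. It then changes variables to $\gamma_\pm$ and integrates out $\gamma_+$ with the Gaussian Fourier formula. Finally it bounds the remaining characteristic function by $|\mathrm{Tr}[\hat D^{\otimes k}(\gamma_-)\hat\rho]|\le 1$, which leaves the Gaussian mass $(1+k^2\sigma^4)^{-m}$.

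\textbf{Your route.} Your beam-splitter reduction to $\hat D(\sqrt{k}\gamma)$ on one collective mode per index $j$, followed by rotation invariance and the Laguerre--Laplace identity, computes the spectrum exactly instead. With the paper's convention $P(\gamma)=e^{-|\gamma|^2/2\sigma^2}/(2\pi\sigma^2)^m$, the constant is exactly $\tau=k\sigma^2$; indeed $\lambda_0=\int P_1e^{-k|\gamma|^2/2}=1/(1+k\sigma^2)$. So the bookkeeping you worried about is fine, and $(1+\tau)^2\ge 1+\tau^2$ closes the argument. The only slip is a sign: the eigenvalues are $\frac{1}{1+\tau}\bigl(\frac{1-\tau}{1+\tau}\bigr)^n$ (a rescaled thermal state for $\tau<1$). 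This does not matter, since only the moduli enter.

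\textbf{What your route buys.} It gives the exact value $(1+k\sigma^2)^{-m}$, which is strictly below the lemma's bound for every $k\sigma^2>0$. The slack in the paper comes entirely from the step $|\chi_{\hat\rho}|\le 1$. As a consistency check, $\hat O^\dagger\hat O$ is $(1+k^2\sigma^4)^{-m}$ times a normalized Gaussian average of $\hat D^{\otimes k}$ with variance parameter $2\sigma^2/(1+k^2\sigma^4)$, and your formula applied to it returns $(1+k\sigma^2)^{-2m}$. Fed into the bosonic theorems, this would schematically sharpen the per-mode effective dimension from $\sqrt{1+(0.99\kappa)^2}$ to $1+0.99\kappa$, the same rate as in Thm.~\ref{thm:boson_statelearn}.

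\textbf{What the paper's route buys.} It is purely mechanical and transfers verbatim to Lemma~\ref{lem:disptransopnorm}, where only the phase becomes $\frac{k-k'}{2}\Omega$. Your route would there need an active (Bogoliubov) Gaussian unitary. With $a_i$ and $c_l$ the annihilation operators of the $k$ and $k'$ copies, the relevant collective operator $\sum_i a_i+\sum_l c_l^\dagger$ has commutator $k-k'$ and is not a passive mode combination. It still gives the exact value $(1+|k-k'|\sigma^2)^{-m}$ when $k\neq k'$.
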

\begin{proof}
    We define the operator $\hat{O}$ to be given by
    \begin{equation}
        \hat{O} = \int d^{2m}\gamma P(\gamma)\hat{D}^{\otimes k}(\gamma).
    \end{equation}
    By the definition of the operator norm, we have
    \begin{align}
\|\hat{O}\|^2_{\mathrm{op}} &= \sup_{\hat{\rho}\in D(\mathcal
        H^{\otimes k})}\text{Tr}\left[\hat{O}\hat{\rho}\hat{O}^\dagger\right]\\
        &= \sup_{\hat{\rho}\in D(\mathcal
        H^{\otimes k})}\int d^{2m}\gamma_1 d^{2m}\gamma_2 P(\gamma_1)P(\gamma_2)e^{\frac{k}{2}i\Omega(\gamma_2,\gamma_1)}\text{Tr}\left[\hat{D}^{\otimes k}(\gamma_1-\gamma_2)\hat{\rho}\right].
    \end{align}
    We define $\gamma_+ = \frac{\gamma_1+\gamma_2}{2}$ and $\gamma_- = \gamma_1 - \gamma_2$. Note that
    \begin{align}
        P(\gamma_1)P(\gamma_2) &= \frac{1}{(2\pi\sigma^2)^{2m}}\exp(-\frac{|\gamma_1|^2 + |\gamma_2|^2}{2\sigma^2})\\
        &= \left(\frac{1}{(\pi\sigma^2)^m}\exp(-\frac{|\gamma_+|^2}{\sigma^2})\right)\times\left(\frac{1}{(4\pi\sigma^2)^m}\exp(-\frac{|\gamma_-|^2}{4\sigma^2})\right).
    \end{align}
    and note that $i\Omega(\gamma_2,\gamma_1) = i\Omega(\gamma_+,\gamma_-)$, from which we use the Gaussian Fourier transform formula to note that
    \begin{align}
        \|\hat{O}\|_{\mathrm{op}}^2 &= \sup_{\hat{\rho}\in D(\mathcal
        H^{\otimes k})}\int d^{2m}\gamma_+ d^{2m}\gamma_- \left(\frac{1}{(\pi\sigma^2)^m}\exp(-\frac{|\gamma_+|^2}{\sigma^2})\right)\times\left(\frac{1}{(4\pi\sigma^2)^m}\exp(-\frac{|\gamma_-|^2}{4\sigma^2})\right) e^{\frac{k}{2}i\Omega(\gamma_+,\gamma_-)}\text{Tr}\left[\hat{D}^{\otimes k}(\gamma_-)\hat{\rho}\right]\\
        &= \sup_{\hat{\rho}\in D(\mathcal
        H^{\otimes k})}\int d^{2m}\gamma_- \frac{1}{(4\pi\sigma^2)^m}\exp(-\frac{|\gamma_-|^2}{4}\left(\frac{1}{\sigma^2}+k^2\sigma^2\right)) \text{Tr}\left[\hat{D}^{\otimes k}(\gamma_-)\hat{\rho}\right]\\
        &\leq \int d^{2m}\gamma_- \frac{1}{(4\pi\sigma^2)^m}\exp(-\frac{|\gamma_-|^2}{4}\left(\frac{1}{\sigma^2}+k^2\sigma^2\right))
        \\&= \frac{1}{(1+k^2\sigma^4)^{m}}.
    \end{align}
    For the final inequality we use the fact that the characteristic function is a complex number with norm bounded by 1. Note that this holds true for arbitrary states $\hat{\rho}$ without any assumptions on the energy of the state.
\end{proof}
\begin{corollary}\label{corr:bosonicEnorm}
Taking $P(\gamma)$ as a Gaussian distribution for $\gamma\in\mathbb{C}^m$ with spread $\sigma$, the following holds for $k\geq 1$ and any $\theta\in\mathbb{R}$

\begin{equation}
    \left\|\int d^{2m}\gamma P(\gamma)\hat{H}^{\otimes k}(\gamma)\right\|_{\mathrm{op}} \leq \frac{2^{k/2}}{(1+k^2\sigma^4)^{m/2}},\quad \hat{H}(\gamma) = \frac{e^{i\theta}\hat{D}(\gamma)  + e^{-i\theta}\hat{D}^\dagger(\gamma)}{\sqrt{2}}. 
\end{equation}
\end{corollary}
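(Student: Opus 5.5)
We expand the $k$-fold tensor power of $\hat{H}(\gamma)$ binomially and bound each term with Lemma~\ref{lem:dispopnorm}, in the same way that Corollary~\ref{corr:quditEopnorm} was derived from Lemma~\ref{lem:discr_disp_tensor}. Since $\hat{D}^\dagger(\gamma)=\hat{D}(-\gamma)$, we have
\begin{equation}
\hat{H}^{\otimes k}(\gamma)=2^{-k/2}\sum_{S\subseteq\{1,\dots,k\}}e^{i\theta(k-2|S|)}\bigotimes_{j=1}^{k}\hat{D}(s_j\gamma),
\end{equation}
where $s_j=-1$ if $j\in S$ and $s_j=+1$ otherwise. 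The triangle inequality then reduces the claim to showing, for every sign pattern $\mathbf{s}\in\{\pm1\}^k$,
\begin{equation}
\left\|\int d^{2m}\gamma P(\gamma)\bigotimes_{j=1}^{k}\hat{D}(s_j\gamma)\right\|_{\mathrm{op}}\leq\frac{1}{(1+k^2\sigma^4)^{m/2}}.
\end{equation}
Summing $2^k$ such terms with prefactor $2^{-k/2}$ gives exactly $2^{k/2}(1+k^2\sigma^4)^{-m/2}$.

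To prove the bound for a fixed sign pattern, we use a unitary reduction to the case already treated, in the spirit of Corollary~\ref{corr:dispparityflip}. Let $\hat{\Pi}=(-1)^{\hat{n}}$ be the bosonic parity operator on the relevant $m$ modes. It satisfies $\hat{\Pi}\hat{D}(\gamma)\hat{\Pi}=\hat{D}(-\gamma)$. Conjugating the $j$th tensor factor by $\hat{\Pi}$ for every $j\in S$ gives a unitary $U_S$ with $U_S\bigl(\bigotimes_j\hat{D}(s_j\gamma)\bigr)U_S^\dagger=\hat{D}^{\otimes k}(\gamma)$ for all $\gamma$ simultaneously. Because $U_S$ does not depend on $\gamma$, it passes through the integral. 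The operator norm is unitarily invariant, so each term has the same norm as $\int d^{2m}\gamma P(\gamma)\hat{D}^{\otimes k}(\gamma)$. Lemma~\ref{lem:dispopnorm} bounds this by $(1+k^2\sigma^4)^{-m/2}$.

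There is one subtlety. Lemma~\ref{lem:dispopnorm} gets its factor $k^2$ from the phase $e^{\frac{k}{2}i\Omega(\gamma_+,\gamma_-)}$, which arises when $\hat{D}^{\otimes k}(\gamma_1)^\dagger$ and $\hat{D}^{\otimes k}(\gamma_2)$ are composed. After a parity flip, each flipped factor contributes the phase with coefficient $s_j^2=1$, so the phase is unchanged and the parity-conjugation argument is consistent. This is unlike the conjugate case treated later, where the coefficient becomes $k-k'$. If needed, we can instead rerun the proof of Lemma~\ref{lem:dispopnorm} directly for a general sign pattern: the product $\prod_j \hat{D}(s_j\gamma_1)^\dagger\hat{D}(s_j\gamma_2)$ yields the phase $e^{\frac{1}{2}i\sum_j s_j^2\Omega(\gamma_2,\gamma_1)}$, and the Gaussian Fourier step goes through identically. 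The only step needing real care is this verification that the phase coefficient stays $k$ and does not become $\sum_j s_j$. Everything else is a routine binomial expansion followed by the triangle inequality.
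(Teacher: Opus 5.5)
Your proposal is correct and follows essentially the same route as the paper: expand $\hat{H}^{\otimes k}(\gamma)$ into $2^k$ sign patterns, use parity conjugation (the $\gamma$-independent unitary $e^{i\pi\hat{n}}$ on the flipped factors) to map each term to $\int d^{2m}\gamma P(\gamma)\hat{D}^{\otimes k}(\gamma)$, and combine the triangle inequality with Lemma~\ref{lem:dispopnorm}. Your extra check of the phase coefficient is harmless but unnecessary, since unitary invariance of the operator norm already settles the point.
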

\begin{proof}
    Note that 
    \begin{align}
        \hat{H}^{\otimes k}(\gamma) &= \left(\frac{e^{i\theta}\hat{D}(\gamma) + e^{-i\theta}\hat{D}^\dagger(\gamma)}{\sqrt{2}}\right)^{\otimes k}\\
        &= \frac{1}{2^{k/2}}\sum_{s\in\{-1,1\}^{k}}e^{i\sum_{i=1}^ks_i\theta}\bigotimes_{i = 1}^{k}\hat{D}(s_i\gamma).
    \end{align}
    Note that $\|\hat{O}\|_{\mathrm{op}} = \|\hat{U}^\dagger\hat{O}\hat{U}\|_{\mathrm{op}}$ where $\hat{U}$ is a unitary and $\hat{D}(\gamma) = e^{i\pi\hat{n}}\hat{D}^\dagger(\gamma)e^{-i\pi\hat{n}}$. As a result, for all $s\in\{-1,1\}^k$ we have $\|\int d^{2m}\gamma P(\gamma)\hat{D}^{\otimes k}(\gamma)\|_{\mathrm{op}} = \|\int d^{2m}\gamma P(\gamma)\bigotimes_{i = 1}^{k}\hat{D}(s_i\gamma)\|_{\mathrm{op}}$. By triangle inequality, we have
    \begin{align}
       \left\|\int d^{2m}\gamma P(\gamma)\hat{H}^{\otimes k}(\gamma)\right\|_{\mathrm{op}} &\leq \frac{1}{2^{k/2}}\sum_{s\in\{-1,1\}^k}\left\|\int d^{2m}\gamma P(\gamma)e^{i\sum_{i=1}^ks_i\theta}\bigotimes_{i = 1}^{k}\hat{D}(s_i\gamma)\right\|_{\mathrm{op}}\\
       &= 2^{k/2}\left\|\int d^{2m}\gamma P(\gamma)\hat{D}^{\otimes k}(\gamma)\right\|_{\mathrm{op}} \leq \frac{2^{k/2}}{(1+k^2\sigma^4)^{m/2}}.
    \end{align}
\end{proof}
Now through the use of the proven bounds for various operator norms, we will establish our results for the hardness of channel learning under the setting of $c$-copy access to the channel for adaptive ancilla-assisted protocols.

\begin{theorem}[No access to the complex-conjugate channel: qudit $\to$ qudit]\label{thm:qudit2qudit}
Let $\mathcal{E}\in \CPTP(\mathcal{H}_{d,m},\mathcal{H}_{d',m'})$ where $d',d$ are prime numbers. Consider any adaptive ancilla-assisted learning scheme that is allowed to perform arbitrary measurements along with arbitrary inputs to parallel uses of the channel $\mathcal{E}^{\otimes c}$ where $c\leq \frac{1}{\epsilon} $ (here $\epsilon <1$). If the learner can use this scheme to produce an estimate of $|\hat{C}_{\mathcal{E}}(\mathbf{q},\mathbf{p},\mathbf{q}',\mathbf{p}')|$ such that for a query of $\mathbf{q},\mathbf{p}\in\mathbb{F}_{d}^{m}$, $\mathbf{q}',\mathbf{p}'\in\mathbb{F}_{d'}^{m'}$ it satisfies $\left||\hat{C}_{\mathcal{E}}(\mathbf{q},\mathbf{p},\mathbf{q}',\mathbf{p}')| - |C_{\mathcal{E}}(\mathbf{q},\mathbf{p},\mathbf{q}',\mathbf{p}')|\right|\leq \frac{\epsilon}{8e}$ with a 2/3 success probability, then this learning scheme requires at least $N$ uses of $\mathcal{E}$ where
\begin{itemize}
    \item Assuming $d'\neq d$, if $c\leq\min(d',d) -1$, $$N =\Omega(d^{m}d'^{m'}c^{-1}\epsilon^{-2}).$$
    \item Assuming $d'\neq d$, if $\min(d',d)\leq c\leq\max(d',d) -1$, taking $d_< = \min(d,d')$ and $d_> = \max(d,d')$, with $m_<$ and $m_>$ the corresponding numbers of qudits ( $(d_<,m_<) = (d,m)$ and $(d_>,m_>) = (d',m')$ if $d<d'$, and the other way round if $d>d'$), $$N  = \Omega\left(c\min\left[\frac{(d^{m}-1)(d'^{m'}-1)}{c^2\epsilon^2},(d_>^{m_>}-1)\left(\frac{d_<}{c\epsilon}\right)^{2d_<}\left(\frac{3}{4}\right)^2\right]\right).$$
    \item Assuming $d'\neq d$, if $\max(d',d)\leq c\leq d'd - 1$, $$N  = \Omega\left(c\min\left[\frac{(d^{m}-1)(d'^{m'}-1)}{c^2\epsilon^2},(d'^{m'}-1)\left(\frac{d}{c\epsilon}\right)^{2d}\left(\frac{3}{4}\right)^2,(d^{m}-1)\left(\frac{d'}{c\epsilon}\right)^{2d'}\left(\frac{3}{4}\right)^2\right]\right).$$
    \item Assuming $d'\neq d$, if $c\geq dd'$,
     $$N  = \Omega\left(c\min\left[\frac{(d^{m}-1)(d'^{m'}-1)}{c^2\epsilon^2},(d'^{m'}-1)\left(\frac{d}{c\epsilon}\right)^{2d}\left(\frac{3}{4}\right)^2,(d^{m}-1)\left(\frac{d'}{c\epsilon}\right)^{2d'}\left(\frac{3}{4}\right)^2,\left(\frac{dd'}{c\epsilon}\right)^{2dd'}\left(\frac{255}{256}\right)^2\right]\right).$$
    \item Assuming $d = d'$, if $c\geq d$, $$N = \Omega\left(c\min\left[\frac{(d^{m}-1)(d'^{m'}-1)}{c^2\epsilon^2},\left(\frac{d}{c\epsilon}\right)^{2d}\left(\frac{3}{4}\right)^2\right]\right).$$
\end{itemize}
\end{theorem}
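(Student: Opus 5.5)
The plan is to reduce Problem~\ref{prob:channel_learn} to many-one channel discrimination with revelation (Problem~\ref{prob:manyrevel}), using the measure-and-prepare channel $\mathcal{E}_{(\mathbf{q}_1,\mathbf{p}_1),(\mathbf{q}_2,\mathbf{p}_2)}$ of Eq.~\eqref{eq:Eq1p1q2p2}. We take $l=1$, $\hat{\rho}_0=\mathbb{I}/d'^{m'}$, $\hat{K}_{1}=\hat{E}_{d,m}(\mathbf{q}_1,\mathbf{p}_1)/\sqrt{2}$ scaled so that the two-outcome POVM of Eq.~\eqref{eq:qudit2outcome} is valid, and $\hat{W}_1=\epsilon_0\hat{E}_{d',m'}(\mathbf{q}_2,\mathbf{p}_2)/\sqrt 2$. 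The parameters are uniform on the nonzero phase-space points.

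\textbf{Reduction step.} Computing $C_{\mathcal{E}}$ at the revealed query $((\mathbf{q}_1,\mathbf{p}_1),(\mathbf{q}_2,\mathbf{p}_2))$ (up to signs and conjugations) gives $|C_{\mathcal{E}_0}|=0$ and $|C_{\mathcal{E}_{\mathbf{u},\mathbf{v}}}|=\Theta(\epsilon_0)$, using Hilbert--Schmidt orthogonality of the displacements. Choose $\epsilon_0=\epsilon/2e$ or similar so that the gap exceeds $2\cdot\frac{\epsilon}{8e}$. Thresholding the estimate then decides the hypothesis with probability $2/3>1/2$.

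\textbf{Bounding $\Delta$.} Apply Lemma~\ref{lem:master} (or Corollary~\ref{corr:master}). Because $\hat K_0=\hat W_0=\mathbb{I}$, the operator $\hat{K}_{\pmb{x}}^{\otimes 2}$ equals $\hat K_1^{\otimes 2|\pmb{x}|}$ up to identity factors, and likewise for $\hat W$. Corollary~\ref{corr:quditEopnorm} with $k=|\pmb{x}|$ then gives
\[
\|\mathbb{E}\hat K^{\otimes 2k}\|\lesssim \frac{1}{d^m-1}\ \text{ if } d\nmid k,\qquad \lesssim 1 \text{ otherwise},
\]
and the analogous bound for $\hat W$ carries an extra $\epsilon_0^{2k}$ with $d'$ in place of $d$. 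Splitting the sum over $\pmb{x}\neq\pmb 0$ by $k=|\pmb{x}|$ (multiplicity $\binom{c}{k}$), the terms with $d\nmid k$ and $d'\nmid k$ contribute at most
\[
\frac{1}{\sqrt{(d^m-1)(d'^{m'}-1)}}\,\bigl((1+\epsilon_0)^c-1\bigr)\le \frac{c\epsilon_0 e^{c\epsilon_0}}{\sqrt{(d^m-1)(d'^{m'}-1)}}
\]
by Lemma~\ref{lem:epssum}, with $c\epsilon\le1$.

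\textbf{Case analysis.} Terms with $k$ a multiple of $d$ only (respectively $d'$ only, or of $dd'$, which equals $d$ itself when $d=d'$) lose the corresponding dimensional suppression. They contribute roughly $\sum_{j\ge1}\binom{c}{jd}\epsilon_0^{jd}/\sqrt{d'^{m'}-1}$. This sum is dominated by $j=1$, giving $(c\epsilon_0/d)^{d}$ via $\binom{c}{k}\le (ce/k)^k$. The geometric tail is controlled because $c\epsilon_0\le 1/(2e)$, which is the source of the factors $3/4$ and $255/256$. The range of $c$ decides which multiples exist: if $c<\min(d,d')$ none exist, giving $\Delta=O(c^2\epsilon^2/(d^md'^{m'}))$ and hence $N=cT=\Omega(d^md'^{m'}c^{-1}\epsilon^{-2})$. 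The remaining bullets follow by adding the $d_<$, the $d$ and $d'$, and finally the $dd'$ contributions. Then $(\sum\sqrt{\cdot})^2\le (\#\text{groups})\max(\cdot)^2$ turns $\Delta$ into a max, and $T=\Omega(1/\Delta)$ becomes a min.

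\textbf{Main obstacle.} The delicate step is tracking the constants, the $2^k$ factors from Corollary~\ref{corr:quditEopnorm}, and the $\epsilon_0$ normalization. These $2^k$ factors must be absorbed so that $c\epsilon_0\sqrt2$ stays below a constant. The combinatorial sums over multiples of $d$ must also be bounded cleanly by their first term; we would first try bounding $\sum_{j\ge1}(ce\epsilon_0/(jd))^{jd}$ by a geometric series with ratio at most $1/4$.
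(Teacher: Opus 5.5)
Your proposal is correct and follows the paper's proof essentially step for step. It uses the same measure-and-prepare instance with maximally mixed $\hat{\rho}_0$ and the same reduction through the revealed query. It then applies Corollary~\ref{corr:master} with the norm bounds of Corollary~\ref{corr:quditEopnorm}, and splits $k=|\pmb{x}|$ by residues mod $d,d'$, using $\binom{c}{k}\le(ec/k)^k$ and a geometric tail that produces the $4/3$ and $256/255$ factors.

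The only slip is a constant. With your normalization $\hat{W}_1=\epsilon_0\hat{E}_{d',m'}/\sqrt{2}$, the revealed entry has $|C_{\mathcal{E}}|=\epsilon_0/4$ for odd $d,d'$, so $\epsilon_0=\epsilon/(2e)$ gives a gap of only $\epsilon/(8e)$, not the required $2\cdot\epsilon/(8e)$. Take $\epsilon_0=\epsilon/e$ instead, which corresponds to the paper's $\epsilon=e\sqrt{2}\,\epsilon_0$ in its normalization. This still gives $ec\epsilon_0\le c\epsilon\le 1$, so every geometric ratio is at most $1/\min(d,d')\le 1/2$, as the constants require.
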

\begin{proof}
    Consider the following two channels
    \begin{gather}
        \mathcal{E}_0(\cdot) = \text{Tr}(\cdot)\frac{\mathbb{I}}{d'^{m'}},\quad\mathcal{E}_{(\mathbf{q}_1,\mathbf{p}_1),(\mathbf{q}_2,\mathbf{p}_2)}(\cdot) = \text{Tr}\left[\hat{\Pi}_{+,(\mathbf{q}_1,\mathbf{p}_1)}(\cdot)\right]\hat{\rho}_{+,(\mathbf{q}_2,\mathbf{p}_2)} + \text{Tr}\left[\hat{\Pi}_{-,(\mathbf{q}_1,\mathbf{p}_1)}(\cdot)\right]\hat{\rho}_{-,(\mathbf{q}_2,\mathbf{p}_2)},\\
        \hat{\Pi}_{\pm,(\mathbf{q}_1,\mathbf{p}_1)}  = \frac{\mathbb{I} \pm \hat{E}_{d,m}(\mathbf{q}_1,\mathbf{p}_1)/\sqrt{2}}{2}\label{eq:quditpovm},
        \end{gather}
        \begin{equation}\label{eq:quditstates}
            \hat{\rho}_{\pm,(\mathbf{q}_2,\mathbf{p}_2)} = \frac{\mathbb{I} \pm \epsilon_0\hat{E}_{d',m'}(\mathbf{q}_2,\mathbf{p}_2)}{d'^{m'}}.
        \end{equation}
    Note that the above channel can always be implemented for qudits of prime dimensions $d$ as discussed in App.~\ref{app:qudit_disp}; both the POVM and the output states are valid for $\epsilon_0\leq 1/\sqrt{2}$. This simplifies the channel $\mathcal{E}_{(\mathbf{q}_1,\mathbf{p}_1),(\mathbf{q}_2,\mathbf{p}_2)}(\cdot)$ to the form of Eq.~\eqref{eq:Epar1par2} with $l =1$ given by
    \begin{equation}
    \begin{aligned}
        \mathcal{E}_{(\mathbf{q}_1,\mathbf{p}_1),(\mathbf{q}_2,\mathbf{p}_2)}(\cdot)&= \text{Tr}(\cdot)\frac{\mathbb{I}}{d'^{m'}} + \frac{\epsilon_0}{d'^{m'}\sqrt{2}}\text{Tr}(\hat{E}_{d,m}(\mathbf{q}_1,\mathbf{p}_1)(\cdot))\hat{E}_{d',m'}(\mathbf{q}_2,\mathbf{p}_2) \\&= \sum_{x\in\{0,1\}}\text{Tr}\left[\hat{K}_{x,(\mathbf{q}_1,\mathbf{p}_1)}(\cdot)\right]\sqrt{\hat{\rho}_0}\hat{W}_{x,(\mathbf{q}_2,\mathbf{p}_2)}\sqrt{\hat{\rho}_0},
    \end{aligned}
    \end{equation}
    for $\hat{\rho}_0 = \mathbb{I}_{d',m'}/d'^{m'}$ and
    \begin{equation}
        \hat{K}_{0,(\mathbf{q}_1,\mathbf{p}_1)} = \mathbb{I}_{d,m},\quad \hat{K}_{1,(\mathbf{q}_1,\mathbf{p}_1)} = \hat{E}_{d,m}(\mathbf{q}_1,\mathbf{p}_1)/\sqrt{2},\quad \hat{W}_{0,(\mathbf{q}_2,\mathbf{p}_2)} =\mathbb{I}_{d',m'},\quad \hat{W}_{1,(\mathbf{q}_2,\mathbf{p}_2)} = \epsilon_0\hat{E}_{d',m'}(\mathbf{q}_2,\mathbf{p}_2).
    \end{equation}
\textbf{Success in Problem~\ref{prob:manyrevel} by learning $|\hat{C}_{\mathcal{E}}(\mathbf{q},\mathbf{p},\mathbf{q}',\mathbf{p}')|$}: 
Consider we have the setting of Problem~\ref{prob:manyrevel} where we relabel $\mathbf{u}$ by $(\mathbf{q}_1,\mathbf{p}
_1) \in\mathbb{F}_{d}^{m}\times\mathbb{F}_{d}^{m}$ and relabel $\mathbf{v}$ by $(\mathbf{q}_2,\mathbf{p}
_2) \in\mathbb{F}_{d'}^{m'}\times\mathbb{F}_{d'}^{m'}$. So the setting is now that A prepares an unknown channel parameterized by $(\mathbf{q}_1,\mathbf{p}_1),(\mathbf{q}_2,\mathbf{p}_2)$, which B has to guess by performing a hypothesis test. Note that this relabeling is only for our convenience of applying Lemma~\ref{lem:master}. The variables $(\mathbf{q}_1,\mathbf{p}_1)$ follow the probability distribution
\begin{equation}
    \Pr((\mathbf{q}_1,\mathbf{p}_1) ) = \begin{cases}0& (\mathbf{q}_1,\mathbf{p}_1) = (0,0)\\
        \frac{1}{{d}^{2m} - 1}& \text{otherwise}
    \end{cases},
\end{equation}
with $\hat{\Pi}_{+,(\mathbf{q}_1,\mathbf{p}_1)} - \hat{\Pi}_{-,(\mathbf{q}_1,\mathbf{p}_1)} = \frac{\hat{E}_{d,m}(\mathbf{q}_1,\mathbf{p}_1)}{\sqrt{2}}$ from Eq.~\eqref{eq:quditpovm}. The variables $(\mathbf{q}_2,\mathbf{p}_2)$ follow the probability distribution 
\begin{equation}
    \Pr((\mathbf{q}_2,\mathbf{p}_2)) = \begin{cases}0& (\mathbf{q}_2,\mathbf{p}_2) = (0,0)\\
        \frac{1}{{d'}^{2m'} - 1}& \text{otherwise}
    \end{cases},
\end{equation}
and states parameterized by $\hat{\rho}_{(\mathbf{q}_2,\mathbf{p}_2)} = \hat{\rho}_{+,(\mathbf{q}_2,\mathbf{p}_2)}$ from Eq.~\eqref{eq:quditstates}. 
Note that for the two channels, we have
\begin{equation}
    |\hat{C}_{\mathcal{E}_{(\mathbf{q}_1,\mathbf{p}_1),(\mathbf{q}_2,\mathbf{p}_2)}}(\mathbf{q}_1,\mathbf{p}_1,\mathbf{q}_2,\mathbf{p}_2)| = \frac{\epsilon_0}{2\sqrt{2}},\quad |\hat{C}_{\mathcal{E}_0}(\mathbf{q}_1,\mathbf{p}_1,\mathbf{q}_2,\mathbf{p}_2)| = 0,
\end{equation}
and so if we have $\frac{\epsilon}{8e}= \frac{\epsilon_0}{4\sqrt{2}}$ ($\epsilon = e\epsilon_0\sqrt{2}$), we can ensure that estimating $|\hat{C}_{\mathcal{E}}(\mathbf{q},\mathbf{p},\mathbf{q}',\mathbf{p}')|$ to $\frac{\epsilon}{8e}$ accuracy with success probability $2/3$ means that the hypothesis test succeeds with probability $2/3$. Note that for $\epsilon_0 \leq 1/\sqrt{2}$, the state $\hat{\rho}_{(\mathbf{q}_2,\mathbf{p}_2)}$ will always be a valid state. Hence we can construct such a problem for all $\epsilon\leq 1 < e$.

\textbf{Application of Corollary~\ref{corr:master} from Lemma~\ref{lem:master}}:
We define the following operators
\begin{equation}
        \hat{\omega}_{\pmb{x},(\mathbf{q}_2,\mathbf{p}_2)} = \epsilon_0^{|\pmb{x}|}\hat{\rho}_0^{1/2}\hat{W}_{\pmb{x},(\mathbf{q}_2,\mathbf{p}_2)}\hat{\rho}_0^{1/2},\quad \hat{W}_{\pmb{x},(\mathbf{q}_2,\mathbf{p}_2)} = \bigotimes_{i=1}^{c}\hat{E}_{d',m'}(x_i\mathbf{q}_2,x_i\mathbf{p}_2).
    \end{equation}
Using Corollary~\ref{corr:quditEopnorm} we have
\begin{equation}\label{eq:quditoutputopnorm}
\begin{aligned}
     \frac{1}{d'^{2m'}-1}\left\|\epsilon_0^{2|\pmb{x}|}\sum_{\mathbf{q}_2,\mathbf{p}_2\in\mathbb{F}^{m'}_{d'},(\mathbf{q}_2,\mathbf{p}_2)\neq (\mathbf{0},\mathbf{0})}\hat{E}_{d',m'}(\mathbf{q}_2,\mathbf{p}_2)^{\otimes 2|\pmb{x}|}\right\|_{\mathrm{op}}\leq \begin{cases}\frac{(\epsilon_0\sqrt{2})^{2|\pmb{x}|}}{d'^{m'}-1}&|\pmb{x}| \neq 0\mod d'\\
      (\epsilon_0\sqrt{2})^{2|\pmb{x}|}&|\pmb{x}| = 0\mod d'\end{cases}.
\end{aligned}
\end{equation}
Since we have the setting of a two-outcome POVM with state preparation conditioned on the outcome, we also note that
\begin{equation}\label{eq:quditinputopnorm}
    \|\mathbb{E}_{(\mathbf{q}_1,\mathbf{p}_1)}[(\hat{\Pi}_{+,(\mathbf{q}_1,\mathbf{p}_1)}-\hat{\Pi}_{-,(\mathbf{q}_1,\mathbf{p}_1)})^{\otimes 2|\pmb{x}|}]\|_{\mathrm{op}} = \left\|\frac{1}{2^{|\pmb{x}|}}\mathbb{E}_{(\mathbf{q}_1,\mathbf{p}_1)}\left[\hat{E}_{d,m}(\mathbf{q}_1,\mathbf{p}_1)^{\otimes 2|\pmb{x}|}\right]\right\|_{\mathrm{op}}\leq \begin{cases}\frac{1}{d^{m}-1} & |\pmb{x}|\neq 0\mod d\\
    1&|\pmb{x}| = 0\mod d\end{cases}.
\end{equation}
For all the final simplifications for the following lower bounds, we use the fact that since $d,d'\geq2$, we always have $\frac{1}{d^{m} - 1}\leq \frac{3}{d^{m}}$, $\frac{1}{d'^{m'} - 1}\leq \frac{3}{d'^{m'}}$ and $\frac{1}{(d^{m}-1)(d'^{m'}-1)} \leq \frac{9}{d^{m}d'^{m'}}$. This allows us to write the bounds in a cleaner form.

\textbf{Sample complexity lower bound}:
We will make use of the quantity $\Delta$ defined in Lemma~\ref{lem:master}. Due to the form of our channel, we can directly apply the weaker bound from Corollary~\ref{corr:master} and the inequalities~\eqref{eq:quditoutputopnorm} and~\eqref{eq:quditinputopnorm}. Simplifying the upper bound on $\Delta$, and noting that each term only depends on the value of $|\pmb{x}|$, we get
\begin{equation}
    \Delta\leq \left(\sum_{k = 1}^c\binom{c}{k}\epsilon_0^k \sqrt{\left\|\frac{1}{2^{k}}\mathbb{E}_{(\mathbf{q}_1,\mathbf{p}_1)}\left[\hat{E}_{d,m}(\mathbf{q}_1,\mathbf{p}_1)^{\otimes 2k}\right]\right\|_{\mathrm{op}}\left\|\mathbb{E}_{(\mathbf{q}_2,\mathbf{p}_2)}\left[\hat{E}_{d',m'}(\mathbf{q}_2,\mathbf{p}_2)^{\otimes 2k}\right]\right\|_{\mathrm{op}}}\right)^2.
\end{equation}
We now assume that $d \neq d'$ which reduces the above equation into 4 separate cases.
\begin{itemize}
    \item Case 1: $c\leq \min(d',d) - 1$. In this case both operator norms will only satisfy the case of $k \neq 0\mod d$ and $k\neq 0\mod d'$. This simplifies to give
    \begin{equation}
        \Delta\leq \frac{1}{(d^{m}-1)(d'^{m'}-1)}((1+\sqrt{2}\epsilon_0)^c-1)^2\leq \frac{2c^2\epsilon_0^2e^{2\sqrt{2}c\epsilon_0}}{(d^{m}-1)(d'^{m'}-1)}\implies N = \Omega\left(\frac{d^{m}d'^{m'}}{c\epsilon^2}\right).
    \end{equation}
    \item Case 2: $\min(d',d)\leq c\leq \max(d',d) - 1$. In this case one of the operator norms would have the case of $k = 0\mod \min(d',d)$. We relabel $d_< = \min (d',d)$ and $d_>= \max(d',d)$ and $m_<$ and $m_>$ as the corresponding numbers of qudits. We get the following expression for $\Delta$:
    \begin{equation}
    \begin{aligned}
        \Delta &\leq \left(\sum_{k=1,k \neq 0\mod d_<}^c\binom{c}{k}\frac{(\epsilon_0\sqrt{2})^k}{(d^{m}-1)^{1/2}(d'^{m'}-1)^{1/2}} + \sum_{k=1}^{\lfloor c/d_<\rfloor}\binom{c}{kd_<}\frac{(\epsilon_0\sqrt{2})^{kd_<}}{(d_>^{m_>}-1)^{1/2}}\right)^2\\
        &\leq \left(\sum_{k=1}^c\binom{c}{k}\frac{(\epsilon_0\sqrt{2})^k}{(d^{m}-1)^{1/2}(d'^{m'}-1)^{1/2}} + \sum_{k=1}^{\lfloor c/d_<\rfloor}\left(\frac{ec}{kd_<}\right)^{kd_<}\frac{(\epsilon_0\sqrt{2})^{kd_<}}{(d_>^{m_>}-1)^{1/2}}\right)^2\\
        &\leq\left(\frac{c\epsilon_0 e^{c\epsilon_0\sqrt{2}}\sqrt{2}}{(d^{m}-1)^{1/2}(d'^{m'}-1)^{1/2}} + \sum_{k=1}^{\infty}\left(\frac{ec\epsilon_0\sqrt{2}}{d_<}\right)^{kd_<}\frac{1}{(d_>^{m_>}-1)^{1/2}}\right)^2 \\&\leq \left(\frac{c\epsilon}{(d^{m}-1)^{1/2}(d'^{m'}-1)^{1/2}} + \frac{4}{3}\left(\frac{c\epsilon}{d_<}\right)^{d_<}\frac{1}{(d_>^{m_>}-1)^{1/2}}\right)^2.
    \end{aligned}
    \end{equation}
    Here we make use of the fact that $\binom{n}{k}\leq (\frac{en}{k})^k$ and $ec\epsilon_0\sqrt{2}/d_< \leq 1/2$ and $d_<\geq 2$, since we assume $ec\epsilon_0\sqrt{2}\leq 1$. From this we get a lower bound for $N$,
    \begin{equation}
        N  = \Omega\left(c\min\left[\frac{(d^{m}-1)(d'^{m'}-1)}{c^2\epsilon^2},(d_>^{m_>}-1)\left(\frac{d_<}{c\epsilon}\right)^{2d_<}\left(\frac{3}{4}\right)^2\right]\right).
    \end{equation}
    \item Case 3: $\max(d',d) \leq c\leq d'd - 1$. In this case, there will be possibility that both $k = 0\mod d'$ and $k = 0\mod d$ for different values of $k$. We use a similar method for upper bounding $\Delta$ as the previous case to obtain
    \begin{equation}
    \begin{aligned}
        \Delta &\leq \left(\sum_{k=1,k \neq 0\mod d,d'}^c\binom{c}{k}\frac{(\epsilon_0\sqrt{2})^k}{(d^{m}-1)^{1/2}(d'^{m'}-1)^{1/2}} + \sum_{k=1}^{\lfloor c/d\rfloor}\binom{c}{kd}\frac{(\epsilon_0\sqrt{2})^{kd}}{(d'^{m'}-1)^{1/2}} + \sum_{k=1}^{\lfloor c/d'\rfloor}\binom{c}{kd'}\frac{(\epsilon_0\sqrt{2})^{kd'}}{(d^{m}-1)^{1/2}}\right)^2\\
        & \leq \left(\frac{c\epsilon }{(d^{m}-1)^{1/2}(d'^{m'}-1)^{1/2}} + \frac{4}{3}\left(\frac{c\epsilon}{d}\right)^{d}\frac{1}{(d'^{m'}-1)^{1/2}}+\frac{4}{3}\left(\frac{c\epsilon}{d'}\right)^{d'}\frac{1}{(d^{m}-1)^{1/2}}\right)^2,
    \end{aligned}
    \end{equation}
    which gives the lower bound on $N$ as 
    \begin{equation}
        N  = \Omega\left(c\min\left[\frac{(d^{m}-1)(d'^{m'}-1)}{c^2\epsilon^2},(d'^{m'}-1)\left(\frac{d}{c\epsilon}\right)^{2d}\left(\frac{3}{4}\right)^2,(d^{m}-1)\left(\frac{d'}{c\epsilon}\right)^{2d'}\left(\frac{3}{4}\right)^2\right]\right).
    \end{equation}
    \item Case 4: $c\geq d'd$. In this case, the value of $k$ can simultaneously satisfy $k = 0\mod d'$ and $k  = 0\mod d$. Hence we obtain the following bound for $\Delta$
    \begin{equation}
        \begin{aligned}
            \Delta &\leq \Bigg(\sum_{k=1,k \neq 0\mod d,d'}^c\binom{c}{k}\frac{(\epsilon_0\sqrt{2})^k}{(d^{m}-1)^{1/2}(d'^{m'}-1)^{1/2}} + \sum_{k=1,k\neq 0\mod d'}^{\lfloor c/d\rfloor}\binom{c}{kd}\frac{(\epsilon_0\sqrt{2})^{kd}}{(d'^{m'}-1)^{1/2}} \\&\quad\quad\quad+ \sum_{k=1,k\neq 0\mod d}^{\lfloor c/d'\rfloor}\binom{c}{kd'}\frac{(\epsilon_0\sqrt{2})^{kd'}}{(d^{m}-1)^{1/2}} + \sum_{k=1}^{\lfloor c/(dd')\rfloor}\binom{c}{kd'd}(\epsilon_0\sqrt{2})^{kd'd}\Bigg)^2\\
        & \leq \left(\frac{c\epsilon }{(d^{m}-1)^{1/2}(d'^{m'}-1)^{1/2}} + \frac{4}{3}\left(\frac{c\epsilon}{d}\right)^{d}\frac{1}{(d'^{m'}-1)^{1/2}}+\frac{4}{3}\left(\frac{c\epsilon}{d'}\right)^{d'}\frac{1}{(d^{m}-1)^{1/2}} + \frac{256}{255}\left(\frac{c\epsilon}{d'd}\right)^{d'd}\right)^2,
        \end{aligned}
    \end{equation}
    which gives the lower bound on $N$ as 
    \begin{equation}
        N  = \Omega\left(c\min\left[\frac{(d^{m}-1)(d'^{m'}-1)}{c^2\epsilon^2},(d'^{m'}-1)\left(\frac{d}{c\epsilon}\right)^{2d}\left(\frac{3}{4}\right)^2,(d^{m}-1)\left(\frac{d'}{c\epsilon}\right)^{2d'}\left(\frac{3}{4}\right)^2,\left(\frac{dd'}{c\epsilon}\right)^{2dd'}\left(\frac{255}{256}\right)^2\right]\right).
    \end{equation}
\end{itemize}
Now we consider the setting of $d = d'$ and $c\geq d$. In this scenario, we get
\begin{equation}
        \begin{aligned}
            \Delta &\leq \Bigg(\sum_{k=1,k \neq 0\mod d}^c\binom{c}{k}\frac{(\epsilon_0\sqrt{2})^k}{(d^{m}-1)^{1/2}(d^{m'}-1)^{1/2}} + \sum_{k=1}^{\lfloor c/d\rfloor}\binom{c}{kd}(\epsilon_0\sqrt{2})^{kd}\Bigg)^2\\
        & \leq \left(\frac{c\epsilon}{(d^{m}-1)^{1/2}(d^{m'}-1)^{1/2}} + \frac{4}{3}\left(\frac{c\epsilon}{d}\right)^{d}\right)^2,
        \end{aligned}
    \end{equation}
    which gives the lower bound on $N$ as
    \begin{equation}
        N = \Omega\left(c\min\left[\frac{(d^{m}-1)(d^{m'}-1)}{c^2\epsilon^2},\left(\frac{d}{c\epsilon}\right)^{2d}\left(\frac{3}{4}\right)^2\right]\right).
    \end{equation}
\end{proof}

\begin{theorem}[No access to the complex-conjugate channel: qudit $\to$ bosonic]\label{thm:qudit2boson}
Let $\mathcal{E}\in \CPTP(\mathcal{H}_{d,m},\mathcal{H}_{\infty,m'})$ where $d$ is prime and $m'\geq 8$. Taking $ec\epsilon < 0.1225$ and $\kappa\geq 2/m'$, consider any adaptive ancilla-assisted learning scheme that is allowed uses of the channel $\mathcal{E}^{\otimes c}$ for each measurement. If the learner can use this scheme to produce an estimate $|\hat{C}_{\mathcal{E}}((\mathbf{q},\mathbf{p}),\beta)|$ such that for a query of $\mathbf{q},\mathbf{p}\in\mathbb{F}_d^m$ and $|\beta|^2\leq \kappa m'$ it satisfies $\left||\hat{C}_{\mathcal{E}}((\mathbf{q},\mathbf{p}),\beta)| - |C_{\mathcal{E}}((\mathbf{q},\mathbf{p}),\beta)|\right|\leq \epsilon$ with a 2/3 success probability, then this learning scheme requires at least $N$ uses of $\mathcal{E}$ where 
\begin{itemize}
    \item If $c\leq d-1$, we have
    \begin{equation}
        N = \Omega\left(\frac{d^m (1+(0.99\kappa)^2)^{m'/2}}{c\epsilon^2}\right).
    \end{equation}
    \item If $c\geq d$, we have
    \begin{equation}
        N = \Omega\left(c(1+(0.99\kappa)^2)^{m'/2}\min\left(\frac{0.1225^2(d^m-1)}{c^2\epsilon^2e^{2/e}},\left(\frac{3}{4}\right)^2\left(\frac{0.1225d}{ec\epsilon}\right)^{2d}\right)\right).
    \end{equation}
\end{itemize}
\end{theorem}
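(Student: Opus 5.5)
The plan is to mirror the proof of Thm.~\ref{thm:qudit2qudit}, replacing the qudit output with a bosonic output as in the sketch of Thm.~\ref{thm:boson2bosoninformal}. The null channel is the replacement channel $\mathcal{E}_0(\cdot)=\text{Tr}(\cdot)\hat{\rho}_0$, with $\hat{\rho}_0=(1-\nu^2)^{m'}\nu^{2\hat n}$ a thermal state. The alternative channel is the measure-and-prepare channel
\begin{equation}
\mathcal{E}_{(\mathbf{q}_1,\mathbf{p}_1),\gamma_2}(\cdot)=\text{Tr}(\cdot)\hat{\rho}_0+\text{Tr}\left[\tfrac{\hat{E}_{d,m}(\mathbf{q}_1,\mathbf{p}_1)}{\sqrt{2}}(\cdot)\right]\hat{\rho}_0^{1/2}\hat{W}_{\gamma_2}\hat{\rho}_0^{1/2}.
\end{equation}
On the input side we use the two-outcome POVM of Eq.~\eqref{eq:quditpovm}, with $(\mathbf{q}_1,\mathbf{p}_1)$ uniform over nonzero phase-space points. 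On the output side we take $\hat{W}_{\gamma_2}=\epsilon_0\hat{H}(\gamma_2)$, with $\hat{H}$ as in Corollary~\ref{corr:bosonicEnorm}, and sample $\gamma_2\in\mathbb{C}^{m'}$ from a complex Gaussian with $2\sigma^2=0.99\kappa$. The output states $\hat{\rho}_{\pm,\gamma_2}=\hat{\rho}_0\pm\hat{\rho}_0^{1/2}\hat{W}_{\gamma_2}\hat{\rho}_0^{1/2}$ are positive as long as $\epsilon_0\|\hat H\|_{\mathrm{op}}\le 1$, i.e.\ $\epsilon_0\lesssim 1/\sqrt{2}$; alternatively one can use the three-peak construction referenced in the main text. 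This puts the channel in the form of Eq.~\eqref{eq:Epar1par2} with $l=1$, so it is an instance of Problem~\ref{prob:manyrevel}.

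\textbf{Reduction step.} Compute $C_{\mathcal{E}}((\mathbf{q}_1,\mathbf{p}_1),\beta)$ for the alternative channel at $\beta=\gamma_2$ (up to sign conventions). The input factor gives $\text{Tr}[\hat{E}_{d,m}\hat{D}^\dagger_{d,m}]/(\sqrt2 d^m)$, a constant of order $1/2$. The output factor gives $\epsilon_0\,\text{Tr}[\hat D(\beta)\hat{\rho}_0^{1/2}\hat H(\gamma_2)\hat{\rho}_0^{1/2}]$. Choosing $\nu$ small makes $\hat\rho_0$ close to vacuum, and this trace evaluates to a Gaussian factor of the form $e^{-c'|\gamma_2|^2}$. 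For the null channel, $|C_{\mathcal{E}_0}|=|\chi_{\hat\rho_0}(\beta)|$, which decays like $e^{-|\beta|^2(1+\nu^2)/(2(1-\nu^2))}$. To make the query admissible we need $|\gamma_2|^2\le\kappa m'$, and to keep the null value negligible we need $|\gamma_2|^2\ge 2\sigma^2$. Lemma~\ref{lem:gaussprob} says both hold with probability at least $0.546$, which is where the hypotheses $m'\ge 8$ and $\kappa\ge 2/m'$ enter. On that event, an $\epsilon$-accurate estimate with success probability $2/3$ separates the two hypotheses with advantage bounded away from $1/2$. Tuning $\epsilon_0=\Theta(\epsilon)$ together with $\nu$ should produce the explicit constants $0.1225$ and $e^{2/e}$.

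\textbf{Bounding $\Delta$ and counting.} Apply Lemma~\ref{lem:master} in its Corollary~\ref{corr:master} form, with $|\pmb x|=k$:
\begin{itemize}
\item The input norm is $\le 1/(d^m-1)$ for $k\not\equiv0 \bmod d$ and $\le 1$ otherwise, by Eq.~\eqref{eq:quditinputopnorm}.
\item The output norm, by Corollary~\ref{corr:bosonicEnorm} with $2k$ copies, is at most $\epsilon_0^{2k}2^{k}/(1+4k^2\sigma^4)^{m'/2}\le \epsilon_0^{2k}2^k(1+(0.99\kappa)^2)^{-m'/2}$, using $4\sigma^4=(0.99\kappa)^2$. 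Each $k\ge1$ thus contributes at most $2^k\epsilon_0^{2k}/d_{\mathrm{out}}^{m'}$.
\end{itemize}
For $c\le d-1$, every term has $k\not\equiv 0\bmod d$, so $\Delta\le((1+\sqrt2\epsilon_0)^c-1)^2/((d^m-1)d_{\mathrm{out}}^{m'})$. Lemma~\ref{lem:epssum} turns this into $O(c^2\epsilon^2)/(d^md_{\mathrm{out}}^{m'})$, and $N=cT$ then gives the first bound. For $c\ge d$, split the sum over $k$ as in Case 2 of Thm.~\ref{thm:qudit2qudit}. The terms with $k\equiv0\bmod d$ still retain the bosonic suppression, and they are bounded using $\binom{c}{kd}\le(ec/kd)^{kd}$ together with a geometric series, which is where the factor $4/3$ comes from. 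Taking $\min$ over the two resulting contributions yields the second bound.

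\textbf{Main obstacle.} I expect the hardest step to be the reduction: computing $C_{\mathcal{E}}$ at the revealed query for the thermal-sandwiched output, and showing that the signal stays $\Theta(\epsilon_0)$ while the null value drops below $\epsilon$ uniformly on the Lemma~\ref{lem:gaussprob} event. This is what fixes the constants. The $\Delta$ bound itself is routine given the operator-norm corollaries.
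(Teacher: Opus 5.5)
Your $\Delta$ bookkeeping matches the paper: input norms from Eq.~\eqref{eq:quditinputopnorm}, output norms from Corollary~\ref{corr:bosonicEnorm} with $4\sigma^4=(0.99\kappa)^2$, and the split at $c=d$. The reduction step, however, fails because you put the thermal state in the wrong regime.

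First, for $\hat{\rho}_{\pm,\gamma_2}$ to have unit trace you need $\text{Tr}[\hat{\rho}_0\hat{H}(\gamma_2)]=\sqrt{2}\cos\theta\,\chi_{\hat{\rho}_0}(\gamma_2)=0$. This forces $\hat{H}\propto i(\hat{D}(\gamma_2)-\hat{D}^\dagger(\gamma_2))$, which is the three-peak choice. At the revealed query the signal is then
\[
|C_{\mathcal{E}}((\mathbf{q}_1,\mathbf{p}_1),\gamma_2)|\propto\epsilon_0\,e^{-|\gamma_2|^2/\Sigma^2}\bigl(1-e^{-2|\gamma_2|^2/\sigma_{\mathrm{th}}^2}\bigr),
\]
with $\Sigma^2=(1+\nu)/(1-\nu)$ and $2\sigma_{\mathrm{th}}^2=1/\nu-\nu$.

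With your choice $\nu\to 0$ you get $\Sigma^2\to1$ and $\sigma_{\mathrm{th}}^2\to\infty$. The envelope becomes $e^{-|\gamma_2|^2}$, where typically $|\gamma_2|^2\approx 0.99\kappa m'$. The bracket also vanishes, because the peaks at $\pm\gamma_2$ overlap. At $\nu=0$ exactly, $\hat{\rho}_0^{1/2}\hat{H}\hat{\rho}_0^{1/2}=\langle 0|\hat{H}|0\rangle\,|0\rangle\langle 0|=0$, so the alternative channel coincides with $\mathcal{E}_0$. You would therefore need $\epsilon_0\gtrsim\epsilon\,e^{|\gamma_2|^2}$. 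For constant $\epsilon$ this violates positivity of $\hat{\rho}_{\pm,\gamma_2}$. Even as a formal bound, it more than cancels the factor $(1+(0.99\kappa)^2)^{m'/2}\le e^{0.99\kappa m'}$ in $1/\Delta$. So $\epsilon_0=\Theta(\epsilon)$ is not achievable in that regime.

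The missing idea is to take the opposite limit and make $\hat{\rho}_0$ hot: $\Sigma^2=\kappa m'/L$ with $L=O(1)$ (the paper lets $L\to 0$, i.e.\ $\nu\to1$). This is forced, since the sampling spread that produces the bosonic factor makes $|\gamma_2|^2\approx 0.99\kappa m'$. Then:
\begin{itemize}
\item $e^{-|\gamma_2|^2/\Sigma^2}\ge e^{-L}$ for every admissible $|\gamma_2|^2\le\kappa m'$.
\item $2\sigma_{\mathrm{th}}^2=4\Sigma^2/(\Sigma^4-1)=O(L/(\kappa m'))$ is tiny. The cutoff $|\gamma_2|^2\ge 2\sigma_{\mathrm{th}}^2$ is implied by the Lemma~\ref{lem:gaussprob} event once $2\sigma_{\mathrm{th}}^2\le 0.99\kappa$, and it keeps the bracket $\ge 1-e^{-4}$.
\item This gives $\epsilon_0=2\epsilon e^{L}/0.98\le 1/4$, which is where $0.1225=0.98/8$ comes from.
\end{itemize}
Heating the state costs nothing in the lower bound, because Lemma~\ref{lem:master} only involves $\mathbb{E}[\hat{W}^{\otimes 2}]$ and never $\hat{\rho}_0$.

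Two smaller corrections:
\begin{itemize}
\item The null value is exactly zero. $C_{\mathcal{E}_0}((\mathbf{q},\mathbf{p}),\beta)=\chi_{\hat{\rho}_0}(\beta)\,\text{Tr}[\hat{D}^\dagger_{d,m}(\mathbf{q},\mathbf{p})]/d^m=0$ for $(\mathbf{q},\mathbf{p})\neq(\mathbf{0},\mathbf{0})$. The lower cutoff on $|\gamma_2|^2$ is there to separate the $\pm\gamma_2$ peaks, not to suppress the null.
\item The factor $e^{2/e}$ comes from Lemma~\ref{lem:epssum} with $c\epsilon/0.1225\le 1/e$, not from the reduction.
\end{itemize}
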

\begin{proof}
    Consider the following two channels 
    \begin{gather}
        \mathcal{E}_0(\cdot) = \text{Tr}(\cdot)\hat{\rho}_0,\quad\mathcal{E}_{(\mathbf{q},\mathbf{p}),\gamma}(\cdot) = \text{Tr}\left[\hat{\Pi}_{+,(\mathbf{q},\mathbf{p})}(\cdot)\right]\hat{\rho}_\gamma + \text{Tr}\left[\hat{\Pi}_{-,(\mathbf{q},\mathbf{p})}(\cdot)\right]\hat{\rho}_{-\gamma},\\
        \hat{\Pi}_{\pm,(\mathbf{q},\mathbf{p})}  = \frac{\mathbb{I} \pm \hat{E}_{d,m}(\mathbf{q},\mathbf{p})/\sqrt{2}}{2},\\
        \hat{\rho}_{\gamma} = (1-\nu^2)^{m'}\nu^{\hat{n}}(\hat{D}(0) + 2i\epsilon_0(\hat{D}(\gamma) - \hat{D}(-\gamma)))\nu^{\hat{n}},
    \end{gather}
    where the operator $\hat{n}$ is the total photon number over $m'$ bosonic modes. We can rewrite $\mathcal{E}_{(\mathbf{q},\mathbf{p}),\gamma}$ in the form of Eq.~\eqref{eq:Epar1par2} with $l = 1$ as follows
    \begin{equation}
        \mathcal{E}_{(\mathbf{q},\mathbf{p}),\gamma}(\cdot) = \text{Tr}(\cdot)\hat{\rho}_0 + \text{Tr}(\hat{E}_{d,m}(\mathbf{q},\mathbf{p})(\cdot))( \hat{\rho}_{\gamma} - \hat{\rho}_0)/\sqrt{2} = \sum_{x\in\{0,1\}}\text{Tr}\left[(\cdot)\hat{K}_{x,(\mathbf{q},\mathbf{p})}\right]\sqrt{\hat{\rho}_0}\hat{W}_{x,\gamma}\sqrt{\hat{\rho}_0}.
    \end{equation}
    for $\hat{\rho}_0 = (1-\nu^2)^{m'}\nu^{2\hat{n}}$ and
    \begin{equation}
        \hat{K}_{0,(\mathbf{q},\mathbf{p})} = \mathbb{I}_{d,m},\quad \hat{K}_{1,(\mathbf{q},\mathbf{p})} = \hat{E}_{d,m}(\mathbf{q},\mathbf{p})/\sqrt{2},\quad \hat{W}_{0,\gamma} =\mathbb{I}_{\infty,m'},\quad \hat{W}_{1,\gamma} = 2\sqrt{2}\epsilon_0\frac{i(\hat{D}(\gamma) - \hat{D}^\dagger(\gamma))}{\sqrt{2}}.
    \end{equation}
    Suppose that $A$ randomly samples $\gamma$ and $(\mathbf{q},\mathbf{p})$ from the distribution
    \begin{equation}
        P(\gamma) = \frac{1}{(2\pi\sigma_{\gamma}^2)^{m'}}e^{-|\gamma|^2/2\sigma^2_{\gamma}}, \quad P(\mathbf{q},\mathbf{p}) = \begin{cases}
            0,& \text{if } (\mathbf{q},\mathbf{p})=(0,0)\\
            \frac{1}{d^{2m} - 1},& \text{otherwise}
        \end{cases},
    \end{equation}
    and sends this channel to $B$ as highlighted in Problem~\ref{prob:manyrevel}.
    
    \textbf{Success in Problem~\ref{prob:manyrevel} through learning $|C_{\mathcal{E}_{(\mathbf{q},\mathbf{p}),\gamma}}((\mathbf{q},\mathbf{p}),\beta)|$}: Note that we have
    \begin{equation}
        |C_{\mathcal{E}_{(\mathbf{q},\mathbf{p}),\gamma}}((\mathbf{q},\mathbf{p}),\gamma)| = \frac{1}{2}|\chi_{\hat{\rho}_{\gamma}- \hat{\rho}_0}(\gamma)|, \quad |C_{\mathcal{E}_{0}}((\mathbf{q},\mathbf{p}),\gamma)| = 0.
    \end{equation}
    We use the discrimination procedure for the revealed hypothesis test that tells apart $\chi_{\hat{\rho}_{\gamma}}$ from $\chi_{\hat{\rho}_{0}}$, given in Section S6.A of \cite{coroi2025exponentialadvantagecontinuousvariablequantum}. The required range of $\gamma$ is $2\sigma_\gamma^2\leq |\gamma|^2 \leq \kappa m'$, while $\epsilon
    _0  =2\epsilon e^{\kappa m'/\Sigma^2}/0.98$, where $\Sigma^2 =\frac{1+\nu}{1-\nu}$ is the variance parameter of the thermal state $\hat{\rho}_0$ and $L = \kappa m'/\Sigma^2$, giving $\epsilon_0 =2\epsilon e^L/0.98$. The hardest setting for the task is to take $L\to 0$ and since we require $\epsilon_0\leq 1/4$ this gives the upper bound on $\epsilon$ to be $0.98/8 = 0.1225$. 
    
    \textbf{Application of Corollary~\ref{corr:master} of Lemma~\ref{lem:master}}: Note that the channel $\mathcal{E}_{(\mathbf{q},\mathbf{p}),\gamma}$ takes on the exact form as the case examined in Corollary~\ref{corr:master}. We now evaluate the operator norms that are relevant to this setting. We define the following operators for $\pmb{x}\in\{0,1\}^c$
    \begin{equation}
        \hat{\omega}_{\pmb{x},\gamma} = (2\sqrt{2}\epsilon_0)^{|\pmb{x}|}\hat{\rho}_0^{1/2}\hat{W}_{\pmb{x},\gamma}\hat{\rho}_0^{1/2},\quad\hat{W}_{\pmb{x},\gamma} = \bigotimes_{i=1}^{c}\hat{o}_{x_i},\quad \hat{o}_0 =\mathbb{I},\quad\hat{o}_1 = \frac{i(\hat{D}(\gamma) - \hat{D}^\dagger(\gamma))}{\sqrt{2}}.
    \end{equation}
     From this we note that by using Corollary~\ref{corr:bosonicEnorm}, we have
     \begin{equation}
         \left\|\mathbb{E}_{\gamma}[\hat{W}_{\pmb{x},\gamma}^{\otimes 2}]\right\|_{\mathrm{op}} \leq \frac{(2\sqrt{2}\epsilon_0)^{2|\pmb{x}|}(\sqrt{2})^{2|\pmb{x}|}}{(1+4|\pmb{x}|^2\sigma_{\gamma}^4)^{m'/2}}\leq \frac{(\epsilon/0.1225)^{2|\pmb{x}|}}{(1+(2\sigma_{\gamma}^2)^2)^{m'/2}},
     \end{equation}
   and for the POVMs, we have
\begin{equation}
    \|\mathbb{E}_{(\mathbf{q},\mathbf{p})}[(\hat{\Pi}_{+,(\mathbf{q},\mathbf{p})}-\hat{\Pi}_{-,(\mathbf{q},\mathbf{p})})^{\otimes 2|\pmb{x}|}]\|_{\mathrm{op}} = \left\|\frac{1}{2^{|\pmb{x}|}}\mathbb{E}_{(\mathbf{q},\mathbf{p})}\left[\hat{E}_{d,m}(\mathbf{q},\mathbf{p})^{\otimes 2|\pmb{x}|}\right]\right\|_{\mathrm{op}}\leq \begin{cases}\frac{1}{d^{m}-1} & |\pmb{x}|\neq 0\mod d\\
    1&|\pmb{x}| = 0\mod d\end{cases}.
\end{equation}
\textbf{Sample complexity lower bound}:
\begin{itemize}
    \item Case 1: $c\leq d-1$. Using the previously derived Lemma~\ref{lem:master} and Corollary~\ref{corr:master}, we note that we have the inequalities satisfied for $c_1=O(1)$, $c_2 = 1$ and $d_{\mathrm{out}} = (1+(0.99\kappa)^2)^{m'/2}$, $d_{\mathrm{in}} = d^{m}-1$. This finally yields a learning tree of depth $T$ and a number of channel uses $N$ satisfying
\begin{equation}
    T= \Omega(d^{m}(1+(0.99\kappa)^2)^{m'/2}c^{-2}\epsilon^{-2}) \implies N= \Omega(d^{m}(1+(0.99\kappa)^2)^{m'/2}c^{-1}\epsilon^{-2}).
\end{equation}
\item Case 2: $c\geq d$. Using Lemma~\ref{lem:master}, we get that
\begin{equation}
\begin{aligned}
    \Delta &\leq \left(\sum_{k=1}^c\binom{c}{k}\sqrt{\frac{ (\epsilon/0.1225)^{2k}}{(1+(0.99\kappa)^2)^{m'/2}}\left\|\frac{1}{2^{k}}\mathbb{E}_{(\mathbf{q},\mathbf{p})}\left[\hat{E}_{d,m}(\mathbf{q},\mathbf{p})^{\otimes 2k}\right]\right\|_{\mathrm{op}}}\right)^2\\
    &\leq\frac{1}{(1+(0.99\kappa)^2)^{m'/2}}\left(\sum_{k=1,k\neq 0\mod d}^c\binom{c}{k}\frac{(\epsilon/0.1225)^{k}}{(d^{m}-1)^{1/2}} + \sum_{k=1}^{\lfloor c/d\rfloor}\binom{c}{kd}(\epsilon/0.1225)^{kd}\right)^2\\
    &\leq \frac{1}{(1+(0.99\kappa)^2)^{m'/2}}\left(\frac{c\epsilon e^{1/e}}{0.1225(d^m-1)^{1/2}} + \frac{4}{3}\left(\frac{ec\epsilon}{0.1225d}\right)^d\right)^2,
\end{aligned}
\end{equation}
which gives the lower bound on $N$ as 
\begin{equation}
    N = \Omega\left(c(1+(0.99\kappa)^2)^{m'/2}\min\left(\frac{0.1225^2(d^m-1)}{c^2\epsilon^2e^{2/e}},\left(\frac{3}{4}\right)^2\left(\frac{0.1225d}{ec\epsilon}\right)^{2d}\right)\right).
\end{equation}
\end{itemize}
\end{proof}

\begin{theorem}[No access to the complex-conjugate channel: bosonic $\to$ bosonic]\label{thm:boson2boson}
Let $\mathcal{E}\in \CPTP(\mathcal{H}_{\infty,m},\mathcal{H}_{\infty,m'})$ for $m,m'\geq 8$. Taking $\epsilon \leq 0.05$, $c =O(1/\epsilon)$, $\min(\kappa^2 m,\kappa'^2m') > 2/0.99$, consider any adaptive ancilla-assisted learning scheme that is allowed uses of the channel $\mathcal{E}^{\otimes c}$. If the learner can use this scheme to produce an estimate $|\hat{C}^{\mathrm{TMSV},r}_{\mathcal{E}}(\alpha,\beta)|$ (where $\cosh(2r)\geq 1.06\kappa m$) such that for a query of $\alpha\in\mathbb{C}^m,\beta\in \mathbb{C}^{m'}$ with $|\alpha|^2\leq \kappa m,|\beta|^2\leq \kappa 'm'$ it satisfies $$\left||\hat{C}^{\mathrm{TMSV},r}_{\mathcal{E}}(\alpha,\beta)| - |C^{\mathrm{TMSV},r}_{\mathcal{E}}(\alpha,\beta)|\right|\leq \epsilon,$$ with a 2/3 success probability, then this learning scheme requires at least $N$ uses of $\mathcal{E}$ where $$N = \Omega(c^{-1}\epsilon^{-2}(1 + (0.99\kappa')^2)^{m'/2}(1+(0.99\tanh^2(2r)\kappa)^2)^{m/2}).$$ 
\end{theorem}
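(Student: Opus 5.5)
I will follow the same template as Thms.~\ref{thm:qudit2qudit} and~\ref{thm:qudit2boson}, reducing to an instance of many-one channel discrimination with revelation (Problem~\ref{prob:manyrevel}) and then bounding $\Delta$ through Corollary~\ref{corr:master}. Take $\mathcal{E}_0(\cdot)=\mathrm{Tr}(\cdot)\hat{\rho}_0$ with $\hat{\rho}_0=(1-\nu^2)^{m'}\nu^{2\hat n}$ a thermal state. Take the alternative $\mathcal{E}_{\gamma_1,\gamma_2}$ of Eq.~\eqref{eq:Egam1gam2}, i.e.\ a measure-and-prepare channel. Its measurement is the bosonic two-outcome POVM $\hat{\Pi}_{\pm,\gamma_1}$ of Eq.~\eqref{eq:bosonic2outcome} with $\theta=-\pi/4$, so that $\hat{\Pi}_+-\hat{\Pi}_-=\hat{E}(\gamma_1)/2$. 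It prepares the three-peak states $\hat{\rho}_{\pm\gamma_2}=(1-\nu^2)^{m'}\nu^{\hat n}(\mathbb{I}\pm 2i\epsilon_0(\hat D(\gamma_2)-\hat D^\dagger(\gamma_2)))\nu^{\hat n}$, exactly as in Thm.~\ref{thm:qudit2boson}. Sample $\gamma_1\in\mathbb{C}^m$ and $\gamma_2\in\mathbb{C}^{m'}$ from centered Gaussians with $2\sigma_1^2=0.99\kappa\tanh^2(2r)$ and $2\sigma_2^2=0.99\kappa'$. The channel then has the $l=1$ form of Eq.~\eqref{eq:Epar1par2}, with $\hat K_{1,\gamma_1}\propto \hat E(\gamma_1)$ and $\hat W_{1,\gamma_2}=2\sqrt2\epsilon_0\, i(\hat D(\gamma_2)-\hat D^\dagger(\gamma_2))/\sqrt2$.

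\textbf{Reduction step.} After revelation, the learner queries $C^{\mathrm{TMSV},r}_{\mathcal{E}}(\alpha,\beta)$ at $\alpha=\gamma_1/\tanh(2r)$ and $\beta=\gamma_2$. For $\mathcal{E}_0$ this value vanishes, because $\mathrm{Tr}[\hat D^\dagger(\cdot)\hat\rho_{\mathrm{th},\alpha}]$ enters only through the trace, which is killed unless $\alpha=0$. For $\mathcal{E}_{\gamma_1,\gamma_2}$, Eq.~\eqref{eq:CETMSVr} gives
\[
\tfrac12 e^{-|\alpha|^2/(2\cosh 2r)}\,\mathrm{Tr}\bigl[\hat E(\gamma_1)\hat D^\dagger(\gamma_1)\hat\rho_{\mathrm{th},\alpha}\bigr]\cdot \chi_{\hat\rho_{\gamma_2}-\hat\rho_0}(\gamma_2).
\]
The input factor is a thermal-state characteristic-function evaluation. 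It contains a $\hat D(\gamma_1)\hat D^\dagger(\gamma_1)$ term of order one, while its cross term $\hat D^\dagger(\gamma_1)^2$ is Gaussian-suppressed since $|\gamma_1|^2\geq 2\sigma_1^2$. The condition $\cosh(2r)\geq 1.06\kappa m$ keeps the prefactor $e^{-|\alpha|^2/(2\cosh 2r)}$ bounded below by a constant, since $|\alpha|^2\leq\kappa m$. The output factor is handled by Sec.~S6.A of \cite{coroi2025exponentialadvantagecontinuousvariablequantum}, as in Thm.~\ref{thm:qudit2boson}: on the window $2\sigma_2^2\leq|\gamma_2|^2\leq\kappa' m'$ it is $\Theta(\epsilon_0)$, taking $L\to0$. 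By Lemma~\ref{lem:gaussprob}, which needs $m,m'\geq 8$ and $\min(\kappa^2m,\kappa'^2m')>2/0.99$ so the window is nonempty, both $\gamma_1$ and $\gamma_2$ land in their admissible windows with probability at least $0.546^2$. The discriminator guesses randomly when they do not, and otherwise succeeds with probability $2/3$. Combining these gives success probability strictly above $1/2$ with $\epsilon_0=\Theta(\epsilon)$, and $\epsilon\leq 0.05$ guarantees $\epsilon_0$ is small enough for $\hat\rho_{\pm\gamma_2}$ to be valid states. I expect this step to be the main obstacle. It requires careful bookkeeping of the constants (the $1.06$, the $0.99$, and the conditioning on the good event) so that the advantage over $1/2$ stays a fixed constant.

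\textbf{Bounding $\Delta$.} By Corollary~\ref{corr:bosonicEnorm} with $k=2|\pmb x|$, the input side gives
\[
\|\mathbb{E}_{\gamma_1}[(\hat\Pi_+-\hat\Pi_-)^{\otimes 2|\pmb x|}]\|_{\mathrm{op}}\leq (1+4|\pmb x|^2\sigma_1^4)^{-m/2}\leq d_{\mathrm{in}}^{-m},
\]
using $|\pmb x|\geq1$, and the powers of two cancel against the $1/2$ normalization. The output side gives $\|\mathbb{E}_{\gamma_2}[\hat W_{\pmb x}^{\otimes 2}]\|_{\mathrm{op}}\leq (C\epsilon)^{2|\pmb x|}d_{\mathrm{out}}^{-m'}$ for a constant $C$. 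Here there is no $|\pmb x|\bmod d$ transition, which is why every $c=O(1/\epsilon)$ stays hard. Corollary~\ref{corr:master} together with Lemma~\ref{lem:epssum} then yields $\Delta\leq ((1+C\epsilon)^c-1)^2/(d_{\mathrm{in}}^m d_{\mathrm{out}}^{m'})=O(c^2\epsilon^2/(d_{\mathrm{in}}^m d_{\mathrm{out}}^{m'}))$ whenever $c\epsilon=O(1)$. Lemma~\ref{lem:master} gives $T=\Omega(1/\Delta)$, and $N=cT$ yields the claimed $N=\Omega(c^{-1}\epsilon^{-2}d_{\mathrm{in}}^md_{\mathrm{out}}^{m'})$.
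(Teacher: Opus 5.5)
Your reduction step rests on a false claim: $C^{\mathrm{TMSV},r}_{\mathcal{E}_0}$ does \emph{not} vanish at the query.

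\textbf{Why the claim fails.} The reason you give (the trace ``is killed unless $\alpha=0$'') is the qudit fact $\mathrm{Tr}\,\hat D_{d,m}(\mathbf q,\mathbf p)=0$ carried over. The TMSV-regularized input $\hat D^\dagger(\alpha\tanh(2r))\hat\rho_{\mathrm{th},\alpha}$ is not traceless: its trace is a thermal characteristic function, which is a strictly positive Gaussian. Concretely,
\[
C^{\mathrm{TMSV},r}_{\mathcal{E}_0}\bigl(\gamma_1/\tanh(2r),\gamma_2\bigr)=e^{-\frac{|\gamma_1|^2\cosh(2r)}{2\tanh^2(2r)}}\,\chi_{\hat\rho_0}(\gamma_2)=:t_0>0 .
\]
The same baseline $t_0$ also appears inside $C^{\mathrm{TMSV},r}_{\mathcal{E}_{\gamma_1,\gamma_2}}$. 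Your displayed formula keeps only the perturbation $\delta$. (It also uses $\hat E(\gamma_1)/2$, whereas $\hat\Pi_{+,\gamma_1}-\hat\Pi_{-,\gamma_1}=\hat E(\gamma_1)/\sqrt2$.)

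\textbf{Why it matters.} The test is therefore not $0$ versus $\Theta(\epsilon_0)$ but $t_0$ versus $|t_0+\delta|$. Since only moduli are estimated, you must show that $\delta$ shifts the modulus by $\Omega(\epsilon_0)$. This is not automatic. If $\delta$ were in phase quadrature with $t_0$, then $|t_0+\delta|-t_0\le|\delta|^2/(2t_0)$. For $t_0=\Theta(1)$ this would force $\epsilon_0\gtrsim\sqrt{\epsilon}$ and degrade your bound to $\epsilon^{-1}$.

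\textbf{How the paper closes it.} The paper computes the query explicitly:
\[
C^{\mathrm{TMSV},r}_{\mathcal{E}_{\gamma_1,\gamma_2}}=t_0+\epsilon_0t_--i\epsilon_0t_+,
\qquad
t_\pm=\tfrac{1}{\sqrt2}\,e^{-\frac{|\gamma_1|^2\cosh(2r)}{2\sinh^2(2r)}}\,e^{-|\gamma_2|^2/\Sigma^2}\bigl(1\pm e^{-2|\gamma_1|^2\cosh(2r)}\bigr)\bigl(1-e^{-2|\gamma_2|^2/\sigma^2}\bigr)\ge0,
\]
where $\Sigma^2=\frac{1+\nu}{1-\nu}$ and $2\sigma^2=\nu^{-1}-\nu$. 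Hence the moduli differ by at least $\epsilon_0t_-$. On the window $2/\cosh(2r)\le|\gamma_1|^2\le\kappa m\tanh^2(2r)$, $2\sigma^2\le|\gamma_2|^2\le\kappa'm'$, one has $t_-\ge(1-e^{-4})^2e^{-L}/\sqrt{2e}$.

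The real part $\epsilon_0t_-$ exists because $\theta=-\pi/4$ gives
\[
\hat E(\gamma_1)\hat D^\dagger(\gamma_1)=\tfrac{1+i}{2}\mathbb{I}+\tfrac{1-i}{2}\hat D^\dagger(2\gamma_1).
\]
The imbalance $1-e^{-2|\gamma_1|^2\cosh(2r)}$ between these two pieces is exactly your cross-term suppression. Multiplying the purely imaginary output factor by this input factor therefore produces a positive real part aligned with $t_0$.

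\textbf{An alternative repair.} You can instead heat $\hat\rho_0$. As $\nu\to1$, $\chi_{\hat\rho_0}(\gamma_2)=e^{-|\gamma_2|^2\frac{1+\nu^2}{2(1-\nu^2)}}$ becomes negligible against $\epsilon_0$ on $|\gamma_2|^2\ge0.99\kappa'$. The triangle inequality then suffices, which is how Thm.~\ref{thm:boson_statelearn} proceeds. Your $\Delta$ bound does not depend on $\nu$, so this choice is free.

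With either repair, the rest of your proposal matches the paper: the priors, the success bookkeeping via Lemma~\ref{lem:gaussprob}, and the $\Delta$ bound via Corollaries~\ref{corr:bosonicEnorm} and~\ref{corr:master}.
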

\begin{proof}
    We consider two channels defined as follows
    \begin{gather}
        \mathcal{E}_0(\cdot) = \text{Tr}(\cdot)\hat{\rho}_0,\quad\mathcal{E}_{\gamma_1,\gamma_2}(\cdot) = \text{Tr}\left[\hat{\Pi}_{+,\gamma_1}(\cdot)\right]\hat{\rho}_{\gamma_2} + \text{Tr}\left[\hat{\Pi}_{-,\gamma_1}(\cdot)\right]\hat{\rho}_{-\gamma_2},\\
        \hat{\Pi}_{\pm,\gamma_1} = \frac{\mathbb{I} \pm \hat{E}(\gamma_1)/\sqrt{2}}{2},\quad\hat{\rho}_{\gamma_2} = (1-\nu^2)^{m'}\nu^{\hat n}(\hat{D}(0) + 2i\epsilon_0(\hat{D}(\gamma_2) - \hat{D}^\dagger(\gamma_2)))\nu^{\hat n},
    \end{gather}
    where the operator $\hat{n}$ is the total photon number for $m'$ bosonic modes. We can rewrite $\mathcal{E}_{\gamma_1,\gamma_2}$ taking the form of Eq.~\eqref{eq:Epar1par2} with $l = 1$
    \begin{equation}
        \mathcal{E}_{\gamma_1,\gamma_2}(\cdot) = \text{Tr}(\cdot)\hat{\rho}_{0} + \text{Tr}(\hat{E}(\gamma_1)(\cdot))(\hat{\rho}_{\gamma_2} - \hat{\rho}_0)/\sqrt{2} = \sum_{x\in\{0,1\}}\text{Tr}\left[(\cdot)\hat{K}_{x,\gamma_1}\right]\sqrt{\hat{\rho}_0}\hat{W}_{x,\gamma_2}\sqrt{\hat{\rho}_0},
    \end{equation}
   for $\hat{\rho}_0 = (1-\nu^2)^{m'}\nu^{2\hat{n}}$ and
    \begin{equation}
        \hat{K}_{0,\gamma_1} = \mathbb{I}_{\infty,m},\quad \hat{K}_{1,\gamma_1} = \hat{E}(\gamma_1)/\sqrt{2},\quad \hat{W}_{0,\gamma_2}=\mathbb{I}_{\infty,m'},\quad\hat{W}_{1,\gamma_2} = 2\sqrt{2}\epsilon_0\frac{i(\hat{D}(\gamma_2) - \hat{D}^\dagger(\gamma_2))}{\sqrt{2}}.
    \end{equation}

    \textbf{Success in Problem~\ref{prob:manyrevel} through learning $C^{\mathrm{TMSV},r}_{\mathcal{E}}$}: We consider the setting of Problem~\ref{prob:manyrevel} where $\gamma_1\in \mathbb{C}^m$ and $\gamma_2\in \mathbb{C}^{m'}$ are sampled from the following two distributions
    \begin{equation}
        P(\gamma_1) = \frac{1}{(2\pi\sigma_1^2)^{m}}e^{-\frac{|\gamma_1|^2}{2\sigma_1^2}},\quad P(\gamma_2) = \frac{1}{(2\pi\sigma_2^2)^{m'}}e^{-\frac{|\gamma_2|^2}{2\sigma_2^2}}.
    \end{equation}
    We now define the following parameters
    \begin{equation}
        \Sigma^2 = \frac{1+\nu}{1-\nu} = \frac{\kappa' m'}{L},\quad 2\sigma^2 = \frac{1}{\nu} - \nu = \frac{\frac{L}{\kappa'm'}}{\sqrt{1 - \left(\frac{L}{\kappa' m'}\right)^2}},
    \end{equation}
    where we choose constant $L>0$. We now have
    \begin{align}
    C^{\mathrm{TMSV},r}_{\mathcal{E}_{\gamma_1,\gamma_2}}(\gamma_1/\tanh(2r),\gamma_2) &= e^{-\frac{|\gamma_1|^2\cosh(2r)}{2\sinh^2(2r)}}\left(\frac{1+i}{2\sqrt{2}} + \frac{1-i}{2\sqrt{2}}e^{-2|\gamma_1|^2\cosh(2r)}\right)(\chi_{\hat{\rho}_{\gamma_2}}(\gamma_2) - \chi_{\hat{\rho}_0}(\gamma_2)) + e^{-\frac{|\gamma_1|^2\cosh(2r)}{2\tanh^2(2r)}}\chi_{\hat{\rho}_0}(\gamma_2)\nonumber\\
    &= -\frac{i\epsilon_0}{\sqrt{2}}e^{-\frac{|\gamma_1|^2\cosh(2r)}{2\sinh^2(2r)}}e^{-\frac{|\gamma_2|^2}{\Sigma^2}}(1+e^{-2|\gamma_1|^2\cosh(2r)})(1-e^{-2|\gamma_2|^2/\sigma^2})  + e^{-\frac{|\gamma_1|^2\cosh(2r)}{2\tanh^2(2r)}}e^{-\frac{|\gamma_2|^2}{2\Sigma^2}}e^{-\frac{|\gamma_2|^2}{2\sigma^2}}\nonumber\\
    &\quad\quad+\frac{\epsilon_0}{\sqrt{2}}e^{-\frac{|\gamma_1|^2\cosh(2r)}{2\sinh^2(2r)}}e^{-\frac{|\gamma_2|^2}{\Sigma^2}}(1-e^{-2|\gamma_1|^2\cosh(2r)})(1-e^{-2|\gamma_2|^2/\sigma^2}),
    \\C^{\mathrm{TMSV},r}_{\mathcal{E}_{0}}(\gamma_1/\tanh(2r),\gamma_2) &=e^{-\frac{|\gamma_1|^2\cosh(2r)}{2\tanh^2(2r)}}\chi_{\hat{\rho}_0}(\gamma_2) = e^{-\frac{|\gamma_1|^2\cosh(2r)}{2\tanh^2(2r)}}e^{-\frac{|\gamma_2|^2}{2\Sigma^2}}e^{-\frac{|\gamma_2|^2}{2\sigma^2}}.
    \end{align}
    We can simplify the above expressions as follows
    \begin{equation}
        C^{\mathrm{TMSV},r}_{\mathcal{E}_{\gamma_1,\gamma_2}}(\gamma_1/\tanh(2r),\gamma_2) = -i\epsilon_0t_+ + t_0 + \epsilon_0t_-,\quad C^{\mathrm{TMSV},r}_{\mathcal{E}_{0}}(\gamma_1/\tanh(2r),\gamma_2) = t_0,
    \end{equation}
    where we define
    \begin{equation}
            t_\pm  = \frac{1}{\sqrt{2}}e^{-\frac{|\gamma_1|^2\cosh(2r)}{2\sinh^2(2r)}}e^{-\frac{|\gamma_2|^2}{\Sigma^2}}(1\pm e^{-2|\gamma_1|^2\cosh(2r)})(1-e^{-2|\gamma_2|^2/\sigma^2}),\quad t_0 = e^{-\frac{|\gamma_1|^2\cosh(2r)}{2\tanh^2(2r)}}e^{-\frac{|\gamma_2|^2}{2\Sigma^2}}e^{-\frac{|\gamma_2|^2}{2\sigma^2}}.
    \end{equation}
    For sufficient distinguishability, we choose that the parameters $\gamma_1$ and $\gamma_2$ satisfy
    \begin{equation}\label{eq:b10succcondition}
    \frac{2}{\cosh(2r)}\leq |\gamma_1|^2\leq\kappa m\tanh^2(2r),\quad 2\sigma^2\leq|\gamma_2|^2\leq \kappa' m',
     \end{equation}
     which along with ensuring $\tanh(2r)\sinh(2r)\geq \kappa m$ (giving $\cosh(2r)\geq \kappa m$) yields
     \begin{equation}
       t_-\geq \frac{1}{\sqrt{2e}}(1-e^{-4})^2e^{-L}.
     \end{equation}
     Note that the values $t_0,t_\pm$ are all positive and so we clearly have 
     \begin{equation}
         \left||C^{\mathrm{TMSV},r}_{\mathcal{E}_{\gamma_1,\gamma_2}}(\gamma_1/\tanh(2r),\gamma_2)| - |C^{\mathrm{TMSV},r}_{\mathcal{E}_{0}}(\gamma_1/\tanh(2r),\gamma_2)|\right| \geq \epsilon_0t_-.
     \end{equation}
     Hence for success in the learning scheme, we have to ensure that $2\epsilon < \epsilon_0t_-$, which we can ensure by taking
     \begin{equation}
         \epsilon = \epsilon_0\frac{1}{2\sqrt{2e}}(1-e^{-4})^2e^{-L}.
     \end{equation}
    Note that $\epsilon_0\leq 1/4$ and the hardest setting of the problem is to have $L\to 0$. Hence, we can always construct a hard learning task as long as $\epsilon\leq 0.05$. If $B$ uses this learning scheme when the norm bounds on $\gamma_1$ and $\gamma_2$ are satisfied, the probability of success for the discrimination is given by
    \begin{equation} 
        \begin{aligned}
            \Pr(\mathrm{success}) &\geq 1/2(1 - \Pr_{\gamma_1,\gamma_2}(2\sigma^2 < |\gamma_2|^2\leq \kappa' m', 2/\cosh(2r)<|\gamma_1|^2\leq \kappa m\tanh^2(2r))) \\&\quad\quad+ (2/3)\Pr_{\gamma_1,\gamma_2}(2\sigma^2 < |\gamma_2|^2\leq \kappa' m', 2/\cosh(2r)<|\gamma_1|^2\leq \kappa m\tanh^2(2r))\\
            &= \frac{1}{2} + \frac{1}{6}\Pr_{\gamma_1,\gamma_2}(2\sigma^2 < |\gamma_2|^2\leq \kappa' m', 2/\cosh(2r)<|\gamma_1|^2\leq \kappa m\tanh^2(2r)).
        \end{aligned}
    \end{equation}
   Here we set $2\sigma_1^2 = 0.99\kappa \tanh^2(2r)$ and $2\sigma_2^2 = 0.99\kappa'$. In the range of $L < \kappa' m'/1.16$, we will always have $2\sigma^2  < \frac{2L}{\kappa'm'}$. Hence on ensuring both $0.99\kappa'^2m'\geq 2$ and $0.99\kappa^2m\geq 2$, we have $0.99\kappa'\geq 2\sigma^2$ and $0.99\kappa\tanh^2(2r)\geq \frac{2}{\cosh(2r)}$ (using $\sinh(2r)\tanh(2r)\geq \kappa m$). This gives $\Pr_{\gamma_2}(2\sigma^2<|\gamma_2|^2\leq \kappa' m') \geq \Pr_{\gamma_2}(2\sigma_2^2<|\gamma_2|^2\leq 2m'\sigma_2^2/0.99)$ and $\Pr_{\gamma_1}(\tanh^2(2r)<|\gamma_2|^2\leq \kappa m\tanh^2(2r)) \geq \Pr_{\gamma_2}(2\sigma_1^2<|\gamma_2|^2\leq 2m\sigma_1^2/0.99)$. From this we can verify using Lemma~\ref{lem:gaussprob} (since both $m,m'\geq8$) that 
    \begin{equation}
        \Pr(\mathrm{success})\geq \frac{1}{2}\left(1 + \frac{0.298}{3}\right),
    \end{equation}
    which gives a lower bound on the total variational distance if the learning scheme succeeds with probability $2/3$. Since we also have $m>8$ and $\kappa^2 m>2$, by having $\cosh(2r)>1.06\kappa m$, we can ensure $\sinh(2r)\tanh(2r)\geq \kappa m$ in the relevant parameter range.
    
    \textbf{Application of Lemma~\ref{lem:master}}: Note that the channel $\mathcal{E}_{\gamma_1,\gamma_2}$ takes on the exact form as the case examined in Corollary~\ref{corr:master}. We now evaluate the operator norms that are relevant to this setting. We define the following operators for $\pmb{x}\in\{0,1\}^c$
    \begin{equation}
        \hat{K}_{\pmb{x},\gamma_1} = \bigotimes_{i=1}^{c}\hat{K}_{x_i,\gamma_1},\quad \hat{W}_{\pmb{x},\gamma_2} = \bigotimes_{i=1}^c \hat{W}_{x_i,\gamma_2}.
    \end{equation}
     From this we note that by using Corollary~\ref{corr:bosonicEnorm}, we have
     \begin{equation}
         \left\|\mathbb{E}_{\gamma_2}[\hat{W}_{\pmb{x},\gamma_2}^{\otimes 2}]\right\|_{\mathrm{op}} \leq \frac{(2\sqrt{2}\epsilon_0)^{2|\pmb{x}|}2^{|\pmb{x}|}}{(1+4|\pmb{x}|^2\sigma_2^4)^{m'/2}}\leq \frac{(4\epsilon_0)^{2|\pmb{x}|}}{(1+4\sigma_2^4)^{m'/2}},
     \end{equation}
   and we further have
\begin{equation}
    \|\mathbb{E}_{\gamma_1}\hat{K}_{\pmb{x},\gamma_1}^{\otimes 2}\|_{\mathrm{op}} = \frac{1}{2^{|\pmb{x}|}}\left\|\int d^{2m}\gamma_1 P(\gamma_1)\hat{E}^{\otimes 2|\pmb{x}|}(\gamma_1)\right\|_{\mathrm{op}} \leq \frac{1}{(1+4|\pmb{x}|^2\sigma
    _1^4)^{m/2}}.
\end{equation}
\textbf{Sample complexity lower bound}:
Using the previously derived Lemma~\ref{lem:master} and Corollary~\ref{corr:master}, we note that we have the inequalities satisfied for $c_1,c_2,c_3,c_4 = O(1)$ and $d_{\mathrm{out}} = (1+0.99^2\kappa'^2)^{m'/2}$, $d_{\mathrm{in}} = (1+0.99^2\kappa^2\tanh^4(2r))^{m/2}$. This finally yields a learning tree of depth $T$ and a number of channel uses $N$ satisfying
\begin{equation}
\begin{aligned}
    &T= \Omega((1+0.99^2\tanh^4(2r)\kappa^2)^{m/2}(1+0.99^2\kappa'^2)^{m'/2}c^{-2}\epsilon^{-2}) \\&\implies N= \Omega((1+0.99^2\tanh^4(2r)\kappa^2)^{m/2}(1+0.99^2\kappa'^2)^{m'/2}c^{-1}\epsilon^{-2}).
\end{aligned}
\end{equation}
\end{proof}

\subsection{Lower bounds for single-copy access to self-complex-conjugate channels}\label{app:lowerbound_selfconjugate}

\begin{lemma}
Defining, for $d$ prime and any $\theta\in\mathbb{R}$, the operator \begin{equation}
        \hat{O} = \frac{1}{(d^{m} - 1)^2}\sum_{\mathbf{q}\in \mathbb{F}_d^m\setminus\{\mathbf{0}\},\mathbf{q}\in \mathbb{F}_d^m\setminus\{\mathbf{0}\}} \left(\frac{e^{i\theta}\hat{D}_{d,m}(\mathbf{q},\mathbf{p}) + e^{-i\theta}\hat{D}_{d,m}(-\mathbf{q},-\mathbf{p})}{\sqrt{2}}\right)^{\otimes 2k},
    \end{equation}
    we have
    \begin{equation}
        \|\hat{O}\|_{\mathrm{op}} \leq \begin{cases}\frac{2^{k}(d^m+1)}{(d^m-1)^2}& k\neq 0 \mod d\\
        2^k & k = 0\mod d
        \end{cases}.
    \end{equation}
\end{lemma}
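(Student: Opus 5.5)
The plan is to reduce the statement to operator norms of bare displacement sums, following the template of Corollary~\ref{corr:quditEopnorm}, and then remove the constraints $\mathbf{q}\neq\mathbf{0}$, $\mathbf{p}\neq\mathbf{0}$ by inclusion--exclusion. I read the summation as running over $\mathbf{q}\in\mathbb{F}_d^m\setminus\{\mathbf{0}\}$ and $\mathbf{p}\in\mathbb{F}_d^m\setminus\{\mathbf{0}\}$; the normalization $(d^m-1)^2$ matches this.

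\textbf{Expanding the tensor power.} Expanding $\big((e^{i\theta}\hat{D}_{d,m}(\mathbf{q},\mathbf{p})+e^{-i\theta}\hat{D}_{d,m}(-\mathbf{q},-\mathbf{p}))/\sqrt{2}\big)^{\otimes 2k}$ gives $2^{-k}\sum_{s\in\{\pm1\}^{2k}}e^{i\theta\sum_i s_i}\bigotimes_i\hat{D}_{d,m}(s_i\mathbf{q},s_i\mathbf{p})$. By the triangle inequality it then suffices to show, for each fixed sign string $s$, that
\[
\Big\|\sum_{\mathbf{q}\neq\mathbf{0},\,\mathbf{p}\neq\mathbf{0}}\bigotimes_{i=1}^{2k}\hat{D}_{d,m}(s_i\mathbf{q},s_i\mathbf{p})\Big\|_{\mathrm{op}}\leq d^m+1 \quad (k\neq 0\bmod d).
\]
Summing over the $2^{2k}$ sign strings with the prefactor $2^{-k}$ then gives $2^k(d^m+1)/(d^m-1)^2$. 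The case $k=0\bmod d$ needs no structure: the sum has at most $2^{2k}(d^m-1)^2$ terms, each a displacement divided by $2^k(d^m-1)^2$, so the triangle inequality gives $2^k$. Conjugating by $\hat{P}_d^{\otimes m}$ on the tensor factors with $s_i=-1$ (as in Corollary~\ref{corr:dispparityflip}) reduces each sign string to $s=\mathbf{1}$, but I will keep general $s$ in mind because the relevant linear form is $\sum_i s_i\mathbf{a}^{(i)}$.

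\textbf{Inclusion--exclusion.} For a fixed $s$, write the restricted sum as $S=F-A-B+\mathbb{I}$. Here $F$ is the unrestricted sum over all $(\mathbf{q},\mathbf{p})$, $A$ is the sum over $\mathbf{q}=\mathbf{0}$ (all $\mathbf{p}$), and $B$ is the sum over $\mathbf{p}=\mathbf{0}$ (all $\mathbf{q}$); the $(\mathbf{0},\mathbf{0})$ term is the identity. The computation in the proof of Lemma~\ref{lem:discr_disp_tensor} identifies each piece.
\begin{itemize}
\item $F$ acts as $d^m$ times a permutation of the computational basis. For $k\neq 0\bmod d$ it sends $\ket{\mathbf{a}}$ to $d^m\ket{\mathbf{a}+\mathbf{q}^*(\mathbf{a})\mathbf{1}}$, where $\mathbf{q}^*(\mathbf{a})$ is the unique solution determined by $\sum_i s_i\mathbf{a}^{(i)}$.
\item $A=d^m\Pi_Z$, where $\Pi_Z$ is the diagonal projector onto basis states with $\sum_i s_i\mathbf{a}^{(i)}=0$. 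These are exactly the states with $\mathbf{q}^*(\mathbf{a})=\mathbf{0}$.
\item $B=d^m\Pi_X$, where $\Pi_X$ is the projector onto states invariant under the joint shifts $\bigotimes_i\hat{X}^{s_i\mathbf{q}}$; the phase $e^{i\pi\mathbf{q}\cdot\mathbf{p}/d}$ is trivial when $\mathbf{p}=\mathbf{0}$.
\end{itemize}
From this, $F-A$ is $d^m$ times a partial permutation supported on $\mathrm{ran}(\mathbb{I}-\Pi_Z)$, so $\|F-A\|_{\mathrm{op}}\leq d^m$. Also $B-\mathbb{I}$ has eigenvalues $d^m-1$ and $-1$.

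\textbf{Main obstacle and first attempt.} The naive triangle inequality on $(F-A)-(B-\mathbb{I})$ only gives about $2d^m$, so the crux is to show that $F-A$ and $B$ essentially do not add up. My first attempt is to decompose the Hilbert space into joint-shift orbits. On each orbit, $B$ acts as $d^m$ times the projector onto the uniform superposition $\ket{u}$ of that orbit. The permutation $F$ preserves orbits and fixes $\ket{u}$; precisely, $F\ket{u}=d^m\ket{u}$ because $F$ permutes the orbit and conjugation by $\hat{P}_d$ preserves this. Meanwhile $A$, being the projector onto the unique fixed point of that permutation inside the orbit, satisfies $A\ket{u}=d^{m}\cdot d^{-m/2}\ket{\text{fixed point}}$. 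Within each orbit block one then computes $S$ explicitly as a $d^m\times d^m$ matrix: $d^m$ times a permutation, minus rank-one corrections, plus $\mathbb{I}$. The goal is to bound its norm by $d^m+1$, for instance by showing that $d^{-m}(F-A-B)$ is a contraction on the orbit and adding $\|\mathbb{I}\|_{\mathrm{op}}=1$. If the exact block structure proves messy, a fallback is to bound $\|S\|_{\mathrm{op}}^2=\|S^\dagger S\|_{\mathrm{op}}$ directly using the orthogonality relations $\mathrm{Tr}[\hat{D}^\dagger\hat{D}']=d^m\delta$. A weaker constant $c\,d^m$ in place of $d^m+1$ would still suffice for the downstream $c=1$ self-conjugate application.
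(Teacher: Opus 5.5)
Your argument is correct, and it takes a genuinely different and sounder route than the paper.

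The paper applies the restricted sum to a basis state and asserts that it returns $-\ket{\mathbf{a}}$ when $\sum_i\mathbf{a}^{(i)}=\mathbf{0}$, and $d^m\ket{P\mathbf{a}}-\ket{\mathbf{a}}$ otherwise. Adding $\mathbb{I}$ then leaves $d^m$ times a partial permutation, and the bound $d^m+1$ is immediate. In your notation, this is the claim that the operator equals $F-A-\mathbb{I}$. But the $-1$ produced by excluding $\mathbf{p}=\mathbf{0}$ multiplies $\ket{\mathbf{a}+\mathbf{q}\mathbf{1}}$ for \emph{every} $\mathbf{q}\neq\mathbf{0}$. The correct operator is therefore your $F-A-(B-\mathbb{I})$. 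For example, with $d=3$ and $m=k=1$ one finds $\ket{00}\mapsto-\ket{11}-\ket{22}$, not $-\ket{00}$. The term $B$ that you flag as the main obstacle is exactly what the paper's shortcut drops. Your orbit-block analysis is what actually establishes the stated bound.

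Your contraction step closes in a few lines, so the fallback is unnecessary. On an orbit block, put $R=d^{-m}F$, and let $\ket{f}$ be the fixed point and $\ket{u}$ the uniform vector.
\begin{itemize}
\item $R$ is unitary and fixes both vectors, so it preserves $V=\mathrm{span}\{\ket{f},\ket{u}\}$ and $V^{\perp}$.
\item On $V^{\perp}$, the operator $R-\ket{f}\bra{f}-\ket{u}\bra{u}$ equals $R$, which has norm $1$.
\item On $V$, it equals $\mathbb{I}_V-\ket{f}\bra{f}-\ket{u}\bra{u}$, whose eigenvalues are $\pm|\langle f|u\rangle|=\pm d^{-m/2}$.
\end{itemize}
Hence $\|F-A-B\|_{\mathrm{op}}\le d^m$, and the restricted sum has norm at most $d^m+1$. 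Equality holds once $R$ has a $+1$ eigenvector in $V^{\perp}$.

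One caveat concerns $d=2$. Uniqueness of the fixed point in each orbit uses that the joint shift by $\mathbf{q}$ changes $\sum_i s_i\mathbf{a}^{(i)}$ by $2k\mathbf{q}$, with $2k$ invertible mod $d$; this fails for $d=2$. There, $\sum_i s_i\mathbf{a}^{(i)}$ is constant on each orbit, so on a given block either $A=0$ or $F=A=d^m\mathbb{I}$. Correspondingly, $d^{-m}(F-A-B)$ is either $R-\ket{u}\bra{u}$ or $-\ket{u}\bra{u}$, and both are contractions by the same argument.
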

\begin{proof}
    We first consider the action of the operator $\sum_{\mathbf{q},\mathbf{p}\in \mathbb{F}_d^m\setminus\{\mathbf{0}\}}\hat{D}_{d,m}(\mathbf{q},\mathbf{p})^{\otimes 2k}$ on the basis states of $\mathcal{H}_{d,m}$ where $\mathbf{a}^{(i)}\in\mathbb{F}^m_d$ giving,
    \begin{equation}
        \begin{aligned}
            \sum_{\mathbf{q},\mathbf{p}\in\mathbb{F}^{m}_d\setminus\{\mathbf{0}\}}\hat{D}_{d,m}(\mathbf{q},\mathbf{p})^{\otimes 2k}\bigotimes_{i=1}^{2k}\ket{\mathbf{a}^{(i)}} =\sum_{\mathbf{q},\mathbf{p}\in\mathbb{F}^{m}_d\setminus\{\mathbf{0}\}}e^{i\frac{2\pi}{d}\mathbf{p}\cdot(k\mathbf{q} + \sum_{i'=1}^{2k}\mathbf{a}^{(i')})}\bigotimes_{i=1}^{2k}\ket{\mathbf{a}^{(i)} + \mathbf{q}}.
        \end{aligned}
    \end{equation}
    Similar to Eq.~\eqref{eq:disp_tensor_phase_sum} we have
    \begin{equation}
        \sum_{\mathbf{p}\in\mathbb{F}^{m}_d\setminus\{\mathbf{0}\}}e^{i\frac{2\pi}{d}\mathbf{p}\cdot(k\mathbf{q} + \sum_{i=1}^{2k}\mathbf{a}^{(i)})} = \prod_{j=1}^m\left(\sum_{p\in\mathbb{F}_d}e^{i\frac{2\pi}{d}p(kq_j + \sum_{i=1}^{2k}a^{(i)}_j)}\right) - 1= d^{m}\delta_{k\mathbf{q} + \sum_{i=1}^{2k}\mathbf{a}^{(i)}\mod d,0} - 1.
    \end{equation}
    \textbf{Case 1:} If $k\neq 0\mod d$, we can have a unique $g \in \mathbb{F}_d^m$ such that $gk = 1\mod d$ giving a unique $q_j = -g\sum_{i=1}^{2k}a_j^{(i)}$ for all $j$. Note that in the case that $\sum_{i=1}^{2k}\mathbf{a}^{(i)}\equiv \mathbf{0}\mod d$, this value also equals zero. Using the permutation matrix $P$ defined in Lemma~\ref{lem:discr_disp_tensor}, we have
    \begin{equation}
        \sum_{\mathbf{q},\mathbf{p}\in\mathbb{F}^{m}_d\setminus\{\mathbf{0}\}}\hat{D}_{d,m}(\mathbf{q},\mathbf{p})^{\otimes 2k}\ket{\mathbf{a}} =\begin{cases}
            -\ket{\mathbf{a}}& \text{if }\sum_{i=1}^{2k}\mathbf{a}^{(i)} =\mathbf{0}\mod d\\d^m\ket{P\mathbf{a}} - \ket{\mathbf{a}} & \text{otherwise}            
        \end{cases},
    \end{equation}
which shows that the operator $\sum_{\mathbf{q},\mathbf{p}\in\mathbb{F}^{m}_d\setminus\{\mathbf{0}\}}\hat{D}_{d,m}(\mathbf{q},\mathbf{p})^{\otimes 2k}+\mathbb{I}$ equals $d^m$ times a permutation matrix over a subspace of $\mathcal{H}_{d,m}$, hence must have operator norm of $d^m$. By triangle inequality, this gives 
\begin{equation}
    \left\|\sum_{\mathbf{q},\mathbf{p}\in\mathbb{F}^{m}_d\setminus\{\mathbf{0}\}}\hat{D}_{d,m}(\mathbf{q},\mathbf{p})^{\otimes 2k}+\mathbb{I}\right\|_{\mathrm{op}} = d^m\implies \left\|\sum_{\mathbf{q},\mathbf{p}\in\mathbb{F}^{m}_d\setminus\{\mathbf{0}\}}\hat{D}_{d,m}(\mathbf{q},\mathbf{p})^{\otimes 2k}\right\|_{\mathrm{op}}\leq d^m + 1
\end{equation}
\textbf{Case 2:} if $k =0\mod d$, we make note of the fact that this is a sum of $(d^m -1)^2$ unitary operators. This gives the triangle inequality upper bound 
\begin{equation}
    \left\|\sum_{\mathbf{q},\mathbf{p}\in\mathbb{F}^{m}_d\setminus\{\mathbf{0}\}}\hat{D}_{d,m}(\mathbf{q},\mathbf{p})^{\otimes 2k}\right\|_{\mathrm{op}}\leq (d^m - 1)^2.
\end{equation}
Using a similar argument as the proof for Corollary~\ref{corr:quditEopnorm}, we can upper bound the operator norm for $\hat{O}$ as $2^k$ times the operator $\sum_{\mathbf{q},\mathbf{p}\in\mathbb{F}^{m}_d\setminus\{\mathbf{0}\}}\hat{D}_{d,m}(\mathbf{q},\mathbf{p})^{\otimes 2k}$. Plugging in the relations derived here, this gives us the claimed result for $\|\hat{O}\|_{\mathrm{op}}$.
\end{proof}

\begin{theorem}[No access to joint measurements: qudit $\to$ qudit] Let $\mathcal{E}\in \CPTP(\mathcal{H}_{d,m},\mathcal{H}_{d',m'})$ ($d'$ and $d$ being prime) which also satisfies the condition that $\mathcal{E}=\mathcal{E}^*$ (it equals its own complex-conjugate). Taking $0 < \epsilon\leq1/16$, assume an adaptive ancilla-assisted learning scheme which is allowed one use of the channel per measurement. If the learner can use this scheme to produce an estimate $\hat{C}_{\mathcal{E}}((\mathbf{q},\mathbf{p}),(\mathbf{q}',\mathbf{p}'))$ such that for a query of $\mathbf{q},\mathbf{p}\in \mathbb{F}^{m}_{d}$ and $\mathbf{q}',\mathbf{p}'\in\mathbb{F}^{m'}_{d'}$ it satisfies $\left||\hat{C}_{\mathcal{E}}((\mathbf{q},\mathbf{p}),(\mathbf{q}',\mathbf{p}'))| - |C_{\mathcal{E}}((\mathbf{q},\mathbf{p}),(\mathbf{q}',\mathbf{p}'))|\right|\leq \epsilon$ with a 2/3 success probability, then this learning scheme requires at least $N = \Omega(d^{m}d'^{m'}\epsilon^{-2})$ uses of the channel to complete the task.\label{thm:qudit2quditselfconj}
\end{theorem}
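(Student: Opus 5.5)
We will follow the template used for Thm.~\ref{thm:qudit2qudit}, now with the self-conjugate channel
\begin{equation*}
\mathcal{E}^{\text{self-conj}}_{(\mathbf{q}_1,\mathbf{p}_1),(\mathbf{q}_2,\mathbf{p}_2)} = \tfrac{1}{2}\mathcal{E}_{(\mathbf{q}_1,\mathbf{p}_1),(\mathbf{q}_2,\mathbf{p}_2)}+\tfrac{1}{2}\mathcal{E}_{(-\mathbf{q}_1,\mathbf{p}_1),(-\mathbf{q}_2,\mathbf{p}_2)},
\end{equation*}
built from the POVM of Eq.~\eqref{eq:quditpovm} and the states of Eq.~\eqref{eq:quditstates}. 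The null channel is $\mathcal{E}_0(\cdot)=\text{Tr}(\cdot)\mathbb{I}/d'^{m'}$. This channel has the form of Eq.~\eqref{eq:Epar1par2} with $l=2$ and $\hat{\rho}_0=\mathbb{I}/d'^{m'}$. Its operators are
\begin{equation*}
\hat{K}_{1}=\hat{E}_{d,m}(\mathbf{q}_1,\mathbf{p}_1)/(2\sqrt{2}),\quad \hat{K}_2=\hat{K}_1^T,\quad \hat{W}_1=\epsilon_0\hat{E}_{d',m'}(\mathbf{q}_2,\mathbf{p}_2),\quad \hat{W}_2=\hat{W}_1^T.
\end{equation*}
The transposes come from $\hat{D}_{d,m}(\mathbf{q},\mathbf{p})^T=\hat{D}_{d,m}(-\mathbf{q},\mathbf{p})$. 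Self-conjugacy follows by transposing the defining expression. Validity holds since the channel is a convex combination of two valid measure-and-prepare channels, for $\epsilon_0\le 1/\sqrt{2}$. The parameters $(\mathbf{q}_1,\mathbf{p}_1)$ and $(\mathbf{q}_2,\mathbf{p}_2)$ are sampled uniformly. To avoid coincidences such as $(-\mathbf{q},\mathbf{p})=(\mathbf{q},\mathbf{p})$, we may restrict to $\mathbf{q}\neq\mathbf{0}$ and $\mathbf{p}\neq\mathbf{0}$, which is why the preceding lemma with $(d^m-1)^2$ normalisation is available.

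The first step is the reduction to Problem~\ref{prob:manyrevel}. After revelation, the learner queries $C_{\mathcal{E}}((\mathbf{q}_1,\mathbf{p}_1),(\mathbf{q}_2,\mathbf{p}_2))$. For $\mathcal{E}_0$ this query vanishes. For the self-conjugate channel, only the $x=1$ term contributes, by Hilbert--Schmidt orthogonality of the $\hat{E}$'s. The $x=2$ term pairs $\hat{E}(\mathbf{q}_2,\mathbf{p}_2)$ with $\hat{E}(-\mathbf{q}_2,\mathbf{p}_2)$, which is orthogonal when $\mathbf{q}_2\neq\mathbf{0}$ (and $d'$ is odd; for $d'=2$ the two terms coincide up to an $\hat{E}^T=\pm\hat{E}$ sign and add constructively or are handled by picking the matching query). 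The result is a value $\Theta(\epsilon_0)$, say $\epsilon_0/(4\sqrt{2})$. Choosing $\epsilon=\Theta(\epsilon_0)$ with the constant fixed so that $\epsilon\le1/16$ keeps $\epsilon_0\le1/\sqrt{2}$. Thresholding at half this value then gives success probability $2/3$ for the discrimination.

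The second step applies Lemma~\ref{lem:master} with $c=1$. The sum runs over $x\in\{1,2\}$, giving
\begin{equation*}
\Delta\le\Big(\sum_{x=1,2}\sqrt{\|\mathbb{E}[\hat{W}_x^{\otimes2}]\|_{\mathrm{op}}\,\|\mathbb{E}[\hat{K}_x^{\otimes2}]\|_{\mathrm{op}}}\Big)^2.
\end{equation*}
There are no cross terms between $x=1$ and $x=2$. Since $\mathbb{E}[(\hat{O}^T)^{\otimes2}]=(\mathbb{E}[\hat{O}^{\otimes2}])^T$ and transposition preserves the operator norm, both terms are equal. Each is bounded using the $k=1$ case: $k=1\neq0 \bmod d$ for every prime $d$. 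The preceding lemma, or Corollary~\ref{corr:quditEopnorm}, gives $\|\mathbb{E}[\hat{E}^{\otimes 2}]\|_{\mathrm{op}}=O(d^{-m})$ on the input side and $O(d'^{-m'})$ on the output side. Hence $\Delta=O(\epsilon_0^2/(d^md'^{m'}))$, and therefore $T=N=\Omega(d^md'^{m'}\epsilon^{-2})$.

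The main obstacle is the first step. We must verify carefully that the self-conjugate symmetrisation does not cancel the signal at the revealed query, which could happen if the $x=2$ term contributes with opposite phase. We also need the sampling distribution to keep $(\mathbf{q},\mathbf{p})$ and $(-\mathbf{q},\mathbf{p})$ distinct, or otherwise to make them add constructively. We would first compute $\text{Tr}[\hat{D}_{d',m'}(\mathbf{q}_2,\mathbf{p}_2)\hat{E}_{d',m'}(\mathbf{q}_2,\mathbf{p}_2)^T]$ explicitly. If it is nonzero, we would restrict the sampled parameters (e.g.\ $\mathbf{q}_1,\mathbf{q}_2\neq\mathbf{0}$), at the cost of only constant factors in the operator norm bounds.
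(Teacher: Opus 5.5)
Your proposal is essentially the paper's proof: the same symmetrised measure-and-prepare channel of the form of Eq.~\eqref{eq:Epar1par2} with $l=2$, the same restriction of the sampling to $\mathbf{q}\neq\mathbf{0}$, $\mathbf{p}\neq\mathbf{0}$ so that the $x=2$ term drops out of the revealed query (which then equals $\epsilon_0/(4\sqrt{2})$ for odd primes), and the same $c=1$ application of Lemma~\ref{lem:master}, using the $k=1$ operator-norm bounds $O(d^{-m})$ and $O(d'^{-m'})$ from the unlabeled lemma with $(d^m-1)^2$ normalisation that precedes the theorem.

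One correction to your aside on even dimension: when $d=d'=2$ you have $\hat{K}_1^T\otimes\hat{W}_1^T=(-1)^{\mathbf{q}_1\cdot\mathbf{p}_1+\mathbf{q}_2\cdot\mathbf{p}_2}\hat{K}_1\otimes\hat{W}_1$, so for parameters with odd exponent the channel coincides exactly with $\mathcal{E}_0$ and no choice of query can help. This is harmless only because such parameters have probability at most $1/2$, so the discrimination still succeeds with probability $1/2+\Omega(1)$; the paper's proof does not treat this case either.
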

\begin{proof}
Consider the following states for $(\mathbf{q}_2,\mathbf{p}_2)\in\mathbb{F}_{d'}^{m'}\times\mathbb{F}_{d'}^{m'}$
\begin{equation}
    \hat{\rho}_{\pm,(\mathbf{q}_2,\mathbf{p}_2)} = \frac{\mathbb{I}\pm\epsilon_0\hat{E}_{d',m'}(\mathbf{q}_2,\mathbf{p}_2)}{d'^{m'}}.
\end{equation}
Now consider the following measurement operators for $(\mathbf{q}_1,\mathbf{p}_1)\in\mathbb{F}_{d}^{m}\times\mathbb{F}_{d}^{m}$
\begin{equation}
    \hat{\Pi}_{\pm,(\mathbf{q}_1,\mathbf{p}_1)} = \frac{\mathbb{I}\pm 2^{-1/2}\hat{E}_{d,m}(\mathbf{q}_1,\mathbf{p}_1)}{2}.
\end{equation}
Note that the set $\{(\hat{\Pi}_{+,(\mathbf{q}_1,\mathbf{p}_1)}+\hat{\Pi}_{+,(-\mathbf{q}_1,\mathbf{p}_1)})/2,(\hat{\Pi}_{-,(\mathbf{q}_1,\mathbf{p}_1)}+\hat{\Pi}_{-,(-\mathbf{q}_1,\mathbf{p}_1)})/2\}$ is a complete POVM set. Using this POVM and state output, we define the following CPTP maps
\begin{align}
    &\mathcal{E}_0(\cdot) = \text{Tr}(\cdot)\frac{\mathbb{I}}{d'^{m'}},\\
    &\mathcal{E}_{(\mathbf{q}_1,\mathbf{p}_1),(\mathbf{q}_2,\mathbf{p}_2)}(\cdot)\nonumber \\&=\text{Tr}\left[\frac{\hat{\Pi}_{+,(\mathbf{q}_1,\mathbf{p}_1)}}{2}(\cdot)\right]\hat{\rho}_{+,(\mathbf{q}_2,\mathbf{p}_2)}+\text{Tr}\left[\frac{\hat{\Pi}_{+,(-\mathbf{q}_1,\mathbf{p}_1)}}{2}(\cdot)\right]\hat{\rho}_{+,(-\mathbf{q}_2,\mathbf{p}_2)} \nonumber\\&\quad\quad+\text{Tr}\left[\frac{\hat{\Pi}_{-,(\mathbf{q}_1,\mathbf{p}_1)}}{2}(\cdot)\right]\hat{\rho}_{-,(\mathbf{q}_2,\mathbf{p}_2)}+\text{Tr}\left[\frac{\hat{\Pi}_{-,(-\mathbf{q}_1,\mathbf{p}_1)}}{2}(\cdot)\right]\hat{\rho}_{-,(-\mathbf{q}_2,\mathbf{p}_2)} \nonumber\\
    &= \text{Tr}(\cdot)\frac{\mathbb{I}}{d'^{m'}} + \frac{\epsilon_0}{2\sqrt{2}}\text{Tr}\left[(\hat{E}_{d,m}(\mathbf{q}_1,\mathbf{p}_1))(\cdot)\right]\left(\frac{\hat{E}_{d',m'}(\mathbf{q}_2,\mathbf{p}_2)}{d'^{m'}}\right)+ \frac{\epsilon_0}{2\sqrt{2}}\text{Tr}\left[(\hat{E}_{d,m}(-\mathbf{q}_1,\mathbf{p}_1))(\cdot)\right]\left(\frac{\hat{E}_{d',m'}(-\mathbf{q}_2,\mathbf{p}_2)}{d'^{m'}}\right).
\end{align}
Note that $\text{Tr}\left[(\hat{E}_{d,m}(-\mathbf{q}_1,\mathbf{p}_1))\hat{\sigma}\right]=\text{Tr}\left[(\hat{E}_{d,m}(\mathbf{q}_1,\mathbf{p}_1))\hat{\sigma}^T\right]$. Using this, we observe that for all input states $\hat{\sigma}$, we have $\mathcal{E}_{(\mathbf{q}_1,\mathbf{p}_1),(\mathbf{q}_2,\mathbf{p}_2)}(\hat{\sigma}^T)^T = \mathcal{E}_{(\mathbf{q}_1,\mathbf{p}_1),(\mathbf{q}_2,\mathbf{p}_2)}(\hat{\sigma})$ and hence the channel is self-conjugate.

\textbf{Success in Problem~\ref{prob:manyrevel} through learning $|C_{\mathcal{E}}((\mathbf{q}_1,\mathbf{p}_1),(\mathbf{q}_2,\mathbf{p}_2))|$}: Note that for $\mathbf{q}_1\in \mathbb{F}_d^m\setminus\{\mathbf{0}\}$, $\mathbf{p}_1\in \mathbb{F}_d^m\setminus\{\mathbf{0}\}$, $\mathbf{q}_2\in \mathbb{F}_{d'}^{m'}\setminus\{\mathbf{0}\}$ and $\mathbf{q}_1\in \mathbb{F}_{d'}^{m'}\setminus\{\mathbf{0}\}$, we have
\begin{equation}
    |C_{\mathcal{E}_{(\mathbf{q}_1,\mathbf{p}_1),(\mathbf{q}_2,\mathbf{p}_2)}}((\mathbf{q}_1,\mathbf{p}_1),(\mathbf{q}_2,\mathbf{p}_2))| = \frac{\epsilon_0}{4\sqrt{2}},\quad |C_{\mathcal{E}_{0}}((\mathbf{q}_1,\mathbf{p}_1),(\mathbf{q}_2,\mathbf{p}_2))| = 0.
\end{equation}
Hence this succeeds in the hypothesis testing task with $2/3$ probability as long as $\epsilon < \epsilon_0/(8\sqrt{2})$ and $\epsilon_0\leq 1/\sqrt{2}$ for a valid state. This gives $\epsilon\leq 1/16$. Notably, the probability distribution for the sampling of the parameters $(\mathbf{q}_1,\mathbf{p}_1)$, $(\mathbf{q}_2,\mathbf{p}_2)$ have to be slightly modified from their version in \ref{thm:qudit2qudit} to ensure distinguishability. We define their probability distributions as follows
\begin{equation}
    P((\mathbf{q}_1,\mathbf{p}_1)) = \begin{cases}
        0 & \text{if } \mathbf{q}_1 = \mathbf{0} \text{ or } \mathbf{p}_1 = \mathbf{0}\\
        \frac{1}{(d^m -1)^2} & \text{otherwise}
    \end{cases},\quad
    P((\mathbf{q}_2,\mathbf{p}_2)) = \begin{cases}
        0 & \text{if } \mathbf{q}_2 = \mathbf{0} \text{ or } \mathbf{p}_2 = \mathbf{0}\\
        \frac{1}{(d'^{m'} -1)^2} & \text{otherwise}
    \end{cases}
\end{equation}

\textbf{Application of Lemma~\ref{lem:master}}: 
Note that the structure of the channels is the same as in Eq.~\eqref{eq:Epar1par2} with $l = 2$. The setting we are examining is the 1-copy access case, which sets $c=1$. Since the expansion has three terms ($x = 0,1,2$), we can define
\begin{equation}
    \hat{K}_{x,(\mathbf{q}_1,\mathbf{p}_1)} = \begin{cases}
        \mathbb{I} & x=0\\
        \hat{E}_{d,m}(\mathbf{q}_1,\mathbf{p}_1) &  x= 1\\
        \hat{E}_{d,m}(-\mathbf{q}_1,\mathbf{p}_1) &  x= 2
    \end{cases},
\end{equation}
\begin{equation}
    \hat{\omega}_{x,(\mathbf{q}_2,\mathbf{p}_2)} = \hat{\rho}_0^{1/2}\hat{W}_{x,(\mathbf{q}_2,\mathbf{p}_2)}\hat{\rho}_0^{1/2},\quad \hat{W}_{x,(\mathbf{q}_2,\mathbf{p}_2)} = \begin{cases}
        \mathbb{I} & x=0\\
        \frac{\epsilon_0}{2\sqrt{2}}\hat{E}_{d',m'}(\mathbf{q}_2,\mathbf{p}_2) &  x= 1\\
        \frac{\epsilon_0}{2\sqrt{2}}\hat{E}_{d',m'}(-\mathbf{q}_2,\mathbf{p}_2) &  x= 2
    \end{cases},
\end{equation}

and now observe that we have the exact same setting as Corollary~\ref{corr:master}. Note that we use the result from Corollary~\ref{corr:quditEopnorm}
\begin{equation}\label{eq:quditEtranspnorm}
    \left\|\mathbb{E}_{(\mathbf{q},\mathbf{p})}\left[\hat{E}_{d,m}(\mathbf{q},\mathbf{p})^{\otimes 2}\right]\right\|_{\mathrm{op}}=\left\|\mathbb{E}_{(\mathbf{q},\mathbf{p})}\left[\hat{E}_{d,m}(-\mathbf{q},\mathbf{p})^{\otimes2}\right]\right\|_{\mathrm{op}} \leq  \frac{2}{d^m 
    -1}\frac{d^m+1}{d^m -1}\leq \frac{6}{d^{m}-1},
\end{equation}
since we assume that $d$ is prime. From this we substitute this in the weaker upper bound for $\Delta$ from Corollary~\ref{corr:master} to obtain
\begin{equation}
\begin{aligned}
    \Delta &\leq \sqrt{\left\|\mathbb{E}_{(\mathbf{q}_1,\mathbf{p}_1)}\left[\hat{K}_{1,(\mathbf{q}_1,\mathbf{p}_1)}^{\otimes 2}\right]\right\|_{\mathrm{op}}^2\left\|\mathbb{E}_{(\mathbf{q}_2,\mathbf{p}_2)}\left[\hat{W}_{1,(\mathbf{q}_2,\mathbf{p}_2)}^{\otimes 2}\right]\right\|_{\mathrm{op}}^2} + \sqrt{\left\|\mathbb{E}_{(\mathbf{q}_1,\mathbf{p}_1)}\left[\hat{K}_{2,(\mathbf{q}_1,\mathbf{p}_1)}^{\otimes 2}\right]\right\|_{\mathrm{op}}^2\left\|\mathbb{E}_{(\mathbf{q}_2,\mathbf{p}_2)}\left[\hat{W}_{2,(\mathbf{q}_2,\mathbf{p}_2)}^{\otimes 2}\right]\right\|_{\mathrm{op}}^2} \\
    &\leq 9\epsilon_0^2\left(\sqrt{((d^{m}-1)(d'^{m'}-1))^{-2}}+\sqrt{((d^{m}-1)(d'^{m'}-1))^{-2}}\right),
\end{aligned}
\end{equation}
\begin{equation}
     \Delta \leq \frac{9\epsilon_0^2}{(d^{m}-1)(d'^{m'}-1)}\implies N=\Omega(d^{m}d'^{m'}\epsilon^{-2}).
\end{equation}
\end{proof}
\begin{theorem}[No access to joint measurements: qudit $\to$ bosonic] Let $\mathcal{E}\in \CPTP(\mathcal{H}_{d,m},\mathcal{H}_{\infty,m'})$ (dimension $d$ being prime) and $m'\geq 8$ which also satisfies the condition that $\mathcal{E}=\mathcal{E}^*$ (it equals its own complex conjugate). Taking $0 < \epsilon\leq0.061$, assume an adaptive ancilla-assisted learning scheme which is allowed one use of the channel per measurement. If the learner can use this scheme to produce an estimate $\hat{C}_{\mathcal{E}}((\mathbf{q},\mathbf{p}),\beta)$ such that for a query of $\mathbf{q},\mathbf{p}\in\mathbb{F}^m_d$ and $|\beta|^2\leq \kappa m'$ ($\kappa>0$) it satisfies $\left||\hat{C}_{\mathcal{E}}((\mathbf{q},\mathbf{p}),\beta)| - |C_{\mathcal{E}}((\mathbf{q},\mathbf{p}),\beta)|\right|\leq \epsilon$ with a 2/3 success probability, then this learning scheme requires at least $N = \Omega(d^m\epsilon^{-2}(1+(0.99\kappa)^2)^{m'/2})$ samples to complete the task.\label{thm:qudit2bosonselfconj}
\end{theorem}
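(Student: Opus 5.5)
The plan is to combine the self-conjugate symmetrization from Thm.~\ref{thm:qudit2quditselfconj} with the qudit-to-bosonic construction from Thm.~\ref{thm:qudit2boson}, and then apply Corollary~\ref{corr:master} with $c=1$ and $l=2$.

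\textbf{Construction.} Take the replacement channel $\mathcal{E}_0(\cdot)=\text{Tr}(\cdot)\hat{\rho}_0$, where $\hat{\rho}_0=(1-\nu^2)^{m'}\nu^{2\hat n}$ is a thermal state. The Fock basis is the conjugation basis, so $\hat{\rho}_0$ is real. The alternative hypothesis is the channel
\begin{equation*}
\mathcal{E}^{\text{self-conj}}_{(\mathbf{q},\mathbf{p}),\gamma}=\tfrac12\mathcal{E}_{(\mathbf{q},\mathbf{p}),\gamma}+\tfrac12\mathcal{E}_{(-\mathbf{q},\mathbf{p}),-\gamma^*},
\end{equation*}
with $\mathcal{E}_{(\mathbf{q},\mathbf{p}),\gamma}$ the measure-and-prepare channel of Thm.~\ref{thm:qudit2boson}. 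It is realized by the four-outcome POVM $\{\hat\Pi_{\pm,(\pm\mathbf{q},\mathbf{p})}/2\}$ together with the states $\hat\rho_{\pm\gamma}$ and $\hat\rho_{\pm(-\gamma^*)}$. Self-conjugacy follows from two identities: $\hat E_{d,m}(\mathbf{q},\mathbf{p})^T=\hat E_{d,m}(-\mathbf{q},\mathbf{p})$ on the input side, and $(\hat D(\gamma)-\hat D^\dagger(\gamma))^T=\hat D(-\gamma^*)-\hat D^\dagger(-\gamma^*)$ on the output side (using $\hat D^T(\alpha)=\hat D(-\alpha^*)$). Together these show that transposing the input and output swaps the two halves of the mixture. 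In the form of Eq.~\eqref{eq:Epar1par2} this gives $l=2$, with
\begin{gather*}
\hat K_1=\hat E_{d,m}(\mathbf{q},\mathbf{p})/\sqrt2,\quad \hat K_2=\hat K_1^T,\\
\hat W_1\propto\epsilon_0\, i(\hat D(\gamma)-\hat D^\dagger(\gamma)),\quad \hat W_2=\hat W_1^T,
\end{gather*}
where each $\hat W_x$ carries half the prefactor used in Thm.~\ref{thm:qudit2boson}.

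\textbf{Reduction to discrimination.} Sample $(\mathbf{q},\mathbf{p})$ uniformly with $\mathbf{q}\neq\mathbf{0}$ and $\mathbf{p}\neq\mathbf{0}$, as in Thm.~\ref{thm:qudit2quditselfconj}. This guarantees $(\mathbf{q},\mathbf{p})\neq(-\mathbf{q},\mathbf{p})$ for $d>2$, so the second term is orthogonal to the query. Sample $\gamma$ from a Gaussian with $2\sigma_\gamma^2=0.99\kappa$. Querying $|C_{\mathcal{E}}((\mathbf{q},\mathbf{p}),\gamma)|$ then picks up a nonzero contribution from the first term. The second term contributes only through $\chi_{\hat\rho_{-\gamma^*}}(\gamma)-\chi_{\hat\rho_0}(\gamma)$, which is suppressed whenever $|\gamma+\gamma^*|$ is large; its overlap with the input operator already vanishes for $d>2$, so it contributes nothing at all. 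For $d=2$ I would restrict the sampling so that $\mathbf{q}\cdot\mathbf{p}$ makes the transpose a different Pauli.

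The discrimination analysis of \cite{coroi2025exponentialadvantagecontinuousvariablequantum}, as used in Thm.~\ref{thm:qudit2boson}, then gives a gap of order $\epsilon_0 e^{-L}$ whenever $2\sigma^2\le|\gamma|^2\le\kappa m'$. Lemma~\ref{lem:gaussprob} with $m'\ge 8$ shows this event has probability at least $0.546$, which yields success probability bounded above $1/2$. Halving the amplitude relative to Thm.~\ref{thm:qudit2boson} halves the admissible accuracy, which accounts for the threshold $\epsilon\le 0.061\approx 0.1225/2$.

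\textbf{Bounding $\Delta$.} With $c=1$, the sum in Lemma~\ref{lem:master} has only the terms $x=1$ and $x=2$, and there are no cross terms. On the input side, the lemma preceding Thm.~\ref{thm:qudit2quditselfconj} (with $k=1\not\equiv0 \bmod d$) gives $\|\mathbb{E}[\hat K_x^{\otimes2}]\|_{\mathrm{op}}\le 3/(d^m-1)$. On the output side, transposition preserves operator norms, and the map $\gamma\mapsto-\gamma^*$ preserves the Gaussian, so Corollary~\ref{corr:bosonicEnorm} with $k=2$ gives $\|\mathbb{E}[\hat W_x^{\otimes2}]\|_{\mathrm{op}}\le O(\epsilon_0^2)/(1+4\sigma_\gamma^4)^{m'/2}$. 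Hence
\begin{equation*}
\Delta=O\!\left(\epsilon^2\,d^{-m}\,(1+(0.99\kappa)^2)^{-m'/2}\right),
\end{equation*}
and $N=T=\Omega(1/\Delta)$.

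The main obstacle I expect is the distinguishability step: making sure the conjugate half of the channel neither cancels nor corrupts the queried entry, and keeping the constants consistent with $\epsilon\le0.061$ across all sampled $\gamma$, including $\gamma$ close to the real axis.
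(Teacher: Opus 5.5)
Your proposal is the paper's proof essentially step for step. It uses the same symmetrized channel $\tfrac12\mathcal{E}_{(\mathbf{q},\mathbf{p}),\gamma}+\tfrac12\mathcal{E}_{(-\mathbf{q},\mathbf{p}),-\gamma^*}$, built from the four-outcome POVM and the three-peak outputs. It uses the same sampling ($\mathbf{q}\neq\mathbf{0}$, $\mathbf{p}\neq\mathbf{0}$, $2\sigma_\gamma^2=0.99\kappa$). It makes the same appeal to \cite{coroi2025exponentialadvantagecontinuousvariablequantum} with $\epsilon_0=4\epsilon e^{L}/0.98\le 1/4$, which gives $\epsilon\le 0.061$. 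Finally, it applies Lemma~\ref{lem:master} with $c=1$, $l=2$, using transpose-invariance of the operator norm and Corollary~\ref{corr:bosonicEnorm}. For odd prime $d$ the argument is complete.

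The one step that fails is your aside for $d=2$. Transposition sends every Pauli to $\pm$ itself ($X^T=X$, $Z^T=Z$, $Y^T=-Y$). So no choice of $\mathbf{q}\cdot\mathbf{p}$ makes $\hat K_2=\hat K_1^T$ a different Pauli. The conjugate half then always adds $\pm\frac{1}{2\sqrt2}\chi_{\hat\rho_{-\gamma^*}-\hat\rho_0}(\gamma)$ to the queried entry.

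To cover $d=2$ you must instead show this cross term is small compared with $\chi_{\hat\rho_\gamma-\hat\rho_0}(\gamma)$. Its two displacement terms are damped by factors governed by $\Re(\gamma)$ and $\Im(\gamma)$ separately. You therefore need both to be large, not just $|\gamma+\gamma^*|$. This is how Thm.~\ref{thm:boson2bosonselfconj} handles the analogous term, by restricting $\Re(\gamma_2)^2$ and $\Im(\gamma_2)^2$ individually.

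The paper's own identity $|C|=\tfrac14|\chi_{\hat\rho_\gamma-\hat\rho_0}(\beta)|$ likewise holds only for odd $d$, so you have spotted a subtlety it glosses over; the repair is just different from the one you proposed.
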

\begin{proof}Consider the two channels for $(\mathbf{q},\mathbf{p})\in\mathbb{F}^{m}_d\times\mathbb{F}^{m}_d$ and $\gamma\in\mathbb{C}^{m'}$ defined as
\begin{align}
    &\mathcal{E}_0(\cdot) = \text{Tr}(\cdot)\hat{\rho}_0,\\
    &\mathcal{E}_{(\mathbf{q},\mathbf{p}),\gamma}(\cdot)\nonumber \\&=\text{Tr}\left[\frac{\hat{\Pi}_{+,(\mathbf{q},\mathbf{p})}}{2}(\cdot)\right]\hat{\rho}_{\gamma}+\text{Tr}\left[\frac{\hat{\Pi}_{+,(-\mathbf{q},\mathbf{p})}}{2}(\cdot)\right]\hat{\rho}_{-\gamma^*} +\text{Tr}\left[\frac{\hat{\Pi}_{-,(\mathbf{q},\mathbf{p})}}{2}(\cdot)\right]\hat{\rho}_{-\gamma}+\text{Tr}\left[\frac{\hat{\Pi}_{-,(-\mathbf{q},\mathbf{p})}}{2}(\cdot)\right]\hat{\rho}_{\gamma^*} \nonumber\\
    &= \text{Tr}(\cdot)\hat{\rho}_0 + \frac{1}{2\sqrt{2}}\text{Tr}\left[\hat{E}_{d,m}(\mathbf{q},\mathbf{p})(\cdot)\right]\left(\hat{\rho}_{\gamma} - \hat{\rho}_0\right)+ \frac{1}{2\sqrt{2}}\text{Tr}\left[\hat{E}_{d,m}(-\mathbf{q},\mathbf{p})(\cdot)\right]\left(\hat{\rho}_{-\gamma^*} - \hat{\rho}_0\right),
\end{align}
where we define
\begin{equation}
    \hat{\Pi}_{\pm,(\mathbf{q},\mathbf{p})} = \frac{\mathbb{I} \pm 2^{-1/2}\hat{E}_{d,m}(\mathbf{q},\mathbf{p})}{2},\quad \hat{\rho}_\gamma = (1-\nu^2)^{m'}\nu^{\hat n}(\hat{D}(0) + 2i\epsilon_0(\hat{D}(\gamma) - \hat{D}^\dagger(\gamma)))\nu^{\hat n}.
\end{equation}
Note that the channel $\mathcal{E}_{(q,p),\gamma}(\hat{\rho}^T)^T = \mathcal{E}_{(q,p),\gamma}(\hat{\rho})$ where the input state transposition is in the standard basis for $\mathcal{H}_{d,m}$ and the transposition for the output is in the Fock basis.

We consider the setting of Problem~\ref{prob:manyrevel} where $A$ samples $(\mathbf{q},\mathbf{p})\in\mathbb{F}^{m}_d\times\mathbb{F}^{m}_d$ and $\gamma\in\mathbb{C}^{m'}$ according to 
\begin{equation}
    P((\mathbf{q},\mathbf{p})) = \begin{cases}
        0&\text{if }\mathbf{q} = \mathbf{0}\text{ or }\mathbf{p} = \mathbf{0}\\
        \frac{1}{d^{2m}-1}&\text{otherwise}
    \end{cases},\quad P(\gamma) = \frac{1}{(2\pi\sigma_{\gamma}^2)^{m'}}\exp(-\frac{|\gamma|^2}{2\sigma^2_\gamma}).
\end{equation}
 
\textbf{Success in Problem~\ref{prob:manyrevel} through channel learning}: Note that we have the following hold
\begin{equation}
    |C_{\mathcal{E}_{(\mathbf{q},\mathbf{p}),\gamma}}((\mathbf{q},\mathbf{p}),\beta)| = \frac{1}{4}|\chi_{\hat{\rho}_{\gamma} - \hat{\rho}_0}(\beta)|,\quad |C_{\mathcal{E}_0}((\mathbf{q},\mathbf{p}),\beta)| = 0.
\end{equation}
Using the discrimination procedure highlighted in the revealed hypothesis test to tell apart the characteristic function of three peak state $\chi_{\hat{\rho}_{\gamma}}$ from $\chi_{\hat{\rho}_{0}}$ in Eq. (S85) of \cite{coroi2025exponentialadvantagecontinuousvariablequantum}, the above discrimination also succeeds after having $\epsilon_0
      =4\epsilon e^{\kappa m'/\Sigma^2}/0.98$ and $\Sigma^2  = \frac{\kappa m'}{L} =\frac{1+\nu}{1-\nu}$. To succeed with at least probability of $1/2(1+1/6)$, we can set $2\sigma_{\gamma}^2 = 0.99\kappa$.

\textbf{Application of Lemma~\ref{lem:master}}: 
By defining the following operators
\begin{equation}
    \hat{K}_{x,(\mathbf{q},\mathbf{p})} = \begin{cases}
        \mathbb{I} & x = 0\\
        \hat{E}_{d,m}(\mathbf{q},\mathbf{p}) & x = 1\\
        \hat{E}_{d,m}(-\mathbf{q},\mathbf{p}) & x = 2
    \end{cases},
\end{equation}
\begin{equation}
    \hat{\omega}_{x,\gamma} = \hat{\rho}_0^{1/2}\hat{W}_{x,\gamma}\hat{\rho}_0^{1/2},\quad \hat{W}_{x,\gamma} = \begin{cases}
        \mathbb{I} & x = 0\\
        \frac{\epsilon_0}{\sqrt{2}}i\left(\hat{D}(\gamma) - \hat{D}^\dagger(\gamma)\right) & x= 1\\
        \frac{\epsilon_0}{\sqrt{2}}i\left(\hat{D}(-\gamma^*) - \hat{D}^\dagger(-\gamma^*)\right) & x =2
    \end{cases},
\end{equation}
we note that we have the exact setting for Lemma~\ref{lem:master} specifically with $c = 1$ and $l = 2$. Observe that using Corollary~\ref{corr:bosonicEnorm} we have
\begin{equation}
    \left\|\mathbb{E}_{\gamma}\left[\hat{W}_{1,\gamma}^{\otimes 2}\right]\right\|_{\mathrm{op}} =\frac{\epsilon_0^2}{2}\left\|\mathbb{E}_{\gamma}\left[(i\hat{D}(\gamma)-i\hat{D}^\dagger(\gamma))^{\otimes 2}\right]\right\|_{\mathrm{op}}\leq \frac{2\epsilon_0^2}{(1+4\sigma_{\gamma}^4)^{m'/2}},
\end{equation}
and since $\hat{W}_{2,\gamma}^T = \hat{W}_{1,\gamma}$ we have the same inequality hold for $\left\|\mathbb{E}_{\gamma}\left[\hat{W}_{2,\gamma}^{\otimes 2}\right]\right\|_{\mathrm{op}}$. We can then reuse the upper bounds in Eq.~\eqref{eq:quditEtranspnorm} and apply the case of Corollary~\ref{corr:master} to obtain the weaker upper bound on $\Delta$ as
\begin{equation}
\begin{aligned}
    \Delta &\leq \sqrt{\left\|\mathbb{E}_{(\mathbf{q},\mathbf{p})}\left[\hat{K}_{1,(\mathbf{q},\mathbf{p})}^{\otimes 2}\right]\right\|_{\mathrm{op}}^2\left\|\mathbb{E}_{\gamma}\left[\hat{W}_{1,\gamma}^{\otimes 2}\right]\right\|_{\mathrm{op}}^2} + \sqrt{\left\|\mathbb{E}_{(\mathbf{q},\mathbf{p})}\left[\hat{K}_{2,(\mathbf{q},\mathbf{p})}^{\otimes 2}\right]\right\|_{\mathrm{op}}^2\left\|\mathbb{E}_{\gamma}\left[\hat{W}_{2,\gamma}^{\otimes 2}\right]\right\|_{\mathrm{op}}^2} \\
    &\leq 12\epsilon_0^2\left(\sqrt{((d^{m}-1)(1+(0.99\kappa)^2)^{m'/2})^{-2}}+\sqrt{((d^{m}-1)(1+(0.99\kappa)^2)^{m'/2})^{-2}}\right),
\end{aligned}
\end{equation}
\begin{equation}
     \Delta \leq \frac{24\epsilon_0^2}{(d^{m}-1)(1+(0.99\kappa)^2)^{m'/2}}\implies N=\Omega(d^{m}(1+(0.99\kappa)^2)^{m'/2}\epsilon^{-2}).
\end{equation}
\end{proof}
\begin{theorem}[No access to joint measurements: bosonic $\to$ bosonic] Let $\mathcal{E}\in \CPTP(\mathcal{H}_{\infty,m},\mathcal{H}_{\infty,m'})$ and $m,m'\geq 8$ which also satisfies the condition that $\mathcal{E}=\mathcal{E}^*$ (it equals its own complex conjugate). Taking $0 < \epsilon\leq0.025$, and $\min(\kappa^2 m,\kappa'^2m') > 2/0.99$, assume an adaptive ancilla-assisted learning scheme which only uses $\mathcal{E}$ per measurement. If the learner can use this scheme to produce an estimate $\hat{C}^{\mathrm{TMSV},r}_{\mathcal{E}}(\alpha,\beta)$ (where  $\cosh(2r)>1.06\kappa m$) such that for a query of $\alpha\in\mathbb{C}^m,\beta\in\mathbb{C}^{m'}$ with $|\alpha|^2\leq \kappa m,|\beta|^2\leq \kappa' m'$ it satisfies $\left||\hat{C}^{\mathrm{TMSV},r}_{\mathcal{E}}(\alpha,\beta)| - |C^{\mathrm{TMSV},r}_{\mathcal{E}}(\alpha,\beta)|\right|\leq \epsilon$ with a 2/3 success probability, then this learning scheme requires at least $N = \Omega\left(\epsilon^{-2}(1+(0.99\kappa')^2)^{m'/2}(1+(0.99\tanh^2(2r)\kappa)^2)^{m/2}\right)$ samples to complete.\label{thm:boson2bosonselfconj}
\end{theorem}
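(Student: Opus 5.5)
The plan is to follow the template of Thm.~\ref{thm:boson2boson} and Thm.~\ref{thm:qudit2bosonselfconj}, now symmetrized on both sides. We take $\mathcal{E}_0(\cdot)=\text{Tr}(\cdot)\hat{\rho}_0$ with the thermal state $\hat{\rho}_0=(1-\nu^2)^{m'}\nu^{2\hat n}$. We compare it with
\begin{equation*}
\mathcal{E}^{\text{self-conj}}_{\gamma_1,\gamma_2}=\tfrac12\left(\mathcal{E}_{\gamma_1,\gamma_2}+\mathcal{E}_{-\gamma_1^*,-\gamma_2^*}\right),
\end{equation*}
where $\mathcal{E}_{\gamma_1,\gamma_2}$ is the measure-and-prepare channel built from the POVM $\hat{\Pi}_{\pm,\gamma_1}=(\mathbb{I}\pm\hat{E}(\gamma_1)/\sqrt2)/2$ and the three-peak states $\hat{\rho}_{\pm\gamma_2}$, as in the proof of Thm.~\ref{thm:boson2boson}. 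The averaged POVM $\{(\hat\Pi_{\pm,\gamma_1}+\hat\Pi_{\pm,-\gamma_1^*})/2\}$ is valid. Using $\hat D^T(\alpha)=\hat D(-\alpha^*)$ in the Fock basis, the channel satisfies $\mathcal{E}(\hat\rho^T)^T=\mathcal{E}(\hat\rho)$. It therefore has the form of Eq.~\eqref{eq:Epar1par2} with $l=2$, where
\begin{equation*}
\hat K_{1}=\hat E(\gamma_1),\quad \hat K_2=\hat E(-\gamma_1^*)=\hat K_1^T,\quad \hat W_1\propto \epsilon_0\, i(\hat D(\gamma_2)-\hat D^\dagger(\gamma_2)),\quad \hat W_2=\hat W_1^T.
\end{equation*}
We sample $\gamma_1,\gamma_2$ from Gaussians with $2\sigma_1^2=0.99\kappa\tanh^2(2r)$ and $2\sigma_2^2=0.99\kappa'$.

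\textbf{Step 1 (reduction to Problem~\ref{prob:manyrevel}).} We compute $C^{\mathrm{TMSV},r}$ at the revealed query $(\gamma_1/\tanh(2r),\gamma_2)$, exactly as in Thm.~\ref{thm:boson2boson}. There are now two contributions: the $(\gamma_1,\gamma_2)$ term, carrying a factor $1/2$, and the mirrored $(-\gamma_1^*,-\gamma_2^*)$ term. The mirrored term is evaluated at displacement $\gamma_1+\gamma_1^*$ relative to its own peak, so it is Gaussian-suppressed by a factor like $e^{-|\gamma_1-(-\gamma_1^*)|^2\cosh(2r)/\ldots}$, and similarly on the output side.

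I expect this step to be the main obstacle. The difficulty is that $\gamma+\gamma^*=2\Re\gamma$ can be small when $\gamma$ is nearly imaginary, so the separation does not hold for every sample. To handle this, I would first restrict to the high-probability event of Lemma~\ref{lem:gaussprob}, intersected with the event that $|\Re\gamma_i|$ is not too small, or alternatively bound the cross term by a constant fraction of $t_-$ using $\cosh(2r)\ge1.06\kappa m$. Then I would choose $\epsilon=\Theta(\epsilon_0)$, with the extra factor $1/2$ halving the admissible $\epsilon$ from $0.05$ to $0.025$. The resulting success probability stays $\ge \tfrac12(1+\Omega(1))$.

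\textbf{Step 2 (bounding $\Delta$).} With $c=1$ only the $x=1$ and $x=2$ terms appear, so
\begin{equation*}
\Delta\le\Big(\sum_{x=1,2}\sqrt{\|\mathbb{E}\hat W_x^{\otimes2}\|_{\mathrm{op}}\,\|\mathbb{E}\hat K_x^{\otimes 2}\|_{\mathrm{op}}}\Big)^2.
\end{equation*}
Corollary~\ref{corr:bosonicEnorm} with $k=2$ gives $\|\mathbb{E}\hat K_1^{\otimes2}\|\le 2(1+4\sigma_1^4)^{-m/2}$ and $\|\mathbb{E}\hat W_1^{\otimes2}\|=O(\epsilon_0^2)(1+4\sigma_2^4)^{-m'/2}$. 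The $x=2$ terms are transposes of the $x=1$ terms; alternatively, $-\gamma^*$ is again Gaussian with the same spread, so they obey the same bounds. This yields
\begin{equation*}
\Delta=O\big(\epsilon_0^2\, d_{\mathrm{in}}^{-m}d_{\mathrm{out}}^{-m'}\big),
\end{equation*}
and hence $T=N=\Omega(\epsilon^{-2}d_{\mathrm{in}}^m d_{\mathrm{out}}^{m'})$ by Lemma~\ref{lem:master}.
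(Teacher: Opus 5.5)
Your construction, the identification $\hat K_2=\hat K_1^T$, $\hat W_2=\hat W_1^T$, and the $\Delta$ estimate in Step~2 are the same as in the paper's proof. The gap is in Step~1, exactly where you expected trouble: the mirrored term is not a single displaced peak. At the query $(\gamma_1/\tanh(2r),\gamma_2)$, the operator $\hat E(-\gamma_1^*)=\tfrac{1+i}{2}\hat D(-\gamma_1^*)+\tfrac{1-i}{2}\hat D(\gamma_1^*)$ has one piece at distance $2|\Re\gamma_1|$ from the matching point and another at distance $2|\Im\gamma_1|$. Likewise, the mirrored three-peak state $\hat\rho_{-\gamma_2^*}$ has peaks at distances $2|\Re\gamma_2|$ and $2|\Im\gamma_2|$ from $\gamma_2$. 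On your event ($|\Re\gamma_i|$ bounded below) the mirrored contribution is therefore \emph{not} suppressed when both imaginary parts are small; it is then as large as the main term. The fallback via $\cosh(2r)\ge1.06\kappa m$ does not repair this. It says nothing about the output side, whose overlaps $e^{-2|\Re\gamma_2|^2/\sigma^2}$ and $e^{-2|\Im\gamma_2|^2/\sigma^2}$ are set by the thermal width $\sigma$ of Thm.~\ref{thm:boson2boson}. On the input side it only yields suppression once $|\Re\gamma_1|$ and $|\Im\gamma_1|$ are both bounded below.

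The paper closes this by restricting to $2/\cosh(2r)\le|\Re\gamma_1|^2,|\Im\gamma_1|^2\le\kappa m\tanh^2(2r)/2$ and $2\sigma^2\le|\Re\gamma_2|^2,|\Im\gamma_2|^2\le\kappa'm'/2$; the upper bounds keep the query admissible. On this event every mirrored piece carries a factor of at most $e^{-4}$. The real part of the perturbation therefore keeps the factor $(1-e^{-4})^2-4e^{-8}$, which together with the extra $\tfrac12$ gives the threshold $\epsilon\le0.025$. The probability of the event is bounded below by applying Lemma~\ref{lem:gaussprob} separately to the real and imaginary parts, which are independent Gaussian vectors in $\mathbb{R}^m$ and $\mathbb{R}^{m'}$; this is where $m,m'\ge8$ enters.

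Alternatively, your event can be salvaged, but only through the sign bookkeeping you skipped. Write $a=e^{-2\cosh(2r)|\Re\gamma_1|^2}$, $b=e^{-2\cosh(2r)|\Im\gamma_1|^2}$, $c=e^{-2\cosh(2r)|\gamma_1|^2}$, and let $a',b',c'$ be the output analogues with $\cosh(2r)$ replaced by $1/\sigma^2$. The real part of the total perturbation is then a positive multiple of $(1-c)(1-c')+(a-b)(a'-b')\ge(1-c)(1-c')-a-a'$, so the unsuppressed $bb'$ piece adds constructively. Either way, Step~1 needs an explicit computation of the cross term rather than a suppression heuristic.
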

\begin{proof}
We define the following channels where $\gamma_1\in \mathbb{C}^m,\gamma_2\in \mathbb{C}^{m'}$
\begin{align}
    &\mathcal{E}_0(\cdot) = \text{Tr}(\cdot)\hat{\rho}_0,\\
    &\mathcal{E}_{\gamma_1,\gamma_2}(\cdot)\nonumber \\&=\text{Tr}\left[\frac{\hat{\Pi}_{+,\gamma_1}}{2}(\cdot)\right]\hat{\rho}_{\gamma_2}+\text{Tr}\left[\frac{\hat{\Pi}_{+,-\gamma_1^*}}{2}(\cdot)\right]\hat{\rho}_{-\gamma_2^*} +\text{Tr}\left[\frac{\hat{\Pi}_{-,\gamma_1}}{2}(\cdot)\right]\hat{\rho}_{-\gamma_2}+\text{Tr}\left[\frac{\hat{\Pi}_{-,-\gamma_1^*}}{2}(\cdot)\right]\hat{\rho}_{\gamma_2^*} \nonumber\\
    &= \text{Tr}(\cdot)\hat{\rho}_0 + \frac{1}{2\sqrt{2}}\text{Tr}\left[\hat{E}(\gamma_1)(\cdot)\right]\left(\hat{\rho}_{\gamma_2} - \hat{\rho}_0\right)+ \frac{1}{2\sqrt{2}}\text{Tr}\left[\hat{E}(-\gamma_1^*)(\cdot)\right]\left(\hat{\rho}_{-\gamma_2^*} - \hat{\rho}_0\right),
\end{align}
where we define
\begin{equation}
    \hat{\Pi}_{\pm,\gamma_1} = \frac{\mathbb{I} \pm \hat{E}(\gamma_1)/\sqrt{2}}{2},\quad\hat{\rho}_{\gamma_2} = (1-\nu^2)^{m'}\nu^{\hat n}(\hat{D}(0) + i\epsilon_0(\hat{D}(\gamma_2) - \hat{D}^\dagger(\gamma_2))+i\epsilon_0(\hat{D}(-\gamma_2^*) - \hat{D}^\dagger(-\gamma_2^*)))\nu^{\hat n}.
\end{equation}
where $\hat{n}$ is the total photon number operator over $m'$ modes. Observe that the channel $\mathcal{E}_{\gamma_1,\gamma_2}(\hat{\rho}^T)^T = \mathcal{E}_{\gamma_1,\gamma_2}(\hat{\rho})$ and hence it is self-conjugate. Similarly we consider $A$ to randomly sample $\gamma_1$ and $\gamma_2$ according to the probability distributions
\begin{equation}
    P(\gamma_1) = \frac{1}{(2\pi\sigma_1^2)^m}e^{-\frac{|\gamma_1|^2}{2\sigma_1^2}},\quad P(\gamma_2) = \frac{1}{(2\pi\sigma_2^2)^{m'}}e^{-\frac{|\gamma_2|^2}{2\sigma_2^2}},
\end{equation}
according to the setting of Problem~\ref{prob:manyrevel} to send copies of one of the channels to $B$.

\textbf{Success in Problem~\ref{prob:manyrevel} through learning $C^{\mathrm{TMSV},r}_{\mathcal{E}}$}: Note that we have the following hold
\begin{align}
    &C^{\mathrm{TMSV},r}_{\mathcal{E}_{\gamma_1,\gamma_2}}(\gamma_1/\tanh(2r),\gamma_2)\nonumber\\
    &= e^{-\frac{|\gamma_1|^2\cosh(2r)}{2\sinh^2(2r)}}\left(\frac{1+i}{4\sqrt{2}} + \frac{1-i}{4\sqrt{2}}e^{-2|\gamma_1|^2\cosh(2r)}\right)(\chi_{\hat{\rho}_{\gamma_2}}(\gamma_2) - \chi_{\hat{\rho}_0}(\gamma_2))\nonumber\\
    &\quad\quad+e^{-\frac{|\gamma_1|^2\cosh(2r)}{2\sinh^2(2r)}}\left(\frac{1+i}{4\sqrt{2}}e^{-2\Re(\gamma_1)^2\cosh(2r)} + \frac{1-i}{4\sqrt{2}} e^{-2\Im(\gamma_1)^2\cosh(2r)}\right)(\chi_{\hat{\rho}_{-\gamma_2^*}}(\gamma_2) - \chi_{\hat{\rho}_0}(\gamma_2))\nonumber\\
    &\quad\quad+ e^{-\frac{|\gamma_1|^2\cosh(2r)}{2\tanh^2(2r)}}\chi_{\hat{\rho}_0}(\gamma_2)=-i\epsilon_0 t'_+ + t_0 + \epsilon_0t'_-,
    \\&C^{\mathrm{TMSV},r}_{\mathcal{E}_{0}}(\gamma_1/\tanh(2r),\gamma_2) =e^{-\frac{|\gamma_1|^2\cosh(2r)}{2\tanh^2(2r)}}\chi_{\hat{\rho}_0}(\gamma_2) = t_0,
    \end{align}
    where we define
    \begin{equation}
    \begin{aligned}
        t'_\pm &= \frac{e^{-\frac{|\gamma_1|^2\cosh(2r)}{2\sinh^2(2r)}}}{2\sqrt{2}}\Bigg[
        (1\pm e^{-2|\gamma_1|^2\cosh(2r)})e^{-\frac{|\gamma_2|^2}{\Sigma^2}}\left(1-e^{-2\frac{|\gamma_2|^2}{\sigma^2}}\right)\\&\hspace{80pt}+(e^{-2\Re(\gamma_1)^2\cosh(2r)}\pm e^{-2\Im(\gamma_1)^2\cosh(2r)})e^{-\frac{|\gamma_2|^2}{\Sigma^2}}\left(e^{-2\frac{\Re(\gamma_2)^2}{\sigma^2}}-e^{-2\frac{\Im(\gamma_2)^2}{\sigma^2}}\right)\Bigg]\\
        &\geq \frac{e^{-\frac{|\gamma_1|^2\cosh(2r)}{2\sinh^2(2r)}}e^{- \frac{|\gamma_2|^2}{\Sigma^2}}}{2\sqrt{2}}\Bigg[
        (1 - e^{-2|\gamma_1|^2\cosh(2r)})\left(1-e^{-2\frac{|\gamma_2|^2}{\sigma^2}}\right)\\&\hspace{110pt}-(e^{-2\Re(\gamma_1)^2\cosh(2r)}+ e^{-2\Im(\gamma_1)^2\cosh(2r)})\left(e^{-2\frac{\Re(\gamma_2)^2}{\sigma^2}}+e^{-2\frac{\Im(\gamma_2)^2}{\sigma^2}}\right)\Bigg].
    \end{aligned}
    \end{equation}
    We restrict the range of values to satisfy
    \begin{equation}\begin{aligned}
        \label{eq:conditionsthmb12}
        \frac{2}{\cosh(2r)}\leq\Re(\gamma_1)^2\leq \frac{\kappa m\tanh^2(2r)}{2}&,\quad \frac{2}{\cosh(2r)}\leq\Im(\gamma_1)^2\leq\frac{\kappa m\tanh^2(2r)}{2},\\2\sigma^2\leq\Re(\gamma_2)^2\leq\frac{\kappa' m'}{2}&,\quad 2\sigma^2\leq\Im(\gamma_2)^2\leq\frac{\kappa' m'}{2}.
    \end{aligned}
    \end{equation}
    Note that for the conditions above to be valid, we require that $\kappa m\tanh(2r)\sinh(2r) > 2$. Over this range we have the following hold
    \begin{equation}
        \begin{aligned}
            t_{\pm}' \geq \frac{1}{2\sqrt{2e}}e^{-L}\left[(1-e^{-4})^2  - 4e^{-8}\right].
        \end{aligned}
    \end{equation}
To make this discrimination possible we also choose $\sinh(2r)\tanh(2r)\geq \kappa m$. Note that we have
\begin{equation}
         \left||C^{\mathrm{TMSV},r}_{\mathcal{E}_{\gamma_1,\gamma_2}}(\gamma_1/\tanh(2r),\gamma_2)| - |C^{\mathrm{TMSV},r}_{\mathcal{E}_{0}}(\gamma_1/\tanh(2r),\gamma_2)|\right| \geq \epsilon_0t'_-.
     \end{equation}
     Hence for success in the learning scheme, we have to ensure that $2\epsilon < \epsilon_0t'_-$, which we can ensure by taking
     \begin{equation}
         \epsilon = \epsilon_0\frac{1}{4\sqrt{2e}}\left[(1-e^{-4})^2 - 4e^{-8}\right]e^{-L},
     \end{equation}
     and by taking the hardest learning case of $L\to 0$, we can formulate a version of this problem for any $\epsilon\leq 0.0257$. $B$ can perform the learning scheme to discriminate under the conditions in Eq.~\eqref{eq:conditionsthmb12} which would mean succeeding with probability
      \begin{equation}                   
      \begin{aligned}
          \Pr(\mathrm{success}) &= \frac{1}{2} + \frac{1}{6}\Big(\Pr_{\gamma_1}( 1< |\Re(\gamma_1)|^2 \leq \frac{1}{2}\kappa m\tanh^2(2r), 1< |\Im(\gamma_1)|^2 \leq \frac{1}{2}\kappa m\tanh^2(2r))\\
          &\quad\quad\quad\quad\quad\times\Pr_{\gamma_2}( 2\sigma^2< |\Re(\gamma_2)|^2 \leq \frac{1}{2}\kappa' m', 2\sigma^2< |\Im(\gamma_2)|^2 \leq \frac{1}{2}\kappa' m')\Big)\\
          &\geq \frac{1}{2} + \frac{0.298}{6},
      \end{aligned}
      \end{equation}
where we make the choice of $\sigma_1^2 = 0.99\kappa\tanh^2(2r)/2$ and $\sigma_2^2=0.99\kappa'/2$.In the range of $L<\kappa'm'<1.16$, we always have $2\sigma^2<\frac{2L}{\kappa'm'}$ which along with ensuring $\kappa'^2m'>2/0.99$ and $\kappa m>2/0.99$ will give $0.99\kappa'>2\sigma^2$ and $0.99\kappa\tanh^2(2r)\geq \frac{2}{\cosh(2r)}$ (since $\sinh(2r)\tanh(2r)\geq \kappa m$). We now can 
Lemma~\ref{lem:gaussprob} to obtain the lower bound on the success probability. Further in the relevant parameter range, $\cosh(2r)\geq 1.06\kappa m$ is sufficient to ensure $\sinh(2r)\tanh(2r)\geq \kappa m$.

\textbf{Application of Lemma~\ref{lem:master}}: 
By defining the following operators
\begin{equation}
    \hat{K}_{x,\gamma_1} = \begin{cases}
        \mathbb{I} & x = 0\\
        \hat{E}(\gamma_1) & x = 1\\
        \hat{E}(-\gamma_1^*) & x = 2
    \end{cases},
\end{equation}
\begin{equation}
    \hat{\omega}_{x,\gamma_2} = \hat{\rho}_0^{1/2}\hat{W}_{x,\gamma_2}\hat{\rho}_0^{1/2},\quad \hat{W}_{x,\gamma_2} = \begin{cases}
        \mathbb{I} & x = 0\\
        \frac{\epsilon_0}{\sqrt{2}}i\left(\hat{D}(\gamma_2) - \hat{D}^\dagger(\gamma_2)\right) & x= 1\\
        \frac{\epsilon_0}{\sqrt{2}}i\left(\hat{D}(-\gamma_2^*) - \hat{D}^\dagger(-\gamma_2^*)\right) & x =2
    \end{cases},
\end{equation}
we note that we have the exact setting for Lemma~\ref{lem:master} specifically with $c = 1$ and $l = 2$. Observe that using Corollary~\ref{corr:bosonicEnorm} we have
\begin{equation}
    \left\|\mathbb{E}_{\gamma_1}\left[\hat{K}_{1,\gamma_1}^{\otimes 2}\right]\right\|_{\mathrm{op}} =\left\|\mathbb{E}_{\gamma_1}\left[\hat{K}_{2,\gamma_1}^{\otimes 2}\right]\right\|_{\mathrm{op}} \leq\frac{2}{(1+4\sigma_1^4)^{m/2}},
\end{equation}
\begin{equation}
    \left\|\mathbb{E}_{\gamma_2}\left[\hat{W}_{1,\gamma_2}^{\otimes 2}\right]\right\|_{\mathrm{op}} =\frac{\epsilon_0^2}{2}\left\|\mathbb{E}_{\gamma_2}\left[(i\hat{D}(\gamma_2)-i\hat{D}^\dagger(\gamma_2))^{\otimes 2}\right]\right\|_{\mathrm{op}}\leq \frac{2\epsilon_0^2}{(1+4\sigma_{2}^4)^{m'/2}},
\end{equation}
and since $\hat{W}_{2,\gamma_2}^T = \hat{W}_{1,\gamma_2}$ the same inequality holds for $\left\|\mathbb{E}_{\gamma_2}\left[\hat{W}_{2,\gamma_2}^{\otimes 2}\right]\right\|_{\mathrm{op}}$. Combining these two bounds and applying the case of Corollary~\ref{corr:master}, we obtain the weaker upper bound on $\Delta$ as
\begin{equation}
\begin{aligned}
    \Delta &\leq \sqrt{\left\|\mathbb{E}_{\gamma_1}\left[\hat{K}_{1,\gamma_1}^{\otimes 2}\right]\right\|_{\mathrm{op}}^2\left\|\mathbb{E}_{\gamma_2}\left[\hat{W}_{1,\gamma_2}^{\otimes 2}\right]\right\|_{\mathrm{op}}^2} + \sqrt{\left\|\mathbb{E}_{\gamma_1}\left[\hat{K}_{2,\gamma_1}^{\otimes 2}\right]\right\|_{\mathrm{op}}^2\left\|\mathbb{E}_{\gamma_2}\left[\hat{W}_{2,\gamma_2}^{\otimes 2}\right]\right\|_{\mathrm{op}}^2} \\
    &\leq 2\epsilon_0^2\left(\sqrt{((1+(0.99\kappa\tanh^2(2r))^2)^{m/2}(1+(0.99\kappa')^2)^{m'/2})^{-2}}+\sqrt{((1+(0.99\kappa\tanh^2(2r))^2)^{m/2}(1+(0.99\kappa')^2)^{m'/2})^{-2}}\right),
\end{aligned}
\end{equation}
\begin{equation}
\begin{aligned}
    \Delta &\leq \frac{4\epsilon_0^2}{(1+(0.99\kappa')^2)^{m'/2}(1+(0.99\kappa\tanh^2(2r))^2)^{m/2}}\\&\implies N= \Omega(1/\Delta) = \Omega(\epsilon^{-2}(1+(0.99\kappa')^2)^{m'/2}(1+(0.99\tanh^2(2r)\kappa)^2)^{m/2}).
\end{aligned}
\end{equation}
\end{proof}

\subsection{Lower bounds with access to conjugate of the channel}\label{app:lowerbound_conjugate}
The channel learning problem for the qubit case has already been analyzed in \cite{10.1145/3670418}; however, we note that there are certain subtleties in relation to the conjugate channel as a resource for the case of qubits. First, we observe that the phase space displacement operators for $m$ qubits will have 
\begin{equation}
    \hat{D}_{2,m}(\mathbf{q},\mathbf{p}) = (-1)^{\mathbf{p}\cdot\mathbf{q}}\hat{D}_{2,m}(\mathbf{q},\mathbf{p})^T = (-1)^{\mathbf{p}\cdot\mathbf{q}}\hat{D}_{2,m}(-\mathbf{q},\mathbf{p}),
\end{equation}
for all $\mathbf{q},\mathbf{p}\in \mathbb{F}_2^{m}$. Hence any channel $\mathcal{E}\in \CPTP(\mathcal{H}_{2,m},\mathcal{H}_{2,m'})$ will always satisfy
\begin{equation}
    C_{\mathcal{E}}((-\mathbf{q},\mathbf{p}),(-\mathbf{q}',\mathbf{p}')) = (-1)^{\mathbf{p}\cdot\mathbf{q}+\mathbf{p}'\cdot\mathbf{q}'}C_{\mathcal{E}}((\mathbf{q},\mathbf{p}),(\mathbf{q}',\mathbf{p}')).
\end{equation}
for all $(\mathbf{q},\mathbf{p})\in \mathbb{F}_2^{m}\times\mathbb{F}_2^m$ and $(\mathbf{q}',\mathbf{p}')\in \mathbb{F}_2^{m'}\times \mathbb{F}_2^{m'}$. Note that the efficiency in Thm.~\ref{thm:efficientlearnconj} comes from the conjugate channel having $C_{\mathcal{E}^*}((-\mathbf{q},\mathbf{p}),(-\mathbf{q}',\mathbf{p}')) = C_{\mathcal{E}}((\mathbf{q},\mathbf{p}),(\mathbf{q}',\mathbf{p}'))$ since this makes the Bell measurement possible, allowing efficient estimation of $C_{\mathcal{E}}((\mathbf{q},\mathbf{p}),(\mathbf{q}',\mathbf{p}'))^2$ for all possible queries. Hence whether provided $2$ copies of the same channel $\mathcal{E}$ or $1$ copy of $\mathcal{E}\otimes\mathcal{E}^*$, these cases have effectively the same performance for the channel learning Problem~\ref{prob:channel_learn}. We now make these observations concrete with the analysis of the general multi-qudit/bosonic cases.

\begin{lemma}\label{lem:discr_disp_transtensor}
    Assuming prime $d$ and $k,k'$ being positive integers,
    \begin{equation}
        \left\|\sum_{\mathbf{q},\mathbf{p}\in\mathbb{F}^{m}_d}\hat{D}_{d,m}(\mathbf{q},\mathbf{p})^{\otimes 2k}\otimes \hat{D}_{d,m}(-\mathbf{q},\mathbf{p})^{\otimes 2k'}\right\|_\mathrm{op} = \begin{cases}
            d^m,& k\neq k' \mod d\\
            d^{2m}, & k =k'\mod d
        \end{cases}.
    \end{equation}
\end{lemma}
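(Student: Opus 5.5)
I would prove this exactly as in Lemma~\ref{lem:discr_disp_tensor}, by computing the action of the operator on computational basis states $\bigotimes_{i=1}^{2k}\ket{\mathbf{a}^{(i)}}\otimes\bigotimes_{j=1}^{2k'}\ket{\mathbf{b}^{(j)}}$. Using $\hat{D}_{d,m}(\mathbf{q},\mathbf{p}) = e^{i\pi\mathbf{q}\cdot\mathbf{p}/d}\bigotimes \hat{X}^{q_i}\hat{Z}^{p_i}$, each factor $\hat{D}_{d,m}(\mathbf{q},\mathbf{p})$ shifts $\mathbf{a}^{(i)}\mapsto\mathbf{a}^{(i)}+\mathbf{q}$ with phase $e^{\frac{2\pi i}{d}\mathbf{p}\cdot(\mathbf{a}^{(i)}+\mathbf{q}/2)}$. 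Each factor $\hat{D}_{d,m}(-\mathbf{q},\mathbf{p})$ shifts $\mathbf{b}^{(j)}\mapsto\mathbf{b}^{(j)}-\mathbf{q}$ with phase $e^{\frac{2\pi i}{d}\mathbf{p}\cdot(\mathbf{b}^{(j)}-\mathbf{q}/2)}$. The total phase is therefore $e^{\frac{2\pi i}{d}\mathbf{p}\cdot((k-k')\mathbf{q}+\sum_i\mathbf{a}^{(i)}+\sum_j\mathbf{b}^{(j)})}$. The $\mathbf{q}\mathbf{p}$ terms combine to $(k-k')\mathbf{q}\cdot\mathbf{p}$ because $2k\cdot\tfrac12 = k$ and $2k'\cdot\tfrac12=k'$; I would verify carefully that the half-integer prefactor conventions work out for $d=2$ as well, since there $2^{-1}$ does not exist mod $d$. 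Summing over $\mathbf{p}$, as in Eq.~\eqref{eq:disp_tensor_phase_sum}, yields $d^m\,\delta_{(k-k')\mathbf{q}+\sum\mathbf{a}+\sum\mathbf{b}\equiv 0}$.

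\textbf{Case $k\neq k'\bmod d$.} Let $g=(k-k')^{-1}$ in $\mathbb{F}_d$. The $\mathbf{q}$ sum collapses to the unique value $\mathbf{q}=-g(\sum\mathbf{a}+\sum\mathbf{b})$, so the operator maps each basis state to $d^m$ times another basis state, via the linear map $\mathbf{a}\mapsto\mathbf{a}+\mathbf{q}$, $\mathbf{b}\mapsto\mathbf{b}-\mathbf{q}$. This map is invertible: the quantity $s=\sum\mathbf{a}+\sum\mathbf{b}$ becomes $s+(2k-2k')\mathbf{q}=s-2s=-s$, so applying the map twice returns the original state, making it an involution. Hence the operator is $d^m$ times a permutation matrix, and its norm is $d^m$.

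\textbf{Case $k\equiv k'\bmod d$.} The operator is a sum of $d^{2m}$ unitaries, so the triangle inequality gives the upper bound $d^{2m}$. To show this is saturated, I would exhibit an eigenvector, analogous to $\ket{\Psi}$ in Lemma~\ref{lem:discr_disp_tensor}. The operators $\hat{D}(\mathbf{q},\mathbf{p})^{\otimes 2k}\otimes\hat{D}(-\mathbf{q},\mathbf{p})^{\otimes 2k'}$ mutually commute: the commutation phase is $e^{\frac{2\pi i}{d}(2k-2k')(\cdots)}=1$. A common eigenvector with eigenvalue $1$ for all of them then gives the value $d^{2m}$. For a concrete candidate I would take $\ket{\Psi}=d^{-m/2}\sum_{\mathbf{z}}\ket{\mathbf{z}}^{\otimes 2k}\otimes\ket{-\mathbf{z}}^{\otimes 2k'}$, or a suitably modified version. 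The check is that it is invariant under the $\mathbf{q}$ shifts (each $\mathbf{z}$ block shifts by $+\mathbf{q}$ and each $-\mathbf{z}$ block by $-\mathbf{q}$, consistent) and that the phase $\mathbf{p}\cdot((k-k')\mathbf{q}+(2k-2k')\mathbf{z})$ vanishes since $k\equiv k'$. Direct application then gives eigenvalue $d^{2m}$.

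The main obstacle I expect is the exact bookkeeping of the $e^{i\pi\mathbf{q}\cdot\mathbf{p}/d}$ prefactors and the resulting $\mathbf{q}\cdot\mathbf{p}$ phase cancellation, particularly for $d=2$, where $\hat{D}(-\mathbf{q},\mathbf{p})$ differs from $\hat{D}(\mathbf{q},\mathbf{p})$ only by a sign. If the residual phase does not cancel, I would fall back on a unitary conjugation, such as the partial parity flip $\hat{P}_d$ used in Corollary~\ref{corr:dispparityflip}, to reduce to Lemma~\ref{lem:discr_disp_tensor}.
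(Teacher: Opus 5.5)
Your proposal is correct and follows the paper's proof essentially step for step. It uses the same basis-state computation, the same collapse of the $\mathbf{q}$-sum to an involutive permutation (norm $d^m$) when $k\not\equiv k' \pmod d$, and the same eigenvector $d^{-m/2}\sum_{\mathbf z}\ket{\mathbf z}^{\otimes 2k}\ket{-\mathbf z}^{\otimes 2k'}$ saturating the triangle bound $d^{2m}$ when $k\equiv k'\pmod d$. The one difference is that the paper handles $d=2$ separately by reducing to Lemma~\ref{lem:discr_disp_tensor} with $k+k'$. Your uniform computation already covers that case: the prefactors $e^{\pm i\pi\mathbf q\cdot\mathbf p/d}$ appear with even multiplicities $2k$ and $2k'$ and combine to $e^{\frac{2\pi i}{d}(k-k')\mathbf q\cdot\mathbf p}$, which is well defined mod $d$, so the fallback you mention is not needed.
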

\begin{proof}
We first note that in the qubit case $\hat{D}_{2,m}(-\mathbf{q},\mathbf{p}) = (-1)^{\mathbf{q}\cdot\mathbf{p}}\hat{D}_{2,m}(\mathbf{q},\mathbf{p})$. Hence $\hat{D}_{2,m}(\mathbf{q},\mathbf{p})^{\otimes 2k}\otimes \hat{D}_{2,m}(-\mathbf{q},\mathbf{p})^{\otimes 2 k'} = \hat{D}_{2,m}(\mathbf{q},\mathbf{p})^{\otimes 2(k+k')}$ since the $(-1)^{\mathbf{q}\cdot\mathbf{p}}$ phase would be irrelevant due to being tensored $2k'$ times. Hence this is the exact same formulation as Lemma~\ref{lem:discr_disp_tensor} which has two cases of $k +k' = 0\mod 2$ and $k+k' = 1\mod 2$ which are equivalent to $k  =k'\mod 2$ and $k\neq k'\mod 2$ yielding operator norms of $2^{2m}$ and $2^m$ respectively.

For qutrits and beyond, we first examine how this operator acts on a basis state in $\mathcal{H}_{d,m}$ defined as follows where $\mathbf{a}^{(i)}\in\mathbb{F}^m_d$
    \begin{equation}
        \begin{aligned}
            &\sum_{\mathbf{q},\mathbf{p}\in\mathbb{F}^{m}_d}\hat{D}_{d,m}(\mathbf{q},\mathbf{p})^{\otimes 2k}\otimes \hat{D}_{d,m}(-\mathbf{q},\mathbf{p})^{\otimes 2k'}\bigotimes_{i=1}^{2k}\ket{\mathbf{a}^{(i)}}\bigotimes_{i'=1}^{2k'}\ket{\mathbf{a}^{(i')}} \\&=\sum_{\mathbf{q},\mathbf{p}\in\mathbb{F}^{m}_d}e^{i\frac{2\pi}{d}\mathbf{p}\cdot((k-k')\mathbf{q} + \sum_{i=1}^{2k}\mathbf{a}^{(i)}+\sum_{i'=1}^{2k'}\mathbf{a}^{(i')})}\bigotimes_{i=1}^{2k}\ket{\mathbf{a}^{(i)} + \mathbf{q}}\bigotimes_{i'=1}^{2k'}\ket{\mathbf{a}^{(i')} - \mathbf{q}}.
        \end{aligned}
    \end{equation}
    Observe that
    \begin{equation}
    \begin{aligned}
        \sum_{\mathbf{p}\in\mathbb{F}^{m}_d}e^{i\frac{2\pi}{d}\mathbf{p}\cdot((k-k')\mathbf{q} + \sum_{i=1}^{2k}\mathbf{a}^{(i)}+ \sum_{i'=1}^{2k'}\mathbf{a}^{(i')})} &= \prod_{j=1}^m\left(\sum_{p\in\mathbb{F}_d}e^{i\frac{2\pi}{d}p\left((k-k')q_j + \sum_{i=1}^{2k}a^{(i)}_j+\sum_{i'=1}^{2k'}a^{(i')}_j\right)}\right) \\&=d^{m}\delta_{(k-k')\mathbf{q} + \sum_{i=1}^{2k}\mathbf{a}^{(i)}+\sum_{i'=1}^{2k'}\mathbf{a}^{(i')}\mod d,0}.
    \end{aligned}
    \end{equation}
    \textbf{Case 1:} $k \neq k'\mod d$. We have a $g\in\mathbb{F}_{d}$ such that $g(k-k') = 1\mod d$ which then means that there is a unique $q_j = -g(\sum_{i=1}^{2k}a^{(i)}_j+\sum_{i'=1}^{2k'}a^{(i')}_j)$ for all $j$ hence giving
    \begin{equation}
        \sum_{\mathbf{q},\mathbf{p}\in\mathbb{F}^{m}_d}\hat{D}_{d,m}(\mathbf{q},\mathbf{p})^{\otimes 2k}\otimes \hat{D}_{d,m}(-\mathbf{q},\mathbf{p})^{\otimes 2k'}\bigotimes_{i=1}^{2k}\ket{\mathbf{a}^{(i)}}\bigotimes_{i'=1}^{2k'}\ket{\mathbf{a}^{(i')}} = d^m\bigotimes_{i=1}^{2k}\left|\mathbf{a}^{(i)} -gS\right\rangle\bigotimes_{i'=1}^{2k'}\left|\mathbf{a}^{(i')} 
        +gS\right\rangle,\quad S = \sum_{i=1}^{2k}\mathbf{a}^{(i)}+\sum_{i'=1}^{2k'}\mathbf{a}^{(i')}.
    \end{equation}
which means that this operation divided by $d^m$ maps basis states to other basis states. This mapping can be expressed as a matrix over $\mathbb{F}^{(2k+2k')m}_{d}$ given by $P = \mathbb{I}_{(2k+2k')m} - g(\mathbf{1}_{2k,2k'})\otimes\mathbb{I}_m$ where the matrix $\mathbf{1}_{2k,2k'}$ is a $2(k+k')\times 2(k+k')$ matrix given by
\begin{equation}
    \mathbf{1}_{2k,2k'} = \begin{pmatrix}
        1 & 1 & \cdots & 1\\
        \vdots & 2k-1\text{ more rows of }1\\
        -1& -1 &\cdots & -1\\
        \vdots & 2k'-1\text{ more rows of }-1\\
    \end{pmatrix}.
\end{equation}
The matrix $P$ is self-inverse over $\mathbb{F}^{(2k+2k')m}_d$ and so the operator
\begin{equation}
        \sum_{\mathbf{q},\mathbf{p}\in\mathbb{F}^{m}_d}\hat{D}_{d,m}(\mathbf{q},\mathbf{p})^{\otimes 2k}\otimes \hat{D}_{d,m}(-\mathbf{q},\mathbf{p})^{\otimes 2k'}\ket{\mathbf{a}} = d^m\ket{P\mathbf{a}}.
    \end{equation}
is $d^m$ times a permutation operation over the standard basis hence giving operator norm $d^m$.

\textbf{Case 2:} $k = k'\mod d$. In this case we have
\begin{equation}
    \begin{aligned}
            &\sum_{\mathbf{q},\mathbf{p}\in\mathbb{F}^{m}_d}\hat{D}_{d,m}(\mathbf{q},\mathbf{p})^{\otimes 2k}\otimes \hat{D}_{d,m}(-\mathbf{q},\mathbf{p})^{\otimes 2k'}\bigotimes_{i=1}^{2k}\ket{\mathbf{a}^{(i)}}\bigotimes_{i'=1}^{2k'}\ket{\mathbf{a}^{(i')}} \\&=\begin{cases}
        d^m\sum_{\mathbf{q}\in\mathbb{F}^m_d}\bigotimes_{i=1}^{2k}\ket{\mathbf{a}^{(i)} + \mathbf{q}}\bigotimes_{i'=1}^{2k'}\ket{\mathbf{a}^{(i')} - \mathbf{q}} & \sum_{i=1}^{2k}\mathbf{a}^{(i)}+\sum_{i'=1}^{2k'}\mathbf{a}^{(i')} = 0\mod d\\
        0 & \mathrm{otherwise}
    \end{cases}.
        \end{aligned} 
\end{equation}
We can check that a trivial eigenstate of this operator is the fully entangled input state  given by
\begin{equation}
    \ket{\Psi} =\frac{1}{
    d^{m/2}} \sum_{\mathbf{z}\in\mathbb{F}^m_d}\ket{\mathbf{z}}^{\otimes 2k}\ket{-\mathbf{z}}^{\otimes 2k'},\quad \sum_{\mathbf{q},\mathbf{p}\in\mathbb{F}^{m}_d}\hat{D}_{d,m}(\mathbf{q},\mathbf{p})^{\otimes 2k}\otimes \hat{D}_{d,m}(-\mathbf{q},\mathbf{p})^{\otimes 2k'}\ket{\Psi}  = d^{2m}\ket{\Psi}.
\end{equation}
which can be checked to yield an operator norm of $d^{2m}$. Note that since the operator in question is summing $d^{2m}$ unitary operators, the triangle inequality yields an upper bound of $d^{2m}$ which can be shown to be saturated by the choice of this input state.
\end{proof}

\begin{corollary}
     Assuming $d$ prime and $k,k',r,r'$ being positive integers with $1\leq r\leq 2k$ and $1\leq r'\leq 2k'$,
    \begin{equation}
        \begin{aligned}
            \left\|\sum_{\mathbf{q},\mathbf{p}\in\mathbb{F}^{m}_d}\hat{D}_{d,m}(\mathbf{q},\mathbf{p})^{\otimes r}\otimes \hat{D}_{d,m}(-\mathbf{q},-\mathbf{p})^{\otimes 2k-r}\otimes \hat{D}_{d,m}(-\mathbf{q},\mathbf{p})^{\otimes r'}\otimes \hat{D}_{d,m}(\mathbf{q},-\mathbf{p})^{\otimes 2k'-r'}\right\|_\mathrm{op} = \begin{cases}
            d^m,& k\neq k' \mod d\\
            d^{2m}, & k =k'\mod d
        \end{cases}.
        \end{aligned}
    \end{equation}
\end{corollary}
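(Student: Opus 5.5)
The plan is to reduce to Lemma~\ref{lem:discr_disp_transtensor} by a unitary conjugation, exactly as Corollary~\ref{corr:dispparityflip} was reduced to Lemma~\ref{lem:discr_disp_tensor}. The parity flip $\hat{P}_d^{\otimes m}$ from Eq.~\eqref{eq:parityqudit} satisfies $\hat{P}_d^{\otimes m}\hat{D}_{d,m}(\mathbf{q},\mathbf{p})\hat{P}_d^{\otimes m}=\hat{D}_{d,m}(\mathbf{q},\mathbf{p})^\dagger=\hat{D}_{d,m}(-\mathbf{q},-\mathbf{p})$. This identity holds for every $(\mathbf{q},\mathbf{p})$, in particular for $(-\mathbf{q},\mathbf{p})$, so it also gives $\hat{P}_d^{\otimes m}\hat{D}_{d,m}(-\mathbf{q},\mathbf{p})\hat{P}_d^{\otimes m}=\hat{D}_{d,m}(\mathbf{q},-\mathbf{p})$.

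I then conjugate the operator in the statement by the fixed (i.e.\ $(\mathbf{q},\mathbf{p})$-independent) unitary
\[
\hat{U}=\mathbb{I}^{\otimes r}\otimes(\hat{P}_d^{\otimes m})^{\otimes(2k-r)}\otimes\mathbb{I}^{\otimes r'}\otimes(\hat{P}_d^{\otimes m})^{\otimes(2k'-r')} .
\]
Under $\hat{U}$, each summand becomes $\hat{D}_{d,m}(\mathbf{q},\mathbf{p})^{\otimes 2k}\otimes\hat{D}_{d,m}(-\mathbf{q},\mathbf{p})^{\otimes 2k'}$. Because $\hat{U}$ does not depend on the summation variable, it passes through the sum. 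The conjugated sum is therefore exactly the operator of Lemma~\ref{lem:discr_disp_transtensor}, and since the operator norm is unitarily invariant the claimed values $d^m$ (for $k\neq k'\bmod d$) and $d^{2m}$ (for $k=k'\bmod d$) follow immediately.

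The only points to check are bookkeeping. First, the tensor factors have to be arranged in the stated order; if a different ordering is intended, a permutation of tensor factors is also unitary and can be absorbed into $\hat{U}$. Second, no phases appear: the convention $\hat{D}_{d,m}(-\mathbf{q},-\mathbf{p})=\hat{D}_{d,m}(\mathbf{q},\mathbf{p})^\dagger$ holds exactly because of the symmetric phase $e^{i\pi\mathbf{q}\cdot\mathbf{p}/d}$ in Eq.~\eqref{eq:disp_qudit_defn}, and the transpose convention from App.~\ref{app:qudit_disp} gives the $(-\mathbf{q},\mathbf{p})$ form.

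For $d=2$, I would note that $\hat{P}_2=\mathbb{I}$ and all displacements are Hermitian, so the statement coincides with Lemma~\ref{lem:discr_disp_transtensor} directly. I expect no real obstacle beyond verifying this sign and phase bookkeeping.
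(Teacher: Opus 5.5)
Your proposal is correct and matches the paper's proof. The paper's proof is the one-line remark that the parity flip $\hat{P}_d^{\otimes m}$ from Corollary~\ref{corr:dispparityflip} reduces the statement to Lemma~\ref{lem:discr_disp_transtensor}; your $(\mathbf{q},\mathbf{p})$-independent unitary $\hat{U}$, acting on the $2k-r$ and $2k'-r'$ factors, together with the check $\hat{P}_d^{\otimes m}\hat{D}_{d,m}(\mathbf{q},-\mathbf{p})\hat{P}_d^{\otimes m}=\hat{D}_{d,m}(-\mathbf{q},\mathbf{p})$, is exactly the bookkeeping the paper leaves implicit.
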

\begin{proof}
    This is a trivial extension using the parity operation defined in Corollary~\ref{corr:dispparityflip}.
\end{proof}

\begin{corollary}\label{corr:quditEtransopnorm}
    For $d$ prime, $k,k'$ positive integers and any $\theta\in\mathbb{R}$, the operators $\hat{E}_{d,m}$ of Eq.~\eqref{eq:quditHWobs} satisfy
    \begin{equation}
        \frac{1}{d^{2m}-1}\left\|\sum_{\mathbf{q},\mathbf{p}\in\mathbb{F}^m_d,(\mathbf{q},\mathbf{p})\neq (0,0)}\hat{E}_{d,m}(\mathbf{q},\mathbf{p})^{\otimes 2k}\otimes \hat{E}_{d,m}(-\mathbf{q},\mathbf{p})^{\otimes 2k'}\right\|_{\mathrm{op}} \leq \begin{cases}\frac{2^{k+k'}}{d^m-1}& k\neq k' \mod d\\
        2^{k+k'} & k = k'\mod d
        \end{cases}.
    \end{equation}
\end{corollary}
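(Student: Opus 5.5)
The plan follows the proof of Corollary~\ref{corr:quditEopnorm}, with Lemma~\ref{lem:discr_disp_transtensor} and its corollary for mixed signs taking the place of Lemma~\ref{lem:discr_disp_tensor}. First I rewrite the restricted sum as the full sum over all $(\mathbf{q},\mathbf{p})\in\mathbb{F}_d^m\times\mathbb{F}_d^m$ minus the $(\mathbf{0},\mathbf{0})$ term. Since $\hat{E}_{d,m}(\mathbf{0},\mathbf{0})=\frac{1+i}{2}\mathbb{I}+\frac{1-i}{2}\mathbb{I}=\mathbb{I}$, the removed term is exactly $\mathbb{I}$, which has operator norm $1$.

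Next I expand each factor $\hat{E}_{d,m}(\mathbf{q},\mathbf{p})=\frac{1+i}{2}\hat{D}_{d,m}(\mathbf{q},\mathbf{p})+\frac{1-i}{2}\hat{D}_{d,m}(-\mathbf{q},-\mathbf{p})$, and similarly $\hat{E}_{d,m}(-\mathbf{q},\mathbf{p})=\frac{1+i}{2}\hat{D}_{d,m}(-\mathbf{q},\mathbf{p})+\frac{1-i}{2}\hat{D}_{d,m}(\mathbf{q},-\mathbf{p})$. The tensor product then becomes a sum over $2^{2k+2k'}$ sign patterns, each with coefficient of modulus $2^{-(k+k')}$. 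A single pattern has $r$ copies of $\hat{D}(\mathbf{q},\mathbf{p})$ and $2k-r$ of $\hat{D}(-\mathbf{q},-\mathbf{p})$ in the first block, and $r'$ copies of $\hat{D}(-\mathbf{q},\mathbf{p})$ and $2k'-r'$ of $\hat{D}(\mathbf{q},-\mathbf{p})$ in the second, arranged in some order of tensor slots. Permuting tensor factors is a unitary conjugation, so each summed pattern is unitarily equivalent to the operator in the corollary following Lemma~\ref{lem:discr_disp_transtensor}. That corollary gives norm $d^m$ when $k\neq k'\bmod d$ and $d^{2m}$ otherwise.

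The case $r\in\{0,2k\}$ or $r'\in\{0,2k'\}$ is not literally covered, since that corollary assumes $1\le r\le 2k$ and $1\le r'\le 2k'$. I will handle it with the parity flip $\hat{P}_d$ on the relevant factors, exactly as in Corollary~\ref{corr:dispparityflip}, which reduces directly to Lemma~\ref{lem:discr_disp_transtensor}. I expect this bookkeeping, and checking that the parity trick preserves the relation between the two blocks (the difference $k-k'$), to be the main thing to verify carefully.

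For $k\neq k'\bmod d$, the triangle inequality over the sign patterns gives $\|\sum_{\mathrm{all}}\cdot\|_{\mathrm{op}}\le 2^{2k+2k'}\cdot 2^{-(k+k')}d^m=2^{k+k'}d^m$. Adding the removed identity term gives at most $2^{k+k'}d^m+1\le 2^{k+k'}(d^m+1)$. Dividing by $d^{2m}-1=(d^m-1)(d^m+1)$ yields $2^{k+k'}/(d^m-1)$. For $k=k'\bmod d$ I do not need any structure. The operator is a sum of $(d^{2m}-1)2^{2k+2k'}$ unitaries with coefficients of modulus $2^{-(k+k')}$, so the triangle inequality directly gives $2^{k+k'}$ after normalization. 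The bound does not depend on $\theta$, which matches the statement.
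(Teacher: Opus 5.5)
Your proof is correct and follows the paper's argument. Like the paper, you subtract the $(\mathbf{0},\mathbf{0})$ identity term, expand each $\hat{E}$ factor into displacements, bound each of the $2^{2k+2k'}$ pattern sums by $d^m$ via the corollary to Lemma~\ref{lem:discr_disp_transtensor}, and fall back on the trivial triangle-inequality bound when $k=k'\bmod d$. Your extra care about the endpoint cases is a small improvement over the paper, which applies that corollary at $r=0$ or $r'=0$ without comment. Note that $r=2k$ and $r'=2k'$ are already covered, and conjugating individual tensor factors by $\hat{P}_d^{\otimes m}$ reduces every pattern directly to Lemma~\ref{lem:discr_disp_transtensor} while preserving the two blocks.
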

\begin{proof}
Note that
    \begin{equation}
    \begin{aligned}
        &\frac{1}{d^{2m} - 1}\sum_{\mathbf{q},\mathbf{p}\in\mathbb{F}^m_d,(\mathbf{q},\mathbf{p})\neq (0,0)}\hat{E}_{d,m}(\mathbf{q},\mathbf{p})^{\otimes 2k}\otimes \hat{E}_{d,m}(-\mathbf{q},\mathbf{p})^{\otimes 2k'}\\
        &= \frac{1}{d^{2m} - 1}\sum_{\mathbf{q},\mathbf{p} \in\mathbb{F}^m_{d}} \left(\frac{e^{i\theta}\hat{D}_{d,m}(\mathbf{q},\mathbf{p}) + e^{-i\theta}\hat{D}_{d,m}(-\mathbf{q},-\mathbf{p})}{\sqrt{2}}\right)^{\otimes 2k}\otimes\left(\frac{e^{i\theta}\hat{D}_{d,m}(-\mathbf{q},\mathbf{p}) + e^{-i\theta}\hat{D}_{d,m}(\mathbf{q},-\mathbf{p})}{\sqrt{2}}\right)^{\otimes 2k'} \\&\quad\quad- \frac{(\sqrt{2}\cos(\theta))^{2k+2k'}}{d^{2m}-1}\mathbb{I}  .
    \end{aligned} 
    \end{equation}
    By expanding the above sum in terms of displacements and then applying the triangle inequality we have the following for $k\neq k'\mod d$
    \begin{equation}
    \begin{aligned}
        &\frac{1}{d^{2m}-1}\left\|\sum_{\mathbf{q},\mathbf{p}\in\mathbb{F}^m_d,(\mathbf{q},\mathbf{p})\neq (0,0)}\hat{E}_{d,m}(\mathbf{q},\mathbf{p})^{\otimes 2k}\otimes \hat{E}_{d,m}(-\mathbf{q},\mathbf{p})^{\otimes 2k'}\right\|_{\mathrm{op}} \\&\leq \sum_{r=0}^{2k}\sum_{r'=0}^{2k'}\frac{\binom{2k}{r}\binom{2k'}{r'}}{2^{k+k'}(d^{2m}-1)} \left\|\sum_{\mathbf{q},\mathbf{p}\in\mathbb{F}^{m}_d}\hat{D}_{d,m}(\mathbf{q},\mathbf{p})^{\otimes r}\otimes \hat{D}_{d,m}(-\mathbf{q},-\mathbf{p})^{\otimes 2k-r}\otimes \hat{D}_{d,m}(-\mathbf{q},\mathbf{p})^{\otimes r'}\otimes \hat{D}_{d,m}(\mathbf{q},-\mathbf{p})^{\otimes 2k'-r'}\right\|_\mathrm{op}\\&\quad\quad + \frac{2^{k+k'}\cos^{2k+2k'}(\theta)}{d^{2m}-1} \\&\leq \frac{2^{k+k'}d^m}{d^{2m}-1} + \frac{2^{k+k'}}{d^{2m}-1} =\frac{2^{k+k'}}{d^{m}-1}.
    \end{aligned}
    \end{equation}
    For the case of $k = k'\mod d$, we can note that the operator can be expressed as the sum of $2^{2k+2k'}(d^{2m}-1)$ operators, each of the form $\frac{\hat{D}}{2^{k+k'}(d^{2m}-1)}$ where $\hat{D}$ is a displacement. Hence by the triangle inequality it trivially follows that $\|\hat{O}\|_{\mathrm{op}}\leq 2^{k+k'}$ for all choices of of $k$ and $k'$.\end{proof}
\begin{lemma}\label{lem:disptransopnorm}
Taking $P(\gamma)$ as a Gaussian distribution for $\gamma\in\mathbb{C}^m$ with spread $\sigma$, the following holds for $k,k'\geq 1$
\begin{equation}
    \left\|\int d^{2m}\gamma P(\gamma)\hat{D}^{\otimes k}(\gamma)\otimes\hat{D}^{\otimes k'}(-\gamma^*)\right\|_{\mathrm{op}} \leq \frac{1}{(1+(k-k')^2\sigma^4)^{m/2}}.
\end{equation}
\end{lemma}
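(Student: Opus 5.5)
I will mirror the proof of Lemma~\ref{lem:dispopnorm}, computing $\|\hat{O}\|_{\mathrm{op}}^2=\sup_{\hat{\rho}}\text{Tr}[\hat{O}\hat{\rho}\hat{O}^\dagger]$ for
\begin{equation*}
\hat{O}=\int d^{2m}\gamma\, P(\gamma)\,\hat{D}^{\otimes k}(\gamma)\otimes\hat{D}^{\otimes k'}(-\gamma^*).
\end{equation*}
The sup runs over states on $\mathcal{H}^{\otimes(k+k')}$. The first step is to rewrite $\hat{O}^\dagger\hat{O}$ as a double integral over $\gamma_1,\gamma_2$. For this I use the composition rule $\hat{D}^\dagger(\gamma_2)\hat{D}(\gamma_1)=\hat{D}(\gamma_1-\gamma_2)e^{\frac{1}{2}i\Omega(\gamma_2,\gamma_1)}$, or with the opposite sign convention as appropriate, applied tensor factor by tensor factor. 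This gives a single product of displacements $\hat{D}^{\otimes k}(\gamma_-)\otimes\hat{D}^{\otimes k'}(-\gamma_-^*)$ with $\gamma_-=\gamma_1-\gamma_2$, multiplied by a total phase.

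The key computation is this total phase. The first $k$ factors contribute $e^{\frac{k}{2}i\Omega(\gamma_2,\gamma_1)}$, exactly as in Lemma~\ref{lem:dispopnorm}. For the last $k'$ factors the arguments are $-\gamma_1^*,-\gamma_2^*$. I check that $\Omega(-\gamma_2^*,-\gamma_1^*)=\Omega(\gamma_2^*,\gamma_1^*)=-\Omega(\gamma_2,\gamma_1)$, because complex conjugation flips the sign of the symplectic form $(\alpha^\dagger\beta-\beta^\dagger\alpha)/i$. So these factors contribute $e^{-\frac{k'}{2}i\Omega(\gamma_2,\gamma_1)}$, and the net phase is $e^{\frac{k-k'}{2}i\Omega(\gamma_2,\gamma_1)}=e^{\frac{k-k'}{2}i\Omega(\gamma_+,\gamma_-)}$, using the change of variables $\gamma_\pm$ from the earlier proof. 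This sign bookkeeping is the only real obstacle. I would verify it once on a single mode, for example by confirming that $\hat{D}(\gamma)\otimes\hat{D}(-\gamma^*)$ commute for different $\gamma$, which is consistent with the phase vanishing when $k=k'=1$.

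With the phase in hand, the rest is identical to Lemma~\ref{lem:dispopnorm} with $k$ replaced by $k-k'$. Factor $P(\gamma_1)P(\gamma_2)$ into Gaussians in $\gamma_+$ (variance $\sigma^2/2$) and $\gamma_-$ (variance $2\sigma^2$). Then integrate out $\gamma_+$ with the Gaussian Fourier formula, which produces the factor $\exp\!\left(-\frac{(k-k')^2\sigma^2|\gamma_-|^2}{4}\right)$. Next bound $|\text{Tr}[(\hat{D}^{\otimes k}(\gamma_-)\otimes\hat{D}^{\otimes k'}(-\gamma_-^*))\hat{\rho}]|\le 1$, since the operator is unitary. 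The remaining Gaussian integral in $\gamma_-$ evaluates to $(1+(k-k')^2\sigma^4)^{-m}$. Taking the square root gives the claimed bound, and the case $k=k'$ correctly reduces to the trivial bound $1$.
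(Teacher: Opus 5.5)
Your proposal is correct and follows the paper's proof essentially step for step. It expands $\sup_{\hat{\rho}}\text{Tr}[\hat{O}\hat{\rho}\hat{O}^\dagger]$ as a double integral, uses $\Omega(\alpha^*,\beta^*)=-\Omega(\alpha,\beta)$ to get the net phase $e^{\frac{k-k'}{2}i\Omega(\gamma_+,\gamma_-)}$, integrates out $\gamma_+$ with the Gaussian Fourier formula, and bounds the remaining characteristic function by $1$, which yields $(1+(k-k')^2\sigma^4)^{-m}$ before taking the square root.
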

\begin{proof}
    We define the operator $\hat{O}$ to be given by
    \begin{equation}
        \hat{O} = \int d^{2m}\gamma P(\gamma)\hat{D}^{\otimes k}(\gamma)\otimes \hat{D}^{\otimes k'}(-\gamma^*).
    \end{equation}
    By the definition of the operator norm, we have
    \begin{align}
\|\hat{O}\|^2_{\mathrm{op}} &= \sup_{\hat{\rho}\in D(\mathcal
        H^{\otimes (k+k')})}\text{Tr}\left[\hat{O}\hat{\rho}\hat{O}^\dagger\right]\\
        &= \sup_{\hat{\rho}\in D(\mathcal
        H^{\otimes (k+k')})}\int d^{2m}\gamma_1 d^{2m}\gamma_2 P(\gamma_1)P(\gamma_2)e^{\frac{k-k'}{2}i\Omega(\gamma_2,\gamma_1)}\text{Tr}\left[\hat{D}^{\otimes k}(\gamma_1-\gamma_2)\otimes \hat{D}^{\otimes k'}(-\gamma_1^*+\gamma_2^*)\hat{\rho}\right].
    \end{align}
    We define $\gamma_+ = \frac{\gamma_1+\gamma_2}{2}$ and $\gamma_- = \gamma_1 - \gamma_2$. Note that
    \begin{align}
        P(\gamma_1)P(\gamma_2) &= \frac{1}{(2\pi\sigma^2)^{2m}}\exp(-\frac{|\gamma_1|^2 + |\gamma_2|^2}{2\sigma^2})\\
        &= \left(\frac{1}{(\pi\sigma^2)^m}\exp(-\frac{|\gamma_+|^2}{\sigma^2})\right)\times\left(\frac{1}{(4\pi\sigma^2)^m}\exp(-\frac{|\gamma_-|^2}{4\sigma^2})\right),
    \end{align}
    and note that $i\Omega(\gamma_2,\gamma_1) = i\Omega(\gamma_+,\gamma_-)$, from which we use the Gaussian Fourier transform formula to note that
    \begin{align}
        \|\hat{O}\|_{\mathrm{op}}^2 &= \sup_{\hat{\rho}\in D(\mathcal
        H^{\otimes (k+k')})}\int d^{2m}\gamma_+ d^{2m}\gamma_- \left(\frac{1}{(\pi\sigma^2)^m}\exp(-\frac{|\gamma_+|^2}{\sigma^2})\right)\nonumber\\&\quad\quad\quad\times\left(\frac{1}{(4\pi\sigma^2)^m}\exp(-\frac{|\gamma_-|^2}{4\sigma^2})\right) e^{\frac{k-k'}{2}i\Omega(\gamma_+,\gamma_-)}\text{Tr}\left[\hat{D}^{\otimes k}(\gamma_-)\otimes \hat{D}^{\otimes k'}(\gamma_-^*)\hat{\rho}\right]\\
        &= \sup_{\hat{\rho}\in D(\mathcal
        H^{\otimes (k+k')})}\int d^{2m}\gamma_- \frac{1}{(4\pi\sigma^2)^m}\exp(-\frac{|\gamma_-|^2}{4}\left(\frac{1}{\sigma^2}+(k-k')^2\sigma^2\right)) \text{Tr}\left[\hat{D}^{\otimes k}(\gamma_-)\otimes \hat{D}^{\otimes k'}(-\gamma_-^*)\hat{\rho}\right]\\
        &\leq \int d^{2m}\gamma_- \frac{1}{(4\pi\sigma^2)^m}\exp(-\frac{|\gamma_-|^2}{4}\left(\frac{1}{\sigma^2}+(k-k')^2\sigma^2\right))
        \\&= \frac{1}{(1+(k-k')^2\sigma^4)^{m}}.
    \end{align}
    For the final inequality we use the fact that the characteristic function is a complex number with norm bounded by 1. Note that this holds true for arbitrary states $\hat{\rho}$ without any assumptions on the energy of the state.
\end{proof}
\begin{remark}
    Note that for Lemma~\ref{lem:disptransopnorm} the case of $k=k'$ makes the upper bound trivial which is important in showing the advantage of access to the complex-conjugate channel.
\end{remark}
\begin{corollary}\label{corr:bosonicEnormconj}
Taking $P(\gamma)$ as a Gaussian distribution for $\gamma\in\mathbb{C}^m$ with spread $\sigma$, the following holds for $k,k'\geq 1$ and any $\theta\in\mathbb{R}$

\begin{equation}
    \left\|\int d^{2m}\gamma P(\gamma)\left(\frac{e^{i\theta}\hat{D}(\gamma) + e^{-i\theta}\hat{D}^\dagger(\gamma)}{\sqrt{2}}\right)^{\otimes k}\otimes\left(\frac{e^{i\theta}\hat{D}(-\gamma^*) + e^{-i\theta}\hat{D}^\dagger(-\gamma^*)}{\sqrt{2}}\right)^{\otimes k'}\right\|_{\mathrm{op}} \leq \frac{2^{(k+k')/2}}{(1+(k-k')^2\sigma^4)^{m/2}}.
\end{equation}
\end{corollary}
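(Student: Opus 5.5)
The plan is to reduce the statement to Lemma~\ref{lem:disptransopnorm} in the same way Corollary~\ref{corr:bosonicEnorm} was reduced to Lemma~\ref{lem:dispopnorm}: expand the tensor powers into displacement terms, apply the triangle inequality, and bound each term by a single operator norm using unitary invariance. Concretely, I would first expand
\begin{equation*}
\left(\frac{e^{i\theta}\hat{D}(\gamma) + e^{-i\theta}\hat{D}^\dagger(\gamma)}{\sqrt{2}}\right)^{\otimes k} = \frac{1}{2^{k/2}}\sum_{s\in\{-1,1\}^k} e^{i\theta\sum_i s_i}\bigotimes_{i=1}^k\hat{D}(s_i\gamma),
\end{equation*}
and similarly expand the $k'$-fold factor over $s'\in\{-1,1\}^{k'}$ with $\hat{D}(-s'_j\gamma^*)$. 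This gives $2^{k+k'}$ terms with overall prefactor $2^{-(k+k')/2}$. Each term has the form $e^{i\phi(s,s')}\int d^{2m}\gamma P(\gamma)\bigotimes_i\hat{D}(s_i\gamma)\otimes\bigotimes_j\hat{D}(-s'_j\gamma^*)$, and the phase is independent of $\gamma$.

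The key step is showing that every such term has operator norm equal to that of $\int d^{2m}\gamma P(\gamma)\hat{D}^{\otimes k}(\gamma)\otimes\hat{D}^{\otimes k'}(-\gamma^*)$. To do this, I would conjugate each tensor factor with $s_i=-1$ (respectively $s'_j=-1$) by the local parity unitary $e^{i\pi\hat{n}}$ on that mode block. Since $e^{i\pi\hat{n}}\hat{D}(\alpha)e^{-i\pi\hat{n}}=\hat{D}(-\alpha)$, this flips the sign of the displacement argument in that factor only, and it does so uniformly in $\gamma$. The conjugating unitary $U_{s,s'}$ is therefore a fixed tensor product of parities that commutes through the $\gamma$-integral. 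Operator norms are invariant under unitary conjugation, so all $2^{k+k'}$ terms share the same norm.

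Applying the triangle inequality then gives a bound of $2^{-(k+k')/2}\cdot 2^{k+k'}$ times the norm in Lemma~\ref{lem:disptransopnorm}, which equals $2^{(k+k')/2}/(1+(k-k')^2\sigma^4)^{m/2}$, as claimed.

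The only point requiring care is checking that the parity conjugation is legitimate inside the Bochner-type integral. This holds because the integrand is uniformly bounded ($\|\hat{D}\|_{\mathrm{op}}=1$) and $P$ is a probability density. I expect no real obstacle beyond that bookkeeping.
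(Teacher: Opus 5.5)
Your proposal is correct and is essentially the paper's own proof. The paper likewise expands over sign strings $s\in\{-1,1\}^{k+k'}$, equalizes all terms by the fixed parity unitary $\bigotimes_i e^{i\pi(1-s_i)\hat{n}/2}$, and applies the triangle inequality to get $2^{-(k+k')/2}\cdot 2^{k+k'}$ times the norm bounded in Lemma~\ref{lem:disptransopnorm}.
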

\begin{proof}
    Note that 
    \begin{align}
        \left(\frac{e^{i\theta}\hat{D}(\gamma) + e^{-i\theta}\hat{D}^\dagger(\gamma)}{\sqrt{2}}\right)^{\otimes k}\otimes\left(\frac{e^{i\theta}\hat{D}(-\gamma^*) + e^{-i\theta}\hat{D}^\dagger(-\gamma^*)}{\sqrt{2}}\right)^{\otimes k'}= \frac{1}{2^{(k+k')/2}}\sum_{s\in\{-1,1\}^{k+k'}}e^{i\sum_{i=1}^{k+k'}s_i\theta}\bigotimes_{i = 1}^{k}\hat{D}(s_i\gamma)\bigotimes_{i = k+1}^{k+k'}\hat{D}(-s_i\gamma^*).
    \end{align}
    Note that $\|\hat{O}\|_{\mathrm{op}} = \|\hat{U}^\dagger\hat{O}\hat{U}\|_{\mathrm{op}}$ where $\hat{U}$ is a unitary and $\hat{D}(\gamma) = e^{i\pi\hat{n}}\hat{D}^\dagger(\gamma)e^{-i\pi\hat{n}}$. As a result for all $s\in\{-1,1\}^{k+k'}$, we have $\|\int d^{2m}\gamma P(\gamma)\hat{D}^{\otimes k}(\gamma)\otimes\hat{D}^{\otimes k'}(-\gamma^*)\|_{\mathrm{op}} = \|\int d^{2m}\gamma P(\gamma)\bigotimes_{i = 1}^{k}\hat{D}(s_i\gamma)\bigotimes_{i = k+1}^{k+k'}\hat{D}(-s_i\gamma^*)\|_{\mathrm{op}}$ since they are connected by unitary $\hat{U} = \bigotimes_{i=1}^{k+k'} e^{i\pi (1-s_i)\hat{n}/2}$. By triangle inequality, we have
    \begin{align}
       &\left\|\int d^{2m}\gamma P(\gamma)\left(\frac{e^{i\theta}\hat{D}(\gamma) + e^{-i\theta}\hat{D}^\dagger(\gamma)}{\sqrt{2}}\right)^{\otimes k}\otimes\left(\frac{e^{i\theta}\hat{D}(-\gamma^*) + e^{-i\theta}\hat{D}^\dagger(-\gamma^*)}{\sqrt{2}}\right)^{\otimes k'}\right\|_{\mathrm{op}} \nonumber\\
       &\leq \frac{1}{2^{(k+k')/2}}\sum_{s\in\{-1,1\}^{k+k'}}\left\|\int d^{2m}\gamma P(\gamma)e^{i\sum_{i=1}^{k+k'}s_i\theta}\bigotimes_{i = 1}^{k}\hat{D}(s_i\gamma)\bigotimes_{i = k+1}^{k+k'}\hat{D}(-s_i\gamma^*)\right\|_{\mathrm{op}}\\
       &= 2^{(k+k')/2}\left\|\int d^{2m}\gamma P(\gamma)\hat{D}^{\otimes k}(\gamma)\otimes\hat{D}^{\otimes k'}(-\gamma^*)\right\|_{\mathrm{op}} \leq \frac{2^{(k+k')/2}}{(1+(k-k')^2\sigma^4)^{m/2}}.
    \end{align}
\end{proof}

\begin{theorem}\label{thm:qudit2quditwithconj}
Consider a $\mathcal{E}\in \CPTP(\mathcal{H}_{d,m},\mathcal{H}_{d',m'})$ ($d,d'$ are prime) and its conjugate $\mathcal{E}^*$ with conjugation defined in the standard computational bases of both $\mathcal{H}_{d,m}$ and $\mathcal{H}_{d',m'}$. Taking $\epsilon \leq1/8$ and $c\leq \min(d',d) - 1$, consider any adaptive ancilla-assisted learning scheme that is allowed to perform arbitrary measurements along with arbitrary inputs to parallel uses of the channel $(\mathcal{E}\otimes \mathcal{E}^*)^{\otimes c}$. If the learner can use this scheme to produce an estimate of $|C_{\mathcal{E}}(\mathbf{q},\mathbf{p},\mathbf{q}',\mathbf{p}')|$ such that for a query of $\mathbf{q},\mathbf{p}\in\mathbb{F}_{d}^{m}$, $\mathbf{q}',\mathbf{p}'\in\mathbb{F}_{d'}^{m'}$ it satisfies $\left||\hat{C}_{\mathcal{E}}(\mathbf{q},\mathbf{p},\mathbf{q}',\mathbf{p}')| - |C_{\mathcal{E}}(\mathbf{q},\mathbf{p},\mathbf{q}',\mathbf{p}')|\right|\leq \epsilon$ with a 2/3 success probability, then this learning scheme requires at least $N$ uses of $\mathcal{E}\otimes\mathcal{E}^*$ where 
$$N = \Omega\left(\min\left(\frac{(d'^{m'}-1)(d^{m}-1)}{c\epsilon^2},\frac{64}{c^3\epsilon^4}\right)\right).$$
\end{theorem}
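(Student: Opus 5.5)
We reuse the hard instance from Thm.~\ref{thm:qudit2qudit}: the replacement channel $\mathcal{E}_0(\cdot)=\text{Tr}(\cdot)\mathbb{I}/d'^{m'}$ against the measure-and-prepare channel $\mathcal{E}_{(\mathbf{q}_1,\mathbf{p}_1),(\mathbf{q}_2,\mathbf{p}_2)}$. Its POVM difference is $\hat{E}_{d,m}(\mathbf{q}_1,\mathbf{p}_1)/\sqrt{2}$, its output perturbation is $\epsilon_0\hat{E}_{d',m'}(\mathbf{q}_2,\mathbf{p}_2)$, and the parameters are uniform over nonzero phase-space points. Since $\mathcal{E}_0^*=\mathcal{E}_0$, the learner in fact discriminates $\mathcal{E}_0\otimes\mathcal{E}_0$ from $\mathcal{E}_{\mathbf{u},\mathbf{v}}\otimes\mathcal{E}^*_{\mathbf{u},\mathbf{v}}$.

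The conjugate channel is obtained by transposing the $K$ and $W$ operators, which maps $\hat{E}_{d,m}(\mathbf{q},\mathbf{p})\mapsto\hat{E}_{d,m}(-\mathbf{q},\mathbf{p})$ (up to the phase conventions in Eq.~\eqref{eq:quditHWobs}). Hence $\mathcal{E}\otimes\mathcal{E}^*$ has the form of Eq.~\eqref{eq:Epar1par2} with $l=3$ and $\hat\rho_0=(\mathbb{I}/d'^{m'})^{\otimes 2}$. The index $x=(x_1,x_2)\in\{0,1\}^2$ gives $\hat{K}_x=\hat{K}_{x_1}\otimes\hat{K}^T_{x_2}$ and $\hat{W}_x=\hat{W}_{x_1}\otimes\hat{W}^T_{x_2}$. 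The reduction from learning $|C_{\mathcal{E}}|$ to Problem~\ref{prob:manyrevel} is identical to that of Thm.~\ref{thm:qudit2qudit}, since the query value $\epsilon_0/(2\sqrt2)$ versus $0$ is unchanged. We will choose $\epsilon_0=\Theta(\epsilon)$ with $\epsilon\le 1/8$ so that the states stay valid.

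Next we apply Lemma~\ref{lem:master} with $c$ copies. A string $\pmb{x}\in(\{0,1\}^2)^c$ is characterized by $k=\#\{x_1=1\}$ and $k'=\#\{x_2=1\}$, and its term carries $\epsilon_0^{k+k'}$. The multiplicity of the pair $(k,k')$ is $\binom{c}{k}\binom{c}{k'}$. Corollary~\ref{corr:quditEtransopnorm} bounds the averaged operator norms: they are at most $2^{k+k'}/(d^m-1)$ (resp.\ with $d'^{m'}$) when $k\not\equiv k'\bmod d$, and at most $2^{k+k'}$ otherwise. The cases $k=0$ or $k'=0$ reduce to Corollary~\ref{corr:quditEopnorm}.

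Because $c\le\min(d,d')-1$, we have $|k-k'|<\min(d,d')$, so the congruent case occurs exactly when $k=k'$. The bound then reads
\[
\sqrt{\Delta}\lesssim \frac{1}{\sqrt{(d^m-1)(d'^{m'}-1)}}\sum_{k\ne k'}\binom{c}{k}\binom{c}{k'}(\sqrt2\epsilon_0)^{k+k'}+\sum_{k=1}^{c}\binom{c}{k}^2(\sqrt2\epsilon_0)^{2k}.
\]
We bound the first sum by $((1+\sqrt2\epsilon_0)^c-1)^2+2((1+\sqrt2\epsilon_0)^c-1)\le O(c\epsilon)$, using Lemma~\ref{lem:epssum} and $c\epsilon\le1$. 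For the second sum we use the Legendre form of Lemma~\ref{lem:epssum}, which gives $\sum_k\binom{c}{k}^2x^{2k}\le c^2x^2e^{2cx}=O(c^2\epsilon^2)$. Therefore $\Delta=O\big(c^2\epsilon^2/((d^m-1)(d'^{m'}-1))+c^4\epsilon^4\big)$.

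Then $T=\Omega(1/\Delta)$ and $N=cT$, which yields $N=\Omega(\min\{(d^m-1)(d'^{m'}-1)/(c\epsilon^2),\,1/(c^3\epsilon^4)\})$. The assumption $(d^m-1)(d'^{m'}-1)c^2\epsilon^2\ge64$ makes the second term dominant, which produces the constant $64$ in the statement.

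The main obstacle is the bookkeeping of the $K$/$W$ transposes and phases. We must check that every mixed term $\hat{E}^{\otimes k}\otimes(\hat{E}^T)^{\otimes k'}$ in $\mathbb{E}[\hat{K}_{\pmb{x}}^{\otimes2}]$ has exactly the structure covered by Corollary~\ref{corr:quditEtransopnorm}, namely $2k$ plain factors and $2k'$ transposed factors, possibly with parity flips. We must also check that the operator norm is invariant under reordering tensor factors, so that only $(k,k')$ matters. For the qubit case $d=2$ there is no admissible $c$ other than $c=1$, so only that case needs checking.
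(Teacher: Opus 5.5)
Your proposal is correct and follows the paper's proof essentially step for step. It uses the same hard instance and the same two-bit decomposition of $\mathcal{E}\otimes\mathcal{E}^*$ into an $l=3$ instance of Problem~\ref{prob:manyrevel}. It applies Corollary~\ref{corr:quditEtransopnorm} with the same observation that $c<\min(d,d')$ allows $|\pmb{x}|\equiv|\pmb{y}|$ modulo $d$ or $d'$ only when $|\pmb{x}|=|\pmb{y}|$, and it uses the same Legendre-form estimate from Lemma~\ref{lem:epssum} for the diagonal terms.

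Two small remarks. First, your bound $(1+\sqrt{2}\epsilon_0)^{2c}-1$ on the full double sum is the correct one; the paper writes exponent $c$, which is harmless. Second, the hypotheses $c\epsilon\le 1$ and $(d^m-1)(d'^{m'}-1)c^2\epsilon^2\ge 64$ that you invoke come from the informal Theorem~\ref{thm:qudit2quditinformalwithconj}, not from this statement. The min form needs neither: the $64$-condition only decides which term of the min is smaller, and for $c\epsilon>1$ the claim follows trivially from $N\ge c>1/(c^3\epsilon^4)$.
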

\begin{proof}
    Consider the following two channels
    \begin{gather}
        \mathcal{E}_0(\cdot) = \text{Tr}(\cdot)\frac{\mathbb{I}}{d'^{m'}},\quad\mathcal{E}_{(\mathbf{q}_1,\mathbf{p}_1),(\mathbf{q}_2,\mathbf{p}_2)}(\cdot) = \text{Tr}\left[\hat{\Pi}_{+,(\mathbf{q}_1,\mathbf{p}_1)}(\cdot)\right]\hat{\rho}_{+,(\mathbf{q}_2,\mathbf{p}_2)} + \text{Tr}\left[\hat{\Pi}_{-,(\mathbf{q}_1,\mathbf{p}_1)}(\cdot)\right]\hat{\rho}_{-,(\mathbf{q}_2,\mathbf{p}_2)},\\
        \hat{\Pi}_{\pm,(\mathbf{q}_1,\mathbf{p}_1)}  = \frac{\mathbb{I} \pm \hat{E}_{d,m}(\mathbf{q}_1,\mathbf{p}_1)/\sqrt{2}}{2},\\
        \hat{\rho}_{\pm,(\mathbf{q}_2,\mathbf{p}_2)} = \frac{\mathbb{I} \pm \epsilon_0\hat{E}_{d',m'}(\mathbf{q}_2,\mathbf{p}_2)}{d'^{m'}}.
    \end{gather}
    This simplifies the channel $\mathcal{E}_{(\mathbf{q}_1,\mathbf{p}_1),(\mathbf{q}_2,\mathbf{p}_2)}^*(\cdot)$ as 
    \begin{equation}
        \mathcal{E}^*_{(\mathbf{q}_1,\mathbf{p}_1),(\mathbf{q}_2,\mathbf{p}_2)}(\cdot)= \text{Tr}(\cdot)\frac{\mathbb{I}}{d'^{m'}} + \frac{\epsilon_0}{d'^{m'}\sqrt{2}}\text{Tr}(\hat{E}_{d,m}(-\mathbf{q}_1,\mathbf{p}_1)(\cdot))\hat{E}_{d',m'}(-\mathbf{q}_2,\mathbf{p}_2),
    \end{equation}
    which means that we can define the following superoperators
    \begin{equation}
    \begin{aligned}
        \Lambda_{(\mathbf{q}_1,\mathbf{p}_1),(\mathbf{q}_2,\mathbf{p}_2)}(\cdot)&=(\mathcal{E}_{(\mathbf{q}_1,\mathbf{p}_1),(\mathbf{q}_2,\mathbf{p}_2)}\otimes \mathcal{E}^*_{(\mathbf{q}_1,\mathbf{p}_1),(\mathbf{q}_2,\mathbf{p}_2)})(\cdot)\\
        &=\text{Tr}(\cdot)\frac{\mathbb{I}\otimes\mathbb{I}}{d'^{2m'}} + \frac{\epsilon_0}{d'^{2m'}\sqrt{2}}\text{Tr}\left[(\hat{E}_{d,m}(\mathbf{q}_1,\mathbf{p}_1)\otimes\mathbb{I})(\cdot)\right](\hat{E}_{d',m'}(\mathbf{q}_2,\mathbf{p}_2)\otimes\mathbb{I})\\
        &\quad\quad + \frac{\epsilon_0}{d'^{2m'}\sqrt{2}}\text{Tr}\left[(\mathbb{I}\otimes\hat{E}_{d,m}(-\mathbf{q}_1,\mathbf{p}_1))(\cdot)\right](\mathbb{I}\otimes\hat{E}_{d',m'}(-\mathbf{q}_2,\mathbf{p}_2))\\
        &\quad\quad+\frac{\epsilon_0^2}{2d'^{2m'}}\text{Tr}\left[(\hat{E}_{d,m}(\mathbf{q}_1,\mathbf{p}_1)\otimes\hat{E}_{d,m}(-\mathbf{q}_1,\mathbf{p}_1))(\cdot)\right](\hat{E}_{d',m'}(\mathbf{q}_2,\mathbf{p}_2)\otimes\hat{E}_{d',m'}(-\mathbf{q}_2,\mathbf{p}_2)),
    \end{aligned}
    \end{equation}
    \begin{equation}
        \Lambda_{0}(\cdot) = \text{Tr}\left(\cdot\right) \frac{\mathbb{I}\otimes\mathbb{I}}{d'^{2m'}}.
    \end{equation}
    \textbf{Succeeding in Problem~\ref{prob:manyrevel} by learning $|C_{\mathcal{E}}((\mathbf{q}_1,\mathbf{p}_1),(\mathbf{q}_2,\mathbf{p}_2))|$}: Note that we have the exact same setting as Problem~\ref{prob:manyrevel} for telling apart $\Lambda_0$ and $\Lambda_{(\mathbf{q}_1,\mathbf{p}_1),(\mathbf{q}_2,\mathbf{p}_2)}$. Clearly these can be told apart by learning $|C_{\mathcal{E}}((\mathbf{q}_1,\mathbf{p}_1),(\mathbf{q}_2,\mathbf{p}_2))|$ as shown in Thm.~\ref{thm:qudit2qudit} by picking in this case $\epsilon = \frac{\epsilon_0}{4\sqrt{2}} \leq \frac{1}{8}$ and so we now continue to find a lower bound through the use of Lemma~\ref{lem:master}.

    \textbf{Application of Lemma~\ref{lem:master}}: The channel $\Lambda_{(\mathbf{q}_1,\mathbf{p}_1),(\mathbf{q}_2,\mathbf{p}_2)}$ is a sum of 4 terms giving $l =3$ when written in the form of Eq.~\eqref{eq:Epar1par2}. We can also equivalently break it into two bits $x$ and $y$ in the summation giving 
    \begin{equation}
        \Lambda_{(\mathbf{q}_1,\mathbf{p}_1),(\mathbf{q}_2,\mathbf{p}_2)}(\cdot) = \sum_{x,y\in\{0,1\}}\text{Tr}\left[(\cdot)\hat{K}_{x,y,(\mathbf{q}_1,\mathbf{p}_1)}\right]\sqrt{\hat{\rho}_0}\hat{W}_{x,y,(\mathbf{q}_2,\mathbf{p}_2)}\sqrt{\hat{\rho}_0}
    \end{equation}
    where $\hat{\rho}_0 = \mathbb{I}_{d',2m'}/d'^{2m'}$ and we define
    \begin{align}
        2^{(x+y)/2}\hat{K}_{x,y,(\mathbf{q},\mathbf{p})} &= \begin{cases}
            \mathbb{I}\otimes \mathbb{I} & x,y = 0,0\\
            \mathbb{I}\otimes \hat{E}_{d,m}(-\mathbf{q},\mathbf{p})& x,y = 0,1\\
            \hat{E}_{d,m}(\mathbf{q},\mathbf{p})\otimes \mathbb{I} & x,y = 1,0\\
            \hat{E}_{d,m}(\mathbf{q},\mathbf{p})\otimes \hat{E}_{d,m}(-\mathbf{q},\mathbf{p}) & x,y=1,1
        \end{cases},\\
        \frac{1}{\epsilon_0^{x+y}}\hat{W}_{x,y,(\mathbf{q}_2,\mathbf{p}_2)} &= \begin{cases}
            \mathbb{I}\otimes \mathbb{I} & x,y = 0,0\\
            \mathbb{I}\otimes \hat{E}_{d',m'}(-\mathbf{q}_2,\mathbf{p}_2) & x,y = 0,1\\
            \hat{E}_{d',m'}(\mathbf{q}_2,\mathbf{p}_2)\otimes \mathbb{I} & x,y = 1,0\\
            \hat{E}_{d',m'}(\mathbf{q}_2,\mathbf{p}_2)\otimes \hat{E}_{d',m'}(-\mathbf{q}_2,\mathbf{p}_2) & x,y=1,1
        \end{cases}.
    \end{align}
    We can equivalently rewrite the $\Delta$ from Lemma~\ref{lem:master} as follows
    \begin{equation}
    \begin{aligned}
        \Delta^{1/2}\leq &\sum_{\pmb{x},\pmb{y}\in\{0,1\}^c,|\pmb{x}|+|\pmb{y}|>0}\Bigg\{\left\|\mathbb{E}_{(\mathbf{q}_1,\mathbf{p}_1)}[\hat{K}_{\pmb{x},\pmb{y},(\mathbf{q}_1,\mathbf{p}_1)}^{\otimes2}]\right\|_{\mathrm{op}}\times\left\|\mathbb{E}_{(\mathbf{q}_2,\mathbf{p}_2)}[\hat{W}_{\pmb{x},\pmb{y},(\mathbf{q}_2,\mathbf{p}_2)}^{\otimes 2}]\right\|_{\mathrm{op}}\Bigg\}^{1/2},
    \end{aligned}
    \end{equation}
    where we define
    \begin{equation}
        \hat{K}_{\pmb{x},\pmb{y},(\mathbf{q}_1,\mathbf{p}_1)} = \bigotimes_{i=1}^c \hat{K}_{x_i,y_i,(\mathbf{q}_1,\mathbf{p}_1)},\quad  \hat{W}_{\pmb{x},\pmb{y},(\mathbf{q}_2,\mathbf{p}_2)} = \bigotimes_{i=1}^c \hat{W}_{x_i,y_i,(\mathbf{q}_2,\mathbf{p}_2)}.
    \end{equation}
    From this we can observe that
    \begin{equation}
    \begin{aligned}
        \left\|\mathbb{E}_{(\mathbf{q}_1,\mathbf{p}_1)}[\hat{K}_{\pmb{x},\pmb{y},(\mathbf{q}_1,\mathbf{p}_1)}^{\otimes2}]\right\|_{\mathrm{op}}  &= \frac{1}{d^{2m}-1}\left\|\sum_{\mathbf{q}_1,\mathbf{p}_1\in\mathbb{F}_{d}^{m}}\hat{K}_{\pmb{x},\pmb{y},(\mathbf{q}_1,\mathbf{p}_1)}\otimes \hat{K}_{\pmb{x},\pmb{y},(\mathbf{q}_1,\mathbf{p}_1)}\right\|_{\mathrm{op}}\\
        &= \frac{1}{2^{|\pmb{x}|+|\pmb{y}|}(d^{2m} - 1)}\left\|\sum_{\mathbf{q}_1,\mathbf{p}_1\in\mathbb{F}_{d}^{m}}\hat{E}_{d,m}(\mathbf{q}_1,\mathbf{p}_1)^{\otimes2|\pmb{x}|}\otimes \hat{E}_{d,m}(-\mathbf{q}_1,\mathbf{p}_1)^{\otimes2|\pmb{y}|}\right\|_{\mathrm{op}}\\
        &\leq \begin{cases}
        \frac{1}{d^{m}-1} & |\pmb{x}| \neq |\pmb{y}| \mod d\\
        1 & |\pmb{x}| = |\pmb{y}| \mod d
                \end{cases}.
    \end{aligned}
    \end{equation}
    Similarly we also have
    \begin{equation}
    \begin{aligned}
        \left\|\mathbb{E}_{(\mathbf{q}_2,\mathbf{p}_2)}[\hat{W}_{\pmb{x},\pmb{y},(\mathbf{q}_2,\mathbf{p}_2)}^{\otimes 2}]\right\|_{\mathrm{op}} &\leq \frac{1}{d'^{2m'}-1}\left\|\sum_{\mathbf{q}_2,\mathbf{p}_2\in\mathbb{F}_{d'}^{m'}}\hat{W}_{\pmb{x},\pmb{y},(\mathbf{q}_2,\mathbf{p}_2)}\otimes\hat{W}_{\pmb{x},\pmb{y},(\mathbf{q}_2,\mathbf{p}_2)}\right\|_{\mathrm{op}}\\
        &= \frac{\epsilon_0^{2|\pmb{x}|+2|\pmb{y}|}}{(d'^{2m'} - 1)}\left\|\sum_{\mathbf{q}_2,\mathbf{p}_2\in\mathbb{F}_{d'}^{m'}}\hat{E}_{d',m'}(\mathbf{q}_2,\mathbf{p}_2)^{\otimes2|\pmb{x}|}\otimes \hat{E}_{d',m'}(-\mathbf{q}_2,\mathbf{p}_2)^{\otimes2|\pmb{y}|}\right\|_{\mathrm{op}}\\
        &\leq \begin{cases}
        \frac{(\epsilon_0\sqrt{2})^{2|\pmb{x}|+2|\pmb{y}|}}{d'^{m'}-1} & |\pmb{x}| \neq |\pmb{y}| \mod d'\\
       (\epsilon_0\sqrt{2})^{2|\pmb{x}|+2|\pmb{y}|} & |\pmb{x}| = |\pmb{y}| \mod d'
                \end{cases}.
    \end{aligned}
    \end{equation}
    Note that in the expression for $\Delta$, the only case where $|\pmb{x}|=|\pmb{y}|\mod d$ or $|\pmb{x}|=|\pmb{y}|\mod d'$ is if $|\pmb{x}|= |\pmb{y}|$ since $c < \min(d',d)$. 
    By substituting these bounds into the expression of $\Delta$ and adding extra terms for the case of $|\pmb{x}| = |\pmb{y}|$, we get
    \begin{equation}
    \begin{aligned}
        \Delta &\leq \left(\sum_{k,l=0,(k,l)\neq(0,0)}^{c}\binom{c}{k}\binom{c}{l}\frac{(\epsilon_0\sqrt{2})^{k+l}}{\sqrt{(d^{m}-1)(d'^{m'}-1)}} + \sum_{k=1}^c \binom{c}{k}^2(\epsilon_0\sqrt{2})^{2k}\right)^2\\
        &\leq \left(\frac{(1+\epsilon_0\sqrt{2})^c - 1}{(d'^{m'}-1)^{1/2}(d^{m}-1)^{1/2}} + (1-2\epsilon_0^2)^cL_c\left(\frac{1+2\epsilon_0^2}{1-2\epsilon_0^2}\right) -1\right)^2\\
        &\leq \left(\frac{c\epsilon_0\sqrt{2}e^{c\epsilon_0\sqrt{2}}}{(d'^{m'}-1)^{1/2}(d^{m}-1)^{1/2}} + 2c^2\epsilon_0^2e^{2c\epsilon_0\sqrt{2}}\right)^2,
    \end{aligned}
    \end{equation}
    where we make use of Lemma~\ref{lem:epssum}. This gives the lower bound on sample complexity to be
    \begin{equation}
        N = \Omega\left(\min\left(\frac{(d'^{m'}-1)(d^{m}-1)}{c\epsilon^2},\frac{64}{c^3\epsilon^4}\right)\right).
    \end{equation}
\end{proof}
Note that we do not examine the case of $c\geq\min(d,d')$ since we already examine these cases in the absence of the complex-conjugate channel, and do not expect the lower bounds in that case to offer any additional insights.
\begin{theorem}\label{thm:qudit2bosonwithconj}
Consider a $\mathcal{E}\in \CPTP(\mathcal{H}_{d,m},\mathcal{H}_{\infty,m'})$ with $d$ being prime and $m'\geq 8$ bosonic modes and its conjugate $\mathcal{E}^*$ with conjugation defined in the standard computational basis of both $\mathcal{H}_{d,m}$ and Fock basis of $\mathcal{H}_{\infty,m'}$. Taking $\epsilon < 0.1225$, $c\leq d-1$, consider any adaptive ancilla-assisted learning scheme that is allowed uses of the channel $(\mathcal{E}\otimes\mathcal{E}^*)^{\otimes c}$ for each measurement. If the learner can use this scheme to produce an estimate $|\hat{C}_{\mathcal{E}}((\mathbf{q},\mathbf{p}),\beta)|$ such that for a query of $\mathbf{q},\mathbf{p}\in\mathbb{F}_d^m$ and $|\beta|^2\leq \kappa m'$ it satisfies $\left||\hat{C}_{\mathcal{E}}((\mathbf{q},\mathbf{p}),\beta)| - |C_{\mathcal{E}}((\mathbf{q},\mathbf{p}),\beta)|\right|\leq \epsilon$ with a 2/3 success probability, then this learning scheme requires at least $N=\Omega\left(\min\left(\frac{(d^m-1)(1+(0.99\kappa)^2)^{m'/2}}{c(\epsilon/0.1225)^2},\frac{1}{c^3(\epsilon/0.1225)^4}\right)\right)$ uses of $\mathcal{E}\otimes\mathcal{E}^*$.
\end{theorem}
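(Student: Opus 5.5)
The plan is to merge the construction of Thm.~\ref{thm:qudit2boson} (qudit input, three-peak thermal-state output) with the conjugate-pairing analysis of Thm.~\ref{thm:qudit2quditwithconj}. We take $\mathcal{E}_0(\cdot)=\text{Tr}(\cdot)\hat{\rho}_0$ with $\hat{\rho}_0=(1-\nu^2)^{m'}\nu^{2\hat n}$, and $\mathcal{E}_{(\mathbf{q},\mathbf{p}),\gamma}$ the measure-and-prepare channel built from the POVM $\hat{\Pi}_{\pm,(\mathbf{q},\mathbf{p})}=(\mathbb{I}\pm\hat{E}_{d,m}(\mathbf{q},\mathbf{p})/\sqrt2)/2$ and the states $\hat{\rho}_{\pm\gamma}$. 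The parameters are distributed as before: $(\mathbf{q},\mathbf{p})$ uniform on nonzero points and $\gamma$ Gaussian with $2\sigma_\gamma^2=0.99\kappa$. Conjugating in the computational/Fock bases gives $\mathcal{E}^*_{(\mathbf{q},\mathbf{p}),\gamma}=\mathcal{E}_{(-\mathbf{q},\mathbf{p}),-\gamma^*}$. Hence $\Lambda=\mathcal{E}\otimes\mathcal{E}^*$ has the form of Eq.~\eqref{eq:Epar1par2} with $l=3$, indexed by two bits $(x,y)$:
\begin{itemize}
\item $\hat{K}_{x,y}\propto \hat{E}_{d,m}(\mathbf{q},\mathbf{p})^{x}\otimes\hat{E}_{d,m}(-\mathbf{q},\mathbf{p})^{y}$, with prefactor $2^{-(x+y)/2}$;
\item $\hat{W}_{x,y}\propto \hat{o}_\gamma^{x}\otimes\hat{o}_{-\gamma^*}^{y}$, with prefactor $(2\sqrt2\epsilon_0)^{x+y}$, where $\hat{o}_\gamma=i(\hat{D}(\gamma)-\hat{D}^\dagger(\gamma))/\sqrt2$.
\end{itemize}
Success in Problem~\ref{prob:manyrevel} follows exactly as in Thm.~\ref{thm:qudit2boson}, via the three-peak discrimination procedure of \cite{coroi2025exponentialadvantagecontinuousvariablequantum}, which uses only the $\mathcal{E}$ factor. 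This sets $\epsilon_0\propto\epsilon$, with $8\epsilon_0\le\epsilon/0.1225$ in the hardest limit $L\to0$.

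Next we apply Lemma~\ref{lem:master} with strings $\pmb{x},\pmb{y}\in\{0,1\}^c$ and write $k=|\pmb{x}|$, $l=|\pmb{y}|$. The input factor is bounded by Corollary~\ref{corr:quditEtransopnorm}: it is at most $1/(d^m-1)$ when $k\ne l\bmod d$ and at most $1$ otherwise. Because $c\le d-1$, the case $k=l\bmod d$ occurs only when $k=l$.

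The output factor is bounded by Corollary~\ref{corr:bosonicEnormconj} with $\theta=\pi/2$. This gives at most $(\epsilon/0.1225)^{2(k+l)}/(1+(k-l)^2\cdot 4\sigma_\gamma^4)^{m'/2}$. When $k\ne l$, $|k-l|\ge1$, so the bound is at most $(\epsilon/0.1225)^{2(k+l)}/(1+(0.99\kappa)^2)^{m'/2}$. When $k=l$ the bound is trivial, which is where $\mathcal{E}^*$ gives the advantage.

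Substituting these bounds, we split the $\Delta^{1/2}$ sum into the $k\ne l$ terms and the diagonal $k=l\ge1$ terms:
\[
\Delta^{1/2}\le \frac{\sum_{(k,l)\neq(0,0)}\binom{c}{k}\binom{c}{l}(\epsilon/0.1225)^{k+l}}{\sqrt{(d^m-1)(1+(0.99\kappa)^2)^{m'/2}}}+\sum_{k=1}^c\binom{c}{k}^2(\epsilon/0.1225)^{2k}.
\]
Both pieces are handled by Lemma~\ref{lem:epssum}. The first sum is $(1+x)^{2c}-1\le 2cxe^{2cx}$ and the second is $(1-x^2)^cL_c(\cdot)-1\le c^2x^2e^{2cx}$, where $x=\epsilon/0.1225$. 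Under $c\epsilon=O(1)$ the exponentials are $O(1)$. Squaring gives
\[
\Delta=O\!\left(\frac{c^2x^2}{(d^m-1)(1+(0.99\kappa)^2)^{m'/2}}+c^4x^4\right).
\]
Then $T=\Omega(1/\Delta)$ and $N=cT$ yield the stated minimum of the two terms.

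The main obstacle is the cross terms on the bosonic side. There the $\hat{D}(\pm\gamma)$ and $\hat{D}(\mp\gamma^*)$ factors are mixed, and the expansion contains sign patterns with varying numbers of each. One has to confirm that the parity-unitary reduction in Corollary~\ref{corr:bosonicEnormconj} applies to every sign pattern, so that each term's decay depends only on $k-l$; if it did not, some off-diagonal terms might fail to pick up the $(1+(0.99\kappa)^2)^{-m'/2}$ suppression. I would first check this on a single mode with $k=1$, $l=0$.
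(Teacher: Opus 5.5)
Your proposal is correct and follows the paper's proof essentially step for step. The shared ingredients are the same measure-and-prepare pair with $\mathcal{E}^*_{(\mathbf{q},\mathbf{p}),\gamma}=\mathcal{E}_{(-\mathbf{q},\mathbf{p}),-\gamma^*}$, the two-bit ($l=3$) decomposition of $\mathcal{E}\otimes\mathcal{E}^*$, Corollaries~\ref{corr:quditEtransopnorm} and~\ref{corr:bosonicEnormconj}, and the diagonal/off-diagonal split of $\Delta^{1/2}$; your $(1+x)^{2c}-1$ for the off-diagonal sum is in fact more careful than the $(1+x)^{c}-1$ written in the reduction the paper reuses from Thm.~\ref{thm:qudit2quditwithconj}. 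The obstacle you flag is already settled inside Corollary~\ref{corr:bosonicEnormconj}: the $\gamma$-independent parity unitaries $e^{i\pi\hat{n}}$ on the factors with $s_i=-1$ send every sign pattern to the all-plus one without changing the counts $k,l$.

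Only constants need adjusting. The calibration inherited from Thm.~\ref{thm:qudit2boson} gives $4\epsilon_0=e^{L}\epsilon/0.1225$, not $8\epsilon_0\le\epsilon/0.1225$. The assumption $c\epsilon=O(1)$ that you (like the paper) invoke is not in the statement, but it can be dropped: for $c\epsilon\ge 0.1225$ the claimed bound is at most $c$, and any successful protocol trivially uses $N\ge c$.
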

\begin{proof}
Consider the following two channels
    \begin{gather}
        \mathcal{E}_0(\cdot) = \text{Tr}(\cdot)\hat{\rho}_0,\quad\mathcal{E}_{(\mathbf{q},\mathbf{p}),\gamma}(\cdot) = \text{Tr}\left[\hat{\Pi}_{+,(\mathbf{q},\mathbf{p})}(\cdot)\right]\hat{\rho}_\gamma + \text{Tr}\left[\hat{\Pi}_{-,(\mathbf{q},\mathbf{p})}(\cdot)\right]\hat{\rho}_{-\gamma},\\
        \hat{\Pi}_{\pm,(\mathbf{q},\mathbf{p})}  = \frac{\mathbb{I} \pm \hat{E}_{d,m}(\mathbf{q},\mathbf{p})/\sqrt{2}}{2},\\
        \hat{\rho}_{\gamma} = (1-\nu^2)^{m'}\nu^{\hat{n}}(\hat{D}(0) + 2i\epsilon_0(\hat{D}(\gamma) - \hat{D}(-\gamma)))\nu^{\hat{n}},
    \end{gather}
where $\hat{n}$ is the total photon number operator in $m'$ bosonic modes. We now define the complex-conjugate channel
\begin{equation}
    \mathcal{E}^*_{(\mathbf{q},\mathbf{p}),\gamma}(\cdot) = \text{Tr}(\cdot)(1-\nu^2)^{m'}\nu^{2\hat n} + i\epsilon_0\sqrt{2}\text{Tr}\left[\hat{E}_{d,m}(-\mathbf{q},\mathbf{p})(\cdot)\right](1-\nu^2)^{m'}\nu^{\hat n}(\hat{D}(-\gamma^*)-\hat{D}(\gamma^*))\nu^{\hat n}.
\end{equation}
Through this we define the following superoperators
\begin{equation}
    \begin{aligned}
        &\Lambda_{(\mathbf{q},\mathbf{p}),\gamma}(\cdot)=(\mathcal{E}_{(\mathbf{q},\mathbf{p}),\gamma}\otimes \mathcal{E}^*_{(\mathbf{q},\mathbf{p}),\gamma})(\cdot)\\
        &=\text{Tr}(\cdot)(1-\nu^2)^{2m'}\nu^{2\hat n}\otimes\nu^{2\hat n} + i\epsilon_0\sqrt{2}(1-\nu^2)^{2m'}\text{Tr}\left[(\hat{E}_{d,m}(\mathbf{q},\mathbf{p})\otimes\mathbb{I})(\cdot)\right](\nu^{\hat n}\otimes\nu^{\hat n})((\hat{D}(\gamma) - \hat{D}(-\gamma))\otimes\mathbb{I})(\nu^{\hat n}\otimes\nu^{\hat n})\\
        &\quad\quad + i\epsilon_0\sqrt{2}(1-\nu^2)^{2m'}\text{Tr}\left[(\mathbb{I}\otimes\hat{E}_{d,m}(-\mathbf{q},\mathbf{p}))(\cdot)\right](\nu^{\hat n}\otimes\nu^{\hat n})(\mathbb{I}\otimes (\hat{D}(-\gamma^*) - \hat{D}(\gamma^*)))(\nu^{\hat n}\otimes\nu^{\hat n})\\
        &\quad\quad+i\epsilon_0\sqrt{2}(1-\nu^2)^{2m'}\text{Tr}\left[(\hat{E}_{d,m}(\mathbf{q},\mathbf{p})\otimes\hat{E}_{d,m}(-\mathbf{q},\mathbf{p}))(\cdot)\right](\nu^{\hat n}\otimes\nu^{\hat n})((\hat{D}(\gamma) - \hat{D}(-\gamma))\otimes (\hat{D}(-\gamma^*) - \hat{D}(\gamma^*)))(\nu^{\hat n}\otimes\nu^{\hat n}),
    \end{aligned}
\end{equation}
\begin{equation}
    \Lambda_{0}(\cdot) = \text{Tr}(\cdot)(1-\nu^2)^{2m'}\nu^{2\hat n}\otimes \nu^{2\hat n}.
\end{equation}

\textbf{Succeeding in the hypothesis task by learning $|C_{\mathcal{E}_{(\mathbf{q},\mathbf{p}),\gamma}}|$}: Suppose $\gamma$ is sampled from a Gaussian distribution of spread $\sigma_{\gamma}$ with $2\sigma_{\gamma}^2 = 0.99\kappa$ and $(\mathbf{p},\mathbf{q})$ is uniformly random over $\mathbb{F}^m_d\times\mathbb{F}^m_d\setminus\{(0,0)\}$. This gives the same setting as many-one channel discrimination with revelation (Problem~\ref{prob:manyrevel}) but for telling apart $\Lambda_{(\mathbf{q},\mathbf{p}),\gamma}$ and $\Lambda_0$ which can trivially be told apart by learning $|C_{\mathcal{E}_{(\mathbf{q},\mathbf{p}),\gamma}}|$ to accuracy $\epsilon<0.1225$ as shown in Thm.~\ref{thm:qudit2boson}.

\textbf{Application of Lemma~\ref{lem:master}}:
The channel $\Lambda_{(\mathbf{q},\mathbf{p}),\gamma}$ is a sum of $4$ terms when expressed in the form of Eq.~\eqref{eq:Epar1par2} giving $l=3$. This can alternatively be written as a sum over two bitwise variables $x,y$ as follows
\begin{equation}
        \Lambda_{(\mathbf{q},\mathbf{p}),\gamma}(\cdot) = \sum_{x,y\in\{0,1\}}\text{Tr}\left[(\cdot)\hat{K}_{x,y,(\mathbf{q},\mathbf{p})}\right]\sqrt{\hat{\rho}_0}\hat{W}_{x,y,\gamma}\sqrt{\hat{\rho}_0}
    \end{equation}
where $\hat{\rho}_0 = (1-\nu^2)^{2m'}\nu^{2\hat{n}}\otimes\nu^{2\hat{n}}$, and we have
\begin{align}
    2^{(x+y)/2}\hat{K}_{x,y,(\mathbf{q}_1,\mathbf{p}_1)} &= \begin{cases}
            \mathbb{I}\otimes \mathbb{I} & x,y = 0,0\\
            \mathbb{I}\otimes \hat{E}_{d,m}(-\mathbf{q}_1,\mathbf{p}_1)& x,y = 0,1\\
            \hat{E}_{d,m}(\mathbf{q}_1,\mathbf{p}_1)\otimes \mathbb{I} & x,y = 1,0\\
            \hat{E}_{d,m}(\mathbf{q}_1,\mathbf{p}_1)\otimes \hat{E}_{d,m}(-\mathbf{q}_1,\mathbf{p}_1) & x,y=1,1
        \end{cases},\\
        \frac{1}{(2\epsilon_0)^{x+y}}\hat{W}_{x,y,\gamma} &= \begin{cases}
            \mathbb{I}\otimes\mathbb{I} & x,y = 0,0\\
            \mathbb{I}\otimes (i\hat{D}(-\gamma^*) - i\hat{D}(\gamma^*)) & x,y = 0,1\\
            (i\hat{D}(\gamma) - i\hat{D}(-\gamma))\otimes \mathbb{I} & x,y = 1,0\\
            (i\hat{D}(\gamma) - i\hat{D}(-\gamma))\otimes (i\hat{D}(-\gamma^*) - i\hat{D}(\gamma^*)) & x,y=1,1
        \end{cases},
\end{align}
We equivalently rewrite $\Delta$ from Lemma~\ref{lem:master} as follows
\begin{equation}
    \begin{aligned}
     \Delta^{1/2}\leq &\sum_{\pmb{x},\pmb{y}\in\{0,1\}^c,|\pmb{x}|+|\pmb{y}|>0}\Bigg\{\left\|\mathbb{E}_{(\mathbf{q},\mathbf{p})}[\hat{K}_{\pmb{x},\pmb{y},(\mathbf{q},\mathbf{p})}^{\otimes2}]\right\|_{\mathrm{op}}\times\left\|\mathbb{E}_{\gamma}[\hat{W}_{\pmb{x},\pmb{y},\gamma}^{\otimes 2}]\right\|_{\mathrm{op}}\Bigg\}^{1/2},
    \end{aligned}
    \end{equation}
    where we define
    \begin{equation}
        \hat{K}_{\pmb{x},\pmb{y},(\mathbf{q},\mathbf{p})} = \bigotimes_{i=1}^c \hat{K}_{x_i,y_i,(\mathbf{q},\mathbf{p})},\quad  
        \hat{W}_{\pmb{x},\pmb{y},\gamma} = \bigotimes_{i=1}^c \hat{W}_{x_i,y_i,\gamma}.
    \end{equation}
    As observed in Theorem~\ref{thm:qudit2quditwithconj}
    \begin{equation}
    \begin{aligned}
        &\left\|\mathbb{E}_{(\mathbf{q},\mathbf{p})}[\hat{K}_{\pmb{x},\pmb{y},(\mathbf{q},\mathbf{p})}^{\otimes2}]\right\|_{\mathrm{op}} \leq \begin{cases}
        \frac{1}{d^{m}-1} & |\pmb{x}| \neq |\pmb{y}| \mod d\\
        1 & |\pmb{x}| = |\pmb{y}| \mod d
                \end{cases}.
    \end{aligned}
    \end{equation}
    We now expand the expression for
    \begin{equation}
    \begin{aligned}
        \left\|\mathbb{E}_{\gamma}[\hat{W}_{\pmb{x},\pmb{y},\gamma}^{\otimes 2}]\right\|_{\mathrm{op}}&= (2\sqrt{2}\epsilon_0)^{2|\pmb{x}|+2|\pmb{y}|}\left\|\int d^{2m'}\gamma P(\gamma)\left(\frac{e^{i\theta}\hat{D}(\gamma) + e^{-i\theta}\hat{D}^\dagger(\gamma)}{\sqrt{2}}\right)^{\otimes 2|\pmb{x}|}\otimes\left(\frac{e^{i\theta}\hat{D}(-\gamma^*) + e^{-i\theta}\hat{D}^\dagger(-\gamma^*)}{\sqrt{2}}\right)^{\otimes 2|\pmb{y}|}\right\|_{\mathrm{op}}\\
        &\leq \frac{(4\epsilon_0)^{2|\pmb{x}|+2|\pmb{y}|}}{(1+ 4\sigma_{\gamma}^4(|\pmb{x}|-|\pmb{y}|)^2)^{m'/2}} \leq\begin{cases}
            \frac{(4\epsilon_0)^{2|\pmb{x}|+2|\pmb{y}|}}{(1+4\sigma_{\gamma}^4)^{m'/2}} & |\pmb{x}|\neq |\pmb{y}|\\
            (4\epsilon_0)^{2|\pmb{x}|+2|\pmb{y}|}& |\pmb{x}| = |\pmb{y}|
        \end{cases}.
    \end{aligned}
    \end{equation}
    Plugging these into the expression of $\Delta$ and noting that $|\pmb{x}|=|\pmb{y}|\mod d$ can only occur for $|\pmb{x}|=|\pmb{y}|$ in this case since $c\leq d-1$ we get
    \begin{equation}
        \Delta\leq \left(\sum_{k,l =0,(k,l)\neq (0,0)}^c \binom{c}{k}\binom{c}{l}\frac{(4\epsilon_0)^{k+l}}{(1+4\sigma_{\gamma}^4)^{m'/4}(d^{m}-1)^{1/2}} + \sum_{k=1}^c\binom{c}{k}^2(4\epsilon_0)^{2k}\right)^2.
    \end{equation}
    This follows the same reduction as $\Delta$ in the previous theorem to give
    \begin{equation}
        N = \Omega\left(\min\left(\frac{(d^m-1)(1+(0.99\kappa)^2)^{m'/2}}{c(\epsilon/0.1225)^2},\frac{1}{c^3(\epsilon/0.1225)^4}\right)\right).
    \end{equation}
\end{proof}

\begin{theorem}[Access to the complex-conjugate channel: bosonic $\to$ bosonic]\label{thm:boson2bosonwithconj}
Let $\mathcal{E}\in \CPTP(\mathcal{H}_{\infty,m},\mathcal{H}_{\infty,m'})$ with $m,m'\geq 8$ and $\mathcal{E}^*$ to be its complex-conjugate channel. Taking $\epsilon \leq 0.05$, $c = O(\epsilon^{-1})$, $\min(\kappa^2 m,\kappa'^2m') > 2/0.99$, consider any adaptive ancilla-assisted learning scheme that is allowed uses of the channel $(\mathcal{E}\otimes\mathcal{E}^*)^{\otimes c}$. If the learner can use this scheme to produce an estimate $|\hat{C}^{\mathrm{TMSV},r}_{\mathcal{E}}(\alpha,\beta)|$ (where $\cosh(2r)\geq 1.06\kappa m$) such that for a query of $\alpha\in\mathbb{C}^m,\beta\in\mathbb{C}^{m'}$ with $|\alpha|^2\leq \kappa m,|\beta|^2\leq \kappa' m'$ it satisfies $$\left||\hat{C}^{\mathrm{TMSV},r}_{\mathcal{E}}(\alpha,\beta)| - |C^{\mathrm{TMSV},r}_{\mathcal{E}}(\alpha,\beta)|\right|\leq \epsilon$$ with a 2/3 success probability, then this learning scheme requires at least $N$ uses of $\mathcal{E}\otimes\mathcal{E}^*$ where $$N = \Omega\left(\min\left(\frac{(1+(0.99\kappa')^2)^{m'/2}(1+(0.99\kappa\tanh^2(2r))^2)^{m/2}}{c(\epsilon/0.05)^2},\frac{1}{c^3(\epsilon/0.05)^4}\right)\right).$$ 
\end{theorem}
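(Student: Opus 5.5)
The plan is to reuse the hard instance of Thm.~\ref{thm:boson2boson} and repeat the argument of Thm.~\ref{thm:qudit2quditwithconj}, with the Gaussian operator-norm bounds taking the place of the qudit ones.

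\textbf{Hard instance and reduction.} We take the replacement channel $\mathcal{E}_0(\cdot)=\text{Tr}(\cdot)\hat\rho_0$, with $\hat\rho_0$ the thermal state, together with the measure-and-prepare channel $\mathcal{E}_{\gamma_1,\gamma_2}$ from the proof of Thm.~\ref{thm:boson2boson}. Their Gaussian priors are $2\sigma_1^2=0.99\kappa\tanh^2(2r)$ and $2\sigma_2^2=0.99\kappa'$. In the Fock basis the conjugate is $\mathcal{E}^*_{\gamma_1,\gamma_2}$, obtained by replacing $\hat K_{1,\gamma_1}\to\hat K_{1,\gamma_1}^T$ (built from $\hat D(-\gamma_1^*)$) and $\hat W_{1,\gamma_2}\to\hat W_{1,\gamma_2}^T$ (built from $\hat D(-\gamma_2^*)$).

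We then form $\Lambda_{\gamma_1,\gamma_2}=\mathcal{E}_{\gamma_1,\gamma_2}\otimes\mathcal{E}^*_{\gamma_1,\gamma_2}$ and $\Lambda_0=\mathcal{E}_0\otimes\mathcal{E}_0$. Indexing the four terms by bits $(x,y)\in\{0,1\}^2$, this pair has the form of Eq.~\eqref{eq:Epar1par2} with $l=3$. The reduction from learning $|C^{\mathrm{TMSV},r}_{\mathcal{E}}|$ to Problem~\ref{prob:manyrevel} is exactly the one in Thm.~\ref{thm:boson2boson}. We query at $(\gamma_1/\tanh(2r),\gamma_2)$. Using Lemma~\ref{lem:gaussprob} and $\cosh(2r)\geq1.06\kappa m$, the query lies in the good region with constant probability, and there the gap is $\epsilon_0 t_-\geq 2\epsilon$. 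This fixes $\epsilon=\Theta(\epsilon_0)$ and $\epsilon\le0.05$, so success probability $2/3$ in learning gives an advantage bounded below by a constant in the discrimination task.

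\textbf{Bounding $\Delta$.} We apply Lemma~\ref{lem:master} with strings $\pmb{x},\pmb{y}\in\{0,1\}^c$, not both zero. The input factor is $\|\mathbb{E}_{\gamma_1}[\hat K_{\pmb x,\pmb y}^{\otimes2}]\|_{\mathrm{op}}$, which reduces to the integral of $\hat E(\gamma_1)^{\otimes2|\pmb x|}\otimes\hat E(-\gamma_1^*)^{\otimes2|\pmb y|}$ times $2^{-(|\pmb x|+|\pmb y|)}$. By Corollary~\ref{corr:bosonicEnormconj} it is at most $(1+4(|\pmb x|-|\pmb y|)^2\sigma_1^4)^{-m/2}$. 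Similarly, the output factor is at most $(4\epsilon_0)^{2(|\pmb x|+|\pmb y|)}(1+4(|\pmb x|-|\pmb y|)^2\sigma_2^4)^{-m'/2}$.

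We split the sum by $k=|\pmb x|$ and $l=|\pmb y|$.
\begin{itemize}
\item When $k\neq l$, both denominators are at least $d_{\mathrm{in}}^m$ and $d_{\mathrm{out}}^{m'}$, since $(k-l)^2\ge1$. These terms sum to at most $((1+4\epsilon_0)^{2c}-1)/\sqrt{d_{\mathrm{in}}^md_{\mathrm{out}}^{m'}}$.
\item When $k=l$, the bound is trivial. These terms give $\sum_{k\ge1}\binom{c}{k}^2(4\epsilon_0)^{2k}$, a Legendre-type sum.
\end{itemize}
Applying Lemma~\ref{lem:epssum} with $c\epsilon_0=O(1)$, we get
\[
\Delta\lesssim\left(\frac{c\epsilon_0}{\sqrt{d_{\mathrm{in}}^m d_{\mathrm{out}}^{m'}}}+c^2\epsilon_0^2\right)^2 .
\]
Hence $T=\Omega(\min(d_{\mathrm{in}}^md_{\mathrm{out}}^{m'}c^{-2}\epsilon^{-2},\,c^{-4}\epsilon^{-4}))$, and $N=cT$ gives the stated bound after rescaling $\epsilon$ by the constant $0.05$.

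\textbf{Main obstacle.} The delicate point is the $k=l$ diagonal. The Gaussian norm gives no suppression there, because all the operators $\hat D(\gamma)\otimes\hat D(-\gamma^*)$ commute. We must therefore check that the mixed tensors, with $\hat D$ and $\hat D^\dagger$ signs varying inside $\hat E$, still reduce to the $(k-l)$-dependent bound. This is handled by the parity unitaries $e^{i\pi\hat n}$ in Corollary~\ref{corr:bosonicEnormconj}, at the cost of the factor $2^{(k+l)/2}$, which gets absorbed into $4\epsilon_0$. The other constant bookkeeping to verify is that the $\hat W_{1}$ prefactor $2\sqrt2\epsilon_0$ combined with this $\sqrt2$ indeed yields $4\epsilon_0$, and that the $k=l$ terms are controlled solely by $c\epsilon=O(1)$.
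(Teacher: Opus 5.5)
Your proposal is correct and follows the paper's proof essentially step for step. It uses the same instance $\Lambda_{\gamma_1,\gamma_2}=\mathcal{E}_{\gamma_1,\gamma_2}\otimes\mathcal{E}^*_{\gamma_1,\gamma_2}$ with the two-bit $(x,y)$ decomposition ($l=3$) and the reduction inherited from Thm.~\ref{thm:boson2boson}. It applies Corollary~\ref{corr:bosonicEnormconj} on both the input and output sides. It then splits $\Delta^{1/2}$ into a sum over all $(k,l)$ suppressed by $(d_{\mathrm{in}}^{m}d_{\mathrm{out}}^{m'})^{-1/2}$ plus the unsuppressed diagonal $\sum_{k}\binom{c}{k}^2(4\epsilon_0)^{2k}$, with both sums controlled by Lemma~\ref{lem:epssum}.

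Your input-side bound $(1+4(|\pmb{x}|-|\pmb{y}|)^2\sigma_1^4)^{-m/2}$, which is trivial when $|\pmb{x}|=|\pmb{y}|$, is the accurate statement. The paper writes a uniform $(1+4\sigma_1^4)^{-m/2}$ for this factor, which cannot hold on the diagonal. Because the paper adds the unsuppressed diagonal sum separately anyway, its final bound on $\Delta$ coincides with yours.
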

\begin{proof}
    We consider two channels defined as follows
    \begin{gather}
        \mathcal{E}_0(\cdot) = \text{Tr}(\cdot)\hat{\rho}_0,\quad\mathcal{E}_{\gamma_1,\gamma_2}(\cdot) = \text{Tr}\left[\hat{\Pi}_{+,\gamma_1}(\cdot)\right]\hat{\rho}_{\gamma_2} + \text{Tr}\left[\hat{\Pi}_{-,\gamma_1}(\cdot)\right]\hat{\rho}_{-\gamma_2},\\
        \hat{\Pi}_{\pm,\gamma_1} = \frac{\mathbb{I} \pm \hat{E}(\gamma_1)/\sqrt{2}}{2},\quad\hat{\rho}_{\gamma_2} = (1-\nu^2)^{m'}\nu^{\hat n}(\hat{D}(0) + 2i\epsilon_0(\hat{D}(\gamma_2) - \hat{D}^\dagger(\gamma_2)))\nu^{\hat n},
    \end{gather}
    from which we obtain that
    \begin{equation}
        \mathcal{E}^{*}_{\gamma_1,\gamma_2}(\cdot) = \text{Tr}\left(\cdot\right)\hat{\rho}_0 + i\epsilon_0\sqrt{2}\text{Tr}\left[\hat{E}(-\gamma_1^*)(\cdot)\right](1-\nu^2)^{m'}\nu^{\hat n}(\hat
        D(-\gamma_2^*) - \hat{D}(\gamma_2^*))\nu^{\hat n}.
    \end{equation}
    Using this we define the following channels
    \begin{equation}
        \Lambda_{0}(\cdot) = (\mathcal{E}_{0}\otimes\mathcal{E}_0^*)(\cdot) = \text{Tr}(\cdot)(1-\nu^2)^{2m'}\nu^{2\hat n}\otimes \nu^{2\hat n},
    \end{equation}
    \begin{equation}
        \begin{aligned}
            &\Lambda_{\gamma_1,\gamma_2}(\cdot) = (\mathcal{E}_{\gamma_1,\gamma_2}\otimes\mathcal{E}_{\gamma_1,\gamma_2}^*)(\cdot)\\
            &=\text{Tr}(\cdot)(1-\nu^2)^{2m'}\nu^{2\hat n}\otimes\nu^{2\hat n} + i\epsilon_0\sqrt{2}(1-\nu^2)^{2m'}\text{Tr}\left[(\hat{E}(\gamma_1)\otimes\mathbb{I})(\cdot)\right](\nu^{\hat n}\otimes\nu^{\hat n})((\hat{D}(\gamma_2) - \hat{D}(-\gamma_2))\otimes\mathbb{I})(\nu^{\hat n}\otimes\nu^{\hat n})\\
        &\quad\quad + i\epsilon_0\sqrt{2}(1-\nu^2)^{2m'}\text{Tr}\left[(\mathbb{I}\otimes\hat{E}(-\gamma_1^*))(\cdot)\right](\nu^{\hat n}\otimes\nu^{\hat n})(\mathbb{I}\otimes (\hat{D}(-\gamma_2^*) - \hat{D}(\gamma_2^*)))(\nu^{\hat n}\otimes\nu^{\hat n})\\
        &\quad\quad+i\epsilon_0\sqrt{2}(1-\nu^2)^{2m'}\text{Tr}\left[(\hat{E}(\gamma_1)\otimes\hat{E}(-\gamma_1^*))(\cdot)\right](\nu^{\hat n}\otimes\nu^{\hat n})((\hat{D}(\gamma_2) - \hat{D}(-\gamma_2))\otimes (\hat{D}(-\gamma_2^*) - \hat{D}(\gamma_2^*)))(\nu^{\hat n}\otimes\nu^{\hat n}).
        \end{aligned}
    \end{equation}
\textbf{Success in Problem~\ref{prob:manyrevel} through learning $|C_{\mathcal{E}}^{\mathrm{TMSV},r}(\alpha,\beta)|$}: Taking $\gamma_1\in \mathbb{C}^m$ and $\gamma_2\in \mathbb{C}^{m'}$ to be sampled from Gaussian distributions of spread $\sigma_1$ and $\sigma_2$ respectively where $\sigma_1^2 = 0.99\kappa\tanh^2(2r)/2$ and $\sigma_2^2 = 0.99\kappa'/2$ we get a setting for many-one channel discrimination with revelation (Problem~\ref{prob:manyrevel}) for telling apart $\Lambda_0$ and $\Lambda_{\gamma_1,\gamma_2}$. Note that this task is solved by learning $|C_{\mathcal{E}}^{\mathrm{TMSV},r}(\alpha,\beta)|$ as described in Thm.~\ref{thm:boson2boson}. We similarly define the accuracy parameter $\epsilon = \epsilon_0\frac{1}{2\sqrt{2e}}(1-e^{-4})^2e^{-L}$ ensuring that $4\epsilon_0 < \epsilon/0.05$ by choosing an appropriate value of $L$.

\textbf{Application of Lemma~\ref{lem:master}}:
Within the framing of Problem~\ref{prob:manyrevel}, the channel $\Lambda_{\gamma_1,\gamma_2}$ is a sum of 4 terms giving $l =3$ when written in the form of Eq.~\eqref{eq:Epar1par2}. We can also equivalently break it into two bits $x$ and $y$ in the summation giving 
    \begin{equation}
        \Lambda_{\gamma_1,\gamma_2}(\cdot) = \sum_{x,y\in\{0,1\}}\text{Tr}\left[(\cdot)\hat{K}_{x,y,\gamma_1}\right]\sqrt{\hat{\rho}_0}\hat{W}_{x,y,\gamma_2}\sqrt{\hat{\rho}_0},
    \end{equation}
    where $\hat{\rho}_0 = (1-\nu^2)^{2m'}\nu^{2\hat{n}}\otimes\nu^{2\hat{n}}$ and we define

    \begin{align}
    2^{(x+y)/2}\hat{K}_{x,y,\gamma_1} &= \begin{cases}
            \mathbb{I}\otimes \mathbb{I} & x,y = 0,0\\
            \mathbb{I}\otimes \hat{E}(-\gamma_1^*)& x,y = 0,1\\
            \hat{E}(\gamma_1)\otimes \mathbb{I} & x,y = 1,0\\
            \hat{E}(\gamma_1)\otimes \hat{E}(-\gamma_1^*) & x,y=1,1
        \end{cases},\\
        \frac{1}{(2\epsilon_0)^{x+y}}\hat{W}_{x,y,\gamma_2} &= \begin{cases}
            \mathbb{I}\otimes\mathbb{I} & x,y = 0,0\\
            \mathbb{I}\otimes (i\hat{D}(-\gamma_2^*) - i\hat{D}(\gamma_2^*)) & x,y = 0,1\\
            (i\hat{D}(\gamma_2) - i\hat{D}(-\gamma_2))\otimes \mathbb{I} & x,y = 1,0\\
            (i\hat{D}(\gamma_2) - i\hat{D}(-\gamma_2))\otimes (i\hat{D}(-\gamma_2^*) - i\hat{D}(\gamma_2^*)) & x,y=1,1
        \end{cases}.
    \end{align}
    We equivalently rewrite $\Delta$ from Lemma~\ref{lem:master} as follows
\begin{equation}
    \begin{aligned}
         \Delta^{1/2}\leq &\sum_{\pmb{x},\pmb{y}\in\{0,1\}^c,|\pmb{x}|+|\pmb{y}|>0}\Bigg\{\left\|\mathbb{E}_{\gamma_1}[\hat{K}_{\pmb{x},\pmb{y},\gamma_1}^{\otimes2}]\right\|_{\mathrm{op}}\times\left\|\mathbb{E}_{\gamma_2}[\hat{W}_{\pmb{x},\pmb{y},\gamma_2}^{\otimes 2}]\right\|_{\mathrm{op}}\Bigg\}^{1/2},
    \end{aligned}
    \end{equation}
    where we define
    \begin{equation}
        \hat{K}_{\pmb{x},\pmb{y},\gamma_1} = \bigotimes_{i=1}^c \hat{K}_{x_i,y_i,\gamma_1},\quad \hat{W}_{\pmb{x},\pmb{y},\gamma_2} = \bigotimes_{i=1}^c \hat{W}_{x_i,y_i,\gamma_2}.
    \end{equation}
    Observe that by expanding the expression of $\hat{K}_{\pmb{x},\pmb{y},\gamma_1}$ we get
    \begin{equation}
    \begin{aligned}
        \left\|\mathbb{E}_{\gamma_1}[\hat{K}_{\pmb{x},\pmb{y},\gamma_1}^{\otimes2}]\right\|_{\mathrm{op}} &= \frac{1}{2^{|\pmb{x}|+|\pmb{y}|}}\left\|\mathbb{E}_{\gamma_1}\left[\hat{E}(\gamma_1)^{\otimes2|\pmb{x}|}\otimes\hat{E}(-\gamma_1^*)^{\otimes 2|\pmb{y}|}\right]\right\|_{\mathrm{op}}\\
        &\leq \frac{1}{(1+4\sigma_1^4)^{m/2}}.
    \end{aligned}
    \end{equation}
    As observed in Theorem~\ref{thm:qudit2bosonwithconj} 
    \begin{equation}
    \begin{aligned}
        &\left\|\mathbb{E}_{\gamma_2}[\hat{W}_{\pmb{x},\pmb{y},\gamma_2}^{\otimes 2}]\right\|_{\mathrm{op}}
        \leq \frac{(4\epsilon_0)^{2|\pmb{x}|+2|\pmb{y}|}}{(1+ 4\sigma_{2}^4(|\pmb{x}|-|\pmb{y}|)^2)^{m'/2}} \leq\begin{cases}
            \frac{(4\epsilon_0)^{2|\pmb{x}|+2|\pmb{y}|}}{(1+4\sigma_2^4)^{m'/2}} & |\pmb{x}|\neq |\pmb{y}|\\
            (4\epsilon_0)^{2|\pmb{x}|+2|\pmb{y}|}& |\pmb{x}| = |\pmb{y}|
        \end{cases}.
    \end{aligned}
    \end{equation}
    Plugging these into the expression of $\Delta$ we get
    \begin{equation}
        \Delta\leq \left(\sum_{k,l =0,(k,l)\neq (0,0)}^c \binom{c}{k}\binom{c}{l}\frac{(4\epsilon_0)^{k+l}}{(1+4\sigma_{1}^4)^{m/4}(1+4\sigma_{2}^4)^{m'/4}} + \sum_{k=1}^c\binom{c}{k}^2(4\epsilon_0)^{2k}\right)^2.
    \end{equation}
    This follows the same reduction as $\Delta$ in the previous theorem to give
    \begin{equation}
        N = \Omega\left(\min\left(\frac{(1+(0.99\kappa')^2)^{m'/2}(1+(0.99\kappa\tanh^2(2r))^2)^{m/2}}{c(\epsilon/0.05)^2},\frac{1}{c^3(\epsilon/0.05)^4}\right)\right).
    \end{equation}
\end{proof}

\subsection{Lower bounds for state learning}\label{app:statelearn}
We will make use of Corollary~\ref{corr:statelearn} to derive sharper bounds for the state learning problem as has been studied in \cite{coroi2025exponentialadvantagecontinuousvariablequantum,PRXQuantum.5.040301,ller2025infinitehierarchymulticopyquantum}.
\begin{theorem}\label{thm:qudit_statelearn}
    Consider a state $\hat{\rho}\in D(\mathcal{H}_{d,m})$ where $d$ is a prime number. Consider any adaptive learning scheme that is allowed to perform arbitrary measurements on the state $\hat{\rho}^{\otimes c}$ where $c\leq \frac{1}{4e\epsilon}$ (here $\epsilon\leq \frac{1}{4e}$). If the learner can use this scheme to produce an estimate of $|\tilde{\chi}_{\hat{\rho}}(\mathbf{q},\mathbf{p})|$ for a query of $\mathbf{q},\mathbf{p}\in \mathbb{F}^m_d$ such that $\left||\tilde{\chi}_{\hat{\rho}}(\mathbf{q},\mathbf{p})| - |\chi_{\hat{\rho}}(\mathbf{q},\mathbf{p})|\right| \leq \epsilon$ with a $2/3$ success probability, then this learning scheme requires at least $N$ uses of $\hat{\rho}$ where
    \begin{itemize}
        \item If $c\leq d-1$
        $$N = \Omega(d^m c^{-1}\epsilon^{-2}).$$
        \item If $c\geq d$
        $$N = \Omega\left(c\min\left[\frac{d^m-1}{16c^2\epsilon^2e^{2/e}},\left(\frac{d}{4ec\epsilon}\right)^{2d}\left(\frac{3}{4}\right)^2\right]\right).$$
    \end{itemize}
\end{theorem}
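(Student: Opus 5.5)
The plan is to reduce the state-learning task to the replacement-channel special case of many-one discrimination with revelation, and then invoke Corollary~\ref{corr:statelearn}. Concretely, take $\hat{\rho}_0=\mathbb{I}/d^m$ and, for $(\mathbf{q},\mathbf{p})$ drawn uniformly from $\mathbb{F}_d^m\times\mathbb{F}_d^m\setminus\{(0,0)\}$, the alternative state
\begin{equation}
\hat{\rho}_{(\mathbf{q},\mathbf{p})}=\frac{\mathbb{I}+\epsilon_0\hat{E}_{d,m}(\mathbf{q},\mathbf{p})}{d^m},
\end{equation}
which is valid for $\epsilon_0\le 1/\sqrt{2}$. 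This fits the form of Corollary~\ref{corr:statelearn} with $l=1$, $\hat{K}_{x,\mathbf{u}}=\mathbb{I}$, $\hat{W}_0=\mathbb{I}$ and $\hat{W}_{1,(\mathbf{q},\mathbf{p})}=\epsilon_0\hat{E}_{d,m}(\mathbf{q},\mathbf{p})$.

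For the correctness step, note that $\chi_{\hat{\rho}_{(\mathbf{q},\mathbf{p})}}(\mathbf{q},\mathbf{p})$ picks up only the $\hat{D}^\dagger$ component of $\hat{E}$. Hence $|\chi|=\epsilon_0/\sqrt{2}$ under the alternative, and it vanishes under $\hat{\rho}_0$ since $(\mathbf{q},\mathbf{p})\neq 0$. Once $(\mathbf{q},\mathbf{p})$ is revealed, an $\epsilon$-accurate estimate of the absolute value with $2/3$ success probability distinguishes the two hypotheses with probability $2/3$. It suffices to choose $\epsilon_0=2\sqrt{2}\epsilon$, or any constant multiple of $\epsilon$ no larger than $4\epsilon$. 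Together with $\epsilon\le 1/(4e)$, this keeps the state valid. It also keeps $ec\epsilon_0\sqrt{2}$ bounded by a constant at most $1$ under $c\le 1/(4e\epsilon)$, which is exactly the regime needed for the geometric-series estimates below.

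Next, apply the weaker bound of Corollary~\ref{corr:statelearn}:
\begin{equation}
\Delta\le\Bigg(\sum_{\pmb{x}\neq\pmb{0}}\Big\|\mathbb{E}\big[\hat{W}_{\pmb{x}}^{\otimes 2}\big]\Big\|_{\mathrm{op}}^{1/2}\Bigg)^2.
\end{equation}
Each term depends only on $k=|\pmb{x}|$ and equals $\epsilon_0^{2k}\|\mathbb{E}[\hat{E}_{d,m}^{\otimes 2k}]\|_{\mathrm{op}}$. Corollary~\ref{corr:quditEopnorm} bounds this by $(\sqrt{2}\epsilon_0)^{2k}/(d^m-1)$ when $k\not\equiv 0\pmod d$, and by $(\sqrt{2}\epsilon_0)^{2k}$ otherwise.

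In the case $c\le d-1$, only the first alternative occurs. This gives
\begin{equation}
\Delta\le\frac{\big((1+\sqrt{2}\epsilon_0)^c-1\big)^2}{d^m-1}\le\frac{2c^2\epsilon_0^2e^{2\sqrt{2}c\epsilon_0}}{d^m-1}
\end{equation}
by Lemma~\ref{lem:epssum}. Since $N=cT=\Omega(c/\Delta)$, we get $N=\Omega(d^m c^{-1}\epsilon^{-2})$.

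In the case $c\ge d$, split the sum exactly as in the $d=d'$ case of Thm.~\ref{thm:qudit2qudit}. The generic part is bounded by $c\sqrt{2}\epsilon_0e^{c\sqrt{2}\epsilon_0}/(d^m-1)^{1/2}$; using $c\sqrt{2}\epsilon_0\le 1/e$ from the regime of $c$ turns this into the $e^{1/e}$ factor. The part with $k=jd$ is handled with $\binom{c}{jd}\le(ec/(jd))^{jd}$, which bounds it by the geometric series $\sum_{j\ge1}(ec\sqrt{2}\epsilon_0/d)^{jd}\le\frac{4}{3}(ec\sqrt{2}\epsilon_0/d)^d$, using a ratio of at most $1/2$ and $d\ge2$. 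Squaring and inverting the sum of the two terms gives a minimum of the two reciprocals. The result is
\begin{equation}
N=\Omega\Big(c\min\Big[\frac{d^m-1}{c^2\epsilon^2},\Big(\frac{d}{c\epsilon}\Big)^{2d}\Big]\Big),
\end{equation}
with the stated constants $16$, $4e$ and $(3/4)^2$ coming from the relation between $\epsilon_0$ and $\epsilon$.

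The main obstacle I expect is the bookkeeping of constants: $\epsilon_0$ must be fixed relative to $\epsilon$ so that three conditions hold at once, namely that the discrimination succeeds, that the state is valid, and that $ec\sqrt{2}\epsilon_0/d\le1/2$ under the hypothesis $c\le1/(4e\epsilon)$. The operator-norm inputs are already available from Corollary~\ref{corr:quditEopnorm}.
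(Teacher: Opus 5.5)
Your proposal is correct and follows the paper's proof essentially step for step: the same pair $\mathbb{I}/d^m$ versus $(\mathbb{I}+\epsilon_0\hat{E}_{d,m}(\mathbf{q},\mathbf{p}))/d^m$, the reduction via Corollary~\ref{corr:statelearn} with operator norms from Corollary~\ref{corr:quditEopnorm}, and the same case split with the same binomial and geometric-series estimates. The only slip is your aside that any $\epsilon_0\le 4\epsilon$ works: the stated constants and the hypothesis $c\le 1/(4e\epsilon)$ correspond exactly to $\sqrt{2}\epsilon_0=4\epsilon$ (i.e.\ $\epsilon_0=2\sqrt{2}\epsilon$, which is what the paper effectively uses), whereas with $\epsilon_0=4\epsilon$ the quantity $ec\sqrt{2}\epsilon_0$ can reach $\sqrt{2}$ and the step $c\sqrt{2}\epsilon_0\le 1/e$ fails.
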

\begin{proof}
Consider the following two states
\begin{equation}
    \hat{\rho}_0 = \frac{\mathbb{I}}{d^m}, \quad\hat{\rho}_{(\mathbf{q},\mathbf{p})} = \frac{1}{d^m}\left(\mathbb{I}+ \epsilon_0\hat{E}_{d,m}(\mathbf{q},\mathbf{p})\right).
\end{equation}
Consider a state discrimination task where $N$ copies of either of the above states are prepared after party $A$ samples $(\mathbf{q},\mathbf{p})$ uniformly at random from $\mathbb{F}^{m}_d\times \mathbb{F}^m_d\setminus\{(0,0)\}$. Framing the above state discrimination as a reduction of Problem~\ref{prob:manyrevel} as per Corollary~\ref{corr:statelearn}, we will be able to lower bound the sample complexity by showing that learning $|\chi_{\hat{\rho}}(\mathbf{q},\mathbf{p})|$ suffices to succeed at this problem.

\textbf{Succeeding in revealed discrimination by learning $|\chi_{\hat{\rho}}((\mathbf{q},\mathbf{p}))|$}: note that for the two states we have
\begin{equation}
    |\chi_{\hat{\rho}_0}((\mathbf{q},\mathbf{p}))| = 0,\quad |\chi_{\hat{\rho}_{(\mathbf{q},\mathbf{p})}}((\mathbf{q},\mathbf{p}))|=\frac{\epsilon_0}{\sqrt{2}},
\end{equation}
and so if the learner can estimate $|\chi_{\hat{\rho}}((\mathbf{q},\mathbf{p}))|$ to $\epsilon$ accuracy where $\epsilon < \epsilon_0/\sqrt{8}$, they would succeed in the discrimination task. The largest value $\epsilon_0$ can take while still keeping the state physical is $1/\sqrt{2}$, hence we can always frame such a problem for any choice of $\epsilon \leq \frac{1}{4}$ (this holds true since we choose $\epsilon\leq \frac{1}{4e}$).

\textbf{Application of Corollary~\ref{corr:statelearn}}: Note that here the state can be expressed as $\hat{\rho}_{(\mathbf{q},\mathbf{p})} = \sum_{x\in\{0,1\}}\sqrt{\hat{\rho}_0}\hat{W}_{x,(\mathbf{q},\mathbf{p})}\sqrt{\hat{\rho}_0}$. We will make use of the quantity $\Delta$ defined in Corollary~\ref{corr:statelearn}
\begin{equation}
    \Delta \leq \left(\sum_{k=1}^c\binom{c}{k}\epsilon_0^{k}\sqrt{\left\|\mathbb{E}_{(\mathbf{q},\mathbf{p})}\left[\hat{E}_{d,m}(\mathbf{q},\mathbf{p})^{\otimes 2k}\right]\right\|_{\mathrm{op}}}\right)^2.
\end{equation}
\begin{itemize}
    \item Case 1: $c\leq d-1$, this means that $k$ is never $0\mod d$ and so we get
    \begin{equation}
        \Delta \leq \left(\frac{(1+\sqrt{2}\epsilon_0)^c -1}{(d^{m}-1)^{1/2}}\right)^2\leq \frac{2c^2\epsilon_0^2 e^{2\sqrt{2}c\epsilon_0}}{d^m-1} \implies N = \Omega(d^mc^{-1}\epsilon^{-2}).
    \end{equation}
    \item Case 2: $c\geq d$. In this case, we would have to include the case that $k = 0\mod d$ in the summation. This gives us
 \begin{equation}
    \begin{aligned}
        \Delta &\leq \left(\sum_{k=1,k \neq 0\mod d}^c\binom{c}{k}\frac{(\epsilon_0\sqrt{2})^k}{(d^{m}-1)^{1/2}} + \sum_{k=1}^{\lfloor c/d\rfloor}\binom{c}{kd}(\epsilon_0\sqrt{2})^{kd}\right)^2\\
        &\leq \left(\frac{4c\epsilon e^{1/e}}{(d^{m}-1)^{1/2}} + \frac{4}{3}\left(\frac{4ec\epsilon}{d}\right)^{d}\right)^2,
    \end{aligned}
    \end{equation}
    this gives the bound
    \begin{equation}
        N = \Omega\left(c\min\left[\frac{d^m-1}{16c^2\epsilon^2e^{2/e}},\left(\frac{d}{4ec\epsilon}\right)^{2d}\left(\frac{3}{4}\right)^2\right]\right).
    \end{equation}
\end{itemize}
\end{proof}

The above theorem shows the tightness for the $\epsilon^{-2d}$ scaling for the $d$-copy state learning scheme as found in \cite{ller2025infinitehierarchymulticopyquantum} while also offering slight improvements to the lower bound in the same work in the case of $c\leq d-1$. Since the above lower bound is derived for the absolute value estimation task, it naturally holds true for the estimation with phase information. The estimation with phase information as discussed in \cite{ller2025infinitehierarchymulticopyquantum} can be done by creating a hypothesis state $\hat{\sigma}\in D(\mathcal{H}_{d,m})$ using the absolute values of the displacement observables and then performing measurements over the state $\hat{\sigma}^{\otimes(d-1)}\otimes \hat{\rho}$.

\begin{theorem}\label{thm:boson_statelearn}
    Consider an $m$-mode bosonic state $\hat{\rho}\in D(\mathcal{H}_{\infty,m})$. Consider any adaptive learning scheme that is allowed to perform arbitrary measurements on the state $\hat{\rho}^{\otimes c}$ where $c = O(1/\epsilon)$ ($\epsilon<0.1725$). If the learner can use this scheme to produce an estimate of $|\tilde{\chi}_{\hat{\rho}}(\alpha)|$ for a query $\alpha\in\mathbb{C}^m$ that satisfies $|\alpha|^2\leq \kappa m$ ($\kappa>0$) such that $\left||\tilde{\chi}_{\hat{\rho}}(\alpha)| - |\chi_{\hat{\rho}}(\alpha)|\right| \leq \epsilon$ with a $2/3$ success probability, then this learning scheme requires at least $N = \Omega((1+0.99\kappa)^mc^{-1}\epsilon^{-2})$ uses of $\hat{\rho}$.
\end{theorem}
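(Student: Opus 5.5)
We reduce the statement to a state discrimination task and apply Corollary~\ref{corr:statelearn}, mirroring Thm.~\ref{thm:qudit_statelearn} and the output half of Thm.~\ref{thm:boson2boson}.

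\textbf{Construction.} Take the null state to be the thermal state $\hat{\rho}_0=(1-\nu^2)^m\nu^{2\hat n}$. Take the alternative to be the three-peak state
\begin{equation}
\hat{\rho}_\gamma=(1-\nu^2)^m\nu^{\hat n}\bigl(\mathbb{I}+2i\epsilon_0(\hat D(\gamma)-\hat D^\dagger(\gamma))\bigr)\nu^{\hat n},
\end{equation}
with $\gamma\in\mathbb{C}^m$ drawn from a Gaussian of spread $\sigma_\gamma$, where $2\sigma_\gamma^2=0.99\kappa$. This state has the form $\sqrt{\hat\rho_0}\sum_{x\in\{0,1\}}\hat W_{x,\gamma}\sqrt{\hat\rho_0}$ with $\hat W_{0,\gamma}=\mathbb{I}$ and $\hat W_{1,\gamma}=2\sqrt2\epsilon_0\, i(\hat D(\gamma)-\hat D^\dagger(\gamma))/\sqrt2$. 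It is a valid state for $\epsilon_0\le 1/4$.

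\textbf{Reduction from learning to discrimination.} After $\gamma$ is revealed, the learner estimates $|\chi_{\hat\rho}(\gamma)|$. We reuse the revealed-discrimination procedure of Section S6.A of \cite{coroi2025exponentialadvantagecontinuousvariablequantum}, exactly as in the proof of Thm.~\ref{thm:qudit2boson}. The parameters are $\Sigma^2=(1+\nu)/(1-\nu)=\kappa m/L$ and $\epsilon_0=\Theta(\epsilon e^{L})$, with $L\to0$ as the hardest case; this fixes the admissible range $\epsilon<0.1725$. The two characteristic functions then differ in modulus by more than $2\epsilon$ whenever $2\sigma^2\le|\gamma|^2\le\kappa m$. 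Lemma~\ref{lem:gaussprob}, used as in Thm.~\ref{thm:boson2boson}, shows this event has constant probability. Hence a learner with success probability $2/3$ discriminates with probability bounded above $1/2$.

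\textbf{Bounding $\Delta$.} We use the weaker bound of Corollary~\ref{corr:statelearn}, so no POVM optimization is needed. For $\pmb{x}\in\{0,1\}^c$ with $|\pmb x|=k$, Corollary~\ref{corr:bosonicEnorm} applied with $2k$ tensor factors gives
\begin{equation}
\|\mathbb{E}_\gamma[\hat W_{\pmb x,\gamma}^{\otimes2}]\|_{\mathrm{op}}\le (4\epsilon_0)^{2k}(1+4k^2\sigma_\gamma^4)^{-m/2}\le (4\epsilon_0)^{2k}(1+(0.99\kappa)^2)^{-m/2}.
\end{equation}
Summing over $\pmb x$ yields
\begin{equation}
\Delta\le (1+(0.99\kappa)^2)^{-m/2}\bigl((1+4\epsilon_0)^c-1\bigr)^2 .
\end{equation}
Lemma~\ref{lem:epssum} with $c=O(1/\epsilon)$ bounds this by $O\bigl(c^2\epsilon^2(1+(0.99\kappa)^2)^{-m/2}\bigr)$. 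Since $T=\Omega(1/\Delta)$ and $N=cT$, we obtain $N=\Omega\bigl((1+(0.99\kappa)^2)^{m/2}c^{-1}\epsilon^{-2}\bigr)$.

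This matches the stated $(1+0.99\kappa)^m$ exactly when the exponent is read as $m/2$ with base $1+(0.99\kappa)^2$. If the stated form really is $(1+0.99\kappa)^m$, I would instead try to exploit the sharper first expression in Corollary~\ref{corr:statelearn}, which retains $\sqrt{\hat\rho_0}$. The thermal weighting there should improve the Gaussian integral in Lemma~\ref{lem:dispopnorm} from $\sigma^4$ to a factor linear in $\sigma^2$.

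\textbf{Main obstacle.} The difficult step is this sharpened estimate: bounding
\begin{equation}
\mathrm{Tr}[(\hat\Pi\otimes\hat\Pi)\,\mathbb{E}(\sqrt{\hat\rho_0}\hat W\sqrt{\hat\rho_0})^{\otimes2}]\,/\,\mathrm{Tr}[(\hat\Pi\otimes\hat\Pi)\hat\rho_0^{\otimes 2}]
\end{equation}
uniformly over $\hat\Pi$. A secondary difficulty is checking that the constants in the discrimination step are compatible with $\epsilon<0.1725$.
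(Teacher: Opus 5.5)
Your operator-norm estimate is correct, but it proves a weaker statement than the theorem. It gives $\Delta\le\bigl((1+4\epsilon_0)^c-1\bigr)^2\bigl(1+(0.99\kappa)^2\bigr)^{-m/2}$, hence only $N=\Omega\bigl((1+(0.99\kappa)^2)^{m/2}c^{-1}\epsilon^{-2}\bigr)$. This is not a rereading of the claimed rate, since $(1+0.99\kappa)^m=\bigl(1+1.98\kappa+(0.99\kappa)^2\bigr)^{m/2}$ is strictly larger. For small $\kappa$ the exponents are about $0.99\kappa m$ versus $0.49\kappa^2 m$. The estimate that yields $(1+0.99\kappa)^m$ is exactly the one you call the main obstacle, and you leave it unproved.

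The paper obtains it by keeping the first, sharper expression of Corollary~\ref{corr:statelearn}. It then imports from Eqs.~(S198)--(S216) of \cite{coroi2025exponentialadvantagecontinuousvariablequantum} the bound $(4\epsilon_0)^{2|\pmb{x}|}\bigl(\frac{1+\nu}{1+\nu+2\nu\sigma^2_{\gamma,|\pmb{x}|}}\bigr)^m$ on the ratio, with $\sigma^2_{\gamma,k}=\bigl(\Sigma^{-2}+(2k\sigma_\gamma^2)^{-1}\bigr)^{-1}$. The worst case is $k=1$, and as $L\to0$ this tends to $(1+0.99\kappa)^{-m}$.

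Your loss is in fact not caused by discarding $\sqrt{\hat{\rho}_0}$. It comes from Lemma~\ref{lem:dispopnorm}, which bounds $|\chi|\le1$ inside the Gaussian integral and is not tight. You could keep the weaker bound if you compute the norm exactly. For each of the $m$ mode labels, let $\hat b=(2k)^{-1/2}\sum_{j=1}^{2k}s_j\hat a_j$ be the collective mode of its $2k$ copies. Then $\bigotimes_{j=1}^{2k}\hat D(s_j\gamma)$ acts as $\hat D(\sqrt{2k}\,\gamma)$ on these modes and trivially on their complement. A Gaussian average of single-mode displacements with spread $\tau^2$ is diagonal in the Fock basis with eigenvalues $\frac{1}{1+\tau^2}\bigl(\frac{1-\tau^2}{1+\tau^2}\bigr)^n$. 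Hence $\bigl\|\mathbb{E}_\gamma\bigotimes_j\hat D(s_j\gamma)\bigr\|_{\mathrm{op}}=(1+2k\sigma_\gamma^2)^{-m}$, which gives $\|\mathbb{E}_\gamma[\hat W_{\pmb{x},\gamma}^{\otimes2}]\|_{\mathrm{op}}\le(4\epsilon_0)^{2k}(1+0.99\kappa)^{-m}$. That suffices.

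Your reduction step also does not carry over ``exactly as in Thm.~\ref{thm:qudit2boson}''. There the query value for the null channel is $0$, so a modulus estimate separates the hypotheses. Here $|\chi_{\hat\rho_0}(\gamma)|=t_0>0$, while the perturbation $i\epsilon_0 t_1$ is purely imaginary. So the modulus gap is $\sqrt{t_0^2+\epsilon_0^2t_1^2}-t_0\le\epsilon_0^2t_1^2/(2t_0)$, which is second order in $\epsilon_0$ unless $t_0\lesssim\epsilon_0 t_1$. At the lower edge of your window, $|\gamma|^2=2\sigma^2$ with $2\sigma^2=1/\nu-\nu\to0$ as $L\to0$, one has $t_0\approx e^{-1}$. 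There the claimed gap $>2\epsilon$ with $\epsilon=\Theta(\epsilon_0)$ fails once $\epsilon$ is below a constant, which would spoil the $\epsilon^{-2}$ scaling.

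The paper instead requires $|\gamma|^2\ge1$ and chooses $L$ small in terms of $\epsilon_0$ so that $t_0/(\epsilon_0t_1)\le1/4$. This yields a gap $>1.39\epsilon_0$ and the admissible range $\epsilon=0.69\epsilon_0\le0.1725$. You need the same kind of step: a constant lower threshold on $|\gamma|^2$ together with an $\epsilon_0$-dependent choice of $L$.
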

\begin{proof}
    Consider the following two states
    \begin{equation}
        \hat{\rho}_0 = (1-\nu^2)^m\nu^{2\hat{n}},\quad \hat{\rho}_\gamma= (1-\nu^2)^m\nu^{\hat{n}}(\hat{D}(0) + 2i\epsilon_0(\hat{D}(\gamma) - \hat{D}(-\gamma)))\nu^{\hat{n}}.
    \end{equation}
    Consider a state discrimination task where $N$ copies of either of the above states are prepared after party $A$ samples $\gamma\in \mathbb{C}^m$ as a Gaussian random variable with spread $\sigma_{\gamma}^2$. Framing the above state discrimination as a reduction of Problem~\ref{prob:manyrevel} as per Corollary~\ref{corr:statelearn}, we will be able to lower bound the sample complexity by showing that learning $|\chi_{\hat{\rho}}(\alpha)|$ for queries with $|\alpha|^2\leq \kappa m$ suffices to succeed at this problem.
    
    \textbf{Succeeding in revealed discrimination by learning $|\chi_{\hat{\rho}}(\alpha)|$}: 
    Defining the parameters
    \begin{equation}
        \Sigma^2 = \frac{1+\nu}{1-\nu} = \frac{\kappa m}{L},\quad 2\sigma^2 = \frac{1}{\nu}- \nu =\frac{4\kappa m L}{(\kappa m)^2 - L^2},
    \end{equation}
    where we choose constant $L>0$, we get
    \begin{equation}
        \chi_{\hat{\rho}_0}(\gamma) = e^{-\frac{|\gamma|^2}{2\Sigma^2}}e^{-\frac{|\gamma|^2}{2\sigma^2}}= t_0,\quad \chi_{\hat{\rho}_{\gamma}}(\gamma) = 2i\epsilon_0e^{-\frac{|\gamma|^2}{\Sigma^2}}(1 - e^{-2|\gamma|^2/\sigma^2})+e^{-\frac{|\gamma|^2}{2\Sigma^2}}e^{-\frac{|\gamma|^2}{2\sigma^2}} = t_0 + i\epsilon_0t_1.
    \end{equation}
    To enable discrimination through estimation of the absolute values to $\epsilon$ accuracy, we would like for the absolute values of $|\chi_{\hat{\rho}_0}(\gamma)|$ and $|\chi_{\hat{\rho}_\gamma}(\gamma)|$ to differ by $2\epsilon$ at least. For now we assume that the value of $|\gamma|^2$ satisfies
    \begin{equation}\label{eq:b20succcondition}
        1\leq |\gamma|^2\leq \kappa m.
    \end{equation}
    Observe that 
    \begin{equation}
        \frac{t_0}{\epsilon_0t_1} \leq \frac{e^{\frac{|\gamma|^2}{2\Sigma^2}}}{2\epsilon_0}\frac{e^{-\frac{|\gamma|^2}{2\sigma^2}}}{(1-e^{-2|\gamma|^2/\sigma^2})} \leq \frac{e^\frac{L}{2}}{2\epsilon_0}\frac{e^{-\frac{1}{2\sigma^2}}}{(1-e^{-2/\sigma^2})}.
    \end{equation}
    By assuming that $\kappa m\geq 2$ (and assuming $L<1$) which gives $2\sigma^2\geq \frac{8L}{3}$ and noting that the function $e^{-x}/(1-e^{-4x})$ is monotonically decreasing in $x$, we further get
    \begin{equation}
        \frac{t_0}{\epsilon_0t_1} \leq\frac{e^L}{2\epsilon_0}\frac{e^{-\frac{3}{8L}}}{(1-e^{-\frac{3}{2L}})} \leq \frac{e}{2(1-e^{-3/2})}\frac{e^{-\frac{3}{8L}}}{\epsilon_0},\quad t_1\geq 2e^{-L}(1 - e^{-\frac{3}{2L}}),
    \end{equation}
    where we have made use of $L<1$. We can now further choose a value of $L$ small enough that 
    \begin{equation}
        L<\frac{3}{8\log(\frac{2}{e\epsilon_0(1-e^{-3/2})})} \implies \frac{t_0}{\epsilon_0t_1}\leq 0.25,
    \end{equation}
    \begin{equation}
        \left||\chi_{\hat{\rho}_0}(\gamma)| - |\chi_{\hat{\rho}_\gamma}(\gamma)|\right| =\epsilon_0t_1\left(\sqrt{1+\left(\frac{t_0}{\epsilon_0 t_1}\right)^2} - \frac{t_0}{\epsilon_0t_1}\right)>1.39\epsilon_0.
    \end{equation}
    Hence if we can learn to accuracy of $\epsilon = 0.69\epsilon_0 \leq 0.1725$, then $B$ succeeds in the discrimination task. By choosing $2\sigma_\gamma^2 = 0.99\kappa $, the success probability can be guaranteed to be well above $1/2$ since the condition in Eq.~\eqref{eq:b20succcondition} is a subset of conditions from those in Eq.~\eqref{eq:b10succcondition} used in Thm.~\ref{thm:boson2boson}.
    
    \textbf{Application of Corollary~\ref{corr:statelearn}}: We define the following operators for $\pmb{x}\in\{0,1\}^c$
    \begin{equation}
        \hat{\omega}_{\pmb{x}} = \bigotimes_{i=1}^c \hat{\omega}_{x_i}, \quad \hat{\omega}_0 = \hat{\rho}_0,\quad \hat{\omega}_1 =\hat{\rho}_{\gamma} - \hat{\rho}_0.
    \end{equation}
    From \cite{coroi2025exponentialadvantagecontinuousvariablequantum}, following equations (S198) to (S216) yields the upper bound for all positive bounded operators $\hat{\Pi}$,
    \begin{equation}
        \frac{\mathbb{E}_{\gamma}\left[\text{Tr}\left[(\hat{\omega}_{\pmb x}\otimes\hat{\omega}_{\pmb x})(\hat{\Pi}\otimes\hat{\Pi})\right]\right]}{\text{Tr}\left[\hat{\rho}_0^{\otimes c}\hat{\Pi}\right]^2} \leq (4\epsilon_0)^{2|\pmb{x}|}\left(\frac{1+\nu}{1+\nu+2\nu\sigma^2_{\gamma,|\pmb{x}|}}\right)^m,\quad \sigma^2_{\gamma,|\pmb{x}|} =\left(\frac{1}{\Sigma^2} + \frac{1}{2|\pmb{x}|\sigma_{\gamma}^2}\right)^{-1}.
    \end{equation}
    Using the above bound in combination with Corollary~\ref{corr:statelearn} for the parameter $\Delta$, we get
    \begin{equation}
    \begin{aligned}
         \Delta &\leq \left(\sum_{k=1}^c\binom{c}{k}(4\epsilon_0)^{k}\left(\frac{1+\nu}{1+\nu+2\nu\sigma^2_{\gamma,k}}\right)^{m/2}\right)^2\\
        &\leq \left((1+4\epsilon_0)^c-1\right)^2\left(\frac{1+\nu}{1+\nu+2\nu\sigma^2_{\gamma,1}}\right)^{m}\\
        &\leq \frac{16c^2\epsilon_0^2e^{8c\epsilon_0}}{(1+0.99\kappa)^{m}}.
    \end{aligned}
    \end{equation}
    Since our substitutions for $\Sigma^2$ and $2\sigma^2_{\gamma}$ are the same as those in \cite{coroi2025exponentialadvantagecontinuousvariablequantum}, we directly use their simplification to obtain the final bound. Since we have assumed that $c = O(1/\epsilon)$ and $\epsilon = 0.69\epsilon_0$, we have
    \begin{equation}
        N = \Omega((1+0.99\kappa)^mc^{-1}\epsilon^{-2}).
    \end{equation}
    
\end{proof}
The above bound improves upon the bound obtained in Thm. 2 of \cite{coroi2025exponentialadvantagecontinuousvariablequantum} and additionally considers the simpler task of only estimating the absolute value. We can similarly extend the above methods to show the tightness of $\epsilon^{-4}$ scaling when using $\hat{\rho}\otimes \hat{\rho}^*$ which we exclude since we have already demonstrated this tightness in the channel learning scenario.

\subsection{Sufficient conditions for efficient learning protocols}\label{app:suff_cond_effic}
\begin{definition}[Complex-conjugate channel]
    For a given $n$-mode CPTP map $\mathcal{E}$, we define its complex-conjugate channel $\mathcal{E}^*$ to be the CPTP map that satisfies $(\mathcal{E}^*(\hat{\rho}^T))^T = \mathcal{E}(\hat{\rho})$ for all $\hat{\rho}\in L_1(\mathcal{H})$, for a fixed choice of basis for the transpose.
\end{definition}
It can be checked that if $\mathcal{E}$ is a well-defined CPTP channel, so is $\mathcal{E}^*$ since it satisfies the same trace-preserving constraint and has a phase-space reflection which does not affect positivity of the output states.

\begin{theorem}\label{thm:conjeff}
    There exists a non-adaptive ancilla-assisted learning scheme with 1-copy access to $\mathcal{E}\otimes\mathcal{E}^*$ that can estimate $M$ unique displacement observables on the state $(\mathcal{E}\otimes\mathbb{I})(\hat{\sigma})$ with accuracy $\epsilon$ (up to a minus sign) and success probability $2/3$, using $N = \mathcal{O}(\epsilon^{-4}\log(M))$ channel copies.
\end{theorem}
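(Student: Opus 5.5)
The plan is to feed a Choi-type state of $\mathcal{E}$ together with the matching state of $\mathcal{E}^*$ into one joint Bell measurement, and read off squared characteristic values from the outcomes. Each round uses one call to $\mathcal{E}\otimes\mathcal{E}^*$, with $\mathcal{E}$ acting on half of $\hat{\sigma}$ and $\mathcal{E}^*$ acting on half of $\hat{\sigma}^T$ (the transpose taken in the fixed conjugation basis). Here $\hat{\sigma}$ is $|\Phi_d\rangle\langle\Phi_d|^{\otimes m}$ for qudit inputs and the TMSV $|\Phi_r^{\mathrm{TMSV}}\rangle\langle\Phi_r^{\mathrm{TMSV}}|^{\otimes m}$ for bosonic inputs. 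For these choices $\hat{\sigma}^T=\hat{\sigma}$, because both states are real in the computational/Fock basis.

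\textbf{Step 1: identify the conjugate output.} Set $\hat{\tau}=(\mathcal{E}\otimes\mathbb{I})(\hat{\sigma})$. The defining relation $[\mathcal{E}^*(\hat{\rho}^T)]^T=\mathcal{E}(\hat{\rho})$ extends by linearity to the tensor product with identity. This gives $(\mathcal{E}^*\otimes\mathbb{I})(\hat{\sigma}^T)=\hat{\tau}^T=\hat{\tau}^*$, since $\hat{\tau}$ is Hermitian. Each round therefore yields one copy of $\hat{\tau}\otimes\hat{\tau}^*$.

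\textbf{Step 2: measure and estimate.} On each register pair (copy of $\hat{\tau}$, copy of $\hat{\tau}^*$) we perform the generalized Bell measurement: Eq.~\eqref{eq:bellPOVMqudit} for qudits and Eq.~\eqref{eq:bellPOVMboson} for modes. By Lemma~\ref{lem:effestquditchar} or Lemma~\ref{lem:effestbosonchar}, $N=O(\epsilon'^{-2}\log(M/\delta))$ rounds estimate, for all $M$ queries simultaneously, the commuting observables $\hat{D}(\xi)\otimes\hat{D}(\xi)^*$ on $\hat{\tau}\otimes\hat{\tau}^*$. In the bosonic case these are $\hat{D}(\xi)\otimes\hat{D}(\xi^*)$, and $\xi$ ranges over the joint output–ancilla phase space.

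\textbf{Step 3: relate the estimates to the target.} The expectation factorizes as
\begin{equation}
\mathrm{Tr}[\hat{D}(\xi)\hat{\tau}]\,\mathrm{Tr}[\hat{D}(\xi)^*\hat{\tau}^*]=\mathrm{Tr}[\hat{D}(\xi)\hat{\tau}]\,\overline{\mathrm{Tr}[\hat{D}(\xi)\hat{\tau}]}=|\chi_{\hat{\tau}}(\xi)|^2.
\end{equation}
By Eq.~\eqref{eq:cequdit} and its bosonic analogue, $C_{\mathcal{E}}(Q)$ equals $\chi_{\hat{\tau}}(\xi_Q)$ for a suitable $\xi_Q$. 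Thus the procedure estimates $|C_{\mathcal{E}}(Q)|^2$ to additive accuracy $\epsilon'$. The same data also give $\mathrm{Re}/\mathrm{Im}$-type squared quantities, which is where the ``up to a sign'' statement comes from.

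Only queries should remain after this point. The measurement is non-adaptive and query-independent, so the queries can be processed afterwards. To recover the absolute value (or the value up to sign, if the real/imaginary parts are handled through the $\hat{E}$-basis), set $\epsilon'=\epsilon^2/3$. If the estimate $s$ of $|C|^2$ satisfies $s<2\epsilon^2/3$, output $0$; otherwise output $\sqrt{s}$. When $|C|\geq\epsilon$, the bound $|\sqrt{s}-|C||\le|s-|C|^2|/(\sqrt{s}+|C|)\le(\epsilon^2/3)/\epsilon$ gives the required accuracy. When $|C|<\epsilon$, a short case check bounds the error by $\epsilon$. This yields $N=O(\epsilon^{-4}\log(M/\delta))$, and $\delta=1/3$ gives the stated count.

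\textbf{Main obstacle.} The bosonic case needs care in two places. The TMSV-based Choi state carries the $\tanh(2r)$ rescaling and the damping prefactor, so I must verify that $C^{\mathrm{TMSV},r}_{\mathcal{E}}(\alpha,\beta)$ is exactly a characteristic-function point of $\hat{\tau}$ at $\xi=(\beta,-\alpha^*)$. I must also check that the ideal (unbounded) Bell POVM still gives a bounded unbiased estimator $e^{i\Omega(\zeta,\xi)}$, so that Hoeffding applies. I would handle both by reading them directly off the definition of $C^{r}_{\mathcal{E}}$ before Eq.~\eqref{eq:CETMSVr} and off Eq.~\eqref{eq:bosoniccharestimator}.
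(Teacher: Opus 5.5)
Your argument is correct and follows the paper's route. The appendix proof prepares $(\mathcal{E}\otimes\mathbb{I})(\hat{\sigma})\otimes(\mathcal{E}^*\otimes\mathbb{I})(\hat{\sigma}^T)=\hat{\tau}\otimes\hat{\tau}^T$ and defers to the Bell-measurement estimators of the cited state-learning works. The main-text discussion after Thm.~\ref{thm:efficientlearnconj} uses exactly your accuracy $\epsilon^2/3$ for $|C_{\mathcal{E}}|^2$ and your threshold $2\epsilon^2/3$.

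One correction: drop the remark that the same data also give squared real/imaginary parts or squared $\hat{E}$-expectations. For $d\geq 3$, and for bosonic modes, the Bell outcome distribution on $\hat{\tau}\otimes\hat{\tau}^*$ is a Fourier transform of $|\chi_{\hat{\tau}}|^2$ alone. For example, $\hat{E}_{d,m}\otimes\hat{E}_{d,m}^*$ contains the cross term $\hat{D}_{d,m}(\mathbf{q},\mathbf{p})\otimes\hat{D}_{d,m}(-\mathbf{q},\mathbf{p})$, whose Bell-diagonal elements are proportional to $\text{Tr}[\hat{D}_{d,m}(2\mathbf{q},2\mathbf{p})]=0$, so the Bell measurement cannot see it. ``Up to a sign'' therefore has to be read as absolute-value estimation; it holds literally only for qubits. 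That is precisely the task of Problem~\ref{prob:channel_learn}, and it is what your argument establishes.
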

\begin{proof}
    Using $\mathcal{E}\otimes\mathcal{E}^*$, we can prepare the state $(\mathcal{E}\otimes\mathbb{I})(\hat{\sigma})\otimes(\mathcal{E}^*\otimes\mathbb{I})(\hat{\sigma}^T)$. Note that the identity channel is self-conjugate so this prepares the state $\hat{\sigma}_{\mathrm{choi}}\otimes\hat
    \sigma_{\mathrm{choi}}^T$ which can have the displacement observables estimated up to a sign for all possible dimensions as shown in \cite{PRXQuantum.5.040301,coroi2025exponentialadvantagecontinuousvariablequantum}.
\end{proof}
As can be seen from Thm.~\ref{thm:conjeff}, this implies that the $1/\epsilon^4$ scaling is tight for this particular task assuming access to $\mathcal{E}\otimes\mathcal{E}^*$.

\begin{theorem}\label{thm:quditeff}
For a channel $\mathcal{E}$ with input Hilbert space $\mathcal{H}_{d,m}$ and output Hilbert space $\mathcal{H}_{d,m'}$, there exists a learning scheme which successfully estimates all possible values $|C_{\mathcal{E}}((\mathbf{q},\mathbf{p}),(\mathbf{q}',\mathbf{p}'))|$ to accuracy $\epsilon$ with success probability $1-\delta$ for $(\mathbf{q},\mathbf{p})\in\mathbb{F}_{d}^{m}\times \mathbb{F}_{d}^{m}$ and $(\mathbf{q}',\mathbf{p}')\in\mathbb{F}_{d}^{m'}\times \mathbb{F}_{d}^{m'}$ using $O(d(m'+m)\log(d/\delta)\epsilon^{-2d})$ copies of $\mathcal{E}$.
\end{theorem}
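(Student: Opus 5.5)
The plan is to reduce channel learning to $d$-copy state learning on the Choi state and then invoke Lemma~\ref{lem:dcopylearning}. Since input and output local dimensions are both $d$, the Choi state $\hat{\sigma}_{\mathcal{E}} = (\mathcal{E}\otimes\mathbb{I})(|\Phi_d\rangle\langle\Phi_d|^{\otimes m})$ lives on $m'+m$ qudits of the same prime dimension $d$. By Eq.~\eqref{eq:cequdit}, each transfer-matrix entry is
\[
C_{\mathcal{E}}((\mathbf{q},\mathbf{p}),(\mathbf{q}',\mathbf{p}')) = \text{Tr}\left[\hat{D}_{d,m'+m}((\mathbf{q}',\mathbf{q}),(\mathbf{p}',-\mathbf{p}))\,\hat{\sigma}_{\mathcal{E}}\right],
\]
up to a known phase coming from the prefactor $e^{i\pi\mathbf{q}\cdot\mathbf{p}/d}$ in Eq.~\eqref{eq:disp_qudit_defn}. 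This phase has unit modulus, so it does not affect absolute values. Each query is therefore a single displacement expectation value on an $(m+m')$-qudit state.

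\emph{Preparation.} Each copy of $\hat{\sigma}_{\mathcal{E}}$ costs one channel use: feed $m$ Bell pairs into $\mathcal{E}$. A $d$-fold product $\hat{\sigma}_{\mathcal{E}}^{\otimes d}$ then costs $d$ parallel uses, so this is a $d$-copy protocol. It is non-adaptive and query independent.

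\emph{Measurement.} On each $\hat{\sigma}_{\mathcal{E}}^{\otimes d}$, perform the POVM $\{\hat{\Pi}_{\mathbf{s},\mathbf{t}}\}$ of Lemma~\ref{lem:dcopylearning}, with $m$ replaced by $m+m'$. This yields simultaneous unbiased estimators of $\chi^d$ for every one of the $M = d^{2(m+m')}$ displacements. Apply Lemma~\ref{lem:dcopylearning} with accuracy $\eta$ and failure probability $\delta$ (Hoeffding plus a union bound over $M$). This gives estimates $\hat f$ with $|\hat f - C^d|\le \eta$ for all queries simultaneously, using $O(\eta^{-2}\log(M/\delta)) = O(\eta^{-2}((m+m')\log d + \log(1/\delta)))$ measurements. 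Since $(m+m')\log d + \log(1/\delta) \le (m+m')\log(d/\delta)$ for $m+m'\ge1$, the measurement count is $O(\eta^{-2}(m+m')\log(d/\delta))$.

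\emph{Extracting $|C|$.} Output $|\hat f|^{1/d}$. Write $a=|C|$ and $b=|\hat f|$, so $|b-a^d|\le\eta$. Because $x\mapsto x^{1/d}$ is $1/d$-Hölder with constant $1$ on $[0,\infty)$, we get $|b^{1/d}-a| \le |b-a^d|^{1/d} \le \eta^{1/d}$. Choosing $\eta=\epsilon^d$ gives accuracy $\epsilon$ for every entry. The measurement count becomes $O(\epsilon^{-2d}(m+m')\log(d/\delta))$. Multiplying by $d$ channel uses per measurement gives the claimed $O(d(m+m')\log(d/\delta)\epsilon^{-2d})$ copies of $\mathcal{E}$.

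\emph{Main obstacle.} The step needing the most care is this $d$th-root inversion. A naive derivative-based error bound blows up near $|C|=0$. The Hölder inequality $|x^{1/d}-y^{1/d}|\le|x-y|^{1/d}$ avoids this uniformly, and it is exactly what forces $\eta=\epsilon^d$ and hence the $\epsilon^{-2d}$ scaling that matches the lower bound of Thm.~\ref{thm:qudit2qudit}.

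A secondary check is that the POVM of Lemma~\ref{lem:dcopylearning} extends to the $(m+m')$-qudit Choi system. This holds because the construction is a tensor product over qudits, and the $d$ copies are grouped qudit-wise across the $d$ Choi replicas. The $(-1)^{\mathbf{q}\cdot\mathbf{p}}$ sign appearing in that lemma is also harmless, since it is known and has unit modulus.
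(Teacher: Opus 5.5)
Your proposal is correct and takes the same route as the paper: learn the Choi state with the $d$-copy measurement of Lemma~\ref{lem:dcopylearning} at accuracy $\epsilon^d$, then invert using $|a-b|\le|a^d-b^d|^{1/d}$. You also spell out several details the paper's one-line proof leaves implicit. These are the identification of each $C_{\mathcal{E}}$ entry as a single $(m+m')$-qudit displacement expectation (which in fact holds with no extra phase), the query count $M=d^{2(m+m')}$ that yields the $(m+m')\log(d/\delta)$ factor, and the reverse-triangle step $\bigl||\hat f|-|C|^d\bigr|\le|\hat f-C^d|$.
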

\begin{proof}
    This can be done by learning the Choi-state of $\mathcal{E}$ as $\hat{\rho} = (\mathcal{E}\otimes\mathbb{I})(|\Phi_d\rangle\langle\Phi_d|^{\otimes m})$ and using Lemma~\ref{lem:dcopylearning} with accuracy of $\epsilon' = \epsilon^{d}$ since $|a-b|\leq |a^{d}-b^{d}|^{1/d}$ for $a,b>0$.
\end{proof}
Note that this means that for the setting of $c = d$ in Thm.~\ref{thm:qudit2qudit}, the lower bound is tight with respect to $\epsilon$ scaling.

\subsection{Extension to qudits of square-free dimension}\label{app:squarefree}

The infinite learning hierarchy of quantum states where there exists a learning task that is hard with $(d-1)$-copy access but becomes easy with $d$-copy access was shown for not just prime values of $d$ but also for any square-free $d$ in \cite{ller2025infinitehierarchymulticopyquantum}. While the lower bound remains exponentially large for $(d-1)$-copy learning tasks, it does not scale as $d^m$ any more but rather can only be guaranteed to scale as $d_{\mathrm{min}}^m$ where $d_{\mathrm{min}}$ is the smallest prime number that divides $d$. Consider that the number $d = \prod_{i=1}^{f} d_i$ where the $d_i$ are distinct prime numbers. Considering $m$ qudits of $d$ levels each, we can equivalently consider each $d$-level qudit to be composed of $f$ qudits with each qudit being of local dimension $d_i$. We can generalize this principle for an arbitrary $z$-level qudit where the number of copies required for efficient learning is now $d$ where $d$ is the product of all prime numbers dividing $z$. Hence assuming $z = \prod_{i=1}^fd_i^{r_i}$, we are concerned with a Hilbert space given by $\bigotimes_{i=1}^{f}\mathcal{H}_{d_i,mr_i}$. In this scenario, the learning of a state $\hat{\rho}\in D(\bigotimes_{i=1}^{f}\mathcal{H}_{d_i,mr_i})$ is defined as the estimation of the expectation values of operators $\bigotimes_{i=1}^{f}\hat{D}_{d_i,mr_i}(\mathbf{q}_i,\mathbf{p}_i)$. Observe that we have the following hold for these operators in terms of operator norm.

\begin{lemma}\label{lem:disp_sqfree}
    Consider a number with prime factorization $z =\prod_{i=1}^fd_i^{r_i}$ which can be used to define square-free $d = \prod_{i=1}^f d_i$ for unique prime numbers $d_i$ from $i=1$ to $f$. Let $d_{\mathrm{min}}$ be a prime number such that $d_{\min}^{r_{\mathrm{min}}}$is the smallest number that divides $z$ but is coprime to $z/d_{\min}^{r_{\mathrm{min}}}$. For any positive integer $k$,
    \begin{equation}
        \left\|\bigotimes_{i=1}^{f}\left(\sum_{\mathbf{q}_i,\mathbf{p}_i\in\mathbb{F}^{mr_i}_{d_i}}\hat{D}_{d_i,mr_i}(\mathbf{q}_i,\mathbf{p}_i)^{\otimes 2k}\right)\right\|_\mathrm{op} \leq \begin{cases}
            z^{2m}/d_{\mathrm{min}}^{r_{\mathrm{min}}m},& k\neq 0 \mod d\\
            z^{2m}, & k =0\mod d
        \end{cases}.
    \end{equation}
\end{lemma}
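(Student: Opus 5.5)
The operator is a tensor product over the prime factors, so its operator norm factorizes exactly: $\|\bigotimes_i A_i\|_{\mathrm{op}}=\prod_i\|A_i\|_{\mathrm{op}}$ with
\[
A_i=\sum_{\mathbf{q}_i,\mathbf{p}_i\in\mathbb{F}^{mr_i}_{d_i}}\hat{D}_{d_i,mr_i}(\mathbf{q}_i,\mathbf{p}_i)^{\otimes 2k}.
\]
Each factor is then evaluated with Lemma~\ref{lem:discr_disp_tensor}, applied with prime $d_i$ and $mr_i$ qudits. This gives $\|A_i\|_{\mathrm{op}}=d_i^{2mr_i}$ if $k\equiv 0\bmod d_i$, and $\|A_i\|_{\mathrm{op}}=d_i^{mr_i}$ otherwise. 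Since $\prod_i d_i^{2mr_i}=z^{2m}$, the norm equals $z^{2m}\prod_{i:\,d_i\nmid k} d_i^{-mr_i}$.

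Next comes the case split. If $k\equiv 0\bmod d$, then because $d$ is the product of the distinct primes $d_i$, every $d_i$ divides $k$. Every factor then attains its maximum, and the norm is exactly $z^{2m}$. This is also the trivial triangle-inequality bound, since the operator is a sum of $z^{2m}$ unitaries.

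If $k\not\equiv 0\bmod d$, then since the $d_i$ are distinct primes and $d$ is square-free, some $d_j$ fails to divide $k$. The product is therefore at most $z^{2m}/d_j^{mr_j}$. Every prime-power block $d_j^{r_j}$ divides $z$ and is coprime to $z/d_j^{r_j}$, so $d_j^{r_j}\ge d_{\min}^{r_{\min}}$ by the defining minimality. It follows that $d_j^{mr_j}\ge d_{\min}^{r_{\min}m}$, which gives the stated bound $z^{2m}/d_{\min}^{r_{\min}m}$.

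The only step needing care is this last identification. The minimizer must be taken over the blocks $d_i^{r_i}$ rather than over the bare primes $d_i$, so that the bound holds whichever prime happens not to divide $k$. The other ingredients are the multiplicativity of the operator norm under tensor products (via the singular values of a Kronecker product) and the exact values from Lemma~\ref{lem:discr_disp_tensor}. No further obstacle is expected, and it is already correct to use only the single worst factor even when several primes fail to divide $k$.
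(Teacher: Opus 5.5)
Your proof is correct and follows essentially the same route as the paper: factor the norm over the prime-power blocks, evaluate each factor with Lemma~\ref{lem:discr_disp_tensor}, and observe that $k\not\equiv 0 \bmod d$ forces some $d_j\nmid k$, whose block satisfies $d_j^{r_j}\ge d_{\min}^{r_{\min}}$. The only difference is that you use exact multiplicativity of the operator norm where the paper only needs the inequality; your remark that the minimum must be taken over the blocks $d_i^{r_i}$ rather than over the bare primes is well taken, since that is exactly what the lemma's definition of $d_{\min}^{r_{\min}}$ encodes (the subsequent theorem instead defines $d_{\min}$ as the smallest prime, which does not give a valid bound in general, e.g.\ for $z=24$).
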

\begin{proof}
Due to the tensor product nature of the operator in question, we use the fact that $\|A\otimes B\|_{\mathrm{op}}\leq \|A\|_{\mathrm{op}}\|B\|_{\mathrm{op}}$. This gives us
\begin{equation}
    \left\|\bigotimes_{i=1}^{f}\left(\sum_{\mathbf{q}_i,\mathbf{p}_i\in\mathbb{F}^{mr_i}_{d_i}}\hat{D}_{d_i,mr_i}(\mathbf{q}_i,\mathbf{p}_i)^{\otimes 2k}\right)\right\|_\mathrm{op}\leq \prod_{i=1}^{f}\left\|\sum_{\mathbf{q}_i,\mathbf{p}_i\in\mathbb{F}^{mr_i}_{d_i}}\hat{D}_{d_i,mr_i}(\mathbf{q}_i,\mathbf{p}_i)^{\otimes 2k}\right\|_{\mathrm{op}}.
\end{equation}
Unless $k =0\mod d$, the exponent $k$ cannot satisfy $k = 0\mod d_i$ simultaneously for all $i = 1$ to $f$. Applying this with Lemma~\ref{lem:discr_disp_tensor} we get that for any $k\neq 0\mod d$ at least one of the factors $d_i$ has $k\neq 0\mod d_i$ and so the largest possible value of the RHS in the above inequality is $z^{2m}/d_{\mathrm{min}}^{r_{\mathrm{min}}m}$ and if $k = 0\mod d$ we obtain the RHS to equal $z^{2m}$.
\end{proof}

We can generalize the POVM given by $\hat{\Pi}_{\pm,\theta,(\mathbf{q},\mathbf{p})}^{(d,m)}$ in Eq.~\eqref{eq:qudit2outcome} for the Hilbert space of $\bigotimes_{i=1}^{f}\mathcal{H}_{d_i,m}$ to be given by
\begin{equation}
    \hat{\Pi}_{\pm,\theta}^{(d,m)} = \frac{\mathbb{I}\pm \frac{e^{-i\theta}\bigotimes_{i=1}^f\hat{D}_{d_i,m}(\mathbf{q}_i,\mathbf{p}_i) + e^{i\theta}\bigotimes_{i=1}^f\hat{D}_{d_i,m}(\mathbf{q}_i,\mathbf{p}_i)^\dagger}{2}}{2},
\end{equation}
which would have a very similar implementation as that depicted in Fig.~\ref{fig:hardtolearn}(a) with the difference being the operator $(\bigotimes_{i=1}^f\hat{D}_{d_i,m}(\mathbf{q}_i,\mathbf{p}_i))^{\pm 1/2}$ as the controlled displacement operation conditioned on whether the ancillary qubit is in $\ket{+}$ or $\ket{-}$. Using this, we now establish that for any square-free number $d$, there is a $(d-1)$-copy hard channel learning task.

\begin{theorem}
    Consider two composite numbers with prime factorizations $z = \prod_{i=1}^{f}d_i^{r_i}$ and $z' = \prod_{i=1}^{f}d_{i}'^{r'_i}$ which yield the square-free numbers with prime factorizations $d = \prod_{i=1}^f d_i$, $d' = \prod_{i=1}^{f'}d_i'$ with $d_i,d_i'$ each being prime. We define $d_{\min} = \min_{1\leq i\leq f}(d_i)$ and $d'_{\min} = \min_{1\leq i\leq f'}(d_i')$, and $r_{\min}$ as the largest number for which $d_{\min}^{r_{\min}}$ divides $z$ and $r'_{\min}$ as the largest number for which $d'^{r'_{\min}}_{\min}$ divides $z'$. Let $\mathcal{E}\in \CPTP(\bigotimes_{i=1}^f\mathcal{H}_{d_i,mr_i},\bigotimes_{i=1}^{f'}\mathcal{H}_{d_i',m'r'_i})$. Consider any $c$-copy ($c\leq \frac{1}{\epsilon}$, $\epsilon<1$) learning protocol capable of producing an estimate $\hat{C}$ to $\left|\text{Tr}\left[\bigotimes_{i=1}^{f'}\hat{D}_{d'_i,m'r'_i}(\mathbf{q}'_i,\mathbf{p}'_i)\mathcal{E}\left(\bigotimes_{i=1}^{f}\hat{D}_{d_i,mr_i}(\mathbf{q}_i,\mathbf{p}_i)\right)\right]\right|$ such that for a query of $(\mathbf{q}_i,\mathbf{p}_i)\in \mathbb{F}_{d_i}^{mr_i}\times\mathbb{F}_{d_i}^{mr_i}$ (for $i=1$ to $f$), $(\mathbf{q}_i',\mathbf{p}_i')\in \mathbb{F}_{d_i'}^{m'r'_i}\times\mathbb{F}_{d_i'}^{m'r'_i}$ (for $i = 1$ to $f'$) we have $\left|\hat{C} - \left|\text{Tr}\left[\bigotimes_{i=1}^{f'}\hat{D}_{d'_i,m'r'_i}(\mathbf{q}'_i,\mathbf{p}'_i)\mathcal{E}\left(\bigotimes_{i=1}^{f}\hat{D}_{d_i,mr_i}(\mathbf{q}_i,\mathbf{p}_i)\right)\right]\right|\right|\leq \frac{\epsilon}{8e}$ with a success probability of $2/3$. We have the following hold for the number of uses of the channel $N$ for any such learning protocol.
    \begin{itemize}
        \item For $c\leq \min(d,d')-1$ we have
        $$N = \Omega\left(\frac{d_{\min}^{mr_{\mathrm{min}}} d_{\min}'^{m'r'_{\mathrm{min}}}}{c\epsilon^2}\right)$$
        \item Assuming $d = d'$, for $c\geq d$ we have
        $$N = \Omega\left(c\min\left[\frac{(d_{\min}^{mr_{\min}}-1) (d_{\min}'^{m'r'_{\min}}-1)}{c^2\epsilon^2},\left(\frac{d}{c\epsilon}\right)^{2d}\left(\frac{3}{4}\right)^2\right]\right)$$
    \end{itemize}
\end{theorem}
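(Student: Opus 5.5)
The plan is to mirror the proof of Thm.~\ref{thm:qudit2qudit} on the composite Hilbert spaces $\bigotimes_{i=1}^f\mathcal{H}_{d_i,mr_i}$ and $\bigotimes_{i=1}^{f'}\mathcal{H}_{d_i',m'r_i'}$. I would build an instance of Problem~\ref{prob:manyrevel} with $l=1$.
\begin{itemize}
\item The null channel is $\mathcal{E}_0(\cdot)=\text{Tr}(\cdot)\mathbb{I}/z'^{m'}$.
\item The alternative is the measure-and-prepare channel with POVM $\hat{\Pi}_{\pm,\mathbf{u}}=\frac{1}{2}(\mathbb{I}\pm\hat{E}_{\mathbf{u}}/\sqrt{2})$, built from the generalized two-outcome POVM stated just above the theorem with $\theta=-\pi/4$.
\item The prepared states are $\hat{\rho}_{\pm,\mathbf{v}}=(\mathbb{I}\pm\epsilon_0\hat{E}_{\mathbf{v}})/z'^{m'}$.
\end{itemize}
Here $\hat{E}_{\mathbf{u}}=\frac{1+i}{2}\hat{D}_{\mathbf{u}}+\frac{1-i}{2}\hat{D}_{\mathbf{u}}^\dagger$, with $\hat{D}_{\mathbf{u}}=\bigotimes_i\hat{D}_{d_i,mr_i}(\mathbf{q}_i,\mathbf{p}_i)$, and similarly for $\mathbf{v}$. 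The parameters $\mathbf{u}=(\mathbf{q}_i,\mathbf{p}_i)_i$ and $\mathbf{v}$ are drawn uniformly among the nonidentity labels.

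For the reduction, the queried transfer-matrix entry at $(\mathbf{u},\mathbf{v})$ has modulus $\epsilon_0/(2\sqrt2)$ under the alternative and $0$ under the null. Taking $\epsilon=e\sqrt{2}\epsilon_0$ as before, a learner accurate to $\epsilon/(8e)$ with probability $2/3$ wins the revealed discrimination.

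Corollary~\ref{corr:master} then gives
\[
\Delta\le\Big(\sum_{k=1}^c\binom{c}{k}\epsilon_0^k\sqrt{a_k b_k}\Big)^2,
\]
where
\[
a_k=\big\|\mathbb{E}_{\mathbf{u}}[(\hat{E}_{\mathbf{u}}/\sqrt2)^{\otimes 2k}]\big\|_{\mathrm{op}},\qquad b_k=\big\|\mathbb{E}_{\mathbf{v}}[\hat{E}_{\mathbf{v}}^{\otimes 2k}]\big\|_{\mathrm{op}}.
\]

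To bound $a_k$ and $b_k$, I would expand $\hat{E}^{\otimes 2k}$ into $2^{2k}$ displacement strings. As in Corollary~\ref{corr:dispparityflip}, I would use the parity flip on each prime factor to reduce each string to $\bigotimes_i\sum\hat{D}_{d_i,mr_i}^{\otimes 2k}$ and then apply Lemma~\ref{lem:disp_sqfree}. Subtracting the excluded identity term and normalizing by $z^{2m}-1$, this should give $a_k\lesssim 1/d_{\min}^{mr_{\min}}$ when $k\not\equiv0\bmod d$ and $a_k\le1$ otherwise, up to constants; the analogous bound holds for $b_k$ with primes.

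\textbf{Main obstacle.} The normalization step is where I expect trouble. Lemma~\ref{lem:disp_sqfree} bounds the norm by $z^{2m}/d_{\min}^{r_{\min}m}$, and dividing by $z^{2m}-1$ gives the $1/d_{\min}^{mr_{\min}}$ scaling. The issue is that the identity-removal correction $2^k/(z^{2m}-1)$ must be absorbed, and the uniform-over-nonidentity averaging must not inflate the bound. I would handle both as in Corollary~\ref{corr:quditEopnorm}, since $z^{2m}-1\ge z^{2m}/2$.

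For $c\le\min(d,d')-1$, every $k\le c$ satisfies $k\not\equiv0$ modulo both $d$ and $d'$. Hence
\[
\Delta\le \frac{C\big((1+\sqrt2\epsilon_0)^c-1\big)^2}{d_{\min}^{mr_{\min}}\,d_{\min}'^{m'r'_{\min}}}\le \frac{C'c^2\epsilon^2}{d_{\min}^{mr_{\min}}\,d_{\min}'^{m'r'_{\min}}},
\]
using Lemma~\ref{lem:epssum} and $c\epsilon\le1$. Since $T=\Omega(1/\Delta)$ and $N=cT$, this gives the first bullet.

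For $d=d'$ and $c\ge d$, I would split the sum at $k\equiv0\bmod d$. The $k\not\equiv0$ terms are bounded as in the first case. The $k=jd$ terms each contribute $\binom{c}{jd}(\sqrt2\epsilon_0)^{jd}$; bounding $\binom{c}{jd}\le(ec/jd)^{jd}$ and summing the geometric series gives $\frac43(c\epsilon/d)^d$. Combining the two pieces in the square and inverting yields the $\min$ in the second bullet.
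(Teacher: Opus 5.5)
Your construction, the reduction with $\epsilon=e\sqrt{2}\,\epsilon_0$, the use of Corollary~\ref{corr:master}, the parity-flip and triangle-inequality bound via Lemma~\ref{lem:disp_sqfree}, and the case split are exactly the paper's proof, so the two arguments stand or fall together.

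\textbf{The failing step.} Both your proposal and the paper claim $a_k\lesssim 1/d_{\min}^{mr_{\min}}$ for every $k\not\equiv 0\bmod d$. This fails when $d_{\min},r_{\min}$ are read as the theorem defines them (smallest prime, full multiplicity).

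By Lemma~\ref{lem:discr_disp_tensor}, the norm of $\bigotimes_i\sum\hat{D}_{d_i,mr_i}^{\otimes 2k}$ equals $z^{2m}$ divided by $\prod_i d_i^{mr_i}$, where the product runs only over those $i$ with $k\not\equiv 0\bmod d_i$. So for a given $k$ the suppression can be as small as the single factor with the smallest prime \emph{power} $d_i^{r_i}$. That prime power is what the definition inside Lemma~\ref{lem:disp_sqfree} refers to, and it need not belong to the smallest prime.

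\textbf{A concrete failure.} Take $z=2^5\cdot 3$, so $d=6$ and the theorem's $d_{\min}^{r_{\min}}=32$, and take $k=2$ (which occurs once $2\le c\le 5$). The qubit factor sits at its full norm $2^{10m}$, and only the qutrit factor is suppressed, by $3^m$. Evaluate $\mathbb{E}_{\mathbf u}[(\hat{E}_{\mathbf u}/\sqrt{2})^{\otimes 4}]$ on $|\Psi\rangle\otimes|\mathbf{0}\rangle^{\otimes 4}$, where $|\Psi\rangle$ is the fixed vector of Lemma~\ref{lem:discr_disp_tensor} on the qubits. The result is about $3^{-m}/4$, far above $1/(32^m-1)$.

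\textbf{No sharper estimate can fix this.} Consider the normalized entry of Problem~\ref{prob:channel_learn} with $z=z'=96$ and $c=2$. Measure the qubit registers of two Choi-state copies in the Bell basis, and apply independent random Clifford shadows to their qutrit registers. This gives an unbiased estimator of $|C_{\mathcal E}|^2$ with second moment $(3^{m+m'}+1)^2$. Hence $O(9^{m+m'}\epsilon^{-4})$ measurements suffice, which for large $m,m'$ at fixed $\epsilon$ falls below the claimed $\Omega(32^{m+m'}/\epsilon^2)$.

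\textbf{The fix.} Read $d_{\min}^{r_{\min}}$ as $\min_i d_i^{r_i}$, and likewise for $z'$, exactly as in Lemma~\ref{lem:disp_sqfree}. With that reading your argument goes through unchanged. It also agrees with the theorem's stated definition whenever the smallest prime carries the smallest prime power, in particular for square-free $z,z'$, which is the case behind the hierarchy claim in the main text.
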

\begin{proof}
    We define a shorthand of $\mathbf{u}\in \mathbb{F}_{d_1}^{2mr_1}\times \mathbb{F}_{d_2}^{2mr_2}\times\dots\times\mathbb{F}_{d_f}^{2mr_f}$ and $\mathbf{v}\in \mathbb{F}_{d_1'}^{2m'r'_1}\times \mathbb{F}_{d_2'}^{2m'r'_2}\times\dots\times\mathbb{F}_{d'_{f'}}^{2m'r'_{f'}}$. Using this shorthand we define displacement operators for $\mathbf{u} = ((\mathbf{q}_1,\mathbf{p}_1),(\mathbf{q}_2,\mathbf{p}_2),\cdots,(\mathbf{q}_f,\mathbf{p}_f))$ and $\mathbf{v} = ((\mathbf{q}'_1,\mathbf{p}'_1),(\mathbf{q}'_2,\mathbf{p}'_2),\cdots,(\mathbf{q}'_{f'},\mathbf{p}'_{f'}))$ as $\hat{D}_{z,m}(\mathbf{u}) = \bigotimes_{i=1}^{f}\hat{D}_{d_i,mr_i}(\mathbf{q}_i,\mathbf{p}_i)$ and $\hat{D}_{z',m'}(\mathbf{v}) = \bigotimes_{i=1}^{f'}\hat{D}_{d'_i,m'r'_i}(\mathbf{q}'_i,\mathbf{p}'_i)$ respectively. We define the following channels 
    \begin{equation}
        \mathcal{E}_{0}(\cdot) = \text{Tr}\left[\cdot\right]\frac{\bigotimes_{i=1}^{f'}\mathbb{I}_{d_i',m'r_i'}}{z'^{m'}},\quad \mathcal{E}_{\mathbf{u},\mathbf{v}}(\cdot) = \text{Tr}\left[\hat{\Pi}_{+,\mathbf{u}}(\cdot)\right]\hat{\rho}_{+,\mathbf{v}} + \text{Tr}\left[\hat{\Pi}_{-,\mathbf{u}}(\cdot)\right]\hat{\rho}_{-,\mathbf{v}}
    \end{equation}
    where we define the operators
    \begin{equation}
        \hat{\Pi}_{\pm,\mathbf{u}}  = \frac{\bigotimes_{i=1}^{f}\mathbb{I}_{d_i,mr_i} \pm \frac{e^{i\pi/4}\hat{D}_{z,m}(\mathbf{u})+ e^{-i\pi/4}\hat{D}_{z,m}(\mathbf{u})^\dagger}{2}}{2},\quad \hat{\rho}_{\pm,\mathbf{v}} = \frac{\bigotimes_{i=1}^{f'}\mathbb{I}_{d_i',m'r'_i} \pm \epsilon_0(e^{i\pi/4}\hat{D}_{z',m'}(\mathbf{v})+ e^{-i\pi/4}\hat{D}_{z',m'}(\mathbf{v})^\dagger)/\sqrt{2}}{z'^{m'}}.
    \end{equation}
    From this, we can see that the channels take on the same form as Eq.~\eqref{eq:Epar1par2} with $\hat{\rho}_0 = \frac{\bigotimes_{i=1}^{f'}\mathbb{I}_{d_i',m'r'_i}}{z'^{m'}}$ and
    \begin{equation}
        \hat{K}_{1,\mathbf{u}} = \frac{e^{i\pi/4}\hat{D}_{z,m}(\mathbf{u})+ e^{-i\pi/4}\hat{D}_{z,m}(\mathbf{u})^\dagger}{2},\quad \hat{W}_{1,\mathbf{v}} = \epsilon_0(e^{i\pi/4}\hat{D}_{z',m'}(\mathbf{v})+ e^{-i\pi/4}\hat{D}_{z',m'}(\mathbf{v})^\dagger)/\sqrt{2},
    \end{equation}
    using which we rewrite
    \begin{equation}
        \mathcal{E}_{\mathbf{u},\mathbf{v}}(\cdot) = \sum_{x\in\{0,1\}}\text{Tr}\left[\hat{K}_{x,\mathbf{u}}(\cdot)\right]\sqrt{\hat{\rho}_0}\hat{W}_{x,\mathbf{v}}\sqrt{\hat{\rho}_0}.
    \end{equation}
\textbf{Success in Problem~\ref{prob:manyrevel} through learning task:} Consider $A$ to sample $\mathbf{u}$ from $\mathbb{F}_{d_1}^{2mr_1}\times \mathbb{F}_{d_2}^{2mr_2}\times\dots\times\mathbb{F}_{d_f}^{2mr_f}\setminus\{\mathbf{0}\}$ uniformly randomly and $\mathbf{v}$ from $\mathbb{F}_{d_1'}^{2m'r'_1}\times \mathbb{F}_{d_2'}^{2m'r'_2}\times\dots\times\mathbb{F}_{d_f'}^{2m'r'_{f'}}\setminus\{\mathbf{0}\}$ uniformly randomly. $B$ performs the learning protocol following which $A$ reveals the values of $\mathbf{u}$ and $\mathbf{v}$. Similar to Thm.~\ref{thm:qudit2qudit} we note 
\begin{equation}
\left|\text{Tr}\left[\hat{D}_{z',m'}(\mathbf{v})\mathcal{E}_0\left(\hat{D}_{z,m}(\mathbf{u})\right)\right]\right| = 0,\quad \left|\text{Tr}\left[\hat{D}_{z',m'}(\mathbf{v})\mathcal{E}_{\mathbf{u},\mathbf{v}}\left(\hat{D}_{z,m}(\mathbf{u})\right)\right]\right| = \frac{\epsilon_0}{2\sqrt{2}},
\end{equation}  
through which we see that setting $\epsilon = e\epsilon_0\sqrt{2}$ makes an estimation accuracy of $\epsilon/8e$ sufficient to succeed in the hypothesis test.

\textbf{Lower bound using Lemma~\ref{lem:master}}:
Observe that the operator $\hat{K}_{1,\mathbf{u}}^{\otimes 2k}$ when averaged over all possible $\mathbf{u}$ values has a very similar form to that of the operator norm studied in Corollary~\ref{corr:quditEopnorm}, with the tensor-product structure controlled by Lemma~\ref{lem:disp_sqfree}. Using the fact that there is a parity operator that conjugates $\hat{D}_{z,m}(\mathbf{u})$ to $\hat{D}_{z,m}(\mathbf{u})^\dagger$, we can apply the triangle inequality to obtain (assuming $k\neq 0\mod d$)
\begin{equation}
\begin{aligned}
    \left\|\mathbb{E}_{\mathbf{u}}\hat{K}_{1,\mathbf{u}}^{\otimes 2k}\right\|_{\mathrm{op}} &\leq \frac{1}{z^{2m}-1}\left\|\sum_{\mathbf{u}\in \mathbb{F}_{d_1}^{2mr_1}\times \mathbb{F}_{d_2}^{2mr_2}\times\dots\times\mathbb{F}_{d_f}^{2mr_f}}\hat{D}_{z,m}(\mathbf{u})^{\otimes 2k}\right\|_{\mathrm{op}} + \frac{\left\|\hat{D}_{z,m}(\mathbf{0})\right\|_{\mathrm{op}}}{z^{2m}-1}\\
    &\leq \frac{{d_{\min}}^{mr_{\min}} +(z/d_{\mathrm{min}}^{r_{\min}})^{-2m}}{{d_{\min}}^{2mr_{\min}} -(z/d^{r_{\min}}_{\mathrm{min}})^{-2m}}\leq \frac{1}{d_{\mathrm{min}}^{mr_{\min}} - (z/d_{\mathrm{min}}^{r_{\min}})^{-m}} \leq \frac{1}{d_{\min}^{mr_{\min}} - 1}
\end{aligned}
\end{equation}
For $k = 0\mod d$, we note that the upper bound is always an operator norm of $1$ by the triangle inequality on the operator $\mathbb{E}_{\mathbf{u}}\hat{K}_{1,\mathbf{u}}^{\otimes 2k}$. Hence we have
\begin{equation}
    \left\|\mathbb{E}_{\mathbf{u}}\hat{K}_{1,\mathbf{u}}^{\otimes 2k}\right\|_{\mathrm{op}} \leq \begin{cases}
        \frac{1}{d_{\min}^{mr_{\min}} - 1}, & k\neq 0\mod d\\
        1, & k = 0\mod d
    \end{cases}
\end{equation}
which similarly extends to give
\begin{equation}
    \left\|\mathbb{E}_{\mathbf{v}}\hat{W}_{1,\mathbf{v}}^{\otimes 2k}\right\|_{\mathrm{op}} \leq \begin{cases}
        \frac{(\epsilon_0\sqrt{2})^{2k}}{d_{\min}'^{m'r'_{\min}}-1}, & k \neq 0\mod d'\\
    (\epsilon_0\sqrt{2})^{2k}, & k = 0\mod d'
    \end{cases}.
\end{equation}
Substituting this in the expression of the $\chi^2$-divergence we get
\begin{equation}
    \Delta \leq \left(\sum_{k = 1}^{c}\binom{c}{k}\sqrt{\left\|\mathbb{E}_{\mathbf{u}}\hat{K}_{1,\mathbf{u}}^{\otimes 2k}\right\|_{\mathrm{op}}\left\|\mathbb{E}_{\mathbf{v}}\hat{W}_{1,\mathbf{v}}^{\otimes 2k}\right\|_{\mathrm{op}}}\right)^{2}.
\end{equation}
If $c \leq \min(d,d') - 1$, then $k \neq 0\mod d$ and $k \neq 0 \mod d'$, and so we get
\begin{equation}
    \Delta \leq \left(\sum_{k = 1}^{c}\binom{c}{k}\frac{(\epsilon_0\sqrt{2})^k}{\sqrt{(d_{\min}^{mr_{\min}} -1)(d{'}^{m'r'_{\min}}_{\min}-1)}}\right)^2 = \frac{((1+\epsilon_0\sqrt{2})^c - 1)^2}{(d_{\min}^{mr_{\min}}-1)(d{'}^{m'r'_{\min}}_{\min}-1)} \implies N = \Omega\left(\frac{d_{\min}^{mr_{\min}} d{'}^{m'r'_{\min}}_{\min}}{c\epsilon^2}\right).
\end{equation}
If $c \geq d$ and $d = d'$, we separately deal with the $k = 0\mod d$ terms to obtain
\begin{equation}
\begin{aligned}
    \Delta &\leq \left(\sum_{k = 1}^{c}\binom{c}{k}\frac{(\epsilon_0\sqrt{2})^k}{\sqrt{(d_{\min}^{mr_{\min}} -1)(d{'}_{\min}^{m'r'_{\min}}-1)}} + \sum_{k=1}^{\lfloor c/d\rfloor}\binom{c}{kd}(\epsilon_0\sqrt{2})^{kd}\right)^2\\
    &\leq \left(\frac{c\epsilon}{\sqrt{(d_{\min}^{mr_{\min}} - 1)(d{'}_{\min}^{m'r'_{\min}} - 1)}} + \frac{4}{3}\left(\frac{c\epsilon}{d}\right)^d\right)^2,
\end{aligned}
\end{equation}
which yields the lower bound
\begin{equation}
    N = \Omega\left(c\min\left[\frac{(d_{\min}^{mr_{\min}} - 1)(d{'}_{\min}^{m'r'_{\min}} - 1)}{c^2\epsilon^2},\left(\frac{d}{c\epsilon}\right)^{2d}\left(\frac{3}{4}\right)^2\right]\right).
\end{equation}
\end{proof}

The above theorem shows that if we consider channels acting on $m$ $z$-level qudits, these will give a channel learning task that is $c$-copy hard for all $c\leq d$ where $d$ is the product of all the prime numbers that divide $z$. Note that if $d,d'$ are coprime, the cases of $d\neq d'$ and $c \geq \min(d,d')$ would follow exactly as those shown in Thm.~\ref{thm:qudit2qudit} where each $d$ is replaced by $d_{\min}$ and $d'$ by $d'_{\min}$. We expect that if $d,d'$ are not coprime, the lowest common multiple of them would be the number of copies required for efficient learning to be possible and leave this case to future exploration.

\section{Limitations of Choi-state learning}
\subsection{Inadequacy for describing arbitrary bosonic states}\label{app:inadequacy}
\subsubsection{Description of transfer functions for bosonic channels}
Consider channel $\mathcal{E}^{\mathrm{qudit}}$ with input Hilbert space $\mathcal{H}_{d,m}$ and output Hilbert space $\mathcal{H}_{d',m'}$. The transfer function given by $C_{\mathcal{E}^{\mathrm{qudit}}}((\mathbf{q},\mathbf{p}),(\mathbf{q}',\mathbf{p}')) = \text{Tr}\left[\hat{D}_{d',m'}(\mathbf{q}',\mathbf{p}')\mathcal{E}(\hat{D}_{d,m}(\mathbf{q},\mathbf{p}))\right]/{d^{m}}$ for $\mathbf{q},\mathbf{p}\in\mathbb{F}^{m}_{d}$ and $\mathbf{q}',\mathbf{p}'\in\mathbb{F}^{m'}_{d'}$ offers a complete description for the channel since for any arbitrary input state $\hat{\rho}\in D(\mathcal{H}_{d,m})$ we have
\begin{equation}
    \mathcal{E}^{\mathrm{qudit}}(\hat{\rho}) = \sum_{\mathbf{q},\mathbf{p}\in\mathbb{F}^{m}_{d},\mathbf{q}',\mathbf{p}'\in\mathbb{F}^{m'}_{d'}}C_{\mathcal{E}^{\mathrm{qudit}}}((\mathbf{q},\mathbf{p}),(\mathbf{q}',\mathbf{p}'))\frac{\hat{D}_{d',m'}^\dagger(\mathbf{q}',\mathbf{p}')}{d'^{m'}} \text{Tr}\left[\hat{D}_{d,m}(\mathbf{q},\mathbf{p})\hat{\rho}\right].
\end{equation}
Alternatively if we consider a channel $\mathcal{E}^{\mathrm{qudit-boson}}$ with input space $\mathcal{H}_{d,m}$ and output space of $m$ bosonic modes $\mathcal{H}_{\infty,m}$ and we similarly define $C_{\mathcal{E}^{\mathrm{qudit-boson}}}((\mathbf{q},\mathbf{p}),\beta) = \text{Tr}\left[\hat{D}(\beta)\mathcal{E}(\hat{D}_{d,m}(\mathbf{q},\mathbf{p}))\right]/d^m$ for $\mathbf{q},\mathbf{p}\in\mathbb{F}^{m}_{d}$ and $\beta\in\mathbb{C}^m$, for any arbitrary input state $\hat{\rho}\in D(\mathcal{H}_{d,m})$ we have
\begin{equation}
    \mathcal{E}(\hat{\rho}) = \sum_{\mathbf{q},\mathbf{p}\in\mathbb{F}^{m}_{d}}\int d^{2m}\beta \hat{D}^\dagger(\beta)C_{\mathcal{E}^{\mathrm{qudit-boson}}}((\mathbf{q},\mathbf{p}),\beta)\text{Tr}\left[\hat{D}_{d,m}(\mathbf{q},\mathbf{p})\hat{\rho}\right].
\end{equation}
This kind of expansion for arbitrary input states is made possible since the qudit phase space displacement operators offer an orthonormal operator basis to represent an arbitrary input state. While this still holds true using the bosonic characteristic function for the case of bosonic Hilbert spaces, we must note that the object we are capable of learning does not actually take form of $\text{Tr}\left[\hat{D}(\beta)\mathcal{E}(\hat{D}(\alpha))\right]$ on account of this function being ill defined.

We now consider a channel $\mathcal{E}^{\mathrm{boson}}$ with input and output space being $m$ bosonic modes $\mathcal{H}_{\infty,m}$. Recall that the channel learning defined over TMSV input states learns the function
\begin{equation}
    C^{\mathrm{TMSV},r}_{\mathcal{E}^{\mathrm{boson}}}(\alpha,\beta) = e^{-\frac{|\alpha|^2}{2\cosh(2r)}}\text{Tr}\left[\hat{D}(\beta)\mathcal{E}\left(\hat{D}^\dagger(\alpha\tanh(2r))\hat{\rho}_{\mathrm{th},\alpha}\right)\right], \quad \hat{\rho}_{\mathrm{th},\alpha} = \hat{D}(\alpha\tanh(2r)/2)\frac{(\tanh^2(r))^{\hat{n}}}{\cosh^{2m}(r)}\hat{D}^\dagger(\alpha\tanh(2r)/2).
\end{equation}
The set of operators $e^{-\frac{|\alpha|^2}{2\cosh(2r)}}\hat{D}^\dagger(\alpha\tanh(2r))\hat{\rho}_{\mathrm{th},\alpha}$ can be used to represent certain $m$-mode states by noting that
\begin{equation}
\begin{aligned}
     \hat{\rho} &= \int d^{2m}\gamma f_{\hat{\rho}}(\gamma)e^{-\frac{|\gamma|^2}{2\cosh(2r)}}\hat{D}^\dagger(\gamma\tanh(2r))\hat{\rho}_{\mathrm{th},\gamma},\\
     f_{\hat{\rho}}(\gamma) &= e^{\frac{|\gamma|^2}{2\cosh(2r)}}\left(\frac{\sinh^2(2r)}{2\pi\cosh(2r)}\right)^m\int d^{2m}\alpha W_{\hat{\rho}}(\alpha)e^{i\Omega({{\gamma}\tanh(2r),\alpha})}e^{\frac{2|{\alpha}|^2}{\cosh(2r)}}.
\end{aligned}
\end{equation}
where we define $W_{\hat{\rho}}(\alpha) = \text{Tr}\left[e^{i\pi\hat n}\hat{D}(\alpha)\hat{\rho}\hat{D}^\dagger(\alpha)\right]\frac{2^m}{\pi^m}$ which is the Wigner function of the state. For the expansion $f_{\hat{\rho}}(\gamma)$ to be valid, we require $W_{\hat{\rho}}(\alpha)$ to decay significantly faster than $e^{-\frac{2|{\alpha}|^2}{\cosh(2r)}}$ which naturally places a limitation on the states $\hat{\rho}$ that can be represented in this form. Assuming that the function $f_{\hat{\rho}}(\gamma)$ is well defined, we can write
\begin{equation}\label{eq:char_by_CTMSV}
    \mathcal{E}^{\mathrm{boson}}(\hat{\rho}) = \frac{1}{\pi^m}\int d^{2m}\beta d^{2m}\alpha \hat{D}^\dagger(\beta)C_{\mathcal{E}^{\mathrm{boson}}}^{\mathrm{TMSV},r}(\alpha,\beta)f_{\hat{\rho}}(\alpha),
\end{equation}
which notably can only be written down assuming the Wigner function of the state $\hat{\rho}$ decays fast enough.

While the transfer matrix description may be complete, our efficient protocols would still fail at obtaining a tomographic description of the channel in terms of diamond norm accuracy. The main reason for this is that we assume each learning step only to be allowed $O(1/\epsilon)$ copies of the channel where $\epsilon$ is the additive accuracy for the estimation of the transfer function $C_{\mathcal{E}}$. Considering the basic case of the channel $\mathcal{E}^{\mathrm{qudit}}$, if we wish to have a diamond norm accuracy of $\tilde{\epsilon}$ by only learning the function $C_{\mathcal{E}^{\mathrm{qudit}}}$, it would be sufficient to require our queries to be accurate to $\epsilon = \Theta(\tilde{\epsilon}d_{\mathrm{in}}^{-2}d_{\mathrm{out}}^{-1})$. The efficient schemes we have would clearly still have a much higher sample complexity than the optimal schemes highlighted in \cite{mele2026optimallearningquantumchannels}. We explain this by the fact that our efficient schemes rely on using as little memory as possible which is a hindrance in the regime of exponentially small accuracy requirements. Further, these efficient schemes make use of many more than $O(1/\epsilon)$ copies in their learning, since they involve parallel use of $O(d_{\mathrm{in}}d_{\mathrm{out}}k/\epsilon^2)$ copies ($k$ is the Kraus rank of the channel) that are input to the random purification channel. This basic calculation highlights that our channel learning goal will ultimately not be well suited to channel tomography but is suited for cases where learning these transfer functions offers useful information by itself. 

\subsubsection{A no-go result for learning a bosonic channel transfer function}\label{app:bosonicnogo}
Much like Pauli transfer matrices \cite{10.1145/3670418} that are used to characterize finite-dimensional CPTP maps, we can formulate a continuous-variable version of this by allowing the use of tempered distributions. We note that for any channel $\mathcal{E}\in \CPTP(\mathcal{H}_{\infty,m},\mathcal{H}_{\infty,m})$ we can write down
\begin{equation}
    \mathcal{E}(\hat{D}^\dagger({\alpha})) = \int d^{2m}{\beta} \Lambda_{\mathcal{E}}({\alpha},{\beta})\hat{D}^\dagger({\beta}),
\end{equation}
where effectively $\Lambda_{\mathcal{E}}({\alpha},{\beta})$ is the continuous version of the Pauli transfer matrix. This can be directly obtained from the channel as
\begin{equation}
    \Lambda_{\mathcal{E}}({\alpha},{\beta}) = \frac{1}{\pi^m} \text{Tr}(\hat{D}({\beta})\mathcal{E}(\hat{D}^\dagger({\alpha}))).
\end{equation}
Alternatively, bosonic channels can also be represented as follows \cite{PRXQuantum.5.010331}
\begin{equation}
    \mathcal{E}(\cdot) = \int d^{2m}\alpha_1 d^{2m}\alpha_2K_{\mathcal{E}}(\alpha_1,\alpha_2) \hat{D}(\alpha_1)(\cdot)\hat{D}^\dagger(\alpha_2).
\end{equation}
It can be checked that one can recover the transfer function from the above representation by the following relation
\begin{equation}
\begin{aligned}
    \Lambda_{\mathcal{E}}(\alpha,\beta) = \frac{1}{\pi^m}\int d^{2m}\alpha_1d^{2m}\alpha_2 K_{\mathcal{E}}(\alpha_1,\alpha_2)\text{Tr}\left[\hat{D}(\beta)\hat{D}(\alpha_1)\hat{D}^\dagger(\alpha)\hat{D}^\dagger(\alpha_2)\right].
\end{aligned}
\end{equation}
Note that
\begin{equation}\label{eq:4disptrace}
    \text{Tr}(\hat{D}({\beta})\hat{D}({\alpha}_1)\hat{D}^\dagger({\alpha})\hat{D}^\dagger({\alpha_2})) = \pi^m\delta^{(2m)}({\beta}+{\alpha}_1 -{\alpha} - {\alpha}_2)e^{\frac{1}{2}(i\Omega({\alpha}_1,{\beta}) + i\Omega({\alpha}_2,{\alpha}) + i\Omega({\beta}+{\alpha}_1,{\alpha}+{\alpha}_2))},
\end{equation}
following which we do a variable change to ${\xi} = \frac{{\alpha}_1+{\alpha}_2}{2}$ and ${\xi}' = {\alpha}_1-{\alpha}_2$. The Dirac delta chooses ${\xi}' = {\alpha}-{\beta}$ in the above integral giving
\begin{equation}
    \begin{aligned}
        \Lambda_{\mathcal{E}}(\alpha,\beta) &= \int d^{2m}{\xi} K_{\mathcal{E}}\left({\xi} + \frac{{\alpha}-{\beta}}{2},{\xi}+\frac{{\beta}-{\alpha}}{2}\right) e^{i\Omega({\xi},\frac{{\alpha}+{\beta}}{2})}.
    \end{aligned}
\end{equation}
Unfortunately, both $K_{\mathcal{E}}(\alpha,\alpha')$ and $\Lambda_{\mathcal{E}}(\alpha,\beta)$ are numerically ill defined since displacements are unbounded operators and are not trace class resulting in both functions having tempered distribution like behaviors. Note that the above definition is motivated by finding the coefficient proportional to identity in the expression $\hat{D}({\beta})\mathcal{E}(\hat{D}^\dagger({\alpha}))$ which is not trace class for infinite-dimensional systems. Taking $\hat{n}$ to be the total photon number operator over $m$ modes, we define the operator
\begin{align}
\hat B_{\lambda}(\alpha)=2^m\hat D^\dagger(\alpha)e^{(i\pi-\lambda)\hat n}\hat D(\alpha),
\end{align}
which admits a simple interpretation in the displaced frame centered at \(-\alpha\). Defining
\begin{align}
\hat n_{-\alpha}
:=\hat D^\dagger(\alpha)\hat n\hat D(\alpha)
=(\hat a^\dagger+\alpha^*)(\hat a+\alpha),
\end{align}
one obtains
\begin{align}
\hat B_{\lambda}(\alpha)
=2^me^{(i\pi-\lambda)\hat n_{-\alpha}}
=2^m(-1)^{\hat n_{\alpha} }e^{-\lambda \hat n_{-\alpha} }.
\end{align}
Hence, $\hat B_{\lambda}(\alpha)$ can be viewed as the parity operator about the phase-space point $-\alpha$, regularized by the damping factor $e^{-\lambda\hat n_{-\alpha} }$. In the limit $\lambda\to0^+$,
\begin{align}
\hat B(\alpha)
=\hat D^\dagger(\alpha)(-1)^{\hat n}\hat D(\alpha)
=(-1)^{\hat n_{-\alpha} },
\end{align}
which is precisely the displaced parity operator centered at $-\alpha$. Geometrically, it implements the point-reflection transformation $\gamma \mapsto -2\alpha-\gamma$ in phase space, or equivalently, in local coordinates $\gamma=-\alpha+\beta$,
$\beta\mapsto -\beta$. Note that this convergence does not hold true in the operator norm sense since we are taking $\lim_{\lambda\to0^+}e^{-\lambda\hat
n} = \mathbb{I}$. The Wigner function can be written as
\begin{align}
W_{\hat\rho}(\alpha)
=
\frac{1}{\pi^m}\mathrm{Tr}\!\left[\hat\rho\,\hat B(\alpha)\right]
=
\left(\frac{2}{\pi}\right)^m
\mathrm{Tr}\!\left[\hat\rho\,\hat\Pi_\alpha\right],\quad \hat\Pi_\alpha := 2^{-m}\hat{B}(\alpha),
\end{align}
showing that $W_{\hat\rho}(\alpha)$ is the expectation value of the local parity about $\alpha$. Equivalently, it measures the imbalance between the locally even and locally odd components of $\hat{\rho}$ under point reflection about $\alpha$.

We introduced a family of operators we refer to as $\hat{B}({\alpha})$ and their regularized version of $\hat{B}_{\lambda}({\alpha})$. Since $\hat{B}_{\lambda}({\alpha})$ is a trace-class operator, we can write down its characteristic function as
\begin{equation}
    \hat{B}_{\lambda}({\alpha}) = \frac{1}{\pi^m}\int d^{2m}{\beta}e^{i\Omega({\beta},{\alpha})}e^{-|{\beta}|^2\tanh(\lambda/2)/2}\hat{D}^\dagger({\beta}),
\end{equation}
which in the limit of $\lambda\to0^+$ can be thought of as a Fourier transform to displacement operators as
\begin{equation}
    \hat{D}({\beta}) = \frac{1}{\pi^m}\int d^{2m}{\alpha}\hat{B}({\alpha})e^{i\Omega({\beta},{\alpha})}.
\end{equation}
Under this construction, we obtain the following relations.
\begin{equation}
\begin{aligned}
    \hat{B}({\alpha}_1)\hat{B}({\alpha}_2) &= \frac{1}{\pi^{2m}}\int d^{2m}{\beta}_1 d^{2m}{\beta}_2 \hat{D}({\beta}_1)\hat{D}({\beta}_2)e^{i\Omega({\alpha}_1,{\beta}_1)+i\Omega({\alpha}_2,{\beta}_2)}\\
    &= \frac{1}{\pi^{2m}}\int d^{2m}{\beta}_+ d^{2m}{\beta}_- \hat{D}({\beta}_+) e^{i\Omega(\frac{{\beta}_+}{2} + {{\alpha}}_1-{{\alpha}}_2,{\beta}_-) + \frac{1}{2}i\Omega({\alpha}_1+{\alpha}_2,{\beta}_+)}\\
    &= 2^{2m}\hat{D}(2({\alpha}_2-{\alpha}_1))e^{2i\Omega({\alpha}_2,{\alpha}_1)},
\end{aligned}
\end{equation}
\begin{equation}
    \text{Tr}(\hat{B}({\alpha}_1)\hat{B}({\alpha}_2)) = \pi^m\delta^{(2m)}({\alpha}_2 -{\alpha}_1).
\end{equation}
Here we have used the substitution $\beta_+ = \frac{\beta_1+\beta_2}{2}$ and $\beta_- = \beta_1 - \beta_2$, and transformed the integral appropriately. Using the above equations, we note the following properties
\begin{equation}
    \hat{D}({\beta})\hat{B}({\alpha})\hat{D}^\dagger({\beta}) = \hat{B}({\alpha}-{\beta}),\quad\hat{B}({\alpha})\hat{D}({\beta}) = e^{i\Omega({\beta},{\alpha})}\hat{B}({\alpha}+\frac{{\beta}}{2}),\quad \frac{\hat{B}({\alpha}')}{2^m}\hat{B}({\alpha})\frac{\hat{B}({\alpha}')}{2^m} = \hat{B}(2{\alpha}' - {\alpha}).
\end{equation}
We observe that $\hat{B}({\alpha})^2 = 2^{2m}\mathbb{I}$ for all ${\alpha}$. Along with the fact that these are Hermitian operators, this means that the following operators are idempotent projectors belonging to a 2-element POVM
\begin{equation}
    \hat{P}_\pm =\frac{\mathbb{I}\pm 2^{-m}\hat{B}({\alpha})}{2},
\end{equation}
and the operator $\hat{B}({\alpha})/2^m$ is a unitary and Hermitian operator. The infinite-energy TMSV state can be expressed in this basis of operators as
\begin{equation}
\begin{aligned}
    |\Psi\rangle\langle\Psi| &= \frac{1}{\pi^m}\int d^{2m}{\beta}\hat{D}
    ({\beta})\otimes\hat{D}({\beta}^{*}) \\
    &= \frac{1}{\pi^{3m}}\int d^{2m}{\alpha}d^{2m}{\alpha}'d^{2m}{\beta}e^{i\Omega({\beta},{{\alpha}})}e^{i\Omega(({\alpha}')^{*},{\beta})}\hat{B}({\alpha})\otimes\hat{B}({\alpha}')\\
    &=\frac{1}{\pi^m}\int d^{2m}{\alpha} \hat{B}({\alpha})\otimes \hat{B}({\alpha}^{*}),
\end{aligned}
\end{equation}
and we can similarly define a POVM set using this state as
\begin{equation}
\begin{aligned}
    \hat{\Pi}({\zeta}) &= \frac{1}{\pi^m}(2^{-m}\hat{B}({\zeta}/2)\otimes \mathbb{I})|\Psi\rangle\langle\Psi|(2^{-m}\hat{B}^\dagger({\zeta}/2)\otimes\mathbb{I})\\
    &= \frac{1}{\pi^{2m}}\int d^{2m}{\alpha}\hat{B}({\zeta}-{\alpha})\otimes\hat{B}({\alpha}^{*}),
\end{aligned}
\end{equation}
which can be verified to be a valid POVM set since
\begin{equation}
    \int d^{2m}{\zeta}\hat{\Pi}({\zeta}) = \mathbb{I}\otimes\mathbb{I}.
\end{equation}
Also note that the Wigner function of a state represents it in the basis of operators $\hat{B}({\alpha})$ as can be noted by
\begin{equation}
    W_{\hat{\rho}}({\alpha}) = \frac{1}{\pi^{2m}}\int d^{2m}{\beta} e^{i\Omega({\alpha},{\beta})}\chi_{\hat{\rho}}({\beta}) = \frac{\text{Tr}(\hat{\rho}\hat{B}({\alpha}))}{\pi^{m}},
\end{equation}
which using the orthogonality of $\hat{B}$ operators lets us write
\begin{equation}
    \hat{\rho} = \int d^{2m}{\alpha} W_{\hat{\rho}}({\alpha})\hat{B}({\alpha}),
\end{equation}
and as such we can interpret $\hat{B}({\alpha})$ as an operator with a Dirac-delta like distribution in phase space which cannot be understood as a physical state. 

The Hilbert--Schmidt inner product of $\hat{B}_{\lambda}({\alpha})$ with displacements is a well-defined quantity taking the limit $\lambda\to0^+$ which motivates us to define the following quantities
\begin{equation}
\begin{aligned}
\tilde{C}_{\mathcal{E},\lambda}(\alpha,\beta)&= e^{i\Omega({\alpha},{\beta})}\text{Tr}(\hat{D}({\beta})\mathcal{E}(\hat{B}_{\lambda}({\alpha}))),\\
    \tilde{C}_{\mathcal{E}}({\alpha},{\beta}) &= \lim_{\lambda\to0^+} \tilde{C}_{\mathcal{E},\lambda}(\alpha,\beta)\\    &= \int d^{2m}{\alpha}'d^{2m}{\alpha}_1 d^{2m}{\alpha}_2\text{Tr}(\hat{D}({\beta})\hat{D}({\alpha}_1)\hat{D}^\dagger({\alpha}')\hat{D}^\dagger({\alpha}_2)) K_{\mathcal{E}}({\alpha}_1,{\alpha}_2)e^{i\Omega({\alpha}'-{\beta},{\alpha})}.
\end{aligned}
\end{equation}
We now use Eq.~\eqref{eq:4disptrace} following which we do a variable change to ${\xi} = \frac{{\alpha}_1+{\alpha}_2}{2}$ and ${\xi}' = {\alpha}_1-{\alpha}_2$. The Dirac delta chooses ${\xi}' = {\alpha}'-{\beta}$ in the above integral giving
\begin{equation}
    \begin{aligned}
        \tilde{C}_{\mathcal{E}}({\alpha},{\beta}) &= \int d^{2m}{\alpha}' d^{2m}{\xi} K_{\mathcal{E}}\left({\xi} + \frac{{\alpha}'-{\beta}}{2},{\xi}+\frac{{\beta}-{\alpha}'}{2}\right) e^{i\Omega({\xi},\frac{{\alpha}'+{\beta}}{2})}e^{i\Omega({\alpha}'-{\beta},{\alpha})}\\
        &= \int d^{2m}{\alpha}''d^{2m}{\xi} K_{\mathcal{E}}\left({\xi}+\frac{{\alpha}''}{2},{\xi}-\frac{{\alpha}''}{2}\right)e^{\frac{1}{2}i\Omega({\xi},{\alpha}'')} e^{i\Omega(\xi,{\beta})}e^{i\Omega({\alpha}'',{\alpha})},
    \end{aligned}
\end{equation}
where we do a change of variables in the second step ${\alpha}''={\alpha}'-{\beta}$ to obtain the final expression. Note that the above expression is fully invertible to obtain the function $K_{\mathcal{E}}$ by the transformation
\begin{equation}
    K_{\mathcal{E}}\left({\xi}+\frac{{\alpha}''}{2},{\xi}-\frac{{\alpha}''}{2}\right) = \frac{e^{\frac{1}{2}i\Omega({\alpha}'',{\xi})}}{\pi^{4m}}\int d^{2m}{\alpha}d^{2m}{\beta} \tilde{C}_{\mathcal{E}}({\alpha},{\beta})e^{i\Omega({\beta},{\xi})}e^{i\Omega({\alpha},{\alpha}'')}.
\end{equation}
We observe that the trace-preserving condition is equivalent to $\tilde{C}_{\mathcal{E}}({\alpha},0) = \text{Tr}(\hat{B}({\alpha})) = 1$ for all ${\alpha}$. The hermiticity preserving condition of $\mathcal{E}$ enforces that $(\tilde{C}_{\mathcal{E}}({\alpha},{\beta}))^{*} = \tilde{C}_{\mathcal{E}}({\alpha},-{\beta})$.

Note that we have a diagonal representation for $\hat{B}({\alpha})$ from the fact that it is a displaced parity operation using an orthonormal basis. We represent it as follows
\begin{equation}
    \hat{B}({\alpha})=2^m \sum_{\mathbf{l}\in(\mathbb{Z}^{\geq})^m}(-1)^{\sum_{i}l_i}|\mathbf{l},-{\alpha}\rangle\langle\mathbf{l},-{\alpha}|, \quad |\mathbf{l},{\alpha}\rangle = \hat{D}({\alpha})|\mathbf{l}\rangle,
\end{equation}
where the state $|\mathbf{l}\rangle$ is the $m$-mode bosonic state with excitations $l_i$ in the $i$th mode which is the $i$th entry in the vector $\mathbf{l}$. Note that
\begin{equation}
    \langle\mathbf{l},-{\alpha}|\hat{D}({\beta})|\mathbf{l},-{\alpha}\rangle = e^{i\Omega({\beta},{\alpha}) - |{\beta}|^2/2}\prod_{i=1}^{m}L^{\mathrm{Lag}}_{l_i}(|\beta_i|^2),
\end{equation}
where $L^{\mathrm{Lag}}_n$ is the $n$th Laguerre polynomial. We now introduce the definition of a well-behaved channel, which is essentially one for which the quantity $\tilde{C}_{\mathcal{E}}({\alpha},{\beta})$ is well defined.

\begin{definition}[Well-behaved channel]
    We define a well-behaved channel to be a CPTP map $\mathcal{E}$ acting on $m$ bosonic modes such that for all ${\alpha},{\beta} \in \mathbb{C}^m$, the function $\tilde{C}_{\mathcal{E}}({\alpha},{\beta}) =\lim_{\lambda\to0^+}e^{i\Omega({\alpha},{\beta})}\text{Tr}(\hat{D}({\beta})\mathcal{E}(\hat{B}_{\lambda}({\alpha})))$ satisfies $|\tilde{C}_{\mathcal{E}}({\alpha},{\beta})| < \infty$. We denote the set of all $m$-mode well-behaved channels by the set $\mathcal{W}_m$.
\end{definition}

We first begin by a counterexample to the above set of channels. Consider the following channel $\mathcal{E}_{\mathrm{PMD}}$ which measures the parity of the state and prepares a state conditional to this as follows.
\begin{equation}
    \mathcal{E}_{\mathrm{PMD}}(\cdot) = (|0\rangle\langle0|)^{\otimes m}\text{Tr}((\cdot)\hat{P}_+) + (|1\rangle\langle1|)^{\otimes m}\text{Tr}((\cdot)\hat{P}_-),\quad \hat{P}_{\pm} = \frac{\mathbb{I} \pm 2^{-m}\hat{B}(0)}{2}.
\end{equation}
The above channel is clearly CPTP since it is simply a measure-and-prepare channel. Defining ${\beta}$ as some $m$-mode displacement satisfying for at least one mode $i$, $|\beta_i|^2 = 1$ (note that $L_1(1)=0$) we then get
\begin{equation}
\begin{aligned}
    \tilde{C}_{\mathcal{E}_{\mathrm{PMD}}}({0},{\beta}) &= \text{Tr}(\hat{D}({\beta})\mathcal{E}(\hat{B}({0})))\\
    &= \text{Tr}(\hat{D}({\beta})|0\rangle\langle0|^{\otimes m})\text{Tr}\left(\frac{\hat{B}(0)+2^m\mathbb{I}}{2}\right) + \text{Tr}(\hat{D}({\beta})|1\rangle\langle1|^{\otimes m})\text{Tr}\left(\frac{\hat{B}(0)-2^m\mathbb{I}}{2}\right)\\
    &= \frac{1}{2}e^{-|{\beta}|^2/2}(1+2^{m}\delta^{(2m)}(0)),
\end{aligned}
\end{equation}
where we use the fact that $\text{Tr}(\hat{D}({\beta})|1\rangle\langle1|^{\otimes m}) = 0$.

\begin{lemma}[Sufficient conditions for well-behaved channel]
    A CPTP map $\mathcal{E}$ is guaranteed to be well-behaved if it satisfies any one of the following conditions
    \begin{itemize}
        \item It can be represented as $\mathcal{E}(\cdot) = \hat{U}(\cdot)\hat{U}^\dagger$ where $\hat{U}$ is a Gaussian unitary.
        \item It is an erasure channel represented by $\mathcal{E}(\cdot) = (1-p)(\cdot) + p\text{Tr}(\cdot)\hat{\rho}$ where $0\leq p\leq1$ and $\hat{\rho}$ is a valid density matrix.
        \item It is a convex mixture of channels contained in $\mathcal{W}_m$.
    \end{itemize}
\end{lemma}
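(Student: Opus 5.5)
The plan is to verify each of the three conditions separately by computing, or at least bounding, $\tilde{C}_{\mathcal{E}}(\alpha,\beta)=\lim_{\lambda\to0^+}e^{i\Omega(\alpha,\beta)}\text{Tr}(\hat{D}(\beta)\mathcal{E}(\hat{B}_\lambda(\alpha)))$ directly. The main tool is the characteristic-function representation $\hat{B}_\lambda(\alpha)=\pi^{-m}\int d^{2m}\beta' e^{i\Omega(\beta',\alpha)}e^{-|\beta'|^2\tanh(\lambda/2)/2}\hat{D}^\dagger(\beta')$ derived above. Combined with linearity of $\mathcal{E}$, it reduces each case to the action of $\mathcal{E}$ on displacement operators. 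Throughout, "finite" means the limit exists pointwise as a complex number, and this is all the definition of $\mathcal{W}_m$ requires.

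\textbf{Gaussian unitaries.} Write $\hat{U}^\dagger\hat{D}(\beta)\hat{U}=e^{i\phi(\beta)}\hat{D}(S\beta+d)$ for a symplectic $S$ and a displacement $d$; for a pure displacement this collapses to a phase. Cyclicity of the trace then gives
\[
\text{Tr}(\hat{D}(\beta)\hat{U}\hat{B}_\lambda(\alpha)\hat{U}^\dagger)=e^{i\phi}\text{Tr}(\hat{D}(S\beta+d)\hat{B}_\lambda(\alpha)).
\]
Since $\hat{B}_\lambda(\alpha)$ is trace class, the right-hand side is $2^m$ times the characteristic function of a damped displaced-parity operator. This is an explicit Gaussian in $S\beta$ times a phase, and it tends to $e^{i\Omega(\cdot,\alpha)}$ up to constants. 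In particular its modulus stays bounded by $1$ as $\lambda\to0^+$, using $\text{Tr}(\hat{D}(\gamma)\hat{B}(\alpha))=e^{i\Omega(\gamma,\alpha)}$, which follows from the Fourier relation and $\text{Tr}(\hat B(\alpha_1)\hat B(\alpha_2))=\pi^m\delta$. So the limit is a finite phase.

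\textbf{Erasure channel.} By linearity, $\text{Tr}(\hat{D}(\beta)\mathcal{E}(\hat{B}_\lambda))=(1-p)\text{Tr}(\hat{D}(\beta)\hat{B}_\lambda(\alpha))+p\,\text{Tr}(\hat{B}_\lambda(\alpha))\chi_{\hat\rho}(\beta)$. The first term is the identity-channel case above. For the second, $\text{Tr}(\hat{B}_\lambda(\alpha))=2^m\text{Tr}(e^{(i\pi-\lambda)\hat n})=2^m(1+e^{-\lambda})^{-m}\to1$, which is consistent with the trace-preserving identity $\tilde{C}(\alpha,0)=1$. Since $|\chi_{\hat\rho}|\le1$, both terms are finite.

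\textbf{Convex mixtures.} For $\mathcal{E}=\sum_j p_j\mathcal{E}_j$ with each $\mathcal{E}_j\in\mathcal{W}_m$, linearity gives $\tilde C_{\mathcal E}=\sum_j p_j\tilde C_{\mathcal E_j}$ as long as the limit and the sum can be interchanged. For finite mixtures this is immediate. For countable or continuous mixtures it is the main obstacle, since one needs a uniform (dominated) bound on $|\tilde{C}_{\mathcal{E}_j,\lambda}|$ in $j$ and $\lambda$. I would first state the lemma for finite convex combinations. I would then try to extend it by assuming $\sup_j\sup_{\lambda}|\tilde C_{\mathcal E_j,\lambda}(\alpha,\beta)|<\infty$ and applying dominated convergence. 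Such a uniform bound is automatic for the two families above, where the bound is $1$, up to $2^m(1+e^{-\lambda})^{-m}\le1$.
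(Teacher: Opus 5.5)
Your proposal is correct and follows the paper's proof case by case. A Gaussian unitary sends $\hat{D}(\beta)$ to a phase times another displacement, so the limit is a unimodular phase; the erasure case follows by linearity from $\text{Tr}(\hat{B}_\lambda(\alpha))\to 1$ and $|\chi_{\hat\rho}|\le 1$; and mixtures follow by linearity. Your remark that non-finite mixtures need a uniform bound before limit and sum can be exchanged is a fair sharpening of the paper's ``trivial'' claim.

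Two harmless slips. First, the displacement part of $\hat U$ only contributes a phase, so the new argument is linear in $\beta$ with no $+d$. Second, $\bigl(2/(1+e^{-\lambda})\bigr)^m\ge 1$ rather than $\le 1$: it decreases to $1$ as $\lambda\to0^+$ and is at most $2^m$, so your pre-limit bounds should read $2^m$, which still suffices for the domination argument.
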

\begin{proof}
    For proving the first condition, it is sufficient to show that Gaussian unitaries are well-behaved. To see this we note that for a Gaussian unitary $U_{\mathcal{G}}$, we can always find a ${\beta}'$ and $\phi$ for a given ${\beta}$ such that 
    \begin{equation}
        U_{\mathcal{G}}^\dagger\hat{D}({\beta})U_{\mathcal{G}} = e^{i\phi}\hat{D}({\beta}'),
    \end{equation}
    holds true. This is because Gaussian unitaries transform displacement operators to other displacement operators. As such the quantity
    \begin{equation}
        \lim_{\lambda\to 0^{+}}\text{Tr}(\hat{D}({\beta})(U_{\mathcal{G}}\hat{B}_{\lambda}({\alpha})U_{\mathcal{G}}^\dagger)) = e^{i\Omega({\beta}',{\alpha})+i\phi},
    \end{equation}
    has a well-defined limit and so this unitary is well behaved. It is trivial to see then that a convex mixture of these unitaries will also result in a well-behaved channel.

    For the second condition, we note that $\lim_{\lambda\to0^+}\text{Tr}(\hat{B}_{\lambda}({\alpha})) = 1$ and so the function $\tilde{C}_{\mathcal{E}}({\alpha},{\beta}) = 1-p + pe^{i\Omega({\alpha},{\beta})}\chi_{\hat{\rho}}({\beta})$, which is clearly a complex number contained in the unit disk.
\end{proof}

\begin{theorem}[Hardness even if allowed the complex-conjugate channel and various joint measurements]\label{thm:bosonnogo}
    Let $\mathcal{E}$ be a well-behaved channel over $m\geq 8$ modes which is guaranteed to output a mixture of two known Gaussian states conditioned on a Gaussian measurement, and let $\lambda>0$ be the regularisation parameter of $\hat{B}_\lambda$. If there exists a learning scheme that uses $N$ copies of the channel and produces an estimate $\tilde{\tilde{C}}_{\mathcal{E},\lambda}(0,\beta)$ for a fixed real-valued $\beta$ with $\beta\cdot\beta = 1$, with success probability $\geq 2/3$, such that $\left||\tilde{\tilde{C}}_{\mathcal{E},\lambda}(0,\beta)|-|\tilde{C}_{\mathcal{E},\lambda}(0,\beta)|\right|\leq \epsilon$ (with $0<\epsilon\leq0.1$), then this requires that
    $$N = \Omega\left(\max\left(\frac{1}{10\epsilon},\frac{\pi m}{e}\left(\frac{0.99}{e\tanh
    (\lambda/2)}\right)^{m/2}\right)\right).$$
    This lower bound holds even while allowing infinite-energy input states and arbitrary kinds of joint measurements.
\end{theorem}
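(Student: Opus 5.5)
The bound is a maximum of two terms, and I will prove each separately, both by reduction to a two-point hypothesis test.

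\emph{The $1/(10\epsilon)$ term.} This is a standard mean-estimation bound. Take the family of channels that, on every input, output one of two fixed Gaussian states with probabilities $p$ and $1-p$, where $p$ is conditioned on a trivial or very weak Gaussian measurement. Choose the two states so that the resulting values of $|\tilde{C}_{\mathcal{E},\lambda}(0,\beta)|$ differ by more than $2\epsilon$ between $p=1/2$ and $p=1/2+O(\epsilon)$. Distinguishing these two hypotheses is then a biased-coin test. Equivalently, Lemma~\ref{lem:master} applies with $\hat{K}_{1}=\mathbb{I}$, giving $\Delta=O(\epsilon)$, so that $N=\Omega(1/\epsilon)$ even with joint measurements and conjugate access. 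Since the scaling with $N$ is linear, I expect to need only the crude statement that distinguishing the two measures on the outputs requires $\Omega(1/\epsilon)$ samples.

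\emph{The exponential term.} This is the main part. I will mimic the $\mathcal{E}_{\mathrm{PMD}}$ counterexample, but use a random Gaussian measurement in place of parity, so that the channel is well behaved. Let $\gamma\in\mathbb{C}^m$ be drawn from a Gaussian with spread $\sigma$. The channel performs a two-outcome POVM $\hat{\Pi}_{\pm,\gamma}$ that is nearly a projector onto a small phase-space cell around $\gamma$, for instance $\hat{\Pi}_{+,\gamma}=q\,\hat{D}(\gamma)|0\rangle\langle0|\hat{D}^\dagger(\gamma)$ with a suitable $q$. Conditioned on the outcome, it prepares $|0\rangle\langle0|^{\otimes m}$ or a second Gaussian state. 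The null channel is the replacement channel that prepares the average of these two outputs.

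The key quantity is $\tilde{C}_{\mathcal{E},\lambda}(0,\beta)$. Because $\hat{B}_\lambda(0)$ is a regularised parity with trace weight of order $(\tanh(\lambda/2))^{-m}$, it amplifies the overlap $\mathrm{Tr}[\hat{\Pi}_{+,\gamma}\hat{B}_\lambda(0)]$ by roughly $(\coth(\lambda/2))^{m}$. This lets me scale the POVM weight $q$ down by that factor while still keeping an $\Omega(1)$ gap in $|\tilde{C}_{\mathcal{E},\lambda}(0,\beta)|$ once $\gamma$ is revealed. The gap should be computed via the Laguerre/Gaussian formulas for $\mathrm{Tr}[\hat{D}(\beta)\cdot]$, using $\beta\cdot\beta=1$. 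The revealed-parameter setting of Problem~\ref{prob:manyrevel} then applies.

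For the upper bound on distinguishability, I apply Lemma~\ref{lem:master}, but I must not restrict to $c$ copies: joint measurements, infinite-energy inputs and conjugate copies are all allowed. Because the POVM element has weight $q\sim(\tanh(\lambda/2))^{m}$, each channel use, whatever the input state, changes the output distribution only with probability at most $\|\hat{\Pi}_{+,\gamma}-\mathbb{E}\hat{\Pi}_{+,\gamma}\|_{\mathrm{op}}\lesssim q$. I would therefore bound the total variation between the $N$-fold hypotheses by $N$ times the per-use operator norm. This is a union-bound argument that is immune to entanglement or to using $\mathcal{E}^*$, since $\mathcal{E}^*$ has the same structure with $\gamma\to-\gamma^*$. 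It gives $N=\Omega(q^{-1})$.

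The factor $\frac{\pi m}{e}(0.99/e)^{m/2}$ should come from the Gaussian averaging and from restricting $\gamma$ to an annulus, as in Lemma~\ref{lem:gaussprob}. This requires $m\ge 8$, and the annulus event keeps the success probability above $1/2$. The main obstacle is tuning $q$, $\sigma$ and the second output state so that two conditions hold together. First, the revealed gap in $|\tilde{C}_{\mathcal{E},\lambda}(0,\beta)|$ must stay above $2\epsilon$ uniformly over the annulus. Second, the per-use operator-norm bound must be exactly $(\tanh(\lambda/2)\,e/0.99)^{m/2}e/(\pi m)$, which requires estimating a Gamma-function tail. I would first try coherent-state projectors and a Stirling estimate for the annulus volume.
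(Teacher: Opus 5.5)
\textbf{There is a genuine gap in the exponential term.} Your treatment of the $1/(10\epsilon)$ term is fine in substance, and your union-bound idea for the second term is the same one the paper uses. Two minor points on the first term. The clean argument is $\|\rho_p^{\otimes N}-\rho_{p'}^{\otimes N}\|_1\le N\|\rho_p-\rho_{p'}\|_1=O(N\epsilon)$ for a replacement channel. Lemma~\ref{lem:master} is not the right tool there, because with unrestricted joint measurements there is no fixed $c$.

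The problem is the mechanism you propose for the exponential term. The claimed $(\coth(\lambda/2))^m$ amplification does not exist. Here $\hat B_\lambda(0)=2^m(-e^{-\lambda})^{\hat n}$, so $\|\hat B_\lambda(0)\|_{\mathrm{op}}=2^m$ and $\mathrm{Tr}\,\hat B_\lambda(0)=(2/(1+e^{-\lambda}))^m\to1$. Only its trace norm, $(2/(1-e^{-\lambda}))^m$, grows like $\coth^m(\lambda/2)$. That growth comes from sign-alternating Fock components, which a low-weight positive POVM element cannot harvest.

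Concretely, $\mathrm{Tr}[q\hat D(\gamma)|0\rangle\langle0|\hat D^\dagger(\gamma)\hat B_\lambda(0)]=q\,2^m e^{-(1+e^{-\lambda})|\gamma|^2}\le q\,2^m$. Keeping the revealed gap above $2\epsilon$ therefore forces $q\gtrsim\epsilon 2^{-m}$. The resulting bound $N=\Omega(1/q)$ is then at most of order $2^m/\epsilon$ and independent of $\lambda$, whereas the stated bound diverges as $\lambda\to0$. More generally, any POVM assembled from coherent-state projectors only sees the Husimi-smoothed parity $2^m e^{-(1+e^{-\lambda})|\gamma|^2}$. That function has no structure on the scale $\sqrt{\tanh(\lambda/2)}$ where the $\lambda$-dependence lives.

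The missing idea has two parts. The POVM must resolve sub-vacuum phase-space structure, and its small weight must come from a random orientation rather than a small prefactor.

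\emph{What the paper measures.} The paper uses that $\langle\mathbf{x}|\hat B_\lambda(0)|\mathbf{x}\rangle$ is a Gaussian of variance $\tanh(\lambda/2)/2$ per coordinate with $O(1)$ mass. It takes the infinite-rank projector $\hat\Pi_1^\phi=\int_{\|\mathbf{x}\|_2\le\delta}d^m\mathbf{x}\,e^{-i\phi\cdot\hat{\pmb n}}|\mathbf{x}\rangle\langle\mathbf{x}|e^{i\phi\cdot\hat{\pmb n}}$ with $\delta^2=m\tanh(\lambda/2)/1.98$. Since $\hat B_\lambda(0)$ commutes with the rotation, $\mathrm{Tr}[\hat\Pi_1^\phi\hat B_\lambda(0)]=\tilde p>0.546$ for every $\phi$. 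This is where Lemma~\ref{lem:gaussprob} enters, not through an annulus for a random displacement. The two outputs are vacuum and a squeezed vacuum with $r=O(\epsilon)$.

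\emph{Where hardness comes from.} For fixed $\phi$ this projector has norm $1$, since a position-squeezed input triggers it. Hardness appears only after averaging over uniform $\phi\in[-\pi,\pi]^m$. Then $\mathbb{E}_\phi\hat\Pi_1^\phi$ is Fock-diagonal with entries at most $\pi^{-m/2}\mathrm{vol}(B_\delta)=\delta^m/\Gamma(m/2+1)$, by Cram\'er's bound $|\langle l|x\rangle|^2\le\pi^{-1/2}$. Combining this with $1-\mathrm{Tr}[(\hat\Pi_0^{\phi})^{\otimes N}\rho]\le\sum_i\mathrm{Tr}[\hat\Pi_1^\phi\rho_{A_i}]$ and Stirling's estimate for $\Gamma(m/2+1)$ gives the $(0.99/(e\tanh(\lambda/2)))^{m/2}$ factor. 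On the same instance, $\|\hat\sigma_r^{\otimes N}-\hat\sigma_0^{\otimes N}\|_1\le 10N\epsilon$ gives the $1/(10\epsilon)$ factor.
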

\begin{proof}
We define the following two measurement operations, in which $\hat{\pmb{n}} = (\hat{n}_1,\dots,\hat{n}_m)$ is the vector of per-mode photon-number operators and $\delta>0$ is fixed in Eq.~\eqref{eq:nogodelta} below,
\begin{equation}
    \hat{\Pi}^{\phi}_1 = \int_{\|\mathbf{x}\|_{2}\leq \delta} d^m\mathbf{x} e^{-i\phi\cdot \hat{\pmb{n}}}|\mathbf{x}\rangle\langle \mathbf{x}|e^{i\phi\cdot\hat{\pmb{n}}}, \quad \hat{\Pi}^{\phi}_0 = \mathbb{I} - \hat{\Pi}^{\phi}_1.
\end{equation}
Using which we define the following two channels
\begin{equation}
    \mathcal{E}_0(\cdot) =\text{Tr}(\cdot)\hat{\sigma}_0,\quad\mathcal{E}_{\phi,r}(\cdot) = \text{Tr}(\hat{\Pi}^{\phi}_0 (\cdot))\hat{\sigma}_0 +\text{Tr}(\hat{\Pi}^{\phi}_1 (\cdot))\hat{\sigma}_{r},
\end{equation}
\begin{equation}
    \hat{\sigma}_0 = |0\rangle\langle0|^{\otimes m}, \quad \hat{\sigma}_{r} = \left(\hat{S}(r)|0\rangle\langle0|\hat{S}^\dagger(r)\right)^{\otimes m},
\end{equation}
It can be checked that the characteristic function
\begin{equation}
    \chi_{\hat{\sigma}_{r}}(\beta) = \text{Tr}\left[\hat{D}(\beta)\hat{\sigma}_r\right] = \text{Tr}\left[\hat{D}(\Re(\beta)e^{-r} + i\Im(\beta)e^r)\hat{\sigma}_0\right] = \exp(-\frac{1}{2}(\Re(\beta)^2e^{-2r}+\Im(\beta)^2e^{2r})).
\end{equation}
hence the output states of both these channels are guaranteed reflective symmetry. The variable $\phi$ is uniformly random over $[-\pi,\pi]^m$. $A$ has promised to satisfy one of the following hypotheses
\begin{itemize}
    \item $H_0$: $N$ copies of the channel $\mathcal{E}_0$ are prepared
    \item $H_1$: $N$ copies of the channel $\mathcal{E}_{\phi,r}$ are prepared
\end{itemize}
These $N$ copies of the unknown channel are then provided to $B$, who is allowed to use them along with arbitrary inputs and measurements. After $B$ has completed the measurements, $A$ reveals the values of $r$ and $\phi$, following which $B$ has to distinguish between the two hypotheses $H_0$ and $H_1$.

\noindent\textbf{Showing that learning $\tilde{C}_{\mathcal{E},\lambda}(0,\beta)$ succeeds at this hypothesis test}:

Now we show that with $\epsilon$-accurate knowledge of $\tilde{C}_{\mathcal{E}_{\phi,r},\lambda}$ this can be made to succeed with probability greater than $2/3$. We first find the function $\tilde{C}_{\mathcal{E}_{\phi,r},\lambda}$ by noting that the value
\begin{equation}
    \text{Tr}(\hat{\Pi}^{\phi}_1\hat{B}(\alpha)) = \begin{cases}
        1 & \text{if}\quad\|\Re(\alpha e^{i\phi})  \|_{2} \leq \delta \\
        0 & \text{otherwise}
    \end{cases}.
\end{equation}
Based on the fact that $\langle\mathbf{x}|\hat{B}_{\lambda}(0)|\mathbf{x}\rangle$ is a Gaussian distribution over $m$ dimensions with spread $\sigma^2 = \tanh(\lambda/2)/2$, to ensure that $\text{Tr}(\hat{\Pi}_1^\phi \hat{B}_{\lambda}(0))\approx 1$, we just require $\delta > \sqrt{m\tanh(\lambda/2)/2}$ for a large enough $m$. To exactly quantify this, we note that $\text{Tr}(\hat{\Pi}_{1}^{\phi}\hat{B}(0))$ is the cumulative distribution function $\mathrm{CDF}_{\chi^2_m}(\delta^2/\sigma^2) = \tilde{p}$ for the distribution $\chi^2_m$. This means $\tilde{p} = 1 - Q(m/2,\delta^2/2\sigma^2)$ where $Q(m,x) = \frac{\Gamma(m,x)}{\Gamma(m)}$ is the regularized gamma function. 

Hence we have $\tilde{C}_{\mathcal{E},\lambda}(0,\beta) = \text{Tr}(\hat{D}(\beta)(\tilde{p}\hat{\sigma}_r +(1-\tilde{p})\hat{\sigma}_0)) = \tilde{p}\chi_{\hat{\sigma}_r}(\beta) + (1-\tilde{p})\chi_{\hat{\sigma}_0}(\beta)$. Note that the task only requires being able to estimate the magnitude of $\tilde{C}_{\mathcal{E},\lambda}(0,\beta)$ and for the chosen states this function is positive. 
\begin{equation}
      \left||\tilde{C}_{\mathcal{E}_{\phi,r},\lambda}(0,\beta)| - |\tilde{C}_{\mathcal{E}_0,\lambda}(0,\beta)|\right|  = \tilde{p} e^{-\beta^2/2}\left(\exp(\frac{\beta^2}{2}(1-e^{-2r})) -1\right).
\end{equation}
For the discrimination to succeed we require that
\begin{equation}
    \tilde{p} e^{-1/2}\left(\exp(\frac{1}{2}(1-e^{-2r})) -1\right) > 2\epsilon\implies r> -\frac{1}{2}\log(1-2\log(1+\frac{2\epsilon e^{1/2}}{\tilde{p}})),
\end{equation}
and by choosing
\begin{equation}\label{eq:nogodelta}
    \delta = \sqrt{\frac{m}{2\times 0.99}\tanh(\lambda/2)},
\end{equation}
this guarantees that $\tilde{p}>0.546$ by applying Lemma~\ref{lem:gaussprob} since $m\geq8$. From this we can see that this discrimination will always be possible as long as $\epsilon < \frac{1-e^{-1/2}}{2}\tilde{p}$ and so we choose the maximum value of $\epsilon$ to be $0.107$ which ensures that $r$ remains bounded by choosing $r = -\frac{1}{2}\log(1-2\log(1+\frac{2\epsilon e^{1/2}}{0.546}))$. We can further ensure that if $\epsilon\leq 0.1$, the squeezing doesn't get unnecessarily large.

\noindent\textbf{Sample lower bound by diamond norm}:

The fundamental limit to distinguish two quantum channels in a single shot is given by the diamond distance between them. For shorthand, we will represent the $m$ mode bosonic Hilbert space to be $\mathcal{H}$ in this section. The diamond distance satisfies
\begin{equation}
\begin{aligned}
    \|\mathcal{E}_{\phi,r} - \mathcal{E}_{0}\|_\diamond &= \max_{\rho_{AB}\in D(\mathcal{H}^{\otimes 2})}\|((\mathcal{E}_{\phi,r})_A\otimes\mathbb{I}_B)(\rho_{AB}) - ((\mathcal{E}_0)_A\otimes\mathbb{I}_B)(\rho_{AB})\|_1\\
    &= \max_{\rho_{AB}\in D(\mathcal{H}^{\otimes 2})}\|(\hat{\sigma}_{r}-\hat{\sigma}_0)_A\otimes\text{Tr}_A[((\hat{\Pi}_1^{\phi})_A\otimes\mathbb{I}_B)\hat{\rho}_{AB}]\|_1\\
    &=\max_{\rho_{AB}\in D(\mathcal{H}^{\otimes 2})} \|\hat{\sigma}_{r}-\hat{\sigma}_0\|_1\text{Tr}(\hat{\Pi}_1^\phi\hat{\rho}_A).
\end{aligned}
\end{equation}
Here we have used the fact that the operator $\text{Tr}_A[((\Pi_1)_A\otimes\mathbb{I}_B)\rho_{AB}]$ is a positive operator since it is proportional to the post-measurement state for outcome $\Pi_1$ since this is an idempotent POVM. Notably even with access to an ancillary system, there is no advantage in the distinguishing task here due to the structure of the measure-and-prepare channel.

Note that all adaptive and feedforward strategies with parallel channel use with $N$ copies of the channels will still be upper bounded by the diamond distance between the channels $\mathcal{E}_0^{\otimes N}$ and $\mathcal{E}_{\phi,r}^{\otimes N}$ instead. Taking the subsystems $A$ and $B$ as Hilbert spaces of $\mathcal{H}^{\otimes N}$ and any arbitrary state $\rho_{AB}$ we have
\begin{equation}
    \begin{aligned}
      &\|((\mathcal{E}_{\phi,r}^{\otimes N})_A\otimes \mathbb
        I_B)(\rho) - ((\mathcal{E}_{0}^{\otimes N})_A\otimes \mathbb
        I_B)(\rho)\|_1\\
        &= \left\|\sum_{\mathbf{b}\in\{0,1\}^{N}}\left\{\left[\bigotimes_{i=1}^{N}\hat{\sigma}_{b_ir} - \hat{\sigma}_0^{\otimes N}\right]\otimes\text{Tr}_A\left[\left(\left(\bigotimes_{i=1}^N \hat{\Pi}_{b_i}^{\phi}\right)_A\otimes\mathbb{I}_B\right)\rho_{AB}\right]\right\}\right\|_1\\
        &\leq \sum_{\mathbf{b}\in\{0,1\}^{N}}\left\|\left\{\left[\bigotimes_{i=1}^{N}\hat{\sigma}_{b_ir} - \hat{\sigma}_0^{\otimes N}\right]\otimes\text{Tr}_A\left[\left(\left(\bigotimes_{i=1}^N \hat{\Pi}_{b_i}^{\phi}\right)_A\otimes\mathbb{I}_B\right)\rho_{AB}\right]\right\}\right\|_1\\
        &\leq \|\hat{\sigma}_{r}^{\otimes N} - \hat{\sigma}_0^{\otimes N}\|_1\left(1 - \text{Tr}[\hat{\Pi}_0^{\otimes N}\hat{\rho}_A]\right).
    \end{aligned}
\end{equation}
Here we first make use of the triangle inequality and then use the fact that $\|\bigotimes_i\hat{\sigma}_{b_ir} - \hat{\sigma}_0^{\otimes N}\|_1 \leq \|\hat{\sigma}_r^{\otimes N} - \hat{\sigma}_0^{\otimes N}\|_1$ for all bitstrings $\mathbf{b}\in\{0,1\}^N$. Since the hypothesis test is performed without the actual knowledge of $\phi,\gamma$, we need to maximize the following quantity to obtain the highest success probability 
\begin{equation}
\begin{aligned}
    p_{\mathrm{success}}&\leq \max_{\rho\in D(\mathcal{H}^{\otimes2 N})}\frac{1}{2}\left(1+\frac{1}{2}\mathbb{E}_{\phi,r}\Big[\|((\mathcal{E}_{\phi,r}^{\otimes N})_A\otimes \mathbb
        I_B)(\rho) - ((\mathcal{E}_{0}^{\otimes N})_A\otimes \mathbb
        I_B)(\rho)\|_1\Big]\right)\\&\leq \max_{\hat{\rho}\in D(\mathcal{H}^{\otimes 2N})}\frac{1}{2}\left(1+\frac{1}{2}\mathbb{E}_{\phi,r}\Big[\|\hat{\sigma}_{r}^{\otimes N} - \hat{\sigma}_0^{\otimes N}\|_1\left(1 - \text{Tr}[\hat{\Pi}_0^{\otimes N}\hat{\rho}_A]\right)\Big]\right).
\end{aligned}
\end{equation}

\noindent\textbf{Upper bound on the trace distance $\|\hat{\sigma}_r^{\otimes N} - \hat{\sigma}_{0}^{\otimes N}\|_1$}:
Note that here the states $\hat{\sigma}_r^{\otimes N}$ and $\hat{\sigma}_{0}^{\otimes N}$ are Gaussian states. Hence we make use of the tight trace distance bound from \cite{Bittel2025optimalestimatesof} to obtain
\begin{equation}
    \|\hat{\sigma}_{r}^{\otimes N} - \hat{\sigma}_0^{\otimes N}\|_1 \leq  N\frac{1+\sqrt{3}}{8}\max(\|V_r\|_{\infty},\|V_0\|_{\infty})\|V_{r} - V_0\|_1,
\end{equation}
where $V_r$ and $V_0$ are the covariance matrices of the states $\hat{\sigma}_r$ and $\hat{\sigma}_0$. Since the squeezed state only has squeezing on one mode, this effectively means that we are distinguishing $N$ copies of squeezed vacuum with squeezing $r$. Hence we have
\begin{equation}
\|V_r\|_{\infty} = e^{2r},\quad \|V_0\|_{\infty} = 1,\quad \|V_{r} - V_0\|_1 = 2\sinh(2r),
\end{equation}
which gives us
\begin{equation}
    \|\hat{\sigma}_{r}^{\otimes N} - \hat{\sigma}_0^{\otimes N}\|_1 \leq \frac{1+\sqrt{3}}{8}N\left(\frac{1}{(1 - 2\log(1 + 2\epsilon e^{1/2}/0.546))^2} - 1\right)< 10N\epsilon,
\end{equation}
where the last bound is only useful for $10N\epsilon\leq 2$ since we always have the bound $\|\cdot\|_1\leq 2$ hold. This implies that if we wish to have a success probability above $1/2+1/6$, we must have $N  \geq 6\times (1/10\epsilon)$.

\noindent\textbf{Upper bound on probability of input state having distinguishable output states}:

Now we note that the operator $\hat{\Pi}_0$ is always idempotent. For a state $\hat{\rho}_{A_1\dots A_N}$ defined over subsystems $A_1,\dots,A_N$, each with Hilbert space $\mathcal{H}$, and for any two-outcome POVM $\{\hat{\Pi}_0,\hat{\Pi}_1\}$, we have the upper bound
\begin{equation}
    \text{Tr}[(\mathbb{I} - (\hat{\Pi}_0)^{\otimes N})\hat{\rho}_{A_1\dots A_N}] \leq \min\left(1,\sum_{i=1}^N \text{Tr}(\hat{\Pi}_1\hat{\rho}_{A_i})\right),
\end{equation}
which follows from the fact that the operator is positive semidefinite
\begin{equation}
\sum_{i=1}^{N}\left[(\hat{\Pi}_1)_{A_i}\otimes\mathbb{I}_{A_{j\neq i}}\right] - (\mathbb{I} - \hat{\Pi}^{\otimes N}_{0}) \geq 0,
\end{equation}
for any valid two-outcome POVM $\{\hat{\Pi}_0,\hat{\Pi}_1\}$. To find the input state which maximizes $\sum_{i=1}^N \text{Tr}(\hat{\Pi}_1\hat{\rho}_{A_i})$, we consider the problem defined by
\begin{equation}
    \hat{\rho}_{\mathrm{opt}}= \argmax_{\hat{\rho}\in D(\mathcal{H})}\text{Tr}(\hat{\Pi}_{\mathrm{avg}}\hat{\rho}),
\end{equation}
\begin{equation}
    \hat{\Pi}_{\mathrm{avg}} = \int_{[-\pi,\pi
    )^m} \frac{d^m\phi}{(2\pi)^m}  \hat{\Pi}^{\phi}_{1}.
\end{equation}
Since $\hat{\Pi}_{\mathrm{avg}}$ is positive and the constraint $\text{Tr}(\hat{\rho})=1$ always holds, it is sufficient to assume that the optimal state is pure (rank one). Defining Fock states $\ket{\mathbf{l}} = \ket{l_1,\dots, l_m}$, we can obtain a matrix representation of $\hat{\Pi}_{\mathrm{avg}}$ as
\begin{equation}
\begin{aligned}
\langle\mathbf{l}'|\hat{\Pi}_{\mathrm{avg}}|\mathbf{l}\rangle &=  \int \frac{d^m\phi}{(2\pi)^m} e^{i\phi\cdot(\mathbf{l}-\mathbf{l}')} \int_{\|\mathbf{x}\|_{2}\leq \delta} d^m\mathbf{x} \langle\mathbf{l}'|\mathbf{x}\rangle\langle \mathbf{x}|\mathbf{l}\rangle \\
&= \delta_{\mathbf{l},\mathbf{l}'} \int_{\|\mathbf{x}\|_{2}\leq \delta} d^m\mathbf{x} \langle\mathbf{l}|\mathbf{x}\rangle\langle \mathbf{x}|\mathbf{l}\rangle.
\end{aligned}
\end{equation}
Note that by Cram\'er's inequality the wavefunction satisfies $|\langle l|x\rangle|\leq \pi^{-1/4}$, hence $|\langle l|x\rangle|^2\leq \pi^{-1/2}$, for all $l\in \mathbb{Z}_{\geq 0}$. By substituting this, we have 
\begin{equation}
\langle\mathbf{l}|\hat{\Pi}_{\mathrm{avg}}|\mathbf{l}\rangle \leq  \frac{1}{\pi^{m/2}}\int_{\|\mathbf{x}\|_{2}\leq \delta} d^m \mathbf{x} = \frac{\delta^m}{\Gamma(\frac{m}{2}+1)}.
\end{equation}

\noindent\textbf{Final lower bound on sample complexity}:

Putting all of this together, we have the success probability satisfy
\begin{equation}
    p_{\mathrm{success}}\leq \frac{1}{2} + \frac{1}{2}\min\left(2,10N\epsilon\right)\min\left(1,N\frac{\delta^m}{\Gamma(\frac{m}{2}+1)}\right),
\end{equation}
which means that for a success probability of $2/3$, we require $N = \Omega(\Gamma(\frac{m}{2}+1)\delta^{-m})$ which we can expand more carefully by noting that for $m>0$
\begin{equation}
    \Gamma\left(\frac{m}{2}+1\right) > \sqrt{\frac{\pi}{e}m\left(\frac{m}{2e}\right)^m},
\end{equation}
which then gives us the final lower bound of $N$ in terms of $\lambda$ as
\begin{equation}
    N = \Omega\left(\max\left(\frac{1}{10\epsilon},\sqrt\frac{\pi m}{e}\left(\frac{0.99}{e\tanh
    (\lambda/2)}\right)^{m/2}\right)\right),
\end{equation}
which for $\lambda<0.763$ grows exponentially in $m$.
\end{proof}

We note a few remarks on the above result. The first obvious implication is that this shows a clear no-go for learning the transfer function for $\lambda\to0^+$ as a consequence of unbounded sample complexity. Further, we note that this hardness is shown for a very restricted kind of channel which only outputs a mixture of vacuum and squeezed vacuum. Notably, the key aspect of hardness is in learning an unknown binary POVM. We anticipate this bound can be made much tighter with various extensions to larger families of channels and leave this task for future exploration.

\subsection{Channel tomography using Choi-state tomography}\label{app:bosonicACID}
\begin{definition}[Finite-energy Choi-state]
    For $\mathcal{E}\in \CPTP(\mathcal{H}_{\infty,m},\mathcal{H}_{\infty,m})$, we define the finite-energy Choi-state as
    \begin{equation}
        J_r(\mathcal{E}) = (\mathcal{E}\otimes \mathbb{I})(|\psi_r\rangle\langle\psi_r|),
    \end{equation}
    where the state $\ket{\psi_r} =\ket{\Phi_{r}^{\mathrm{TMSV}}}^{\otimes m}$ is a $2m$-mode state formed by $m$ two-mode squeezed vacuum states with squeezing parameter $r$.
\end{definition}
\begin{theorem}
    Consider the learning task for obtaining a classical description of an unknown channel $\mathcal{E}\in \CPTP(\mathcal{H}_{\infty,m},\mathcal{H}_{\infty,m})$ ($m\geq8$) using a tomographic description of $J_r(\mathcal{E})$ labeled as $\hat{\sigma}_{\mathrm{Choi}}$ with a trace distance accuracy promise. If the task wishes to succeed with probability $\geq 2/3$ in producing a classical description of the channel $\tilde{\mathcal{E}}$ with energy-constrained diamond norm accuracy of $\|\tilde{\mathcal{E}}-\mathcal{E}\|_{\diamond,E} \leq \epsilon$ where $E$ is the constraint placed on average total photon number, this would necessarily require the tomographic description of $J_r(\mathcal{E})$ to satisfy
    $$\|\hat{\sigma}_{\mathrm{Choi}} - J_r(\mathcal{E})\|_1   < \epsilon\sqrt\frac{1.85e}{\pi m}\left(\frac{em}{3.96 E}\right)^{m/2}.$$

\end{theorem}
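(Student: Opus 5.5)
The plan is to exhibit two channels whose TMSV-based Choi states are exponentially close in trace distance, while their energy-constrained diamond distance is of order one (or at least some fixed constant times the target accuracy). Any procedure that turns a Choi-state description into an $\epsilon$-accurate channel description must then resolve the Choi states at this exponentially small scale. The natural candidates come from Thm.~\ref{thm:bosonnogo}: the entanglement-breaking channels $\mathcal{E}_0(\cdot)=\text{Tr}(\cdot)\hat{\sigma}_0$ and $\mathcal{E}_{\phi,r}$. The latter applies the localized binary POVM $\{\hat{\Pi}^\phi_0,\hat{\Pi}^\phi_1\}$ (a ball of radius $\delta$ in rotated position space) and prepares either vacuum or squeezed vacuum. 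Both output only mixtures of Gaussian states. I will choose $\delta$ and the squeezing so that $\|\hat{\sigma}_r-\hat{\sigma}_0\|_1$ is a fixed constant.

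The first step lower bounds the energy-constrained diamond distance. From the computation in Thm.~\ref{thm:bosonnogo},
\[
\|\mathcal{E}_{\phi,r}-\mathcal{E}_0\|_{\diamond,E}=\|\hat{\sigma}_r-\hat{\sigma}_0\|_1\sup_{\hat{\rho}:\,\text{Tr}(\hat{n}\hat{\rho})\le E}\text{Tr}(\hat{\Pi}^\phi_1\hat{\rho}),
\]
and ancillas do not help. I will pick an input with energy at most $E$ that has order-one weight on the region $\|\mathbf{x}\|_2\le\delta$. A product of squeezed or coherent states concentrated near the origin works, and I will tune $\delta\sim\sqrt{E/m}$ scale (e.g.\ $\delta^2\approx 3.96E/(2m)\cdot$const) so that Lemma~\ref{lem:gaussprob} guarantees probability at least $0.546$ while the energy stays at most $E$. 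This gives a diamond distance of order a constant.

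The second step upper bounds the Choi trace distance. Because both channels are measure-and-prepare,
\[
J_r(\mathcal{E}_{\phi,r})-J_r(\mathcal{E}_0)=(\hat{\sigma}_r-\hat{\sigma}_0)\otimes \text{Tr}_A[(\hat{\Pi}^\phi_1\otimes\mathbb{I})|\psi_r\rangle\langle\psi_r|],
\]
so its trace norm equals $\|\hat{\sigma}_r-\hat{\sigma}_0\|_1\,\text{Tr}(\hat{\Pi}^\phi_1\hat{\rho}_{\mathrm{th}})$, where $\hat{\rho}_{\mathrm{th}}$ is the thermal marginal of the TMSV. The position distribution of $\hat{\rho}_{\mathrm{th}}$ has density bounded by $\pi^{-1/2}$ per mode; this follows either from Cram\'er's bound, as used in Thm.~\ref{thm:bosonnogo}, or directly since a thermal Gaussian is wider than vacuum. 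The ball probability is therefore at most $\delta^m/\Gamma(m/2+1)$. Using $\Gamma(m/2+1)>\sqrt{\pi m/e}\,(m/2e)^{m/2}$ and the chosen $\delta$, this becomes $\sqrt{e/(\pi m)}\,(3.96E/(em))^{m/2}$ up to the constant factor. The bound holds for every $r$.

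The final step combines the two estimates. If the learner's description $\hat{\sigma}_{\mathrm{Choi}}$ had trace error at least half the Choi separation, it could not tell which channel it was looking at. A triangle-inequality argument then forces its diamond-norm error to be at least about half the diamond separation, which is of order a constant. After rescaling so the diamond gap equals $2\epsilon$ (by mixing with $\mathcal{E}_0$, i.e.\ multiplying the $\hat{\Pi}_1$ branch by a factor proportional to $\epsilon$), the required Choi accuracy is $\epsilon$ times the ratio computed above, giving the stated bound. The main obstacle is the constant bookkeeping in the first step: I need an energy-$E$ input with ball probability bounded below by a constant under exactly the $\delta$ that produces the factor $3.96$ and the prefactor $1.85$. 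I would first try a product of vacuum-like squeezed states with variance $\approx\delta^2/m$ per mode and apply Lemma~\ref{lem:gaussprob}.
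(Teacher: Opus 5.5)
Your overall route is the paper's. It uses the same entanglement-breaking pair built from the rotated position-ball POVM $\hat{\Pi}^\phi_1$. It applies the Cram\'er bound to the thermal marginal of $|\psi_r\rangle$, giving $\|J_r(\mathcal{E}_0)-J_r(\mathcal{E}_{1,\phi})\|_1\le\|\hat{\sigma}_1-\hat{\sigma}_0\|_1\,\delta^m/\Gamma(\frac m2+1)$ independently of the TMSV squeezing. And it has the same adversarial step: once $\epsilon'$ reaches half this separation, a single description meets both Choi promises (the paper uses the mixtures $pJ_r(\mathcal{E}_0)+(1-p)J_r(\mathcal{E}_{1,\phi})$). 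The learner then cannot be $\epsilon$-accurate for two channels whose energy-constrained diamond distance exceeds $2\epsilon$.

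The step you defer as bookkeeping is wrong as written, because you have the dependence of $\delta$ on $E$ inverted. Energy is what buys position squeezing, so the ball an energy-$E$ input can fill must \emph{shrink} as $E$ grows. Your proposed input is squeezed to per-mode position variance $\approx\delta^2/m\propto E/m^2$. That needs $e^{2r}\sim m^2/E$, hence total energy of order $m^3/E$, which exceeds the budget $E$ unless $E\gtrsim m^{3/2}$. In particular your first step fails for $E=\Theta(m)$, which lies inside the regime $E>em/3.96$ where the statement is nontrivial. Consistently, the factor you arrive at, $(3.96E/(em))^{m/2}$, is the reciprocal of the claimed one and exceeds $1$ exactly in that regime, so your bound would be vacuous there. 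Coherent states do not help either, since they do not narrow the position distribution.

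The correct calibration, as in the paper, goes as follows.
\begin{itemize}
\item Fix the squeezing by $m\sinh^2 r=E$ and use the input $e^{-i\phi\cdot\hat{\pmb n}}|r\rangle\langle r|^{\otimes m}e^{i\phi\cdot\hat{\pmb n}}$, so each $x_i$ is Gaussian with variance $\sigma^2=e^{-2r}/2$.
\item Set $\delta^2=m\sigma^2/0.99=me^{-2r}/1.98\le m^2/(7.92E)$. Lemma~\ref{lem:gaussprob} then gives ball probability $1-Q(m/2,\delta^2e^{2r})\ge0.546$.
\item Choose $\|\hat{\sigma}_1-\hat{\sigma}_0\|_1=3.7\epsilon$. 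This makes the energy-constrained diamond gap exceed $3.7\times0.546\,\epsilon>2\epsilon$, which is equivalent to your rescaling by mixing.
\item On the Choi side, $2e\delta^2/m=e/(0.99e^{2r})\le em/(3.96E)$, because $e^{2r}\ge4\sinh^2 r=4E/m$. This is where $3.96=4\times0.99$ comes from.
\item The constant $1.85$ in the statement is $3.7/2$: the output separation halved in the adversarial step.
\end{itemize}
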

\begin{proof}
We define $\hat{n}$ to be the total photon number over all $m$ modes. The operator $\pmb{\hat{n}}$ is the vector of operators with each entry being the photon number in the $i$th mode for $1\leq i\leq m$. We now define
\begin{equation}
\begin{aligned}
    \hat{\Pi}^{\phi}_1 = \int_{\|\mathbf{x}\|_{2}\leq \delta} d^{m}\mathbf{x} e^{-i\phi\cdot \hat{\pmb{n}}}|\mathbf{x}\rangle\langle \mathbf{x}|e^{i\phi\cdot\hat{\pmb{n}}}, \quad \hat{\Pi}^{\phi}_0 = \mathbb{I} - \hat{\Pi}^{\phi}_1, \quad\delta = \sqrt{\frac{me^{-2r}}{2\times0.99}}.
\end{aligned}
\end{equation}
Using which we define the following two channels
\begin{equation}
    \mathcal{E}_0(\cdot) =\text{Tr}(\cdot)\hat{\sigma}_0,\quad\mathcal{E}_{1,\phi}(\cdot) = \text{Tr}(\hat{\Pi}^{\phi}_0 (\cdot))\hat{\sigma}_0 +\text{Tr}(\hat{\Pi}^{\phi}_1 (\cdot))\hat{\sigma}_{1},
\end{equation}
while choosing the states $\hat{\sigma}_1$ and $\hat{\sigma}_0$ to satisfy $\|\hat{\sigma}_1 - \hat{\sigma}_0\|_1 = 3.7\epsilon$. For simplicity we take these states to be Gaussian. Suppose we prepare $m$ modes of the single-mode squeezed state with squeezing parameter $r$. We can then check that for this input state of rotated squeezed vacuum $e^{-i\phi\cdot\hat{\pmb{n}}}|r\rangle\langle r|^{\otimes m}e^{i\phi\cdot\hat{\pmb{n}}}$, we have
\begin{equation}
    \|\mathcal{E}_0(e^{-i\phi\cdot\hat{\pmb{n}}}|r\rangle\langle r|^{\otimes m}e^{i\phi\cdot\hat{\pmb{n}}}) - \mathcal{E}_1(e^{-i\phi\cdot\hat{\pmb{n}}}|r\rangle\langle r|^{\otimes m}e^{i\phi\cdot\hat{\pmb{n}}})\|_1  = 3.7\epsilon(1 - Q(m/2,\delta^2 e^{2r})) > 2\epsilon,
\end{equation}
where we have used the fact that $m\geq 8$ and applied Lemma~\ref{lem:gaussprob}. This means that we have the promise that $\|\mathcal{E}_0 - \mathcal{E}_{1}\|_{\diamond,E} > 2\epsilon$ assuming that $E\geq m \sinh^2(r)$. We will consider the hardest setting where $E = m\sinh^2(r)$ which will then give $\delta$ as a function of $E$. If the algorithm is able to successfully learn channels up to the energy-constrained diamond norm of $\epsilon$, this would mean we can learn a classical description of both of these channels such that 
    \begin{equation}
        \|\tilde{\mathcal{E}}_0 - \mathcal{E}_0\|_{\diamond,E}\leq\epsilon, \quad \|\tilde{\mathcal{E}}_1 -\mathcal{E}_1\|_{\diamond,E} \leq \epsilon.
    \end{equation}
    This means that there exists a state $\hat{\rho}\in D(\mathcal{H}^{\otimes 2})$ such that
    \begin{equation}
    \begin{aligned}
        \|(\tilde{\mathcal{E}}_0\otimes\mathbb{I})(\hat{\rho})-(\tilde{\mathcal{E}}_1\otimes\mathbb{I})(\hat{\rho})\|_1 &\geq \left|\|(\mathcal{E}_0\otimes\mathbb{I})(\hat{\rho})-(\mathcal{E}_{1,\phi}\otimes\mathbb{I})(\hat{\rho})\|_1 - \left(\|\tilde{\mathcal{E}}_1 -\mathcal{E}_{1,\phi}\|_{\diamond,E} +\|\tilde{\mathcal{E}}_0 -\mathcal{E}_0\|_{\diamond,E}\right)\right|  >0.
    \end{aligned}
    \end{equation}
    Hence the classical descriptions $\tilde{\mathcal{E}}_0$ and $\tilde{\mathcal{E}}_1$ will always be distinguishable under the given promises. 
    
    Consider the scenario of two parties $A$ and $B$. $B$ has access to an algorithm that produces a classical tomographic description of an unknown channel $\mathcal{E}$ using a classical description of $J_r(\mathcal{E})$. Assume that $B$ has complete knowledge of the states $\hat{\sigma}_0$ and $\hat{\sigma}_1$. $A$ sends a classical description of one of the two states $\hat{\rho}_0$ or $\hat{\rho}_1$ that both are promised to satisfy
    \begin{equation}
        \|J_r(\mathcal{E}_0) - \hat{\rho}_0\|_1\leq \epsilon',\quad \|J_r(\mathcal{E}_{1,\phi}) - \hat{\rho}_1\|_1\leq \epsilon',
    \end{equation}
    where $\epsilon'$ is the Choi-state accuracy required for the tomography algorithm to return a channel description accurate to $\epsilon$ in energy-constrained diamond norm.
    
    What this means is that if $B$ employs the described algorithm for channel learning, they would be able to successfully distinguish the case of whether they received an approximated description of $J_r(\mathcal{E}_0)$ or $J_r(\mathcal{E}_{1})$ with success probability of at least $2/3$. Due to the entanglement-breaking structure of the channel, we note that regardless of $\phi$ we have
    \begin{equation}
    \begin{aligned}
        \|J_{r}(\mathcal{E}_0) - J_{r}(\mathcal{E}_{1,\phi})\|_1 &=3.7\epsilon\text{Tr}(\hat{\Pi}_{1}^\phi(1-\nu^2)^m\nu^{2{\hat{n}}})\\
        &= 3.7\epsilon \left((1-\nu^2)^{m}\int_{\|\mathbf x\|_2\leq \delta}d^{m}\mathbf{x}\langle\mathbf{x}|\nu^{2\hat{n}}|\mathbf{x}\rangle\right)\\
        &= 3.7\epsilon\sum_{\mathbf{l}\in \mathbb{Z}_{\geq 0}^{m}} p_{\mathrm{thermal}}(\mathbf{l}) \int_{\|\mathbf x\|_2\leq \delta}d^{m}\mathbf{x}\langle\mathbf{l}|\mathbf{x}\rangle\langle\mathbf{x}|\mathbf{l}\rangle \leq 3.7\epsilon\frac{\delta^{m}}{\Gamma(\frac{m}{2}+1)},
    \end{aligned}
    \end{equation}
    where $p_{\mathrm{thermal}}(\mathbf{l})$ is the thermal distribution obtained by measuring photon number on the thermal state $(1-\nu^2)^{m}\nu^{2\hat n}$ following which we use the fact $|\langle l|x\rangle|^2\leq \frac{1}{\sqrt{\pi}}$ by Cram\'er's inequality.
    
    If $B$ is able to succeed in distinguishing the two channels, this implies that the classical descriptions that are possible to receive must also be distinguishable. Consider $A$ to be an adversary who chooses $\hat{\rho}_0$ and $\hat{\rho}_1$ to be as close together as the promise allows. Any $\epsilon'$ for which the scheme succeeds must also be able to distinguish the two descriptions given by this adversarial choice. By triangle inequality, for every $\phi$ there is a choice of $\hat{\rho}_0$ and $\hat{\rho}_1$ with 
    \begin{equation}
        \|\hat{\rho}_0 - \hat{\rho}_1\|_1 = 
        \max\left(0,\ \|J_r(\mathcal{E}_{1,\phi})-J_r(\mathcal{E}_0)\|_1 - \|\hat{\rho}_1 - J_r(\mathcal{E}_{1,\phi})\|_1 - \|\hat{\rho}_0 - J_r(\mathcal{E}_0)\|_1\right). 
    \end{equation}
    The choice of $\hat{\rho}_0$ and $\hat{\rho}_1$ which satisfies the above equality is
    \begin{equation}
        \hat{\rho}_0 = pJ_r(\mathcal{E}_0) +(1-p)J_r(\mathcal{E}_{1,\phi}),\quad\hat{\rho}_1 = pJ_r(\mathcal{E}_{1,\phi}) + (1-p)J_r(\mathcal{E}_0),
    \end{equation}
    \begin{equation}
        p = \min\left(\frac{1}{2},1 - \frac{\epsilon'}{\|J_r(\mathcal{E}_{1,\phi}) - J_r(\mathcal{E}_0)\|_1}\right).
    \end{equation}
    Hence $\|\hat{\rho}_0 - \hat{\rho}_1\|_1$ would be zero unless 
    \begin{equation}
        2\epsilon' \leq 3.7\epsilon\frac{\delta^{m}}{\Gamma(\frac{m}{2}+1)}\leq \epsilon\sqrt{\frac{3.7e}{\pi m}\left(\frac{e}{0.99 e^{2r}}\right)^m} < \epsilon\sqrt{\frac{3.7e}{\pi m}}\left(\frac{em}{3.96 E}\right)^{m/2}.
    \end{equation}
    Hence unless $\epsilon'$ satisfies the above condition, the classical descriptions $\hat{\rho}_0$ and $\hat{\rho}_1$ are not guaranteed to be distinguishable. Hence, the channels are no longer distinguishable implying that the learning task fails.
\end{proof}
We note that the above bound gives us a non-trivial relation between the accuracy required for the Choi-state and the target channel tomography accuracy (in energy-constrained diamond distance) only when we have the constraint $E > \frac{e}{3.96}m$ which is roughly $0.69$ photons per mode. We believe that this can be made tighter, and leave further extensions to future work on the general bosonic channel tomography problem, where the natural measure would have to be the energy-constrained diamond norm accuracy.
\end{document}